\numberwithin{equation}{section}
\newcommand{\blind}{1}
\newcommand{\bds}{\boldsymbol}
\newcommand \rF{\mathrm{F}}
\def\##1\#{\begin{align}#1\end{align}}
\def\$#1\${\begin{align*}#1\end{align*}}
\newcommand {\HDBand}{\circ \{B_{k_2}({\bf 1}_{q})\otimes B_{k_1}({\bf 1}_{p})\}}
\newcommand {\DBCovZ}{\tilde{\M \Sigma}_{0,\mathcal{B}}(k_1,k_2)}
\newcommand {\DBCov}{\tilde{\M \Sigma}_{\mathcal{B}}(k_1,k_2)}
\newcommand {\HDTaper}{\circ \{T_{k_2}({\bf 1}_{q})\otimes T_{k_1}({\bf 1}_{p})\}}
\newcommand {\DTCovZ}{\tilde{\M \Sigma}_{0,\mathcal{T}}(k_1,k_2)}
\newcommand {\DTCov}{\tilde{\M \Sigma}_{\mathcal{T}}(k_1,k_2)}
\newcommand{\MB}{\mathcal{B}}
\newcommand {\cov}{\textnormal{cov}}
\newcommand {\vecc}{\textnormal {vec}}
\def\T{{ \mathrm{\scriptscriptstyle T} }} 
\newcommand{\Rom}[1]{\text{\uppercase\expandafter{\romannumeral #1\relax}}}
\newcommand{\bee}{\begin{equation}\begin{aligned}}
\newcommand*{\p}{\text{Pr}}
\newcommand{\M}[1]{{{\mathbf{\MakeUppercase{#1}}}}}
\newcommand{\ee}{\end{aligned} \end{equation}}
\newcommand{\eq}{\end{quote}}
\newcommand{\bqp}{\begin{quote}\begin{parts}}
\newcommand{\epq}{\end{parts}\end{quote}}
\newcommand{\ki}{(k_1)}
\newcommand{\kii}{(k_2)}
\newcommand{\E}{\mathbb{E}}
\newcommand{\emm}{\end{bmatrix}}
\newcommand*{\vv}{\text{vec}}
\newcommand{\F}{\text{F}}
\newcommand{\beginsupplement}{%
        \setcounter{table}{0}
        \renewcommand{\thetable}{S\arabic{table}}%
        \setcounter{figure}{0}
        \renewcommand{\thefigure}{S\arabic{figure}}%
                \setcounter{section}{0}
        \renewcommand{\thesection}{S.\arabic{section}}%
     }
\renewcommand{\thetable}{S\arabic{table}}
\renewcommand{\thefigure}{S\arabic{figure}}
\numberwithin{equation}{section} 
\protected\def\ignorethis#1\endignorethis{}
\let\endignorethis\relax
\def\TOCstop{\addtocontents{toc}{\ignorethis}}
\def\TOCstart{\addtocontents{toc}{\endignorethis}}
\bfseries\textsc{\chaptername}~\thecontentslabel]{5.5em}\textbf}
\begin{document}

\def\spacingset#1{\renewcommand{\baselinestretch}
{#1}\small\normalsize} \spacingset{1}

\if1\blind
{
  \title{\bf Covariance Estimation for Matrix-valued Data}
  \author{Yichi Zhang \\
  Department of Statistics, North Carolina State University\\
    Weining Shen \\
    Department of Statistics, University of California, Irvine\\
    Dehan Kong \\
    Department of Statistical Sciences, University of Toronto\\
   }
  \maketitle
} \fi

\if0\blind
{
   \title{\bf Covariance Estimation for Matrix-valued Data}
        \date{}
     \maketitle
} \fi
\TOCstop
\vspace{-1.2cm}
\begin{abstract}
Covariance estimation for matrix-valued data has received an increasing interest in applications. Unlike previous works that rely heavily on matrix normal distribution assumption and the requirement of fixed matrix size, we propose a class of distribution-free regularized covariance estimation methods for high-dimensional matrix data under a separability condition and a bandable covariance structure. Under these conditions, the original covariance matrix is decomposed into a Kronecker product of two bandable small covariance matrices representing the variability over row and column directions. We formulate a unified framework for estimating bandable covariance, and introduce an efficient algorithm based on rank one unconstrained Kronecker product approximation. The convergence rates of the proposed estimators are established, and the derived minimax lower bound shows our proposed estimator is rate-optimal under certain divergence regimes of matrix size. We further introduce a class of robust covariance estimators and provide theoretical guarantees to deal with heavy-tailed data. We demonstrate the superior finite-sample performance of our methods using simulations and real applications from a gridded temperature anomalies dataset and a S\&P 500 stock data analysis.
\end{abstract}

\noindent
{\it Keywords:} Bandable; Distribution-free; Minimax rate; Robust; Separable.
\vfill
\newpage
\spacingset{1.5} 
\section{Introduction}
 Matrix-valued data have received considerable interests in various applications. In environmental studies, the outcome of interest (e.g., temperature, humidity, air quality) is measured over a range of geographical regions. It is hence natural to represent the resulting data in a matrix form with two dimensions corresponding to latitude and longitude.  Examples of matrix-valued data also include two-dimensional digital imaging data, brain surface data and colorimetric sensor array data. There have been a few recent studies on regression analysis for matrix-valued data \citep{Zhou2014, wang2017generalized, kong2020l2rm, hu2020matrix, hu2021nonparametric}.

In this paper, we are interested in estimating the covariance of the matrix-valued data. Covariance estimation is a fundamental problem in multivariate data analysis. A large collection of statistical and machine learning methodologies including the principal component analysis, linear discriminant analysis, regression analysis and clustering analysis, require the knowledge of the covariance matrices. Denote $ \Xb \in  \RR^{p\times q}$ two dimensional matrix-valued data, and $\text{vec}(\cdot)$ the vectorization operator that stacks the columns of a matrix into a column vector. The covariance of $ \Xb $ is defined as $\text{cov}\{\vecc (\Xb)\}=\boldsymbol\Sigma^*\in \mathbb{R}^{pq\times pq} $. A naive estimate of $\boldsymbol\Sigma^*$ is the sample covariance. However, when $ pq>n $, it performs poorly. It has been shown in \citet{Wachter1978, Johnstone2001,  Johnstone2009} that when $ pq/n\rightarrow c\in (0,\infty] $, the largest eigenvalue of the sample covariance matrix is an inconsistent estimator of the largest eigenvalue for the population covariance matrix, and the eigenvectors of the sample covariance matrix can be nearly  orthogonal to the truth. 

To overcome the ultra-high-dimensionality, structural assumptions are needed to estimate the covariance consistently. Various types of structured covariance matrices have been introduced such as bandable covariance matrices, sparse covariance matrices, and spiked covariance matrices. Several regularization methods have been developed accordingly to estimate these matrices, including banded methods \citep{bickel2008regularized, Wu2009}, tapering methods \citep{Furrer2007, Cai2010}, and thresholding methods \citep{Bickel2008threshold, ElKaroui2008, Cai2011}.
Another issue with the sample covariance is that it does not utilize the knowledge that the data actually lie in a two dimensional matrix space. To address this issue, it is common to impose a separability assumption on the covariance of $ \text{vec}(\Xb) $, i.e., $\text{Cov}\{\text{vec}(\Xb)\}= \boldsymbol\Sigma^*_2\otimes\boldsymbol\Sigma^*_1 $, where $\boldsymbol\Sigma^*_1\in \RR^{p\times p}$ and $\boldsymbol\Sigma^*_2\in \RR^{q\times q}$ represent covariances among the rows and columns of the matrices, respectively. The separability assumption helps provide a stable and parsimonious alternative to an unrestricted version of $\text{Cov}\{\text{vec}(\Xb)\}$, and equally importantly, renders for a  simple-yet-meaningful scientific interpretation. For example,  
when analyzing temperature measurements over a geographical region, this assumption helps decompose the variability in the measurements onto spatial directions (e.g., longitude and latitude). 

To account for the separability assumption when estimating the covariance of matrix-valued data, a class of methods were proposed in the literature, all based on assuming a matrix normal distribution for the data. This idea was first proposed by \citet*{Dawid1981}, and then explored by \citet*{Dutilleul1999} as they introduced an iterative algorithm for maximum likelihood estimation. \citet*{Werner2008} developed two alternative estimation methods and derived the Cram\'{e}r-Lower bound for the problem in a compact form. Beyond the matrix case, \citet*{Galecki1994} and \citet*{Hoff2011} considered separable covariance matrices estimation for tensor data under the tensor normal model. To summarize, all the aforementioned methods rely heavily on the matrix normal distribution assumption since their estimation procedures are obtained using maximum likelihood estimation (MLE). Moreover, these methods can only handle the matrices with fixed dimensions, especially for the development of asymptotic theory. It remains unclear how those methodologies can be generalized under realistic situations where the data do not satisfy a matrix normal distribution (or any presumed distribution) and how the asymptotic theory works for matrices with high dimensions.

In this paper, we consider the covariance estimation problem for matrix-valued data under a much more challenging but realistic scenario. First, our method is distribution-free, which significantly differs from all the previous likelihood approaches. Second, we allow the dimensions of matrix-valued data to be much larger than the sample size, e.g., they can diverge at the exponential rate of the sample size. Under this scenario, even if the matrix normal assumption is true, the MLE still does not exist due to overfitting. Our solution is to impose a bandable assumption on $\boldsymbol\Sigma^*_1$ and $\boldsymbol\Sigma^*_2$. This assumption has been widely adopted for time series with scientific applications \citep{Visser1995}. The resulting bandable covariance structure exhibits a natural order among variables, thus can naturally depict the spatial and/or temporal correlation of the matrix-valued data. We then incorporate the separable and bandable properties into one unified estimation framework, and propose an efficient computational algorithm to obtain banded and tapering covariance estimates. The convergence rates of the proposed estimators are derived and shown to be minimax optimal under the high-dimensional setting and appropriate tail conditions. A phase transition phenomenon of optimal bandwidth selection is revealed by analyzing the impact of parameter complexity on the minimax optimality regime. Our proof makes use of some new matrix analysis techniques including an $\epsilon$-net argument that assesses the impact of doubly bandable covariance structure, the newly-derived random matrix inequality \citep{zajkowski2020bounds} and the unilateral singular space perturbation bound \citep{cai2018rate}, which shed new insights on high-dimensional regularized covariance estimation while accounting for matrix structure in the data.
{\color{black}To deal with potentially heavy-tailed  data, we further propose truncation-based robust banded and tapering covariance estimators. The truncation level is subtly analyzed to achieve an appropriate balance in bias-variance trade-off; and the corresponding convergence rate is derived.}

The rest of the article is organized as follows. We introduce our banded and tapering covariance estimates of matrix-valued data in Section \ref{sec:2}. Section \ref{Sec_MR} provides theoretical support of our method. In Section \ref{sec:RE}, we further propose a robust banded and tapering covariance estimation procedure to deal with heavy-tailed data and provide theoretical guarantees. 
Simulations are conducted in Section \ref{sim} to evaluate the finite-sample performance of the proposed methods. In Section \ref{realdata}, we apply our method to a gridded temperature anomalies dataset. We end with some discussions in Section \ref{discussion}. Technical proofs,  additional theoretical and numerical results, and an additional S$\&$P 500 stock data analysis are presented in the Supplementary File. 

\noindent{\bf Notation}: We summarize the notation used throughout the  paper here. For a vector $\mathbf{v} \in \RR^{d}$, we denote its Euclidean norm by $\|\mathbf{v}\|$.  For a matrix $\Ab=[A_{ij}]\in \RR^{d_1\times d_2}$, we denote $\textnormal{tr}(\Ab)$ its trace and $\|\Ab\|_{\rF}$ its Frobenius norm. We also define the following matrix norms,
\bee\label{T2:norm:collect}
&\|\Ab\|_2\equiv \sup\{\|\M Ax\|_2,\|x\|_2 = 1\},
\\
&\|\M A\|_1 \equiv \sup\{\|\M Ax\|_1,\|x\|_1 = 1\} = \max_{j}\sum_{i}|A_{ij}|,
\\ 
&\|\M A\|_{\infty} \equiv \sup\{\|\M Ax\|_{\max},\|x\|_{\max} = 1\} = \max_{i}\sum_{j}|A_{ij}|,
\\
&\|\M A\|_{\max} \equiv \max_{i,j}|A_{ij}|,\  \|\M A\|_{1,1} = \sum_{i}\sum_{j}|A_{ij}|.
\ee For two matrices $\Ab\in \RR^{q\times q}$ and $\Bb\in \RR^{p\times p}$, their Kronecker product $\Ab \otimes \Bb $ is a $pq\times pq$ matrix. Denote $\M 1_d$ a $d\times d$ matrix with all elements equal to 1. Let $ \circ $ be the Hadamard product of two matrices, i.e. element-wise product. We use $a\wedge b$ and $a \vee b$ as shorthand notation of  $\min\{a,b\}$ and $\max\{a,b\}$, respectively. For an arbitrary set $S$, we use $|S|$ to denote the cardinality of $S$. We let $\text{sgn}(\cdot)$ be the sign function and $\lfloor \cdot \rfloor$ be the floor function. We write $a \precsim b$ if there exists a universal constant $C > 0$ such that $a \leq C b$. 
\vspace{-.15in}
\section{Methodology}\label{sec:2} 
Denote $ \Xb \in  \RR^{p\times q}$ a two-dimensional random matrix, and $\vecc (\cdot)$ a vectorization operator that stacks the columns of a matrix into a vector. Let $\M \Sigma^{*}  \in \cS_+^{pq\times pq}$ be the true covariance matrix of $\vecc(\M X_i)$, where $\cS_+^{d\times d} $ denotes the space of $d\times d$ positive definite matrices. Assume that $\{\Xb_i: 1\leq i \leq n\}$ are independently and identically distributed (i.i.d.) matrix-valued samples generated from $ \Xb $. The main interest of the paper is to estimate $\boldsymbol\Sigma^*$ from the sampled data, where we allow $ p>n $ and $ q> n$. 

A naive estimator of $\boldsymbol\Sigma^*$ is the sample covariance $ \hat{\boldsymbol\Sigma} = n^{-1}\sum_{i=1}^n \{\vecc (\Xb_i)-\vecc (\hat{\boldsymbol\mu})\} \{\vecc (\Xb_i)-\vecc (\hat{\boldsymbol\mu})\}^{\T}$, where $ \hat{\boldsymbol\mu}=n^{-1}\sum_{i=1}^n  \Xb_i $. Although the sample covariance $ \hat{\boldsymbol\Sigma} $ is well-behaved for fixed $p$ and $ q$, it has undesired properties when $ pq>n$. In particular, the sample covariance matrix is singular, and it may not be a consistent estimator of $ \boldsymbol\Sigma^*$. In addition, the eigenvalues are often overdispersed and may be inconsistent \citep{Bickel2008threshold, bickel2008regularized}. 

Therefore, to estimate $\boldsymbol\Sigma^*$ in high-dimensional settings, we impose an additional assumption that the covariance of $ \vecc (\Xb) $ is separable, i.e.,
\#\nonumber 
\M \Sigma^{*} = \M \Sigma^{*}_2 \otimes \M \Sigma^{*}_1 \in \cS_+^{pq\times pq},
\#
where $\boldsymbol\Sigma^*_1\in \cS_+^{p\times p}$ and $\boldsymbol\Sigma^*_2\in \cS_+^{q\times q}$ represent covariances among the rows and columns of the matrices, respectively. The separability assumption provides a stable and parsimonious alternative to an unrestricted version of $\boldsymbol\Sigma^*$, and reduces the number of parameters from $pq(pq+1)/2$ to $\left\{p(p+1)/2+q(q+1)/2\right\}$.  This assumption is commonly used in modeling matrix-valued data \citep{Dawid1981, Hoff2011} and is satisfied for several matrix-variate distributions. 
For example, matrix normal distribution admits a separable covariance structure. A $p\times q$ matrix $\M X$ follows a matrix normal distribution $\mathbf{MN}_{p,q}({\boldsymbol \mu}, \M\Sigma_1,\M \Sigma_2)$ with ${\boldsymbol \mu}\in\RR^{p\times q}, \M\Sigma_1\in \RR^{p\times p}, \M\Sigma_2\in \RR^{q\times q}$, if and only if $\vecc(\M X) \sim \M N_{pq}\big\{\vecc(\boldsymbol \mu),\M\Sigma_2\otimes \M \Sigma_1\big\},$ where $\M N_{pq}$ represents a $pq$-dimensional multivariate normal distribution.  Therefore, the separability assumption holds because $\cov\big\{\vecc(\M X)\big\} = \M\Sigma_2\otimes \M\Sigma_1$. Another example is the matrix variate $t$-distribution, where the covariance is separable under mild conditions according to Theorem 4.3.1 in \citet{gupta2018matrix}.
\begin{remark}
The separability of true covariance is a key assumption in our framework, and is recommended to be tested in the data pre-processing stage. As a common assumption in spatial statistics, neuroimaging and functional data analysis, many methods have been proposed to test this assumption. For example, \citet{Lu2005} developed likelihood ratio tests and \citet{filipiak2016score} considered score test for separability of the covariance matrices under the matrix Gaussian assumption; \citet{aston2017tests} proposed projected-based bootstrap tests under both (parametric) matrix Gaussian and nonparametric conditions.
In our real data applications, we implement \citet{aston2017tests}'s projected-based bootstrap test because it is both theoretically guaranteed and computationally fast under high-dimensional scenario. In addition, the test is distribution-free, which is suitable for our non-parametric framework.
\end{remark}
\vspace{-.15in}
\subsection{Bandable Covariance}\label{bandable}
To estimate the covariance matrix when $p>n$ or $ q>n$, regularizing large empirical covariance matrices has been widely used in literature \citep{bickel2008regularized}. One popular way is to band the sample covariance matrix. For any $\M A = [A_{l,m}]_{d\times d}$ and $k>0$, define
\$
\cB_d(k)=\left\{\M A \in \mathbb{R}^{d\times d}: A_{l,m}=0 
{\rm ~for~any~} |l-m|> k, 1\leq l,m\leq d\right\},
\$ 
and
\bee\label{def:b}
B_{k}(\M A) = \big[A_{l,m}\cdot \textbf{I}(|l - m|\leq k)\big]_{d\times d},
\ee
where $ \textbf{I}(\cdot) $ is an indicator function. We propose to solve the following optimization problem for a given pair of tuning parameters $(k_1, k_2)$: 
\bee\label{eq:band:0}
\big(\hat{\boldsymbol\Sigma}^{\mathcal{B}}_1(k_1), \hat{\boldsymbol\Sigma}^{\mathcal{B}}_2(k_2)\big)= \argmin_{\bSigma_1\in \cB_p(k_1), \bSigma_2\in \cB_q(k_2)}\big\| \hat{\bSigma}-\bSigma_2 \otimes \bSigma_1\big\|_\F^2 .
 \ee
And the banded covariance estimate corresponding to $(k_1, k_2)$ is $ \hat{\boldsymbol\Sigma}^{\MB}(k_1, k_2)=\hat{\boldsymbol\Sigma}^\MB_2(k_2) \otimes \hat{\boldsymbol\Sigma}^\MB_1(k_1)$. Here $k_1$ and $k_2$ control the regularization level of banding. 

Surprisingly, the above problem has a closed form solution. Define $\widetilde{\boldsymbol\Sigma}_\MB(k_1, k_2)$
 to be a $ pq \times pq $ matrix satisfying $ \widetilde{\boldsymbol\Sigma}_\MB(k_1, k_2)=\hat{\boldsymbol\Sigma}\circ \{B_{k_2}({\bf 1}_{q})\otimes B_{k_1}({\bf 1}_{p})\}$, where $ {\bf 1}_p $ and $ {\bf 1}_q $ are matrices of all $1$'s with dimensions $ p\times p $ and $ q\times q $, respectively. We call $ \widetilde{\boldsymbol\Sigma}_\MB(k_1, k_2) $ a doubly banded matrix of $\hat{\boldsymbol\Sigma}$ with bandwidths $k_1$ and $k_2$. We have the following proposition whose proof is deferred to the Supplementary File. 
\begin{proposition}\label{band:equivalence}
Solving \eqref{eq:band:0} is equivalent to solving the following optimization problem:
\bee\label{eq:band}
\big(\hat{\boldsymbol\Sigma}^{\mathcal{B}}_1(k_1), \hat{\boldsymbol\Sigma}^{\mathcal{B}}_2(k_2)\big) = \argmin_{\boldsymbol\Sigma_1, \boldsymbol\Sigma_2}\| \widetilde{\boldsymbol\Sigma}_\MB(k_1, k_2)-\boldsymbol\Sigma_2 \otimes \boldsymbol\Sigma_1\|_\rF^2 . 
\ee
\end{proposition}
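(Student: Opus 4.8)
The plan is to reduce the banding-constrained problem \eqref{eq:band:0} to an unconstrained nearest-Kronecker-product problem for the doubly banded matrix $\widetilde{\boldsymbol\Sigma}_\MB(k_1,k_2)$, in two stages: first show that the two objectives differ only by an additive constant on the banded feasible set, and then show that the \emph{unconstrained} minimizer for $\widetilde{\boldsymbol\Sigma}_\MB(k_1,k_2)$ is automatically banded, so the constraint may be dropped for free.

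First I would record the key support fact. For any $\boldsymbol\Sigma_1\in\cB_p(k_1)$ and $\boldsymbol\Sigma_2\in\cB_q(k_2)$, the entry of $\boldsymbol\Sigma_2\otimes\boldsymbol\Sigma_1$ indexed by $((j-1)p+l,(j'-1)p+l')$ equals $(\boldsymbol\Sigma_2)_{jj'}(\boldsymbol\Sigma_1)_{ll'}$, which vanishes unless $|j-j'|\le k_2$ and $|l-l'|\le k_1$. Hence $\boldsymbol\Sigma_2\otimes\boldsymbol\Sigma_1$ is supported exactly on the index set $\mathcal{S}\subseteq\{1,\dots,pq\}^2$ carved out by the mask $B_{k_2}({\bf 1}_q)\otimes B_{k_1}({\bf 1}_p)$. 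Splitting the squared Frobenius norm over $\mathcal{S}$ and its complement, and using that $\boldsymbol\Sigma_2\otimes\boldsymbol\Sigma_1$ vanishes on $\mathcal{S}^c$ while $\widetilde{\boldsymbol\Sigma}_\MB(k_1,k_2)$ agrees with $\hat{\boldsymbol\Sigma}$ on $\mathcal{S}$ and is zero on $\mathcal{S}^c$, I obtain for every banded pair
\[
\|\hat{\boldsymbol\Sigma}-\boldsymbol\Sigma_2\otimes\boldsymbol\Sigma_1\|_{\F}^2=\|\widetilde{\boldsymbol\Sigma}_\MB(k_1,k_2)-\boldsymbol\Sigma_2\otimes\boldsymbol\Sigma_1\|_{\F}^2+\sum_{(a,b)\in\mathcal{S}^c}[\hat{\boldsymbol\Sigma}]_{ab}^2,
\]
where the last sum is a constant independent of $(\boldsymbol\Sigma_1,\boldsymbol\Sigma_2)$. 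Thus minimizing \eqref{eq:band:0} over the banded feasible set is equivalent to minimizing $\|\widetilde{\boldsymbol\Sigma}_\MB(k_1,k_2)-\boldsymbol\Sigma_2\otimes\boldsymbol\Sigma_1\|_{\F}^2$ over the same banded set.

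It then remains to remove the banding constraint from the latter problem. I would invoke the Van Loan--Pitsianis rearrangement $\mathcal{R}$, which sends $\boldsymbol\Sigma_2\otimes\boldsymbol\Sigma_1$ to $\vecc(\boldsymbol\Sigma_2)\vecc(\boldsymbol\Sigma_1)^{\T}$ and is an isometry, so that $\|\widetilde{\boldsymbol\Sigma}_\MB(k_1,k_2)-\boldsymbol\Sigma_2\otimes\boldsymbol\Sigma_1\|_{\F}^2=\|\mathcal{R}(\widetilde{\boldsymbol\Sigma}_\MB(k_1,k_2))-\vecc(\boldsymbol\Sigma_2)\vecc(\boldsymbol\Sigma_1)^{\T}\|_{\F}^2$; the unconstrained minimizer is then the best rank-one approximation, given by the leading singular pair of the $q^2\times p^2$ matrix $\mathcal{R}(\widetilde{\boldsymbol\Sigma}_\MB(k_1,k_2))$. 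The crucial structural observation is that, because $\widetilde{\boldsymbol\Sigma}_\MB(k_1,k_2)$ is zero outside $\mathcal{S}$, this rearranged matrix has identically zero rows at the block-index pairs $(j,j')$ with $|j-j'|>k_2$ and identically zero columns at the pairs $(l,l')$ with $|l-l'|>k_1$. Consequently its leading left singular vector (a top eigenvector of $\mathcal{R}\mathcal{R}^{\T}$) is supported on the rows with $|j-j'|\le k_2$, and its leading right singular vector (a top eigenvector of $\mathcal{R}^{\T}\mathcal{R}$) is supported on the columns with $|l-l'|\le k_1$. Undoing the vectorization, the minimizing $\boldsymbol\Sigma_2$ lies in $\cB_q(k_2)$ and the minimizing $\boldsymbol\Sigma_1$ lies in $\cB_p(k_1)$; that is, the unconstrained optimum is automatically banded and hence coincides with the constrained optimum, which establishes the claimed equivalence.

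The main obstacle, and the step I would write out most carefully, is this support-preservation claim for the rank-one approximant: I need to argue rigorously that the zero rows and columns of $\mathcal{R}(\widetilde{\boldsymbol\Sigma}_\MB(k_1,k_2))$ force the corresponding coordinates of the leading singular vectors to vanish, which follows from the block structure of $\mathcal{R}\mathcal{R}^{\T}$ and $\mathcal{R}^{\T}\mathcal{R}$ but is cleanest stated as a standalone lemma. I would also address two forms of non-uniqueness so that the notation ``$\argmin$'' in \eqref{eq:band} is meaningful: the scaling ambiguity $(\boldsymbol\Sigma_1,\boldsymbol\Sigma_2)\mapsto(c\boldsymbol\Sigma_1,c^{-1}\boldsymbol\Sigma_2)$, which leaves the product and hence the estimate $\widehat{\boldsymbol\Sigma}^{\MB}(k_1,k_2)$ unchanged, and possible multiplicity of the top singular value, for which the argument still shows that every optimal rank-one factor lies in the banded subspaces, so the constrained and unconstrained optimal sets of products agree.
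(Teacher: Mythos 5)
Your proof is correct, and it is in fact more complete than the paper's own argument. Both proofs share the same first stage: showing that on the banded feasible set the two objectives differ only by an additive constant (the sum of squares of the entries of $\hat{\boldsymbol\Sigma}$ killed by the mask). The paper does this block-by-block --- the off-band blocks contribute the constants $\|\hat{\boldsymbol\Sigma}_{l_2,m_2}\|_{\mathrm{F}}^2$, and, within the on-band blocks, entries outside the $k_1$-band contribute constants because $\boldsymbol\Sigma_1$ is constrained to be banded --- whereas you obtain the same identity in one stroke from the entrywise support decomposition; these are the same idea in different bookkeeping. Where you genuinely depart from the paper is the second stage. The paper passes from the constrained problem on $\widetilde{\boldsymbol\Sigma}_{\mathcal{B}}(k_1,k_2)$ to the unconstrained problem \eqref{eq:band} with only the phrase ``by simple algebra and simplification,'' i.e., it never justifies dropping the constraints $\boldsymbol\Sigma_1\in\mathcal{B}_p(k_1)$, $\boldsymbol\Sigma_2\in\mathcal{B}_q(k_2)$; yet this is exactly the point where the proposition as stated, with an unconstrained minimization on the right-hand side, needs an argument. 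Your support-preservation lemma supplies precisely this missing step: since a left (resp.\ right) singular vector with nonzero singular value lies in the column space of $\xi(\widetilde{\boldsymbol\Sigma}_{\mathcal{B}}(k_1,k_2))$ (resp.\ of its transpose), it must vanish on the coordinates of the identically zero rows (resp.\ columns), so every unconstrained rank-one minimizer reshapes to a pair that is automatically banded, and the constrained and unconstrained solution sets coincide. Your explicit treatment of the scale ambiguity $(c\boldsymbol\Sigma_1, c^{-1}\boldsymbol\Sigma_2)$ and of a possibly repeated top singular value also makes the identification of the two solution sets airtight, matching Proposition \ref{decomposition:uniqueness}. In short, what your route buys is a rigorous constraint-removal argument that the paper's proof only gestures at; what the paper's route buys is brevity.
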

This proposition provides an efficient way to solve the optimization problem \eqref{eq:band:0}. In particular, one can first obtain $\widetilde{\boldsymbol\Sigma}_\MB(k_1, k_2)$ by doubly banding $\hat{\boldsymbol\Sigma}$, and then solve the rank one unconstrained Kronecker product approximation \eqref{eq:band} based on the method in \citet*{van1993approximation} and \citet*{Pitsianis1997}. {\color{black}Solutions $\hat{\boldsymbol\Sigma}^{\mathcal{B}}_1(k_1), \hat{\boldsymbol\Sigma}^{\mathcal{B}}_2(k_2)$ are identified up to scale, while $\hat{\M \Sigma} =\hat{\boldsymbol\Sigma}^{\mathcal{B}}_2(k_2)\otimes \hat{\boldsymbol\Sigma}^{\mathcal{B}}_1(k_1) $ is unique. The procedure of solving \eqref{eq:band} implies that our proposed method is a \textit{spectral method}, which has  been  applied to a wide class of statistical problems. We refer interested readers to \citet{chen2020spectral} for a recent survey therein. More discussions on solving \eqref{eq:band} are deferred to Section \ref{sec:sKPA} in the Supplementary File.}

To implement the rank one unconstrained Kronecker product approximation, we adopt the algorithm proposed in \citet*{Batselier2015}. 
This algorithm is implemented using the Matlab package {\tt TKPSVD}, which can be downloaded at \href{https://github.com/kbatseli/TKPSVD}{https://github.com/kbatseli/TKPSVD}.

There are two tuning parameters $ k_1$ and $k_2$ involved in our estimation procedure. Theoretically, we will show in Section \ref{Sec_MR} that when $ k_1$ and $k_2$  are chosen appropriately, our estimator will be consistent even when both $p$ and $ q $ diverge at the exponential order of the sample size $n$. In practice, we apply the resampling procedure proposed in \citet{bickel2008regularized} to select the optimal bandwidths $ k_1 $ and $ k_2 $. In particular, we randomly split the original data into a training set and a test set, with sample sizes $ n_1 $ and $ n_2=n-n_1 $, respectively. We use the training set to estimate the covariance matrix $ \hat {\boldsymbol\Sigma}^{ta}(k_1,k_2) $ using our procedure, and compare with the sample covariance matrix of the test sample $ \hat {\boldsymbol\Sigma}^{te} $. We repeat the random split procedure for $ N $ times, and let $ \hat {\boldsymbol\Sigma}^{ta}_{\nu}(k_1,k_2) $ and $ \hat {\boldsymbol\Sigma}^{te}_{\nu} $ denote the estimates from the $ \nu $th split for $ \nu=1,\ldots, N $. We select the $ (k_1,k_2) $ that minimizes
\begin{equation*}
R(k_1,k_2)=N^{-1}\sum_{\nu=1}^N \|\hat {\boldsymbol\Sigma}^{ta}_{\nu}(k_1,k_2)-\hat {\boldsymbol\Sigma}^{te}_{\nu}\|_1.
\end{equation*}
Similar to \citet{bickel2008regularized}, we choose $n_1= \lfloor {\frac{n}{3}} \rfloor$ and $n_2=n-n_1$.  
We use $N=10$ random splits throughout the paper. Denote $\hat{k}_1$ and $\hat{k}_2$ the selected optimal bandwidths. Our final banded covariance estimate is $ \hat {\boldsymbol\Sigma}^\MB(\hat{k}_1, \hat{k}_2)=\hat{\boldsymbol\Sigma}^\MB_2(\hat{k}_2) \otimes \hat{\boldsymbol\Sigma}^\MB_1(\hat{k}_1)$, where 
\bee\label{optimal:formula}
\big(\hat{\boldsymbol\Sigma}^\MB_1(\hat{k}_1), \hat{\boldsymbol\Sigma}^\MB_2(\hat{k}_2)\big)= \argmin_{\bSigma_1\in \cB_p(\hat{k}_1), \bSigma_2\in \cB_q(\hat{k}_2)}\big\| \hat{\bSigma}-\bSigma_2 \otimes \bSigma_1\big\|_\rF^2.
\ee

\subsection{Tapering Covariance}\label{tapering}
\par
Another popular technique for covariance matrix regularization is tapering \citep{bickel2008regularized, Cai2010}. For any matrix $\M A = [A_{l,m}]_{d \times d}$ and any $k\geq 0$, we define $T_k(\M A) = [T_k(\M A)_{l,m}]_{d\times d}$, where
\bee\label{def:Tk}
T_k(\M A)_{l,m} = \begin{cases}
A_{l,m} & \text{ when } |l - m|\leq \lfloor k/2 \rfloor, \\
(2 - \frac{|l - m|}{\lfloor k/2 \rfloor})A_{l,m} & \text{ when } \lfloor k/2 \rfloor < |l - m| \leq k, \\
0 & \text{ otherwise.}
\end{cases}
\ee
\par
Consider
$$
 \widetilde{\M\Sigma}_{\mathcal{T}}(k_1,k_2) = \hat{\M \Sigma} \circ \{T_{k_2}\big(\M 1_q\big)\otimes T_{k_1}\big(\M 1_p\big)\}.
$$
Analogous to \eqref{eq:band}, we propose to solve
\#\label{eq:tapering}
\big(\hat{\boldsymbol\Sigma}^{\mathcal{T}}_1(k_1), \hat{\boldsymbol\Sigma}^{\mathcal{T}}_2(k_2)\big)= \argmin_{\bSigma_1, \bSigma_2}\big\| \widetilde{\M\Sigma}_{\mathcal{T}}(k_1,k_2)-\bSigma_2 \otimes \bSigma_1\big\|_\rF^2.
\#
Then we obtain the tapering covariance estimate as $ \hat {\boldsymbol\Sigma}^{\mathcal{T}}(k_1, k_2)=\hat{\boldsymbol\Sigma}^{\mathcal{T}}_2(k_2) \otimes \hat{\boldsymbol\Sigma}^{\mathcal{T}}_1(k_1)$. 

For solving \eqref{eq:tapering} and selecting the tapering tuning parameters ($k_1$, $k_2$), we adopt the same resampling procedure as the one proposed for the banded estimate in Section \ref{bandable}. 
{\remark\label{rm:2ptaper} For the doubly banded covariance, we have $\tilde{\M \Sigma}_{\mathcal{B}}(k_1,k_2) = \hat{\M\Sigma}$ when $k_1 \geq p$ and $k_2 \geq q$. However, for the doubly tapering covariance, we need $k_1 \geq 2p, k_2 \geq 2q$ to make $\tilde{\M \Sigma}_{\mathcal{T}}(k_1,k_2) = \hat{\M\Sigma}$. Therefore, in the theoretical analysis, we always assume $k_1$ and $ k_2 $ are at most the same order as $p$ and $q$, respectively.}
{\remark\label{rm:dg} When $p$ or $q$ equals $1$, the matrix-valued data degenerate to the vector-valued data, which we refer to as degenerate regime. By the forms of our proposed optimization problems in \eqref{band:equivalence} and \eqref{eq:tapering}, it is easy to see that our proposed estimators also simplify to \citet{Bickel2008threshold}'s banded estimator and \citet{Cai2010}'s tapering estimator under this degenerate regime. Similarly with those estimators under the degenerate regime, our proposed estimators are not guaranteed to be positive semi-definite or positive-definite. It is possible to apply the eigenvalue truncation technique (see e.g. Remark 3 in \citet{Cai2010}) to resolve this issue.  In particular, we can replace the negative eigenvalues of our proposed estimator with $0$ or a small positive constant.}

\vspace{-.15in}
\section{Theoretical Results}\label{Sec_MR}
\subsection{Notation}\label{Sec_MR_N} 
Following \citet{bickel2008regularized}, we define the following uniformity class of approximately bandable covariance matrices 
\bee\label{A1}
\mathcal{F}(\varepsilon_0, \alpha)=&\bigg\{ \bSigma : ~ \max_l\sum_m \{|\sigma_{l,m}|: |l-m|>k \}\leq C_0 k^{-\alpha}~\textnormal{for all}~k\geq 1,\\
&\textnormal{and}~0<\varepsilon_0\leq \lambda_{\min}(\bSigma)\leq \lambda_{\max}(\bSigma)\leq 1/\varepsilon_0 \bigg\},
\ee with fixed constants $C_0,\alpha >0$ and some $\varepsilon_0 < 1$, and $\lambda_{\min}(\bSigma), \lambda_{\max}(\bSigma)$ are the smallest and largest eigenvalues of $\bSigma$, respectively. We further define another important class of covariance matrices \citep{Cai2010},
\bee\label{A2}
\mathcal{M}(\varepsilon_0, \alpha) = &\bigg\{\M\Sigma:~\left|\sigma_{l,m}\right| \leq C_{1}|l-m|^{-\alpha-1} \text { for } l \neq m ,\\ &\textnormal{ and }~0<\varepsilon_0\leq \lambda_{\min}(\bSigma)\leq \lambda_{\max}(\bSigma)\leq 1/\varepsilon_0\bigg\},
\ee
with fixed constants $C_1,\alpha > 0$ and some $\varepsilon_0 < 1$. One can see that $\mathcal{M}(\varepsilon_0, \alpha)$ is a subset of $\mathcal{F}(\varepsilon_0, \alpha)$ when $C_1 \leq \alpha C_0$. In other words, $\mathcal{M}(\varepsilon_0, \alpha)$ is a more restrictive class. 
\par
{\color{black}Since we focus on the bandable covariance matrix classes $\mathcal{F}(\varepsilon_0, \alpha)$ and $\mathcal{M}(\varepsilon_0, \alpha)$, the true covariance matrix in the subsequent theoretical analysis $\M\Sigma^*$ will always have eigenvalues bounded away from $+\infty$, which is consistent with the previous works dealing with bandable covariance estimation for vector-valued data \citep{Bickel2008threshold,Cai2010,cai2012adaptive,cai2016estimating}. A more detailed discussion on bounded eigenvalues of $\M \Sigma^*$ is given in Section \ref{sec:bounde} of the Supplementary File. }

We say $\vecc(\M X)$ follows a sub-Gaussian distribution if for any $t>0$ and $\|{\bf v}\| = 1$, there exists a $\rho>0$ such that 
\bee\label{subga}
\Pr \left[\Big|{\bf v}^{\T} \big[\vecc(\M X) - \E \big\{\vecc(\M X)\big\}\big]
\Big|>t\right] \leq e^{-\rho t^2 }.
\ee 
Recall $\M X_1,\dots,\M X_n$ are i.i.d. $p\times q$ random matrix samples and $\vecc(\M  X_i)\in\RR^{pq}$ their vectorizations. Denote $x^{(i)}_{l_1,l_2}$ the $l_1l_2$th entry of $\M X_i$ for $1\leq l_1\leq p$ and $1\leq l_2\leq q$.  
Denote $\precsim$ and $\succsim$ inequalities up to multiplicative universal constants and $\asymp$ an equality up to a multiplicative universal constant. For sequences $a_n, b_n$, we also write $a_n = \Theta(b_n)$ if $a_n \asymp b_n$.
\subsection{Main Results}\label{sec:T}
\subsubsection{Convergence Rate of Proposed Estimators}\label{sec:T:up}
In this section, we derive the upper bound for the convergence rates of our banded and tapering covariance matrix estimates. We derive the $\mathcal{L}^2$ convergence rates of our proposed estimators under Frobenius norm. By Markov's inequality, the $\mathcal{L}^2$ convergence rates imply the same convergence rate in probability. We particularly utilize the property of low rank matrix approximation \citep{eckart1936approximation, Pitsianis1997}, unilateral subspace perturbation bound \citep{cai2018rate}, and the newly-generalized Hanson-Wright inequality \citep{zajkowski2020bounds} in  deriving the upper error bounds for our proposed estimators. 
\par
We consider two scenarios: (i) $\vecc{(\M X_i)}$ follows i.i.d. sub-Gaussian distribution; and (ii)  $\vecc{(\M X_i)}$ satisfies an element-wise finite fourth moment condition: 
\bee\label{am:moment}
\E(|x^{(i)}_{l_1,l_2}\cdot x^{(i)}_{m_1,m_2}|^2) \leq M < +\infty,
\ee 
where $x^{(i)}_{l_1,l_2}$ and $x^{(i)}_{m_1,m_2}$ are $l_1l_2$th and $m_1m_2$th entries of $\M X_i$, for any $1\leq l_1,m_1\leq p$ and $1\leq l_2,m_2 \leq q$. 
\par
\noindent {\bf Scenario (i):} 
We first present the convergence rates of our proposed covariance estimators when $\vecc{(\M X_i)}$ satisfies sub-Gaussian tail probability bound \eqref{subga}.  
{\color{black} We consider both cases that $\M \Sigma_1^*$ and $\M \Sigma_2^*$ reside in $\mathcal{F}(\varepsilon_0,\alpha)$ and $\M \Sigma_1^*$ and $\M \Sigma_2^*$ reside in $\mathcal{M}(\varepsilon_0,\alpha)$. The result is parallel to the results in \citet{bickel2008regularized} and \citet{Cai2010}, where they derive the convergence rates of the covariance estimates under these two cases when $\M X_i$ is a random vector, respectively. }
\par
Denote $\hat{\M \Sigma}_2^{\mathcal{B}}(k_2) \otimes \hat{\M \Sigma}_1^{\mathcal{B}}(k_1)$ the proposed banded estimator and $\hat{\M \Sigma}_2^{\mathcal{T}}(k_2) \otimes \hat{\M \Sigma}_1^{\mathcal{T}}(k_1)$ the proposed tapering estimator. To unify the notation, we define $\eta \in \{\mathcal{B},\mathcal{T}\}$. The following theorem presents an upper bound on the convergence rate for the proposed banded and tapering covariance estimators obtained from \eqref{eq:band:0} and \eqref{eq:tapering} under Scenario (i). 
\begin{theorem}\label{T2}
Let $\vecc(\M X_1), \vecc(\M X_2),\dots,\vecc(\M X_n)$ be i.i.d. sub-Gaussian random vectors in $\RR^{pq}$ with true covariance $\M \Sigma^* = \M \Sigma^*_2 \otimes \M \Sigma^*_1$. Let $\small\M I_{\eta,d}(k) = \M I(\eta = \mathcal{B}, k< d-1) + \M I( \eta = \mathcal{T},k< 2d-2)$ 
for $\eta\in\{\mathcal{B},\mathcal{T}\}, d \geq 1$, and let $\small\tilde{\alpha}_{a} = \begin{cases}
2\alpha_a & \text{when }\M \Sigma_1^{*}\in \mathcal{F}(\varepsilon_0, \alpha_1),\M \Sigma_2^{*} \in \mathcal{F}(\varepsilon_0, \alpha_2)
\\
2\alpha_a + 1 & \text{when } \M \Sigma_1^{*}\in \mathcal{M}(\varepsilon_0, \alpha_1),\M \Sigma_2^{*} \in \mathcal{M}(\varepsilon_0, \alpha_2)
\end{cases}$ for $a\in\{1,2\}$. Then we have,
\bee\label{T2:res1}
&\E\Big(\frac{\|\hat{\M\Sigma}^\eta_2\kii\otimes\hat{\M\Sigma}^\eta_1\ki - \M\Sigma^*\|_{\F}^2}{pq}\Big) 
\\
&\precsim \begin{cases}\frac{k_1}{qn} + \frac{k_2}{pn}+ \M I_{\eta,p}(k_1)\cdot k_1^{-\tilde{\alpha}_1} +  \M I_{\eta,q}(k_2)\cdot k_2^{-\tilde{\alpha}_2}, & pk_1 + qk_2 \precsim n;
\\
\big(\frac{k_1k_2}{n}\big)\wedge\big(\frac{pk^2_1}{qn^2} + \frac{qk^2_2}{pn^2}  \big)+ \M I_{\eta,p}(k_1)\cdot k_1^{-\tilde\alpha_1} +  \M I_{\eta,q}(k_2)\cdot k_2^{-\tilde\alpha_2}, & pk_1 + qk_2 \succ n.
\end{cases}
\ee
\end{theorem}
To better understand the error rate in \eqref{T2:res1}, consider a simple example where $q \asymp 1$ and $\M\Sigma_1^*,\M\Sigma_2^* \in \mathcal{M}(\varepsilon_0,\alpha)$. Then the rate becomes $\min\big\{n^{-\frac{2\alpha_1 +1}{2\alpha_1 + 2}},p/n\big\}$  with $k_1 = \min\big\{n^{\frac{1}{2\alpha_1 + 2}},2p\big\}$ and $k_2=2q$, which matches exactly with the minimax optimal rate in \citet{Cai2010}. Another example is when $p = q\precsim \sqrt{n}$ and $\alpha_1 = \alpha_2$. Then \eqref{T2:res1} becomes $(k_1/pn) + \M I_{\eta,p}(k_1) k_1^{-\tilde\alpha_1} \asymp \min\big\{(pn)^{-\tilde\alpha_1/(\tilde\alpha_1 +1)},1/n\big\}$ under the optimal choice for $k_1 = k_2$, which is $\min\big\{(pn )^{1/(\tilde\alpha_1 + 1)},2p\big\}$.  In general, the selection of $k_1,k_2$ to attain the optimal convergence rate can be quite complicated depending on the divergence regimes of $p,q$; and we provide a detailed discussion in Section \ref{optbd:T2}.
\par
{\color{black}

\noindent {\bf Scenario (ii):}
The sub-Gaussian assumption (i) can be relaxed to the finite fourth moment condition (ii), with a sacrifice of the convergence rate. 
The following theorem presents an upper bound on the convergence rate for the proposed banded and tapering covariance estimators obtained from \eqref{eq:band:0} and \eqref{eq:tapering} under Scenario (ii). 
\begin{theorem}\label{T3}
Let $\vecc(\M X_1), \vecc(\M X_2),\dots,\vecc(\M X_n)$ be i.i.d. random vectors in $\RR^{pq}$ with true covariance $\M \Sigma^* = \M \Sigma^*_2 \otimes \M \Sigma^*_1$. Assume $\E(|x^{(i)}_{l_1,l_2}\cdot x^{(i)}_{m_1,m_2}|^2) \leq M < +\infty$ where $M$ is a constant that does not depend on $i, l_1,m_1, l_2,m_2$. Let $\small\M I_{\eta,d}(k)$ and $\small\tilde{\alpha}_{a} $ be the same as in Theorem \ref{T2}. Then we have,
\bee\label{T3:res1}
&\E\Big(\frac{\|\hat{\M\Sigma}^\eta_2\kii\otimes\hat{\M\Sigma}^\eta_1\ki - \M\Sigma^*\|_{\F}^2}{pq}\Big) \precsim
\frac{k_1k_2}{n}+ \M I_{\eta,p}(k_1)\cdot k_1^{-\tilde{\alpha}_1} +  \M I_{\eta,q}(k_2)\cdot k_2^{-\tilde{\alpha}_2}.
\ee
\end{theorem}

\par
The selection of $k_1,k_2$ to attain the optimal convergence rate of \eqref{T3:res1}, under different divergence regimes of $p,q$, is discussed in Section \ref{optbd:T3}.

\subsubsection{Overall Lower Bound}
In this section, we give an overall lower bound for the convergence rates of covariance matrix estimates for a special scenario, where $\M \Sigma_1^*$ and $\M \Sigma_2^*$ are in $\mathcal{M}(\varepsilon_0,\alpha)$ and $\vecc(\M X_i)$ follows i.i.d. sub-Gaussian distribution. This scenario is a special case of Scenarios (i) and (ii) with either $\M \Sigma_1^{*}\in \mathcal{F}(\varepsilon_0, \alpha_1),\M \Sigma_2^{*} \in \mathcal{F}(\varepsilon_0, \alpha_2)$ or $\M \Sigma_1^{*}\in \mathcal{M}(\varepsilon_0, \alpha_1),\M \Sigma_2^{*} \in \mathcal{M}(\varepsilon_0, \alpha_2)$, as considered in Section \ref{sec:T:up}. So the lower bound we present here can be compared with the upper bounds presented in both Theorem \ref{T2} and Theorem \ref{T3}.
\begin{theorem}\label{T:low}
Let $\mathcal{P}^n_{\varepsilon_0,\alpha_1,\alpha_2}$ denote the class of distributions of $\{\vecc(\M X_i)\}_{i = 1}^n$, such that $\vecc(\M X_1), \vecc(\M X_2),\dots,\vecc(\M X_n)$ are i.i.d. sub-Gaussian random vectors in $\mathbb{R}^{pq}$ with any true covariance $\M \Sigma^* = \M \Sigma^*_2 \otimes \M \Sigma^*_1$, where $\M \Sigma_1^{*}\in \mathcal{M}(\varepsilon_0, \alpha_1),\M \Sigma_2^{*} \in \mathcal{M}(\varepsilon_0, \alpha_2)$. Let $\hat{\M \Sigma}_n$ be any possible covariance estimator based on $\{\vecc(\M X_i)\}_{i = 1}^n$, we have
\bee\label{Tlow:res}
&\inf_{\hat{\M \Sigma}_n}\sup_{\{\vecc(\M X_i)\}_{i = 1}^n\sim \mathbb{P},\atop\mathbb{P}\in \mathcal{P}^n_{\varepsilon_0,\alpha_1,\alpha_2}}\E\Bigg(\frac{\|\hat{\M \Sigma}_n - \M \Sigma_2^* \otimes \M \Sigma_1^*\|_\F^2}{pq}\Bigg)
\\
&\succsim \max\Bigg[\min\Big\{\frac{p}{nq},(nq)^{\frac{1}{2\alpha_1 + 2} - 1}\Big\}, \min\Big\{\frac{q}{np},(np)^{\frac{1}{2\alpha_2 + 2} - 1}\Big\}\Bigg].
\ee
\end{theorem}

\subsection{Additional theoretical results}\label{sec:compare}
We summarize other major theoretical findings as nine takeaway messages below. 
{\color{black} 
\begin{enumerate}[(1)]
\item By matching the derived lower and upper bounds, we are able to obtain sufficient conditions for which our obtained convergence rate is {\it minimax optimal} (similarly to \citet{Cai2010}, we focus on the case that $\M\Sigma_1,\M\Sigma_2$ are in the $\mathcal{M}(\varepsilon_0,\alpha)$ class). In particular, consider two regimes of $p,q$: (1) degenerate regime, where $p \wedge q = O(1)$; and (2) moderate high-dimensional regime, where $1\precsim p\precsim\max\big\{n\cdot (qn)^{-\frac{1}{2\alpha_1 + 2}},\sqrt{n}\big\} $  and $1\precsim q\precsim\max\big\{n\cdot (pn)^{-\frac{1}{2\alpha_2 + 2}},\sqrt{n}\big\}$. We can show that under Scenario (i), our proposed estimator is rate-optimal under both regimes; and under Scenario (ii), our estimator is rate-optimal under the degenerate regime.  These rate optimality findings are further verified by a simulation study in Section \ref{Sec28}. 
\item Under Scenario (i), the proposed estimator is always rate-optimal when $p,q \asymp \sqrt{n}$. In addition, when the bandable levels $\alpha_1,\alpha_2$ are large enough, the proposed estimator is also rate-optimal under the more challenging situation when both $p,q$ diverge at an asymptotic order close to $n$. For example, when $\alpha_1 = \alpha_2 = 2$, the proposed estimator is rate-optimal when $p = q = n^{0.7}$. See Section \ref{sec:effect:region} for more details. 
\item Our results reveal an interesting {\it phase transition} phenomenon in the sense that the optimal rate can be achieved without banding when $p$ and $q$ do not diverge fast enough in $n$. Take optimal $k_1$ as an example. If $p$ diverges not sufficiently fast compared to $q$, then no banding is needed for $k_1$. For example, consider $\alpha_1 =\alpha_2 = 2$. Under Scenario (i) with the proposed banded estimator, we let $p,q$ satisfy the moderate high-dimensional regime. If $p \precsim (nq)^{1/6}$, then the optimal $k_1 = p - 1$, which means no banding on $p$ direction is needed. If  $p \succ (nq)^{1/6}$, then the optimal $k_1 \asymp (nq)^{1/6}\ll p$. Similar symmetric results can be obtained for the optimal $k_2$. 
\item To provide more insights, we now focus on the example under Scenario (i) with $\alpha_1 = \alpha_2 = 2$, $\mathcal{M}$--class $\M\Sigma_1^*$, $\M\Sigma_2^*$, and give some graph illustration. We let $p = n^{\beta_1}$ and $q= n^{\beta_2}$, where different $\beta_1,\beta_2 > 0$ represent different divergence regimes of $p,q$. Results in Section \ref{optbd:T2} show that the upper bounds of Theorems \ref{T2} after selecting the optimal $k_1, k_2$, as well as the lower bound of Theorem \ref{T:low}, always have the polynomial form: 
$
n^{-r}.
$ 
As $r = -\log_n\big( n^{-r}\big)$, we call $r$ the negative log convergence rate (NLCR) and use it to measure the convergence rate of both upper and lower bounds. Similarly,  the optimal divergence regimes of $k_1, k_2$ always have the form: $n^{r'}$. We call $r'$ the log divergence rate (LDR) and use it to measure the divergence rate of optimal $k_1$.
\par
\begin{figure}[H]
  \centering
    \subfigure[NLCR: Lower Bound]{\includegraphics[width=0.32\textwidth]{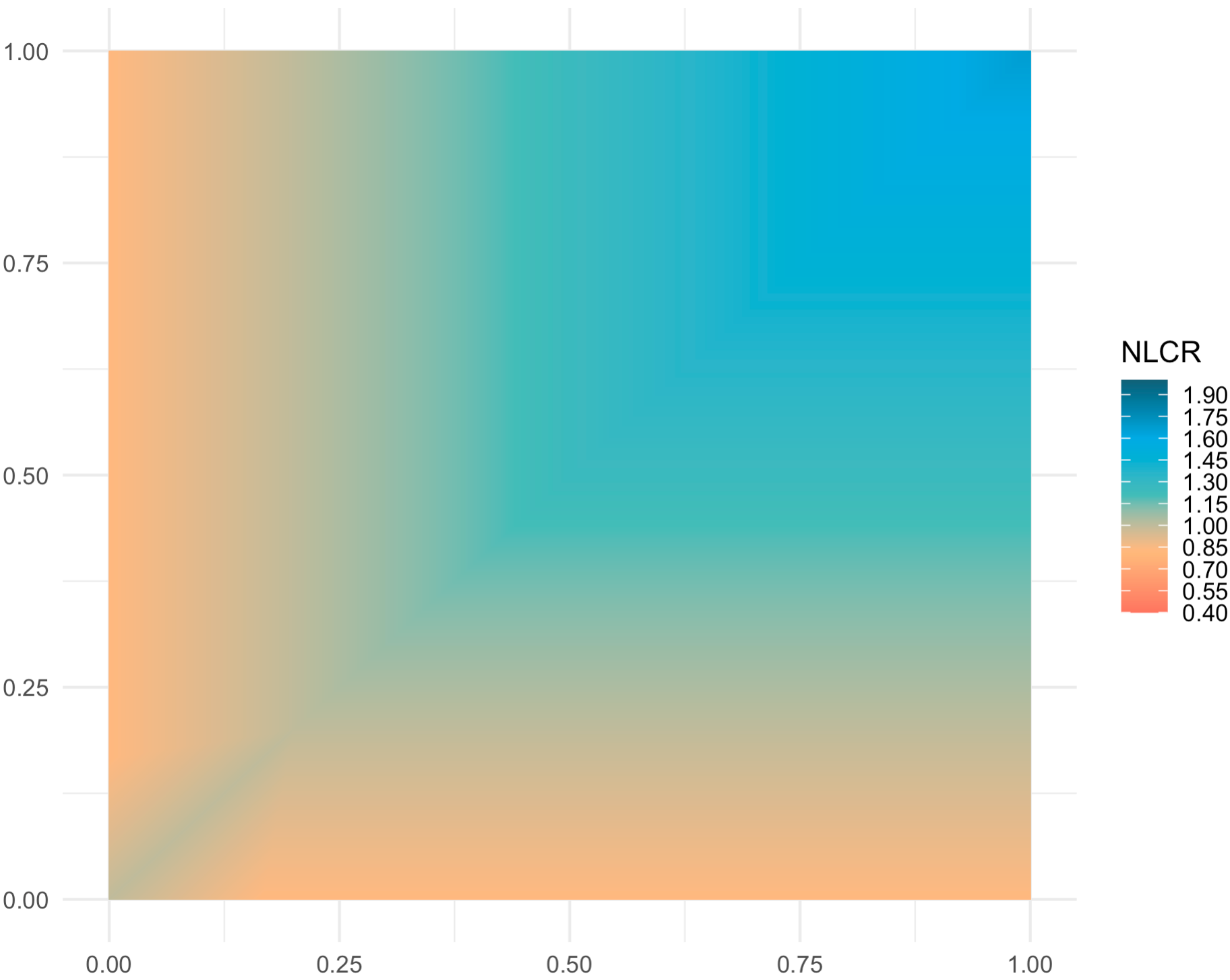}}
  \subfigure[NLCR: Upper Bound]{\includegraphics[width=0.32\textwidth]{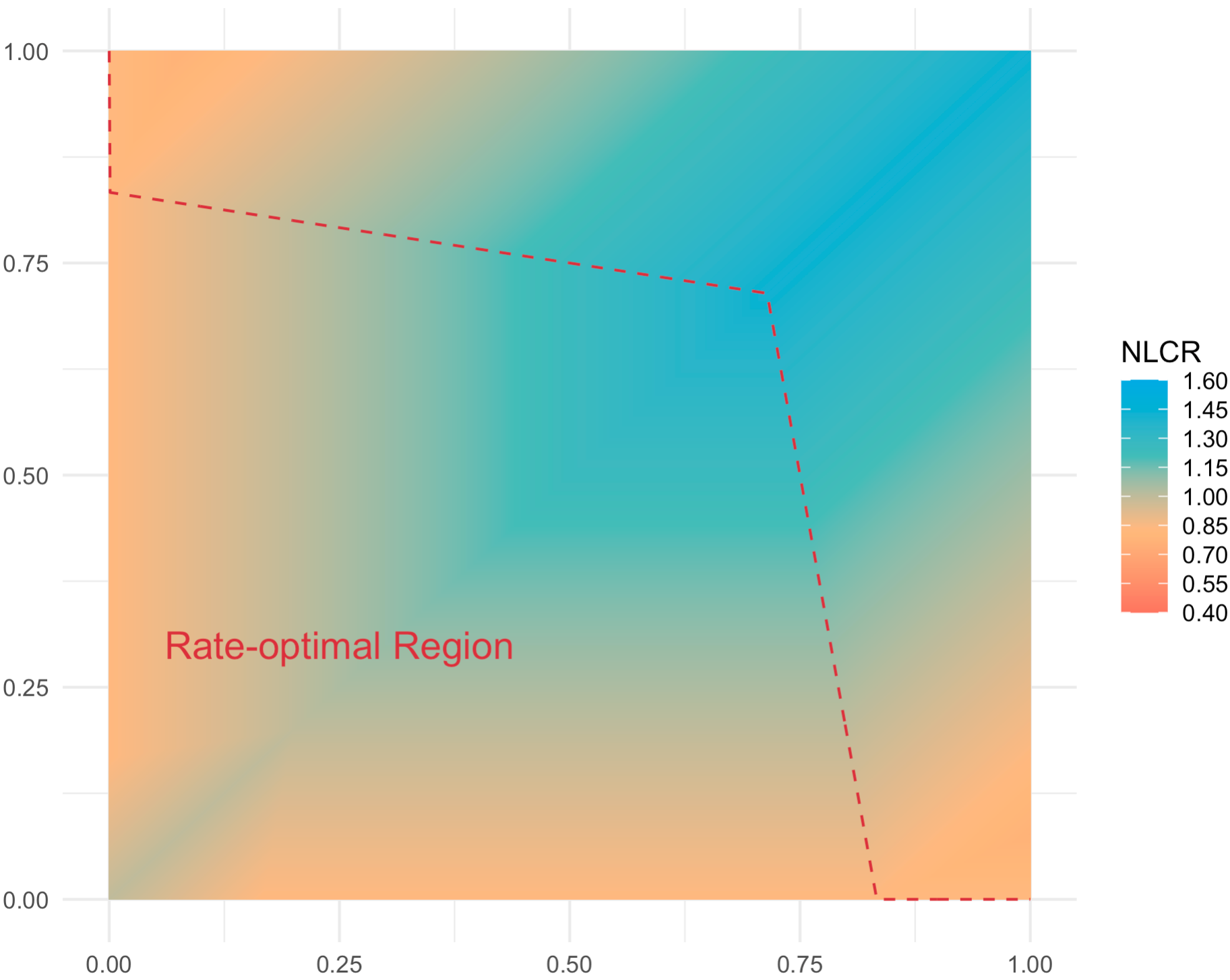}}
    \subfigure[LDR: Optimal $k_1$]{\includegraphics[width=0.32\textwidth]{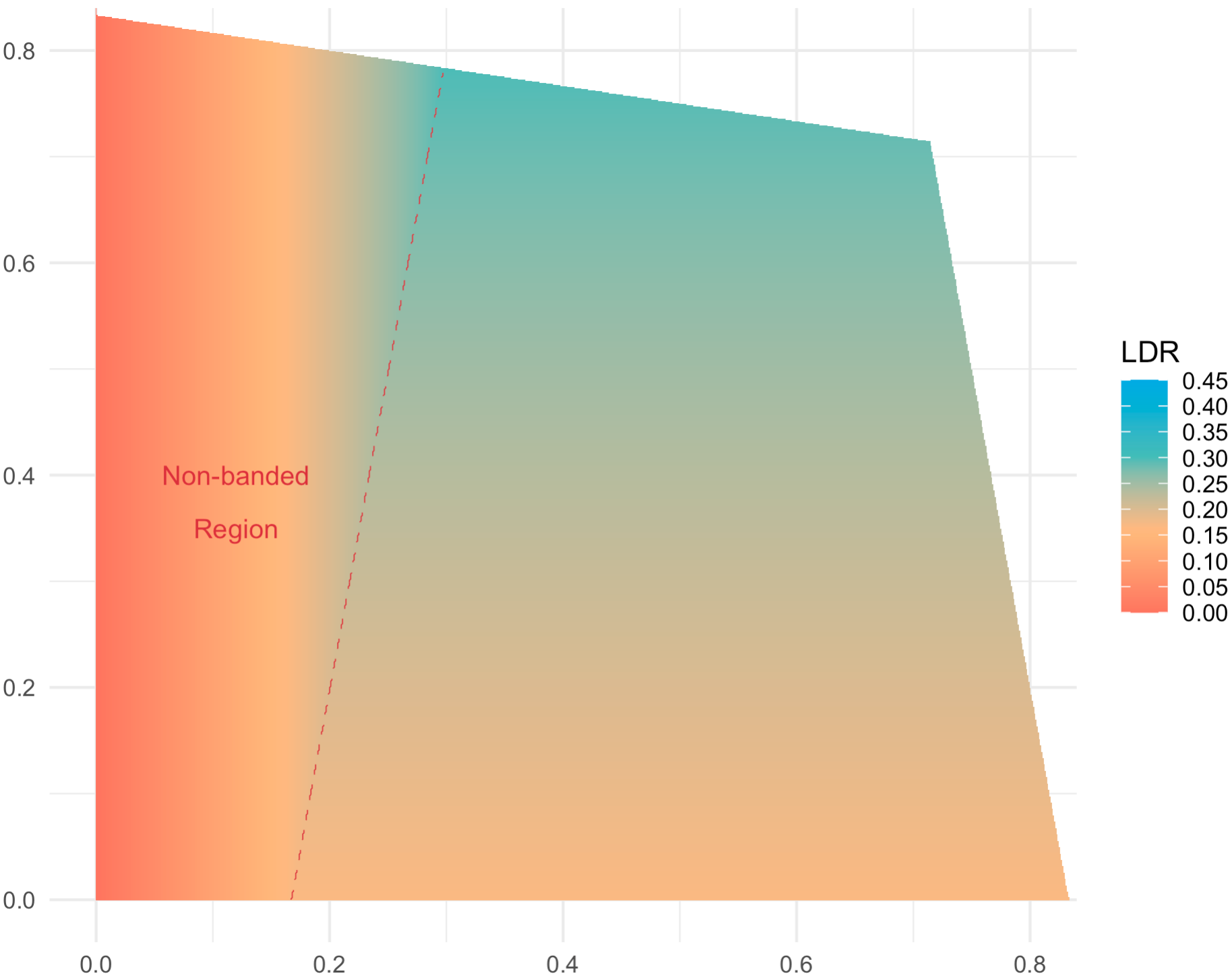}}
 \caption{The x-axis represents $\log p/\log n = \beta_1$ and the y-axis represents $\log q/\log n = \beta_2$. In panels (a) and (b), the color represents NLCR of the corresponding upper and lower bounds, where blue means a faster convergence rate. Panel (a) gives the lower bound by Theorem \ref{T:low}; panel (b) gives the error upper bound under Scenario (i) (sub-Gaussian scenario), and the regions in the bottom left corner, surrounded by red dashed lines, corresponds to the rate-optimal region. In panel (c),  colored regions correspond to the rate-optimal region in panels (b), and the color represents the LDR of the optimal $k_1$ value, where a deeper color indicates a larger divergence rate. The red dashed line sets the boundary for the non-bandable region (to its left), where no banding is necessary to achieve the optimal convergence rate.}
\label{fig:sgbd1:main}
\end{figure}
In Figure \ref{fig:sgbd1:main}~(a) and (b), we illustrate the NLCRs of the lower and upper bounds of the proposed estimator with different divergence regimes of $p,q$, under Scenario (i). The lower and upper bounds are matched in the rate-optimal region in Figure \ref{fig:sgbd1:main}~(b), in which the regime $p = q = n^{0.7}$ is included. When $p,q$ become larger, e.g., $p = q = n^{0.8}$, the proposed estimator is no longer rate-optimal by the current theoretical results. 
\par
In Figure \ref{fig:sgbd1:main}~(c), the LDR of optimal $k_1$ is in the no-banding region when $p$ is slowly divergent. When $p$ diverges faster than the phase-transition rate $(nq)^{1/6}$, the LDR of optimal $k_1$ is out of the no-banding region, and thus the banding on the $p$ direction is needed. The no-banding region in Figure \ref{fig:sgbd1:main}~(c) becomes wider when $\beta_2$ becomes larger. This is because the phase-transition rate $(nq)^{1/6}$ diverges faster when $q$ diverges faster.
\par
We leave more graphical illustrations in Section
\ref{sec:match:illustration}.
\item  In general, parameter complexity (such as matrix size and bandable level) has a complex effect on the error rate, the optimal bandwidth selection, and the rate-optimal regime. For example, under Scenario (i), with larger bandable levels $\alpha_1,\alpha_2$, a higher dimensional regime of $p,q$ can be shown to be rate-optimal by Theorem \ref{T2}.  A detailed discussion is given in Section \ref{sec:complexeffect}.
\item There are two technical reasons that make our estimator sub-optimal when $p$ and $q$ are both very large (e.g., $\gg \sqrt{n}$). One is that the spectral-norm bound in Lemma \ref{lemma:srate} is loosened from $\frac{k_1}{qn} + \frac{k_2}{pn}$ to $\frac{pk^2_1}{qn^2} + \frac{qk^2_2}{pn^2}$, when $pk_1 + qk_2 \succ n$, i.e., if the spectral-norm bound in Lemma \ref{lemma:srate} can be kept as the tight bound $\frac{k_1}{qn} + \frac{k_2}{pn}$, then the resulting optimal upper bounds by Theorem \ref{T3} will always match the lower bound in Theorem \ref{T:low}, regardless of $p,q$'s divergent regime. Another reason is that the lower bound in Theorem \ref{T:low} may not be tight. Improving these bounds is an interesting future work direction. 
\item For the degenerate regime (one of $p,q$ is $O(1)$), the rate optimality of {\color{black}both upper and lower bounds}, essentially agrees with those of Bickel's linear banded estimator \citep{bickel2008regularized, xiao2014theoretic} and Cai's linear tapering estimator in \citet{Cai2010}. See Section \ref{sec:dr} for more details.
\item We have derived the error bounds for individual matrix component estimation in Section \ref{sec:indcov}. The results show that our individual matrix estimator, multiplied by a proper constant, which is needed for the identifiability purpose, converges to the truth at the obtained rates under the Frobenius norm.   
\item We extend the rate results under the Frobenius norm to the spectral norm in Section \ref{sec:disc:spnorm}. It turns out to be very challenging to obtain spectral norm convergence results for our current estimator. Therefore we consider a rank one Kronecker product approximation based on minimizing the spectral norm rather than the Frobenius norm in our original estimation procedure. We then derive an upper bound for the convergence rate of the new estimator under the spectral norm. 
\end{enumerate}
 
}

\vspace{-.15in}
\section{Robust Covariance Estimation}\label{sec:RE}
Heavy-tailed data are commonly encountered in many applications. 
When modeling heavy-tailed data, Theorems \ref{T2} and \ref{T3} may not be suitable since they require sub-Gaussian tail and finite fourth moment conditions. In this section, we propose {\em robust} banded/tapering covariance estimators that improve the proposed non-robust estimators and enjoy desired theoretical properties for heavy-tailed data. 

\par
We first give a precise definition of heavy tail condition for the sample matrix $\M X_i, 1\leq i\leq n$. Following state-of-art robust covariance estimation literature \citep{avella2018robust, lu2021robust}, we quantify heavy-tailedness of $\M X_i$ by elementwise $2\zeta$-finite moment order, such that for any $1\leq l_1,m_1\leq p$ and $1\leq l_2,m_2 \leq q$, 
\bee\nonumber
\E \big(|x^{(i)}_{l_1,l_2}\cdot x^{(i)}_{m_1,m_2}|^\zeta\big)\leq M<+\infty.
\ee
We will show that the robust banded/tapering covariance estimators enjoy the following two nice properties. First, our proposed robust estimators converge to the truth under the Frobenius norm, even when data only have a finite moment order of $\zeta >1$. 
Compared to the proposed non-robust estimators, Theorem \ref{T3} needs at least $\zeta \geq 2$, i.e. a finite fourth or higher moment. Additionally, even when $\zeta \geq 2$, the convergence rate in Theorem \ref{T3} is always slower or equal to the convergence rate in Theorem \ref{T4}. Secondly, we show our proposed robust estimators have an adaptive convergence rate with different levels of heavy-tailedness. In particular, when the data $\M X_i$ has a higher moment condition $\zeta$, the convergence rate becomes faster, and closer to the rate in Theorem \ref{T2} under the sub-Gaussian condition, while the proposed non-robust estimators do not have such properties. 

We adopt the idea in \citet{fan2016shrinkage} and truncate the sample covariance as a preliminary step. When $\vecc(\mathbf{X}_i) \in \mathbb{R}^{pq}$, the modified estimator is defined as
\bee\nonumber
\widecheck{\M \Sigma} = \frac{1}{n}\sum_{i =1}^{n} \big\{\vecc(\widecheck{\M X}_i) - \vecc(\widecheck{{\boldsymbol\mu}})\big\}\cdot \big\{\vecc(\widecheck{\M X}_i)- \vecc(\widecheck{{\boldsymbol\mu}})\big\}^{\T},
\ee
where $\widecheck{\mathbf{X}}_i$ satisfies 
\bee\label{def:hatx}
\widecheck{x}^{(i)}_{l_1,l_2} = \text{sgn}(x^{(i)}_{l_1,l_2})(|x^{(i)}_{l_1,l_2}|\wedge \tau)
\ee 
with some selected $\tau >0$ for all $1\leq l_1\leq p, 1\leq l_2\leq q$, and $\widecheck{{\boldsymbol\mu}} = \frac{1}{n}\sum_{i = 1}^n\widecheck{\M X}_i$. {\color{black}Same as \citet{fan2016shrinkage}, we assume $\E[\M X_i] = \mathbf{0}$ for the ease of exposition.} For general covariance estimation problem, 
the results in \citet{fan2016shrinkage} suggest that if elementwisely $\E(|x^{(i)}_{l_1,l_2}\cdot x^{(i)}_{m_1,m_2}|^2) \leq M < +\infty$, where $M$ is a constant that does not depend on $n, l_1,m_1, l_2,m_2$, then $\|\widecheck{\mathbf{\Sigma}} -\mathbf{\Sigma}^*\|_{\max} = \mathcal{O}_{\M P}(\sqrt{\log \max(p,q) / n })$ by taking $\tau \asymp \big[\log\{\max(p,q)\}/n\big]^{-1/4}$, while sample covariance estimator needs the sub-Gaussian assumption to achieve  max-norm convergence.
\par
Similarly, instead of using the sample covariance $ \hat{\M \Sigma}$, we replace  $ \hat{\M \Sigma}$ with $\widecheck{\M \Sigma}$ when implementing our banded and/or tapering estimation procedure. 
In particular, define 
\bee\nonumber
&\widecheck{\mathbf{\Sigma}}_{\MB}(k_1,k_2) = \widecheck{\M \Sigma}\HDBand;
\\
&\widecheck{\mathbf{\Sigma}}_{\mathcal{T}}(k_1,k_2) = \widecheck{\M \Sigma}\HDTaper,
\ee
as the banded/tapering matrix of $\widecheck{\mathbf{\Sigma}}$ with parameters $k_1$ and $k_2$. We propose the robust banded/tapering estimators as $\hat{\mathbf{\Sigma}}^{\mathcal{R},\eta}(k_1,k_2)=\hat{\mathbf{\Sigma}}^{\mathcal{R},\eta}_2\kii\otimes \hat{\mathbf{\Sigma}}^{\mathcal{R},\eta}_1\ki$, where $\eta\in \{\MB,\mathcal{T}\}$ and $\hat{\mathbf{\Sigma}}^{\mathcal{R}, \eta}_1\ki, \hat{\mathbf{\Sigma}}^{\mathcal{R}, \eta}_2\kii$ are solutions from
\bee\label{rob_opt}
\big(\hat{\mathbf{\Sigma}}^{\mathcal{R},\eta}_1\ki, \hat{\mathbf{\Sigma}}^{\mathcal{R},\eta}_2\kii\big)=\argmin_{\M \Sigma_1, \M \Sigma_2}\| \widecheck{\boldsymbol\Sigma}_{\eta}(k_1, k_2)-{\M \Sigma}_2 \otimes {\M \Sigma}_1\|_\rF^2.
\ee

In practice, we adopt the resampling scheme in Section \ref{bandable} to select $\tau$ from a candidate pool $\mathbb{T} = \{|x|_{\iota}:\iota \in \mathscr{P} \}$, where $|x|_{\iota}$ is the $\iota$ percentile value among the set of all possible absolute values of coordinates in any $\M X_i$, i.e., $\{|x^{(i)}_{l_1,l_2}|: 1\leq i\leq n, 1 \leq l_1 \leq p, 1\leq l_2 \leq q\}$ and $\mathscr{P}$ is a candidate pool of percentiles under consideration. To solve \eqref{rob_opt} and select the tuning parameters ($k_1$, $k_2$), we adopt the same procedures as the ones used in non-robust banded/tapering covariance estimation in Section \ref{bandable}.

Before presenting our asymptotic result, we list two regularity assumptions for the truncated data $\widecheck{\M X}_1,\dots,\widecheck{\M X}_n$. Denote $\M \Sigma_{\mathcal{R}}^* \equiv \cov(\widecheck{\M X}_i)$ and its doubly banded or tapering version ${\M \Sigma}^{*,\eta}_{\mathcal{R}}(k_1,k_2)$ with $\eta\in\{\mathcal{B},\mathcal{T}\}$. 
\begin{assumption}\label{A:minev}
We assume eigenvalues of ${\M \Sigma}^*_{\mathcal{R}}$ satisfying $ \varepsilon_0' <\lambda_{\min}({\M \Sigma}^*_{\mathcal{R}})  \leq \lambda_{\max}({\M \Sigma}^*_{\mathcal{R}}) < 1/\varepsilon_0'$ for some constant $0 < \varepsilon_0' < 1$.
\end{assumption}
\begin{assumption}\label{A:psi2}
There exists $J_n $ such that $\big\|\vecc(\widecheck{\M X}_i)\big\|_{\psi_2} = \sup_{\|\mathbf{v}\| = 1}\|\mathbf{v}^\T\vecc(\widecheck{\M X}_i)\|_{\psi_2} \leq J_n\tau$, where $1\precsim J_n\precsim \sqrt{pq}$ and $J_n$ \textit{only} depends on $ n $.
\end{assumption}
Here $\|\cdot\|_{\psi_2}$ is the sub-Gaussian norm  defined in Section \ref{def:sesgrv}. Equation \eqref{def:hatx} implies that each element of $\vecc(\widecheck{\M X}_i)$ can be upper bounded by $C_J\tau$ under sub-Gaussian norm, with some constant $C_J>0$. Assumption \ref{A:psi2}  further introduces a general thresholding factor $J_n$ such that upper bound $J_n\tau$ holds uniformly for any $\|\mathbf{v}^\T\vecc(\widecheck{\M X}_i)\|_{\psi_2}$ when $\|\mathbf{v}\| = 1$, where $J_n\succsim 1$. A trivial bound of $J_n$ is $J_n\precsim\sqrt{pq}$ by triangle inequaltiy. In many conditions, based on the probabilistic structure of $\M X_i$, $J_n$ can be further reduced to a constant. We defer more discussions to Section \ref{sec:Jn} in the Supplementary File.
\par
Our main result is given in the following theorem; and the proof is included in the Supplementary File. 
\begin{theorem}\label{T4}
Let $\vecc(\M X_1), \vecc(\M X_2),\dots,\vecc(\M X_n)$ be i.i.d. random vectors in $\RR^{pq}$ with true covariance $\M \Sigma^* = \M \Sigma^*_2 \otimes \M \Sigma^*_1$. Assume $\E(|x^{(i)}_{l_1,l_2}\cdot x^{(i)}_{m_1,m_2}|^\zeta) \leq M < +\infty$ where $\zeta> 1$ is the  order of heavy-tailedness, and $M$ is a constant that does not depend on $n, l_1,m_1, l_2,m_2$.  Define $\small\M I_{\eta,d}(k)$ and $\small\tilde{\alpha}_{a} $ the same as in Theorem \ref{T2}, and define the error terms, $r^{\mathcal{R},\zeta}_1(k_1,k_2, p,q,n) = (k_1k_2)^{1/\zeta}\cdot\big(\frac{k_1J_n^4}{qn} + \frac{k_2J_n^4}{pn}\big)^{1 - 1/\zeta}$, and $r^{\mathcal{R},\zeta}_2(k_1,k_2, p,q,n) = (k_1k_2)^{1/\zeta}\cdot\big(\frac{pk^2_1J_n^4}{qn^2} + \frac{qk^2_2J_n^4}{pn^2}  \big)^{1 - 1/\zeta}$.
\par
Consider the following four cases:
$\M{C1.}$  $pk_1 + qk_2 \precsim n$ and $ 1< \zeta < 2$; $\M{C2.}$ $pk_1 + qk_2 \precsim n$ and $\zeta \geq 2$; $\M{C3.}$  $pk_1 + qk_2 \succ n$ and $ 1< \zeta < 2$; $\M{C4.}$ $pk_1 + qk_2 \succ n$ and $\zeta \geq 2$. Under Assumptions \ref{A:minev}--\ref{A:psi2}, by choosing the optimal $\tau$ given in Section  \ref{sec:opt:threshold}, 
we have
\bee\label{T4:res}
&\E\Big(\frac{\|\hat{\M \Sigma}_2^{\mathcal{R},\eta}\kii\otimes\hat{\M \Sigma}_1^{\mathcal{R},\eta}\ki - \M\Sigma^*\|_{\F}^2}{pq}\Big)
\\ 
&\small\precsim \begin{cases}
r^{\mathcal{R},\zeta}_1(k_1,k_2, p,q,n)+ \M I_{\eta,p}(k_1)\cdot k_1^{-\tilde{\alpha}_1} +  \M I_{\eta,q}(k_2)\cdot k_2^{-\tilde{\alpha}_2}, & \M{C1},
\\
\frac{k_1k_2}{n}\wedge r^{\mathcal{R},\zeta}_1(k_1,k_2, p,q,n)+ \M I_{\eta,p}(k_1)\cdot k_1^{-\tilde{\alpha}_1} +  \M I_{\eta,q}(k_2)\cdot k_2^{-\tilde{\alpha}_2}, & \M{C2},

\\
r^{\mathcal{R},\zeta}_2(k_1,k_2, p,q,n) + \M I_{\eta,p}(k_1)\cdot k_1^{-\tilde\alpha_1} +  \M I_{\eta,q}(k_2)\cdot k_2^{-\tilde\alpha_2}, & \M{C3},
\\
\frac{k_1k_2}{n}\wedge r^{\mathcal{R},\zeta}_2(k_1,k_2, p,q,n)+ \M I_{\eta,p}(k_1)\cdot k_1^{-\tilde\alpha_1} +  \M I_{\eta,q}(k_2)\cdot k_2^{-\tilde\alpha_2}.& \M{C4}.
\end{cases}
\ee
\end{theorem}
More discussion on the error rate and the choice of optimal $\tau$ is given in Section \ref{sec:supps3}.  
\vspace{-.15in}

\section{Simulation}\label{sim}
\subsection{Banded and Tapering Estimator}\label{sec:simu_pe}
We investigate the finite sample performance of the proposed estimator by simulations. We first consider the case where the data are generated from multivariate normal distributions. In particular, the $\{\vecc (\Xb_i)\}_{i = 1}^n $ are i.i.d generated from $\M N({\bf 0}, \M\Sigma) $, where $\M\Sigma = \boldsymbol\Sigma_2\otimes\boldsymbol\Sigma_1$ with $ \boldsymbol\Sigma_1\in \RR^{p\times p} $ and $ \boldsymbol\Sigma_2\in \RR^{q\times q} $. We consider following two covariance structures for $ \boldsymbol\Sigma_1$ and $ \boldsymbol\Sigma_2$. 

\noindent \textbf{Case 1.} {\it Moving average covariance structure}\\
We set $ \boldsymbol\Sigma_1 $ and $ \boldsymbol\Sigma_2 $ to be the covariances of an MA(1) process with
\begin{eqnarray*}
&&\sigma_{l_1, m_1}^{(1)}=\rho_1^{|l_1-m_1|}\cdot  \textbf{I}\{|l_1-m_1|\leq 1\}, {\rm ~~~} 1\leq l_1, m_1\leq p, \\
&&\sigma_{l_2, m_2}^{(2)}=\rho_2^{|l_2-m_2|}\cdot  \textbf{I}\{|l_2-m_2|\leq 1\}, {\rm ~~~} 1\leq l_2, m_2\leq q,
\end{eqnarray*}
where $ \rho_1=\rho_2=0.5 $.

\noindent \textbf{Case 2.} {\it Autoregressive covariance structure}\\
We take $ \boldsymbol\Sigma_1 $ and $ \boldsymbol\Sigma_2 $ to be the covariances of an AR(1) process with
\begin{eqnarray*}
&&\sigma_{l_1, m_1}^{(1)}=\rho_1^{|l_1-m_1|}, {\rm ~~~} 1\leq l_1, m_1\leq p,\\
&&\sigma_{l_2, m_2}^{(2)}=\rho_2^{|l_2-m_2|}, {\rm ~~~} 1\leq l_2, m_2\leq q,
\end{eqnarray*}
where we set $ (\rho_1, \rho_2)=(0.1, 0.1), (0.5, 0.5), (0.8, 0.8) $. 

We consider $ n=50, 100, 200 $, and $ (p,q)=(20,30), (100, 100) $. We compare our proposed estimators {\color{black}with the sample covariance estimator; the banded estimator \citep{bickel2008regularized}; the tapering estimator \citep{Cai2010}; the doubly banded and tapering estimators $\tilde{\M\Sigma}_{\mathcal{B}}(k_1,k_2)$ and $\tilde{\M\Sigma}_{\mathcal{T}}(k_1,k_2)$ defined in Sections \ref{bandable} and \ref{tapering}, respectively. 
We use the resampling scheme to choose the bandwidths for \citet{bickel2008regularized}'s banded estimator, \citet{Cai2010}'s tapering estimator and the corresponding doubly banded and tapering estimators. The random split procedure is repeated for $N = 10$ times. }

We report several quantities, $\|\hat{\boldsymbol\Sigma}-\boldsymbol\Sigma\|_{\F}$, $ \|\hat{\boldsymbol\Sigma}-\boldsymbol\Sigma\|_1 $ and $ \|\hat{\boldsymbol\Sigma}-\boldsymbol\Sigma\|_{2} $,  where $ \hat{\boldsymbol\Sigma} $ can be our proposed estimators and {\color{black} the other five comparison estimators.} These quantities characterize the estimation errors for the covariance matrices. We also report the $ \hat{k}_1 $ and $ \hat{k}_2 $ for our proposed methods and the doubly banded/tapering estimators; and $ \hat{k} $ for \citet{bickel2008regularized}'s banded estimator and \citet{Cai2010}'s tapering estimator. We summarize the averages of these quantities over 100 Monte Carlo repetitions in Table \ref{Table:simulation3} and Tables \ref{Table:simulation1}--\ref{addsims4} in Section \ref{sec:num:simu:1} of the  Supplementary File. Their associated standard errors are summarized in Section \ref{sec:simu:se} of the Supplementary File. 
\begin{table}[htp]
\thisfloatpagestyle{plain}
\setlength{\tabcolsep}{-1pt}
  \centering
    \caption{Simulation results for $(p,q,\rho_1, \rho_2)=(100,100, 0.5, 0.5)$ with the MA(1) covariance structure over 100  replications.
    The averages of $\|\hat{\boldsymbol\Sigma}-\boldsymbol\Sigma\|_{\F}$, $ \|\hat{\boldsymbol\Sigma}-\boldsymbol\Sigma\|_1 $ and $ \|\hat{\boldsymbol\Sigma}-\boldsymbol\Sigma\|_{2} $ for 
the proposed estimators (Proposed B and Proposed T), doubly banded and tapering estimators (Doubly B and Doubly T), Bickel's banded estimator (Banded), Cai's tapering estimator (Tapering) and the sample covariance estimator (Sample) are reported. The averages of $ \hat{k}_1 $ and $ \hat{k}_2 $ for the proposed and doubly banded/tapering estimators, the averages of  $ \hat{k} $ for Bickel's banded estimator and Cai's tapering estimator are also reported. }
\vspace{0.5pt}
\newsavebox{\tableboxb}
\begin{lrbox}{\tableboxb}
\begin{tabular}{p{0.24\textwidth}>{\centering}p{0.23\textwidth}p{0.15\textwidth}p{0.15\textwidth}p{0.15\textwidth}p{0.08\textwidth}p{0.08\textwidth}p{0.08\textwidth}}
  \hline
  
$ (n, p, q, \rho_1,\rho_2) $ & Method & $\|\hat{\boldsymbol\Sigma}-\boldsymbol\Sigma\|_{\F}$ & $ \|\hat{\boldsymbol\Sigma}-\boldsymbol\Sigma\|_1 $ & $ \|\hat{\boldsymbol\Sigma}-\boldsymbol\Sigma\|_{2} $ & $\hat{k}$ \quad & $\hat{k}_1$ \quad & $\hat{k}_2$ \quad \\ \hline
$(50,100,100,0.5,0.5)$ & Sample &1428.40&1605.38& 244.51\\& Banded &91.39& 4.08& 2.59& 1.09&&\\& Tapering &93.40& 4.49& 2.80& 2.00&&\\& Double B &47.82&4.32&2.58&&1.01&1.01\\& Double T &77.87&3.96&3.46&&1.78&1.82\\& Proposed B &8.41&0.79&0.49&&1.73&1.87\\& Proposed T &8.71&0.81&0.51&&2.00&2.00\\
$(100,100,100,0.5,0.5)$  & Sample &1004.87&1031.87& 130.38\\& Banded &88.68& 3.37& 2.29& 1.05&&\\& Tapering &89.76& 3.63& 2.42& 2.00&&\\& Double B &33.58&2.89&1.70&&1.01&1.00\\& Double T &51.88&4.85&2.37&&2.00&2.00\\& Proposed B &5.85&0.53&0.33&&1.65&1.84\\& Proposed T &6.13&0.56&0.35&&2.00&2.00\\
$(200,100,100,0.5,0.5)$& Sample &708.89&683.42& 71.37\\& Banded &87.37& 2.95& 2.15& 1.07&&\\& Tapering &87.93& 3.14& 2.22& 2.08&&\\& Double B &23.69&2.00&1.17&&1.01&1.00\\& Double T &36.58&3.32&1.63&&2.00&2.00\\& Proposed B &4.16&0.38&0.23&&1.79&1.82\\& Proposed T &4.32&0.39&0.25&&2.00&2.00\\
  \hline
\end{tabular}
\end{lrbox}
\label{Table:simulation3}
\scalebox{0.75}{\usebox{\tableboxb}}
\end{table}

From these tables, we can see that our proposed methods always perform better than all  comparison methods in terms of estimation errors. {\color{black}In addition, the doubly banded and tapering estimators perform generally better than the other three comparison estimators (sample/banded/tapering). 
} For Case 1, the oracle $ k_1$ and $k_2$ for the banded estimator are both $ 1$. Noticing that $ B_1({\bf 1}_p)=T_2({\bf 1}_p)$, the oracle $ k_1$ and $k_2$ for the tapering estimator are both $ 2$, so our method can select $ \hat{k}_1$ and $ \hat{k}_2$ accurately. For Case 2, when $ \rho_1 $ and $\rho_2$ increase, the selected bandwidths $ \hat{k}_1 $ and $ \hat {k}_2 $ for the proposed method also increase. 
\subsection{Robust Covariance Estimation for Heavy-tailed Data}
In this subsection, we investigate the finite sample performance of our proposed robust estimators when data follow heavy-tailed distributions. In particular, we consider i.i.d. $p\times q$ matrix-valued data $\M X_i$ with $\text{vec} ({\M X_i})$ following multivariate $t$ distributions with degrees of freedom $3$, i.e., $t_3(0,\M \Sigma)$ with $\M \Sigma = \M \Sigma_2 \otimes \M\Sigma_1$. Similar to Section \ref{sec:simu_pe}, we consider two covariance structures. Case 1: $\M \Sigma_1$, $\M \Sigma_2$ are the covariances of MA(1) process, where $(p,q,\rho_1,\rho_2) = (20,30,0.5,0.5)$ and $n = 50,100,200$. Case 2: $\M \Sigma_1$, $\M \Sigma_2$ are the covariances of AR(1) process, where $(\rho_1,\rho_2) = (0.1,0.1),(0.5,0.5),(0.8,0.8)$ and $(n,p,q) = (50,20,30)$.

\par
We compare the proposed robust banded and tapering estimators with the non-robust version of banded and tapering  estimators introduced in Section \ref{sec:2} as well as the sample covariance estimator. The results for both cases are summarized in Table \ref{tb:s5simu} of the Supplementary File and Table \ref{tb:rb_2} based on 100 Monte Carlo (MC) replications, respectively. We use the random splitting procedure introduced in Section \ref{bandable} and Section \ref{sec:RE} to select $\hat{k}_1,\hat{k}_2$ and choose the truncation threshold $\hat{\tau}$ based on  $$\mathscr{P} = \{100,99.995,99.99,99.97,99.95,99.93,99.91,99, 98,95,92,90,87,85,80\}.$$
\par

From the results, we can see the proposed robust estimators always outperform all the other methods. For Case 1 (Table \ref{tb:s5simu}), when $n$ becomes larger, the selected $\hat{\tau}$ for the proposed robust estimators decreases. For Case 2 (Table \ref{tb:rb_2}), we can see a clear improvement of estimation accuracy by adopting the robust covariance estimation as $\rho_1,\rho_2$ increase. Meanwhile, the selected bandwidths $\hat{k}_1$, $\hat{k}_2$ increase and the selected $\hat{\tau}$ decreases as expected.
\begin{table}[htp]
\thisfloatpagestyle{plain}
\setlength{\tabcolsep}{-1pt}
  \centering
    \caption{Simulation results for heavy-tailed data with $(\rho_1,\rho_2) = (0.1,0.1), (0.5,0.5), (0.8,0.8)$,  $(n,p,q)=(50,20,30)$, and AR(1) covariance structure over 100 replications. The averages of $\|\hat{\boldsymbol\Sigma}-\boldsymbol\Sigma\|_{\F}$, $ \|\hat{\boldsymbol\Sigma}-\boldsymbol\Sigma\|_1 $ and $ \|\hat{\boldsymbol\Sigma}-\boldsymbol\Sigma\|_{2} $ for robust estimators (Robust B and Robust T), 
our proposed estimators (Proposed B and Proposed T) and the naive sample covariance estimator (Sample) are summarized in this table. The averages of $ \hat{k}_1 $ and $ \hat{k}_2 $ for the proposed robust/non-robust methods, the averages of  $ \hat{\tau} $ for the proposed robust methods are also reported.}
\vspace{0.5pt}
\begin{lrbox}{\tableboxb}\label{tb:rb_2}\Rotatebox{0}{
\begin{tabular}{p{0.24\textwidth}>{\centering}p{0.23\textwidth}p{0.15\textwidth}p{0.15\textwidth}p{0.15\textwidth}p{0.08\textwidth}p{0.08\textwidth}p{0.08\textwidth}}
  \hline
$(n,p,q,\rho_1,\rho_2)$ & Method & $\| \hat{\Sigma} - \Sigma\|_{\F}$ & $\|\hat{\Sigma} - \Sigma\|_1$ & $\| \hat{\Sigma} - \Sigma \|_{2}$ & $\hat{k}_1$ & $\hat{k}_2$ & $\hat{\tau}$
\\
\hline
$(50,20,30,0.1,0.1)$& Sample &245.38&590.03&209.47\\& Proposed B &12.64&3.02&1.97&1.24&1.13\\& Proposed T &11.79&1.66&1.32&1.10&0.86\\& Robust B&9.05&1.75&1.05&1.45&1.34&7.16\\& Robust T&8.63&1.17&0.85&1.20&1.00&6.87\\
$(50,20,30,0.5,0.5)$& Sample &245.70&574.99&209.81\\& Proposed B &24.15&12.72&7.10&2.05&2.04\\& Proposed T &22.08&8.67&5.26&1.80&1.72\\& Robust B&16.05&7.71&4.38&2.37&2.43&5.71\\& Robust T&16.06&6.91&4.33&2.00&1.98&5.65\\
$(50,20,30,0.8,0.8)$& Sample &249.51&511.77&212.37\\& Proposed B &92.58&141.98&63.37&7.63&7.33\\& Proposed T &83.38&117.68&56.57&7.81&7.71\\& Robust B&48.68&49.61&28.75&8.69&8.60&2.85\\& Robust T&48.56&49.26&30.45&9.55&9.28&3.34\\
  \hline
\end{tabular}}
\end{lrbox}
\label{Table:simulation6}
\scalebox{0.75}{\usebox{\tableboxb}}
\end{table}

\vspace{-.15in}
\section{Gridded Temperature Anomaly Data Analysis}\label{realdata}
We analyze a gridded temperature anomalies dataset collected by the U.S. National Oceanic and Atmospheric Administration (NOAA) \citep{shen2017r,gu2020generalized} in this section. Another case study of a stock price dataset is presented in Section \ref{data:stock} of the Supplementary File. The temperature anomalies dataset contains the monthly air and marine temperature measurements from Jan 1880 to 2017 with a $5^{\circ}\times 5^{\circ}$ latitude-longitude resolution. It can be downloaded at \href{ftp://ftp.ncdc.noaa.gov/pub/data/noaaglobaltemp/operational}{ftp://ftp.ncdc.noaa.gov/pub/data/noaaglobaltemp/operational}. 

In our study, we focus on the temperature anomalies (the difference between an observed temperature and the baseline/normal value) in the past 20 years over the region marked in deep blue as shown in Figure \ref{climate_matrix} (Supplementary File) to avoid the missing values and to make sure the resulting data are in a matrix form (with two dimensions representing longitude and latitude). We have pre-processed the data to remove the mean trend and the dependence over the time. This is implemented by (i) first fitting a separate linear model for each spatial coordinate over the time and then removing the estimated time trend; and (ii) ``thinning'' the sequence of monthly measurements by taking a monthly record from a window of every 5 months. In Figure \ref{climate_trend_acf}, we use $5^\degree\times 5^\degree$ box centered at $57.5^\degree W$ longitude and $7.5^\degree S$ latitude as an example to show the effect of pre-processing. In (a) and (b), we show the data before and after the detrending; and in (c), we plot the estimated auto-correlation function for the thinned sequence. It can be seen that both detrending and thinning work quite well for that region. Similar results were also obtained for other spatial regions in our dataset.

 \begin{figure}
 \thisfloatpagestyle{plain}
  \centering
    \subfigure[]{\includegraphics[width=0.32\textwidth]{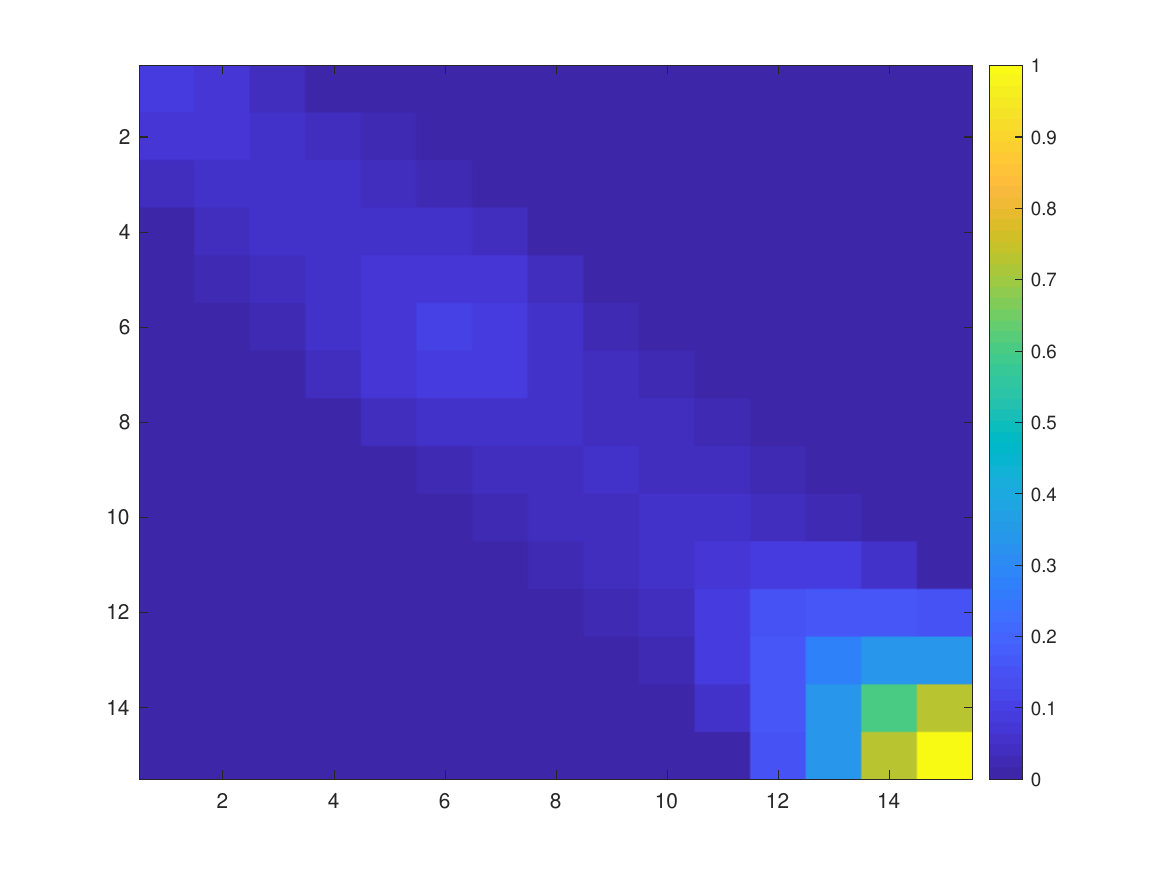}}
  \subfigure[]{\includegraphics[width=0.32\textwidth]{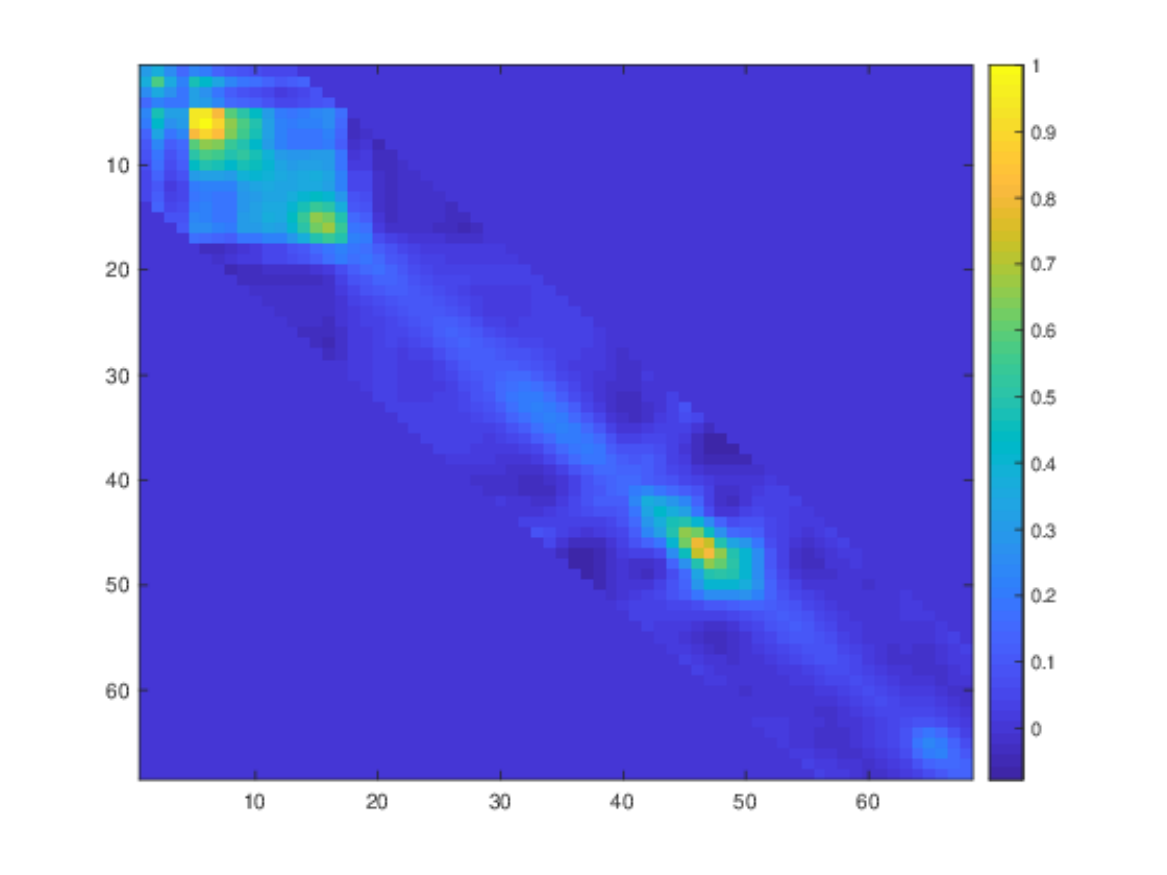}}
     \subfigure[]{\includegraphics[width=0.32\textwidth]{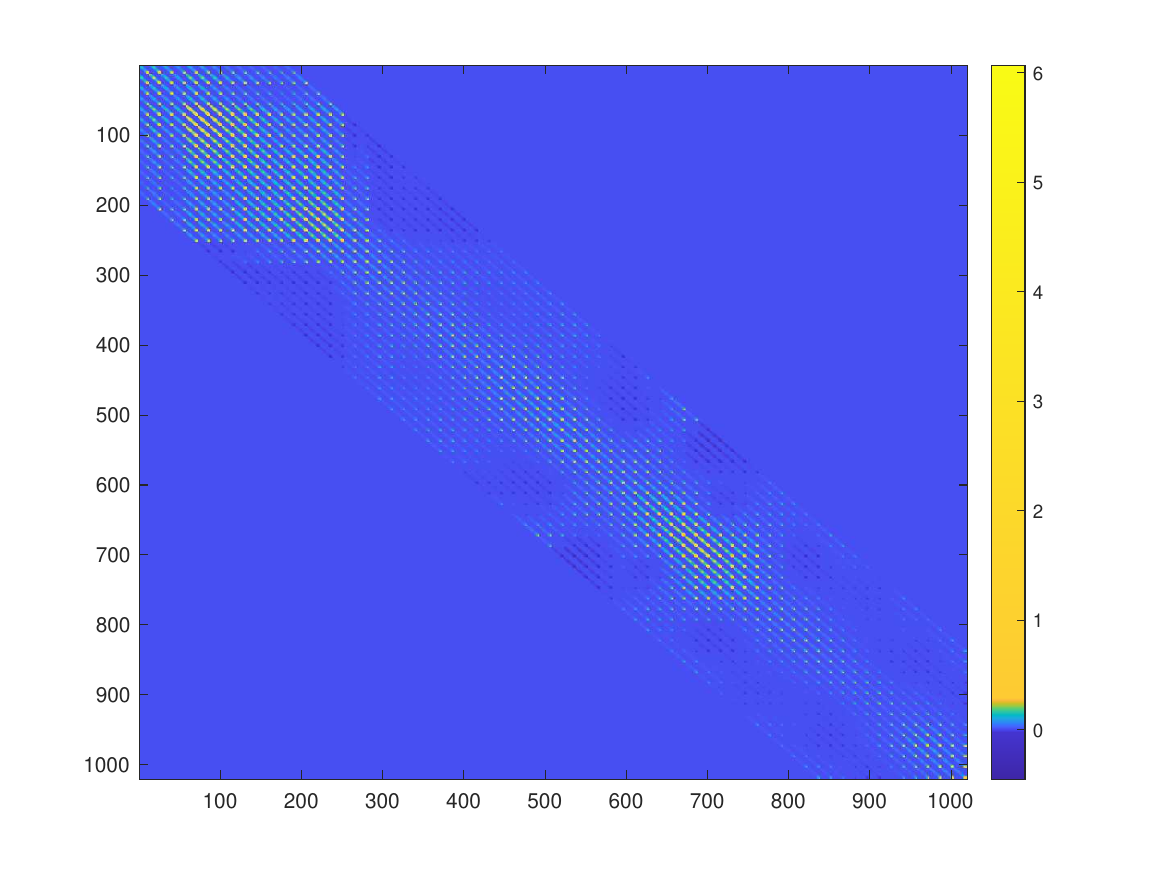}}
     \subfigure[]{\includegraphics[width=0.32\textwidth]{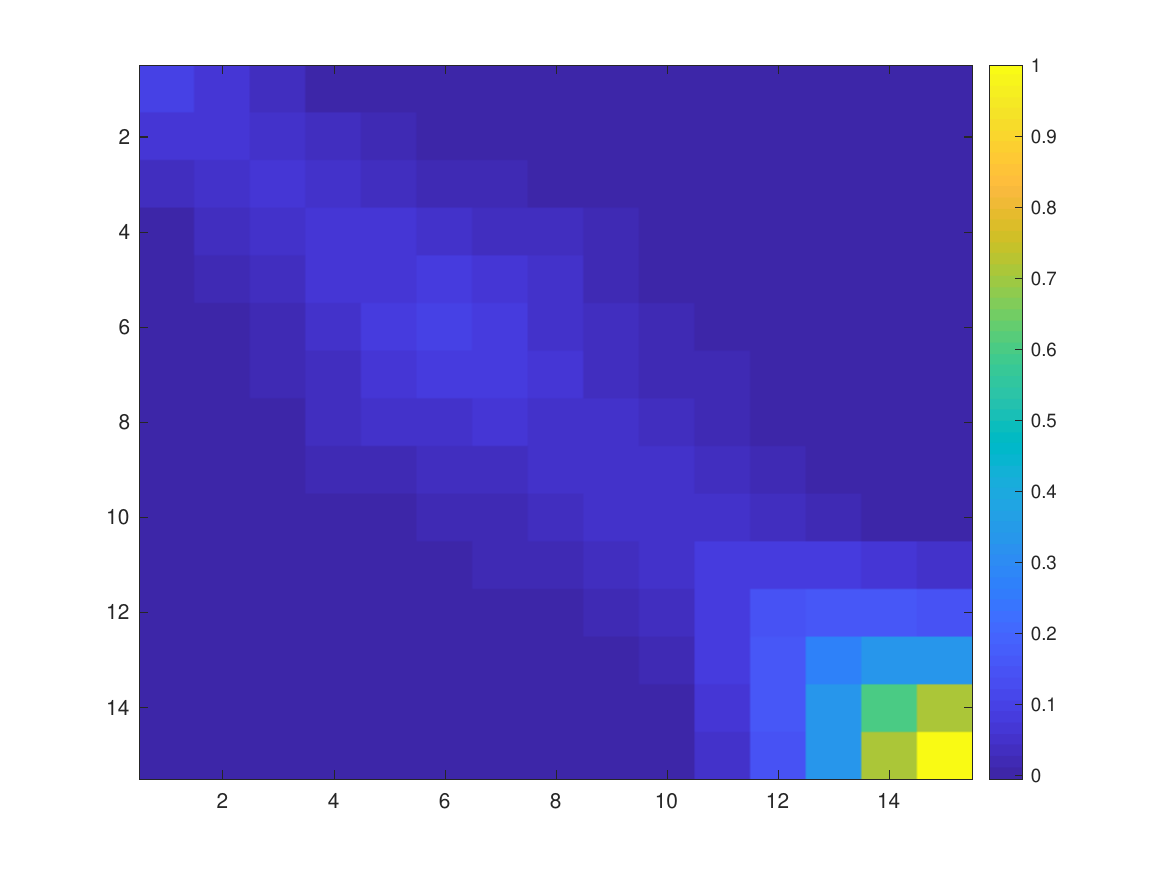}}
  \subfigure[]{\includegraphics[width=0.32\textwidth]{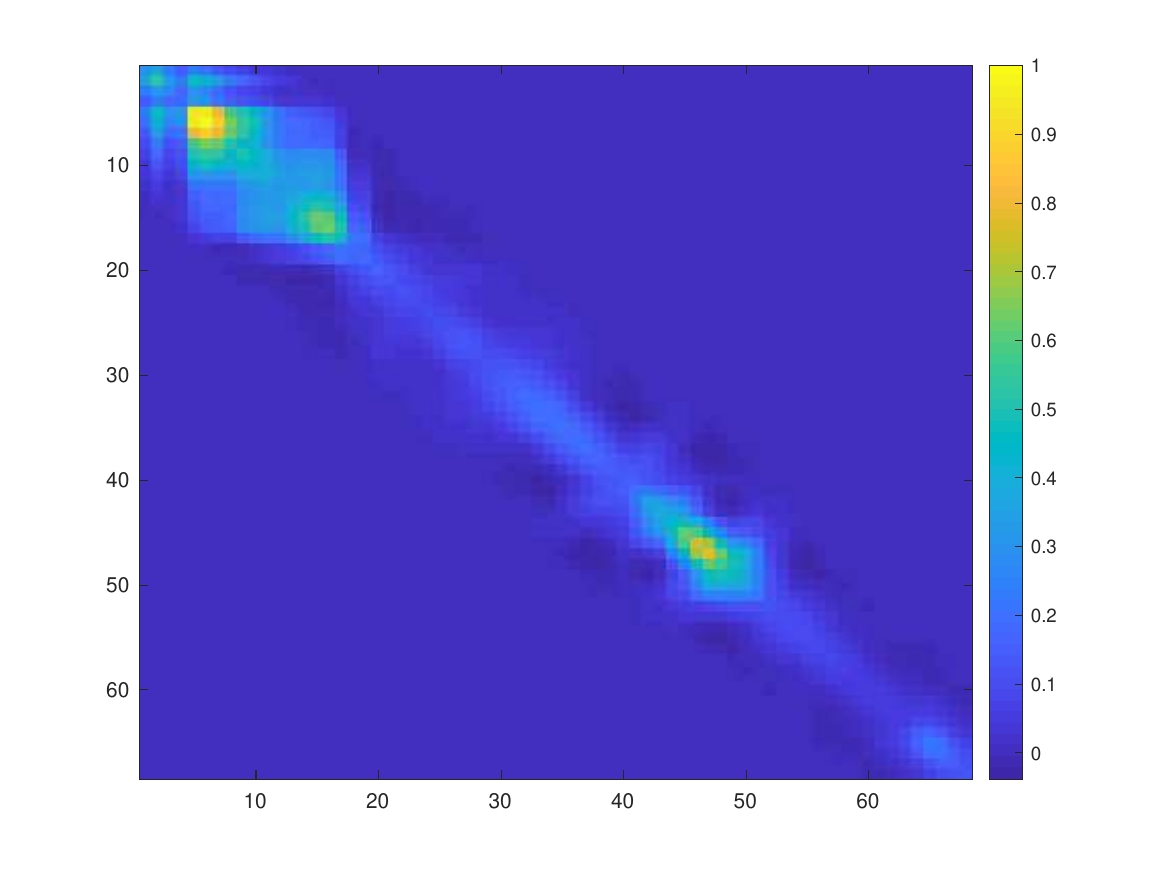}}
      \subfigure[]{\includegraphics[width=0.32\textwidth]{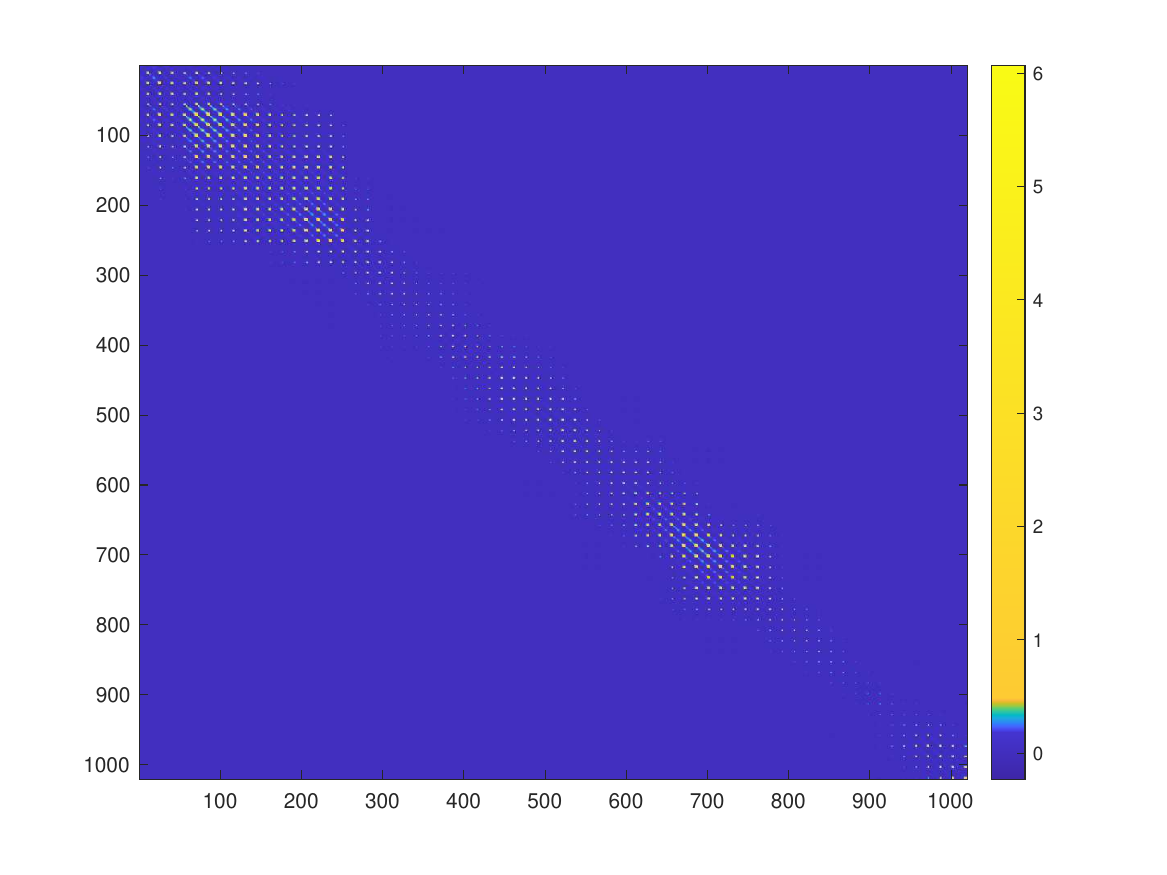}}

\caption{Temperature data analysis: Plots for the estimated covariance matrices obtained by proposed robust banded/tapering estimators: Panel (a) is the scaled robust banded latitude direction covariance $ \hat{\boldsymbol\Sigma}^{\mathcal{R},\MB}_{1}(\hat{k}^{\mathcal{B}}_{1}) $, Panel (b) is the scaled robust banded longitude direction covariance $ \hat{\boldsymbol\Sigma}^{\mathcal{R},\MB}_{2}(\hat{k}^{\mathcal{B}}_{2}) $ and Panel (c) is overall banded covariance $ \hat{\boldsymbol\Sigma}^{\mathcal{R},\MB}_{}(\hat{k}^{\mathcal{B}}_{1},\hat{k}^{\mathcal{B}}_{2})$. Panel (d) is the scaled robust tapering latitude direction covariance $ \hat{\boldsymbol\Sigma}^{\mathcal{R},\mathcal{T}}_{1}(\hat{k}^{\mathcal{T}}_{1})$, Panel (e) is the scaled robust tapering longitude direction covariance  $ \hat{\boldsymbol\Sigma}^{\mathcal{R},\mathcal{T}}_{2}(\hat{k}^{\mathcal{T}}_{2})$ and Panel (f) is the overall tapering covariance  $ \hat{\boldsymbol\Sigma}^{\mathcal{R},\mathcal{T}}_{}(\hat{k}^{\mathcal{T}}_{1},\hat{k}^{\mathcal{T}}_{2})$.}
\label{real_climate}
\end{figure}
\par
After preprocessing, we obtain a dataset of 15 (latitude) $\times$ 68 (longitude) matrix with a sample size of $n = 48$. {\color{black} We check the separability of our dataset's covariance structure via two procedures. First, we use the projection-based empirical bootstrap test \citep{aston2017tests}, implemented by function \textit{empirical\_bootstrap\_test} in R package {\tt covsep}. The p-value of the test is $0.133$, which confirms the validity of the separability assumption. Second, we compare the prediction error of covariance estimators with and without the separability. In particular, we randomly split the dataset into a training set with $n_1 = 48/2 = 24$ samples and a test set with the remaining half. The training set is used to calculate the sample covariance estimator (no separability) and our proposed separable estimator (separability assumed). For each estimator, we compare it with the sample covariance matrix obtained from the test sample as a benchmark, and then use the 1-norm and the Frobenius norm of their difference as the prediction errors. We repeat this procedure  500 times by using different random seeds for splitting the data. The average 1-norm and the  Frobenius  norm  prediction errors are $378.5$ $(\text{SE}=2.51)$ and $133.9$ $(0.48)$ without separability, in comparison with  $319.3 $ $(2.22)$ and $ 124.8$ $(0.50)$ with separability. The significant improvement in average prediction errors under both norms confirms the utility of imposed separability assumption.}
\par
Moreover, in Figure \ref{climate_trend_acf} (d), one can see outliers in the quantile-quantile (Q-Q) plot of all temperature anomalies. Therefore, we apply our proposed robust banded and tapering covariance estimation methods to the dataset. The threshold parameter $\tau$ for robust estimation is chosen by the method introduced in Section \ref{sec:RE} with 
\bee\label{candidate}
\mathscr{P} = \{100,99.995,99.99,99.97,99.95,99.93,99.91,99, 98,95,92,90,87,85,80\},
\ee 
after an initial evaluation over a wide range of values. The resampling scheme chooses $\hat{k}^{\mathcal{B}}_{1} = 3$, $\hat{k}^{\MB}_{2} = 12$ and  $\hat{\tau}^{\MB} = |x|_{99.97} = 5.28$ for the proposed robust banded estimator and picks $\hat{k}^{\mathcal{T}}_{1} = 6$, $\hat{k}^{\mathcal{T}}_{2} = 12$ and  $\hat{\tau}^{\mathcal{T}}_{} = |x|_{99.995} = 6.31$ for the proposed robust tapering estimator with a random split for $N = 10$ times. {\color{black}
\begin{remark}
The threshold parameter $\tau$ is chosen to be $99.99\%$ and $99.995\%$ for proposed banded and tapering estimators here. The large quantiles in the selected $\tau $ suggest the existence of outliers, which is a different data abnormality structure from the moment-based structure. This finding is also supported by the Q-Q plot in Figure \ref{climate_trend_acf} (d). In this case, our robust estimation procedure still works well because the initial truncation step can help remove these outliers. Meanwhile, our theoretical results mainly focus on the moment-based condition and it is still unclear how to extend these results when there exist outliers and the moment-based condition is violated. We leave it for future investigation. 
\end{remark}
}
\par
We plot the estimated covariance matrices in Figure \ref{real_climate}, including those over the latitude direction obtained by banding and tapering in (a) and (d), over the longitude direction in (b) and (e), and the overall covariance matrices in (c) and (f). All these matrices are scaled such that the maximum entry is $1$.  The results clearly suggest that the bandable assumption fits the data well, which is expected since the association between temperatures at two distant geographic areas is very weak. To further evaluate this assumption, we compare the magnitudes of entries removed from regularization (regularized entries) and unregularized entries, in latitude banded covariance $\hat{\M\Sigma}_1^{\mathcal{R},\mathcal{B}}(\hat{k}_1^{\mathcal{B}})$ and longitude banded covariance $\hat{\M\Sigma}_2^{\mathcal{R},\mathcal{B}}(\hat{k}_2^{\mathcal{B}})$. Over the latitude direction, the $l_1m_1$th entry in $\hat{\M\Sigma}_1^{\mathcal{R},\mathcal{B}}(\hat{k}_1^{\mathcal{B}})$ is a regularized entry if $ |l_1 - m_1|\leq \hat k_{1}^{ \mathcal{B}}= 3$, and is an unregularized entry otherwise. Similarly, we can define the regularized and unregularized entries in $\hat{\M\Sigma}_2^{\mathcal{R},\mathcal{B}}(\hat{k}_2^{\mathcal{B}})$. Due to \eqref{eq:band:0}, all regularized entries in $\hat{\M \Sigma}_{1}^{\mathcal{R},\MB}(\hat{k}^{\mathcal{B}}_{1})$ and $\hat{\M \Sigma}_{2}^{\mathcal{R},\MB}(\hat{k}^{\mathcal{B}}_{2})$ are zero. For comparison, we also implement our  proposed robust estimation procedure {\it without} banding (using same threshold $\hat{\tau}_{}^{\mathcal{R},\mathcal{B}}$), and obtain $\hat{\M \Sigma}_{1}^{\mathcal{R},\MB}(15)$ and $\hat{\M \Sigma}_{2}^{\mathcal{R},\MB}(68)$. We then plot the histograms of regularized entries in $\hat{\M \Sigma}_{1}^{\mathcal{R},\MB}(15)$, unregularized entries in $\hat{\M \Sigma}_{1}^{\mathcal{R},\MB}(15)$, and unregularized entries in $\hat{\M \Sigma}_{1}^{\mathcal{R},\MB}(\hat{k}_{1}^{\mathcal{B}})$ in Figure \ref{hist_climate} (a).  Similarly to Figure \ref{real_climate}, maximum magnitudes of $\hat{\M \Sigma}_{1}^{\mathcal{R},\MB}(15)$ and $\hat{\M \Sigma}_{1}^{\mathcal{R},\MB}(\hat{k}_1^{\mathcal{B}})$ are scaled to 1. The histogram clearly shows that the magnitudes of regularized entries in $\hat{\M \Sigma}_{1}^{\mathcal{R},\MB}(15)$ are generally much smaller than those of unregularized entries in $\hat{\M \Sigma}_{1}^{\mathcal{R},\MB}(15)$ and $\hat{\M \Sigma}_{1}^{\mathcal{R},\MB}(\hat{k}_{1}^{\mathcal{B}})$. This finding confirms the bandable structure over the latitude direction.   
Similar findings are observed in Figure \ref{hist_climate} (b) for the bandable structure over the longitude direction.
\par
In Figure \ref{real_climate}, for the latitude direction, there are two clusters of areas with strong correlation around $50^\degree E$ and $83^\degree W$. For the longitude direction, there is a cluster of areas around $40^\degree N$ that has a strong correlation (lower-right corner in (a) and (d)). Those coordinates correspond to Great Lakes (USA) and the Caspian Sea. To further illustrate the use of our estimated covariance for the matrix-valued data, we focus on the $5^\degree\times 5^\degree$ box centered at $107.5^\degree W$ longitude and $22.5^\degree N$ latitude (target region: west coast side of Mexico), and estimate its covariance with the temperature at other regions. The results obtained by banding and tapering are summarized in Figure \ref{climate_visi}. From the plot, there are two regions that have a strong correlation with the target region. The first one marked in red is the target region itself, which suggests a strong self-correlation in the neighborhood areas around the target region. The second marked in blue corresponds to ocean area (northeastern direction) near Hawaiian islands and they have a strong negative correlation, which may be related to the recent studies on Land–Ocean Surface Temperature ratio \citep{Lambert2011}.

\vspace{-.1in}
\section{Discussion}\label{discussion} 
In this paper, we propose banded and tapering covariance estimators for matrix-valued data under a separability condition. We adopt an efficient computational algorithm and derive the convergence rates of our covariance estimates under various scenarios. To deal with heavy-tailed data, we further propose robust banded and tapering covariance estimators, and show their theoretical advantages.  
\par
Bandable covariance structure plays an important role in our methodology. In practice, it is possible that the true covariance matrix is only bandable over one direction. We note this scenario can be naturally handled by our method. For instance, suppose $\M \Sigma_1^*$ in $\M \Sigma^* =\M \Sigma_2^*\otimes \M \Sigma_1^*$ is not bandable, then we can choose the bandwidth as $k_1 = p$ for the proposed banded estimator, and $k_1 = 2p$ for the proposed tapering estimator (see Remark \ref{rm:2ptaper}), to eliminate banding or tapering over the row direction.
\par
In terms of theory, although Theorems \ref{T2} and \ref{T3} are established under which both $\M \Sigma_1^*$ and $\M \Sigma_2^*$ are assumed belonging to regularity class $\mathcal{F}(\varepsilon_0,\alpha)$ or $\mathcal{M}(\varepsilon_0,\alpha)$, we find that the proof of Theorems \ref{T2} and \ref{T3} can adapt to this new case when $\M \Sigma_2^* \in \mathcal{F}(\varepsilon_0,\alpha_2)$ or $\mathcal{M}(\varepsilon_0,\alpha_2)$ with $\alpha_2 > 0$,  and $\M \Sigma_1^*$ only needs to satisfy $\|\M \Sigma_1^*\|_{2}\leq C_{\max}$ with some constant $C_{\max}>0$. Keeping other conditions unchanged and selecting $k_1 = p$ for the proposed banded estimator and $k_1 = 2p$ for the proposed tapering estimator, the error bound in Theorem \ref{T2} for sub-Gaussian scenario, becomes
\bee\label{T2:oned}
&\E\Big(\frac{\|\hat{\M\Sigma}^\eta_2\kii\otimes\hat{\M\Sigma}^\eta_1\ki - \M\Sigma^*\|_{\F}^2}{pq}\Big) \precsim
\begin{cases}\frac{p}{qn} + \frac{k_2}{pn}  +  \M I_{\eta,q}(k_2)\cdot k_2^{-\tilde{\alpha}_2}, & p^2 + qk_2 \precsim n
\\
\big(\frac{pk_2}{n}\big)\wedge\big(\frac{p^3}{qn^2} + \frac{qk^2_2}{pn^2}  \big) +  \M I_{\eta,q}(k_2)\cdot k_2^{-\tilde\alpha_2}, & p^2 + qk_2 \succ n.  \nonumber 
\end{cases}
\ee
 Additionally, the error bound in Theorem \ref{T3} for finite fourth moment scenario becomes
$
\E\Big(\frac{\|\hat{\M\Sigma}^\eta_2\kii\otimes\hat{\M\Sigma}^\eta_1\ki - \M\Sigma^*\|_{\F}^2}{pq}\Big) \precsim
\frac{pk_2}{n} +  \M I_{\eta,q}(k_2)\cdot k_2^{-\tilde\alpha_2}.  \nonumber 
$

Besides the spectral norm consistency problem discussed in Section \ref{sec:disc:spnorm}, there are a number of important directions for further investigation. First, it is still unclear whether the convergence of our proposed estimators achieves the lower bound when $p,q$ are beyond the rate-optimal regimes that we have discussed in Section \ref{sec:compare}. {\color{black}Second, as discussed in Section \ref{sec:disc:spnorm}, an efficient algorithm to solve the spectral-norm Kronecker product approximation is of interest.} 
Third, extension of the current approach to tensor-valued data is highly non-trivial in both computation and theory. Therefore, we leave it for future research. 

\section*{Acknowledgement}
The authors would like to thank the editor, associate editor, and three reviewers for their constructive comments, which have substantially improved the paper. Shen's research was partially supported by Simons Foundation Award 51262. Kong's research was partially supported by the Natural Science and Engineering Research Council of Canada. 
\bibliographystyle{chicago}
\bibliography{covariance}
\newpage
{\beginsupplement
\def\spacingset#1{\renewcommand{\baselinestretch}%
{#1}\small\normalsize} \spacingset{1}


\if1\blind
{
  \title{\bf Supplementary File for ``Covariance Estimation for Matrix-valued Data''}
  \date{}
  \maketitle
} \fi

\if0\blind
{
   \title{\bf Supplementary File for ``Covariance Estimation for Matrix-valued Data''}
        \date{}
     \maketitle
} \fi
 
\spacingset{1.5} 

{\hypersetup{linkcolor=black}
 \tableofcontents
}
\TOCstart
\spacingset{1.5} 
\section{Notation}\label{sec:notation}
\subsection{Vector/Matrix Representation}
Let $\M A$  be an arbitrary matrix. We define the following notation. 
\begin{itemize}
\item For a matrix $\M A \in \RR^{p\times q}$, we denote the vectorization of $\M A$ by $\vecc(\M A) \equiv \big(A_{1,1}, A_{2,1},\dots\dots,A_{p,q}\big)^{\T}$, where $A_{l_1,l_2}$, is the $l_1l_2$th entry of $\M A$, for $1\leq l_1\leq p$ and $1\leq l_2\leq q$.
\item For a matrix $\M A \in\mathbb{R}^{pq \times pq}$, let $A_{\{(l_1,m_1),(l_2,m_2)\}}$ represent the $\big\{(l_2-1)\cdot p + l_1\big\},\big\{(m_2-1)\cdot p + m_1\big\}$th entry of $\M A$ for $1\leq l_1,m_1\leq p, 1\leq l_2,m_2\leq q$. 
\item For a matrix $\M A\in\RR^{d_1\times d_2}$ with arbitrary size $d_1,d_2>0$, we denote the $l_1m_1$th entry of $\M A$ by $A_{l_1,m_1}$, $1\leq l_1 \leq d_1, 1\leq m_1\leq d_2$, the rank of $\M A$ by $\text{Rank}(\M A)$, and the trace of $\M A$ by $\tr(\M A)$. We also use $[\M A]_{l_1,m_1}$ to represent the $l_1m_1$th entry of $\M A$, i.e., $[\M A]_{l_1,m_1} = A_{l_1,m_1}$.
\end{itemize}
Recall $\M X_1,\dots,\M X_n$ are i.i.d. $p\times q$ random matrix samples and $x^{(i)}_{l_1,l_2}$ is the $l_1l_2$th entry of $\M X_i$. The $\vecc(\M X_i)\equiv \big(x^{(i)}_{1,1}, x^{(i)}_{2,1},\dots\dots,x^{(i)}_{p,q}\big)^{\T} \in \RR^{pq}$ is the vectorization of $\M X_i$. 
\par
For the true covariance of $\M X_i$, denoted by $\M\Sigma^*$, we assume it is separable, i.e. $\M\Sigma^* = \M\Sigma_2^*\otimes\M\Sigma_1^*\in\RR^{pq\times pq}$, where $\M \Sigma^{*}_1 \equiv \big [ \sigma_{l_1,m_1}^{(1)}\big]_{p\times p}\in\RR^{p\times p},\M \Sigma^{*}_2 \equiv \big [ \sigma_{l_2,m_2}^{(2)}\big ]_{q\times q}\in\RR^{q\times q}$, and $\sigma_{l_a,m_a}^{(a)}$ is the $l_a m_a$th entry  of $\M \Sigma_{a}^*$ for $a \in \{1,2\}$. Let $\sigma_{(l_1,m_1),(l_2,m_2)}$ be the $\big\{(l_2-1)\cdot p + l_1\big\},\big\{(m_2-1)\cdot p + m_1\big\}$th entry of $\M \Sigma^*$. By the separability assumption, one has $\sigma_{(l_1,m_1),(l_2,m_2)} = \sigma_{l_1,m_1}^{(1)}\cdot \sigma_{l_2,m_2}^{(2)} $. 

Recall that in \eqref{def:b} and \eqref{def:Tk}, we define 
\bee\nonumber
B_{k}(\M A) = [A_{l,m}\cdot \textbf{I}(|l - m|\leq k)]_{d\times d}
\ee 
and $T_k(\M A) = [T_k(\M A)_{l,m}]_{d\times d}$, where
\bee\nonumber
T_k(\M A)_{l,m} = \begin{cases}
A_{l,m} & \text{ when } |l - m|\leq \lfloor k/2 \rfloor, \\
(2 - \frac{|l - m|}{\lfloor k/2 \rfloor})A_{l,m} & \text{ when } \lfloor k/2 \rfloor < |l - m| \leq k, \\
0 & \text{ otherwise.}
\end{cases}
\ee
We then define $\M \Sigma^{*,\MB}_a(k) \equiv B_k(\M \Sigma^*_a)$ and $\M \Sigma^{*,\mathcal{T}}_a(k) \equiv T_k(\M \Sigma^*_a)$ for $a = 1,2$. 
\subsection{Norms}
For a vector $\mathbf{v}=[v_l]_d\in\RR^{d}$, we denote $\|\mathbf{v}\|$ its Euclidean norm. We also define the following vector norms, 
$$
\|\mathbf{v}\|_{\max} = \max_{1\leq l\leq d}|v_l|,\quad\|\mathbf{v}\|_1 = \sum_{l = 1}^d|v_l|.
$$ 
\par
For a matrix $\Ab=[A_{l_1l_2}]_{d_1\times d_2}\in \RR^{d_1\times d_2}$, we denote $\|\Ab\|_{\rF}$ its Frobenius norm. We also define the following matrix norms,
\bee\nonumber
&\|\Ab\|_2\equiv \sup\{\|\M A\mathbf x\|_2,\|\mathbf x\|_2 = 1\},
\\
&\|\M A\|_1 \equiv \sup\{\|\M A\mathbf x\|_1,\|\mathbf x\|_1 = 1\} = \max_{1\leq l_2 \leq d_2}\sum_{l_1= 1}^{d_1}|A_{l_1l_2}|,
\\ 
&\|\M A\|_{\infty} \equiv \sup\{\|\M A\mathbf x\|_{\max},\|\mathbf x\|_{\max} = 1\} = \max_{1\leq l_1\leq d_1}\sum_{l_2= 1}^{d_2}|A_{l_1l_2}|,
\\
&\|\M A\|_{\max} \equiv \max_{1\leq l_1\leq d_1\atop 1 \leq l_2\leq d_2}|A_{l_1l_2}|, \quad\|\M A\|_{1,1} = \sum_{l_1 =1}^{d_1}\sum_{l_2 = 1}^{d_2}|A_{l_1l_2}|.
\ee
\par
For two probability measures $\bds P, \bds Q$ with density $\bds p, \bds q$, and with respect to a common measure $\mu$, we also define the $L_1$ norm of $\bds P$ as $\|\bds P\|_{1} \equiv \int|\bds p|d\mu$. 
\subsection{Sub-exponential and Sub-Gaussian Random Vectors}\label{def:sesgrv}
Same as the definition in \cite{wainwright2019high}, a random variable $X$ is a sub-exponential random variable if
\bee\label{subexp}
\E\Big[\exp\big\{\lambda(X - \mu)\big\}\Big]\leq \exp\Big(\frac{v^2\lambda^2}{2}\Big),
\ee
with some non-negative parameters $(v,\alpha)$ for all $|\lambda|<1/\alpha$.
For a sub-exponential random variable $X$, we define the sub-exponential norm $\|\cdot\|_{\psi_1}$ as
\bee\label{def:psi1}
\|X\|_{\psi_1} &\equiv \inf\{t>0:\E \exp(|X|/t)\leq 2\}.
\ee 

\par
We say a random vector $\bds X$ follows a sub-Gaussian distribution if for any $t>0$ and $\|{\bf v}\| = 1$, there exists a $\rho>0$ such that 
\bee\label{subga:supp}
\Pr \left\{\Big|{\bf v}^{\T} \big(\bds X - \E \bds X\big)
\Big|>t\right\} \leq e^{-\rho t^2 }.
\ee
For univariate  random variable $X\in\RR$, we define its sub-Gaussian norm $\|\cdot\|_{\psi_2}$ as 
\bee\label{def:psi2}
\|X\|_{\psi_2} &\equiv \inf\{t>0:\E \exp(X^2/t^2)\leq 2\}.
\ee 
For random vector $\bds X\in \RR^d$, we define its sub-Gaussian norm $\|\cdot\|_{\psi_2}$ as
\bee\nonumber
\|\bds X\|_{\psi_2} \equiv \sup_{\bds v\in \RR^{d},\|\bds v\| = 1}\|\bds v^\T\bds X\|_{\psi_2}.
\ee
\subsection{Notation for Proofs of Theorems \ref{T2}, \ref{T3}, \ref{T:indi} and \ref{Ts}}\label{notation:main}
\par
Denote $\mathcal{U}_{d} = \{\|\mathbf{x}\| = 1| \mathbf{x}\in \RR^d\}$ a unit sphere in $\RR^{d}$, and define $[k] = \{1,\dots,k\}$ for integer $k > 0$. Let $\hat{\M{\Sigma}}_0 = \frac{1}{n}\sum_{i=1}^{n}\vecc(\M X_i )\vecc(\M X_i)^\T = \Big[\hat{\sigma}^{(0)}_{(l_1,m_1),(l_2,m_2)}\Big]_{pq\times pq}$. We also define 
\bee\label{l:srate:1}
&\tilde{\M\Sigma}_{0,\MB}(k_1,k_2) \equiv \hat{\M \Sigma}_0 \HDBand \equiv \Big[\tilde{\sigma}^{\mathcal{B}}_{(l_1,m_1),(l_2,m_2)}\Big]_{pq\times pq}, 
\\
&\tilde{\M\Sigma}_{0,\mathcal{T}}(k_1,k_2) \equiv \hat{\M \Sigma}_0 \HDTaper \equiv \Big[\tilde{\sigma}^{\mathcal{T}}_{(l_1,m_1),(l_2,m_2)}\Big]_{pq\times pq},
\ee
 as the doubly banded and tapering matrices of $\hat{\M{\Sigma}}_0$. 
\par
For the proposed banded estimator, let $w_{l_1,m_1}^{p,k,\mathcal{B}}$ be the $l_1m_1$th element of the matrix $B_{k}(\textbf{1}_{p})$ and $w_{l_2,m_2}^{q,k,\mathcal{B}}$ be the $l_2m_2$th element of the matrix $B_{k}(\textbf{1}_{q})$.  By definition, it is easy to see
  \bee\nonumber
  \tilde{\sigma}^{\mathcal{B}}_{(l_1,m_1),(l_2,m_2)} = w_{l_2,m_2}^{q,k_2,\mathcal{B}} \cdot w_{l_1,m_1}^{p,k_1,\mathcal{B}}\cdot \hat{\sigma}^{(0)}_{(l_1,m_1),(l_2,m_2)}.
  \ee We further let $\boldsymbol\beta_1 = (l_1,m_1)^\T,\boldsymbol\beta_2 = (l_2,m_2)^\T$, $w^{\mathcal{B}}_{
\boldsymbol\beta_1,\boldsymbol\beta_2} = w_{l_1,m_1}^{p,k_1,\mathcal{B}} \cdot w_{l_2,m_2}^{q,k_2,\mathcal{B}}$ and $\sigma_{\boldsymbol\beta_1,\boldsymbol\beta_2} = \sigma_{l_1,m_1}^{(1)}\cdot \sigma_{l_2,m_2}^{(2)}$. 
\par
For the proposed tapering estimator, similarly define $w_{l_1,m_1}^{p,k,\mathcal{T}}, w_{l_2,m_2}^{q,k,\mathcal{T}}$ as the $l_1m_1$th and $l_2m_2$th elements of the matrix $T_{k}(\textbf{1}_{p})$ and $T_{k}(\textbf{1}_{q})$, respectively.  We can write
$\tilde{\M\Sigma}_{0,\mathcal{T}}(k_1,k_2) = \hat{\M \Sigma}_0 \HDTaper = \big[\tilde{\sigma}^{\mathcal{T}}_{(l_1,m_1),(l_2,m_2)}\big]_{pq\times pq}  
$
 as the doubly tapering matrix of $\hat{\M{\Sigma}}_0$ and have,
$
  \tilde{\sigma}^{\mathcal{T}}_{(l_1,m_1),(l_2,m_2)} = {w_{l_2,m_2}^{q,k_2,\mathcal{T}} \cdot w_{l_1,m_1}^{p,k_1,\mathcal{T}}}_{}\cdot \hat{\sigma}^{(0)}_{(l_1,m_1),(l_2,m_2)}
$. We also denote $w^{\mathcal{T}}_{\boldsymbol\beta_1,\boldsymbol\beta_2}\equiv w_{l_2,m_2}^{q,k_2,\mathcal{T}} \cdot w_{l_1,m_1}^{p,k_1,\mathcal{T}}$.

\par
Finally, we introduce some notation in matrix perturbation theory. For $\M V, \M W\in \RR^{p\times d}$ both having orthonormal columns, the $d$ principal angles between their column spaces is $\{\arccos(\varsigma_1), \arccos(\varsigma_2), \dots, \arccos(\varsigma_d)\}$, where $\varsigma_1\geq \varsigma_2 \geq \dots \geq \varsigma_d\geq 0$ are the singular values of $\M W^{\T} \M V$. The $\Theta(\M W, \M V)$ is a $d\times d$ diagonal matrix with $ll$th entry $l$th principal angle for $ l=1,\ldots, d$. And $\sin\Theta(\M W, \M V)$ is defined as a matrix by applying $\sin$ entrywise to the matrix $\Theta(\M W, \M V)$.

\subsection{Notation for Proof of Theorem \ref{T4}}\label{notation:robust}
Recall that when $\vecc(\mathbf{X}_i) \in \mathbb{R}^{pq}$, the modified estimator is defined as
\bee\nonumber
\widecheck{\M \Sigma} = \frac{1}{n}\sum_{i =1}^{n} \big\{\vecc(\widecheck{\M X}_i) - \vecc(\widecheck{{\boldsymbol\mu}})\big\}\cdot \big\{\vecc(\widecheck{\M X}_i)- \vecc(\widecheck{{\boldsymbol\mu}})\big\}^{\T},
\ee
where $\vecc(\widecheck{\mathbf{X}}_i)$ satisfies $\widecheck{x}^{(i)}_{l_1,l_2} = \text{sgn}(x^{(i)}_{l_1,l_2})(|x^{(i)}_{l_1,l_2}|\wedge \tau)$ with some $\tau >0$, and $\widecheck{{\boldsymbol\mu}} = \frac{1}{n}\sum_{i = 1}^n\widecheck{\M X}_i$. We denote ${\widecheck{\M X}}_{i}^{\mathcal{c}} \equiv {\widecheck{\M X}}_{i} - \E {\widecheck{\M X}}_{i}^{}$ as the centered ${\widecheck{\M X}}_{i}$. Since $\{\widecheck{\M X}_i\}_{i = 1}^n$ are i.i.d., $\widecheck{\M \Sigma}$ can be rewritten as, 
\bee\nonumber
\widecheck{\M \Sigma} &= \frac{1}{n}\sum_{i =1}^{n} \big[\vecc(\widecheck{\M X}_i) - \vecc(\widecheck{{\boldsymbol\mu}})\big]\cdot \big[\vecc(\widecheck{\M X}_i)- \vecc(\widecheck{{\boldsymbol\mu}})\big]^{\T}
\\
&=\frac{1}{n}\sum_{i =1}^{n} \Big[\big\{\vecc(\widecheck{\M X}_i) - \vecc(\E {\widecheck{\M X}}_{i})\big\} - \big\{\vecc(\widecheck{{\boldsymbol\mu}}) - \vecc(\E {\widecheck{\M X}}_{i})\big\}\Big]
\\
&\times \Big[\big\{\vecc(\widecheck{\M X}_i)-  \vecc(\E {\widecheck{\M X}}_{i})\big\}-\big\{\vecc(\widecheck{{\boldsymbol\mu}})-\vecc(\E {\widecheck{\M X}}_{i})\big\}\Big]^{\T}
\\
&=\frac{1}{n}\sum_{i =1}^{n}\big[\vecc({\widecheck{\M X}}_{i}^{\mathcal{c}}) - \vecc(\widecheck{{\boldsymbol\mu}}^{\mathcal{c}})\big]\cdot \big[\vecc({\widecheck{\M X}}_{i}^{\mathcal{c}}) - \vecc(\widecheck{{\boldsymbol\mu}}^{\mathcal{c}})\big]^{\T}
\ee
where $\widecheck{{\boldsymbol\mu}}^{\mathcal{c}} = \frac{1}{n}\sum_{i = 1}^n\widecheck{\M X}_i - \E\widecheck{\M X}_i = \frac{1}{n}\sum_{i = 1}^n\widecheck{\M X}^{\mathcal{c}}_i$. 
\par
Recall ${\M \Sigma}^{*}_{\mathcal{R}} \equiv \cov(\widecheck{\M X}_i) = \cov({\widecheck{\M X}}^{\mathcal{c}}_i)$. Let $\sigma^{\mathcal{R}}_{(l_1,m_1),(l_2,m_2)}$ be the $\big\{(l_2-1)\cdot p + l_1\big\},\big\{(m_2-1)\cdot p + m_1\big\}$th entry of $\M \Sigma^*_{\mathcal{R}}$. We note that unlike $\M \Sigma^* = \M \Sigma^*_2 \otimes \M \Sigma^*_1$, the ${\M \Sigma}^{*}_{\mathcal{R}}$ may not still have Kronecker product structure.  For $\eta\in\{\mathcal{B},\mathcal{T}\}$, recall that $\widecheck{\M \Sigma}_{\eta}(k_1,k_2)$ is the doubly banded or tapering matrix of $\widecheck{\M\Sigma}$ with bandwidths $k_1,k_2$, i.e.,
\bee\nonumber
&\widecheck{\M\Sigma}_{\MB}(k_1,k_2) \equiv \widecheck{\M\Sigma} \HDBand,
\\
&\widecheck{\M\Sigma}_{\mathcal{T}}(k_1,k_2) \equiv \widecheck{\M\Sigma} \HDTaper.
\ee 
 Similarly we denote 
\bee\nonumber
&{\M \Sigma}^{*,\mathcal{B}}_{\mathcal{R}}(k_1,k_2) \equiv {\M \Sigma}^{*}_{\mathcal{R}} \circ \{B_{k_2}({\bf 1}_{q})\otimes B_{k_1}({\bf 1}_{p})\},
\\
&{\M \Sigma}^{*,\mathcal{T}}_{\mathcal{R}}(k_1,k_2) \equiv {\M \Sigma}^{*}_{\mathcal{R}} \circ \{T_{k_2}({\bf 1}_{q})\otimes T_{k_1}({\bf 1}_{p})\}.
\ee

\section{Discussion of Theorems in Section \ref{Sec_MR}}
\subsection{Bounded Maximum Eigenvalues of $\mathbf{\Sigma}_1^*$ and $\mathbf{\Sigma}_2^*$}\label{sec:bounde}
In the theoretical investigation, we consider the scenario when the covariance matrices $\M\Sigma_1^*$ and $\M\Sigma_2^*$, representing  covariances among the rows and columns of the matrix-valued data $\M X$, are in the approximately bandable covariance classes $\mathcal{F}(\varepsilon_0, \alpha)$ or $\mathcal{M}(\varepsilon_0, \alpha)$. 
\par
As pointed out by one referee, the maximum eigenvalue of the covariance matrices may increase with the growing dimension in some applications. However, for the classes of covariance matrices $\mathcal{F}(\varepsilon_0, \alpha)$ or $\mathcal{M}(\varepsilon_0, \alpha)$, the bounded maximum eigenvalue assumption is indeed reasonable.

\par In particular, we assume the following relaxed $\mathcal{F}^*(\alpha)$ and $\mathcal{M}^*(\alpha)$ matrix classes, which preserve the particular bandable covariance structures, but do not impose assumptions on bounded maximum eigenvalues, 
\bee\label{A1*}
\mathcal{F}^*(\alpha)=&\bigg\{ \bSigma : ~ \max_l\sum_m \{|\sigma_{l,m}|: |l-m|>k \}\leq C_0 k^{-\alpha}~\textnormal{for all}~k\geq 1 \bigg\},
\ee
and
\bee\label{A2*}
\mathcal{M}^*(\alpha) = &\bigg\{\M\Sigma:~\left|\sigma_{l,m}\right| \leq C_{1}|l-m|^{-\alpha-1} \text { for } l \neq m \bigg\},
\ee
where $\alpha,C_0,C_1 > 0$ are some fixed constants. 


We can show that the maximum eigenvalue of the covariance matrices in $\mathcal{F}^*(\alpha)$ or $\mathcal{M}^*(\alpha)$ is bounded under finite fourth-order moment conditions. The formal statement is included in the following proposition. 
\begin{proposition}\label{po:boundsig}
Let $\mathbf{x} = [x_l] \in \RR^{p}$ be a random vector with covariance matrix $\M\Sigma_1^*$, and $\mathbf{x} $ satisfies  the element-wise finite fourth moment condition:
$
\E(|x_{l_1}\cdot x_{m_1}|^2) \leq M < +\infty
$
for any $1 \leq l_1,m_1\leq p$, with some fixed constant $M > 0$. If $\M\Sigma_1^* \in \mathcal{F}^*(\alpha)$ or $\mathcal{M}^*(\alpha)$, then there exists some fixed constant $C_* > 0$, such that
$
\lambda_{\max}(\M\Sigma_1^*) \leq C_*.
$
\end{proposition}

\begin{remark}
As  discussed in Section \ref{sec:T:up}, the sub-Gaussian condition of $\mathbf{x}$ is stronger than the forth-order moment condition of $\mathbf{x}$. Therefore, the Proposition \ref{po:boundsig} also holds when $\mathbf{x}$ follows sub-Gaussian.
\end{remark}
In conclusion, Proposition \ref{po:boundsig} implies the bounded maximum eigenvalue condition of $\M\Sigma_1^*$, $\M \Sigma_2^*$ is actually an inherent assumption if we impose bandable covariance structures on $\M\Sigma_1^*$ and $\M \Sigma_2^*$. Therefore, we add the bounded maximum eigenvalue condition in $\mathcal{F}(\varepsilon_0, \alpha)$ and $\mathcal{M}(\varepsilon_0, \alpha)$. Actually, the bounded maximum eigenvalue condition is also used in bandable covariance estimation for high-dimensional vector-valued data \citep{Bickel2008threshold,Cai2010,cai2012adaptive,cai2016estimating}. 

\begin{remark}
If one is interested in removing the bounded maximum eigenvalue assumption, some other covariance matrix models/classes can be considered instead of the bandable class.  Examples include sparse matrix \citep{huang2006covariance,Bickel2008threshold,bien2011sparse,cai2011adaptive}; graphical model \citep{meinshausen2006high,yuan2007model, friedman2008sparse,lam2009sparsistency,cai2011adaptive}; spike model \citep{Johnstone2001,paul2007asymptotics,donoho2018optimal,ding2021spiked}.
\end{remark}
\subsection{Discussion of Theorem \ref{T2}}\label{sec:discu:thm1}
Here we present some interpretation of  terms in \eqref{T2:res1}. The target error can be decomposed into two error terms via triangle inequality,
\bee\label{err:decom}
\frac{\E\|\hat{\M\Sigma}^\eta_2\kii\otimes\hat{\M\Sigma}^\eta_1\ki - \M\Sigma^*\|_{\F}^2}{pq} &\precsim \underbrace{{\frac{\E\|\hat{\M\Sigma}^\eta_2\kii\otimes\hat{\M\Sigma}^\eta_1\ki - \M\Sigma^{*,\eta}_2(k_2)\otimes \M\Sigma^{*,\eta}_1(k_1)\|_{\F}^2}{pq}}}_{E_1}
\\
&+\underbrace{{\frac{\E\|\M\Sigma^{*,\eta}_2(k_2)\otimes \M\Sigma^{*,\eta}_1(k_1) - \M\Sigma^*\|_{\F}^2}{pq}}}_{E_2} 
\ee

The error term $E_1$ is the entrywise mean squared errors of our proposed estimators, targeting at the estimation of doubly banded or tapering truth. An elegant technical Lemma (Lemma \ref{T1}) utilizing low rank matrix approximation property converts the Frobenius-norm target error term $E_1$ in \eqref{err:decom}, into a spectral-norm error term with matrices, whose components are reordered by $\xi(\cdot)$,
\bee\label{E1upper}
E_1 \precsim  \frac{\E\|\xi\{\tilde{\M \Sigma}_{\eta}(k_1,k_2)\} - \xi\{\M\Sigma^*_2(k_2)\otimes \M\Sigma^*_1(k_1)\}\|_{2}^2}{pq}.
\ee 
To bound $E_1$, we bound $\frac{\E\|\xi\{\tilde{\M \Sigma}_{\eta}(k_1,k_2)\} - \xi\{\M\Sigma^*_2(k_2)\otimes \M\Sigma^*_1(k_1)\}\|_{2}^2}{pq}$ with two types of technical arguments.
\par
\noindent\textbf{(i).} A simple property of $\xi(\cdot)$ (Lemma \ref{lemma:xi}) shows that $\E\|\xi\{\tilde{\M \Sigma}_{\eta}(k_1,k_2)\} -\xi\{\M\Sigma^*_2(k_2)\otimes \M\Sigma^*_1(k_1)\}\|_{2}^2/pq \leq \E\|\tilde{\M \Sigma}_{\eta}(k_1,k_2) - \M\Sigma^*_2(k_2)\otimes \M\Sigma^*_1(k_1)\|_{\F}^2/{pq}$. Then, based on the finite fourth moment condition implied by sub-Gaussian tailedness, one can obtain an entrywise moment bound for$\E\|\tilde{\M \Sigma}_{\eta}(k_1,k_2) - \M\Sigma^{*,\eta}_2(k_2)\otimes \M\Sigma^{*,\eta}_1(k_1)\|_{\F}^2/{pq}$, and therefore we have
\bee\label{varterm:1}
E_1\precsim \E\Big[\frac{\|\tilde{\M \Sigma}_{\eta}(k_1,k_2) - \M\Sigma^{*,\eta}_2(k_2)\otimes \M\Sigma^{*,\eta}_1(k_1)\|_{\F}^2}{pq}\Big]&\precsim \frac{k_1k_2}{n}
\\
&\equiv r^{(1)}_{\text{var}\mid \hat{\M \Sigma}_2\otimes \hat{\M\Sigma}_1}(k_1,k_2\mid p,q).
\ee
\par
\noindent{\textbf{(ii).}} An $\epsilon$-net argument addressing the complexity reduction effect of doubly banded and tapering is alternatively employed to bound $E_1$. It is incorporated with the newly-derived Hanson--Wright inequality for general sub-Gaussian random variables \citep{zajkowski2020bounds}. The inequality provides a subtle probabilistic error bound for each point on the $\varepsilon$-net, which results in another bound of $E_1$ as
\bee\label{varterm:2}
E_1 \precsim \E\Big[\frac{\|\xi\{\tilde{\M \Sigma}_{\eta}(k_1,k_2)\} - \xi\{\M\Sigma^*_2(k_2)\otimes \M\Sigma^*_1(k_1)\}\|_{2}^2}{pq}\Big] &\precsim 
\begin{cases}
\frac{k_1}{qn} + \frac{k_2}{pn}, & pk_1 + qk_2 \precsim n
\\
\frac{pk^2_1}{qn^2} + \frac{qk^2_2}{pn^2} \, & pk_1 + qk_2 \succ n
\end{cases}
\\
&\equiv r^{(2)}_{\text{var}\mid \hat{\M \Sigma}_2\otimes \hat{\M\Sigma}_1}(k_1,k_2\mid p,q).
\ee
Combining \eqref{varterm:1} and \eqref{varterm:2} and accounting for the fact that $\frac{k_1k_2}{n}\precsim \frac{k_1}{qn} + \frac{k_2}{pn}$, we finally derive the proposed error bound of $E_1$ as $E_1\precsim r_{\text{var}\mid \hat{\M \Sigma}_2\otimes \hat{\M\Sigma}_1}(k_1,k_2\mid p,q)$ in Theorem \ref{T2}, where $r_{\text{var}\mid \hat{\M \Sigma}_2\otimes \hat{\M\Sigma}_1}(k_1,k_2\mid p,q) \equiv r^{(1)}_{\text{var}\mid \hat{\M \Sigma}_2\otimes \hat{\M\Sigma}_1}(k_1,k_2\mid p,q)\wedge  r^{(2)}_{\text{var}\mid \hat{\M \Sigma}_2\otimes \hat{\M\Sigma}_1}(k_1,k_2\mid p,q)$ can be shown to have the following form,
\bee\label{def:rwhole}
r_{\text{var}\mid \hat{\M \Sigma}_2\otimes \hat{\M\Sigma}_1}(k_1,k_2\mid p,q) \equiv \begin{cases}\frac{k_1}{qn} + \frac{k_2}{pn}& pk_1 + qk_2 \precsim n
\\
\big(\frac{k_1k_2}{n}\big)\wedge\big(\frac{pk^2_1}{qn^2} + \frac{qk^2_2}{pn^2}  \big) & pk_1 + qk_2 \succ n.
\end{cases}
\ee
\par
Interestingly, both upper bounds in \eqref{varterm:1} and \eqref{varterm:2} are useful because one of them may be sharper than the other depending on different regimes of $p$ and $q$. First consider when $p,q$ diverge under an unbalanced regime like the degenerate regime ($q = 1$ and $p \rightarrow +\infty$). As shown later in Section \ref{sec:dr}, using \eqref{varterm:1} to bound the $E_1$  can lead to a minimax optimal convergence rate of the target error, while error bound in \eqref{varterm:2} is dominated by ${pk_1^2}/{n^2}$ when $p$ diverges sufficiently fast, which does not even converge to $0$ when $p\succsim n^2$. On the other hand, consider when $p,q$ diverge under the balanced regimes. As illustrated in Section \ref{sec:match:illustration} with examples, the Figures \ref{fig:sgbd1}--\ref{fig:wkbd2} (c) show the optimal convergence rate when only using \eqref{varterm:1} to bound $E_1$, while Figures \ref{fig:sgbd1}--\ref{fig:wkbd2} (b) show the optimal convergence rate after accounting for \eqref{varterm:2}. We can see when $p\approx q$, the bound in \eqref{varterm:2} can help to significantly sharpen the target error. 
\par
The error term $E_2$ could be seen as the thresholding error caused by doubly banded and tapering, and can be bounded by 
\bee\label{errorE2}
E_2 \precsim \M I_{\eta,p}(k_1)\cdot k_1^{-\tilde\alpha_1}+\M I_{\eta,q}(k_2)\cdot k_2^{-\tilde\alpha_2}
\ee 
after accounting for the off-diagonal decaying rates of entry magnitude for matrix classes $\mathcal{F}(\varepsilon_0,\alpha)$ and $\mathcal{M}(\varepsilon_0,\alpha)$.

\subsection{Optimal Bandwidth Selection in Theorem \ref{T2}}\label{optbd:T2}
Based on different divergence regimes of $(p,q)$, we aim to find optimal selection of $k_1,k_2$ to minimize the convergence rate of the following term:
\bee\label{T2:target}
r_{1}(k_1,k_2\mid p,q,\eta) \equiv \begin{cases}\frac{k_1}{qn} + \frac{k_2}{pn}+ \M I_{\eta,p}(k_1)\cdot k_1^{-\tilde{\alpha}_1} +  \M I_{\eta,q}(k_2)\cdot k_2^{-\tilde{\alpha}_2}, & pk_1 + qk_2 \precsim n
\\
\big(\frac{k_1k_2}{n}\big)\wedge\big(\frac{pk^2_1}{qn^2} + \frac{qk^2_2}{pn^2}  \big)+ \M I_{\eta,p}(k_1)\cdot k_1^{-\tilde\alpha_1} +  \M I_{\eta,q}(k_2)\cdot k_2^{-\tilde\alpha_2}, & pk_1 + qk_2 \succ n
\end{cases}.
\ee
Here 
\bee\label{t1:sup:alpha}
\small\tilde{\alpha}_{a} = \begin{cases}
2\alpha_a & \text{when }\M \Sigma_1^{*}\in \mathcal{F}(\varepsilon_0, \alpha_1),\M \Sigma_2^{*} \in \mathcal{F}(\varepsilon_0, \alpha_2)
\\
2\alpha_a + 1 & \text{when } \M \Sigma_1^{*}\in \mathcal{M}(\varepsilon_0, \alpha_1),\M \Sigma_2^{*} \in \mathcal{M}(\varepsilon_0, \alpha_2)
\end{cases},
\ee
with $a\in\{1,2\}$ and $\eta\in\big\{\mathcal{B},\mathcal{T}\big\}$. Given a divergence regime of $p,q$ and $\eta = \mathcal{B}\text{ or }\mathcal{T}$, we also define $k_{1,\text{opt}}^{(1)}, k_{2,\text{opt}}^{(1)}$ as the corresponding optimal selections of $k_1,k_2$ that give $r_1(k_1,k_2\mid p,q,\eta)$ the optimal convergence rate. 
\par
Since $r_{1}(k_1,k_2\mid p,q,\eta)$ has two phases, we further define
\bee\label{def:lu}
L(k_1,k_2\mid p,q,\eta) &\equiv \frac{k_1}{qn} + \frac{k_2}{pn}+ \M I_{\eta,p}(k_1)\cdot k_1^{-\tilde{\alpha}_1} +  \M I_{\eta,q}(k_2)\cdot k_2^{-\tilde{\alpha}_2},
\\
U(k_1,k_2\mid p,q,\eta)&\equiv \frac{pk^2_1}{qn^2} + \frac{qk^2_2}{pn^2}  + \M I_{\eta,p}(k_1)\cdot k_1^{-\tilde\alpha_1} +  \M I_{\eta,q}(k_2)\cdot k_2^{-\tilde\alpha_2}.
\ee
Additionally, we denote $k_{1,\text{opt}}^{(1),L}, k_{2,\text{opt}}^{(1),L}$ the corresponding optimal selection of $k_1,k_2$ for $L(k_1,k_2\mid p,q,n)$, and $k_{1,\text{opt}}^{(1),U}, k_{2,\text{opt}}^{(1),U}$ the corresponding optimal selection of $k_1,k_2$ for $U(k_1,k_2\mid p,q,n)$. For any $a,b,c\in\RR$, it is easy to see $(a\wedge b) + c = (a+c)\wedge (b+c)$. So we have
\bee\label{r1:high:rewrite}
&\big(\frac{k_1k_2}{n}\big)\wedge\big(\frac{pk^2_1}{qn^2} + \frac{qk^2_2}{pn^2}  \big)+ \M I_{\eta,p}(k_1)\cdot k_1^{-\tilde\alpha_1} +  \M I_{\eta,q}(k_2)\cdot k_2^{-\tilde\alpha_2}
\\
&=\Big\{\frac{k_1k_2}{n}+ \M I_{\eta,p}(k_1)\cdot k_1^{-\tilde\alpha_1} +  \M I_{\eta,q}(k_2)\cdot k_2^{-\tilde\alpha_2}\Big\}\wedge\Big\{ \frac{pk^2_1}{qn^2} + \frac{qk^2_2}{pn^2}  + \M I_{\eta,p}(k_1)\cdot k_1^{-\tilde\alpha_1} +  \M I_{\eta,q}(k_2)\cdot k_2^{-\tilde\alpha_2}\Big\}
\\
&=r_2(k_1,k_2\mid p,q,\eta) \wedge U(k_1,k_2\mid p,q,\eta).
\ee
By \eqref{def:lu} and \eqref{r1:high:rewrite}, we can rewrite $r_1(k_1,k_2\mid p, q,\eta)$ in \eqref{T2:target} as
\bee\label{r1:rewrite}
r_{1}(k_1,k_2\mid p,q,\eta) \equiv \begin{cases}L(k_1,k_2\mid p,q,\eta) & pk_1 + qk_2 \precsim n
\\
r_2(k_1,k_2\mid p,q,\eta) \wedge U(k_1,k_2\mid p,q,\eta) & pk_1 + qk_2 \succsim n
\end{cases}.
\ee
\begin{remark}
By simple algebra, one can see that $L(k_1,k_2\mid p,q,\eta) \asymp r_2(k_1,k_2\mid p,q,\eta) \wedge U(k_1,k_2\mid p,q,\eta)$ when $pk_1 + qk_2 \asymp n$. So we change the condition of the second case in \eqref{r1:rewrite} from $pk_1 + qk_2 \succ n$ to $pk_1 + qk_2 \succsim n$ for the simplicity of following discussion.
\end{remark}
The optimal convergence rate of $r_2(k_1,k_2\mid p,q,\eta)$ has been discussed in Section \ref{optbd:T3}. So we focus on deriving the optimal rate of $L(k_1,k_2\mid p,q,\eta)$ when $pk_1+qk_2\precsim n$ and the optimal rate of $U(k_1,k_2\mid p,q,\eta)$ when $pk_1 + qk_2\succ n$. By combining three optimal rates in \eqref{r1:rewrite}, we finally obtain the optimal rate of $r_1(k_1,k_2\mid p,q,\eta)$.
\par
\
\par
\noindent{\textbf{i. Optimal rate of $L(k_1,k_2\mid p,q,\eta)$ when $pk_1+qk_2\precsim n$: }}We first note that $L(k_1,k_2\mid p,q,\eta)$ can be decomposed into two parts with respect to $k_1$ and $k_2$ respectively,
\bee\label{l:split}
L(k_1,k_2\mid p,q,\eta) &=\frac{k_1}{qn} + \M I_{\eta,p}(k_1)\cdot k_1^{-\tilde{\alpha}_1} + \frac{k_2}{pn}+  \M I_{\eta,q}(k_2)\cdot k_2^{-\tilde{\alpha}_2}
\\
&=\underbrace{\frac{k_1}{qn} + \M I(k_1 <\tilde p)\cdot k_1^{-\tilde{\alpha}_1}}_{\equiv L_1(k_1\mid p,q,\eta)}+ \underbrace{\frac{k_2}{pn}+  \M I(k_2 <\tilde q)\cdot k_2^{-\tilde{\alpha}_2}}_{\equiv L_2(k_2\mid p,q,\eta)}.
\ee
Then to find the optimal convergence rate of $L(k_1,k_2\mid p,q,\eta)$ when $pk_1+qk_2\precsim n$, we can find the optimal rate of $L_1(k_1\mid p,q,\eta)$ when $pk_1\precsim n$, and the optimal rate of $L_2(k_2\mid p,q,\eta)$ when $qk_2\precsim n$, respectively. 
Let $k_{1,\text{opt}}^{(1),L}$ be the selected $k_1$ that gives $L_1(k_1\mid p,q,\eta)$ the optimal rate, and $k_{2,\text{opt}}^{(1),L}$ be the selected $k_2$ that gives $L_2(k_2\mid p,q,\eta)$ the optimal rate.
\par
We focus on the optimal rate of $L_1(k_1\mid p,q,\eta)$ when $pk_1\precsim n$. The optimal rate of $L_2(k_2\mid p,q,\eta)$ when $qk_2\precsim n$ can be derived similarly. Since $pk_1\precsim n$ directly implies $p\precsim n$ (by $k_1 \geq 1$), we only need to consider the following three regimes. 
\begin{itemize}
\item[\textbf{regime 1}]  ($p\precsim \sqrt{n}$)\textbf{:} Since $k_1\precsim p\precsim \sqrt{n}$, the constraint $k_1p\precsim n$ always holds when optimizing $L_1(k_1\mid p,q,n)$. 
\par
When $q$ satisfies $(qn)^{\frac{1}{\tilde\alpha_1 + 1}} \precsim p$, if we select $k_{1,\text{opt}}^{(1),L}$ less than $\tilde p$, then $L_1(k_1\mid p,q,\eta)$ becomes 
$$
\frac{k_1}{qn} + k_1^{-\tilde \alpha_1},
$$
where we can select $k_{1,\text{opt}}^{(1),L} \asymp (qn)^{\frac{1}{\tilde \alpha_1 + 1}}$, which yields $\frac{k_1}{qn}\asymp k_1^{-\tilde\alpha_1}$. Note that it is possible to select $k_{1,\text{opt}}^{(1),L}$ in such a divergence rate since $(qn)^{\frac{1}{\tilde \alpha_1 + 1}}\precsim p$. Then the optimal rate of $L_1(k_1\mid p,q,\eta)$ is  $(qn)^{\frac{1}{\tilde \alpha_1 + 1} - 1}$. If we select $k_{1,\text{opt}}^{(1),L} =\tilde p$, then $L_1(k_1\mid p,q,\eta)$ becomes $\frac{\tilde{p}}{qn}\asymp \frac{p}{qn}$, which is larger or equal to $(qn)^{\frac{1}{\tilde \alpha_1 + 1} - 1}$ under the condition of $(qn)^{\frac{1}{\tilde\alpha_1 + 1}} \precsim p$. Therefore we choose $k_{1,\text{opt}}^{(1),L} \asymp (qn)^{\frac{1}{\tilde \alpha_1 + 1}}$ and the corresponding optimal rate of $L_1(k_1\mid p,q,\eta)$ is  $(qn)^{\frac{1}{\tilde \alpha_1 + 1} - 1}$.
\par
When $q$ satisfies $(qn)^{\frac{1}{\tilde\alpha}_1 + 1}\succ p$, if $k_1<\tilde p$, the optimal rate of $L_1(k_1\mid p,q,\eta) = \frac{k_1}{qn} + k_1^{-\tilde\alpha_1} \asymp \frac{p}{qn} + p^{-\tilde\alpha_1}$ when selecting $k_1 = \tilde p - 1$. On the other hand, when selecting $k_1 = \tilde p$, then $L_1(k_1\mid p,q,\eta)$ becomes  $\frac{p}{qn}$, which is even smaller than $\frac{p}{qn} + p^{-\tilde\alpha_1}$. Therefore, when $(qn)^{\frac{1}{\tilde\alpha_1 + 1}}\succ p$, we select $k_{1,\text{opt}}^{(1),L} = \tilde p$ and the optimal rate of $L_1(k_1\mid p,q,\eta)$ is $\frac{p}{qn}$.
\par
By further noticing that $(qn)^{\frac{1}{\tilde \alpha_1 + 1} - 1} \precsim \frac{p}{qn}$ when $(qn)^{\frac{1}{\tilde\alpha_1 + 1}}\precsim p$ and vice versa, we finally obtain the optimal rate of $L_1(k_1 \mid p,q,n)$ under $p\precsim \sqrt{n}$ is $$
(qn)^{\frac{1}{\tilde\alpha_1 + 1} - 1}\wedge \frac{p}{qn},
$$
with $k_{1,\text{opt}}^{(1),L} = \begin{cases}
(qn)^{\frac{1}{\tilde \alpha_1 + 1}} & p\precsim \sqrt{n}, (qn)^{\frac{1}{\tilde\alpha_1 + 1}}\precsim p
\\
\tilde p & p\precsim \sqrt{n}, (qn)^{\frac{1}{\tilde\alpha_1 + 1}}\succ p
\end{cases}.$
\item[\textbf{regime 2}]($\sqrt{n}\prec p\precsim n$ and $p\cdot(qn)^{\frac{1}{\tilde \alpha_1 + 1}}\precsim n$)\textbf{:} Under regime 2, it is easy to see $(qn)^{\frac{1}{\tilde\alpha_1 + 1}}\precsim n/p\precsim \sqrt{n}\precsim p$. Thus $k_{1,\text{opt}}^{(1),L}\asymp (qn)^{\frac{1}{\tilde\alpha_1 + 1}}$ satisfies the two constraints $k_1 \precsim p$ and $k_1p\precsim n$. Similar arguments to regime 1 can be applied and imply that $k_{1,\text{opt}}^{(1),L}\asymp (qn)^{\frac{1}{\tilde\alpha_1 + 1}}$ gives the optimal rate of $L_1(k_1\mid p,q,n)$, which is $L_1(k_{1,\text{opt}}^{(1),L}\mid p,q,n)\asymp (qn)^{\frac{1}{\tilde\alpha_1 + 1} - 1}$.
\item[\textbf{regime 3}] ($\sqrt{n}\prec p\precsim n$ and $p\cdot(qn)^{\frac{1}{\tilde \alpha_1 + 1}}\succ n$)\textbf{:} Since $k_1p\precsim {n}$, we have 
\bee\label{rgm3:k1up}
k_1 \precsim {n}/p\prec\sqrt{n}\prec p,
\ee
 where the last two inequalities hold by $\sqrt{n}\prec p$. Thus we can not select $k_1 = \tilde p$ under regime 3. The $L_1(k_1\mid p,q,n)$ becomes
\bee\label{rgm3:main}
\frac{k_1}{qn} + k_1^{-\tilde\alpha_1}.
\ee
On the other hand, since $p\cdot(qn)^{\frac{1}{\tilde \alpha_1 + 1}}\succ n$ under regime 3, we know $k_1\prec (qn)^{\frac{1}{\tilde\alpha_1 + 1}}$ because of the  constraint $pk_1\precsim n$. Therefore \eqref{rgm3:main} is optimized when $k_1$ diverges maximally fast under the restriction of \eqref{rgm3:k1up}. By the upper bound of $k_1$ in \eqref{rgm3:k1up}, we select $k_{1,\text{opt}}^{(1),L}\asymp \frac{n}{p}$ and the corresponding optimal rate of $L_1(k_{1,\text{opt}}^{(1),L}\mid p,q,n)$ is $\big(\frac{n}{p}\big)^{-\tilde\alpha_1}$.
\end{itemize}
Summarizing the results above and by symmetry, the optimal rate of $L(k_1,k_2\mid p,q,\eta)$ when $pk_1+qk_2\precsim n$ is as follows:
\bee\label{optL}
L(k_{1,\text{opt}}^{(1),L},k_{2,\text{opt}}^{(2),L}\mid p,q,\eta)  = \max\Big\{L_1\big(k_{1,\text{opt}}^{(1),L}\mid p,q,\eta\big), L_2\big(k_{2,\text{opt}}^{(2),L}\mid p,q,\eta\big)\Big\},
\ee
where
\bee\label{optL12}
L_1&\big(k_{1,\text{opt}}^{(1),L}\mid p,q,\eta\big) \asymp 
\begin{cases}
(qn)^{\frac{1}{\tilde\alpha_1 + 1} - 1}\wedge \frac{p}{qn} & \text{when }p\precsim \sqrt{n}
\\
\frac{p}{qn} & \text{when } \sqrt{n}\prec p\precsim n \text{ and } p\cdot(qn)^{\frac{1}{\tilde \alpha_1 + 1}}\precsim n
\\
\big(\frac{n}{p}\big)^{-\tilde\alpha_1}&\text{when }\sqrt{n}\prec p\precsim n \text{ and }  p\cdot(qn)^{\frac{1}{\tilde \alpha_1 + 1}}\succ n
\end{cases},
\\
L_2&\big(k_{2,\text{opt}}^{(1),L}\mid p,q,\eta\big) \asymp 
\begin{cases}
(pn)^{\frac{1}{\tilde\alpha_2 + 1} - 1}\wedge \frac{q}{pn} & \text{when }q\precsim \sqrt{n}
\\
\frac{q}{pn} & \text{when } \sqrt{n}\prec q\precsim n \text{ and } q\cdot(pn)^{\frac{1}{\tilde \alpha_2 + 1}}\precsim n
\\
\big(\frac{n}{q}\big)^{-\tilde\alpha_2}&\text{when }\sqrt{n}\prec q\precsim n \text{ and }  q\cdot(pn)^{\frac{1}{\tilde \alpha_2 + 1}}\succ n
\end{cases}.
\ee
The corresponding $k_{1,\text{opt}}^{(1),L}, k_{2,\text{opt}}^{(1),L}$ are
\bee\label{optLk1k2}
k_{1,\text{opt}}^{(1),L} &= 
\begin{cases}
(qn)^{\frac{1}{\tilde \alpha_1 + 1}} & \text{when }p\precsim \sqrt{n}, (qn)^{\frac{1}{\tilde\alpha_1 + 1}}\precsim p \text{ or }\sqrt{n}\prec p\precsim n,p\cdot(qn)^{\frac{1}{\tilde \alpha_1 + 1}}\precsim n
\\
\tilde p & \text{when }p\precsim \sqrt{n}, (qn)^{\frac{1}{\tilde\alpha_1 + 1}}\succ p
\\
n/p &\text{when }\sqrt{n}\prec p\precsim n,p\cdot(qn)^{\frac{1}{\tilde \alpha_1 + 1}}\succ n
\end{cases},
\\
k_{2,\text{opt}}^{(1),L} &= 
\begin{cases}
(pn)^{\frac{1}{\tilde \alpha_2 + 1}} & \text{when }q\precsim \sqrt{n}, (pn)^{\frac{1}{\tilde\alpha_2 + 1}}\precsim q \text{ or }\sqrt{n}\prec q\precsim n,q\cdot(pn)^{\frac{1}{\tilde \alpha_2 + 1}}\precsim n
\\
\tilde q & \text{when }q\precsim \sqrt{n}, (pn)^{\frac{1}{\tilde\alpha_2 + 1}}\succ q
\\
n/q &\text{when }\sqrt{n}\prec q\precsim n,q\cdot(pn)^{\frac{1}{\tilde \alpha_2 + 1}}\succ n.
\end{cases}
\ee
\par
\
\par
\noindent{\textbf{ii. Optimal rate of $U(k_1,k_2\mid p,q,\eta)$ when $pk_1+qk_2\succsim n$: }}Similar to \textbf{i.}, $U(k_1,k_2\mid p,q,\eta)$ can be decomposed into two parts as follows,
\bee\nonumber
U(k_1,k_2\mid p,q,\eta)  \asymp \underbrace{\frac{pk^2_1}{qn^2} + \M I_{\eta,p}(k_1)\cdot k_1^{-\tilde\alpha_1}}_{\equiv U_1(k_1\mid p,q,\eta)} +  \underbrace{\frac{qk^2_2}{pn^2} +  \M I_{\eta,q}(k_2)\cdot k_2^{-\tilde\alpha_2}}_{\equiv U_2(k_2\mid p,q,\eta)}.
\ee
Also similar to part \textbf{i.}, the optimal rate of $U_1(k_1\mid p,q,\eta)$ is attained when $k_1 = k_{1,\text{opt}}^{(1),U}$, and the optimal rate of $U_2(k_2\mid p,q,\eta)$ is attained when $k_2 = k_{2,\text{opt}}^{(1),U}$. Due to symmetry, we only derive the optimal rate of $U_1(k_1\mid p,q,\eta)$ when $pk_1 + qk_2 \succsim n$. Note that we do not discuss the regime that $\max\{p,q\}\prec \sqrt{n}$ as it will not happen when $pk_1 + qk_2\succsim n$ by noticing that $k_1\precsim p$ and $k_2\precsim q$.
\begin{itemize}

\item[\textbf{regime 4}]  ($p\prec \sqrt{n}$)\textbf{:} When $p\prec \sqrt{n}$, we know $pk_1 \prec n$ since $k_1\precsim p$. Thus $qk_2\succsim n$ must hold; otherwise  $k_1p + k_2q\succsim n$ can not be satisfied. We discuss two scenarios of $q$: $q\precsim \frac{p^{\tilde\alpha_1 + 3}}{n^2}$ and $q\succ \frac{p^{\tilde\alpha_1 + 3}}{n^2}$. 
\par
When $q\precsim \frac{p^{\tilde\alpha_1 + 3}}{n^2}$, it is easy to see $\big(\frac{qn^2}{p}\big)^{\frac{1}{\tilde\alpha_1 + 2}}\precsim p$. Therefore, we can take $k_{1,\text{opt}}^{(1),U} \asymp \big(\frac{qn^2}{p}\big)^{\frac{1}{\tilde\alpha_1 + 2}}$ which satisfies $\frac{pk^2_1}{qn^2} \asymp k_1^{-\tilde\alpha_1}$ and get the optimal rate  $U_1(k_{1,\text{opt}}^{(1),U}\mid p,q,\eta) \asymp \big(\frac{qn^2}{p}\big)^{-\frac{\tilde \alpha_1}{\tilde\alpha_1 + 2}}$.
\par
When $q\succ \frac{p^{\tilde\alpha_1 + 3}}{n^2}$, we have $\big(\frac{qn^2}{p}\big)^{\frac{1}{\tilde\alpha_1 + 2}}\succ p$. Similar to arguments for regime 1, we take $k_{1,\text{opt}}^{(1),U} = \tilde p$ and the optimal rate  $U_1(k_{1,\text{opt}}^{(1),U}\mid p,q,\eta) \asymp \frac{p^3}{qn^2}$.

\item[\textbf{regime 5}]  ($\sqrt{n}\precsim p\prec n$)\textbf{:} We consider three scenarios based on the value of $q$: $q\prec \sqrt{n}$, $\sqrt{n}\precsim q \prec n$ and $n\precsim q$.
\par
\begin{itemize}
\item[(i)] When $q\prec \sqrt{n}$, since $k_2q\prec {n}$, we need an additional constraint that $k_1p\succsim n$ to guarantee $pk_1 + qk_2 \succsim n$. There are three possibilities. 
\begin{itemize}
\item[(1).] If $q\prec \frac{n^{\tilde\alpha_1}}{p^{\tilde\alpha_1 +1}}$, it is easy to check $p\big(\frac{qn^2}{p}\big)^{\frac{1}{\tilde\alpha_1 + 2}}\prec n$. Therefore, since $k_1 p \succsim {n}$ we can not solve $\frac{pk^2_1}{qn^2} \asymp k_1^{-\tilde\alpha_1}$ and take $k_1 \asymp \big(\frac{qn^2}{p}\big)^{\frac{1}{\tilde\alpha_1 + 2}}$ for the optimal rate of $\frac{pk^2_1}{qn^2} +  k_1^{-\tilde\alpha_1}$. It is easy to check $\frac{pk^2_1}{qn^2}$ always dominate $\frac{pk^2_1}{qn^2} +  k_1^{-\tilde\alpha_1}$ when $q\prec \frac{n^{\tilde\alpha_1}}{p^{\tilde\alpha_1 +1}}$. Therefore we take $k_{1,\text{opt}}^{(1),U} \asymp n/p$, the lowest divergence rate of $k_1$ when $k_1p\succsim n$, and $U_1(k_{1,\text{opt}}^{(1),U}\mid p,q,\eta) \asymp \frac{1}{pq}$. 
\item[(2).] If $q\succsim \frac{n^{\tilde\alpha_1}}{p^{\tilde\alpha_1 +1}}$ and $q\precsim \frac{p^{\tilde\alpha_1 + 3}}{n^2}$, we can see $p\big(\frac{qn^2}{p}\big)^{\frac{1}{\tilde\alpha_1 + 2}}\succsim n$ and $\big(\frac{qn^2}{p}\big)^{\frac{1}{\tilde\alpha_1 + 2}}\precsim p$. We solve $\frac{pk^2_1}{qn^2} \asymp k_1^{-\tilde\alpha_1}$ and can select $k_{1,\text{opt}}^{(1),U} \asymp \big(\frac{qn^2}{p}\big)^{\frac{1}{\tilde\alpha_1 + 2}}$ for the optimal rate, which is $U_1(k_{1,\text{opt}}^{(1),U}\mid p,q,\eta) \asymp \big(\frac{qn^2}{p}\big)^{-\frac{\tilde \alpha_1}{\tilde\alpha_1 + 2}}$. 
\item[(3).] If $q\succsim \frac{n^{\tilde\alpha_1}}{p^{\tilde\alpha_1 +1}}$ and $q\succ \frac{p^{\tilde\alpha_1 + 3}}{n^2}$, we can see $p\big(\frac{qn^2}{p}\big)^{\frac{1}{\tilde\alpha_1 + 2}}\succsim n$ and $\big(\frac{qn^2}{p}\big)^{\frac{1}{\tilde\alpha_1 + 2}}\succ p$. Therefore when we increase the divergence rate of $k_1$ from $1$ to $p$, we always have $k_1^{-\tilde{\alpha}_1}\succsim \frac{pk_1^2}{qn^2}$. Thus we take $k_{1,\text{opt}}^{(1),U} = \tilde p$ and the optimal rate  $U_1(k_{1,\text{opt}}^{(1),U}\mid p,q,\eta) \asymp \frac{p^3}{qn^2}$. 
\par
We further note that since $\sqrt{n}\precsim p\prec n$, we have $\frac{p^{\tilde\alpha_1 + 3}}{n^2} \succsim \frac{n^{\tilde\alpha_1}}{p^{\tilde\alpha_1 +1}}$. Thus conditions $q\succsim \frac{n^{\tilde\alpha_1}}{p^{\tilde\alpha_1 +1}}$ and $q\succ \frac{p^{\tilde\alpha_1 + 3}}{n^2}$ can be simplified to $q\succ \frac{p^{\tilde\alpha_1 + 3}}{n^2}$.
\end{itemize}
\par
\item[(ii)] When $\sqrt{n}\precsim q \prec n$, we have $\sqrt{n}\precsim p,q \prec n$ under regime $5$. We need either constraint $k_1p\succsim n$ or $k_2q\succsim n$ to guarantee $pk_1 + qk_2\succsim n$. We first derive the optimal rates  of $U(k_1,k_2\mid p,q,\eta) $, under the constraint $k_1p\succsim n$ and the constraint $k_2q\succsim n$ respectively. Then the final optimal rate of $U_1(k_1\mid p,q,\eta)$ is selected to be the smaller one between the two rates that are derived under these two constraints. Similar to previous arguments, we can show $\max\big\{\tilde U_1, \tilde U_2\big\}$ with 
\bee\label{u12}
\tilde{U}_1 &\asymp \begin{cases}
\frac{1}{pq}& q\prec \frac{n^{\tilde\alpha_1}}{p^{\tilde\alpha_1 +1}}
\\
\big(\frac{qn^2}{p}\big)^{-\frac{\tilde \alpha_1}{\tilde\alpha_1 + 2}}& q\succsim \frac{n^{\tilde\alpha_1}}{p^{\tilde\alpha_1 +1}} \text{ and } q\precsim \frac{p^{\tilde\alpha_1 + 3}}{n^2}
\\
\frac{p^3}{qn^2}& q\succ \frac{p^{\tilde\alpha_1 + 3}}{n^2}
\end{cases},
\\
\tilde{U}_2 &\asymp \begin{cases}
\big(\frac{pn^2}{q}\big)^{-\frac{\tilde \alpha_2}{\tilde\alpha_2 + 2}}& p\precsim q^{\tilde\alpha_2 + 3}/n^2
\\
\frac{q^3}{pn^2}&  p\succ q^{\tilde\alpha_2 + 3}/n^2
\end{cases},
\ee
is the optimal rate of $U(k_1,k_2\mid p,q,\eta)$ under the constraint $k_1p\succsim n$. And the corresponding selection of $k_1,k_2$ is 
\bee\label{k1:U1}
k_1 &= \begin{cases}
n/p& q\prec \frac{n^{\tilde\alpha_1}}{p^{\tilde\alpha_1 +1}}
\\
\big(\frac{qn^2}{p}\big)^{\frac{1}{\tilde\alpha_1 + 2}}& q\succsim \frac{n^{\tilde\alpha_1}}{p^{\tilde\alpha_1 +1}} \text{ and } q\precsim \frac{p^{\tilde\alpha_1 + 3}}{n^2}
\\
\tilde p & q\succ \frac{p^{\tilde\alpha_1 + 3}}{n^2}
\end{cases},
\\
k_2 &= \begin{cases}
\big(\frac{pn^2}{q}\big)^{\frac{1}{\tilde\alpha_2 + 2}}& p\precsim q^{\tilde\alpha_2 + 3}/n^2
\\
\tilde q&  p\succ q^{\tilde\alpha_2 + 3}/n^2
\end{cases}.
\ee
In addition, $\max\big\{\tilde U'_1, \tilde U'_2\big\}$ with 
\bee\label{u12'}
\tilde{U}_1' &\asymp \begin{cases}
\big(\frac{qn^2}{p}\big)^{-\frac{\tilde \alpha_1}{\tilde\alpha_1 + 2}}& q\precsim \frac{p^{\tilde\alpha_1 + 3}}{n^2}
\\
\frac{p^3}{qn^2}&  q\succ \frac{p^{\tilde\alpha_1 + 3}}{n^2}
\end{cases},
\\
\tilde{U}_2' &\asymp \begin{cases}
\frac{1}{pq}& p\prec \frac{n^{\tilde\alpha_2}}{q^{\tilde\alpha_2 +1}}
\\
\big(\frac{pn^2}{q}\big)^{-\frac{\tilde \alpha_2}{\tilde\alpha_2 + 2}}& p\succsim \frac{n^{\tilde\alpha_2}}{q^{\tilde\alpha_2 +1}} \text{ and }p\precsim q^{\tilde\alpha_2 + 3}/n^2
\\
\frac{q^3}{pn^2}& p\succ q^{\tilde\alpha_2 + 3}/n^2
\end{cases},
\ee
is the optimal rate of $U(k_1,k_2\mid p,q,\eta)$ under the constraint $k_1p\succsim n$. The corresponding selection of $k_1,k_2$ is 
\bee\label{k1:U1'}
k_1 &= \begin{cases}
\big(\frac{qn^2}{p}\big)^{\frac{1}{\tilde\alpha_1 + 2}}& q\precsim \frac{p^{\tilde\alpha_1 + 3}}{n^2}
\\
\tilde p & q\succ \frac{p^{\tilde\alpha_1 + 3}}{n^2}
\end{cases},
\\
k_2 &= \begin{cases}
n/q& p\prec \frac{n^{\tilde\alpha_2}}{p^{\tilde\alpha_2 +1}}
\\
\big(\frac{pn^2}{q}\big)^{\frac{1}{\tilde\alpha_2 + 2}}& p\prec \frac{n^{\tilde\alpha_2}}{p^{\tilde\alpha_2 +1}} \text{ and }p\precsim q^{\tilde\alpha_2 + 3}/n^2
\\
\tilde q&  p\succ q^{\tilde\alpha_2 + 3}/n^2
\end{cases}.
\ee
In summary, when  $\max\big\{\tilde U_1, \tilde U_2\big\} \precsim \max\big\{\tilde U_1', \tilde U_2'\big\}$, the optimal rate of $U_1(k_1\mid p,q,\eta)$ is $U_1(k_{1,\text{opt}}^{(1),U}\mid p,q,\eta) \asymp \tilde U_1$ with $k_{1,\text{opt}}^{(1),U}$ being the $k_1$  given in \eqref{k1:U1}. When  $\max\big\{\tilde U_1, \tilde U_2\big\} \succ \max\big\{\tilde U_1', \tilde U_2'\big\}$, the optimal rate of $U_1(k_1\mid p,q,\eta)$ is $U_1(k_{1,\text{opt}}^{(1),U}\mid p,q,\eta) \asymp \tilde U_1'$ with $k_{1,\text{opt}}^{(1),U}$ being the $k_1$ given in \eqref{k1:U1'}. 
\item[(iii)] When $q\succsim n$, the $pk_1 + qk_2 \succsim n$ naturally holds. Thus no additional lower bound constraint of $k_1$ is needed. Similar to the derivation in regime 1, we have $U_1(k_{1,\text{opt}}^{(1),U}\mid p,q,\eta) \asymp \begin{cases}
\big(\frac{qn^2}{p}\big)^{-\frac{\tilde \alpha_1}{\tilde\alpha_1 + 2}}& q\precsim \frac{p^{\tilde\alpha_1 + 3}}{n^2}\\
\frac{p^3}{qn^2}& q\succ \frac{p^{\tilde\alpha_1 + 3}}{n^2}
\end{cases}$, where $k_{1,\text{opt}}^{(1),U} = \begin{cases}
\big(\frac{qn^2}{p}\big)^{\frac{1}{\tilde\alpha_1 + 2}}& q\precsim \frac{p^{\tilde\alpha_1 + 3}}{n^2}
\\
\tilde p & q\succ \frac{p^{\tilde\alpha_1 + 3}}{n^2}
\end{cases}$.
\end{itemize}
\item[\textbf{regime 6}]  ($n\precsim p$)\textbf{:} It is easy to see $k_1p + k_2q \succsim p \succsim n$. The optimal rate and corresponding $k_{1,\text{opt}}^{(1),U}$ are hence the same with those for regime $5$ (iii).
\end{itemize}
Summarizing the results above and by symmetry, the optimal rate of $U(k_1,k_2\mid p,q,\eta)$ when $pk_1+qk_2\succsim n$ is obtained as follows:
\bee\label{Uopt}
U(k_{1,\text{opt}}^{(1),U},k_{2,\text{opt}}^{(1),U}\mid p,q,\eta)  = \max\Big\{U_1\big(k_{1,\text{opt}}^{(1),U}\mid p,q,\eta\big), U_2\big(k_{2,\text{opt}}^{(1),U}\mid p,q,\eta\big)\Big\},
\ee
where
\bee\label{Uopt12}
U_1&\big(k_{1,\text{opt}}^{(1),U}\mid p,q,\eta\big) = 
\begin{cases}
\frac{1}{pq} & \text{when }(p,q)\in \mathcal{C_1}
\\
\big(\frac{qn^2}{p}\big)^{-\frac{\tilde \alpha_1}{\tilde\alpha_1 + 2}} & \text{when } (p,q)\in\mathcal{C_2}
\\
\frac{p^3}{qn^2}&\text{when } (p,q)\in\mathcal{C_3}
\end{cases},
\\
U_2&\big(k_{2,\text{opt}}^{(1),U}\mid p,q,\eta\big) = 
\begin{cases}
\frac{1}{pq} & \text{when }(p,q)\in \mathcal{C_4}
\\
\big(\frac{pn^2}{q}\big)^{-\frac{\tilde \alpha_2}{\tilde\alpha_2 + 2}} & \text{when } (p,q)\in\mathcal{C_5}
\\
\frac{q^3}{pn^2}&\text{when } (p,q)\in\mathcal{C_6}
\end{cases}.
\ee
The corresponding $k_{1,\text{opt}}^{(1),U}, k_{2,\text{opt}}^{(1),U}$ are
\bee\label{Uoptk}
k_{1,\text{opt}}^{(1),U}& = 
\begin{cases}
n/p & \text{when }(p,q)\in \mathcal{C_1}
\\
\big(\frac{qn^2}{p}\big)^{\frac{1}{\tilde\alpha_1 + 2}}& \text{when } (p,q)\in\mathcal{C_2}
\\
\tilde p&\text{when } (p,q)\in\mathcal{C_3}
\end{cases},
\\
k_{2,\text{opt}}^{(1),U}& = 
\begin{cases}
n/q & \text{when }(p,q)\in \mathcal{C_4}
\\
\big(\frac{pn^2}{q}\big)^{\frac{1}{\tilde\alpha_2 + 2}} & \text{when } (p,q)\in\mathcal{C_5}
\\
\tilde q &\text{when } (p,q)\in\mathcal{C_6}
\end{cases}.
\ee
\par
Here $\mathcal{C_1}$--$\mathcal{C_6}$ are defined as follows,
\bee\nonumber
\mathcal{C_1} &= \Bigg\{(p,q)\mid \sqrt{n}\precsim p\prec n, q\prec \min\{\sqrt{n},\frac{n^{\tilde\alpha_1}}{p^{\tilde\alpha_1 +1}}\}; 
\\ &\max\big\{\tilde U_1, \tilde U_2\big\} \precsim \max\big\{\tilde U_1', \tilde U_2'\big\}, \sqrt{n}\precsim p \prec n, \sqrt{n}\precsim q\prec \min\big\{n,\frac{n^{\tilde\alpha_1}}{p^{\tilde\alpha_1 + 1}}\big\}\Bigg\},
\\
\mathcal{C_2} &= \Bigg\{(p,q)\mid p\prec \sqrt{n},q\precsim \frac{p^{\tilde\alpha_1 + 3}}{n^2} ; \ \sqrt{n}\precsim p\prec n, \frac{n^{\tilde\alpha_1}}{p^{\tilde\alpha_1 +1}}\precsim q\precsim \min\big\{\sqrt{n},\frac{p^{\tilde\alpha_1 + 3}}{n^2}\big\};
\\ 
&\max\big\{\tilde U_1, \tilde U_2\big\} \precsim \max\big\{\tilde U_1', \tilde U_2'\big\}, \sqrt{n}\precsim p \prec n, \max\{\sqrt{n},\frac{n^{\tilde\alpha_1}}{p^{\tilde\alpha_1 + 1}}\}\precsim q\prec \min\big\{n,\frac{p^{\tilde\alpha_1 + 3}}{n^2}\big\};
\\
&\max\big\{\tilde U_1, \tilde U_2\big\} \succ \max\big\{\tilde U_1', \tilde U_2'\big\}, \sqrt{n}\precsim p \prec n, \sqrt{n}\precsim q\prec \min\big\{n,\frac{p^{\tilde\alpha_1 + 3}}{n^2}\big\};
\\
&\sqrt{n}\precsim p \prec n, n\precsim q\precsim \frac{p^{\tilde\alpha_1 + 3}}{n^2}; \ n\precsim p , q\precsim \frac{p^{\tilde\alpha_1 + 3}}{n^2}\Bigg\},
\\
\mathcal{C_3} &= \Bigg\{(p,q)\mid p\prec \sqrt{n},q\succ \frac{p^{\tilde\alpha_1 + 3}}{n^2}; \ \sqrt{n}\precsim p\prec n, \frac{p^{\tilde\alpha_1 + 3}}{n^2}\precsim q\prec \sqrt{n};\ 
\\
&\max\big\{\tilde U_1, \tilde U_2\big\} \precsim \max\big\{\tilde U_1', \tilde U_2'\big\}, \sqrt{n}\precsim p \prec n, \max\{\sqrt{n},\frac{p^{\tilde\alpha_1 + 3}}{n^2}\}\precsim q\prec n;
\\
&\max\big\{\tilde U_1, \tilde U_2\big\} \succ \max\big\{\tilde U_1', \tilde U_2'\big\}, \sqrt{n}\precsim p \prec n,\max\{\sqrt{n},\frac{p^{\tilde\alpha_1 + 3}}{n^2}\}\precsim q\prec n;
\\
&\sqrt{n}\precsim p \prec n, \max\big\{n,\frac{p^{\tilde\alpha_1 + 3}}{n^2}\big\}\precsim q; \ n\precsim p ,  \frac{p^{\tilde\alpha_1 + 3}}{n^2} \prec q
\Bigg\},
\\
\mathcal{C_4} &= \Bigg\{(p,q)\mid \sqrt{n}\precsim q\prec n, p\prec \min\{\sqrt{n},\frac{n^{\tilde\alpha_2}}{q^{\tilde\alpha_2 +1}}\}; 
\\ &\max\big\{\tilde U_1, \tilde U_2\big\} \succ \max\big\{\tilde U_1', \tilde U_2'\big\}, \sqrt{n}\precsim q \prec n, \sqrt{n}\precsim p\prec \min\big\{n,\frac{n^{\tilde\alpha_2}}{p^{\tilde\alpha_2 + 1}}\big\}\Bigg\},
\\
\mathcal{C_5} &= \Bigg\{(p,q)\mid q\prec \sqrt{n},p\precsim q^{\tilde\alpha_2 + 3}/n^2 ; \ \sqrt{n}\precsim q\prec n, \frac{n^{\tilde\alpha_2}}{p^{\tilde\alpha_2 +1}}\precsim p\precsim \min\big\{\sqrt{n},\frac{q^{\tilde\alpha_2 + 3}}{n^2}\big\};
\\ 
&\max\big\{\tilde U_1, \tilde U_2\big\} \succ \max\big\{\tilde U_1', \tilde U_2'\big\}, \sqrt{n}\precsim q \prec n, \max\{\sqrt{n},\frac{n^{\tilde\alpha_2}}{q^{\tilde\alpha_2 + 1}}\}\precsim p\prec \min\big\{n,q^{\tilde\alpha_2 + 3}/n^2\big\};
\\
&\max\big\{\tilde U_1, \tilde U_2\big\} \precsim \max\big\{\tilde U_1', \tilde U_2'\big\}, \sqrt{n}\precsim q \prec n, \sqrt{n}\precsim p\prec \min\big\{n,q^{\tilde\alpha_2 + 3}/n^2\big\};
\\
&\sqrt{n}\precsim q \prec n, n\precsim p\precsim q^{\tilde\alpha_2 + 3}/n^2; \ n\precsim q , p\precsim q^{\tilde\alpha_2 + 3}/n^2\Bigg\},
\ee
\bee\nonumber
\mathcal{C_6} &= \Bigg\{(p,q)\mid q\prec \sqrt{n},p\succ q^{\tilde\alpha_2 + 3}/n^2; \ \sqrt{n}\precsim q\prec n, q^{\tilde\alpha_2 + 3}/n^2\precsim p\prec \sqrt{n};\ 
\\
&\max\big\{\tilde U_1, \tilde U_2\big\} \succ \max\big\{\tilde U_1', \tilde U_2'\big\}, \sqrt{n}\precsim q \prec n, \max\{\sqrt{n},q^{\tilde\alpha_2 + 3}/n^2\}\precsim p\prec n;
\\
&\max\big\{\tilde U_1, \tilde U_2\big\} \precsim \max\big\{\tilde U_1', \tilde U_2'\big\}, \sqrt{n}\precsim q \prec n,\max\{\sqrt{n},q^{\tilde\alpha_2 + 3}/n^2\}\precsim p\prec n;
\\
&\sqrt{n}\precsim q \prec n, \max\big\{n,q^{\tilde\alpha_2 + 3}/n^2\big\}\precsim p; \ n\precsim q ,  q^{\tilde\alpha_2 + 3}/n^2 \prec p
\Bigg\}.
\ee
Note that $\mathcal{C}_1$--$\mathcal{C}_3$ are parallel to $\mathcal{C}_4$--$\mathcal{C}_6$, and the definitions of $\tilde U_1,\tilde U_2,\tilde U_1',\tilde U_2'$ can be found in \eqref{u12} and \eqref{u12'}.
\par
\
\par
\noindent{\textbf{iii. Optimal rate of $r_{1}(k_1,k_2\mid p,q,\eta)$: }}We consider the optimal rate of $r_{1}(k_1,k_2\mid p,q,\eta)$ under three scenarios: (1)  $\max\{p,q\}\prec \sqrt{n}$; (2)  Either $p\succsim \sqrt{n}$ or $q\succsim \sqrt{n}$, and $\max\{p,q\}\prec n$; (3) Either $p\succsim n$ or $q\succsim n$.
\par
When $\max\{p,q\}\prec \sqrt{n}$, since $k_1\precsim \sqrt{n}$ and $k_2\precsim \sqrt{n}$, then $pk_1+qk_2 \prec n$ always hold. By \eqref{r1:rewrite}, we have $r_{1}(k_1,k_2\mid p,q,\eta) \asymp L(k_1,k_2\mid p,q,\eta)$ when $\max\{p,q\}\prec\sqrt{n}$. Therefore, the optimal rate of $r_{1}(k_1,k_2\mid p,q,\eta)$ is $L(k^{(1),L}_{1,\text{opt}},k^{(1),L}_{2,\text{opt}}\mid p,q,\eta)$ shown in \eqref{optL}--\eqref{optL12} based on different regimes of $p,q$. The corresponding $k^{(1)}_{1,\text{opt}}, k^{(1)}_{2,\text{opt}}$ are $k^{(1),L}_{1,\text{opt}}, k^{(1),L}_{2,\text{opt}}$ shown in \eqref{optLk1k2}.
\par
When either $p\succsim \sqrt{n}$ or $q\succsim \sqrt{n}$, and $\max\{p,q\}\prec n$, both $k_1p + k_2q \precsim n$ and $k_1p + k_2q \succsim  n$ can be satisfied with appropriate choices of $k_1$ and $k_2$. By \eqref{r1:rewrite}, the optimal rate of $r_1(k_1,k_2\mid p,q,n)$ is $L(k^{(1),L}_{1,\text{opt}},k^{(1),L}_{2,\text{opt}}\mid p,q,\eta)$  when choosing $k_1,k_2$ under the constraint $k_1p + k_2q \precsim n$, and the optimal rate is $r_2(k^{(2)}_{1,\text{opt}},k^{(2)}_{2,\text{opt}}\mid p,q,\eta) \wedge U(k^{(1),U}_{1,\text{opt}},k^{(1),U}_{2,\text{opt}}\mid p,q,\eta) $ when choosing $k_1,k_2$ under the constraint that $k_1p + k_2q \succsim n$. Here $r_2(k^{(2)}_{1,\text{opt}},k^{(2)}_{2,\text{opt}}\mid p,q,\eta)$ is defined in \eqref{r2:opt}--\eqref{r2:kopt} and $U(k^{(1),U}_{1,\text{opt}},k^{(1),U}_{2,\text{opt}}\mid p,q,\eta)$ is defined in \eqref{Uopt}--\eqref{Uoptk}. So the final optimal rate of $r_1(k_1,k_2\mid p,q,n)$ is 
\bee\label{optr1(2)}
L(k^{(1),L}_{1,\text{opt}},k^{(1),L}_{2,\text{opt}}\mid p,q,\eta)\wedge r_2(k^{(2)}_{1,\text{opt}},k^{(2)}_{2,\text{opt}}\mid p,q,\eta) \wedge U(k^{(1),U}_{1,\text{opt}},k^{(1),U}_{2,\text{opt}}\mid p,q,\eta),
\ee
and the optimal selection of $(k_1,k_2)$ is $(k^{(1),L}_{1,\text{opt}},k^{(1),L}_{2,\text{opt}})$ or $(k^{(2)}_{1,\text{opt}},k^{(2)}_{2,\text{opt}})$ or $(k^{(1),U}_{1,\text{opt}},k^{(1),U}_{2,\text{opt}})$, depending on which term in \eqref{optr1(2)} has the fastest convergence rate.
\par
When $p\succsim n$ or $q\succsim n$, the $pk_1 + qk_2 \succsim n$ always hold. Thus similar to previous arguments, the optimal rate of $r_1(k_1,k_2\mid p,q,\eta)$ is $r_2(k^{(2)}_{1,\text{opt}},k^{(2)}_{2,\text{opt}}\mid p,q,\eta) \wedge U(k^{(1),U}_{1,\text{opt}},k^{(1),U}_{2,\text{opt}}\mid p,q,\eta)$ by \eqref{r1:rewrite}. The optimal selection of $(k_1,k_2)$ is $(k^{(2)}_{1,\text{opt}},k^{(2)}_{2,\text{opt}})$ or $(k^{(1),U}_{1,\text{opt}},k^{(1),U}_{2,\text{opt}})$, depending on whether $r_2(k^{(2)}_{1,\text{opt}},k^{(2)}_{2,\text{opt}}\mid p,q,\eta)$ or $U(k^{(1),U}_{1,\text{opt}},k^{(1),U}_{2,\text{opt}}\mid p,q,\eta)$ has a faster convergence rate.
\par
To be more specific, the optimal rate of $r_1(k_1,k_2\mid p,q,\eta)$ and corresponding selection of $k_1,k_2$ is as follows:
\bee\label{r1:opt}
r&_1(k_{1,\text{opt}}^{(1)},k_{2,\text{opt}}^{(1)}\mid p,q,\eta) 
\\
&\asymp 
\begin{cases}
L(k^{(1),L}_{1,\text{opt}},k^{(1),L}_{2,\text{opt}}\mid p,q,\eta) \quad \quad \quad \quad \quad \quad \quad \quad \quad \quad \quad \quad\text{when } \max\{p,q\}\precsim \sqrt{n},
\\
L(k^{(1),L}_{1,\text{opt}},k^{(1),L}_{2,\text{opt}}\mid p,q,\eta)
\wedge r_2(k^{(2)}_{1,\text{opt}},k^{(2)}_{2,\text{opt}}\mid p,q,\eta) \wedge U(k^{(1),U}_{1,\text{opt}},k^{(1),U}_{2,\text{opt}}\mid p,q,\eta)
\\
\quad \quad \quad \quad \quad \quad \quad \quad \quad \quad \quad \quad\quad \quad \quad \quad \quad \quad \quad \quad \quad  \ \ \text{when }p\text{ or } q\succsim \sqrt{n} \text{, and }\max\{p,q\}\prec n,
\\
r_2(k^{(2)}_{1,\text{opt}},k^{(2)}_{2,\text{opt}}\mid p,q,\eta) \wedge U(k^{(1),U}_{1,\text{opt}},k^{(1),U}_{2,\text{opt}}\mid p,q,\eta) \quad \text{when } p\text{ or }q\succsim n,
\end{cases}
\ee
\bee\nonumber
(&k_{1,\text{opt}}^{(1)},k_{2,\text{opt}}^{(1)})
\\
&=\begin{cases}
\big(k^{(1),L}_{1,\text{opt}},k^{(1),L}_{2,\text{opt}}\big) \quad \text{when }\max\{p,q\}\precsim \sqrt{n},
\\
\big(k^{(1),L}_{1,\text{opt}},k^{(1),L}_{2,\text{opt}}\big)\quad \text{when }
p\text{ or } q\succsim \sqrt{n}, \text{ and }\ \max\{p,q\}\prec n, 
\\
\quad  \quad  \quad  \quad  \quad  \quad  \ \ \ \text{and }L(k^{(1),L}_{1,\text{opt}},k^{(1),L}_{2,\text{opt}}\mid p,q,\eta)\precsim r_2(k^{(2)}_{1,\text{opt}},k^{(2)}_{2,\text{opt}}\mid p,q,\eta) \wedge U(k^{(1),U}_{1,\text{opt}},k^{(1),U}_{2,\text{opt}}\mid p,q,\eta),
\\
\big(k^{(2)}_{1,\text{opt}},k^{(2)}_{2,\text{opt}}\big)\quad\text{when } p\text{ or } q\succsim \sqrt{n},\text{ and }\ \max\{p,q\}\prec n,
\\
\quad  \quad  \quad  \quad  \quad  \quad  \ \ \ \text{and }r_2(k^{(2)}_{1,\text{opt}},k^{(2)}_{2,\text{opt}}\mid p,q,\eta)\precsim  L(k^{(1),L}_{1,\text{opt}},k^{(1),L}_{2,\text{opt}}\mid p,q,\eta)\wedge U(k^{(1),U}_{1,\text{opt}},k^{(1),U}_{2,\text{opt}}\mid p,q,\eta),
\\
\big(k^{(1),U}_{1,\text{opt}},k^{(1),U}_{2,\text{opt}}\big) \quad \text{when }
p\text{ or } q\succsim \sqrt{n},\text{ and }\ \max\{p,q\}\prec n,
\\
\quad  \quad  \quad  \quad  \quad  \quad  \ \ \ \text{and }U(k^{(1),U}_{1,\text{opt}},k^{(1),U}_{2,\text{opt}}\mid p,q,\eta)\precsim r_2(k^{(2)}_{1,\text{opt}},k^{(2)}_{2,\text{opt}}\mid p,q,\eta) \wedge L(k^{(1),L}_{1,\text{opt}},k^{(1),L}_{2,\text{opt}}\mid p,q,\eta),
\\
\big(k^{(2)}_{1,\text{opt}},k^{(2)}_{2,\text{opt}}\big) \quad \text{when } p\text{ or }q\succsim n, \text{ and } r_2(k^{(2)}_{1,\text{opt}},k^{(2)}_{2,\text{opt}}\mid p,q,\eta)\precsim U(k^{(1),U}_{1,\text{opt}},k^{(1),U}_{2,\text{opt}}\mid p,q,\eta),
\\
\big(k^{(1),U}_{1,\text{opt}},k^{(1),U}_{2,\text{opt}}\big)\quad \text{when } p\text{ or }q\succsim n, \text{ and }U(k^{(1),U}_{1,\text{opt}},k^{(1),U}_{2,\text{opt}}\mid p,q,\eta)\precsim r_2(k^{(2)}_{1,\text{opt}},k^{(2)}_{2,\text{opt}}\mid p,q,\eta).
\end{cases}
\ee
Here $L(k^{(1),L}_{1,\text{opt}},k^{(1),L}_{2,\text{opt}}\mid p,q,\eta)$, $k^{(1),L}_{1,\text{opt}},k^{(1),L}_{2,\text{opt}}$ are defined in \eqref{optL}-\eqref{optLk1k2}, $r_2(k^{(2)}_{1,\text{opt}},k^{(2)}_{2,\text{opt}}\mid p,q,\eta)$, $k^{(2)}_{1,\text{opt}},k^{(2)}_{2,\text{opt}}$ are defined in \eqref{r2:opt}-\eqref{r2:kopt}, and $U(k^{(1),U}_{1,\text{opt}},k^{(1),U}_{2,\text{opt}}\mid p,q,\eta)$, $k^{(1),U}_{1,\text{opt}}, k^{(1),U}_{2,\text{opt}}$ are defined in \eqref{Uopt}-\eqref{Uoptk}.
\begin{remark}\label{rm:mhdr}
Consider the case when $\M\Sigma_1^*$ and $\M\Sigma_2^*$ are in the matrix class $\mathcal{M}(\varepsilon_0,\alpha)$, and therefore $\tilde\alpha_1 = 2\alpha_1 + 1$ and $\tilde\alpha_2 = 2\alpha_2 + 1$. Based on the derived results, we derive regimes of $p,q$ under which the convergence rate satisfies
\bee\label{r1low}
r_1(k_{1,\text{opt}}^{(1)},k_{2,\text{opt}}^{(1)}\mid p,q,\eta)&\asymp (qn)^{\frac{1}{\tilde\alpha_1 + 1} - 1}\wedge \frac{p}{qn} + (pn)^{\frac{1}{\tilde\alpha_2 + 1} - 1}\wedge \frac{q}{pn}
\\
&=(qn)^{\frac{1}{2\alpha_1 + 2} - 1}\wedge \frac{p}{qn} + (pn)^{\frac{1}{2\alpha_2 + 2} - 1}\wedge \frac{q}{pn},
\ee 
when $p,q$ are not under the degenerate regime, i.e., $\min\{p,q\}\rightarrow +\infty$ when $n\rightarrow +\infty$. Based on the lower bound in Theorem \ref{T:low}, when \eqref{r1low} is satisfied, the rate of our proposed estimator is minimax rate-optimal. In addition, we derive the corresponding optimal $k_{1,\text{opt}}^{(1)},k_{2,\text{opt}}^{(1)}$.
\par
We only consider $\max\{p,q\}\precsim n$, and note that we can not find a non-degenerate while minimax-optimal regimes, when $\max\{p,q\}\succ n$. In \eqref{r1:opt} we have shown,
\bee\nonumber
r_1(k_{1,\text{opt}}^{(1)},k_{2,\text{opt}}^{(1)}\mid p,q,\eta) &\asymp L(k^{(1),L}_{1,\text{opt}},k^{(1),L}_{2,\text{opt}}\mid p,q,\eta)
\\
&=\max\Big\{L_1\big(k_{1,\text{opt}}^{(1),L}\mid p,q,\eta\big), L_2\big(k_{2,\text{opt}}^{(2),L}\mid p,q,\eta\big)\Big\}.
\ee
\par
When $p\precsim \sqrt{n}$, by \eqref{optL12}, we have $L_1\big(k_{1,\text{opt}}^{(1),L}\mid p,q,\eta\big) \asymp (qn)^{\frac{1}{2\alpha_1 + 2} - 1}\wedge \frac{p}{qn}$.
\par
 When both $\sqrt{n}\prec p\precsim n$ and $p\cdot(qn)^{\frac{1}{2\alpha_1 + 2}}\precsim n$ hold, by \eqref{optL12}, we have $L_1\big(k_{1,\text{opt}}^{(1),L}\mid p,q,\eta\big)\asymp \frac{p}{qn}$. Note that  $p\cdot(qn)^{\frac{1}{2\alpha_1 + 2}}\precsim n$ implies $(qn)^{\frac{1}{2\alpha_1 + 2}-1}\precsim \frac{1}{pq}$, and $\sqrt{n}\prec p\precsim n$ implies $n/p \prec \sqrt{n}\prec p$, we have
\bee\nonumber
(qn)^{\frac{1}{2\alpha_1 + 2}-1}\precsim \frac{1}{pq}\prec \frac{1}{(n/p)q} = \frac{p}{qn},
\ee
when both $\sqrt{n}\prec p\precsim n$ and $p\cdot(qn)^{\frac{1}{2\alpha_1 + 2}}\precsim n$ hold. Therefore $L_1\big(k_{1,\text{opt}}^{(1),L}\mid p,q,\eta\big)\asymp \frac{p}{qn} \asymp (qn)^{\frac{1}{2\alpha_1 + 2} - 1}\wedge \frac{p}{qn}$ when both $\sqrt{n}\prec p\precsim n$ and $p\cdot(qn)^{\frac{1}{2 \alpha_1 + 2}}\precsim n$ hold. 
\par
Since $p\cdot(qn)^{\frac{1}{2 \alpha_1 + 2}}\precsim n$ can be rewritten as $p\precsim n\cdot (qn)^{-\frac{1}{2\alpha_1 + 2}}$. Based on the above discussions, $L_1\big(k_{1,\text{opt}}^{(1),L}\mid p,q,\eta\big)\asymp  (qn)^{\frac{1}{2\alpha_1 + 2} - 1}\wedge \frac{p}{qn}$ when $1\precsim p\precsim\max\big\{n\cdot (qn)^{-\frac{1}{2\alpha_1 + 2}},\sqrt{n}\big\}$. Similarly, we also have $L_2\big(k_{2,\text{opt}}^{(1),L}\mid p,q,\eta\big)\asymp  (pn)^{\frac{1}{2\alpha_2 + 2} - 1}\wedge \frac{q}{pn}$ when $1\precsim q\precsim\max\big\{n\cdot (pn)^{-\frac{1}{2\alpha_2 + 2}},\sqrt{n}\big\}$. Finally we obtain
$$
r_1(k_{1,\text{opt}}^{(1)},k_{2,\text{opt}}^{(1)}\mid p,q,\eta) \asymp (qn)^{\frac{1}{2\alpha_1 + 2} - 1}\wedge \frac{p}{qn} + (pn)^{\frac{1}{2\alpha_2 + 2} - 1}\wedge \frac{q}{pn},
$$ and therefore our proposed estimator is minimax rate-optimal when 
\bee\label{minimaxregion}
\text{(i) }&1\precsim p\precsim\max\big\{n\cdot (qn)^{-\frac{1}{2\alpha_1 + 2}},\sqrt{n}\big\}
\\
\text{(ii) }&1\precsim q\precsim\max\big\{n\cdot (pn)^{-\frac{1}{2\alpha_2 + 2}},\sqrt{n}\big\},
\ee
hold simultaneously. 
\par
Since $p\cdot(qn)^{\frac{1}{2 \alpha_1 + 2}}\precsim n$ and $q\cdot(pn)^{\frac{1}{2 \alpha_2 + 2}}\precsim n$ always hold under \eqref{minimaxregion}, based on the optimal $k_1,k_2$ selection rule \eqref{optLk1k2} for $L(k_1,k_2\mid p,q,\eta)$, the corresponding optimal $k_1,k_2$ under \eqref{minimaxregion} are
\bee\nonumber
&k_{1,\text{opt}}^{} = 
\begin{cases}
\tilde{p} & \text{when }p\precsim \sqrt{n}\text{ and } p\prec (nq)^{\frac{1}{2\alpha_1 + 2}} 
\\
(nq)^{\frac{1}{2 \alpha_1 + 2}} & \text{when }p\precsim \sqrt{n}\text{ and } (nq)^{\frac{1}{2\alpha_1 + 2}}\precsim p, \text{ or }\sqrt{n}\prec p,
\end{cases}
\\
&k_{2,\text{opt}}^{} = 
\begin{cases}
\tilde{q} & \text{when }q\precsim \sqrt{n}\text{ and } q\prec (np)^{\frac{1}{2{\alpha}_2 + 2}} 
\\
(np)^{\frac{1}{2 \alpha_2 + 2}} & \text{when }q\precsim \sqrt{n}\text{ and } (np)^{\frac{1}{2\alpha_2 + 2}}\precsim q, \text{ or }\sqrt{n}\prec q.
\end{cases}
\ee
Note that when $p\asymp (nq)^{\frac{1}{2\alpha_1 + 2}}$, we have $\tilde p \asymp (nq)^{\frac{1}{2\alpha_1 + 2}}$. Consequently, when $p\precsim \sqrt{n}$ and $p\asymp (nq)^{\frac{1}{2\alpha_1 + 2}}$, the two phases of $k_{1,\text{opt}}$ above actually lead to the same rate. Therefore, we can rewrite the selection of $k_{1,\text{opt}}$ as 
$$
k_{1,\text{opt}}^{} = 
\begin{cases}
\tilde{p} & \text{when }p\precsim \sqrt{n}\text{ and } p\precsim (nq)^{\frac{1}{2{\alpha}_1 + 2}} 
\\
(nq)^{\frac{1}{2 \alpha_1 + 2}} & \text{when }p\precsim \sqrt{n}\text{ and } (nq)^{\frac{1}{2\alpha_1 + 2}}\prec p, \text{ or }\sqrt{n}\prec p.
\end{cases}
$$Then we can simplify the above conditions of $k_{1,\text{opt}}$ to
\bee\nonumber
k_{1,\text{opt}}^{} = 
\begin{cases}\tilde{p} & \text{when }p\precsim \sqrt{n}\wedge (nq)^{\frac{1}{2\alpha_1 + 2}}
\\
(nq)^{\frac{1}{2 \alpha_1 + 2}} & \text{when }p\succ \sqrt{n}\wedge (nq)^{\frac{1}{2\alpha_1 + 2}}.
\end{cases}
\ee
And similarly for $k_{2,\text{opt}}$,
\bee
k_{2,\text{opt}}^{} = 
\begin{cases}\tilde{q} & \text{when }q\precsim \sqrt{n}\wedge (np)^{\frac{1}{2\alpha_2 + 2}}
\\
(np)^{\frac{1}{2 \alpha_2 + 2}} & \text{when }q\succ \sqrt{n}\wedge (np)^{\frac{1}{2\alpha_2 + 2}}.
\end{cases}
\ee
\end{remark} 

\subsection{Discussion of Theorem \ref{T3}}
The interpretation of error bounds in Theorem \ref{T3} is very similar to that of Theorem \ref{T2}. Comparing the rates in these two theorems, the only difference is that, under Scenario (ii), we cannot bound the error term $E_1$ of the target error through \eqref{varterm:2}, because the sub-Gaussian tail assumption is not satisfied under Scenario (ii). Other error terms can be decomposed and interpreted in the same way as the ones in Section \ref{sec:discu:thm1}.
\begin{remark}\label{rk:T3dg}
Since $\frac{k_1}{qn} + \frac{k_2}{pn}\precsim \frac{\max\{k_1,k_2\}}{n}\precsim \frac{k_1k_2}{n}$, the rate \eqref{T2:res1} in Theorem \ref{T2} can be rewritten as
\bee\nonumber
\Big(\frac{k_1k_2}{n}\Big)\wedge r(k_1,k_2,p,q) +  \M I_{\eta,p}(k_1)\cdot k_1^{-\tilde{\alpha}_1} +  \M I_{\eta,q}(k_2)\cdot k_2^{-\tilde{\alpha}_2},
\ee
where $r(k_1,k_2,p,q) = 
\begin{cases}
\frac{k_1}{qn} + \frac{k_2}{pn}, & pk_1 + qk_2 \precsim n
\\
\frac{pk^2_1}{qn^2} + \frac{qk^2_2}{pn^2}, & pk_1 + qk_2 \succsim n
\end{cases}$. It is easy to see $\frac{k_1k_2}{n} \succsim  \big(\frac{k_1k_2}{n}\big)\wedge r(k_1,k_2,p,q)$ for any regimes of $p,q,k_1,k_2$. Therefore, the rate \eqref{T2:res1} of Theorem \ref{T2} is always faster or equal to the rate \eqref{T3:res1} of Theorem \ref{T3}. This is reasonable because Scenario (i) is more restrictive than Scenario (ii).
\end{remark}

 \subsection{Optimal Bandwidth Selection in Theorem \ref{T3}}\label{optbd:T3}
Based on different divergence regimes of $(p,q)$, we aim to find optimal selection of $k_1,k_2$ to minimize the following convergence rate:
\bee\label{T2:target3}
r_{2}(k_1,k_2\mid p,q,\eta)  = \frac{k_1k_2}{n}+\M I_{\eta,p}(k_1)\cdot k_1^{-\tilde{\alpha}_1} +  \M I_{\eta,q}(k_2)\cdot k_2^{-\tilde{\alpha}_2}.
\ee
Here $\tilde{\alpha}_a$
 is equal to \eqref{t1:sup:alpha}, for $a\in\{1,2\}$ and $\eta\in\big\{\mathcal{B},\mathcal{T}\big\}$. Given a divergence regime of $p,q$ and $\eta = \mathcal{B}\text{ or }\mathcal{T}$, we also define $k_{1,\text{opt}}^{(2)}, k_{2,\text{opt}}^{(2)}$ as the corresponding optimal selection of $k_1,k_2$ that give $r_2(k_1,k_2\mid p,q,\eta)$ the optimal convergence rate.

For simplicity, we define 
\bee\nonumber
\tilde{d} = \begin{cases}
d - 1, & \eta = \mathcal{B};
\\
2d - 1, & \eta = \mathcal{T},
\end{cases}
\ee
for $d\in\{p,q\}$. It is easy to see $ \M I_{\eta,d}(k) = \M I(k<\tilde{d})$ for any $d\in\{p,q\}$ and $k>0$.
\\
\par
\noindent{\textbf{Scenario 1 (both $p,q$ diverge fast):}} When $p,q$ are both {\color{black} diverging sufficiently fast}, such that 
\bee\label{T1:cd1:1}
p \succ {n}^{\tilde\alpha_2/(\tilde\alpha_1+\tilde\alpha_2+\tilde\alpha_1\tilde\alpha_2)},~\text{and}~ q\succ {n}^{\tilde\alpha_1/(\tilde\alpha_1+\tilde\alpha_2+\tilde\alpha_1\tilde\alpha_2)},
\ee
we claim that the optimal  rate of \eqref{T2:target3}'s right-hand side is attained when $k_1 <\tilde{p}, k_2 <\tilde{q}$ and both 
\bee\label{T1:cd1:2}
&\frac{k_1k_2}{n}  \asymp k_{1}^{-\tilde\alpha_1};\quad \frac{k_1k_2}{n} \asymp k_2^{-\tilde\alpha_2},
\ee
hold. 
\par
To see this, first, one can easily check that $\frac{\tilde p\tilde q}{n}\asymp \frac{pq}{n} \succ \max\{ p^{-\tilde\alpha_1},q^{-\tilde\alpha_2}\}$ under \eqref{T1:cd1:1} by simple algebra. Thus, compared to the case when $k_1=\tilde p , k_2 =\tilde q $, the right-hand side of \eqref{T2:target3} decays faster when $k_1\prec p, k_2 \prec q$. Second, when $k_1\prec p, k_2 \prec q$, the right-hand side of \eqref{T2:target3} becomes 
\bee\label{optk1k2:target:1}
\frac{k_1k_2}{n}+\M I_{\eta,p}(k_1)\cdot k_1^{-\tilde{\alpha}_1} +  \M I_{\eta,q}(k_2)\cdot k_2^{-\tilde{\alpha}_2} &\asymp \frac{k_1k_2}{n}+ k_1^{-\tilde{\alpha}_1} +  k_2^{-\tilde{\alpha}_2}
\\
&\asymp\max\Big\{\frac{k_1k_2}{n},k_{1}^{-\tilde\alpha_1},k_2^{-\tilde\alpha_2}\Big\}.
\ee 
 And the above rate is minimized, as long as the three terms on the right-hand side diverge at the same asymptotic order, i.e., \eqref{T1:cd1:2} is satisfied. By simple algebra, \eqref{T1:cd1:2} is equivalent to \bee\label{T1:cd1:3}
k_{1,\text{opt}}^{(2)} \asymp {n}^{\tilde\alpha_2/(\tilde\alpha_1+\tilde\alpha_2+\tilde\alpha_1\tilde\alpha_2)},k_{2,\text{opt}}^{(2)} \asymp {n}^{\tilde\alpha_1/(\tilde\alpha_1+\tilde\alpha_2+\tilde\alpha_1\tilde\alpha_2)},
\ee
which is attainable for $k_1, k_2$ under the condition of regime 1, i.e., $p \succ {n}^{\tilde\alpha_2/(\tilde\alpha_1+\tilde\alpha_2+\tilde\alpha_1\tilde\alpha_2)},~\text{and}~ q\succ {n}^{\tilde\alpha_1/(\tilde\alpha_1+\tilde\alpha_2+\tilde\alpha_1\tilde\alpha_2)}$. The corresponding convergence rate is $r_{2}(k_{1,\text{opt}}^{(2)},k_{2,\text{opt}}^{(2)}\mid p,q,\eta) \asymp {n}^{-\frac{\tilde\alpha_1\tilde\alpha_2}{\tilde\alpha_1 + \tilde\alpha_2 + \tilde\alpha_1\tilde\alpha_2}}$.
\\
\par
\noindent{\textbf{Scenario 2 (one of $p,q$ diverges slowly):} }By symmetry of $p, q$, we only consider when $p$ can not diverge as fast as ${n}^{\tilde\alpha_2/(\tilde\alpha_1+\tilde\alpha_2+\tilde\alpha_1\tilde\alpha_2)}$ while $q$ can diverge faster than $(n/p)^{\frac{1}{\tilde{\alpha}_2 + 1}}$. In other words, 
\bee\label{T1:cd2:con}
 p \precsim {n}^{\tilde\alpha_2/(\tilde\alpha_1+\tilde\alpha_2+\tilde\alpha_1\tilde\alpha_2)}, q\succ \big({n}/p\big)^{\frac{1}{\tilde\alpha_2 + 1}}.
\ee
Note that under \eqref{T1:cd2:con}, we can further {\color{black} obtain} $q\succ \big({n}/p\big)^{\frac{1}{\tilde\alpha_2 + 1}} \succsim {n}^{\tilde\alpha_1/(\tilde\alpha_1+\tilde\alpha_2+\tilde\alpha_1\tilde\alpha_2)}$.
\par
If $k_1 < \tilde p$, we note the $k_1^{-\tilde\alpha_1}$ term always exists on the right-hand side of \eqref{T2:target3}. The target rate we optimize under this scenario is 
\bee\label{k1k2pick:s2:target}
r_{2}(k_1,k_2\mid p,q,\eta) \asymp\frac{k_1k_2}{n}+k_1^{-\tilde{\alpha}_1} +  \M I_{\eta,q}(k_2)\cdot k_2^{-\tilde{\alpha}_2}.
\ee
Since $p \precsim {n}^{\tilde\alpha_2/(\tilde\alpha_1+\tilde\alpha_2+\tilde\alpha_1\tilde\alpha_2)}$, we have $k_1^{^{-\tilde\alpha_1}} \succsim p^{^{-\tilde\alpha_1}} \succsim \big({n}^{\tilde\alpha_2/(\tilde\alpha_1+\tilde\alpha_2+\tilde\alpha_1\tilde\alpha_2)}\big)^{-\tilde{\alpha}_1}\asymp {n}^{-\tilde\alpha_1\tilde\alpha_2/(\tilde\alpha_1 + \tilde\alpha_2 + \tilde\alpha_1\tilde\alpha_2)}$. So by \eqref{k1k2pick:s2:target}, the optimal rate we can attain when $k_1 < \tilde p$ is no better than $k_1^{-\tilde{\alpha}_1}$ while $k_1^{-\tilde{\alpha}_1}\succsim {n}^{-\tilde\alpha_1\tilde\alpha_2/(\tilde\alpha_1 + \tilde\alpha_2 + \tilde\alpha_1\tilde\alpha_2)}$. 
\par
On the other hand, if $k_1 = \tilde p$, the right-hand side of \eqref{T2:target3} becomes $pk_2\cdot\frac{1}{n}  + k_2^{-\tilde\alpha_2} \M I(k_2 < \tilde q)$. By simple algebra, one can show that when $q \asymp \tilde q$ satisfies \eqref{T1:cd2:con}, the optimal rate is attained when $pk_2\cdot\frac{1}{n} \asymp k_2^{-\tilde\alpha_2}$, i.e., $k_2 \asymp (n/p)^{\frac{1}{\tilde\alpha_2 + 1}} \prec q$. And then
\bee\label{T1:cd2:2}
r_{2}(k_1,k_2\mid p,q,\eta)\asymp (n/p)^{-\frac{\tilde\alpha_2}{\tilde\alpha_2 + 1}}.
\ee 
Moreover, since $p \precsim n^{\tilde\alpha_2/(\tilde\alpha_1+\tilde\alpha_2+\tilde\alpha_1\tilde\alpha_2)}$, we can show 
\bee\nonumber
(n/p)^{-\frac{\tilde\alpha_2}{\tilde\alpha_2 + 1}}\precsim n^{-\tilde\alpha_1\tilde\alpha_2/(\tilde\alpha_1 + \tilde\alpha_2 + \tilde\alpha_1\tilde\alpha_2)},
\ee
which implies the rate we get when setting $k_1 = \tilde p$ is no worse than the rate when $k_1 < \tilde p$. Thus the optimal rate is attained by \eqref{T1:cd2:2} under Scenario 2, after choosing $k_{1,\text{opt}}^{(2)} = \tilde p$ and $ k_{2,\text{opt}}^{(2)} \asymp (n/p)^{\frac{1}{\tilde{\alpha}_2 + 1}}$.
\\
\par
\noindent{\textbf{Scenario 3 (both $p,q$ diverge slowly):}} By symmetry of $p, q$, we only consider when $p$ can not diverge faster than ${n}^{\tilde\alpha_2/(\tilde\alpha_1+\tilde\alpha_2+\tilde\alpha_1\tilde\alpha_2)}$ and at the same time $q$ can not diverge faster than $(n/p)^{\frac{1}{\tilde{\alpha}_2 + 1}}$.  In other words, 
\bee \label{T1:cd3:con}
p \precsim n^{\tilde\alpha_2/(\tilde\alpha_1+\tilde\alpha_2+\tilde\alpha_1\tilde\alpha_2)}, q\precsim \big(n/p\big)^{\frac{1}{\tilde\alpha_2 + 1}}
\ee
\par
If $k_1 = \tilde{p}$, similar to  Scenario 2, we want to optimize {\color{black}the following term with respect to $k_2$}:
\bee\label{T1:cd3:con:2}
r_{2}(k_1,k_2\mid p,q,\eta)\asymp pk_2\cdot\frac{1}{n}  + k_2^{-\tilde\alpha_2} \M I(k_2 < \tilde q).
\ee However, by \eqref{T1:cd3:con}, we can easily show that the term $k_2^{-\tilde\alpha_2}$ always dominates the terms in \eqref{T1:cd3:con:2} when $k_2 \precsim q$ and $k_2 < \tilde q$. Thus we can take $k_2 = \tilde q $ in \eqref{T1:cd3:con:2}, and the term $k_2^{-\tilde\alpha_2} \M I(k_2 < \tilde q)$ becomes zero. With $k_{1,\text{opt}}^{(2)} = \tilde{p}, k_{2,\text{opt}}^{(2)} = \tilde{q}$, we get the optimal convergence rate of \eqref{T1:cd3:con:2}: 
\bee\nonumber
r_{2}(k_1,k_2\mid p,q,\eta) \asymp \frac{pq}{n},
\ee
which is also faster than $n^{-\tilde\alpha_1\tilde\alpha_2/(\tilde\alpha_1 + \tilde\alpha_2 + \tilde\alpha_1\tilde\alpha_2)}$ and is the optimal rate we can get under Scenario 3. 
\par
\
\par
\noindent{\textbf{Summary:}} Under different divergence regimes of $p,q$, we summarize the optimal rate of $r_2(k_1,k_2\mid p,q,n) = r_2\big(k_{1,\text{opt}}^{(2)},k_{2,\text{opt}}^{(2)}\mid p,q,n\big)$ and corresponding $k_{1,\text{opt}}^{(2)}, k_{2,\text{opt}}^{(2)}$ as follows:
\bee\label{r2:opt}
r_2\big(k_{1,\text{opt}}^{(2)},k_{2,\text{opt}}^{(2)}\mid p,q,n\big) \asymp 
\begin{cases}
{n}^{-\frac{\tilde\alpha_1\tilde\alpha_2}{\tilde\alpha_1 + \tilde\alpha_2 + \tilde\alpha_1\tilde\alpha_2}} &\text{when } p \succ {n}^{\tilde\alpha_2/(\tilde\alpha_1+\tilde\alpha_2+\tilde\alpha_1\tilde\alpha_2)}, q\succ {n}^{\tilde\alpha_1/(\tilde\alpha_1+\tilde\alpha_2+\tilde\alpha_1\tilde\alpha_2)}
\\
(n/p)^{-\frac{\tilde\alpha_2}{\tilde\alpha_2 + 1}} & \text{when }  p \precsim {n}^{\tilde\alpha_2/(\tilde\alpha_1+\tilde\alpha_2+\tilde\alpha_1\tilde\alpha_2)}, q\succ \big({n}/p\big)^{\frac{1}{\tilde\alpha_2 + 1}}
\\
(n/q)^{-\frac{\tilde\alpha_1}{\tilde\alpha_1 + 1}} & \text{when }  q \precsim {n}^{\tilde\alpha_1/(\tilde\alpha_1+\tilde\alpha_2+\tilde\alpha_1\tilde\alpha_2)}, p\succ \big({n}/q\big)^{\frac{1}{\tilde\alpha_1 + 1}}
\\
\frac{pq}{n} & \text{when }
\begin{cases}
p \precsim {n}^{\tilde\alpha_2/(\tilde\alpha_1+\tilde\alpha_2+\tilde\alpha_1\tilde\alpha_2)}, q\precsim \big({n}/p\big)^{\frac{1}{\tilde\alpha_2 + 1}}
\\
q \precsim {n}^{\tilde\alpha_1/(\tilde\alpha_1+\tilde\alpha_2+\tilde\alpha_1\tilde\alpha_2)}, p\precsim \big({n}/q\big)^{\frac{1}{\tilde\alpha_1 + 1}}
\end{cases},
\end{cases}
\ee
with,
\bee\label{r2:kopt}
\small{\big(k_{1,\text{opt}}^{(2)},k_{2,\text{opt}}^{(2)}\big) =
\begin{cases}
\big({n}^{\tilde\alpha_2/(\tilde\alpha_1+\tilde\alpha_2+\tilde\alpha_1\tilde\alpha_2)},{n}^{\tilde\alpha_1/(\tilde\alpha_1+\tilde\alpha_2+\tilde\alpha_1\tilde\alpha_2)}\big) & \text{when }p \succ {n}^{\tilde\alpha_2/(\tilde\alpha_1+\tilde\alpha_2+\tilde\alpha_1\tilde\alpha_2)}, q\succ {n}^{\tilde\alpha_1/(\tilde\alpha_1+\tilde\alpha_2+\tilde\alpha_1\tilde\alpha_2)}
\\
\big(\tilde p, \{n/p\}^{\frac{1}{\tilde{\alpha}_2 + 1}}\big)& \text{when }  p \precsim {n}^{\tilde\alpha_2/(\tilde\alpha_1+\tilde\alpha_2+\tilde\alpha_1\tilde\alpha_2)}, q\succ \big({n}/p\big)^{\frac{1}{\tilde\alpha_2 + 1}}
\\
\big(\tilde q, \{n/q\}^{\frac{1}{\tilde{\alpha}_1 + 1}}\big)& \text{when }  q \precsim {n}^{\tilde\alpha_1/(\tilde\alpha_1+\tilde\alpha_2+\tilde\alpha_1\tilde\alpha_2)}, p\succ \big({n}/p\big)^{\frac{1}{\tilde\alpha_1 + 1}}
\\
\big(\tilde p,\tilde q\big) & \text{when }\begin{cases}
p \precsim {n}^{\tilde\alpha_2/(\tilde\alpha_1+\tilde\alpha_2+\tilde\alpha_1\tilde\alpha_2)}, q\precsim \big({n}/p\big)^{\frac{1}{\tilde\alpha_2 + 1}}
\\
q \precsim {n}^{\tilde\alpha_1/(\tilde\alpha_1+\tilde\alpha_2+\tilde\alpha_1\tilde\alpha_2)}, p\precsim \big({n}/q\big)^{\frac{1}{\tilde\alpha_1 + 1}}
\end{cases}.
\end{cases}}
\ee

\subsection{Minimax Optimal Regime}\label{supp:minimax1}
We discuss the minimax optimal regime for the error rates in Section \ref{sec:T}, where the lower bound in Theorem \ref{T:low} and upper bound in Theorems \ref{T2} and \ref{T3} (all in main paper) match with each other. We consider two regimes for $p,q$: the degenerate regime and the moderate high-dimensional regime (precise definition to be given later). We will show that under Scenario (i) defined in Section \ref{Sec_MR} of the main paper, our proposed estimator is rate-optimal under both regimes, and under Scenario (ii) defined in Section \ref{Sec_MR} of the main paper, our proposed estimator is rate-optimal under the degenerate regime. 
\subsubsection{Degenerate Regime}\label{sec:dr}
As mentioned in Remark \ref{rm:dg}, we refer the case that $p$ or $q$ equals 1, as the degenerate regime and show that our estimators are the same as \citet{Bickel2008threshold}'s banded estimator and \citet{Cai2010}'s tapering estimator for vector-valued data under the degenerate regime. \citet{Cai2010} showed that banded/tapering covariance estimates for sub-Gaussian vector data are rate-optimal under the Frobenius norm when true covariance belongs to  $\mathcal{M}(\varepsilon_0,\alpha)$. Also, we note \citet{Cai2010}'s theorem and proof can directly adapt to vector data with finite fourth moment condition \eqref{am:moment}.  Therefore, our proposed estimators are rate-optimal under the degenerate regime for both Scenarios (i) and (ii), when $\M \Sigma_1^*\in \mathcal{M}(\varepsilon_0,\alpha_1), \M \Sigma_2^*\in \mathcal{M}(\varepsilon_0,\alpha_2)$. 
\par
The optimality can also be shown based on our main theorems. Let $q = 1$, $\M \Sigma_1^*\in \mathcal{M}(\varepsilon_0,\alpha_1), \M \Sigma_2^*\in \mathcal{M}(\varepsilon_0,\alpha_2)$ and take $k_2 = q = 1$. By simple algebra, the convergence rates of Theorems \ref{T2} and \ref{T3} become
$
\frac{k_1}{n} + \M I_{\eta,p}(k_1)\cdot k_1^{-2\alpha_1 - 1}.
$
The lower bound of Theorem \ref{T:low} becomes
$
\min\{\frac{p}{n}, n^{\frac{1}{2\alpha_1 + 2} - 1}\}
$, which aligns with \citet{Cai2010}'s lower bound for vector-valued data.
We can choose $k_1 = 2p$ when $p/n\precsim n^{\frac{1}{2\alpha_1 + 2} - 1}$, and $k_1 \asymp n^{1/(2\alpha_1 + 2)}$ otherwise. Then the upper and lower bounds  are matched. Thus the proposed banded/tapering estimators are rate-optimal. The same optimality result also holds symmetrically when $p = 1$ and $q$ diverges. In addition, $q = 1$ or $p = 1$ can be generalized to $p\leq C_{\text{D}}$ or $q\leq C_{\text{D}}$ for some fixed constant $C_{\text{D}} > 0$. And in this case, the proposed estimators are still rate-optimal.
\subsubsection{Moderate High-dimensional Regime}\label{sec:mhr}
We consider the following regime for $p,q$, 
\bee\label{mhdr}
\text{(i) }&1\precsim p\precsim\max\big\{n\cdot (qn)^{-\frac{1}{2\alpha_1 + 2}},\sqrt{n}\big\};
\\
\text{(ii) }&1\precsim q\precsim\max\big\{n\cdot (pn)^{-\frac{1}{2\alpha_2 + 2}},\sqrt{n}\big\}.
\ee
Under the condition $\M \Sigma_1^*\in \mathcal{M}(\varepsilon_0,\alpha_1), \M \Sigma_2^*\in \mathcal{M}(\varepsilon_0,\alpha_2)$, the $\tilde{\alpha}_1,\tilde{\alpha}_2$ in Theorem \ref{T2} become $2\alpha_1 + 1, 2\alpha_2 + 1$. Then Remark \ref{rm:mhdr} shows that under this regime, the convergence rate of Theorem \ref{T2} becomes \eqref{Tlow:res} with  optimal selection of $k_1,k_2$, i.e. matches the lower bound of Theorem \ref{T:low}.

So our proposed estimator is rate-optimal under \eqref{mhdr}, when $\M\Sigma_1^*,\M\Sigma_2^*$ are in $\mathcal{M}(\varepsilon_0,\alpha)$. It is easy to see $p\asymp\sqrt{n}$ and $q\asymp \sqrt{n}$ satisfy \eqref{mhdr}. Since $p\asymp\sqrt{n}$ and $q\asymp \sqrt{n}$ imply the whole dimension $pq\asymp n$, we call $p,q$'s divergence regime \eqref{mhdr}, the \textit{moderate high-dimensional regime}.

\subsection{Parameter Space Complexity}\label{sec:complexeffect}
We provide further details on how the complexity parameters such as the matrix size, bandable levels can affect our derived theoretical rates. With a separable structure, we can divide the whole parameter space of $\M\Sigma^* = \M\Sigma_2^*\otimes \M\Sigma_1^*$ into two subspaces with parameters residing in $\M\Sigma_1^*$ and $\M\Sigma_2^*$, respectively. It is easy to see the dimensions of subspace  $\M\Sigma_1^*$ and $\M\Sigma_2^*$ are $\Theta(p^2)$ and $\Theta(q^2)$, and thus the whole dimension is $\Theta(p^2 + q^2)$. The dimension of parameter space, together with bandable levels $\alpha_1,\alpha_2$ of two parameter subspaces, can be regarded as complexity measures of the whole parameter space $\M\Sigma^*$. With a more bandable parameter structure (larger $\alpha_1,\alpha_2$), more off-diagonal elements in $\M\Sigma_1^*$ and $\M\Sigma_2^*$ become negligible and thus  $\M\Sigma^*$ becomes less complex. 
\par
With lower complexity of the parameter space, the optimal $k_1$ and $ k_2$ selection changes, and $\M\Sigma^*$ can be estimated more accurately. In this subsection, we study the effects of the  parameter space complexity, on (i) the error rates of our newly-derived upper and lower bounds, (ii) the optimal $k_1,k_2$ selection, and (iii) the minimax rate-optimal regions of $(p,q)$. 
\subsubsection{Effect on Error Rates}\label{sec:effect:rate}
 First, we consider the effects of bandable levels on the upper and lower bounds of the error rates. For any given divergence regimes of $p,q,k_1,k_2$, when the bandable levels $\alpha_1,\alpha_2$ become larger, the term $\M I_{\eta,p}(k_1)\cdot k_1^{-\tilde{\alpha}_1}+ \M I_{\eta,q}(k_2)\cdot k_2^{-\tilde{\alpha}_2}$ in \eqref{T2:res1} and \eqref{T3:res1}, and the terms $(nq)^{\frac{1}{2\alpha_1 + 2} - 1}$ and $(np)^{\frac{1}{2\alpha_2 + 2} - 1}$ in \eqref{Tlow:res}, converge no slower than the original ones, while other terms in the corresponding upper and lower bounds do not change. Therefore, with larger bandable levels, the error rates generally decay faster than or equal the original rate.
\par
To exhibit the effect of parameter space dimension, consider the scenario where the divergence rate of $pq$ is fixed and $\alpha_1 = \alpha_2$ (equal bandable levels of two subspaces).  In this case, the parameter dimension $\Theta(p^2 + q^2)$  is minimized when $p\asymp q$, which intuitively leads to the minimal complexity and the smallest error rate of our proposed estimators. Theoretically, one can see this via a simple example. When the rate of $pq$ is fixed with $pq\precsim \min \{n^{1/(2\alpha_1 + 2)}, n^{1/(2\alpha_2 + 2)}\}$, $p$ and $q$ are always under moderate high-dimensional regime since $p \vee q\precsim pq\precsim \sqrt{n}$. Then the error rate of our proposed estimators is minimax-optimal, and is equal to ${p}/{qn} + {q}/{pn}$ by \eqref{Tlow:res}. It is easy to  see that ${p}/{qn} + {q}/{pn}$ has the minimal rate $1/n$ when $p\asymp q$. Similarly, when $\alpha_1 \approx \alpha_2$ but $\alpha_1\neq \alpha_2$, the minimal error rate is attained when $p,q$ are approximately equal.
\par
In general, with lower parameter space complexity, the error rates of our proposed estimator become smaller and therefore $\M\Sigma^*$ can be estimated more accurately.
\subsubsection{Effect on Optimal $k_1,k_2$ Selection}\label{sec:effect:k1k2}
In this subsection, we discuss the optimal $k_1,k_2$ selection. To make the discussion more meaningful, we only focus on the regimes of $p,q$ that our proposed estimator is minimax rate-optimal. In particular, we focus on the Scenario (i) with $\M\Sigma_1^*,\M\Sigma_2^*$ in the $\mathcal{M}(\varepsilon_0,\alpha)$ class. As shown in Section \ref{sec:compare}, we consider the following two rate-optimal regimes.
\begin{itemize}
\item \textbf{Degenerate regime}
When $q = 1$, \citet{Cai2010} has shown that the optimal $k_1$ is $k_{1,\text{opt}} 
= \begin{cases}
\tilde{p} & \text{when }p\precsim n^{1/(2\alpha_1 + 2)}
\\
n^{1/(2\alpha_1 + 2)} & \text{when }p\succ n^{1/(2\alpha_1 + 2)}
\end{cases}$, where $\tilde{p} = \begin{cases}
p - 1 &\eta = \mathcal{B}
\\
2p - 1 & \eta = \mathcal{T}
\end{cases}$ is the minimal thresholding level that makes the proposed estimator not banded along the $p$ direction.  
 When $ p=1$, we can observe similar phenomenon.
\item \textbf{Moderate high-dimensional regime.}
When $p,q$ satisfy the moderate high-dimensional condition \eqref{mhdr}, Remark \ref{rm:mhdr} shows that the optimal $k_1$ is,
\bee\label{summarize:kopt}
k_{1,\text{opt}}^{} = 
\begin{cases}
\tilde{p} & \text{when }p\precsim \sqrt{n}\wedge (nq)^{1/(2\alpha_1 + 2)}
\\
(nq)^{1/(2 \alpha_1 + 2)} & \text{when }p\succ \sqrt{n}\wedge (nq)^{1/(2\alpha_1 + 2)}
\end{cases}.
\ee 
\end{itemize}
By simple algebra, the above optimal choice for $k_1$ under both regimes in fact can be summarized as \eqref{summarize:kopt},
when $p,q$ are under either of the two rate-optimal regimes.
\par
Therefore, $\sqrt{n}\wedge (nq)^{1/({2\alpha_1 + 2})}$ is the phrase transition threshold such that only when $p\succ \sqrt{n}\wedge (nq)^{1/({2\alpha_1 + 2})}$,  regularization is needed for our proposed estimators to attain the minimax optimal rate. In addition, $(nq)^{1/({2\alpha_1 + 2})}$ is the minimax optimal regularization level of $k_1$. This result matches with our empirical finding (see more details in Section \ref{sec:match:illustration}), that is, the non-banded region becomes wider as $\alpha_1$ decreases, and when $\alpha_1 \rightarrow 0$ the threshold for $p$ is $\sqrt{n}$. Intuitively, when $\alpha_1$ becomes larger or the dimension of subspace $\M\Sigma_2^*$, i.e., $\Theta(q^2)$, becomes smaller, the regularization is beneficial in a lower or equal-dimensional case, and the regularization level of $k_1$ becomes higher, because $(nq)^{1/(2\alpha_1 + 2)}$ diverges slower. Intuitively this makes sense because a higher regularization level adapts to a more bandable structure in $\M\Sigma_1^*$. On the other hand, the dimensionality of subspace $\M\Sigma_1^*$, i.e., $\Theta(p^2)$,  determines whether to perform regularization on the $p$ direction or not, since the regularization is only helpful when $p$ is sufficiently large.  Similar results can be shown for the $q$  dimension as well.
\subsubsection{Effect on the Minimax Rate-optimal Regions of $(p,q)$}\label{sec:effect:region}
In this subsection, we discuss the effect of $\alpha_1,\alpha_2$ on the minimax rate-optimal regions of $(p,q)$. Among two rate-optimal regimes discussed previously under Scenario (i) with  $\M\Sigma_1^*,\M\Sigma_2^*$ in the $\mathcal{M}(\varepsilon_0,\alpha)$ class, the degenerate regime does not change for different values of $\alpha_1,\alpha_2$. Therefore, we focus on studying the effect of $\alpha_1,\alpha_2$ on the minimax rate-optimal regions of $(p,q)$ under the moderate high-dimensional regime. 
\par
We consider the case when $p,q$ are polynomially divergent such that $p = n^{\beta_1}, q = n^{\beta_2}$ for some constants $\beta_1,\beta_2 > 0$. So we can use a pair of $(\beta_1,\beta_2)$ to represent a specific divergent rate of $(p,q)$. By simple algebra, the polynomially divergent  $p,q$ satisfy the moderate high-dimensional regime \eqref{mhdr} if and only if $(\beta_1,\beta_2)$ belongs to the following region,
\bee\label{def:rmh}
\mathcal{R}_{\text{MH}} = \Bigg\{(\beta_1,\beta_2)\mid &(i).\ 0\leq \beta_1 \leq \M I(\beta_2\leq \alpha_1)\cdot \frac{2\alpha_1 + 1 - \beta_2}{2\alpha_1 + 2} + \M I(\beta_2 > \alpha_1) \cdot\frac{1}{2};
\\
&(ii). 0\leq \beta_2 \leq \M I(\beta_1\leq \alpha_2)\cdot \frac{2\alpha_2 + 1 - \beta_1}{2\alpha_2 + 2} + \M I(\beta_1 > \alpha_2) \cdot\frac{1}{2}\Bigg\}.
\ee
\par
In Figure \ref{fig:region} (a), we show the region of $\mathcal{R}_{\text{MH}}$ when both $\alpha_1$ and $\alpha_2$ go to $+\infty$. The region is actually $\max\{\beta_1,\beta_2\}\leq 1$, because for any fixed $\beta_2$ the upper bound in \eqref{def:rmh} (i) is
\bee\nonumber
\M I(\beta_2\leq \alpha_1)\cdot \frac{2\alpha_1 + 1 - \beta_2}{2\alpha_1 + 2} + \M I(\beta_2 > \alpha_1) \cdot\frac{1}{2} \longrightarrow 1
\ee
when $\alpha_1 \rightarrow +\infty$. A similar result holds for \eqref{def:rmh} (ii) when $\alpha_2 \rightarrow \infty$. In Figure \ref{fig:region} (b) and (c), we show the regions of $\mathcal{R}_{\text{MH}}$ with $\alpha_1 = \alpha_2$ being $1.5$ and $0.2$, respectively. In Figure \ref{fig:region} (d), we plot $\mathcal{R}_{\text{MH}}$ when both $\alpha_1$ and $\alpha_2$ converge to $0$. The \eqref{def:rmh} directly implies that $\mathcal{R}_{\text{MH}}$ becomes $\max\{\beta_1,\beta_2\}\leq .5$ in this case. Through Figures \ref{fig:region} (a)-(d), one can easily see that when $\alpha_1$ and $\alpha_2$ decrease from $+\infty$ to $0$, the corresponding $\mathcal{R}_{\text{MH}}$ shrinks from $\max\{\beta_1,\beta_2\} \leq 1$, and finally to $\max\{\beta_1,\beta_2\} \leq .5$. These observations match with the theory, since when $\max\{\alpha_1,\alpha_2\}\rightarrow +\infty$, the whole region of moderate high-dimensional regime becomes  $\max\{p,q\}\precsim n$. On the other hand, when $\max\{\alpha_1,\alpha_2\}\rightarrow 0$, the whole region of moderate high-dimensional regime collapses to $\max\{p,q\}\precsim n^{1/2}$.
\begin{figure}
  \centering
    \subfigure[$\alpha_1,\alpha_2 \rightarrow +\infty$]{\includegraphics[width=0.49\textwidth]{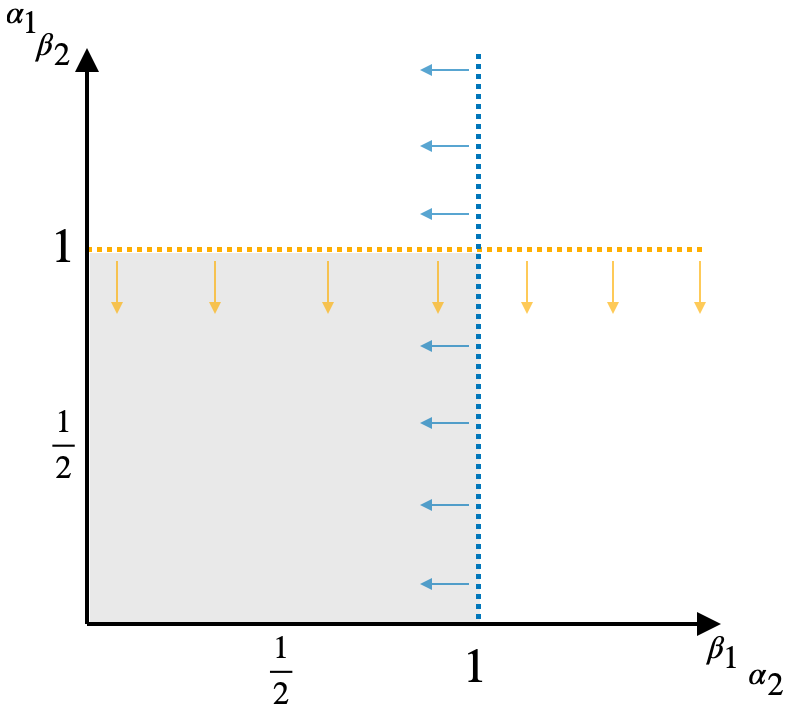}}
  \subfigure[$\alpha_1=\alpha_2 = 1.5$]{\includegraphics[width=0.49\textwidth]{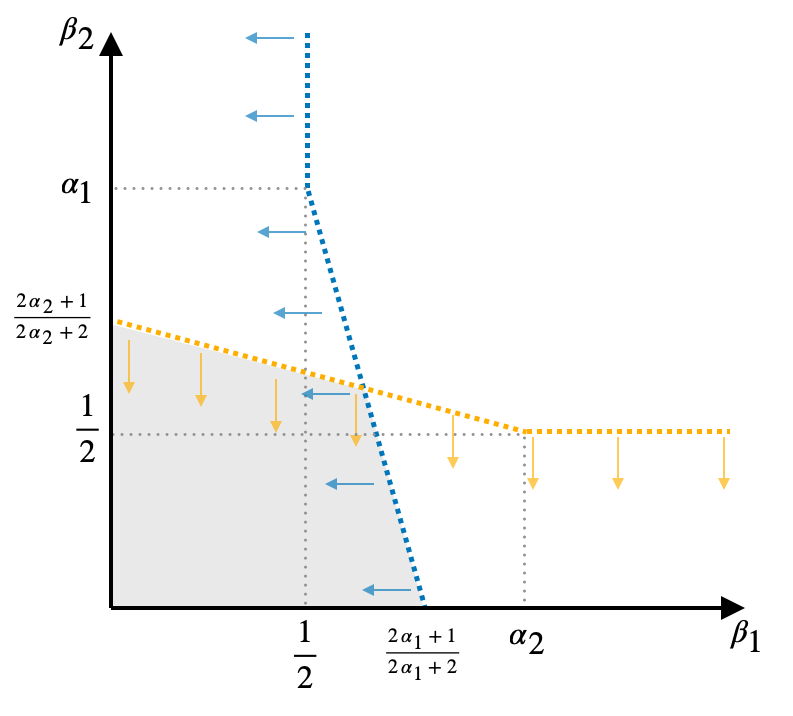}}
\par
    \subfigure[$\alpha_1=\alpha_2=0.2$]{\includegraphics[width=0.49\textwidth]{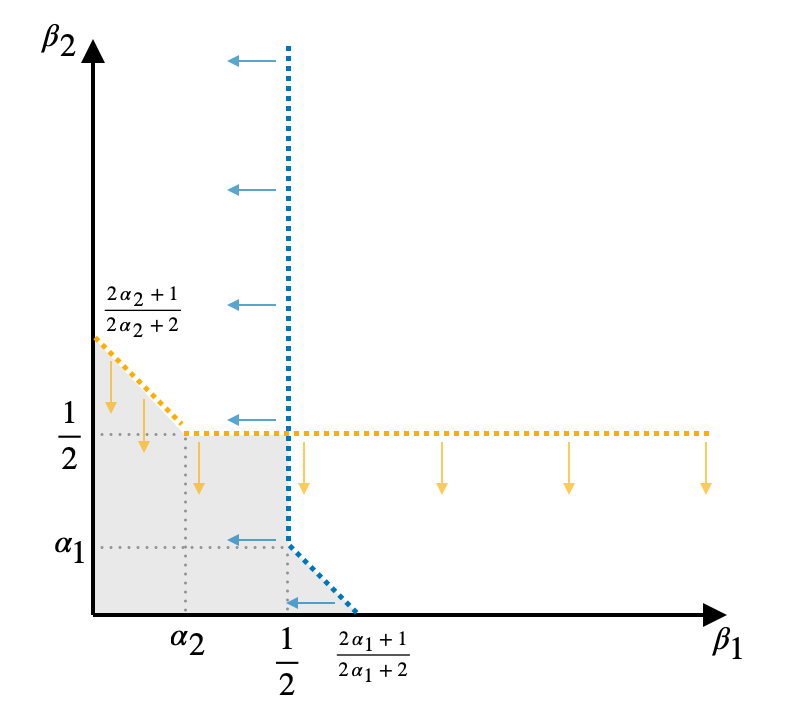}}
  \subfigure[$\alpha_1, \alpha_2 \rightarrow 0$]{\includegraphics[width=0.49\textwidth]{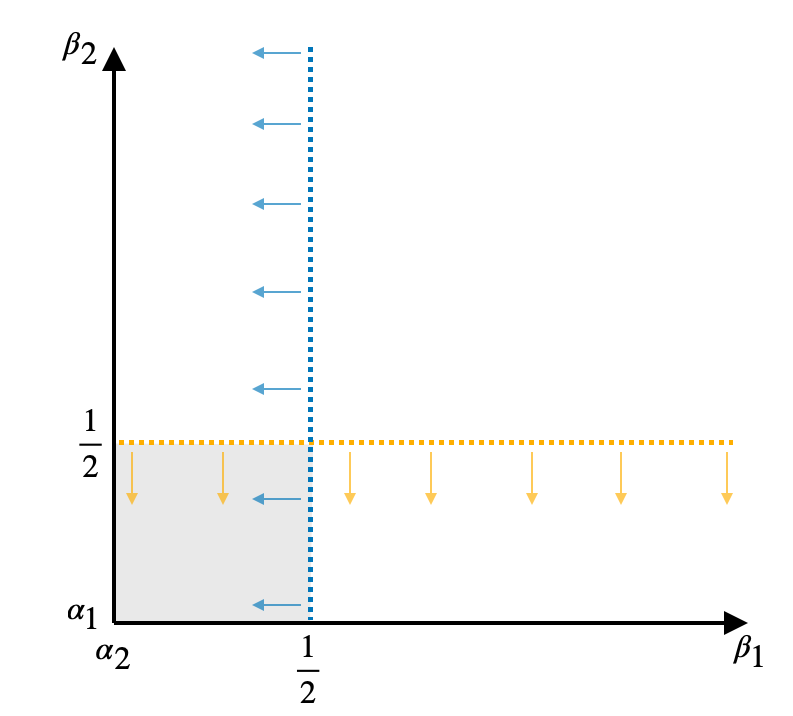}}
 \caption{The demonstration of $\mathcal{R}_{\text{MH}}$. Panel (a) shows the $\mathcal{R}_{\text{MH}}$ when $\alpha_1,\alpha_2 \rightarrow +\infty$. Panel (b)-(d) show the $\mathcal{R}_{\text{MH}}$ with $\alpha_1,\alpha_2$ are both $1.5$, $0.2$ and $0$, respectively. In all panels, the constraint (i) in the definition of $\mathcal{R}_{\text{MH}}$ in \eqref{def:rmh}, is presented by the region on the left-hand side of the blue curve, and the constraint (ii) in the definition of $\mathcal{R}_{\text{MH}}$ is presented by the region under the yellow curve. Therefore, the exact region of $\mathcal{R}_{\text{MH}}$ is the grey region, which is the intersection of blue and yellow regions.}
\label{fig:region}
\end{figure}

%

\subsection{Minimax Optimal Regime: Graphical Illustration}\label{sec:match:illustration}
In this subsection, we illustrate the upper and lower bounds derived in Theorems \ref{T2}, \ref{T3} and \ref{T:low} using specific examples. In particular, we consider the case when $p,q$ are polynomially divergent and let $$p = n^{\beta_1}, q = n^{\beta_2},$$  with $\beta_1,\beta_2 > 0$. We consider following four specific settings of $(\alpha_1,\alpha_2)$:  $(\alpha_1,\alpha_2) = (2,2)$, $(2,1)$, $(0.4,0.4)$, and $(0.4,0.2)$.  When $\M \Sigma_1^*\in \mathcal{M}(\varepsilon_0,\alpha_1), \M \Sigma_2^*\in \mathcal{M}(\varepsilon_0,\alpha_2)$ with $(\alpha_1,\alpha_2) = (2,2)$ and $(2,1)$, as $\alpha_1,\alpha_2$ are relatively large, the entries of $\M \Sigma_1^*$ and $\M \Sigma_2^*$ vanish to zero very fast when they are far away from diagonal. We name these scenarios as \textit{strongly bandable examples}. When $(\alpha_1,\alpha_2) = (0.4,0.4)$ and $(0.4,0.2)$, as $\alpha_1,\alpha_2$ are relatively small, similarly, we name these senarios as   \textit{weakly bandable examples}. 
\par
From Sections \ref{optbd:T2} and \ref{optbd:T3}, for $p = n^{\beta_1}, q = n^{\beta_2}$ with $\beta_1,\beta_2 >0$, the upper bounds of Theorems \ref{T2} and \ref{T3} after selecting the optimal $k_1, k_2$, as well as the lower bound of Theorem \ref{T:low}, always have the polynomial form: 
$
n^{-r}.
$ 
As $r = -\log_n\big( n^{-r}\big)$, we call $r$ the negative log convergence rate (NLCR) and use it to measure the convergence rate of both upper and lower bounds. For example, an NLCR value of $1$ implies an error rate (bound) of $n^{-1}$. Therefore, a larger NLCR value (e.g., color blue in Figure \ref{fig:sgbd1}) implies a better (faster) convergence rate. Also implied by the optimal $k_1,k_2$ selection procedure in Sections \ref{optbd:T2} and \ref{optbd:T3}, the optimal divergence regimes of $k_1, k_2$ always have the form: $n^{r'}$. 
As $r' = \log_n\big( n^{r'}\big)$, we call $r'$ the log divergence rate (LDR) and use it to measure the divergence rate of optimal $k_1,k_2$.
\par
\
\par
\noindent
\textbf{Strongly Bandable Examples: } In these examples, we first plot the lower bounds of Theorem \ref{T:low} in Figures \ref{fig:sgbd1}--\ref{fig:sgbd2} (a). Under the Scenario (i) defined in Section \ref{Sec_MR} of the main paper, we plot the NLCR of the optimal convergence rate of Theorem \ref{T2}, in Figures \ref{fig:sgbd1}--\ref{fig:sgbd2} (b). The optimal convergence rate of Theorem \ref{T2} is minimax rate-optimal for both degenerate and moderate high-dimensional regimes. Based on discussions in Sections \ref{sec:dr}, \ref{sec:mhr} and formula \eqref{def:rmh}, we derive rate-optimal regions of $(\beta_1,\beta_2)$ when $(\alpha_1,\alpha_2) = (2,2)$ and $(2,1)$, denoted by $\mathcal{R}_{2,2}$ and $\mathcal{R}_{2,1}$, respectively:
\bee\nonumber
&\mathcal{R}_{2,2} = \underbrace{\Big\{(\beta_1,\beta_2)\mid 0\leq\beta_2 \leq \min\{-\frac{1}{6}\beta_1 + \frac{5}{6},-6\beta_1 + 5\}, 0\leq\beta_1\Big\}}_{\text{moderate high-dimensional regime}}\bigcup \underbrace{\Big\{(\beta_1,\beta_2)\mid\beta_1 = 0 \text{ or }\beta_2 = 0\Big\}}_{\text{degenerate regime}}
\\
&\mathcal{R}_{2,1} = \underbrace{\Big\{(\beta_1,\beta_2)\mid 0\leq\beta_2 \leq \min\{-\frac{1}{4}\beta_1 + \frac{3}{4},-6\beta_1 + 5\}, 0\leq\beta_1\Big\}}_{\text{moderate high-dimensional regime}}\bigcup \underbrace{\Big\{(\beta_1,\beta_2)\mid\beta_1 = 0 \text{ or }\beta_2 = 0\Big\}}_{\text{degenerate regime}}.
\ee 
To visualize the regions $\mathcal{R}_{2,2}$ and $\mathcal{R}_{2,1}$, we plot them in Figure \ref{fig:sgbd1}(b) and Figure \ref{fig:sgbd2} (b), respectively. The regions are in the bottom left corner, surrounded by red dashed lines. 
When divergence regimes of $p,q$ are in the minimax rate-optimal regions, we present the corresponding LDR of  optimal $k_1$ in Figure \ref{fig:phase} (a)-(b). The corresponding LDR of optimal $k_2$ can be found in Figure \ref{fig:sup:ratecompare} (a)-(b). In Figures \ref{fig:phase} and \ref{fig:sup:ratecompare}, the non-banded regions, where no banding of $k_1$ or $k_2$ is necessary to achieve optimal convergence rate, are in the left corner of each panel, surrounded by red dashed lines.
\par
 For Scenario (ii) defined in Section \ref{Sec_MR} of the main paper, the convergence rate of proposed estimators is given in Theorem \ref{T3}, and is minimax rate-optimal only for the degenerate regime. 
We show the rate-optimal degenerate regime region, which is $
{\big\{(\beta_1,\beta_2)\mid\beta_1 = 0 \text{ or }\beta_2 = 0\big\}}_{}
$, by the two red dashed lines in Figures \ref{fig:sgbd1}--\ref{fig:sgbd2} (c). 
\begin{figure}[H]
  \centering
    \subfigure[Lower Bound]{\includegraphics[width=0.49\textwidth]{22low}}
  \subfigure[Upper Bound under Scenario (i) (i.e., sub-Gaussian tail)]{\includegraphics[width=0.49\textwidth]{22up}}
\\
  \subfigure[Upper Bound under Scenario (ii) (i.e., finite fourth moment) ]{\includegraphics[width=0.49\textwidth]{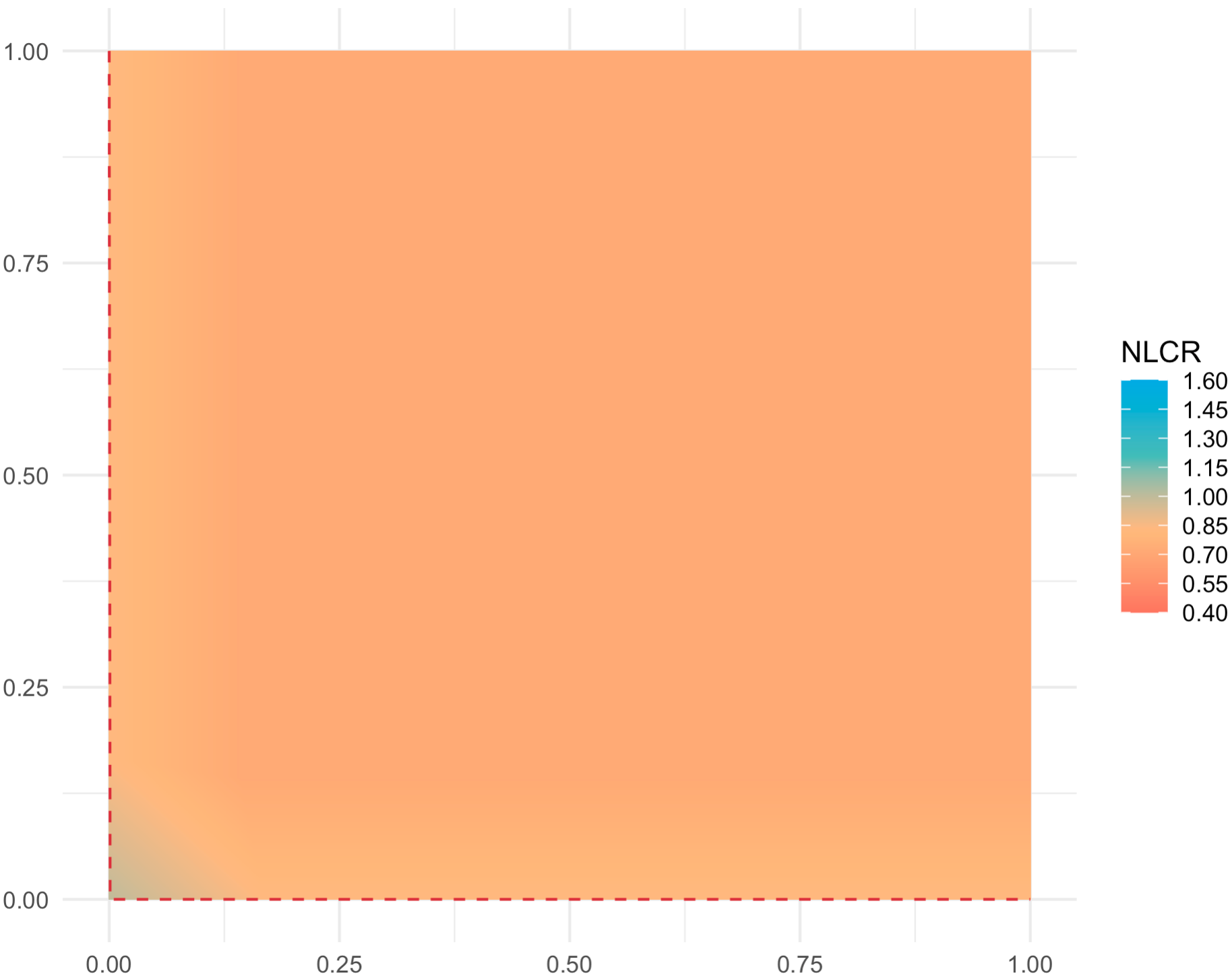}}
 \caption{Strongly Bandable Example ($\alpha_1 = 2, \alpha_2 = 2$): The x-axis represents $\log p/\log n = \beta_1$ and y-axis represents $\log q/\log n = \beta_2$. The color represents the negative log convergence rates (NLCR) of the corresponding upper and lower bounds, where blue means a faster convergence rate. Panel (a) gives the overall error lower bound that applies for both Scenarios (i) and (ii); panel (b) gives the error upper bound under Scenario (i) (sub-Gaussian scenario), and the regions in the bottom left corner, surrounded by red dashed lines, corresponds to the rate-optimal region; panel (c) gives the upper bound under Scenario (ii) (finite fourth moment scenario), and the two red dashed lines represent the rate-optimal region under this scenario. }
\label{fig:sgbd1}
\end{figure}

\begin{figure}[H]
  \centering
    \subfigure[Lower Bound]{\includegraphics[width=0.49\textwidth]{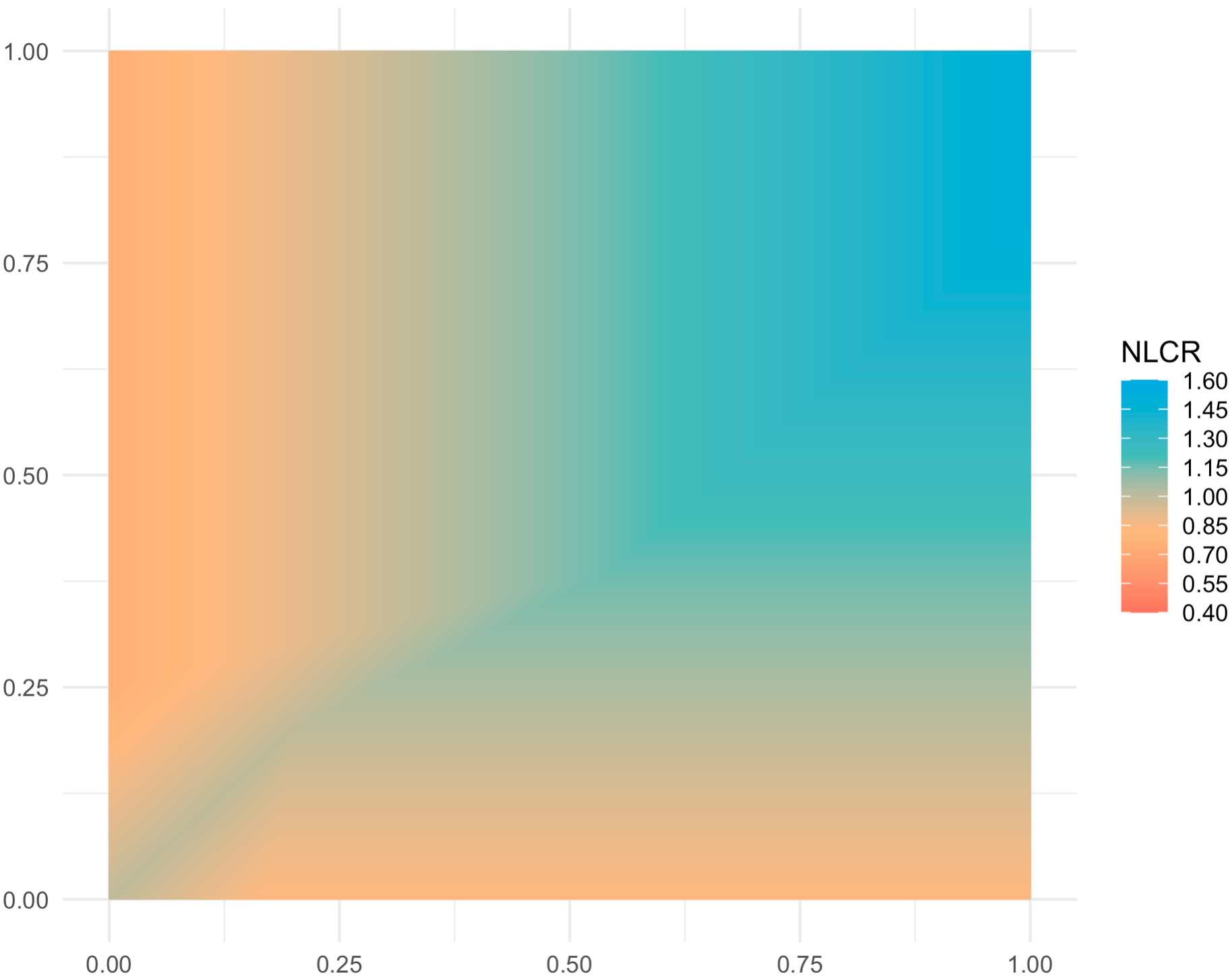}}
  \subfigure[Upper Bound for Scenario]{\includegraphics[width=0.49\textwidth]{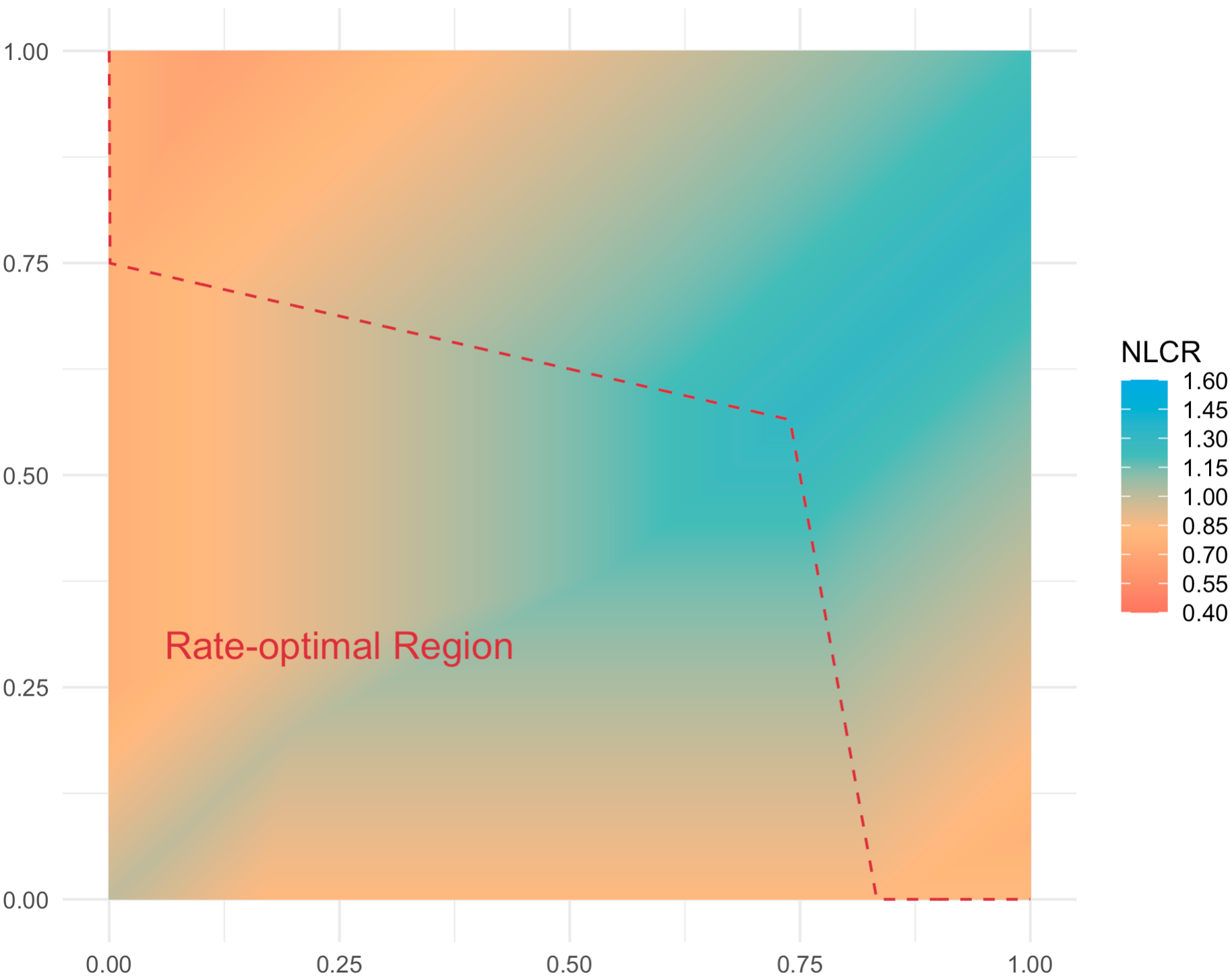}}
\\
  \subfigure[Upper Bound for Scenario (ii)]{\includegraphics[width=0.49\textwidth]{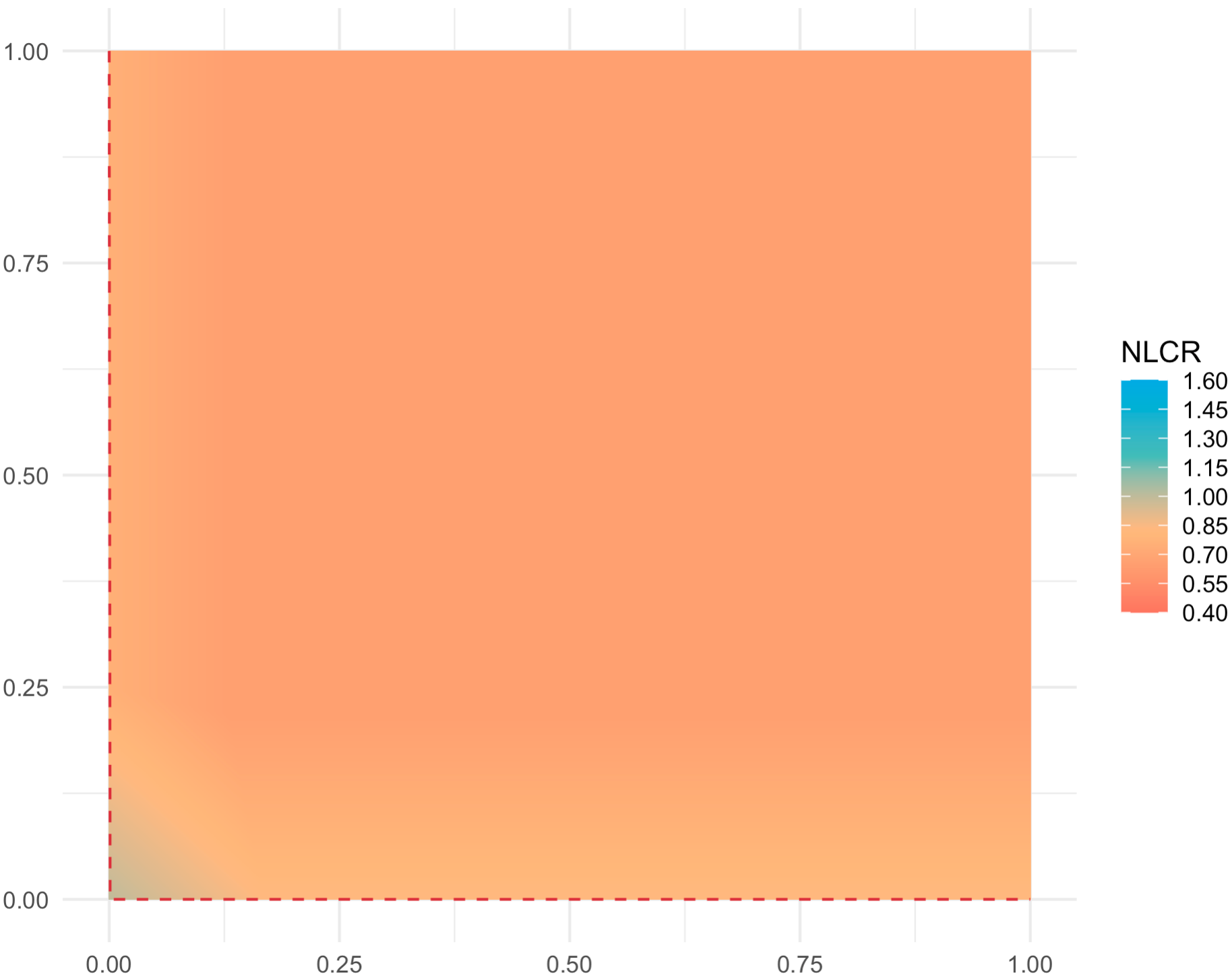}}
 \caption{Strongly Bandable Example ($\alpha_1 = 2, \alpha_2 = 1$):  The x-axis represents $\log p/\log n = \beta_1$ and y-axis represents $\log q/\log n = \beta_2$. The color represents the negative log convergence rates (NLCR) of the corresponding upper and lower bounds, where blue means a faster convergence rate. Panel (a) gives the overall error lower bound that applies for both Scenarios (i) and (ii); panel (b) gives the error upper bound under Scenario (i) (sub-Gaussian scenario), and the regions in the bottom left corner, surrounded by red dashed lines, corresponds to the rate-optimal region; panel (c) gives the upper bound under Scenario (ii) (finite fourth moment scenario), and the two red dashed lines represent the rate-optimal region under this scenario. }
\label{fig:sgbd2}
\end{figure}
\noindent
\textbf{Weakly Bandable Examples: } Similar to the strongly bandable examples, we plot the lower bounds of Theorem \ref{T:low} in Figures \ref{fig:wkbd1}--\ref{fig:wkbd2} (a). Under the Scenario (i), we plot the NLCR of the optimal convergence rate of Theorem \ref{T2}, in Figures \ref{fig:wkbd1}--\ref{fig:wkbd2} (b). Similar to the strongly bandable examples, we derive rate-optimal regions of $(\beta_1,\beta_2)$ when $(\alpha_1,\alpha_2) = (0.4,0.4)$ and $(0.4,0.2)$, denoted by $\mathcal{R}_{0.4,0.4}$ and $\mathcal{R}_{0.4,0.2}$, respectively:
\bee\label{rateopt:case2}
\mathcal{R}_{0.4,0.4} = &\underbrace{\Big\{(\beta_1,\beta_2)\mid 0\leq\beta_2 \leq -\frac{5}{14}\beta_1 + \frac{9}{14}, 0\leq\beta_1\leq 0.4\Big\}}_{\text{moderate high-dimensional regime}}\bigcup \underbrace{\Big\{(\beta_1,\beta_2)\mid 0\leq\beta_2 \leq 0.5, 0.4\leq\beta_1\leq 0.5\Big\}}_{\text{moderate high-dimensional regime}}
\\
&\bigcup\underbrace{\Big\{(\beta_1,\beta_2)\mid 0\leq\beta_2 \leq -2.8\beta_1 + 1.8, 0.5\leq\beta_1\leq 1\Big\}}_{\text{moderate high-dimensional regime}}\bigcup \underbrace{\Big\{(\beta_1,\beta_2)\mid\beta_1 = 0 \text{ or }\beta_2 = 0\Big\}}_{\text{degenerate regime}},
\\
\mathcal{R}_{0.4,0.2} = &\underbrace{\Big\{(\beta_1,\beta_2)\mid 0\leq\beta_2 \leq -\frac{5}{12}\beta_1 + \frac{7}{12}, 0\leq\beta_1\leq 0.2\Big\}}_{\text{moderate high-dimensional regime}}\bigcup \underbrace{\Big\{(\beta_1,\beta_2)\mid 0\leq\beta_2 \leq 0.5, 0.2\leq\beta_1\leq 0.5\Big\}}_{\text{moderate high-dimensional regime}}
\\
&\bigcup\underbrace{\Big\{(\beta_1,\beta_2)\mid 0\leq\beta_2 \leq -2.8\beta_1 + 1.8, 0.5\leq\beta_1\leq 1\Big\}}_{\text{moderate high-dimensional regime}}\bigcup \underbrace{\Big\{(\beta_1,\beta_2)\mid\beta_1 = 0 \text{ or }\beta_2 = 0\Big\}}_{\text{degenerate regime}}.
\ee
To visualize the regions $\mathcal{R}_{0.4,0.4}$ and $\mathcal{R}_{0.4,0.2}$, we plot them in Figure \ref{fig:wkbd1} (b) and Figure \ref{fig:wkbd2} (b), respectively. The regions are in the bottom left corner, surrounded by red dashed lines. 
When divergence regimes of $p,q$ are in the minimax rate-optimal regions, we present the corresponding LDR of optimal $k_1$ in Figure \ref{fig:phase} (c)-(d). The corresponding LDR of optimal $k_2$ can be found in Figure \ref{fig:sup:ratecompare} (c)-(d). In Figures \ref{fig:phase} and \ref{fig:sup:ratecompare}, the non-banded regions, where no banding of $k_1$ or $k_2$ is necessary to achieve optimal convergence rate, are in the left corner of each panel, surrounded by red dashed lines.
\par
 For Scenario (ii), the convergence rate is minimax rate-optimal only for the degenerate regime. 
We show the rate-optimal degenerate regime region, which is $
{\big\{(\beta_1,\beta_2)\mid\beta_1 = 0 \text{ or }\beta_2 = 0\big\}}
$, by the two red dashed lines in Figures \ref{fig:wkbd1}--\ref{fig:wkbd2} (c). 
\begin{figure}
  \centering
    \subfigure[Lower Bound]{\includegraphics[width=0.49\textwidth]{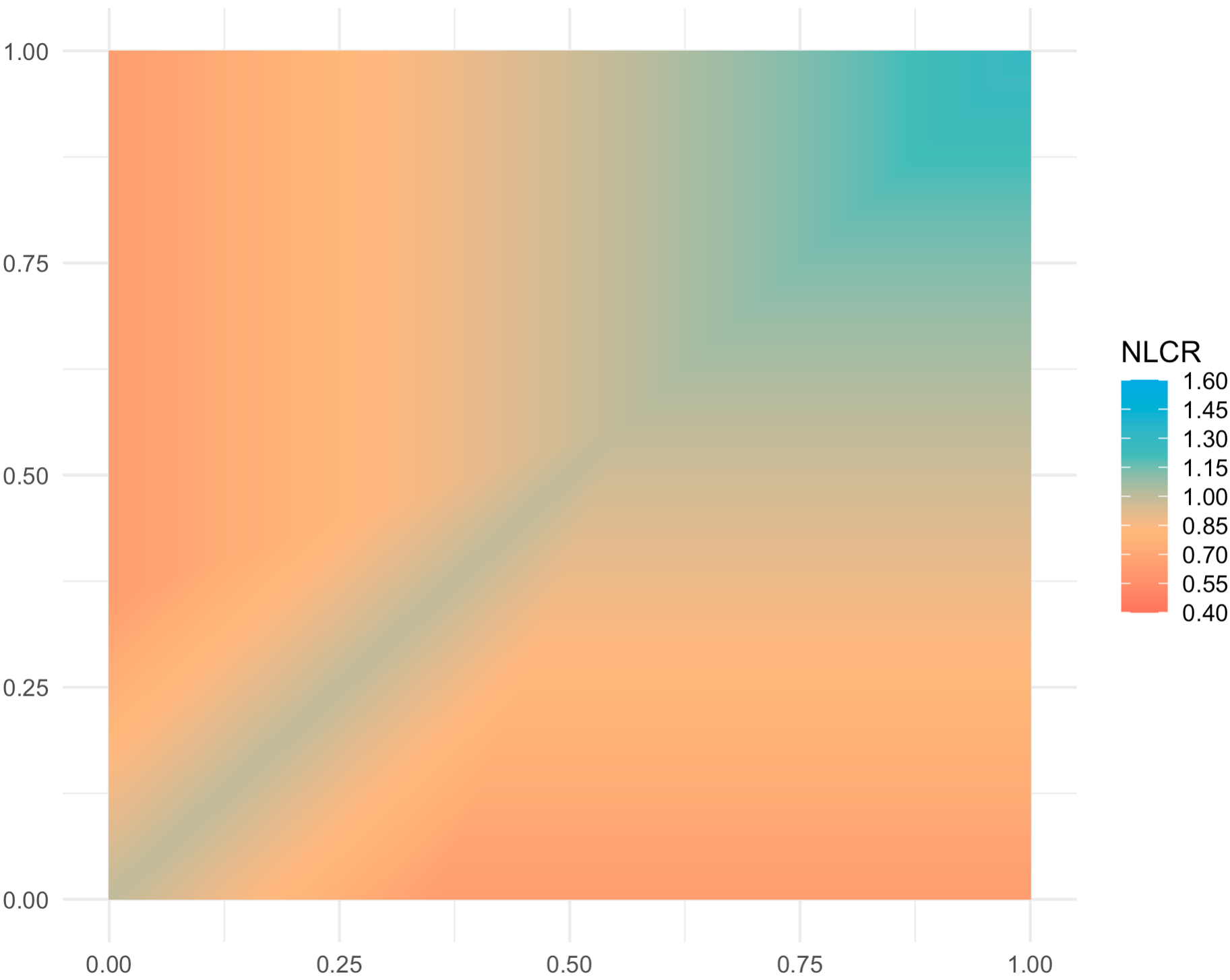}}
  \subfigure[Upper Bound for Scenario (i)]{\includegraphics[width=0.49\textwidth]{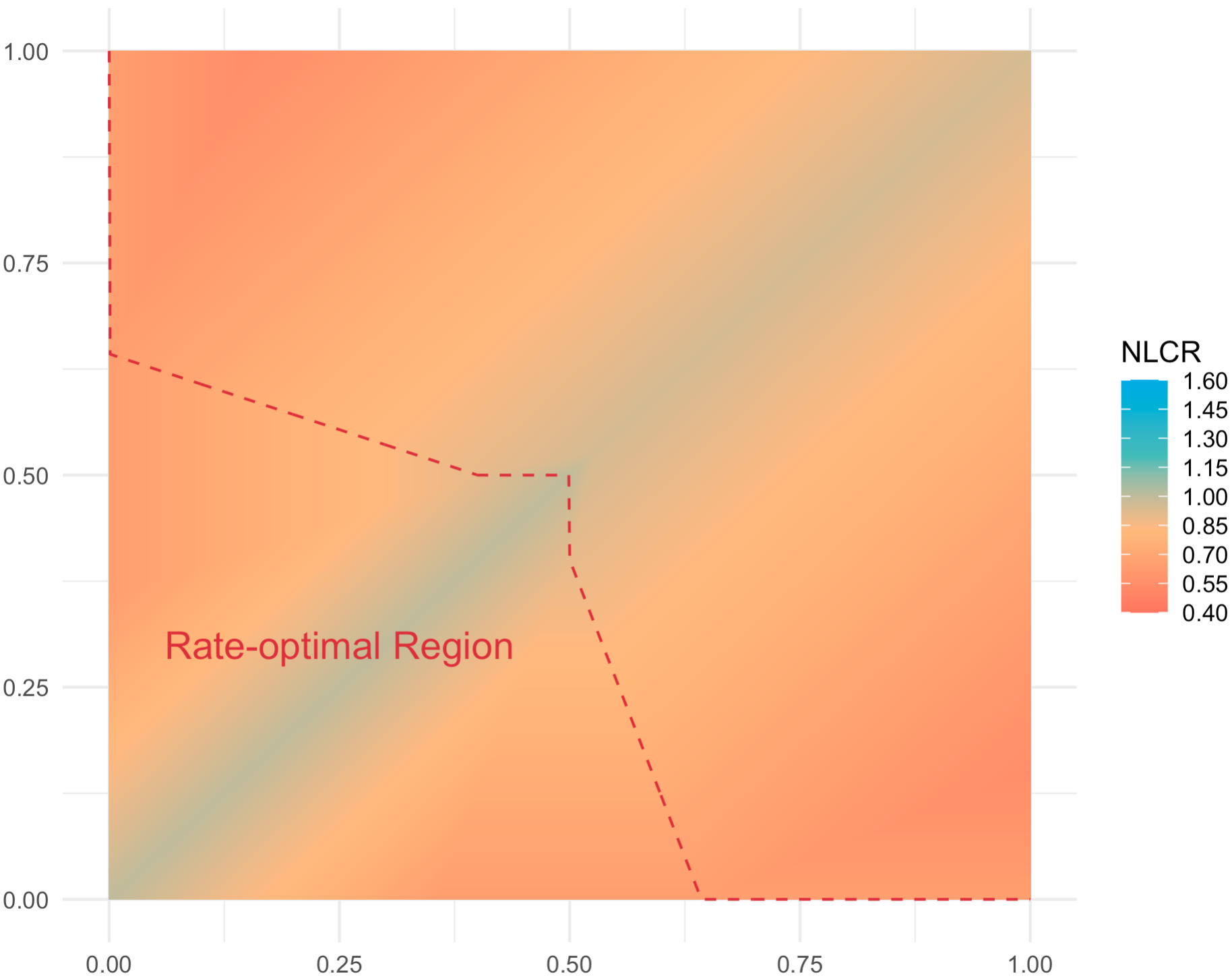}}
\\
  \subfigure[Upper Bound for Scenario (ii)]{\includegraphics[width=0.49\textwidth]{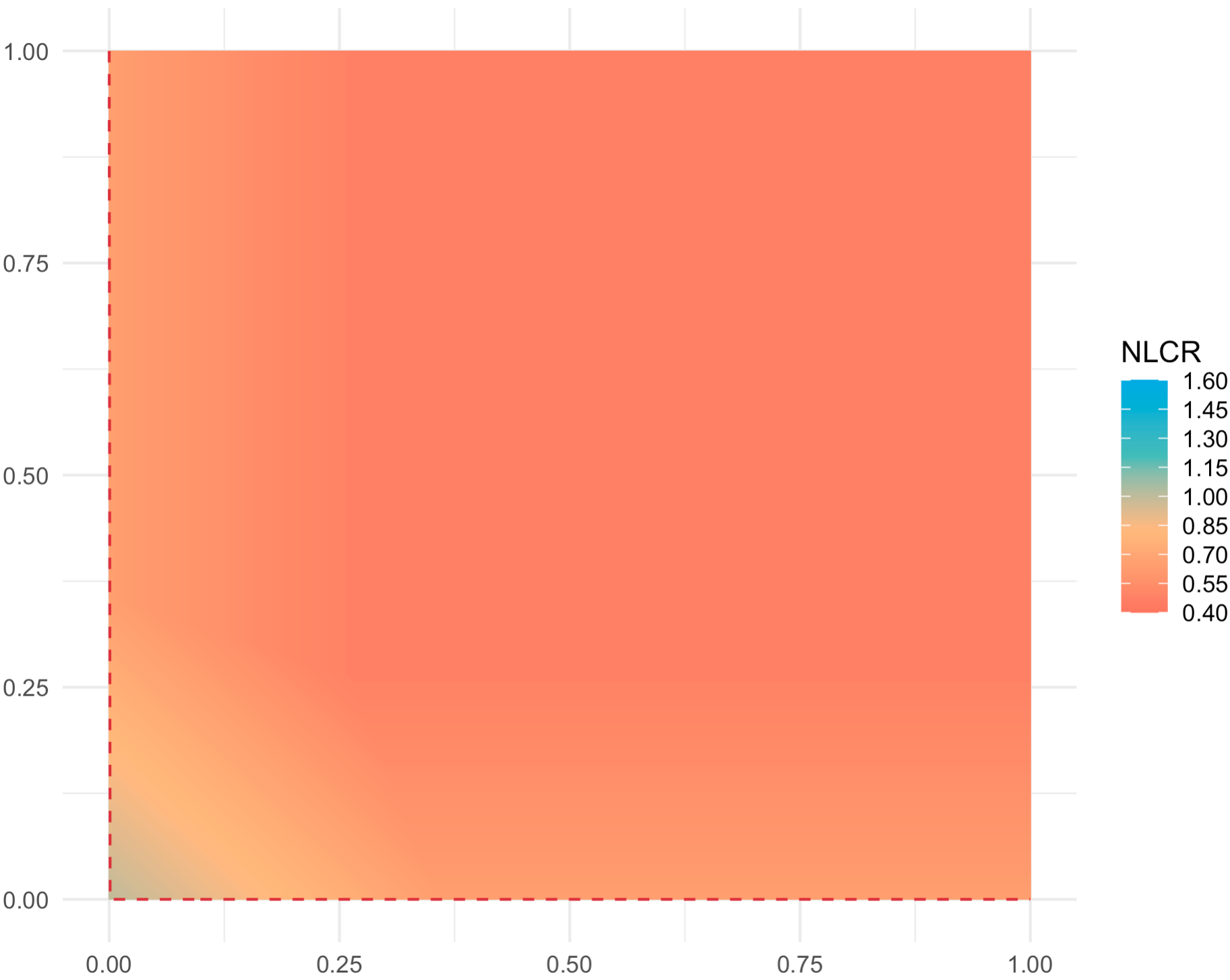}}
\caption{Weakly Bandable Example ($\alpha_1 = 0.4, \alpha_2 = 0.4$): The x-axis represents $\log p/\log n = \beta_1$ and y-axis represents $\log q/\log n = \beta_2$. The color represents the negative log convergence rates (NLCR) of the corresponding upper and lower bounds, where blue means a faster convergence rate. Panel (a) gives the overall error lower bound that applies for both Scenarios (i) and (ii); panel (b) gives the error upper bound under Scenario (i) (sub-Gaussian scenario), and the regions in the bottom left corner, surrounded by red dashed lines, corresponds to the rate-optimal region; panel (c) gives the upper bound under Scenario (ii) (finite fourth moment scenario), and the two red dashed lines represent the rate-optimal region under this scenario. }
\label{fig:wkbd1}
\end{figure}
\begin{figure}[H]
  \centering
    \subfigure[Lower Bound]{\includegraphics[width=0.49\textwidth]{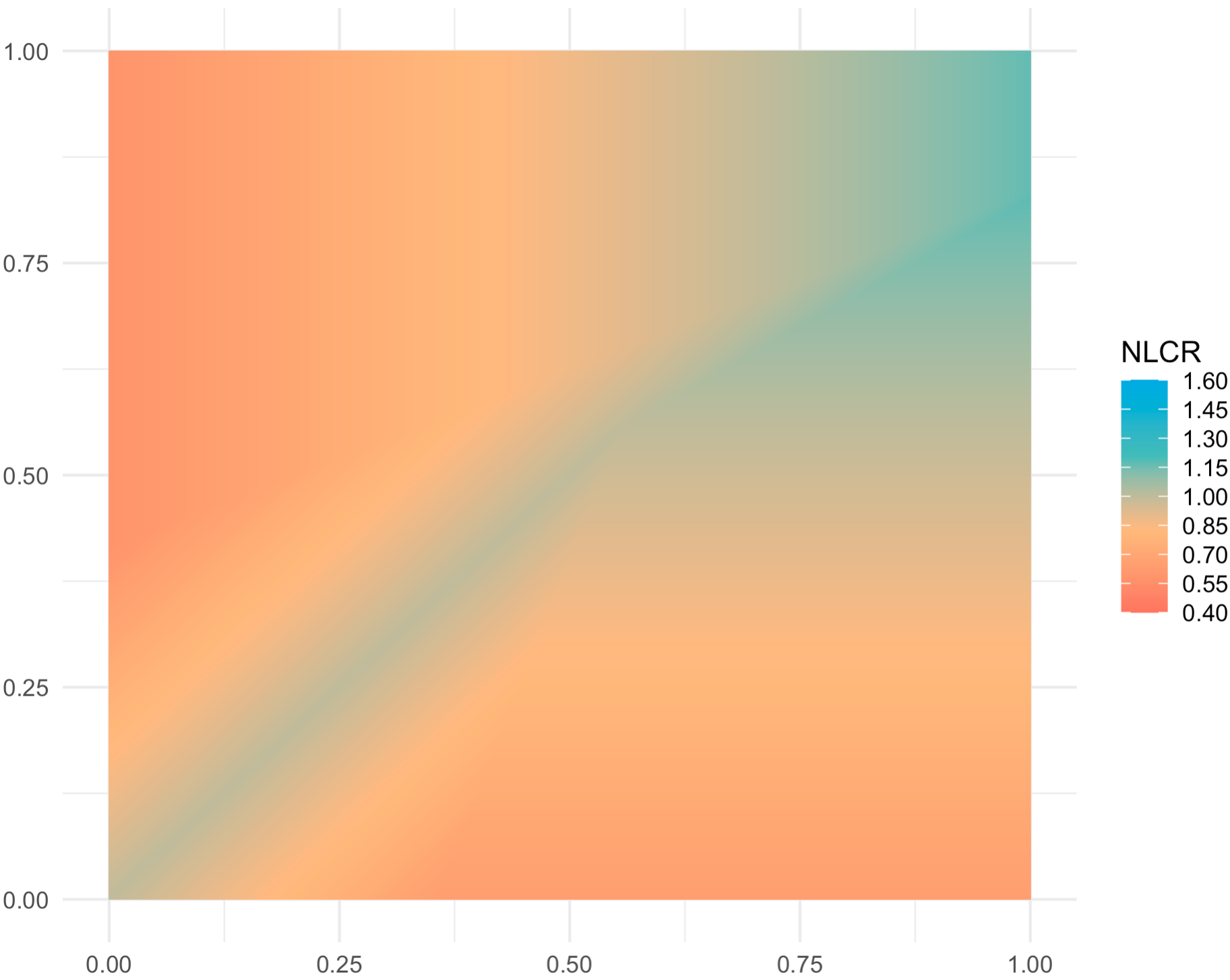}}
  \subfigure[Upper Bound for Scenario (i)]{\includegraphics[width=0.49\textwidth]{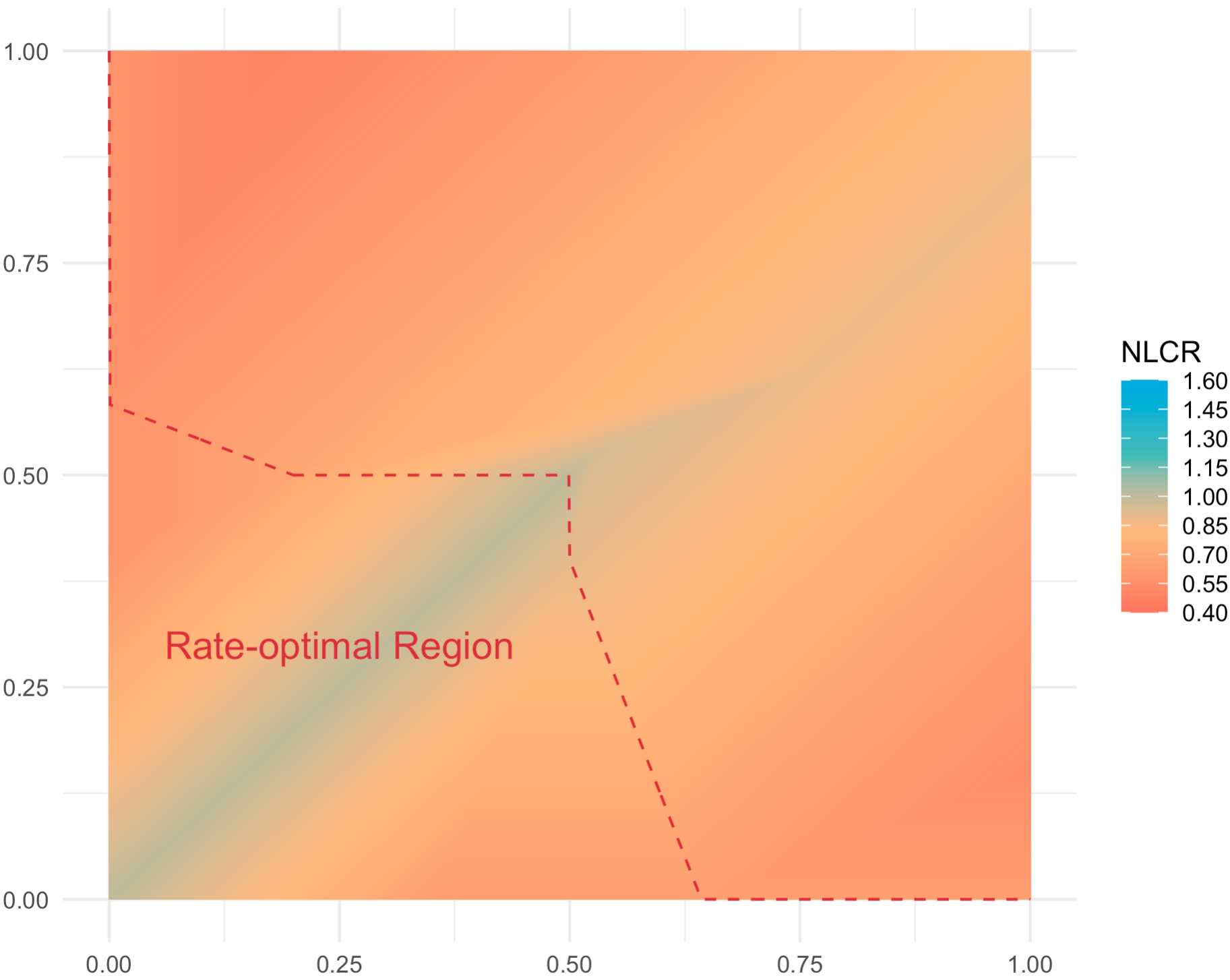}}
\\
  \subfigure[Upper Bound for Scenario (ii)]{\includegraphics[width=0.49\textwidth]{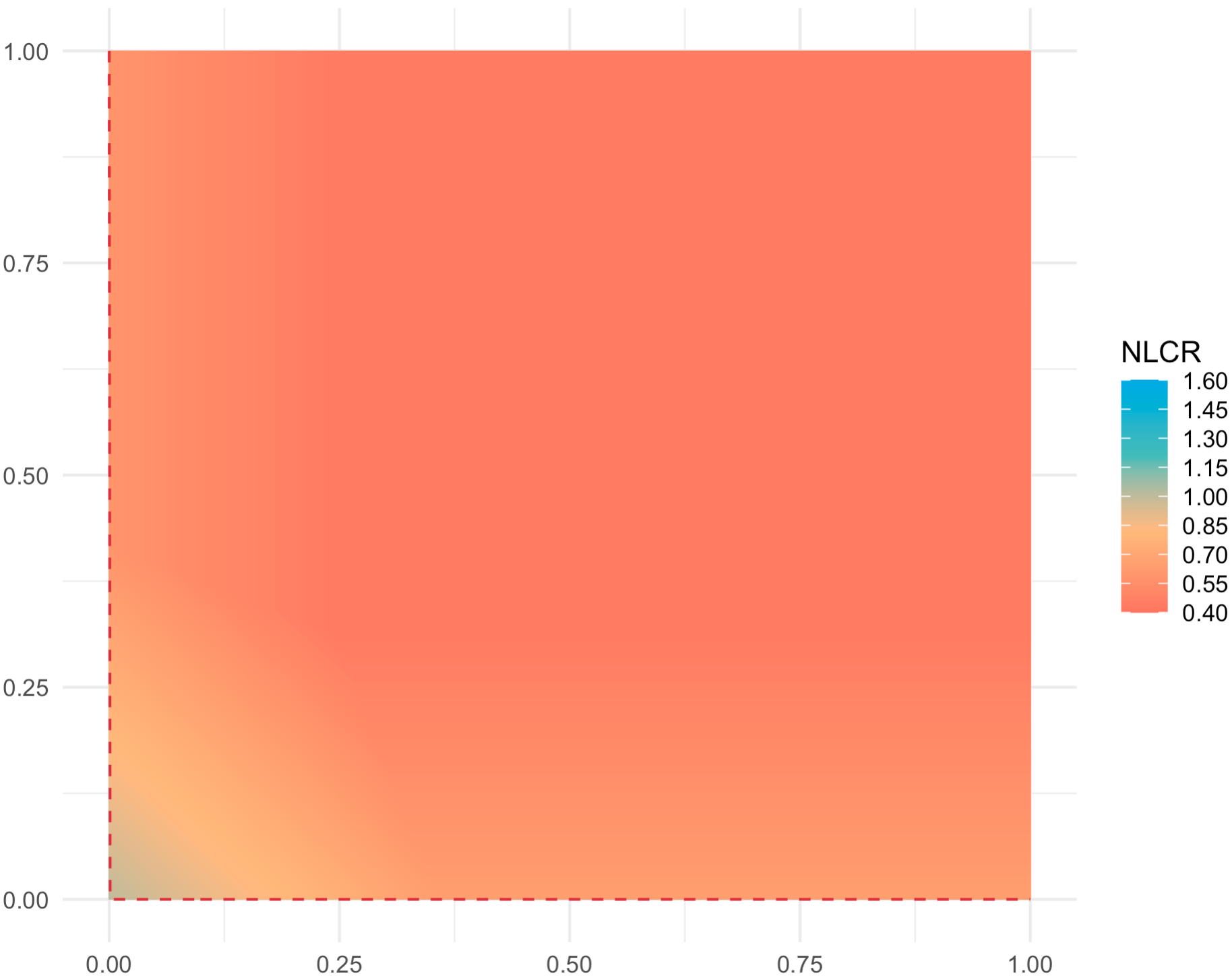}}
\caption{Weakly Bandable Example ($\alpha_1 = 0.4, \alpha_2 = 0.2$):  The x-axis represents $\log p/\log n = \beta_1$ and y-axis represents $\log q/\log n = \beta_2$. The color represents the negative log convergence rates (NLCR) of the corresponding upper and lower bounds, where blue means a faster convergence rate. Panel (a) gives the overall error lower bound that applies for both Scenarios (i) and (ii); panel (b) gives the error upper bound under Scenario (i) (sub-Gaussian scenario), and the regions in the bottom left corner, surrounded by red dashed lines, corresponds to the rate-optimal region; panel (c) gives the upper bound under Scenario (ii) (finite fourth moment scenario), and the two red dashed lines represent the rate-optimal region under this scenario. }
\label{fig:wkbd2}
\end{figure}
\begin{figure}[H]
  \subfigure[$\alpha_1 = 2, \alpha_2 = 2$]{\includegraphics[width=0.49\textwidth]{22k1}}
  \subfigure[$\alpha_1 = 2, \alpha_2 = 1$]{\includegraphics[width=0.49\textwidth]{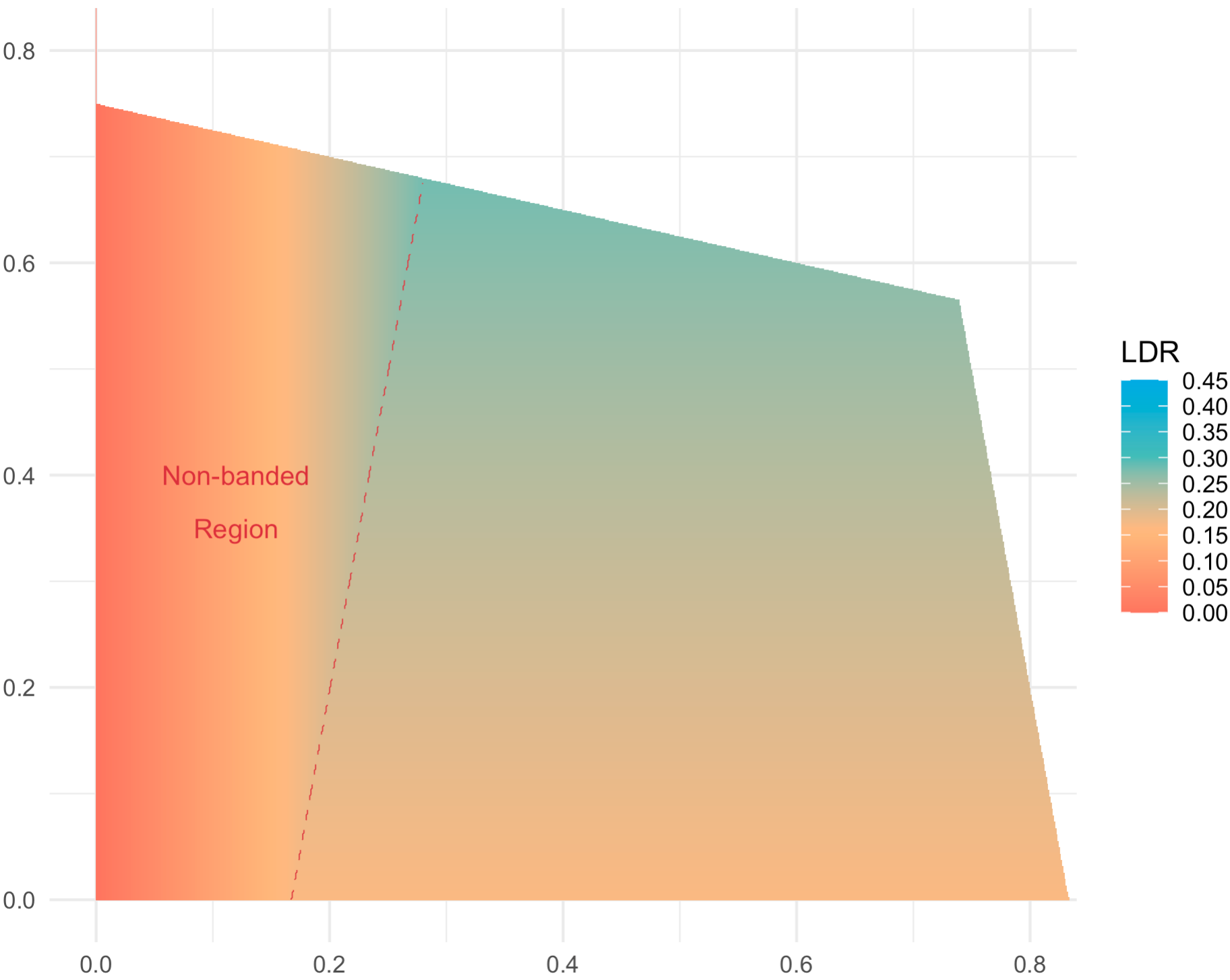}}
  \subfigure[$\alpha_1 = 0.4, \alpha_2 = 0.4$]{\includegraphics[width=0.49\textwidth]{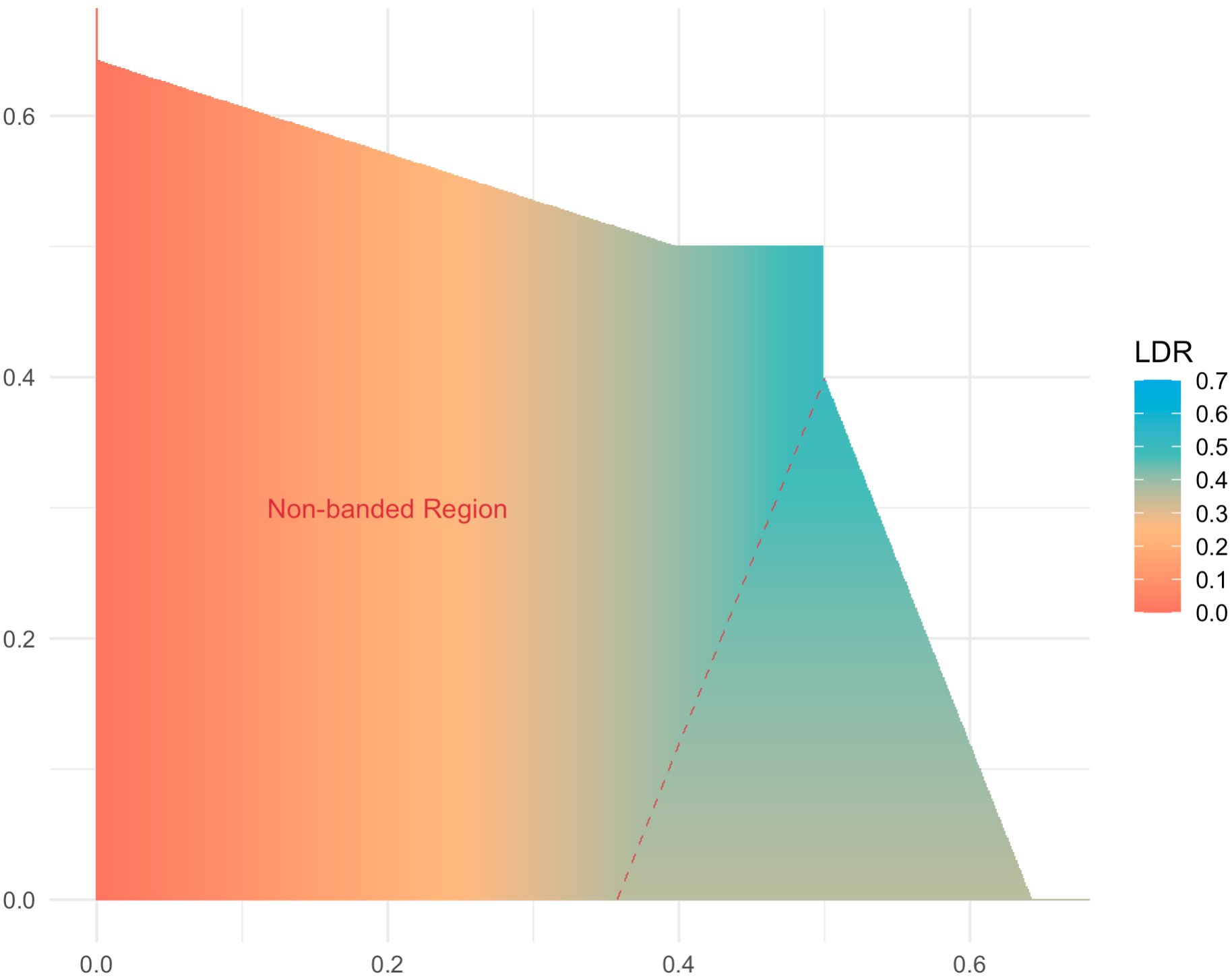}}
  \subfigure[$\alpha_1 = 0.4, \alpha_2 = 0.2$]{\includegraphics[width=0.49\textwidth]{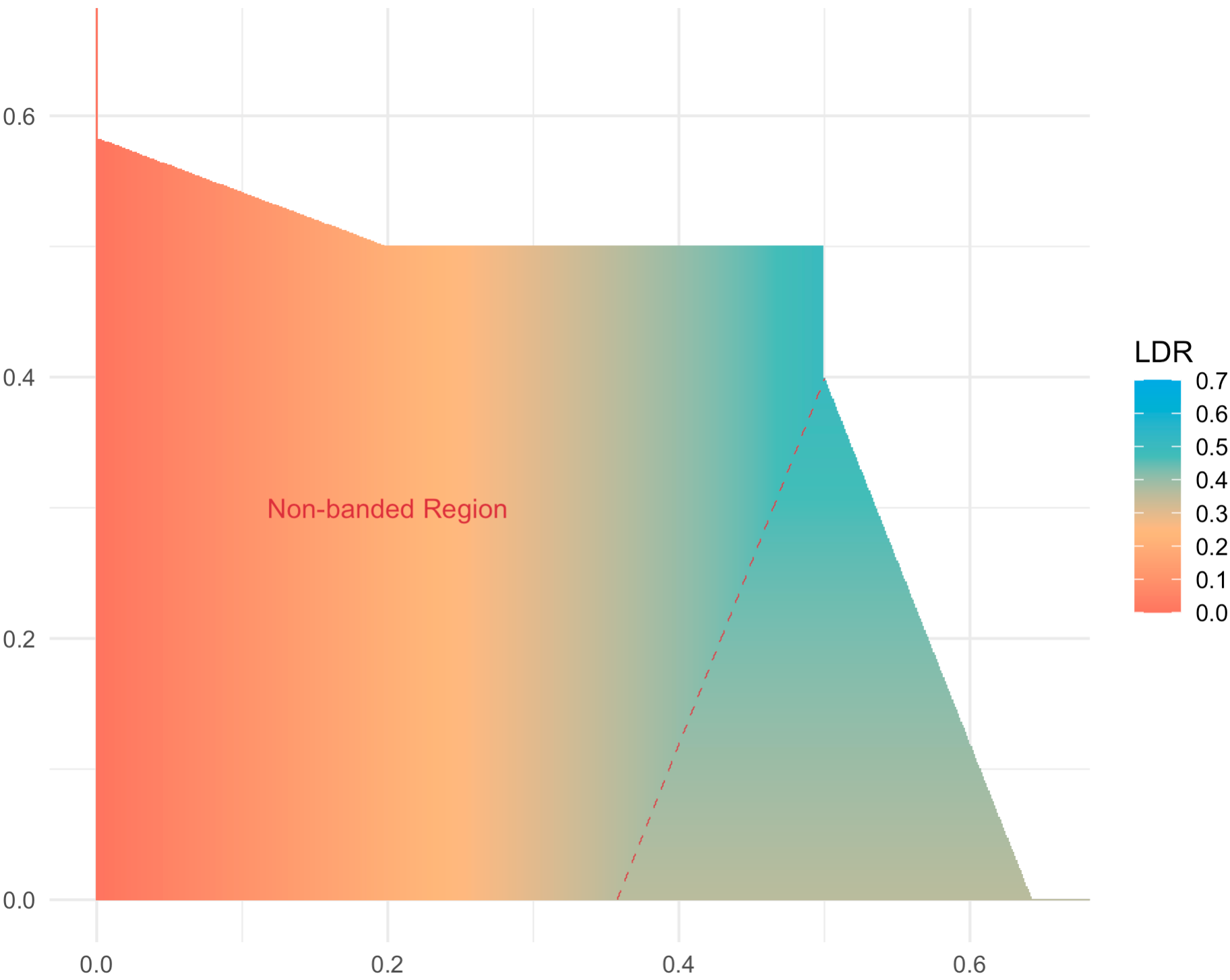}}
\caption{Phase transition phenomenon for $k_1$: The x-axis represents $\log p/\log n = \beta_1$ and y-axis represents $\log q/\log n = \beta_2$. Colored regions correspond to the rate-optimal regions for different $(\alpha_1,\alpha_2)$ values in panels (a)-(d), and the color represents the log divergence rates (LDR) of the optimal $k_1$ value, where a deeper color indicates a larger divergence rate. Red dashed line sets the boundary for non-banded region (to its left), where no banding is necessary to achieve optimal convergence rate.}
\label{fig:phase}
\end{figure}
\begin{figure}
   \centering
  \subfigure[Strongly Bandable Example with $(\alpha_1,\alpha_2) = (2,2)$]{\includegraphics[width=0.49\textwidth]{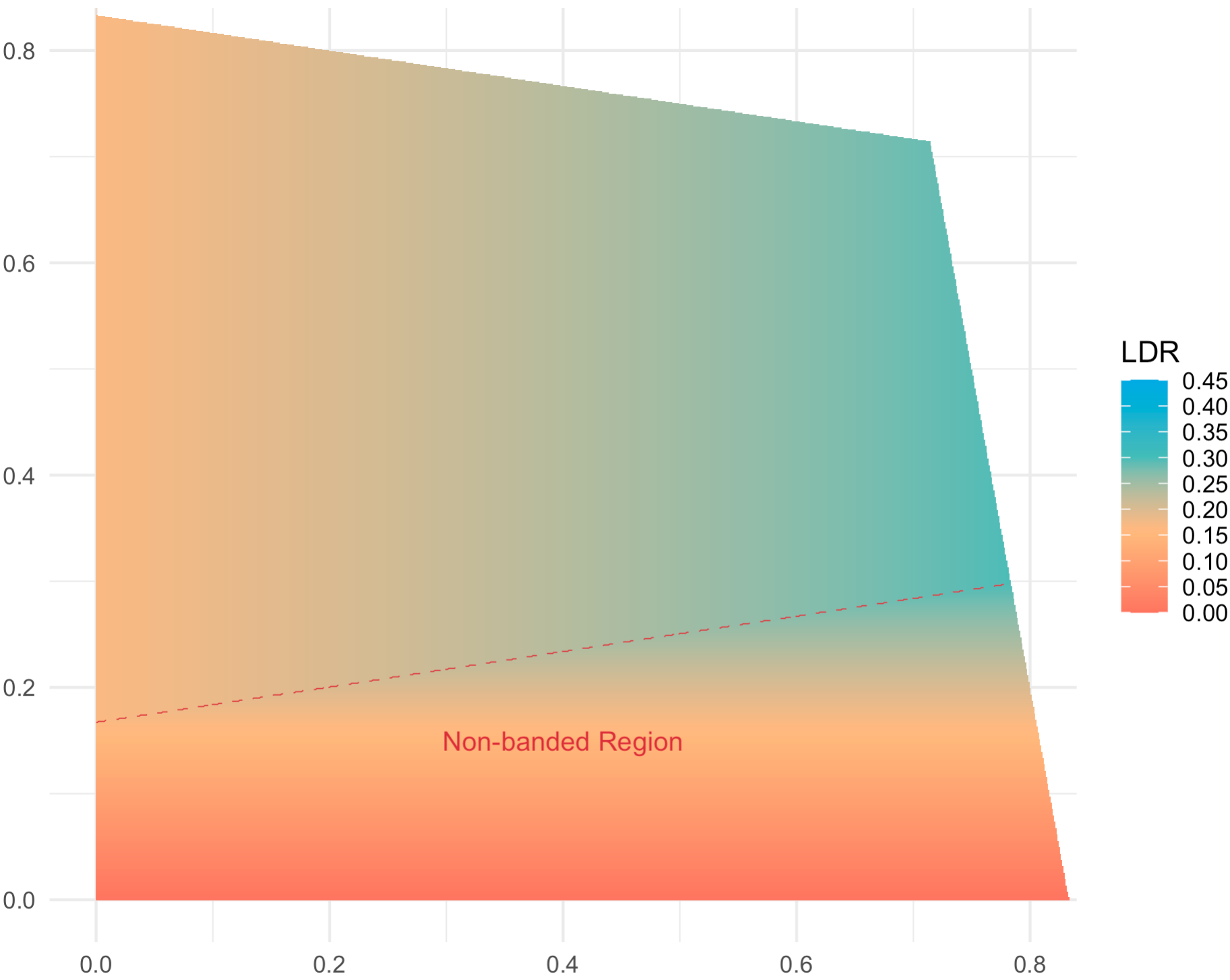}}
  \subfigure[Strongly Bandable Example with $(\alpha_1,\alpha_2) = (2,1)$]{\includegraphics[width=0.49\textwidth]{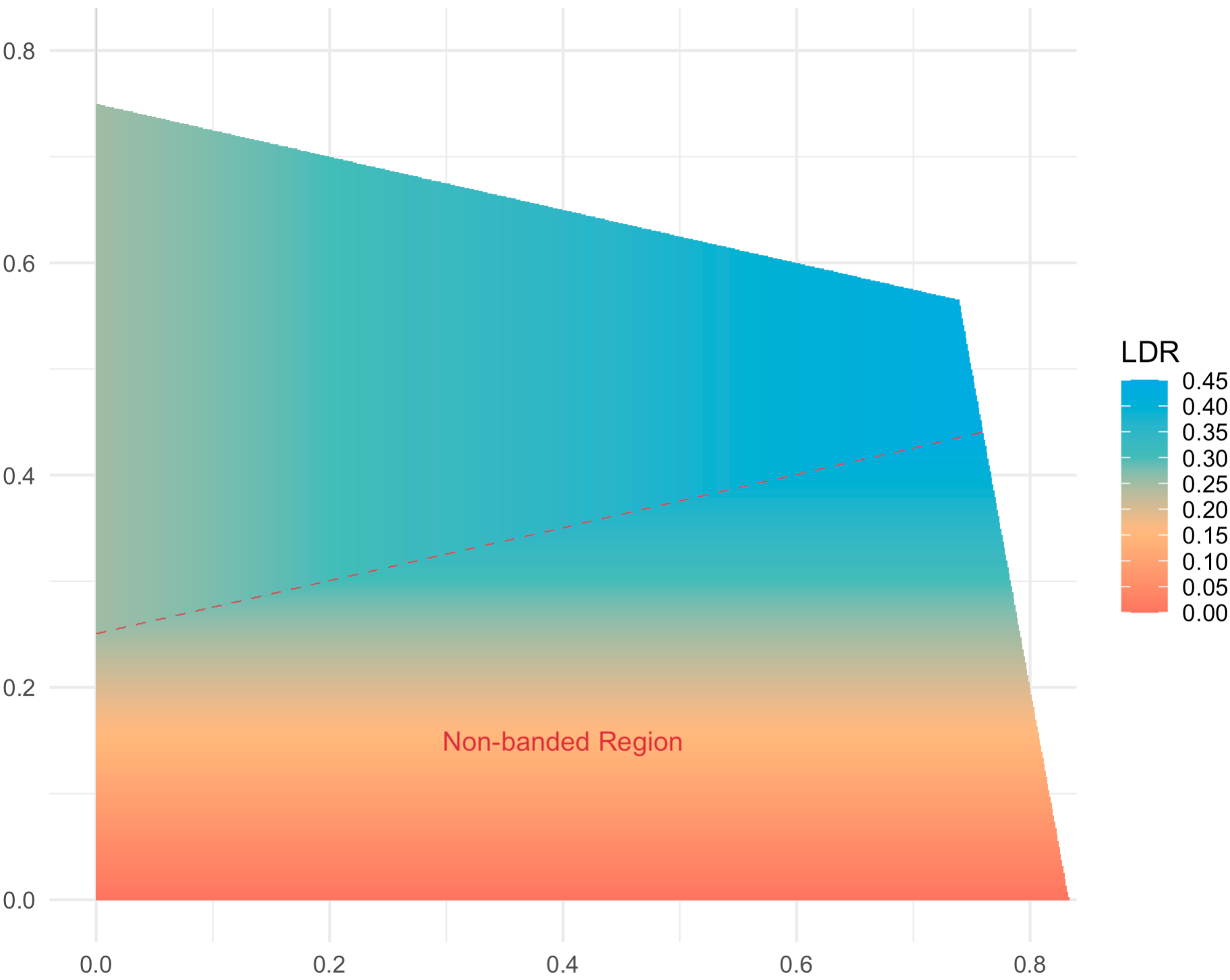}}

 \subfigure[Weakly Bandable Example with $(\alpha_1,\alpha_2) = (0.4,0.4)$]{\includegraphics[width=0.49\textwidth]{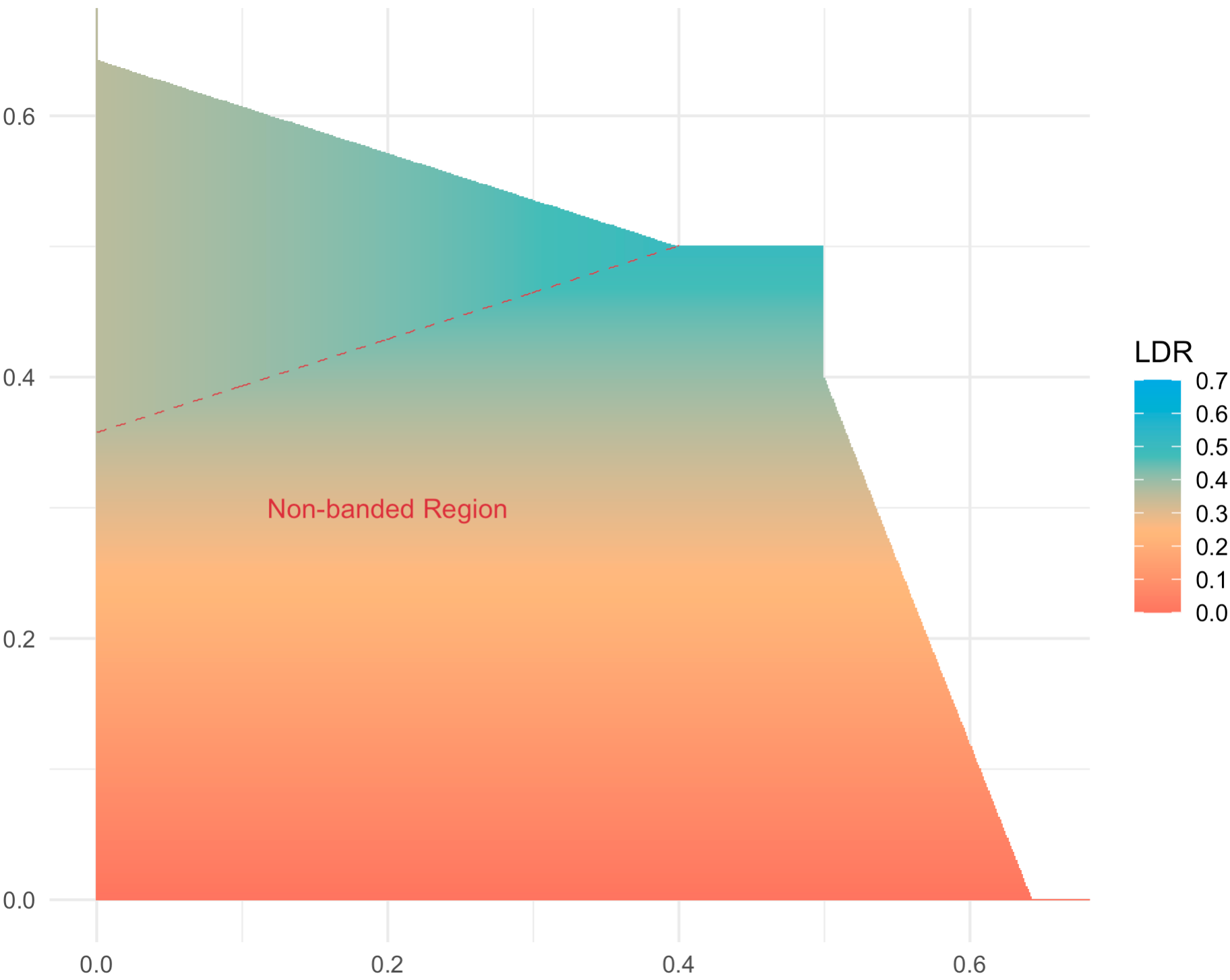}}
 \subfigure[Weakly Bandable Example with $(\alpha_1,\alpha_2) = (0.4,0.2)$]{\includegraphics[width=0.49\textwidth]{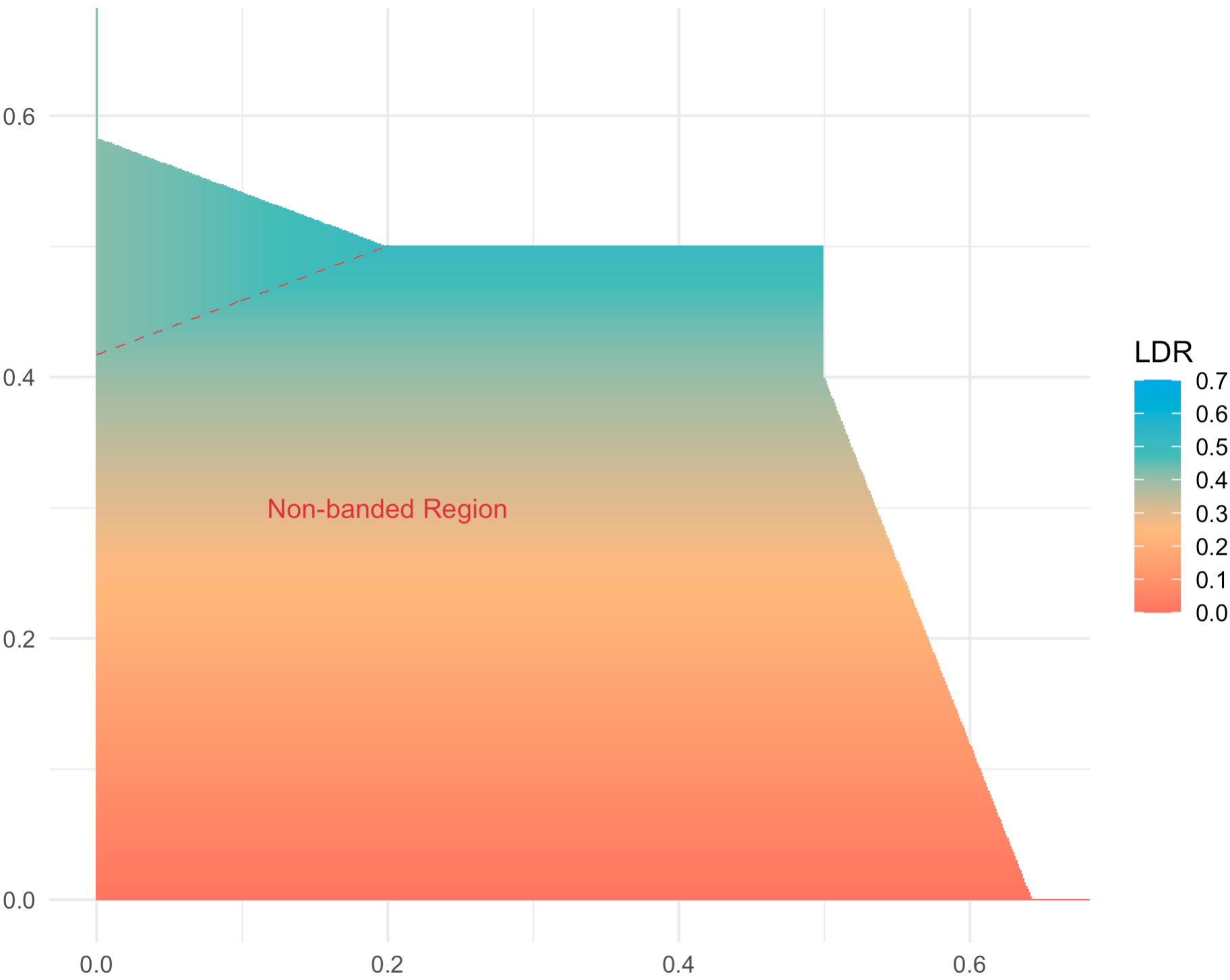}}

\caption{Phase transition phenomenon for $k_2$: The x-axis represents $\log p/\log n = \beta_1$ and y-axis represents $\log q/\log n = \beta_2$. Colored regions correspond to the rate-optimal regions for different $(\alpha_1,\alpha_2)$ values in panels (a)-(d), and the color represents the log divergence rates (LDR) of the optimal $k_1$ value, where a deeper color indicates a larger divergence rate. Red dashed line sets the boundary for non-banded region (to its left), where no banding is necessary to achieve optimal convergence rate.}
\label{fig:sup:ratecompare}
\end{figure}

The above examples suggest the following findings:
\begin{itemize}
\item Comparing Figures \ref{fig:sgbd1}--\ref{fig:wkbd2} (b) and (c), for fixed regimes of $p,q$, the convergence rate we obtain under Scenario (i) is much faster than the convergence rate  we obtain under Scenario (ii), which is reasonable as suggested by Remark \ref{rk:T3dg} in the main paper. 
\item In Figures \ref{fig:sgbd1}--\ref{fig:wkbd2}, while keeping $p,q$ unchanged, all upper and lower bounds converge faster when $\alpha_1,\alpha_2$ become larger. 
\par
Within the rate-optimal region, consider the case where $pq = n^{\beta_1 + \beta_2}$ diverges under a fixed rate $n^{\tilde{r}}$, i.e.,  $\beta_1 + \beta_2 = \tilde{r}$ is fixed. Figures \ref{fig:sgbd1} (b) and \ref{fig:wkbd1} (b) show that when $\alpha_1 = \alpha_2$, the error rate of our proposed estimator is the sharpest when $p$ and $q$ diverge equally fast. Figures \ref{fig:sgbd2} (b) and \ref{fig:wkbd2} (b) show that, when $\alpha_1 \approx \alpha_2$, the error rate of our proposed estimator is the sharpest when $\beta_1 \approx \beta_2$, i.e., $p$ and $q$ are approximately equally divergent.
\item From Figures \ref{fig:sgbd1}--\ref{fig:wkbd2}, we can see that when $\alpha_1,\alpha_2$ become smaller, i.e., $\M\Sigma_1^*,\M\Sigma_2^*$ become less bandable, the rate-optimal region becomes smaller and eventually collapses to the $[0,0.5]\times [0,0.5]$ square region. 
\item In Figure \ref{fig:phase}, for any specific $\alpha_1$ and $\alpha_2$, when $p,q$ are under the moderate high-dimensional regime, the phase transition phenomenon of $k_1$ selection can be observed for both strongly and weakly bandable examples. For a fixed divergence rate of $ q $, when $p$ diverges slower than a threshold rate there is no benefit of regularizing over the $p$ direction, i.e., $k_1 =\tilde{p}$ already guarantees an optimal rate in Theorem \ref{T2}. When $p$ diverges faster than the threshold rate, the optimal banded rate over the $p$ direction will keep the same regardless of the value of $p$. Moreover, when $\beta_2$ becomes larger (i.e., $q$ diverges faster), the corresponding threshold of $p$ also becomes larger. \par
On the other hand, comparing Figure \ref{fig:phase} panels (a) and (c), we find that the  non-banded region becomes larger, i.e., the threshold for phase transition of $k_1$ becomes larger, as $\alpha_1$ and $\alpha_2$ decrease. Moreover,  the entire rate-optimal region will collapse to the $[0,0.5]\times [0,0.5]$ square region as $\max(\alpha_1,\alpha_2) \rightarrow 0$. This region is also the non-bandable region when $\max(\alpha_1,\alpha_2) \rightarrow 0$.  For $k_2$, we observe similar findings, as shown in Figure \ref{fig:sup:ratecompare}. 
\end{itemize}

\subsection{Empirical Justification of Theoretical Error Rates}\label{Sec28}
 In this subsection, we justify the established error rates using simulations. In particular, we consider $p = q = \lfloor n^{\beta} \rfloor$, where $\beta>0$ is the divergence rate of $p,q$. The $\{\vecc (\Xb_i)\}_{i = 1}^n $ are $n$ i.i.d $\M N({\bf 0}, \boldsymbol\Sigma_2\otimes\boldsymbol\Sigma_1),$  with $\M\Sigma_1,\M\Sigma_2$ satisfying,
\begin{eqnarray}\label{sigma12:structure}
\sigma_{l_a, m_a}^{(a)}&= & 
\begin{cases}
2, & l_a = m_a
\\
|l_a-m_a|^{-\alpha_a - 1} & l_a \neq m_a
\end{cases},
\end{eqnarray}
where $a\in\{1,2\}$, $\alpha_1 = \alpha_2 = 2$. So both $\M\Sigma_1,\M\Sigma_2$ are in the matrix class $\mathcal{M}(\varepsilon_0,\alpha = 2)$.  
\par
For a specific setting of $(p,q,n)$, we numerically estimate the empirical error of our proposed estimator as follows. We first use $30$ Monte-Carlo (MC) pre-rounds to select optimal $\hat{k}_1$ and $\hat{k}_2$. We take the medians among selected $\hat{k}_1$s and $\hat{k}_2$s over 30 pre-rounds, as the estimation of optimal $\hat{k}_{1,\text{opt}}, \hat{k}_{2,\text{opt}}$. We then run $300$ main MC replicates  under the corresponding parameter setting $(p,q,n)$. In the $j$th replicate, we fit our proposed banded estimator with $k_1 = \hat{k}_{1,\text{opt}}, k_2 = \hat{k}_{2,\text{opt}}$, over $n$ i.i.d. generated $p\times q$ samples $\{\M X_i\}_{i = 1}^n$, and calculate the  normalized Frobenius-norm error $E_{p,q,n}^{(j)} = \big\|\hat{\M\Sigma}_2^{\mathcal{B}}(\hat{k}_{2,\text{opt}})\otimes \hat{\M\Sigma}_1^{\mathcal{B}}(\hat{k}_{1,\text{opt}}) - \M\Sigma\big\|^2_\F/pq$. We plot the log empirical error $\log(\bar{E}_{p,q,n})$, where $\bar{E}_{p,q,n} = \sum_{j = 1}^{300}E_{p,q,n}^{(j)}/300$.
\par
We consider nine different divergence regimes of $p,q$, in terms of different $\beta$, with $\beta = 0.3,0.35,\dots,0.70$. Under each divergence regime, we run simulations with $n = 100,200,\dots,1500$ when $\beta \leq 0.6$, and with $n = 100,200,\dots,1000$ when $\beta = 0.65,0.70$. Here we use a larger sample size for $\beta \leq 0.6$ to ensure a small rounding error for small $\beta$. 
\par
Figure \ref{fig:sgbd1} (b) shows the minimax rate-optimal region of our proposed banded estimator with $\alpha_1 = \alpha_2 = 2$. Since $p = q = \lfloor n^{\beta} \rfloor \asymp n^{\beta}$ and $0.3\leq \beta\leq 0.7$, our proposed estimator is minimax rate-optimal. As $\M\Sigma_1, \M\Sigma_2$ belong to the matrix class $\mathcal{M}({\varepsilon_0,2})$, the theoretical error rate should attain the lower bound of \eqref{T:low} with $\alpha_1 = \alpha_2 = 2$ such that, 
\bee\label{simu:them1}
\E\big({\|\hat{\M \Sigma}_n - \M \Sigma_2^* \otimes \M \Sigma_1^*\|_\F^2}/{pq}\big) \asymp (np)^{-5/6}.
\ee
\par
\begin{figure}
  \centering
    \subfigure[Fitting plots with $\beta = 0.55$]{\includegraphics[width=0.49\textwidth]{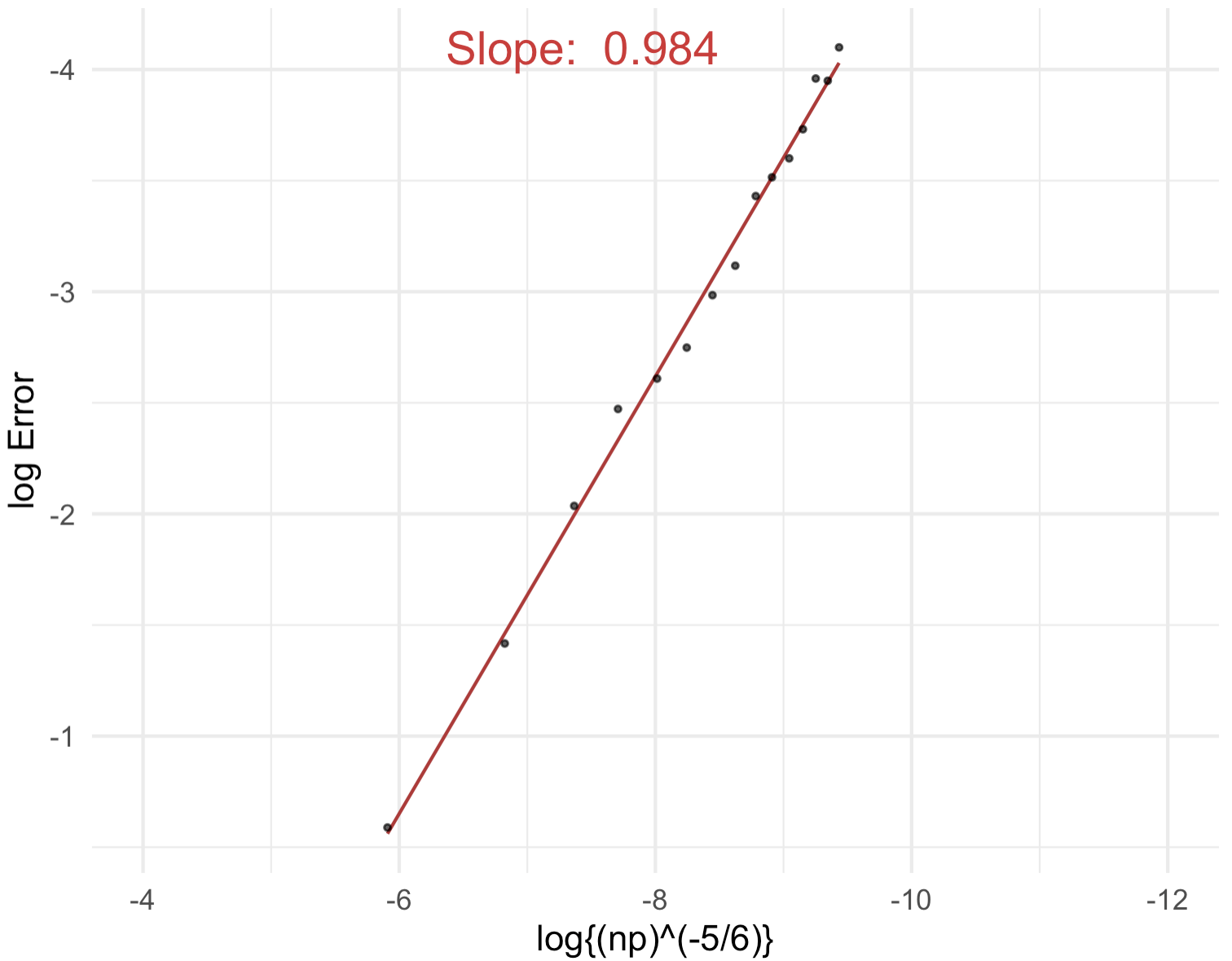}}
  \subfigure[Summary of $\hat{s}_\beta$ and $\tilde{s}_\beta$]{\includegraphics[width=0.49\textwidth]{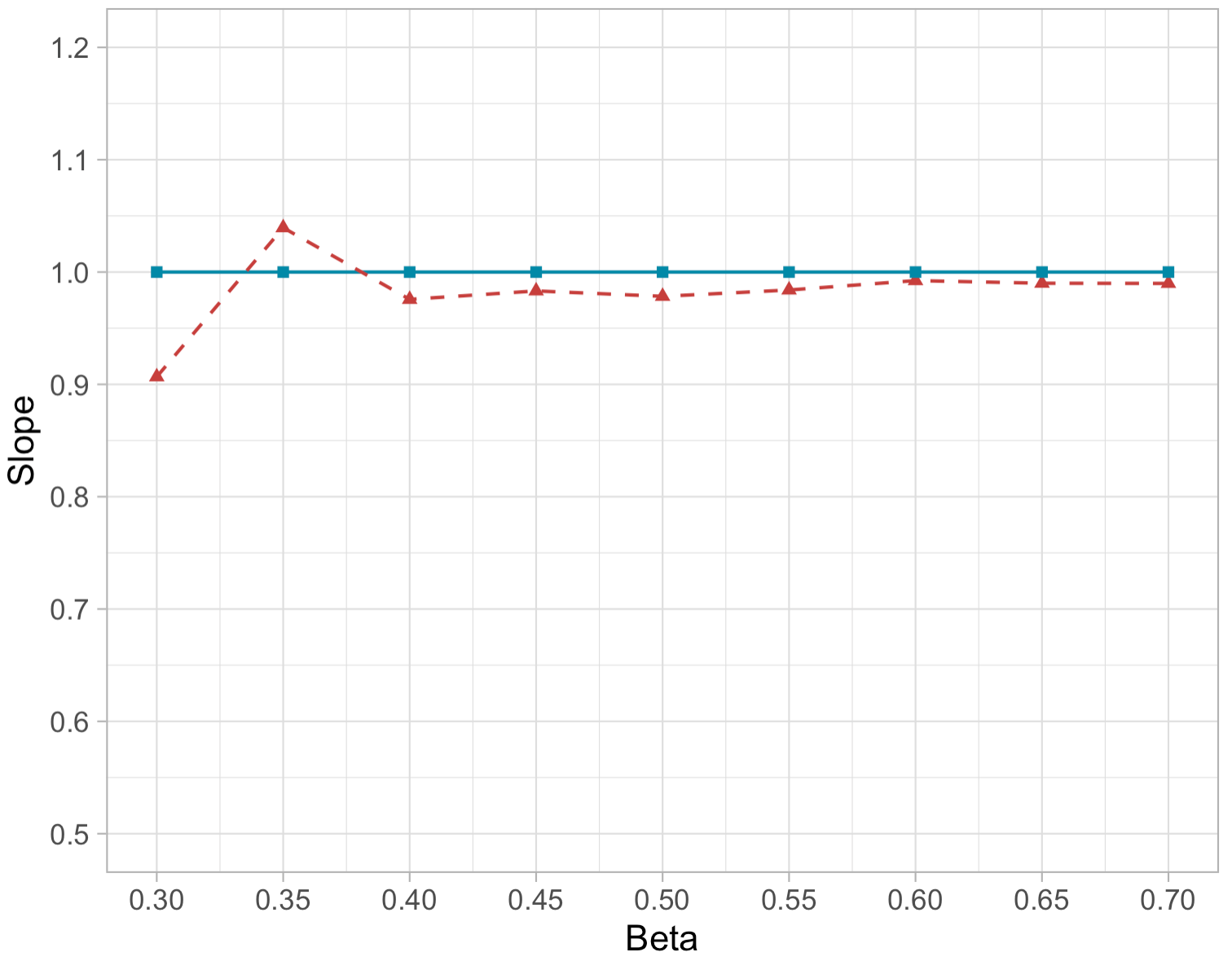}}
\caption{Panel (a): pairs of $(\log\big\{(np)^{-5/6}\big\},  \log\bar{E}_{p,q,n})$ for different values of $n$ with $\beta = 0.55$ and the regression line (red). Panel (b): estimated slopes (red) with $\beta$ ranging from $0.3$ to $0.7$, and the blue line representing the benchmark value of $1$.
}
\label{fig:simu:rate}
\end{figure}
We first fix $\beta = 0.55$  and plot the pairs of $(\log\big\{(np)^{-5/6}\big\},  \log\bar{E}_{p,q,n})$ for different values of $n$ in Figure \ref{fig:simu:rate} (a). We observe a clear linear relationship (estimated slope of $.984$ obtained by fitting a simple linear regression) and all points align closely to the fitted line,  which confirms our derived rate in \eqref{simu:them1}. 
We further let $\beta$ take different values in $\{.30,.35,\ldots,.70\}$,  repeat the above procedure of slope estimation, and present the estimated slopes in Figure \ref{fig:simu:rate} (b) and Figure \ref{supp:fig:slope}. We find that the estimated slopes are very close to the ground truth $1$, which again confirms our theoretical error rate. The slight bias when $\beta = 0.3$ and $ 0.35$ can be explained by the rounding error caused by small values of $p$ and $q$, e.g., when $\beta = 0.3$ and $n$ is $1500$, $p,q=\lfloor n^{\beta} \rfloor$ are only around $8$.
\begin{figure}
  \centering
    \subfigure[$\beta=0.3$: Estimated Slope 0.907.]{\includegraphics[width=0.32\textwidth]{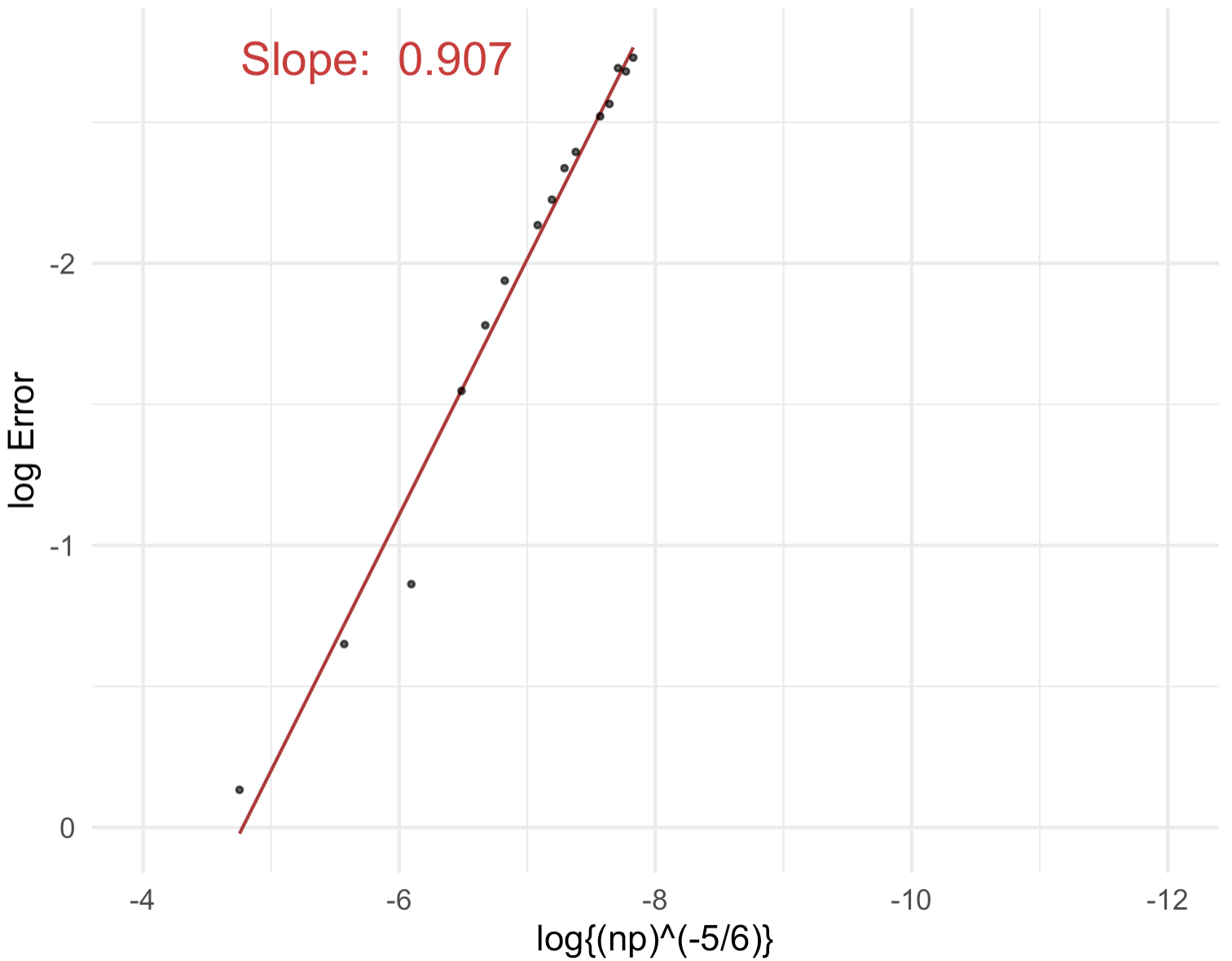}}
  \subfigure[$\beta=0.35$: Estimated Slope 1.039.]{\includegraphics[width=0.32\textwidth]{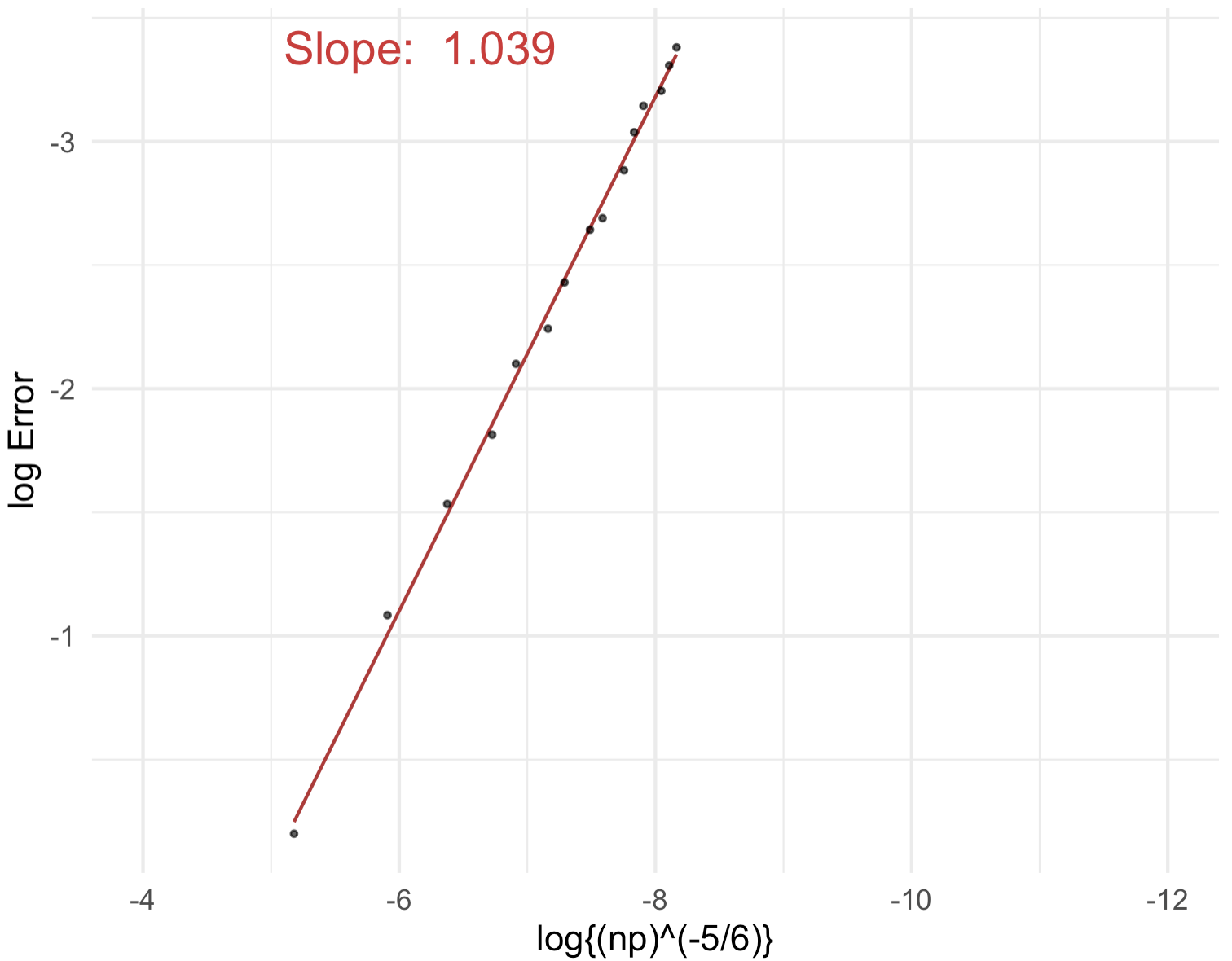}}
\subfigure[$\beta=0.4$: Estimated Slope 0.976.]{\includegraphics[width=0.32\textwidth]{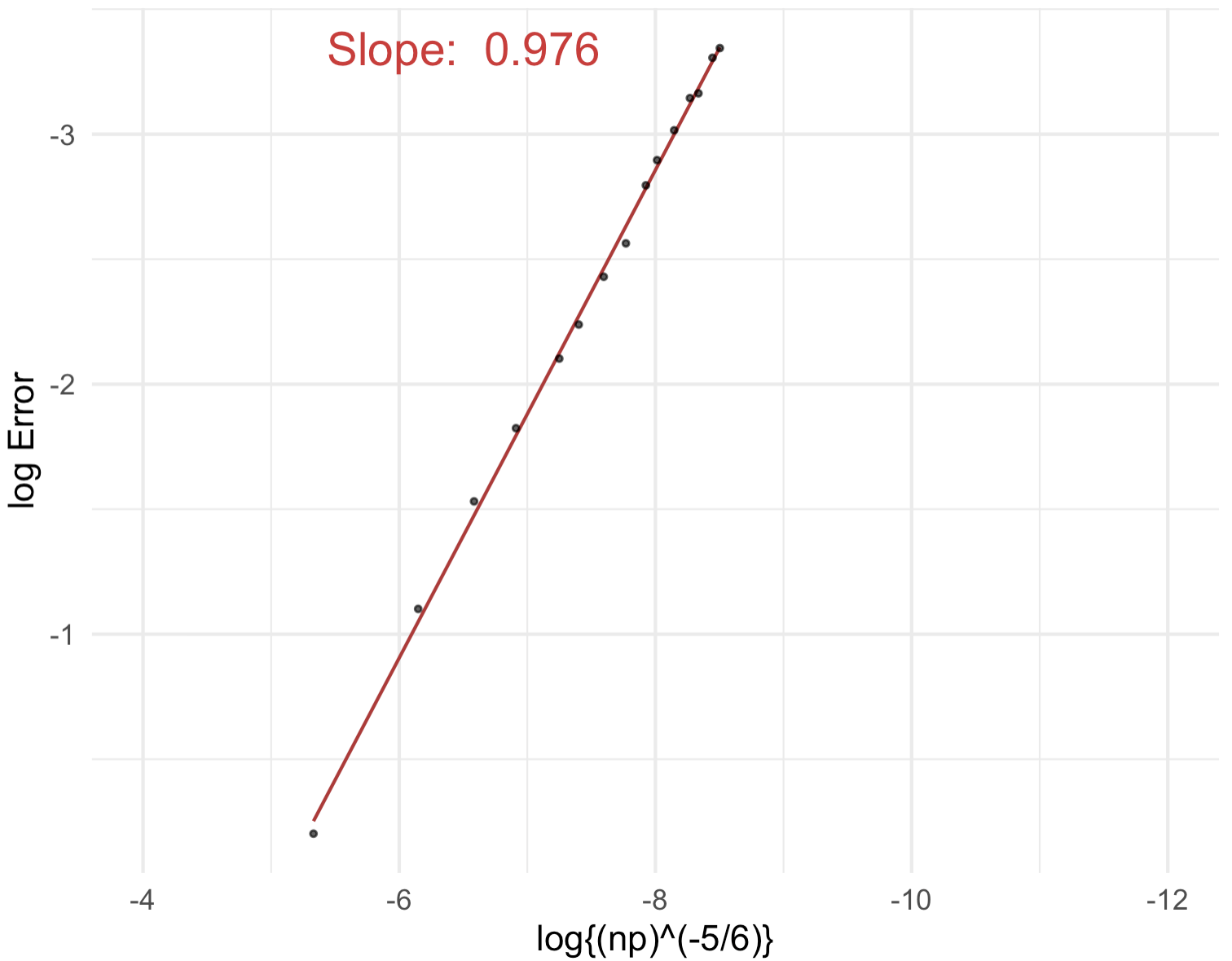}}
\\
  \subfigure[$\beta=0.45$: Estimated Slope 0.983.]{\includegraphics[width=0.33\textwidth]{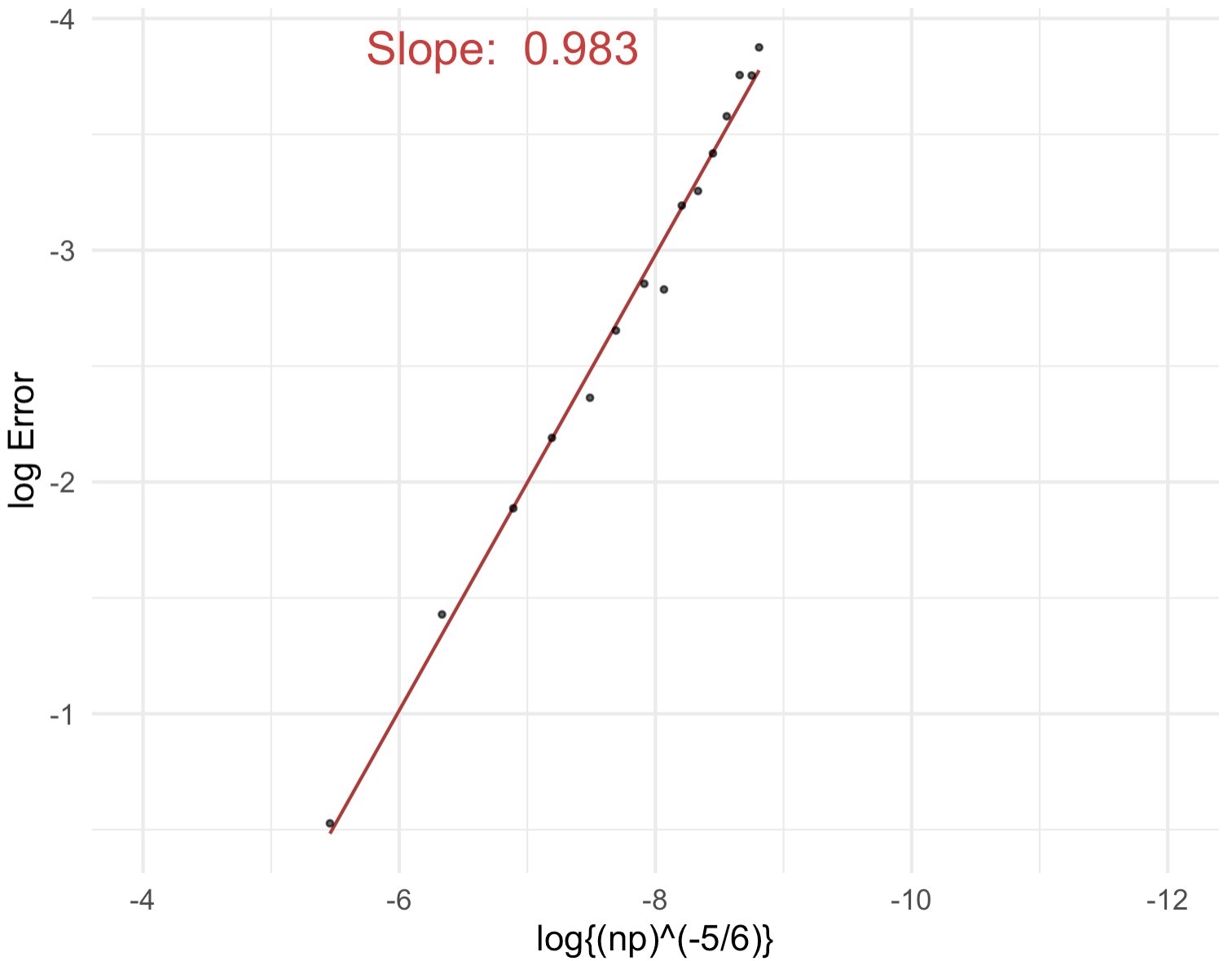}}
\subfigure[$\beta=0.5$: Estimated Slope 0.978.]{\includegraphics[width=0.32\textwidth]{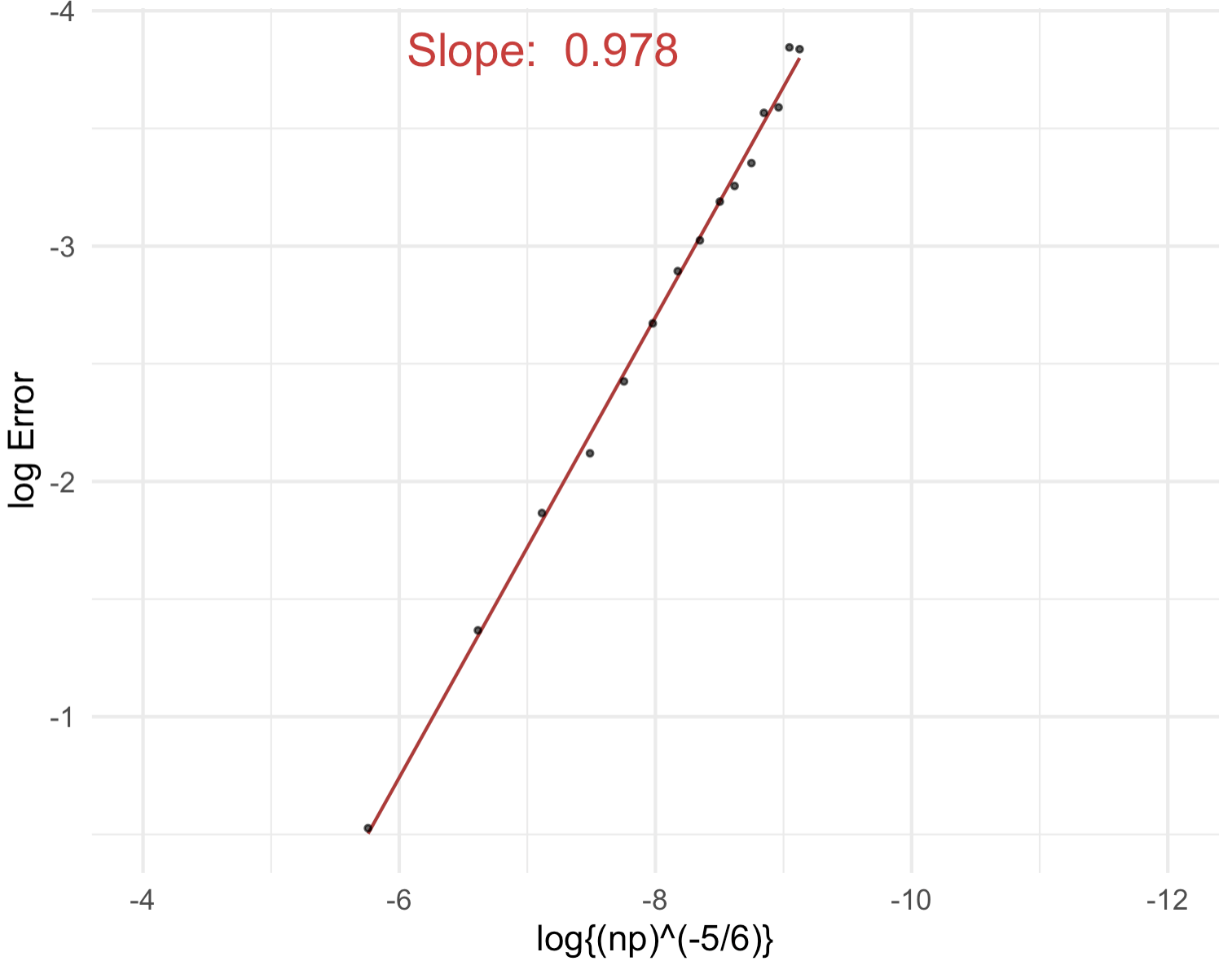}}
\subfigure[$\beta=0.55$: Estimated Slope 0.984.]{\includegraphics[width=0.32\textwidth]{simu_rate_055}}
\\
  \subfigure[$\beta=0.6$: Estimated Slope 0.992.]{\includegraphics[width=0.32\textwidth]{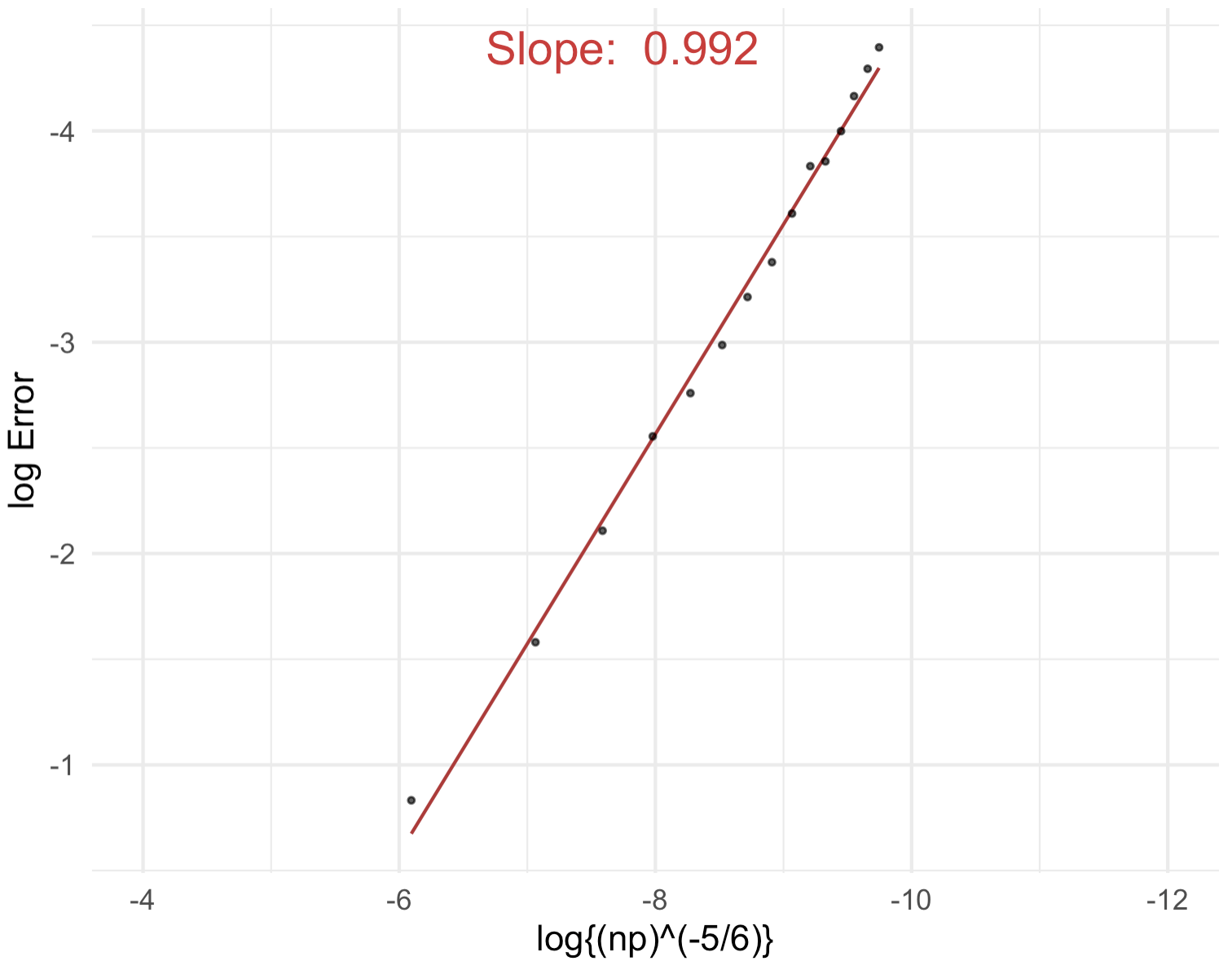}}
    \subfigure[$\beta=0.65$: Estimated Slope 0.990.]{\includegraphics[width=0.32\textwidth]{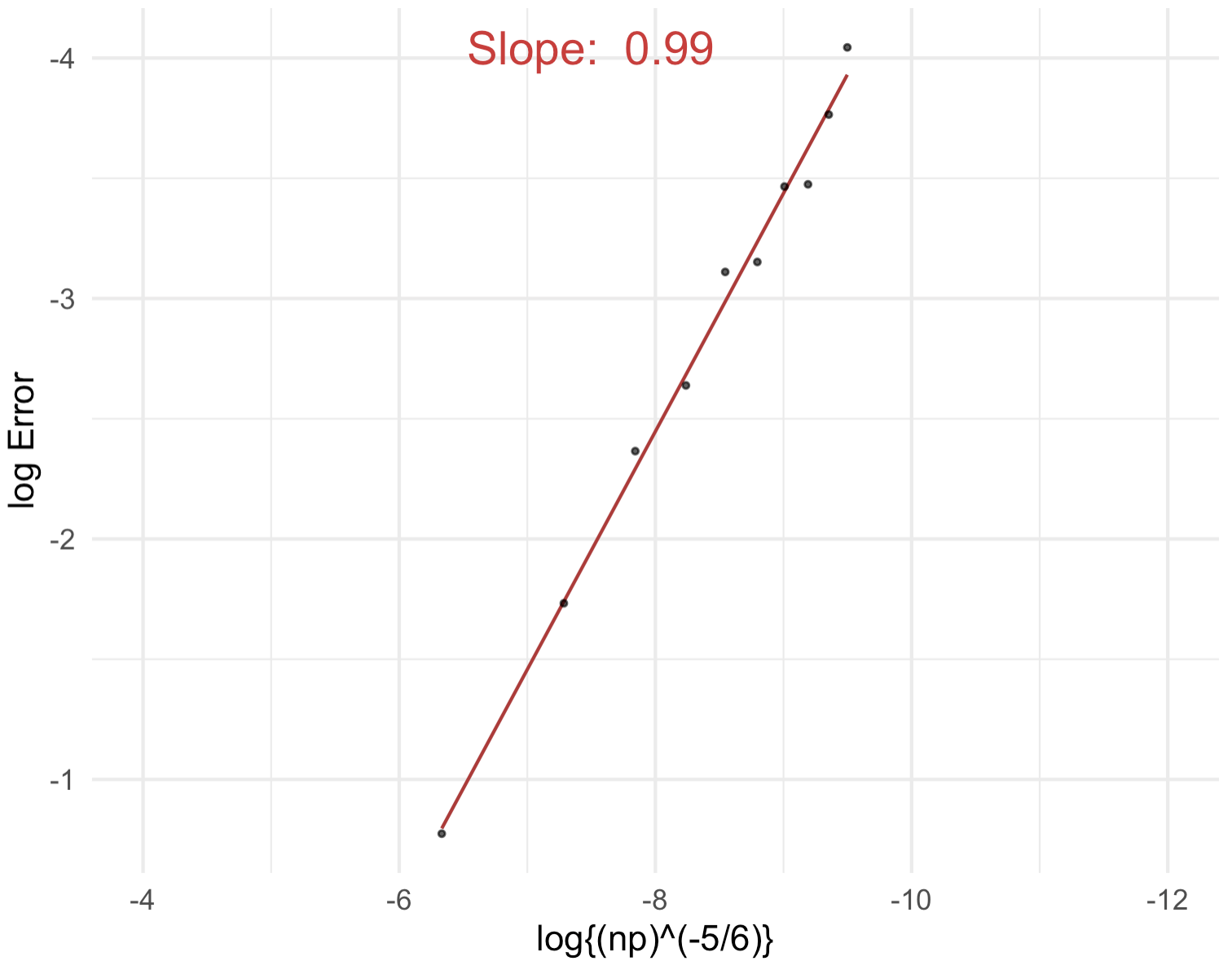}}
  \subfigure[$\beta=0.7$: Estimated Slope 0.990.]{\includegraphics[width=0.32\textwidth]{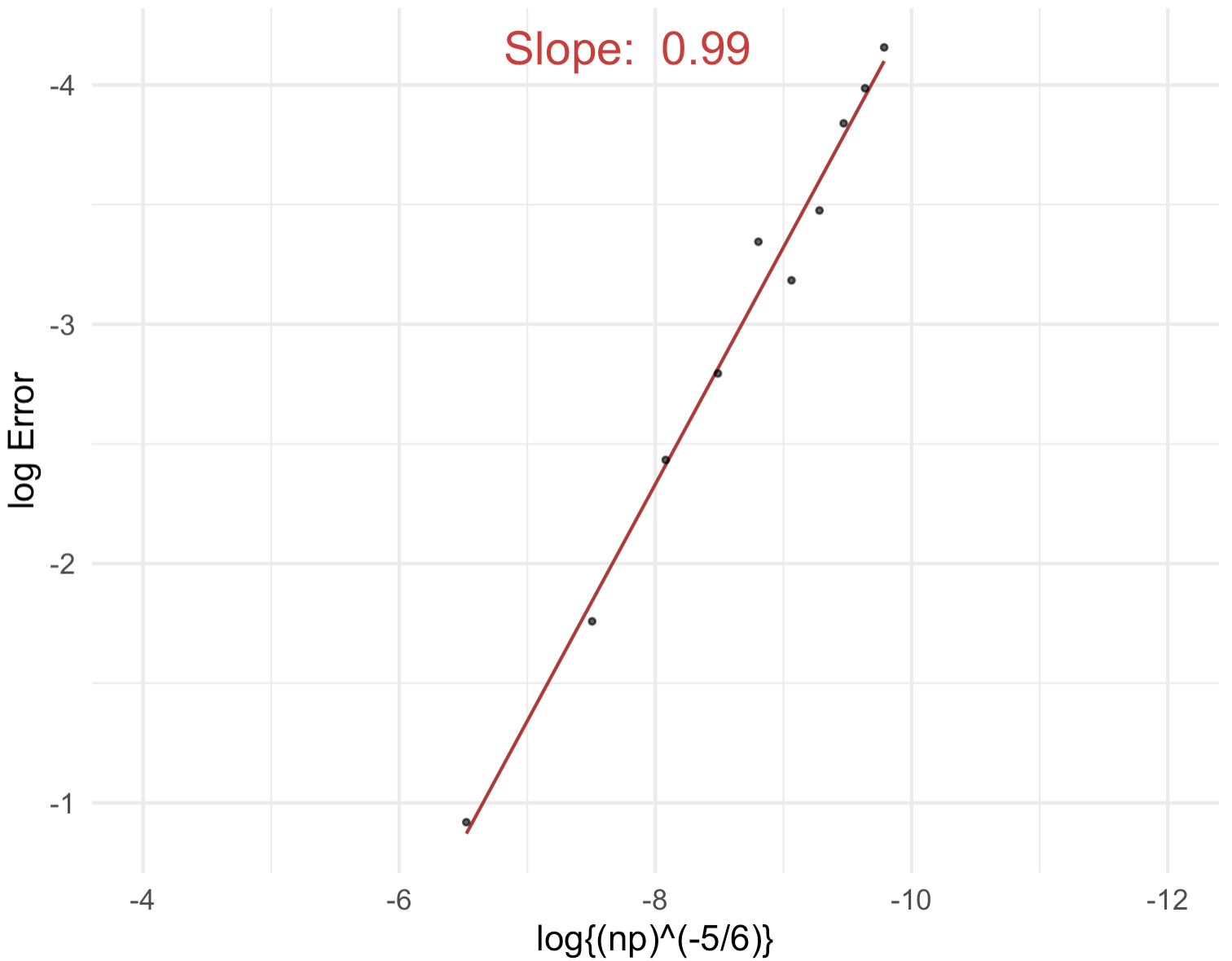}}

\caption{In each panel, all simulated pairs of $(\log\big\{(np)^{-5/6}\big\},  \log\bar{E}_{p,q,n})$ (bullet points) are plotted with specific $\beta$. The fitted regression line (red) and the corresponding estimated slope is reported, while the true slope is $1$ for all panels.
}
\label{supp:fig:slope}
\end{figure}

\subsection{Concentration of Individual Covariance Components}\label{sec:indcov} 
In practice, it may also be of interest to study the covariance structure over the row and/or column directions of the matrix-valued data. Therefore, we study the concentration of our proposed individual matrix estimators $\hat{\M \Sigma}_1^{\eta}(k_1),\hat{\M \Sigma}_2^{\eta}(k_2)$ for $\eta\in\big\{\mathcal{B},\mathcal{T}\big\}$. 
\par
As shown in Proposition \ref{decomposition:uniqueness}, $\hat{\M \Sigma}_1^{\eta}(k_1),\hat{\M \Sigma}_2^{\eta}(k_2)$ are unique up to some constant $c\neq 0$ multiplication. In other words, $\hat{\M \Sigma}_1^{\eta}(k_1),\hat{\M \Sigma}_2^{\eta}(k_2)$ are identifiable only up to scale. To make the theoretical analysis meaningful,  for $a = 1$ or $2$, we normalize a particular $\hat{\M \Sigma}_a^{\eta}(k_a)$ to $c_{a,\eta}\hat{\M \Sigma}_a^{\eta}(k_a)$. Here $c_{a,\eta}$ is a constant determined by the truth ${\M \Sigma}_a^{*,\eta}$, the selected bandwidth $k_a$, and  $\hat{\M \Sigma}_a^{\eta}(k_a)$.
\par
We will show the asymptotic results under Scenario (i). Recall 
\bee\nonumber
r_{\text{var}\mid \hat{\M \Sigma}_2\otimes \hat{\M\Sigma}_1}(k_1,k_2\mid p,q) \equiv \begin{cases}\frac{k_1}{qn} + \frac{k_2}{pn},& pk_1 + qk_2 \precsim n
\\
\big(\frac{k_1k_2}{n}\big)\wedge\big(\frac{pk^2_1}{qn^2} + \frac{qk^2_2}{pn^2}  \big), & pk_1 + qk_2 \succ n.
\end{cases}
\ee
as defined in \eqref{def:rwhole}, which is the error bound of term $E_1$. Similarly, we define
\bee\label{def:ra}
&r_{\text{var}\mid \hat{\M\Sigma}_1}(k_1,k_2\mid p,q) \equiv \begin{cases}\frac{k_1}{qn}& pk_1 \precsim n
\\
\frac{k_1k_2}{n}\wedge \frac{pk^2_1}{qn^2} & pk_1 \succ n,
\end{cases}
\\
&r_{\text{var}\mid \hat{\M\Sigma}_2}(k_1,k_2\mid p,q) \equiv \begin{cases}\frac{k_2}{pn}& qk_2 \precsim n
\\
\frac{k_1k_2}{n}\wedge \frac{qk^2_2}{pn^2} & qk_2 \succ n.
\end{cases}
\ee
\begin{theorem}\label{T:indi}
Suppose the conditions in Theorem \ref{T2} hold, and suppose the divergence regimes of  $k_1,k_2,p,q$ can guarantee 
\bee\label{thm:submatrix:con1}
r_{\text{var}\mid \hat{\M \Sigma}_2\otimes \hat{\M\Sigma}_1}(k_1,k_2\mid p,q) \rightarrow 0,
\ee
when $n\rightarrow +\infty$. Denote $d_a = \M I(a = 1)p + \M I(a = 2)q$. For $a \in\{1,2\}$, define $a^c = \{1,2\} \setminus a$. Then there exists normalization constants $c_{a,\eta}$ determined by $\M\Sigma_a^*$, $k_a$ and $\hat{\M \Sigma}_a^{\eta}(k_a)$, such that,
\bee\label{thm:sub:res1}
&\frac{1}{d_a}\big\|c_{a,\eta}\hat{\M \Sigma}_a^{\eta}(k_a) - \M\Sigma_a^*\big\|_\F^2 
\\
&=\mathcal{O}_{\mathbb{P}}\Big\{ r_{\text{var}\mid \hat{\M \Sigma}_a}(k_1,k_2\mid p,q) + r_{\text{var}\mid \hat{\M \Sigma}_2\otimes \hat{\M\Sigma}_1}(k_1,k_2\mid p,q) \cdot  r_{\text{var}\mid \hat{\M \Sigma}_{a^c}}(k_1,k_2\mid p,q) + \M I_{n,d_a}(k_a)\cdot k_a^{-\tilde\alpha_a}\Big\}.
\ee
\end{theorem}
Analytically finding $k_1,k_2$ to attain the optimal convergence rate of \eqref{thm:sub:res1} is complicated. To this end, we propose a numerical method to select optimal $k_1,k_2$ when $p,q$ are polynomially divergent in Section \ref{optbd:num}.
\begin{remark}
To better understand the rate in \eqref{thm:sub:res1}, we simply consider the degenerate regime where $q\asymp 1$ and $\M\Sigma_1,\M\Sigma_2 \in \mathcal{M}(\varepsilon_0,\alpha)$. Then let $c_{1,\eta} = 1$, our target error is $\frac{1}{p}\|\hat{\M\Sigma}_1^\eta(k_1) - \M\Sigma_1^*\|^2_\F$ for either $\eta = \mathcal{B}$ or $\mathcal{T}$. By definition and the fact that $k_2\asymp q \asymp k_2q \asymp 1$, the corresponding rates in \eqref{thm:sub:res1} become
\bee\nonumber
&r_{\text{var}\mid \hat{\M \Sigma}_2\otimes \hat{\M\Sigma}_1}(k_1,k_2\mid p,q) \asymp \begin{cases}\frac{k_1}{n} + \frac{1}{pn},& pk_1 \precsim n
\\
\big(\frac{k_1}{n}\big)\wedge\big(\frac{pk^2_1}{n^2} + \frac{1}{pn^2}  \big), & pk_1 \succ n,
\end{cases}
\\
&r_{\text{var}\mid \hat{\M\Sigma}_1}(k_1,k_2\mid p,q) \equiv \begin{cases}\frac{k_1}{n}& pk_1 \precsim n
\\
\frac{k_1}{n}\wedge \frac{pk^2_1}{n^2} & pk_1 \succ n,
\end{cases}
\\
&r_{\text{var}\mid \hat{\M\Sigma}_2}(k_1,k_2\mid p,q) \asymp \frac{1}{pn}.
\ee
Moreover, for $r_{\text{var}\mid \hat{\M\Sigma}_1}(k_1,k_2\mid p,q)$, since $\frac{k_1}{n} \succsim \frac{1}{pn}$ always hold as $k_1, p \geq 1$ and when $pk_1 \succ n$ one has $\frac{pk_1^2}{n^2} \succ \frac{nk_1}{n^2} = \frac{k_1}{n} \succ \frac{1}{pn^2}$, we have $r_{\text{var}\mid \hat{\M \Sigma}_2\otimes \hat{\M\Sigma}_1}(k_1,k_2\mid p,q) \asymp \frac{k_1}{n}$. Similarly, we can also show $r_{\text{var}\mid \hat{\M\Sigma}_1}(k_1,k_2\mid p,q) \asymp \frac{k_1}{n}$. 
\par
With all the results above, the convergence rate in \eqref{thm:sub:res1} becomes \begin{eqnarray*}
\frac{1}{p}\|\hat{\M\Sigma}_1^\eta(k_1) - \M\Sigma_1^*\|^2_\F &=&\mathcal{O}_{\mathbb{P}}\Big\{\frac{k_1}{n} + \frac{k_1}{n}\cdot \frac{1}{pn} + \M I_{n,p}(k_1)\cdot k_1^{-2\alpha_1 - 1}\Big\}
\\
&=&\mathcal{O}_{\mathbb{P}}\Big\{\frac{k_1}{n} + \M I_{n,p}(k_1)\cdot k_1^{-2\alpha_1 - 1}\Big\}
\\
& = & \mathcal{O}_{\mathbb{P}}\Big\{\min\left(n^{-\frac{2\alpha_1 + 1}{2\alpha_1 + 2}},\frac{p}{n}\right)\Big\},
\end{eqnarray*}
when selecting $k_1= \min\{n^{\frac{1}{2\alpha_1 + 2}},2p\}$. The second equality above holds because $\frac{1}{pn}\rightarrow 0$ when $n\rightarrow +\infty$, we have $\frac{k_1}{n}\succ \frac{k_1}{n}\cdot \frac{1}{pn}$. The above error rate matches with the minimax optimal Frobenius-norm convergence rate for vector-valued data in \citet{Cai2010}.
\end{remark}
\begin{remark}
 As discussed in Remark \ref{rmk:spec} of the main paper, the proposed individual matrix estimators $\hat{\M \Sigma}^{\eta}_1(k_1)$ and $\hat{\M \Sigma}^{\eta}_2(k_2)$ are obtained by reordering the components of the top right and left singular vectors of $\xi\big\{\hat{\boldsymbol\Sigma}^{\eta}(k_1, k_2)\big\}$ up to some constants. Therefore some singular subspace perturbation bounds can be applied to derive the error rates of the target error in \eqref{thm:sub:res1}. We employ the unilateral singular subspace perturbation bound derived recently by \citet{cai2018rate}, which can specifically and precisely control the perturbation errors of left and right singular subspaces, respectively. \citet{cai2018rate}'s perturbation bound is shown to be rate-optimal in a general condition. As a result, we can show that in \eqref{thm:sub:res1} the divergence regimes of $p,q$ have asymmetric effects on the error rates of $\M\Sigma_1^*$ and $\M\Sigma_2^*$ estimation, which is also intuitively reasonable.  One may also consider to use the classical Wedin's theorem to derive the error bounds. Because Wedin's theorem can only give a uniform perturbation bound for both left and right singular subspaces, it leads to an identical error rate for both $\M\Sigma_1^*$ and $\M\Sigma_2^*$, which, as shown in \citet{cai2018rate}, is likely to be sub-optimal when $p$ and $q$ have unbalanced divergence rates.
\end{remark}
\begin{remark}
As $r_{\text{var}\mid \hat{\M \Sigma}_2\otimes \hat{\M\Sigma}_1}(k_1,k_2\mid p,q) $ is one term in the entire error upper bound \eqref{T2:res1} of the proposed estimator, as long as $p,q,k_1,k_2$ can guarantee a convergence of the proposed estimator, condition \eqref{thm:submatrix:con1} is satisfied.
\end{remark}

\begin{remark}
We interpret the error rates in \eqref{thm:sub:res1} as follows. Take $a = 1$ as an example. The target error $\big\| c_{1,\eta}\hat{\M \Sigma}_1^{\eta}(k_1) - \M\Sigma_1^*\big\|_\F^2/p$ can be decomposed into two error terms via triangle inequality,
\bee\nonumber
\frac{1}{p}\big\|c_{1,\eta}\hat{\M \Sigma}_1^{\eta}(k_1) - \M\Sigma_1^*\big\|_\F^2 &\precsim \underbrace{\frac{1}{p}\big\| c_{1,\eta}\hat{\M \Sigma}_1^{\eta}(k_1) -  \M\Sigma_1^{*,\eta}(k_1)\big\|_\F^2}_{E_1^{(p)}} 
+\underbrace{\frac{1}{p}\big\|\M\Sigma^{*,\eta}_1(k_1) - \M\Sigma_1^*\big\|_\F^2}_{E_2^{(p)}}.
\ee
Similar to \eqref{err:decom}, $E_1^{(p)}$ is the entrywise mean squared error of the normalized proposed component estimator $c_{1,\eta}\hat{\M \Sigma}_1^{\eta}(k_1)$ to the banded or tapering true covariance component, while $E_2^{(p)}$ is the thresholding error caused by banded or tapering over the $p$ direction.
\par
In \eqref{thm:sub:res1}, the first two terms correspond to the error rate of $E_1^{(p)}$, 
and the last term is the error rate of $E_2^{(p)}$. When $\M \Sigma_1^{*}\in \mathcal{M}(\varepsilon_0, \alpha_1),\M \Sigma_2^{*} \in \mathcal{M}(\varepsilon_0, \alpha_2)$, the last term becomes the bias term under the degenerate regime when $\alpha$ equals $\alpha_1\text{ or }\alpha_2$, the error rate of which has been derived in (37) of \citet{Cai2010}.
\end{remark}
\subsection{Spectral-norm Concentration}\label{sec:disc:spnorm}
Our theorems in Section \ref{sec:T} mainly focus on the convergence under the Frobenius norm; and it is also of interest to study the convergence under the spectral norm. For vector-valued data (i.e., the degenerate regime), spectral-norm convergence results for banded and tapering covariance estimators have been studied in \citet{Bickel2008threshold, Cai2010}. We refer readers to \citet{cai2016estimating} for a thorough survey. Next we discuss the challenge and possible future work directions for spectral-norm convergence. 
\par
For vector data, the spectral-norm error is usually obtained by utilizing the inherent linearity of the banded and tapering covariance estimator via some entrywise or block-wise technical arguments. For example, \citet{Bickel2008threshold} obtained a spectral-norm error bound for the banded estimator  based on some entrywise inequality scaling  and an entrywise maximal error bound. \citet{Cai2010} used  explicit blockwise sub-matrix decomposition analysis and obtained a minimax rate-optimal error rate.
\par
However, for matrix data, our proposed estimator is generally not linear when both $p$ and $q$ diverge. This is because the proposed estimation components $\hat{\M\Sigma}_1^{\eta}(k_1)$ and $\hat{\M\Sigma}_2^\eta(k_2)$ are derived by reordering the components of the leading right and left singular vectors of $\xi\{\tilde{\M\Sigma}_\eta(k_1,k_2)\}$ and multiplying with the leading singular value of $\xi\{\tilde{\M\Sigma}_\eta(k_1,k_2)\}$. Therefore, the usual entrywise and block-wise techniques can not be similarly applied under our framework. In fact, the existing literature in singular subspace perturbation mainly focus on $\ell_2$-norm perturbation bound \citep{wedin1972perturbation, yu2015useful, o2018random, cai2018rate} and $\ell_{2,\infty}$-norm perturbation bound \citep{cape2019two, lei2019unified, abbe2020entrywise}, which are not sufficient to provide an accurate spectral-norm perturbation analysis for our estimators obtained from the leading singular vectors. Developing more precise matrix perturbation bounds is an important future work direction.{\color{black}
\subsubsection{Spectral-norm Approximation Variation}
An alternative way to attain the spectral-norm concentration to $\M\Sigma^*$, is to consider a variation of the proposed method: an estimator based on the Kronecker product approximation of the doubly banded or tapering estimator under  the \textit{spectral} norm, instead of the original Frobenius norm used in \eqref{eq:band:0}. In particular, we consider $(\hat{\M\Sigma}_1^{\mathcal{\eta,\mathcal{S}}}(k_1),\hat{\M\Sigma}_2^{\mathcal{\eta,\mathcal{S}}}(k_2))$ that solves the approximation problem:
\bee\label{sec:dis:s}
  \argmin_{\bSigma_1, \bSigma_2}\big\| \widetilde{\M\Sigma}_{\eta}(k_1,k_2)-\bSigma_2 \otimes \bSigma_1\big\|_2^2,
\ee 
with $\eta\in\{\mathcal{B},\mathcal{T}\}$. Then $\hat{\M \Sigma}^{\eta,\mathcal{S}} = \hat{\M\Sigma}_2^{\mathcal{\eta,\mathcal{S}}}(k_2) \otimes \hat{\M\Sigma}_1^{\mathcal{\eta,\mathcal{S}}}(k_1)$ can be treated as an alternative estimator of $\M \Sigma^*$. 
In fact, we have the following spectral-concentration result for our newly proposed tapering estimator.
\begin{theorem}\label{Ts}
Let $\vecc(\M X_1), \vecc(\M X_2),\dots,\vecc(\M X_n)$ be i.i.d. sub-Gaussian random vectors in $\RR^{pq}$ with true covariance $\M \Sigma^* = \M \Sigma^*_2 \otimes \M \Sigma^*_1$, where $\M \Sigma_1^{*}\in \mathcal{F}(\varepsilon_0, \alpha_1),\M \Sigma_2^{*} \in \mathcal{F}(\varepsilon_0, \alpha_2)$ or $\M \Sigma_1^{*}\in \mathcal{M}(\varepsilon_0, \alpha_1),\M \Sigma_2^{*} \in \mathcal{M}(\varepsilon_0, \alpha_2)$. If at least one of $p,q$ is polynomially divergent (faster than some $n^b$ with $b > 0$), we have
\bee\label{Ts:res1}
\E\Big(\big\|\hat{\M \Sigma}^{\mathcal{T},\mathcal{S}} - \M\Sigma^*\big\|_{2}^2\Big)\precsim \frac{k_1k_2 + \log(\max\{p,q\})}{n} + \M I(k_1< 2p-2)\cdot k_1^{-2\alpha_1} +  \M I(k_2< 2q-2)\cdot k_2^{-2\alpha_2}.
\ee
Otherwise, we can simply take $k_1 = 2p-2, k_2 = 2q-2$, and have
\bee\label{Ts:res2}
\E\Big(\big\|\hat{\M \Sigma}^{\mathcal{T},\mathcal{S}} - \M\Sigma^*\big\|^2_{2}\Big) \precsim  \frac{pq}{n}.
\ee 
\end{theorem}
The selection of $k_1,k_2$ to attain the optimal convergence rate of \eqref{Ts:res1} under different divergence regimes of $p,q$ is discussed in Section \ref{optbd:Ts}.
\begin{remark}
Due to the term $\frac{\log\{\max(p,q)\}}{n}$, the rate \eqref{Ts:res1} converges to zero under the ultrahigh-dimensional regime such that $p,q$ satisfy $p,q\precsim \exp(n).$
\end{remark}
\begin{remark}
In fact $\hat{\M \Sigma}^{\mathcal{B},\mathcal{S}}$ is also consistent under the spectral norm when $p,q \precsim \exp(n)$.  However, showing the rate-optimality of the bandable estimator under the spectral norm is an unresolved problem, even under the degenerate regime. Therefore, we omit the sub-optimal results of $\hat{\M \Sigma}^{\mathcal{B},\mathcal{S}}$ for simplicity. 
\end{remark}
\begin{remark}
For the degenerate regime such as $q=1$, we can take $k_2 = 1$. Then the convergence rate of \eqref{Ts:res1} in Theorem \ref{Ts} becomes 
\bee\nonumber
&\frac{k_1\times 1 + \log(\max\{p,1\})}{n} + \M I(k_1< 2p-2)\cdot k_1^{-2\alpha_1} +\M I(1< 0)\cdot k_1^{-2\alpha_2}
\\
&= \frac{k_1 + \log(p)}{n} + \M I(k_1< 2p-2)\cdot k_1^{-2\alpha_1}, 
\ee
which agrees with the spectral-norm convergence rate of \citet{Cai2010}'s rate-optimal tapering covariance estimator,  for vector-valued data. Thus $\hat{\M \Sigma}^{\mathcal{T},\mathcal{S}}$ is rate-optimal under the degenerate regime. 
\end{remark}
To this end, we have shown that the newly proposed estimator enjoys desired spectral concentration property to $\M\Sigma^*$ while maintaining the separability structure. However, to the best of our knowledge, unlike the Frobenius-norm case, it remains an open question to find an efficient algorithm to solve the spectral-norm Kronecker product approximation problem as in \eqref{sec:dis:s}. Therefore, a more thorough study of this newly proposed estimator is beyond the scope of this paper, and we leave it for future work.
\begin{remark}
One can show the solution of the spectral-norm approximation problem \eqref{sec:dis:s} is generally different from the solution of the Frobenius-norm approximation  problem \eqref{eq:band:0}.

\par
We see this via a simple example. Consider a $4\times 4$ doubly tapering or banded estimator $\widetilde{\M\Sigma}_{\eta}(k_1,k_2) = \diag(1,2,3,4)$. Our original proposed estimator by Frobenius-norm approximation (rounded for display) is 
$$
\hat{\M\Sigma}_2^{\mathcal{\eta}}(k_2) \otimes \hat{\M\Sigma}_1^{\mathcal{\eta}}(k_1)=
\begin{pmatrix}
1.2736&&&&
\\
& 1.8072&&
\\&&2.8790&
\\
&&&4.0853
\end{pmatrix},
$$ where a particular group of solution $\big(\hat{\M\Sigma}_2^{\mathcal{\eta}}(k_2), \hat{\M\Sigma}_1^{\mathcal{\eta}}(k_1)\big)$ is
\bee
\hat{\M\Sigma}_1^{\mathcal{\eta}}(k_1) = \begin{pmatrix}
3.1481 & 
\\
& 4.4672
\end{pmatrix}, \hat{\M\Sigma}_2^{\mathcal{\eta}}(k_2) = \begin{pmatrix}
0.4046 & 
\\
& 0.9145
\end{pmatrix}.
\ee
Then the spectral-norm approximation error is $\big\| \widetilde{\M\Sigma}_{\eta}(k_1,k_2)-\hat{\M\Sigma}_2^{\mathcal{\eta}}(k_2) \otimes \hat{\M\Sigma}_1^{\mathcal{\eta}}(k_1)\big\|_2 = 0.2736$. However, if we sightly perturb $\big(\hat{\M\Sigma}_2^{\mathcal{\eta}}(k_2), \hat{\M\Sigma}_1^{\mathcal{\eta}}(k_1)\big)$ as
\bee
\hat{\M\Sigma}_1^{\mathcal{\eta,\mathcal{P}}}(k_1) = \begin{pmatrix}
3.1481 + 0.01 & 
\\
& 4.4672 + 0.01
\end{pmatrix}, \hat{\M\Sigma}_2^{\mathcal{\eta,\mathcal{P}}}(k_2) = \begin{pmatrix}
0.4046 - 0.01 & 
\\
& 0.9145 - 0.01
\end{pmatrix},\nonumber
\ee  
then the spectral approximation error of $\hat{\M\Sigma}_2^{\mathcal{\eta},\mathcal{P}}(k_2) \otimes \hat{\M\Sigma}_1^{\mathcal{\eta},\mathcal{P}}(k_1)$ becomes $\big\| \widetilde{\M\Sigma}_{\eta}(k_1,k_2)-\hat{\M\Sigma}_2^{\mathcal{\eta,\mathcal{P}}}(k_2) \otimes \hat{\M\Sigma}_1^{\mathcal{\eta,\mathcal{P}}}(k_1)\big\|_2 = 0.2460$ which is significantly smaller than the approximation error by $\hat{\M\Sigma}_2^{\mathcal{\eta}}(k_2) \otimes \hat{\M\Sigma}_1^{\mathcal{\eta}}(k_1)$. This indicates that $(\hat{\M\Sigma}_1^{\mathcal{\eta}}(k_1), \hat{\M\Sigma}_2^{\mathcal{\eta}}(k_2))$ is not a solution to the spectral-norm approximation problem \eqref{sec:dis:s}. 
\end{remark}
}
\subsubsection{Optimal Bandwidth Selection of Theorem \ref{Ts}}\label{optbd:Ts}
Based on different divergence regimes of $(p,q)$, we aim to find optimal selection of $k_1,k_2$ to minimize the convergence rate of the following term:
\bee\label{Tts:target}
r_3(k_1,k_2\mid p,q) = \frac{k_1k_2 + \log(\max\{p,q\})}{n} + k_1^{-2\alpha_1}I(k_1 <2{p}-2) + k_2^{-2\alpha_2}I(k_2 < 2{q}-2).
\ee
\par
We note $r_3(k_1,k_2\mid p,q)$ can be reordered as
\bee\label{ts:select:r3}
r_3(k_1,k_2\mid p,q) &= \frac{k_1k_2}{n} + k_1^{-2\alpha_1}I(k_1 <2{p}-2) + k_2^{-2\alpha_2}I(k_2 < 2{q}-2) + \frac{\log(\max\{p,q\})}{n}
\\
&= r_2(k_1,k_2\mid p,q,\mathcal{T}) + \frac{\log(\max\{p,q\})}{n}.
\ee
For first term in \eqref{ts:select:r3}, the optimal rate of $r_2(k_1,k_2\mid p,q,\mathcal{T})$ is shown in Section \ref{optbd:T3}. Since the last term $\frac{\log(\max\{p,q\})}{n}$ in \eqref{ts:select:r3} is not to do with $k_1,k_2$, the optimal $k_{1},k_2$ selection for $r_3(k_1,k_2\mid p,q)$, is same with the optimal $k_{1},k_2$ selection for $r_2(k_1,k_2\mid p,q)$. Thus $$k_{1,\text{opt}}^{(3)} = k_{1,\text{opt}}^{(2)}\text{ and } k_{2,\text{opt}}^{(3)} = k_{2,\text{opt}}^{(2)}$$
for all different divergence regimes of $p,q$.  Then we take the maximum between optimal rate $r_2\big(k_{1,\text{opt}}^{(2)},k_{2,\text{opt}}^{(2)}\mid p,q,n\big)$ and the last term $\log(\max\{p,q\})/n$, and get the final optimal convergence rate of $r_3(k_1,k_2\mid p,q)$.
\par
Similar to Section \ref{optbd:T3}, we derive the optimal rates of $r_3(k_1,k_2\mid p,q)$ under following three scenarios.
\par
\
\par
\noindent{\textbf{Scenario 1 (both $p,q$ diverge fast):}}
First we consider when $p \succ n^{\alpha_2/(\alpha_1+\alpha_2+2\alpha_1\alpha_2)}, q\succ n^{\alpha_1/(\alpha_1+\alpha_2+2\alpha_1\alpha_2)}$. By \eqref{r2:opt} and \eqref{r2:kopt}, we can take $k_{1,\text{opt}}^{(3)} = k_{1,\text{opt}}^{(2)} = n^{\alpha_2/(\alpha_1+\alpha_2+2\alpha_1\alpha_2)}$ and $k_{2,\text{opt}}^{(3)} = k_{2,\text{opt}}^{(2)} = n^{\alpha_1/(\alpha_1+\alpha_2+2\alpha_1\alpha_2)}$. Then the optimal rate of $r_2(k_1,k_2\mid p,q,\mathcal{T})$ becomes $n^{-2\alpha_1\alpha_2/(\alpha_1 +\alpha_2 + 2\alpha_1\alpha_2)}$. Taking the term $\log(\max\{p,q\})/n$ in \eqref{ts:select:r3} into account,  we have
\bee\nonumber
r_3\big(k_{1,\text{opt}}^{(3)},k_{2,\text{opt}}^{(3)}\mid p,q\big)\asymp \max\Bigg\{n^{-2\alpha_1\alpha_2/(\alpha_1 + \alpha_2 + 2\alpha_1\alpha_2)},\frac{\log(\max\{p,q\})}{n}\Bigg\}.
\ee
\\
\par
\noindent{\textbf{Scenario 2 (one of $p,q$ diverges slowly):}} By symmetry, we only consider when $p\precsim n^{\alpha_2/(\alpha_1+\alpha_2+2\alpha_1\alpha_2)}, q\succ (n/p)^{\frac{1}{2\alpha_2 + 1}}$. We can take $k_{1,\text{opt}}^{(3)} = k_{1,\text{opt}}^{(2)}  = 2p-1$ (here $\eta = \mathcal{T}$), and $k_{2,\text{opt}}^{(3)} = k_{2,\text{opt}}^{(2)} = (n/p)^{\frac{1}{2\alpha_2 + 1}}$. The optimal rate of $r_2(k_1,k_2,\mid p,q,\mathcal{T})$ is $\big(n/p\big)^{-\frac{2\alpha_2}{2\alpha_2 + 1}}$. Taking the term $\log(\max\{p,q\})/n$ in \eqref{ts:select:r3} into account,  we have
\bee\nonumber
r_3\big(k_{1,\text{opt}}^{(3)},k_{2,\text{opt}}^{(3)}\mid p,q\big) &\asymp \max\Bigg\{\big(n/p\big)^{-\frac{2\alpha_2}{2\alpha_2 + 1}},\frac{\max\{\log(p),\log(q)\}}{n}\Bigg\}
\\
&\asymp\max\Bigg\{\big(n/p\big)^{-\frac{2\alpha_2}{2\alpha_2 + 1}},\frac{\log(q)}{n}\Bigg\},
\ee
since $\log(p)/n \precsim \log(n)/n \prec \big(n/p\big)^{-\frac{2\alpha_2}{2\alpha_2 + 1}}$ under this scenario.
\\
\par
\noindent{\textbf{Scenario 3 (both $p,q$ diverge slowly):}}
By symmetry, we only consider when $p\precsim n^{\alpha_2/(\alpha_1+\alpha_2+2\alpha_1\alpha_2)}, q\precsim {n}^{\frac{1}{2\alpha_2 + 1}}$. Noting here $\eta = \mathcal{T}$, we can take $k_{1,\text{opt}}^{(3)} = k_{1,\text{opt}}^{(2)}  = 2p-1$ and $k_{2,\text{opt}}^{(3)} = k_{2,\text{opt}}^{(2)} = 2q - 1$.  The optimal rate of $r_2(k_1,k_2\mid p,q,\mathcal{T})$ is $pq/n$. Since $\frac{\log\{\max(p,q)\}}{n}\asymp \frac{\log(p) + \log(q)}{n}\asymp \frac{\log(pq)}{n}$, it is easy to see 
$$
\frac{pq}{n} \succsim \frac{\log(pq)}{n} \asymp\frac{\log\{\max(p,q)\}}{n}.
$$ Then finally we have
\bee\nonumber
r_3\big(k_{1,\text{opt}}^{(3)},k_{2,\text{opt}}^{(3)}\mid p,q\big) \asymp \max\Big\{pq/n, \log(\max\{p,q\})/n\Big\} \asymp pq/n.
\ee
\par
\
\par
\noindent{\textbf{Summarization:}} Under different divergence regimes of $p,q$, we summarize the optimal rate of $r_3(k_1,k_2\mid p,q) = r_3\big(k_{1,\text{opt}}^{(3)},k_{2,\text{opt}}^{(3)}\mid p,q\big) $ and corresponding $k_{1,\text{opt}}^{(3)}, k_{2,\text{opt}}^{(3)}$ as following:
\bee\label{r3:opt}
&r_3\big(k_{1,\text{opt}}^{(3)},k_{2,\text{opt}}^{(3)}\mid p,q\big)  
\\
&\asymp 
\begin{cases}
\max\big\{n^{-2\alpha_1\alpha_2/(\alpha_1 + \alpha_2 + 2\alpha_1\alpha_2)},\frac{\log(\max\{p,q\})}{n}\big\} &\text{when } p \succ {n}^{\tilde\alpha_2/(\tilde\alpha_1+\tilde\alpha_2+\tilde\alpha_1\tilde\alpha_2)}, q\succ {n}^{\tilde\alpha_1/(\tilde\alpha_1+\tilde\alpha_2+\tilde\alpha_1\tilde\alpha_2)}
\\
\max\big\{\big(n/p\big)^{-\frac{2\alpha_2}{2\alpha_2 + 1}},\frac{\log(q)}{n}\big\} & \text{when }  p \precsim {n}^{\tilde\alpha_2/(\tilde\alpha_1+\tilde\alpha_2+\tilde\alpha_1\tilde\alpha_2)}, q\succ \big({n}/p\big)^{\frac{1}{\tilde\alpha_2 + 1}}
\\
\max\big\{\big(n/q\big)^{-\frac{2\alpha_1}{2\alpha_1 + 1}},\frac{\log(p)}{n}\big\} & \text{when }  q \precsim {n}^{\tilde\alpha_1/(\tilde\alpha_1+\tilde\alpha_2+\tilde\alpha_1\tilde\alpha_2)}, p\succ \big({n}/q\big)^{\frac{1}{\tilde\alpha_1 + 1}}
\\
\frac{pq}{n} & \text{when }
\begin{cases}
p \precsim {n}^{\tilde\alpha_2/(\tilde\alpha_1+\tilde\alpha_2+\tilde\alpha_1\tilde\alpha_2)}, q\precsim \big({n}/p\big)^{\frac{1}{\tilde\alpha_2 + 1}}
\\
q \precsim {n}^{\tilde\alpha_1/(\tilde\alpha_1+\tilde\alpha_2+\tilde\alpha_1\tilde\alpha_2)}, p\precsim \big({n}/q\big)^{\frac{1}{\tilde\alpha_1 + 1}}
\end{cases},
\end{cases}
\ee
with,
\bee\label{r3:kopt}
\big(k_{1,\text{opt}}^{(3)},k_{2,\text{opt}}^{(3)}\big) &= \big(k_{1,\text{opt}}^{(2)},k_{2,\text{opt}}^{(2)}\big) 
\\
&=
\begin{cases}
\big({n}^{\tilde\alpha_2/(\tilde\alpha_1+\tilde\alpha_2+\tilde\alpha_1\tilde\alpha_2)},{n}^{\tilde\alpha_1/(\tilde\alpha_1+\tilde\alpha_2+\tilde\alpha_1\tilde\alpha_2)}\big) & \text{when }p \succ {n}^{\tilde\alpha_2/(\tilde\alpha_1+\tilde\alpha_2+\tilde\alpha_1\tilde\alpha_2)}, q\succ {n}^{\tilde\alpha_1/(\tilde\alpha_1+\tilde\alpha_2+\tilde\alpha_1\tilde\alpha_2)}
\\
\big(\tilde p, \{n/p\}^{\frac{1}{\tilde{\alpha}_2 + 1}}\big)& \text{when }  p \precsim {n}^{\tilde\alpha_2/(\tilde\alpha_1+\tilde\alpha_2+\tilde\alpha_1\tilde\alpha_2)}, q\succ \big({n}/p\big)^{\frac{1}{\tilde\alpha_2 + 1}}
\\
\big(\tilde q, \{n/q\}^{\frac{1}{\tilde{\alpha}_1 + 1}}\big)& \text{when }  q \precsim {n}^{\tilde\alpha_1/(\tilde\alpha_1+\tilde\alpha_2+\tilde\alpha_1\tilde\alpha_2)}, p\succ \big({n}/p\big)^{\frac{1}{\tilde\alpha_1 + 1}}
\\
\big(\tilde p,\tilde q\big) & \text{when }\begin{cases}
p \precsim {n}^{\tilde\alpha_2/(\tilde\alpha_1+\tilde\alpha_2+\tilde\alpha_1\tilde\alpha_2)}, q\precsim \big({n}/p\big)^{\frac{1}{\tilde\alpha_2 + 1}}
\\
q \precsim {n}^{\tilde\alpha_1/(\tilde\alpha_1+\tilde\alpha_2+\tilde\alpha_1\tilde\alpha_2)}, p\precsim \big({n}/q\big)^{\frac{1}{\tilde\alpha_1 + 1}}
\end{cases}.
\end{cases}
\ee

\section{Discussion of Theorems in Section \ref{sec:RE}}\label{sec:supps3}
\subsection{Thresholding Factor $J_n$}\label{sec:Jn}{\color{black}
Recall that we have defined the thresholding factor $J_n$ in Assumption \ref{A:psi2} such that 
\bee\label{robust:def}
\|\vecc(\widecheck{\M X}_i)\|_{\psi_2} \leq J_n\tau,
\ee when $n\rightarrow +\infty$. As discussed in Section \ref{sec:RE} in the main paper, a trivial upper bound of \eqref{robust:def} holds with $J_n \asymp \sqrt{pq}$, by basic property of sub-Gaussian norm and triangle inequality. In this section, we further show that, for the bandable covariance structure that is specifically considered in this paper,  the trivial bound of \eqref{robust:def} can be typically sharpened with a slowly growing $J_n$ or $J_n \asymp 1$, for any heavy-tailed distribution. Our key observation is that the growing rate of $J_n$ can be determined by the dependence structure of $\M X$. 
This observation makes it possible to better control $J_n$.
\par To see this, consider the following example of a block-wise dependent structure of data matrix $\M X$: 
\begin{itemize}
\item For the ease of exposition, let $B$ be a fixed integer, $\M X$ be a square data matrix in $\RR^{p\times p}$, and $p = A_n B$ with some positive integer $A_n$, where $A_n$ is allowed to grow with $n$.
\item Divide $\M X$ into $(p/B)^2$ different $B\times B$ block matrices and assume the entries in $\M X$ are independent of each other across the blocks. 
\end{itemize}This example can be related to the spatial-temporal setting, and will result in  bandable structures of $\M\Sigma_1^*$ and $\M\Sigma_2^*$ when the covariance matrix of $\M X$ is separable. In particular, $\M\Sigma_1^*$ and $\M \Sigma_2^*\in\RR^{p\times p}$ can only have non-zero elements on the diagonal $B\times B$ block matrices. We further illustrate the block-wise dependence structure of $\M X$ and the corresponding bandable covariance structures of $\M \Sigma_1^*$ and $\M \Sigma_2^*$ in Figure \ref{fig:growJn}.
\begin{figure}
   \centering
\includegraphics[width=0.6\textwidth]{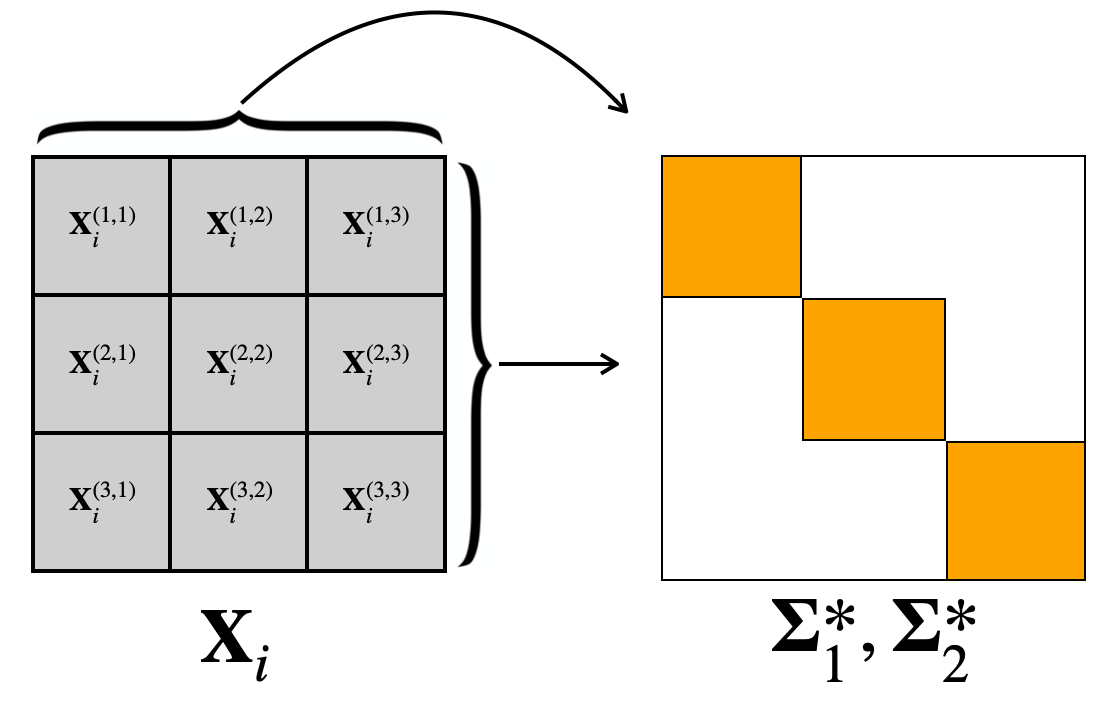}
\caption{Illustration of $\M X_i$'s dependence structure and $\M\Sigma_1^*,\M\Sigma_2^*$'s bandable structures with $p/B = 3$. Different sub-matrices (grey) on the left-hand side represents different block matrices $\M X_i^{(b_1,b_2)}\in \RR^{B\times B}$ within $\M X_i$. They are independent of each other. The matrix on the right-hand side represents either $\M \Sigma_1^*$ or $\M\Sigma_2^*$, which can take non-zero values only on the $B\times B$ diagonal block matrices (orange).}
\label{fig:growJn}
\end{figure}
\par
Now we show that under the block-wise dependent structure of $\M X$, the thresholding factor $J_n$ in \eqref{robust:def} can be sharpened to a constant. Let $\widecheck{\M X}^{(b_1,b_2)}_i$ be the $b_1b_2$th block in $\widecheck{\M X}_i$, corresponding to the original block matrix $\M X_i^{(b_1,b_2)}$ as illustrated in Figure \ref{fig:growJn}. 
First, for any specific group of $(b_1,b_2)$ and any $\mathbf{v}_{b_1,b_2}\in\RR^{B^2}$ such that $\|\mathbf{v}_{b_1,b_2}\|=1$, by triangle inequality and the definition of sub-Gaussian norm in Section \ref{def:sesgrv}, we have 
\bee\label{bind:1}
&\|\vecc(\widecheck{\M X}^{(b_1,b_2)}_i) - \E\vecc(\widecheck{\M X}^{(b_1,b_2)}_i)\|_{\psi_2}
\\
&=\sup_{\|\mathbf{v}_{b_1,b_2}\|=1}\big\|\mathbf{v}_{b_1,b_2}^\T\big\{\vecc(\widecheck{\M X}^{(b_1,b_2)}_i) - \E\vecc(\widecheck{\M X}^{(b_1,b_2)}_i)\big\}\big\|_{\psi_2} 
\\
&\leq \sup_{\|\mathbf{v}_{b_1,b_2}\|=1}\sum_{j = 1}^{B^2}\big|\big[\mathbf{v}_{b_1,b_2}\big]_j\big|\cdot\big\|\big[\vecc(\widecheck{\M X}^{(b_1,b_2)}_i)\big]_j\big\|_{\psi_2}
\\
&\leq B^2C_J\cdot\tau
\\
&\precsim \tau
,
\ee
where we use $[\mathbf{u}]_j$ to represent the $j$th coordinate of any vector $\mathbf{u}$; the second inequality is by $\big|\big[\mathbf{v}_{b_1,b_2}\big]_j\big| \leq 1$ and $\big\|\big[\vecc(\widecheck{\M X}^{(b_1,b_2)}_i)\big]_j\big\|_{\psi_2} \leq C_J\tau $ for some fixed constant $C_J > 0$ (see e.g. Example 2.5.8 in \citet{vershynin2018high}), recalling that $\big[\vecc(\widecheck{\M X}^{(b_1,b_2)}_i)\big]_j$ is an entry in $\widecheck{\M X}_i$ which is always smaller or equal to $\tau$, and $B$ is fixed. Second, by the block-wise dependent structure of $\M X_i$, it is easy to see that the set of random vectors $\big\{\vecc(\widecheck{\M X}^{(b_1,b_2)}_i) - \E\vecc(\widecheck{\M X}^{(b_1,b_2)}_i)\big\}_{b_1,b_2 = 1,\dots,p/B}$ contains independent, mean-zero, and disjoint sub-vectors of $\vecc(\widecheck{\M X}_i) - \E\vecc(\widecheck{\M X}_i)$. Then by the sub-Gaussian norm bound with independent coordinates (see e.g. Lemma 3.4.2 in \citet{vershynin2018high}) and \eqref{bind:1}, there exists some fixed constant $C'_J > 0$ such that,
\bee
\ &\|\vecc(\widecheck{\M X}_i)\|_{\psi_2} 
\\
&\leq \|\vecc(\widecheck{\M X}_i) - \E\vecc(\widecheck{\M X}_i)\|_{\psi_2}  + \|\E\vecc(\widecheck{\M X}_i)\|_{\psi_2}
\\
&\leq C'_J\max_{(b_1,b_2)}\|\vecc(\widecheck{\M X}^{(b_1,b_2)}_i) - \E\vecc(\widecheck{\M X}^{(b_1,b_2)}_i)\|_{\psi_2} + \|\E\vecc(\widecheck{\M X}_i)\|_{\psi_2}
\\
&\precsim \tau + \|\E\vecc(\widecheck{\M X}_i)\|_{\psi_2}.
\ee
Next, we bound $\|\E\vecc(\widecheck{\M X}_i)\|_{\psi_2}$. Since $\E\vecc(\widecheck{\M X}_i)$ is a deterministic vector, of which the coordinates are independent and uniformly bounded by $\tau$ in magnitude, we directly have $\|\E\vecc(\widecheck{\M X}_i)\|_{\psi_2} \precsim \tau$ by Example 2.5.8 and Lemma 3.4.2 in \citet{vershynin2018high}. Therefore, we have shown $\|\vecc(\widecheck{\M X}_i)\|_{\psi_2} \precsim \tau$, which implies that the upper bound \eqref{robust:def} holds with $J_n\precsim 1$ when $\M X$ has a block-wise dependent structure and can be generated from any heavy-tailed distributions.
\par
For the general case, the bandable structures of $\M\Sigma_1^*$, $\M\Sigma_2^*$ still indicate the dependence between two far-away entries in $\M X$ might be weak, but they may not be exactly independent. Rigorous quantification of the relationship between the growth rate of $J_n$, and the dependence structure (or more generally, the probabilistic structure) of $\M X$ is rather complicated. Therefore, we leave it for future research.
\par
Recall that $\M X$ is heavy-tailed and satisfies the 2$\zeta$-finite moment condition. 
When $\M X$ satisfies Assumption \ref{A:psi2} with $J_n$  slowly growing or $J_n\precsim 1$, one interesting theoretical property is that, when $\zeta$ becomes large, the proposed robust estimator will have an approximately same convergence rate as the originally-proposed estimator with a sub-Gaussian $\M X$. See Remark \ref{rk:rb:subgaussian} for detailed discussions, where we consider the specific scenario $J_n\precsim 1$ as an example. Therefore, even if $\M X$ is not heavy-tailed, e.g., $\M X$ is sub-exponential, one can still consider replacing the originally-proposed estimator with the proposed robust estimator, which can potentially improve the theoretical performance of the covariance estimation and make it comparable with the sub-Gaussian case; as long as $\M X$ has a good probabilistic structure that guarantees $J_n$ can be well-controlled.
}
\subsection{Rate interpretation of Theorem \ref{T4}}
We interpret the error rate \eqref{T4:res} as follows. With some algebra, the target error can be bounded by three error terms,
\bee\nonumber
&\E\Big(\frac{\|\hat{\M \Sigma}_2^{\mathcal{R},\eta}\kii\otimes\hat{\M \Sigma}_1^{\mathcal{R},\eta}\ki - \M\Sigma^*\|_{\F}^2}{pq}\Big) 
\\
&\precsim \underbrace{\frac{1}{pq}\E\big\|\xi\{\widecheck{\M \Sigma}_{\eta}(k_1,k_2)\} - \xi\{{\M \Sigma}^{*,\eta}_{\mathcal{R}}(k_1,k_2)\}\big\|_2^2}_{E_1^{\mathcal{R}}} +\underbrace{\frac{1}{pq}\big\|{\M \Sigma}^{*,\eta}_{\mathcal{R}}(k_1,k_2)-\M\Sigma_2^{*,{\eta}}(k_2)\otimes \M\Sigma_1^{*,{\eta}}(k_1)\big\|_\F^2}_{E_2^{\mathcal{R}}}
\\ \nonumber
&+\underbrace{\frac{1}{pq}\big\|\M\Sigma_2^{*,{\eta}}(k_2)\otimes \M\Sigma_1^{*,{\eta}}(k_1) - \M\Sigma^*\big\|_\F^2}_{E_2}. 
\ee
\par
Here $E_1^{\mathcal{R}}$ has a similar form to the right-hand side of \eqref{E1upper}, which serves as an upper bound of the mean squared error $E_1$. When $\zeta \geq 2$, we use similar techniques to \eqref{varterm:1} and \eqref{varterm:2} for the proposed non-robust estimator to bound $E_1^{\mathcal{R}}$, while accounting for the effect of diverging $\tau$. In particular, a modified technique to \eqref{varterm:1} yields 
\bee\label{error:E1r:2}
E_1^{\mathcal{R}}\precsim r^{(1)}_{\text{var}\mid \hat{\M \Sigma}_2\otimes \hat{\M\Sigma}_1}(k_1,k_2\mid p,q),
\ee and a modified argument to \eqref{varterm:2} yields 
\bee\label{error:E1r}
E_1^{\mathcal{R}}\precsim (J_n\tau)^4\cdot r^{(2)}_{\text{var}\mid \hat{\M \Sigma}_2\otimes \hat{\M\Sigma}_1}(k_1,k_2\mid p,q).
\ee When $\zeta <2$, since the finite fourth moment condition no longer holds, a similar argument to \eqref{varterm:1} can not be employed. Therefore, $E_1^{\mathcal{R}}$ can not be bounded by \eqref{error:E1r:2}. On the other hand, $E_1^{\mathcal{R}}$ can still be bounded by \eqref{error:E1r}.
\par
The term $E_2^{\mathcal{R}}$ is the truncation error between the doubly banded/tapering true covariances of the truncated data and the original data. We employ higher order Markov inequality and H\"{o}lder  inequality to precisely control it based on both truncation level $\tau$ and the moment condition $\zeta$, as $E_2^{\mathcal{R}}  \precsim{k_1k_2}\cdot\tau^{-4(\zeta - 1)}$. The term $E_2$ is the original thresholding error, which can be bounded in the same way as \eqref{errorE2}.
\par
After incorporating the error bounds derived above, and balancing the rate trade-off in terms of $\tau$, the error rate \eqref{T4:res} and the optimal $\tau$ in \eqref{tau:choose} can be derived. 
\begin{remark}
As shown in Theorems \ref{T2} and \ref{T4}, when $\zeta \geq 2$, both our non-robust estimator and the robust estimator are consistent. Here, we briefly compare these two convergence rates. 
\par
As shown later in Section \ref{sec:opt:threshold}, when $\zeta \geq 2$ and $\frac{k_1k_2}{n}$ converges faster than both $r^{\mathcal{R},\zeta}_1(k_1,k_2, p,q,n)$ and $r^{\mathcal{R},\zeta}_2(k_1,k_2, p,q,n)$, the optimal $\tau$ is $\tau = +\infty$, i.e., no truncation is needed in our robust estimation procedure. Then our proposed robust estimators degenerate to the non-robust estimators and the convergence rate becomes $\frac{k_1k_2}{n} + \M I_{\eta,p}(k_1)\cdot k_1^{-\tilde{\alpha}_1} +  \M I_{\eta,q}(k_2)\cdot k_2^{-\tilde{\alpha}_2} $. This coincides with the error rates given in Theorem \ref{T3} for our non-robust estimators. We note that this also implies our proposed robust estimators are rate-optimal under the degenerate regime.
\par
On the other hand, if $r^{\mathcal{R},\zeta}_1(k_1,k_2, p,q,n)$ converges faster than $\frac{k_1k_2}{n}$ under condition $\M C\M 2$, it is easy to see the convergence rate in Theorem \ref{T4} is faster than the one in Theorem \ref{T2}. Similar phenomenon also holds if $r^{\mathcal{R},\zeta}_2(k_1,k_2, p,q,n)$ converges faster than $\frac{k_1k_2}{n}$ under condition $\M C\M 4$. Therefore, under these scenarios, our proposed robust estimators improve the  convergence rate of the non-robust estimators, after an appropriate choice of $\tau$. 
\end{remark}
\begin{remark}\label{rk:rb:subgaussian}
When $\zeta\geq 2$ and $J_n \asymp 1$, we have
\bee\label{remark:rb:rate}
&\E\Big(\frac{\|\hat{\M \Sigma}_2^{\mathcal{R},\eta}\kii\otimes\hat{\M \Sigma}_1^{\mathcal{R},\eta}\ki - \M\Sigma^*\|_{\F}^2}{pq}\Big)
\\ 
&\small\precsim \begin{cases}
\big(\frac{k_1k_2}{n}\big)\wedge\Big[(k_1k_2)^{1/\zeta}\cdot\big(\frac{k_1}{qn} + \frac{k_2}{pn}\big)^{1 - 1/\zeta}\Big]+ \M I_{\eta,p}(k_1)\cdot k_1^{-\tilde{\alpha}_1} +  \M I_{\eta,q}(k_2)\cdot k_2^{-\tilde{\alpha}_2}, & pk_1 + qk_2 \precsim n,
\\
\big(\frac{k_1k_2}{n}\big)\wedge\Big[(k_1k_2)^{1/\zeta}\cdot\big(\frac{pk^2_1}{qn^2} + \frac{qk^2_2}{pn^2}  \big)^{1 - 1/\zeta}\Big]+ \M I_{\eta,p}(k_1)\cdot k_1^{-\tilde\alpha_1} +  \M I_{\eta,q}(k_2)\cdot k_2^{-\tilde\alpha_2}, & pk_1 + qk_2 \succ n ,
\end{cases}
\ee
Then it is easy to see that when $\zeta \rightarrow +\infty$, for any given $k_1,k_2,p,q$, the asymptotic order of the error rate in \eqref{remark:rb:rate} converges to the sub-Gaussian rate \eqref{T2:res1} in Theorem \ref{T2}. In summary, our robust estimators can improve the convergence rate compared to the non-robust estimators when data are heavy-tailed, and will not sacrifice the convergence rate when the data are sub-Gaussian.
\par
 We illustrate this observation via a strongly bandable example defined in Section \ref{sec:match:illustration}. Suppose $\alpha_1 = 2,\alpha_2 = 2$, $\M \Sigma_1^{*}\in \mathcal{M}(\varepsilon_0, \alpha_1),\M \Sigma_2^{*} \in \mathcal{M}(\varepsilon_0, \alpha_2)$, and $J_n\asymp 1$. We show the optimal convergence rates in Theorem \ref{T4} with different $\zeta$ in Figure \ref{fig:sgbd:rb}. In Figure \ref{fig:sgbd:rb} (a), we show the  convergence rate of robust proposed estimators when $\zeta = 1.5$, while no consistency can be shown for the proposed non-robust estimators.  Under the exact finite fourth moment condition ($\zeta = 2$), Figure \ref{fig:sgbd:rb} (b) has the same pattern with that of Figure \ref{fig:sgbd1} (c), so the convergence rates we obtain for robust proposed estimators are the same with the rates of non-robust estimators in Theorem \ref{T3}. Moreover, with an increasing $\zeta$, patterns of Figures \ref{fig:sgbd:rb} (c)-(f) become closer and closer to the pattern of Figure \ref{fig:sgbd1} (b). This confirms that the theoretical performance of the robust proposed estimator will approach that of the proposed non-robust estimator under the sub-Gaussian condition, when $\zeta\rightarrow +\infty$. 
\begin{figure}
  \centering
    \subfigure[$\zeta = 1.5$]{\includegraphics[width=0.32\textwidth]{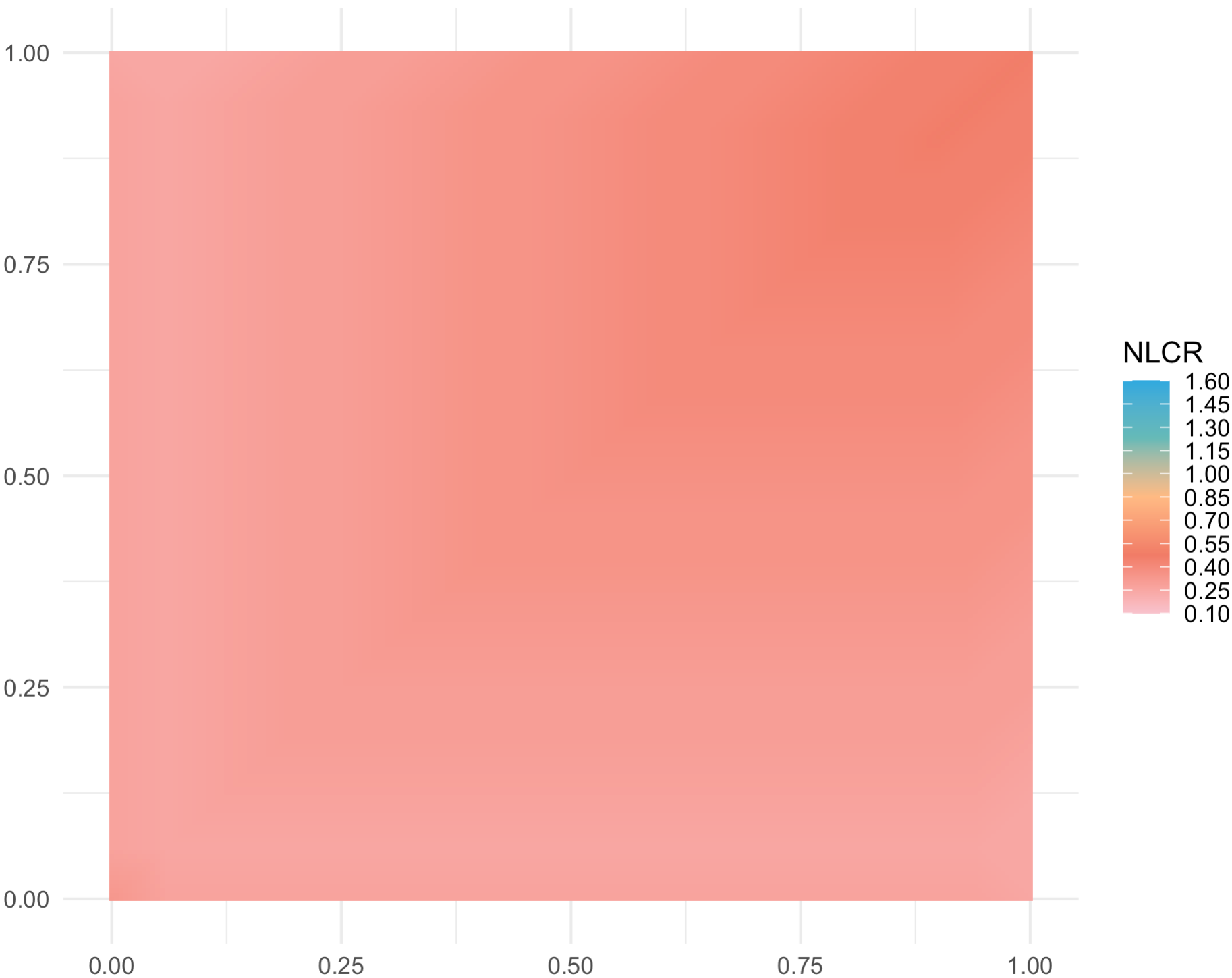}}
  \subfigure[$\zeta = 2$]{\includegraphics[width=0.32\textwidth]{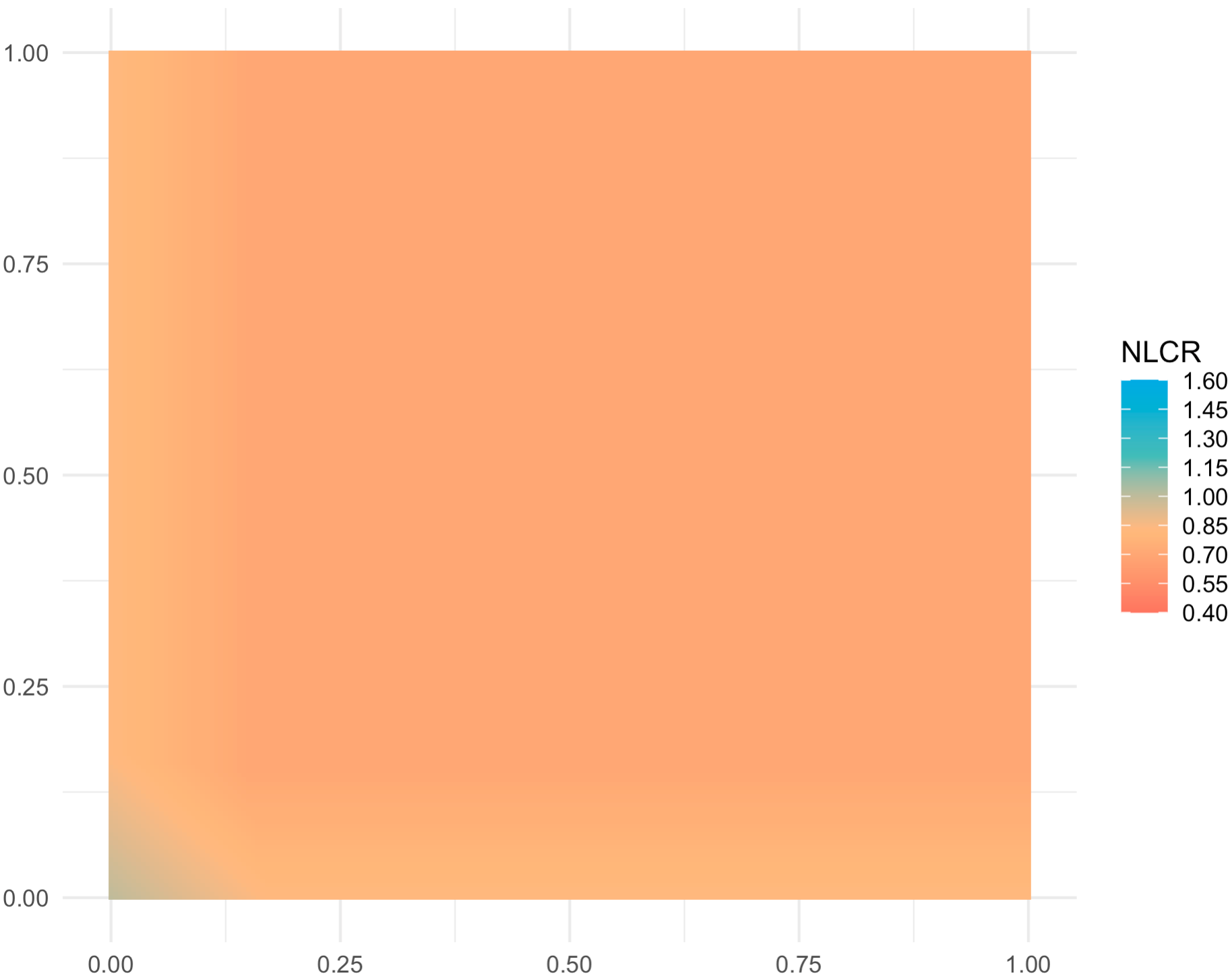}}
  \subfigure[$\zeta = 3$]{\includegraphics[width=0.32\textwidth]{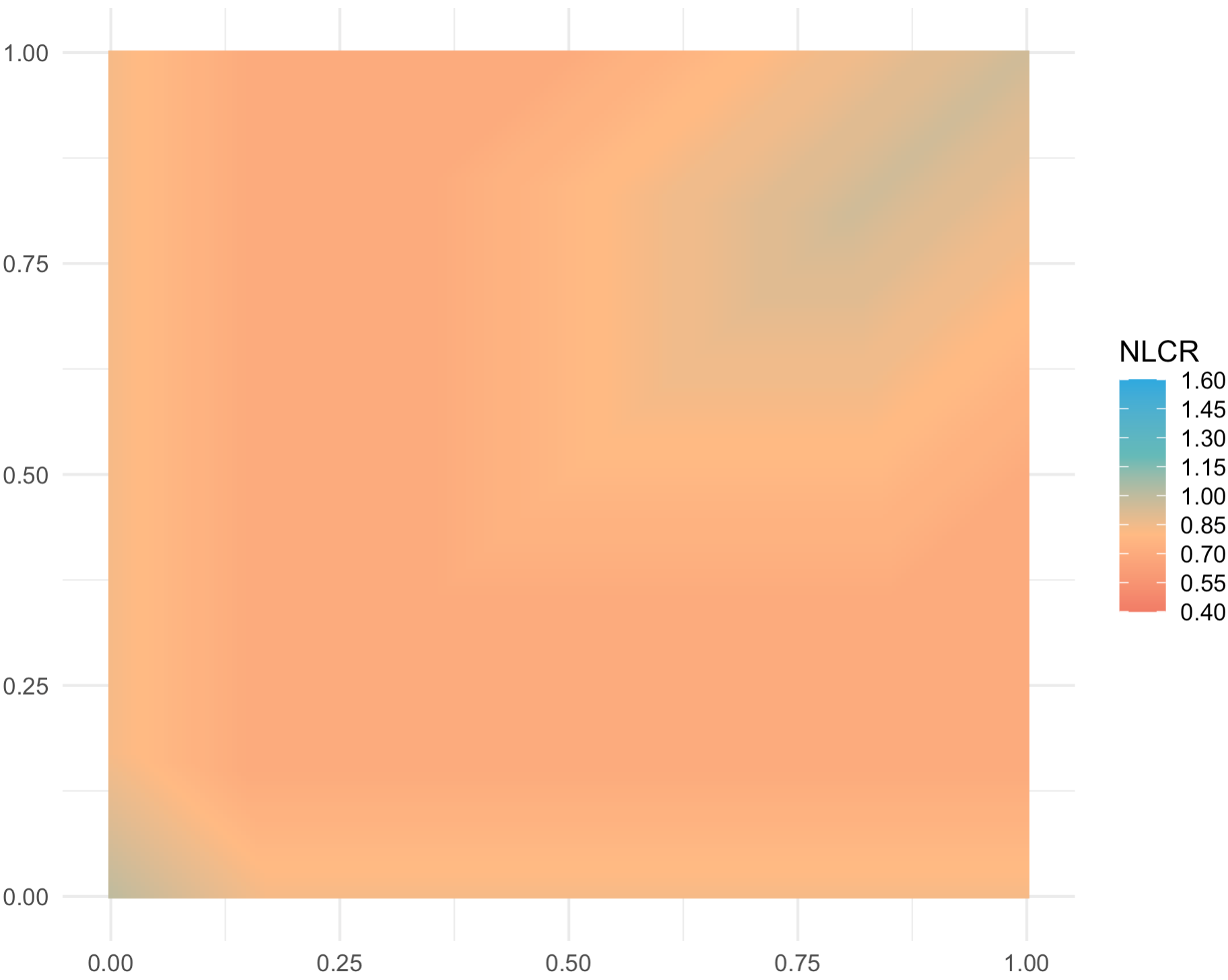}}
    \subfigure[$\zeta = 4$]{\includegraphics[width=0.32\textwidth]{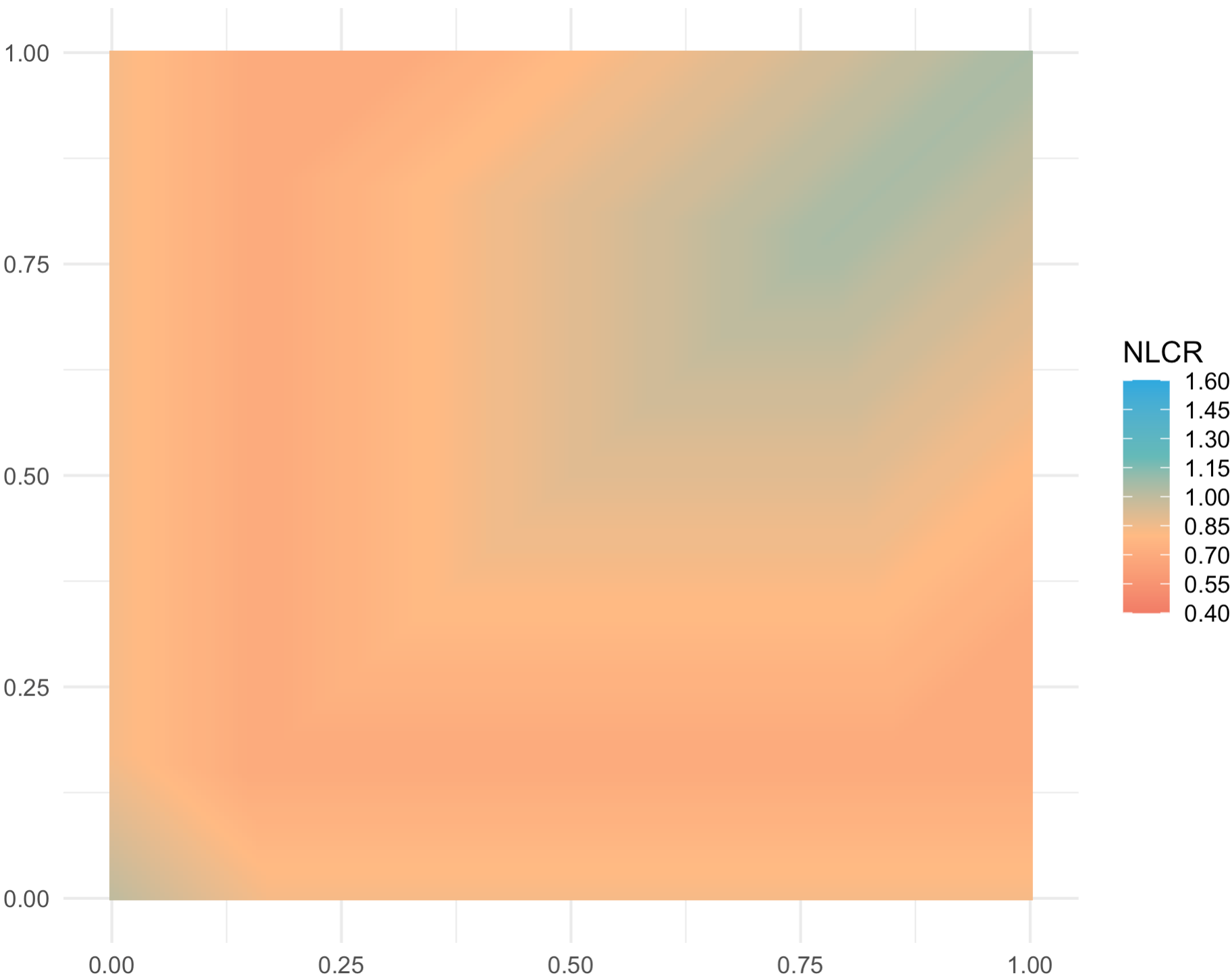}}
  \subfigure[$\zeta = 10$]{\includegraphics[width=0.32\textwidth]{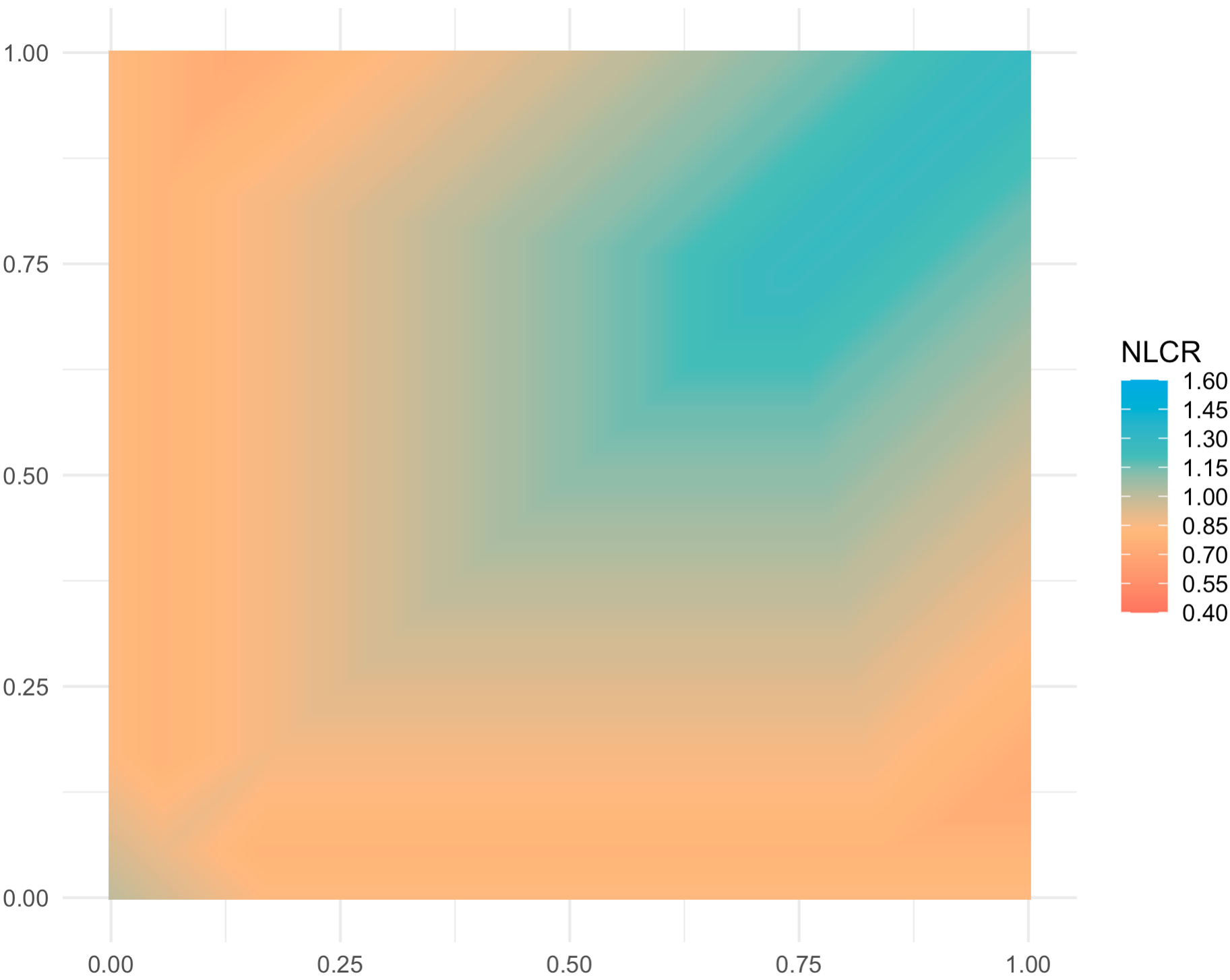}}
  \subfigure[$\zeta = 20$]{\includegraphics[width=0.32\textwidth]{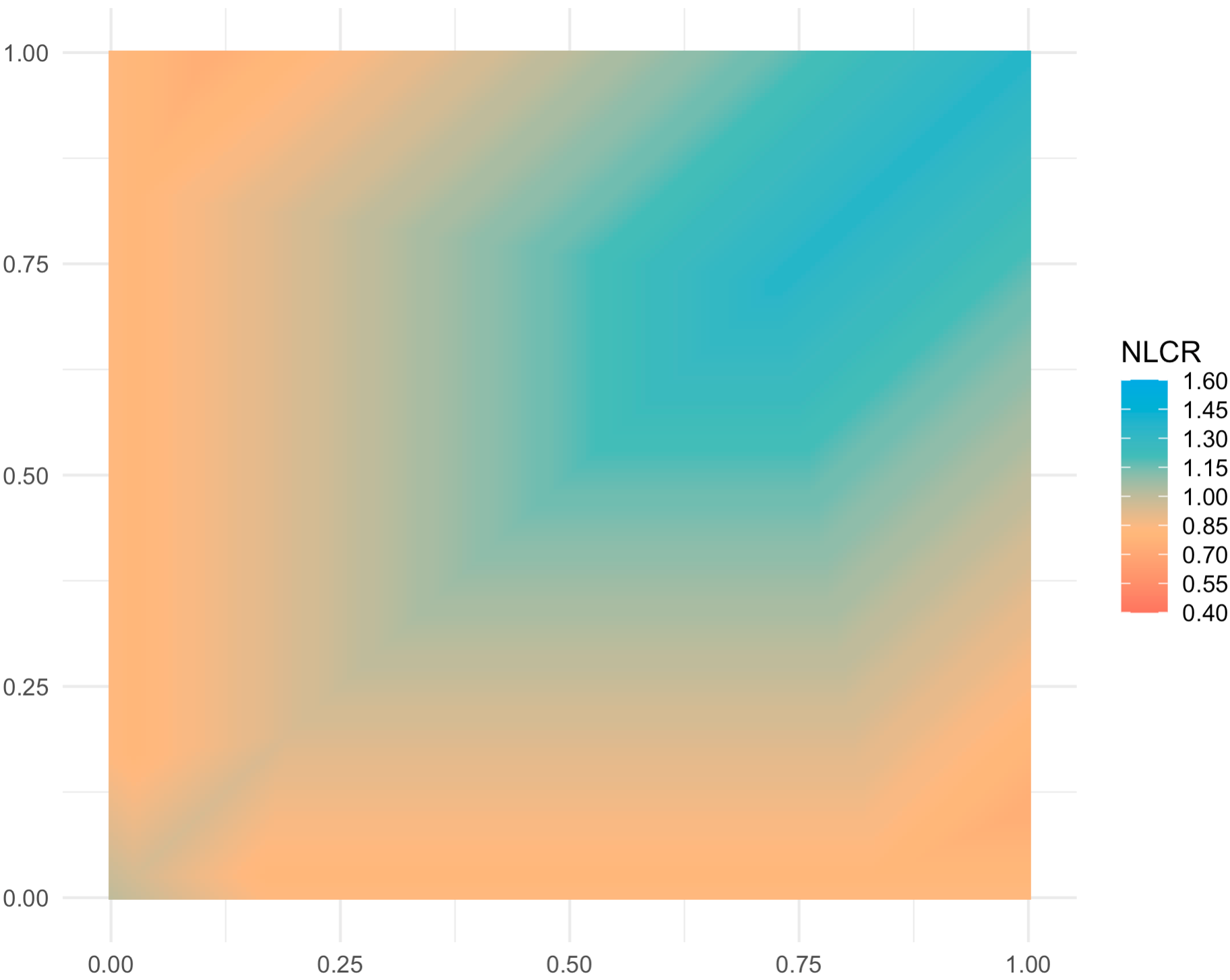}}
\caption{Optimal convergence rates of Theorem \ref{T4} in the strongly bandable example ($\alpha_1 = 2, \alpha_2 = 2$) with difference $\zeta$ and $J_n\asymp 1$. The x-axis represents $\log p = \beta_1$ and y-axis represents $\log q = \beta_2$. The color represents the negative log convergence rate (NLCR).}
\label{fig:sgbd:rb}
\end{figure}
\end{remark}
Analytically finding $k_1,k_2$ to attain the optimal convergence rate of \eqref{T4:res} in the main paper is complicated. In Section \ref{optbd:num}, we propose a numerical method to select optimal $k_1,k_2$, when $p,q$ are polynomially divergent.
\par
We will give an explicit formula for the optimal choice of $\tau$ in the next subsection. 
\subsection{Optimal Threshold Selection for Theorem \ref{T4}}\label{sec:opt:threshold}
Recall that $r^{\mathcal{R},\zeta}_1(k_1,k_2, p,q,n) = (k_1k_2)^{1/\zeta}\cdot\big(\frac{k_1J_n^4}{qn} + \frac{k_2J_n^4}{pn}\big)^{1 - 1/\zeta}$, and $r^{\mathcal{R},\zeta}_2(k_1,k_2, p,q,n) = (k_1k_2)^{1/\zeta}\cdot\big(\frac{pk^2_1J_n^4}{qn^2} + \frac{qk^2_2J_n^4}{pn^2}  \big)^{1 - 1/\zeta}$. We choose the optimal $\tau$ as follows to attain the error rate \eqref{T4:res} in Theorem \ref{T4}, 
\bee\label{tau:choose}
\small{
\tau \asymp \begin{cases}
\big\{pqk_1k_2n/(pk_1J_n^4 + qk_2J_n^4) \big\}^{1/(4\zeta)} & \text{when }pk_1 + qk_2\precsim n \text{ and }
\scriptsize{\begin{cases}\zeta \geq 2\text{ and }r^{\mathcal{R},\zeta}_1(k_1,k_2, p,q,n)\precsim \frac{k_1k_2}{n}
\\
1<\zeta <2
\end{cases};}
\\
\big\{pqk_1k_2n^2/(pk_1J_n^2 + qk_2J_n^2)^2 \big\}^{1/(4\zeta)}  & \text{when }pk_1 + qk_2\succ n\ \text{and }\scriptsize{\begin{cases}\zeta \geq 2 \text{ and }
r^{\mathcal{R},\zeta}_2(k_1,k_2, p,q,n)\precsim \frac{k_1k_2}{n}
\\
1 <\zeta <2,
\end{cases};}
\\
+\infty & \text{otherwise}.
\end{cases}}
\ee

\subsection{Numerical Optimal Bandwidth Selection in Theorems \ref{T4} and \ref{T:indi}}\label{optbd:num}
We discuss a numerical approach for selecting the optimal bandwidths when $p,q$ diverge polynomially in $n$, i.e., $p\asymp n^{\beta_1}$ and $q\asymp n^{\beta_2}$. We consider a large candidate set $\mathcal{K}$ for  $k_1$ and $k_2$. For example, $\mathcal{K}= \{(k_1, k_2) = (n^{\beta_{k_1}}, n^{\beta_{k_2}})\mid \beta_{k_1} = 0,0.1,\dots, 1, \beta_{k_2} = 0,0.1,\dots,1\}$. Then based on the rates in Theorems \ref{T4} and \ref{T:indi}, each pair of $(k_1, k_2) = (n^{\beta_{k_1}}, n^{\beta_{k_2}})$ combined with $p\asymp n^{\beta_1}, q\asymp n^{\beta_1}$, leads to a particular polynomial convergence rate in Theorems \ref{T4} and \ref{T:indi}, denoted by $n^{-r(\beta_1,\beta_2,\beta_{k_1},\beta_{k_2})}$. We take the pair of $(k_1, k_2) = (n^{\beta_{k_1}}, n^{\beta_{k_2}})$ in $\mathcal{K}$ that maximizes $r(\beta_1,\beta_2,\beta_{k_1},\beta_{k_2})$ as the selected $k_1$ and $k_2$. 

\section{Solving Frobenius-norm Kronecker Product Approximation}\label{sec:sKPA}
We briefly introduce how to solve a general rank one unconstrained Kronecker product approximation problem under the Frobenius norm. Suppose we have matrices $\Ab\in \RR^{p_1q_1\times p_2q_2}$, $\Bb\in \RR^{q_1\times q_2}$ and $\Cb\in \RR^{p_1\times p_2}$. The goal is to solve
\begin{eqnarray}\label{kronecker}
\min_{\Bb, \Cb}\| \Ab-\Bb \otimes \Cb\|^2_\F.
\end{eqnarray} 
For a $p_1q_1 \times p_2q_2$ matrix $\M A$ with submatrix structure,
\$
\M A = \begin{bmatrix}
\M A_{1,1} \cdots  \M A_{1,q_2} \\
\vdots \ \ \ \ \ \ \ \ \ \ \ \vdots \\
\M A_{q_1,1} \cdots  \M A_{q_1q_2},
\end{bmatrix}
\$
where the submatrix $\M A_{l,m} \in \RR^{p_1\times p_2}$ for all $1\leq l\leq q_1$ and $1\leq m\leq q_2$. We define a matrix transformation $\xi(\cdot):\RR^{p_1q_1 \times p_2q_2} \longrightarrow \RR^{q_1q_2\times p_1p_2}$ as,
\#
\label{xi:fun}
\xi(\M A) = \begin{bmatrix}
\vv (\M A_{1,1}), \cdots, 
\vv (\M A_{q_1,1}), \cdots, 
\vv (\M A_{q_1,q_2}) 
\end{bmatrix}^{\T}.
\#
\citet*{van1993approximation,Pitsianis1997} show that solving (\ref{kronecker}) is equivalent to solving a rank one singular value decomposition (SVD) of $ \xi({\Ab}) $. In particular, they have shown the following propositions. 
\begin{proposition}\label{equivalence}
$\| \Ab-\Bb \otimes \Cb\|^2_\F=\| \xi({\Ab})-{\bb}{\bf c}^{\T}\|^2_\F$, where $ {\bb}=\vecc ({\Bb}) $ and $ {\bf c}=\vecc ({\Cb}) $.
\end{proposition}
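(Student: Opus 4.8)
The plan is to reduce the stated identity to two elementary facts: that the reshuffling map $\xi$ preserves the Frobenius norm, and that $\xi$ sends a Kronecker product to a rank-one matrix. First I would verify that $\xi$ is a linear, entrywise bijection. Since $p_A = p_B p_C$ and $q_A = q_B q_C$, the matrix $\Ab$ has $p_A q_A = (p_B q_B)(p_C q_C)$ entries, which is exactly the number of entries of the $p_B q_B \times p_C q_C$ matrix $\xi(\Ab)$; each entry of $\Ab$ is relocated to one and only one position of $\xi(\Ab)$. Because $\xi$ merely permutes entries, $\|\M M\|_{\F}^2 = \|\xi(\M M)\|_{\F}^2$ for every conformable $\M M$, and by linearity of $\xi$ we get $\|\M M - \M N\|_{\F} = \|\xi(\M M) - \xi(\M N)\|_{\F}$ for any conformable $\M M, \M N$. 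Taking $\M M = \Ab$ and $\M N = \Bb \otimes \Cb$ then reduces the proposition to proving the single identity $\xi(\Bb \otimes \Cb) = \bb\,{\bf c}^{\T}$.

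Next I would establish that identity by direct computation from the block structure of the Kronecker product. Writing $\Bb = [B_{i,j}]$, the $(i,j)$ submatrix of $\Bb \otimes \Cb$ is the scalar multiple $B_{i,j}\Cb \in \RR^{p_C \times q_C}$, so the corresponding row of $\xi(\Bb \otimes \Cb)$ is $\vv(B_{i,j}\Cb)^{\T} = B_{i,j}\,\vv(\Cb)^{\T} = B_{i,j}\,{\bf c}^{\T}$, using only the linearity $\vv(\lambda\,\Cb) = \lambda\,\vv(\Cb)$ for a scalar $\lambda$. Stacking these rows in the order dictated by $\xi$, which runs through the block index $(i,j)$ in the same column-major order used by $\vecc$, produces a matrix whose successive rows are the entries of $\bb = \vecc(\Bb)$ each multiplied by ${\bf c}^{\T}$; that matrix is precisely $\bb\,{\bf c}^{\T}$. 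Combining this with the norm-preservation step yields $\| \Ab-\Bb \otimes \Cb\|^2_\F = \| \xi(\Ab)-\bb\,{\bf c}^{\T}\|^2_\F$.

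The main obstacle is the careful index bookkeeping: confirming that $\xi$ is genuinely an entry-permutation with the correct target dimensions so that no entry is lost or duplicated, and checking that the row ordering used in the definition of $\xi$ coincides with the column-major ordering of $\vecc(\Bb)$. If these orderings disagreed, $\xi(\Bb \otimes \Cb)$ would only be a row permutation of $\bb\,{\bf c}^{\T}$; the Frobenius norm is insensitive to such a permutation, but obtaining the clean rank-one form $\bb\,{\bf c}^{\T}$ — which is what makes the subsequent reduction to a rank-one SVD meaningful — requires the two orderings to match exactly. Everything else is a routine consequence of the linearity of $\vv$ and the definition of the Kronecker product.
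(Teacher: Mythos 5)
Your proof is correct and is precisely the classical Van Loan--Pitsianis argument: $\xi$ is a linear, entrywise permutation (hence Frobenius-norm preserving), and it maps $\Bb\otimes\Cb$ to the rank-one matrix $\bb\,{\bf c}^{\T}$ because the $(i,j)$ block of $\Bb\otimes\Cb$ is $B_{i,j}\Cb$ and the block ordering in $\xi$ matches the column-major ordering of $\vecc(\Bb)$. The paper itself omits the proof, taking the proposition directly from Pitsianis (1997) and Golub--Van Loan (1996), and the argument given in those references is the same one you reconstructed.
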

\begin{proposition}\label{equivalence2}
The minimizer of $\min_{\bb, \bf c}\| \xi({\Ab})-{\bb}{\bf c}^{\T}\|^2_\F$ is the same as the minimizer of $\min_{\bb, \bf c}\| \xi({\Ab})-{\bb}{\bf c}^{\T}\|_2$.
\end{proposition}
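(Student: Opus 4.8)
The plan is to recognize both optimization problems as instances of optimal rank-one approximation of the fixed matrix $M \equiv \xi(\Ab)$ in a unitarily invariant norm, and to show that both are solved by the leading singular triple of $M$. As $\bb$ and $\mathbf{c}$ range over all vectors, the product $\bb\mathbf{c}^\T$ ranges over exactly the set of matrices of rank at most one, so in each case we minimize $\|M - N\|$ over $\{N : \mathrm{rank}(N)\leq 1\}$, with $\|\cdot\|$ taken to be the Frobenius norm or the spectral norm $\|\cdot\|_2$ respectively. Since squaring is monotone, minimizing $\|\cdot\|_2$ is the same as minimizing $\|\cdot\|_2^2$, so the difference between the two stated objectives is purely the choice of norm.

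First I would fix the SVD $M = \sum_i \sigma_i \mathbf{u}_i \mathbf{v}_i^\T$ with $\sigma_1 \geq \sigma_2 \geq \cdots$ and dispatch the Frobenius case by a direct computation. Writing $N = \sigma \bb\mathbf{c}^\T$ with $\|\bb\| = \|\mathbf{c}\| = 1$, one expands $\|M - N\|_\F^2 = \|M\|_\F^2 - 2\sigma\, \bb^\T M \mathbf{c} + \sigma^2$; minimizing over the scalar $\sigma$ gives $\sigma = \bb^\T M \mathbf{c}$ and reduces the objective to $\|M\|_\F^2 - (\bb^\T M \mathbf{c})^2$. Hence the Frobenius minimizer maximizes $(\bb^\T M \mathbf{c})^2$ over unit vectors, which is attained at $\bb = \mathbf{u}_1$, $\mathbf{c} = \mathbf{v}_1$ with value $\sigma_1^2$; the optimal rank-one matrix is therefore $N^\star = \sigma_1 \mathbf{u}_1 \mathbf{v}_1^\T$.

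Next I would show that the same $N^\star$ solves the spectral-norm problem. The upper bound is immediate, since the residual $M - N^\star = \sum_{i\geq 2}\sigma_i \mathbf{u}_i \mathbf{v}_i^\T$ has spectral norm $\sigma_2$. For the matching lower bound, for any $N$ with $\mathrm{rank}(N)\leq 1$ the null space of $N$ has codimension at most one, so it meets the two-dimensional span of $\{\mathbf{v}_1, \mathbf{v}_2\}$ in a nonzero vector; choosing a unit $w = \alpha \mathbf{v}_1 + \beta \mathbf{v}_2$ in this intersection gives $\|(M - N)w\| = \|Mw\| = (\sigma_1^2\alpha^2 + \sigma_2^2\beta^2)^{1/2}\geq \sigma_2$, whence $\|M - N\|_2 \geq \sigma_2$. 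Thus $N^\star$ is a minimizer for the spectral norm as well, and the two problems share the minimizer $\sigma_1 \mathbf{u}_1 \mathbf{v}_1^\T$ (the factorization into $\bb$ and $\mathbf{c}$ being pinned down only up to the scaling $(\bb, \mathbf{c})\mapsto (t\bb, t^{-1}\mathbf{c})$, while the rank-one product itself is unique whenever $\sigma_1 > \sigma_2$).

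I expect the spectral-norm lower bound to be the main obstacle: unlike the Frobenius case it cannot be collapsed to a single scalar optimization, and instead needs the dimension-counting / subspace-intersection argument above (equivalently, an appeal to the Eckart--Young--Mirsky theorem, which states that the truncated SVD is optimal in every unitarily invariant norm). A secondary point worth stating carefully is what ``the same minimizer'' means, since $\bb$ and $\mathbf{c}$ are individually identifiable only up to a reciprocal scalar even though the rank-one matrix $\bb\mathbf{c}^\T$ they produce is unique under $\sigma_1 > \sigma_2$; this is exactly what licenses the algorithm to compute the leading singular vectors once and use the result for the Frobenius objective in \eqref{eq:band}.
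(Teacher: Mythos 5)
Your proof is correct, but note that the paper never proves this proposition at all: it states that Propositions \ref{equivalence}--\ref{solution} are taken directly from \citet{Pitsianis1997,golub1996matrix} and omits the argument. So what you have supplied is a self-contained replacement for a citation, namely the rank-one case of the Eckart--Young--Mirsky theorem. Both halves of your argument are sound: the Frobenius half via the scalar reduction to maximizing $(\bb^{\T}M\mathbf{c})^2$ over unit vectors, and the spectral half via the subspace-intersection lower bound $\|M-N\|_2\geq\sigma_2$ for any rank-one $N$ (the null space of $N$ has codimension at most one, hence meets $\mathrm{span}\{\mathbf{v}_1,\mathbf{v}_2\}$ nontrivially). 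Together they show that the truncation $\sigma_1\mathbf{u}_1\mathbf{v}_1^{\T}$ solves both problems, which is exactly what the paper needs to pass from \eqref{kronecker} to \eqref{kroneckerequivalence} and then to Proposition \ref{solution}. One caveat deserves to be stated more sharply than your closing parenthetical does: the two problems do \emph{not} have identical solution sets, so ``the same minimizer'' can only be read in the one-directional sense you prove. When $\sigma_1>\sigma_2$ the Frobenius minimizer is the unique matrix $\sigma_1\mathbf{u}_1\mathbf{v}_1^{\T}$, but the spectral-norm problem generally admits many minimizers; for instance, with $M=\mathrm{diag}(3,1)$ both $\mathrm{diag}(3,0)$ and $\mathrm{diag}(2,0)$ are rank one and achieve $\|M-N\|_2=\sigma_2=1$, yet only the former minimizes the Frobenius error. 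This looseness is harmless for the paper, since Proposition \ref{solution} selects precisely the SVD-based solution, but a fully rigorous statement should say that the Frobenius minimizer is \emph{a} (not \emph{the}) minimizer of the spectral-norm problem.
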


By Propositions \ref{equivalence} and \ref{equivalence2}, solving (\ref{kronecker}) is equivalent to solving 
\begin{eqnarray}\label{kroneckerequivalence}
\min_{\bb, {\bf c}}\| \xi({\Ab})-{\bb}{\bf c}^{\T}\|_2. 
\end{eqnarray}

\begin{proposition}\label{solution}
If the SVD of matrix $  \xi({\Ab})={\Ub}{\bf S}{\Vb} $, where $ {\Ub} \in \RR^{q_1q_2\times q_1q_2} $ and $ {\Vb}\in \RR^{p_1p_2\times p_1p_2} $, and $ \sigma_1 $ is the largest singular value. Then the minimizer of (\ref{kroneckerequivalence}) is $ \hat{\bb}=C\sigma_1 {\ub}_1 $ and $ \hat{\bf c}=C^{-1}{\vb}_1 $ for any constant $ C\neq 0 $, where $ {\ub}_1 $ and $ {\vb}_1 $ are the first columns of matrices $ {\Ub} $ and $ {\Vb} $.
\end{proposition}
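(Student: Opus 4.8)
The plan is to recognize \eqref{kroneckerequivalence} as a best rank-one approximation problem and solve it via the Eckart--Young--Mirsky theorem for the spectral norm. As $\bb$ and ${\bf c}$ range over their Euclidean spaces, the outer product $\bb{\bf c}^{\T}$ ranges over exactly the matrices of rank at most one, so minimizing $\|\xi(\Ab)-\bb{\bf c}^{\T}\|_2$ is the same as finding the closest rank-one matrix to $\xi(\Ab)$ in operator norm. I would therefore first show that the optimal rank-one matrix is the truncated singular value decomposition $\sigma_1\ub_1\vb_1^{\T}$, and then read off the claimed factorization. Throughout I write $\xi(\Ab)=\sum_i\sigma_i\ub_i\vb_i^{\T}$ with $\sigma_1\geq\sigma_2\geq\cdots\geq 0$ and orthonormal columns $\ub_i$, $\vb_i$ of $\Ub$, $\Vb$.

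For the lower bound I would fix an arbitrary matrix $\M M$ of rank at most one and show $\|\xi(\Ab)-\M M\|_2\geq\sigma_2$. Since $\M M$ has rank at most one, its kernel has dimension at least $q_A-1$ as a subspace of $\RR^{q_A}$; intersecting this kernel with the two-dimensional subspace $\mathrm{span}\{\vb_1,\vb_2\}$ and counting dimensions (because $(q_A-1)+2>q_A$) produces a unit vector $z=\alpha\vb_1+\beta\vb_2$ with $\M M z=0$ and $\alpha^2+\beta^2=1$. Using orthonormality of the $\vb_i$ gives $(\xi(\Ab)-\M M)z=\xi(\Ab)z=\alpha\sigma_1\ub_1+\beta\sigma_2\ub_2$, whose Euclidean norm equals $\sqrt{\alpha^2\sigma_1^2+\beta^2\sigma_2^2}\geq\sigma_2$ by orthonormality of the $\ub_i$ together with $\sigma_1\geq\sigma_2$. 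Hence $\|\xi(\Ab)-\M M\|_2\geq\|(\xi(\Ab)-\M M)z\|\geq\sigma_2$.

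To see the bound is attained I would take $\M M=\sigma_1\ub_1\vb_1^{\T}$, so that the residual $\xi(\Ab)-\M M=\sum_{i\geq 2}\sigma_i\ub_i\vb_i^{\T}$ has spectral norm exactly $\sigma_2$, matching the lower bound. Thus the minimum of \eqref{kroneckerequivalence} is $\sigma_2$ and is attained precisely when $\bb{\bf c}^{\T}=\sigma_1\ub_1\vb_1^{\T}$. Finally, setting $\hat{\bb}=C\sigma_1\ub_1$ and $\hat{\bf c}=C^{-1}\vb_1$ gives $\hat{\bb}\hat{\bf c}^{\T}=\sigma_1\ub_1\vb_1^{\T}$ for every $C\neq 0$, so this pair realizes the optimal rank-one matrix and is therefore a minimizer, as claimed. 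The genuinely nontrivial step, and the main obstacle, is the lower bound, whose dimension-counting intersection argument is what rules out any rank-one matrix beating $\sigma_2$; one should also handle the degenerate case $q_A=1$ (where $\sigma_2$ is absent and $\xi(\Ab)$ is itself rank one) separately.
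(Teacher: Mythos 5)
Your proof is correct and, notably, self-contained, which is more than the paper itself provides: the paper explicitly omits the proof of this proposition, stating that Propositions \ref{equivalence}, \ref{equivalence2} and \ref{solution} are taken directly from \citet{Pitsianis1997,golub1996matrix}. What you have written is the classical Eckart--Young--Mirsky argument for the spectral norm --- the dimension count producing a unit vector $z \in \ker(\M M)\cap \mathrm{span}\{\vb_1,\vb_2\}$ to get the lower bound $\sigma_2$, and the truncated SVD $\sigma_1 \ub_1 \vb_1^{\T}$ to attain it --- which is essentially the argument found in the references the paper defers to, so in substance you have reconstructed the omitted proof rather than found a new route. Your observation that $\bb{\bf c}^{\T}$ ranges over exactly the rank-at-most-one matrices, the factorization step at the end, and the flag on the degenerate case $q_A=1$ are all sound. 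One point should be corrected: your claim that the minimum is attained \emph{precisely} when $\bb{\bf c}^{\T}=\sigma_1\ub_1\vb_1^{\T}$ is false for the spectral norm, even when $\sigma_1>\sigma_2$. For example, if $\xi(\Ab)=\mathrm{diag}(2,1)$, then every $\M M=\mathrm{diag}(t,0)$ with $1\le t\le 3$ is a rank-one matrix whose residual has spectral norm $\sigma_2=1$, so the minimizer is far from unique. This overstatement does not damage your proof of the proposition as stated, since only the existence direction is claimed (the given pair $(\hat{\bb},\hat{\bf c})=(C\sigma_1\ub_1,C^{-1}\vb_1)$ is \emph{a} minimizer); but "precisely" should be weakened to "at", and in particular one cannot derive from your argument the uniqueness asserted later in Proposition \ref{decomposition:uniqueness}, which must instead be traced through the Frobenius-norm formulation \eqref{kronecker}, where the best rank-one approximation is unique when $\sigma_1>\sigma_2$.
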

Propositions \ref{equivalence}, \ref{equivalence2} and \ref{solution} are directly taken from \citet{Pitsianis1997,golub1996matrix}, so we omit the proof. From these propositions, one can easily obtain the following proposition. 

\begin{proposition}\label{decomposition:uniqueness}
The Kronecker approximation $ \Bb \otimes \Cb $ is unique. In addition, if $ (\Bb^*, \Cb^*) $ is a solution of \eqref{kronecker}, $ (c\Bb^*, c^{-1}\Cb^*) $ is also a solution of \eqref{kronecker} for any constant $ c\!\neq\! 0 $. 
\end{proposition}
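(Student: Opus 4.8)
The plan is to push both assertions through the equivalence, already established in Propositions~\ref{equivalence}, \ref{equivalence2} and \ref{solution}, between the Kronecker approximation \eqref{kronecker} and the rank-one approximation problem \eqref{kroneckerequivalence}, exploiting that the rearrangement $\xi(\cdot)$ is a linear bijection from $\RR^{p_A\times q_A}$ onto $\RR^{p_Bq_B\times p_Cq_C}$ (it merely relocates entries). The scaling claim is the easy half and needs no SVD at all: by bilinearity of the Kronecker product, $(c\Bb^*)\otimes(c^{-1}\Cb^*)=(cc^{-1})(\Bb^*\otimes\Cb^*)=\Bb^*\otimes\Cb^*$ for every $c\neq0$, so $\|\Ab-(c\Bb^*)\otimes(c^{-1}\Cb^*)\|_\F^2=\|\Ab-\Bb^*\otimes\Cb^*\|_\F^2$ and the scaled pair attains exactly the same minimal value, hence is again a solution of \eqref{kronecker}.

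For uniqueness of the optimal product I would read off the form of the minimizer from Proposition~\ref{solution}: any solution of \eqref{kroneckerequivalence} is $\hat\bb=C\sigma_1\ub_1$ and $\hat{\bf c}=C^{-1}\vb_1$, and the associated rank-one matrix $\hat\bb\hat{\bf c}^\T=\sigma_1\ub_1\vb_1^\T$ is independent of the free constant $C$. Applying the identity of Proposition~\ref{equivalence} with $\Ab$ replaced by $\Bb\otimes\Cb$ (for which the left-hand side vanishes) gives $\xi(\Bb\otimes\Cb)=\vecc(\Bb)\,\vecc(\Cb)^\T$, so that the optimal product can be recovered as $\Bb\otimes\Cb=\xi^{-1}(\sigma_1\ub_1\vb_1^\T)$. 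Since $\xi^{-1}$ is single-valued and the right-hand side depends only on the leading singular triple of $\xi(\Ab)$, the optimal Kronecker approximation is determined uniquely, while the factorization itself is pinned down only up to the scaling described in the first part.

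The one place that genuinely requires care, and the main obstacle to a clean statement, is that the best rank-one approximation $\sigma_1\ub_1\vb_1^\T$ is unique only when the largest singular value of $\xi(\Ab)$ is simple, i.e. $\sigma_1>\sigma_2$; if $\sigma_1$ is repeated, the leading singular subspace is higher dimensional and neither the rank-one approximant nor the optimal Kronecker product is unique. I would therefore state uniqueness under simplicity of $\sigma_1$ and remark that in that regime $\ub_1$ and $\vb_1$ are determined up to a common sign that cancels in the outer product $\ub_1\vb_1^\T$, so that $\Bb\otimes\Cb$ is unambiguously defined even though the factors $(\Bb,\Cb)$ are not.
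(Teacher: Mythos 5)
Your proof is correct and follows essentially the same route the paper intends: the paper derives Proposition~\ref{decomposition:uniqueness} directly from Propositions~\ref{equivalence}, \ref{equivalence2} and \ref{solution} (reduction via $\xi$ to the best rank-one SVD approximation of $\xi(\Ab)$, with the product $\sigma_1\ub_1\vb_1^{\T}$ independent of the free constant $C$), exactly as you do, and the scaling claim is immediate from bilinearity of the Kronecker product. Your added caveat --- that uniqueness of $\Bb\otimes\Cb$ genuinely requires the leading singular value of $\xi(\Ab)$ to be simple, i.e. $\sigma_1>\sigma_2$ --- is a correct refinement of a degenerate case that the paper's unconditional statement glosses over.
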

Proposition \ref{decomposition:uniqueness} implies that our banded covariance estimate $\hat{\boldsymbol\Sigma}^{\MB}(k_1, k_2)$ is unique. 

\begin{remark}\label{rmk:spec}
Propositions \ref{equivalence}--\ref{solution}  show the proposed optimization problem \eqref{eq:band} can be solved by reordering the components of the top right and left singular vectors of $\xi\big\{\tilde{\M\Sigma}_{\mathcal{B}}(k_1,k_2)\big\}$ and multiplying with the top singular value of $\xi\big\{\tilde{\M\Sigma}_{\mathcal{B}}(k_1,k_2)\big\}$. Therefore, our proposed method can be classified as a spectral method. The spectral methods have become increasingly popular in recent years due to their elegant form based on SVD, and the availability of a rich class of efficient SVD algorithms \citep{drmavc2008new,drmavc2008new2, halko2011finding}. More importantly, spectral methods enjoy nice theoretical properties in various contexts, including network analysis \citep{rohe2011spectral, sussman2012consistent}, matrix completion and denoising \citep{achlioptas2007fast}, spiked covariance estimation \citep{Johnstone2001}, etc; see Section 3.10 in \citet{chen2020spectral} for a comprehensive review. 
\end{remark}
\section{Additional Real Data Example:  S$\&$P 500 Stock Data Analysis}\label{data:stock}
We analyze the S$\&$P 500 stock price dataset, which was first collected by Yahoo! company, processed by \citet{yang2020estimating} and made available in R package {\tt loggle}. S$\&$P data contains the daily closing prices of  stocks from January 1st, 2007 to December 31th, 2016. The stocks in the original dataset are classified into $5$ Global Industry Classification Standard (GICS) sectors. We are interested in 90 stocks that belongs to sectors of Information Technology (IT) and Consumer Staples (CS). The primary variable of interest is the daily percentage changes (DPCs) of each stock, defined as, 
\bee\nonumber
\text{DPC} = \frac{\text{closing price today  (\textdollar) - closing price yesterday (\textdollar)}}{\text{closing price yesterday (\textdollar)}} \times  100\%.
\ee
Our main focus is to estimate the covariance of DPCs between different stocks at different weekdays. In our framework, we represent the data as a matrix, denoted by $ {\bf X}_i$, for week $i$,  where the rows represent $90$ different stocks and the columns represent the five business days of a week (Monday to Friday). To help better explore the sparse correlation patterns in the stocks, we use Isomap \citep{wagaman2009discovering} to reorder the 90 stocks and encourage a potential bandable structure in the covariance matrix. Also the bandwidths selected by resampling scheme in the following discussions indicate the bandable structure fits the data well.
\par
We then pre-process the dataset by removing the mean trend and time-dependence in a similar way as for the temperature anomaly data in Section \ref{realdata} of the main paper. This is implemented by (i) fitting a separate linear model on DPCs for each stock and each business day, and then removing the estimated time trend, (ii) ``thinning" the sequence of weekly measurements by only picking the data matrix of first week in each month, and (iii) dropping the data matrix with missing entries because of holiday closings. To demonstrate its performance, we use the stock values of Colgate-Palmolive Company (CL) on Monday as an example. In Figure \ref{stock_trend_acf}, panels (a) and (b) show the data before and after the detrending, and (c) show the estimated auto-correlation function for the thinned sequence of data. It is clear that both the detrending and thinning steps have achieved a satisfactory performance. 
\begin{figure}
   \centering
    \subfigure[]{\includegraphics[width=0.4\textwidth]{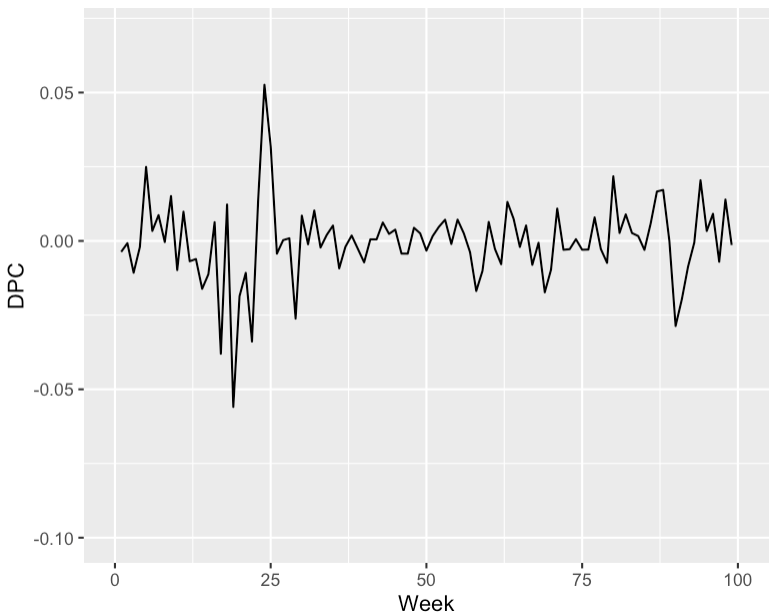}}
  \subfigure[]{\includegraphics[width=0.4\textwidth]{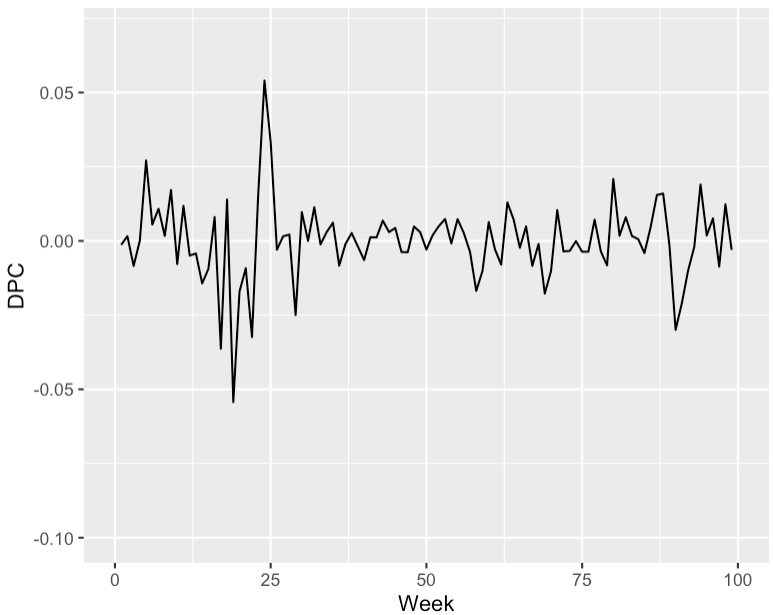}}
   \subfigure[]{\includegraphics[width=0.4\textwidth]{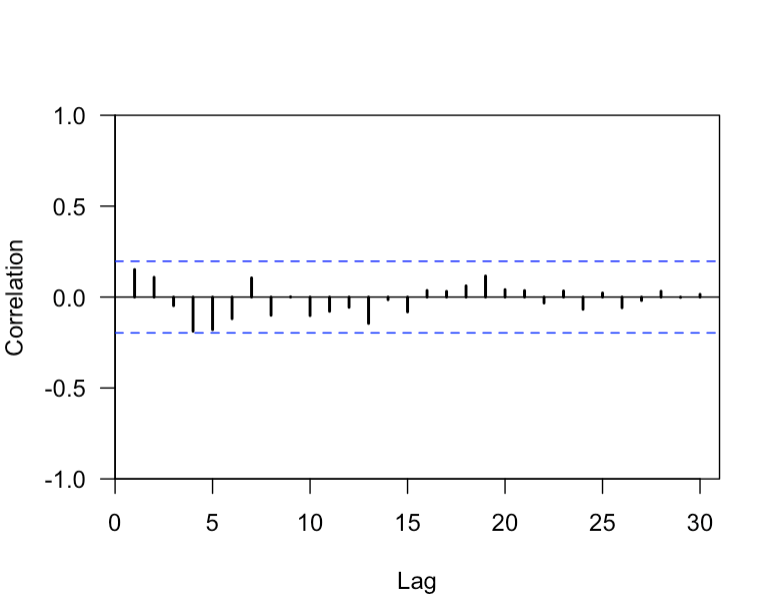}}
     \subfigure[]{ \includegraphics[width=0.4\textwidth]{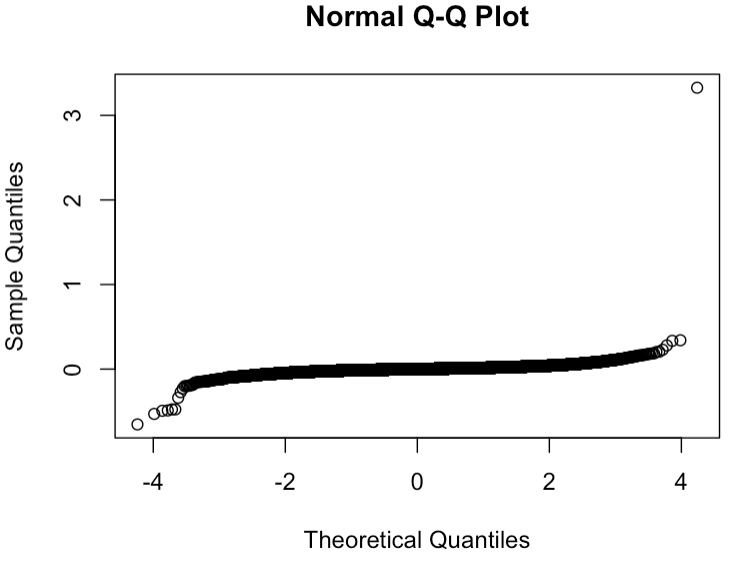}}
\caption{S\&P 500 stock data analysis, (a) and (b): Daily percentage changes before and after detrending; (c): Auto-correlation function for the thinned sequence; (d): The Q-Q plot of all sampled daily percentage changes.}
\label{stock_trend_acf}
\end{figure}
 \par
After pre-processing, we obtain a dataset of $90$ (stock) $\times$ $5$ (day of a week) with a sample size of $n = 99$. Similar to the gridded temperature anomaly data,  we further verify the separability of covariance assumption via two procedures. First we implement the projection-based empirical bootstrap test \citep{aston2017tests}. The p-value of separable test is $0.346$. Hence it is reasonable to assume the separability of the covariance matrix. Second, we calculate the two-fold cross-validation prediction error under 1-norm and Frobenius norm when separability of the dataset is being or being not assumed. {\color{black} Similar to the temperature gridded dataset, the average prediction errors (based on 500 random splits) are 
$R_{\text{ns},1} = 0.589
$ $(\text{SE}=0.0005)$  and $R_{\text{ns},\F} =  0.239$ $(0.0001)$ when no separability is assumed, in comparison with  $R_{\text{ns},1} = 0.424$ $(0.0058)$ and $R_{\text{ns},\F} =  0.236$ $(0.0001)$ when separability is assumed, which also suggests that separability is a reasonable structure assumption for the underlying true covariance. Also, in Figure \ref{stock_trend_acf} (d), clearly, there are several outliers in the Q-Q plot. Therefore, we apply our proposed robust banded and tapering covariance estimation methods. The candidate set $\mathscr{P}$ of threshold parameter $\tau$ is chosen to be the same as in \eqref{candidate}. The resampling scheme chooses $\hat{k}^{\MB}_{1} = 10$, $\hat{k}^{\MB}_{2} = 1$ and  $\hat{\tau}^{\MB}_{} = |x|_{99.95} = 0.186$  for the proposed robust banded estimator and picks $\hat{k}^{\mathcal{T}}_{1} = 12$, $\hat{k}^{\mathcal{T}}_{2} = 2$ and  $\hat{\tau}^{\mathcal{T}}_{} = |x|_{99.99} = 0.489$ for the robust tapering estimator with a random split of $N = 10$ times. 
\begin{figure}
  \centering
    \subfigure[]{\includegraphics[width=0.32\textwidth]{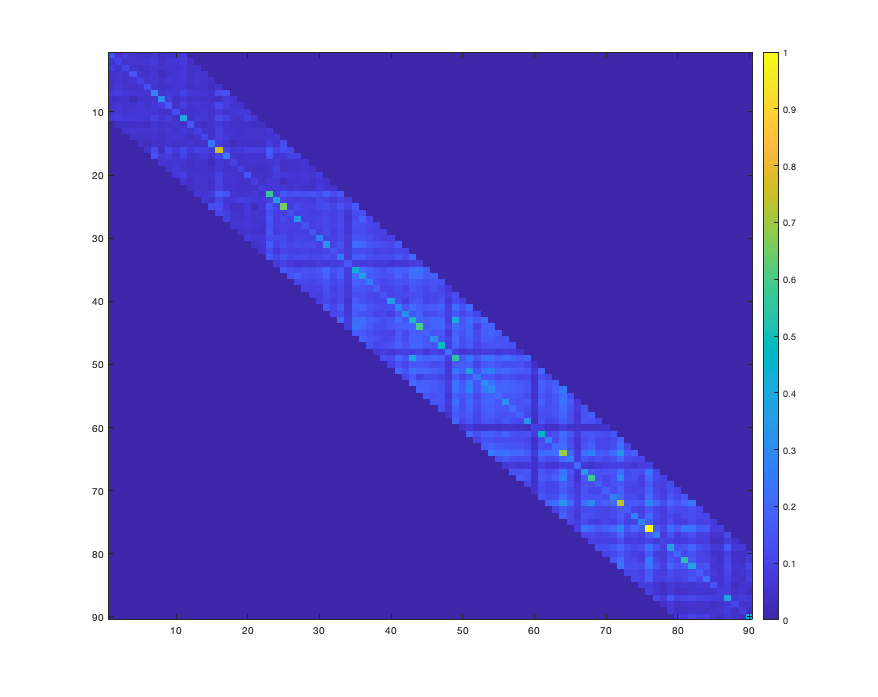}}
  \subfigure[]{\includegraphics[width=0.32\textwidth]{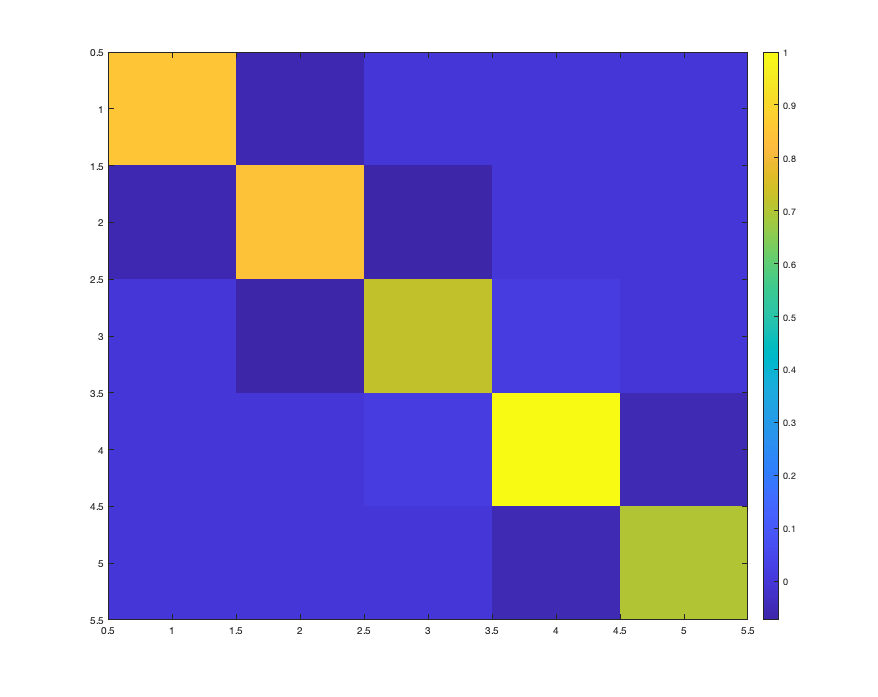}}
     \subfigure[]{\includegraphics[width=0.32\textwidth]{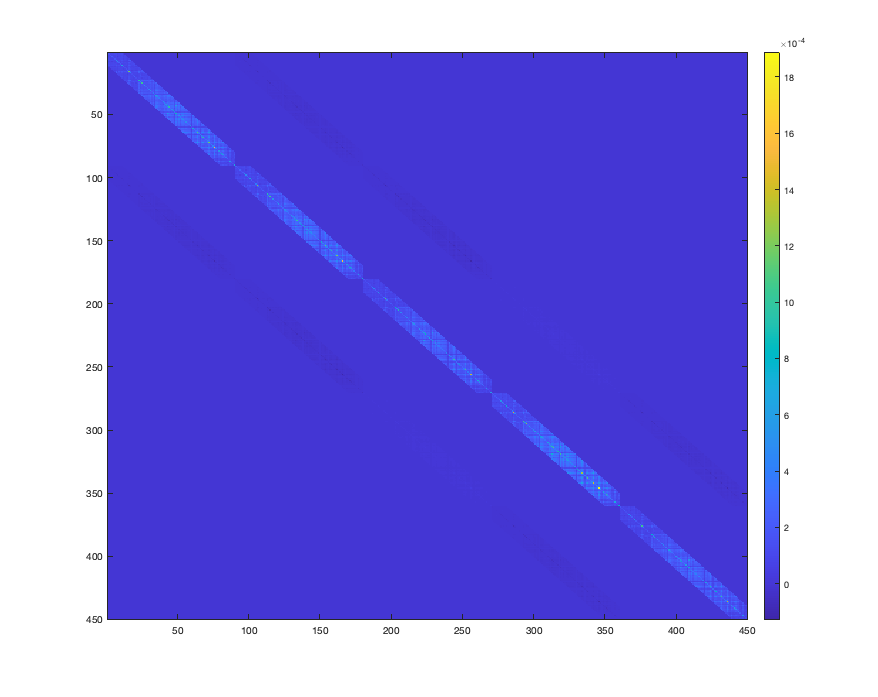}}
          \subfigure[]{\includegraphics[width=0.32\textwidth]{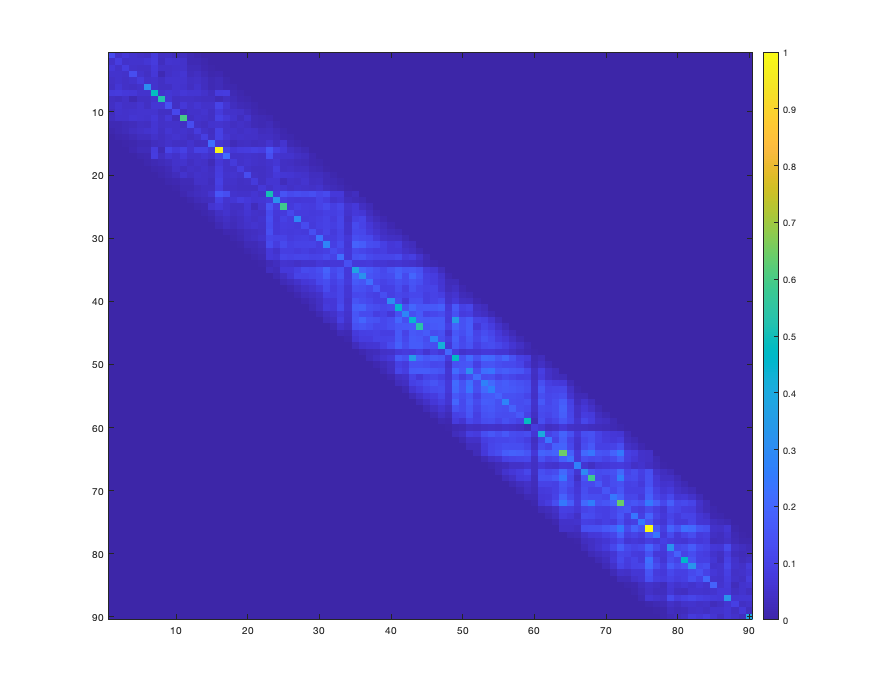}}
  \subfigure[]{\includegraphics[width=0.32\textwidth]{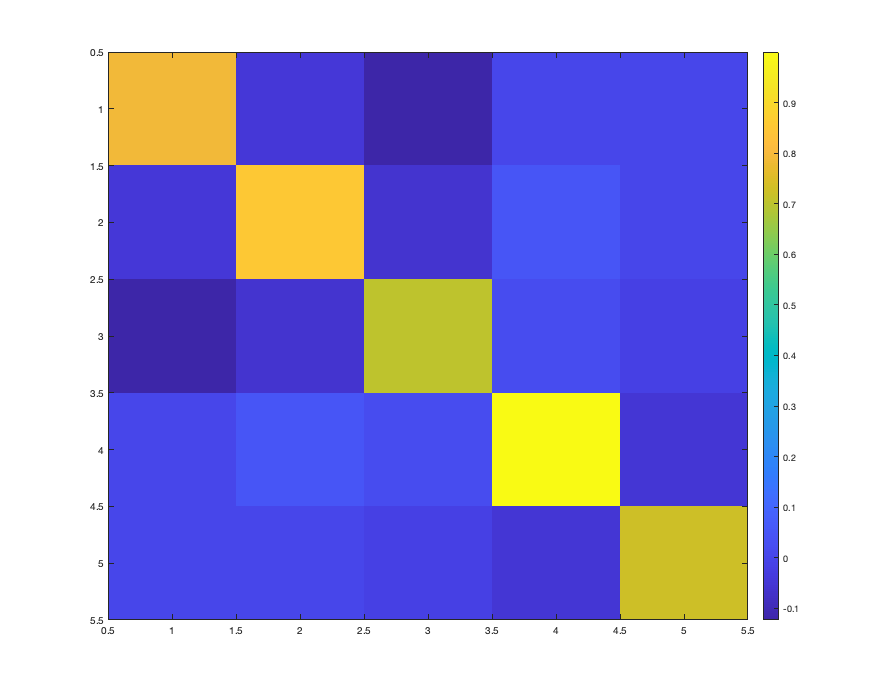}}
      \subfigure[]{\includegraphics[width=0.32\textwidth]{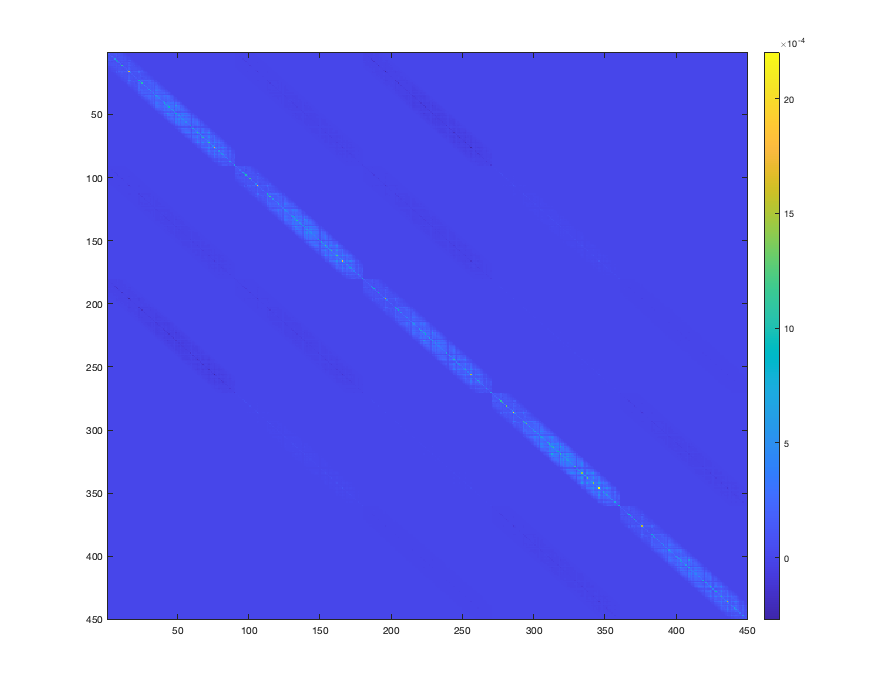}}
\caption{S$\&$P 500 stock data analysis:  estimated covariance matrices obtained by robust banded and tapering methods: (a)  scaled banded covariance for stock direction $ \hat{\boldsymbol\Sigma}^{\mathcal{R},\MB}_{1}(\hat{k}^{\mathcal{B}}_{1}) $, (b) scaled banded covariance for weekday direction $ \hat{\boldsymbol\Sigma}^{\mathcal{R},\MB}_{2}(\hat{k}^{\mathcal{B}}_{2}) $, (c) overall  banded covariance $ \hat{\boldsymbol\Sigma}^{\mathcal{R},\MB}_{}(\hat{k}^{\mathcal{B}}_{1},\hat{k}^{\mathcal{B}}_{2})$. Panel (d), scaled 
tapering covariance for stock direction $ \hat{\boldsymbol\Sigma}^{\mathcal{R},\mathcal{T}}_{1}(\hat{k}^{\mathcal{T}}_{1})$, (e) scaled tapering covariance for weekday direction $ \hat{\boldsymbol\Sigma}^{\mathcal{R},\mathcal{T}}_{2}(\hat{k}^{\mathcal{T}}_{2})$, (f), overall tapering covariance $ \hat{\boldsymbol\Sigma}^{\mathcal{R},\mathcal{T}}_{}(\hat{k}^{\mathcal{T}}_{1},\hat{k}^{\mathcal{T}}_{2})$.}
\label{real_stock}
\end{figure}}
\par
We plot the estimated covariance matrices in Figure \ref{real_stock}. In (a) and (d), we present the estimated covariance matrices over the stock direction obtained by banded and tapering (denoted by $\hat{\M \Sigma}_{1}^{\mathcal{R},\MB}(\hat{k}^{\mathcal{B}}_{1})$,$\hat{\M\Sigma}_{1}^{\mathcal{R},\mathcal{T}}(\hat{k}^{\mathcal{T}}_{1})\in \RR^{90 \times 90}$), and in (b) and (e), we have the estimated covariance matrices for weekday direction, which are denoted by $\hat{\M \Sigma}_{2}^{\mathcal{R},\MB}(\hat{k}^{\mathcal{R},\mathcal{B}}_{2})$,$\hat{\M\Sigma}_{2}^{\mathcal{R},\mathcal{T}}(\hat{k}^{\mathcal{T}}_{2})\in \RR^{5 \times 5}$. All these matrices are scaled such that the maximum entry is $1$. The overall covariance matrices ($\hat{\M \Sigma}_{}^{\mathcal{R},\MB}(\hat{k}^{\mathcal{B}}_{1},\hat{k}^{\mathcal{R},\mathcal{B}}_{2})$, $\hat{\M \Sigma}_{}^{\mathcal{R},\mathcal{T}}(\hat{k}^{\mathcal{T}}_{1},\hat{k}^{\mathcal{T}}_{2}) \in \RR^{450\times 450}$) are plotted in Figure \ref{real_stock} (c) and (f). In general we find that the covariance patterns are very similar between banded and tapering estimates, and both methods clearly demonstrate a bandable structure in the estimated covariance matrix along both stock and weekday directions. For the weekday direction, as shown in Figure \ref{real_stock} (b) and (e), there is a strong self-correlation in DPCs at each weekday, and negative correlations in DPCs among Monday, Tuesday and Wednesday, and between Thursday and Friday. For the stock direction, small values for selected bandwidths ($10$ for banded and $12$ for tapering) confirm the utility of reordering the stocks by Isomap \citep{wagaman2009discovering} on our dataset, and also suggest a sparse correlation pattern in the 90 stocks that we consider. This finding is consistent with a previous study by \citet{yang2020estimating}, where the authors analyzed the same dataset using a Gaussian graphical model and showed that stocks from IT and CS sectors have a low percentage of within-sector edges as well as a weak cross-sector interaction in the graphical model. 
\section{Additional Numerical Results}\label{sec:num}
\subsection{Additional Simulation Results}\label{sec:num:simu:1}
In this section, we present additional simulation results for Section \ref{sim} of the main paper. Table \ref{Table:simulation1} contains results for $(p,q,\rho_1,\rho_2)=(20,30,0.5,0.5)$ with covariance structure of MA(1); Tables \ref{Table:simulation2:01}--\ref{addsim1} contains results for $(p,q, \rho_1, \rho_2)=(20,30, 0.1, 0.1)$, $(20,30, 0.5, 0.5)$, and $(20,30, 0.8, 0.8)$ with AR(1) covariance structure;  Tables \ref{Table:simulation4}--\ref{addsims4} contains results for $(p,q, \rho_1,\rho_2)=(100, 100, 0.1, 0.1),(100, 100, 0.5, 0.5), (100, 100, 0.8, 0.8)$ with AR(1) covariance structure; Table \ref{tb:s5simu} contains results for heavy-tailed data with $(p,q,\rho_1,\rho_2)=(20,30,0.5,0.5)$ and MA(1) covariance structure. All those tables confirm the excellent estimation accuracy performance of our proposed method.

\begin{table}[htp]
\setlength{\tabcolsep}{-1pt}
  \centering
    \caption{Additional Simulation results for $(p,q,\rho_1,\rho_2)=(20,30,0.5,0.5)$ with covariance structure of MA(1) over 100 replications.
    The averages of $\|\hat{\boldsymbol\Sigma}-\boldsymbol\Sigma\|_{\F}$, $ \|\hat{\boldsymbol\Sigma}-\boldsymbol\Sigma\|_1 $ and $ \|\hat{\boldsymbol\Sigma}-\boldsymbol\Sigma\|_{2} $ for 
the proposed estimators (Proposed B and Proposed T), doubly banded and tapering estimators (Doubly B and Doubly T), Bickel's banded estimator (Banded), Cai's tapering estimator (Tapering) and the sample covariance estimator (Sample) are reported. The averages of $ \hat{k}_1 $ and $ \hat{k}_2 $ for the proposed and doubly banded/tapering estimators, and the averages of  $ \hat{k} $ for Bickel's banded estimator and Cai's tapering estimator are also reported.  }
\vspace{0.5pt}
\begin{lrbox}{\tableboxb}
\begin{tabular}{p{0.24\textwidth}>{\centering}p{0.23\textwidth}p{0.15\textwidth}p{0.15\textwidth}p{0.15\textwidth}p{0.08\textwidth}p{0.08\textwidth}p{0.08\textwidth}}
  \hline
$ (n, p, q, \rho_1,\rho_2) $ & Method & $\|\hat{\boldsymbol\Sigma}-\boldsymbol\Sigma\|_{\F}$ & $ \|\hat{\boldsymbol\Sigma}-\boldsymbol\Sigma\|_1 $ & $ \|\hat{\boldsymbol\Sigma}-\boldsymbol\Sigma\|_{2} $ & $\hat{k}$ \quad & $\hat{k}_1$ \quad & $\hat{k}_2$ \quad \\ \hline
$(50,20,30,0.5,0.5)$ & Sample &85.98&94.73&22.02\\& Banded &22.04& 3.61& 2.40& 1.11&&\\& Tapering &22.56& 3.97& 2.54& 2.10&&\\& Doubly B &11.46&3.37&2.02&&1.00&1.00\\& Doubly T &19.38&5.36&2.81&&1.64&1.66\\&Proposed B &3.95&1.24&0.78&&1.65&1.41\\& Proposed T &4.30&1.40&0.85&&2.00&2.00\\
$(100,20,30,0.5,0.5)$ & Sample &60.44&62.36&13.08\\& Banded &21.33& 3.03& 2.18& 1.01&&\\& Tapering &21.61& 3.27& 2.27& 2.02&&\\& Doubly B &8.07&2.23&1.35&&1.01&1.00\\& Doubly T &12.26&3.96&1.93&&2.00&2.00\\& Proposed B &2.71&0.82&0.52&&1.43&1.47\\& Proposed T &2.97&0.93&0.57&&2.00&2.00\\
$(200,20,30,0.5,0.5)$ & Sample &42.65&41.82& 8.13\\& Banded &21.01& 2.71& 2.08& 1.05&&\\& Tapering &21.15& 2.88& 2.13& 2.02&&\\& Doubly B &5.65&1.52&0.91&&1.00&1.00\\& Doubly T &8.61&2.71&1.32&&2.00&2.00\\& Proposed B &1.87&0.55&0.35&&1.57&1.37\\& Proposed T &2.05&0.63&0.39&&2.00&2.00\\

  \hline
\end{tabular}
\end{lrbox}
\label{Table:simulation1}
\scalebox{0.75}{\usebox{\tableboxb}}
\end{table}
\begin{table}[htp]
\setlength{\tabcolsep}{-1pt}
  \centering
    \caption{Additional Simulation results for $(p,q, \rho_1, \rho_2)=(20,30, 0.1, 0.1)$ with AR(1) covariance structure over 100 replications.
    The averages of $\|\hat{\boldsymbol\Sigma}-\boldsymbol\Sigma\|_{\F}$, $ \|\hat{\boldsymbol\Sigma}-\boldsymbol\Sigma\|_1 $ and $ \|\hat{\boldsymbol\Sigma}-\boldsymbol\Sigma\|_{2} $ for 
the proposed estimators (Proposed B and Proposed T), doubly banded and tapering estimators (Doubly B and Doubly T), Bickel's banded estimator (Banded), Cai's tapering estimator (Tapering) and the sample covariance estimator (Sample) are reported. The averages of $ \hat{k}_1 $ and $ \hat{k}_2 $ for the proposed and doubly banded/tapering estimators, the averages of  $ \hat{k} $ for Bickel's banded estimator and Cai's tapering estimator are reported.}
\vspace{0.01pt}
\begin{lrbox}{\tableboxb}
\begin{tabular}{p{0.24\textwidth}>{\centering}p{0.23\textwidth}p{0.15\textwidth}p{0.15\textwidth}p{0.15\textwidth}p{0.08\textwidth}p{0.08\textwidth}p{0.08\textwidth}}
  \hline
$ (n, p, q, \rho_1,\rho_2) $ & Method & $\|\hat{\boldsymbol\Sigma}-\boldsymbol\Sigma\|_{\F}$ & $ \|\hat{\boldsymbol\Sigma}-\boldsymbol\Sigma\|_1 $ & $ \|\hat{\boldsymbol\Sigma}-\boldsymbol\Sigma\|_{2} $ & $\hat{k}$ \quad & $\hat{k}_1$ \quad & $\hat{k}_2$ \quad \\ \hline
$(50,20,30,0.1,0.1)$& Sample &85.87&92.61&18.92\\& Banded &7.86&1.50&0.96&1.02&&\\& Tapering &6.99&1.24&0.81&1.03&&\\& Doubly B &7.26&1.35&0.88&&0.17&0.23\\& Doubly T &7.13&1.30&0.84&&0.04&0.16\\& Proposed B &2.64&0.49&0.31&&1.19&1.20\\& Proposed T &3.03&0.56&0.35&&1.76&1.60\\
$(100,20,30,0.1,0.1)$ & Sample &60.40&60.59&10.85\\& Banded &6.10&1.12&0.67&1.07&&\\& Tapering &6.04&1.01&0.61&1.08&&\\& Doubly B &6.05&1.09&0.65&&0.35&0.34\\& Doubly T &6.13&1.05&0.63&&0.24&0.14\\& Proposed B &1.88&0.35&0.22&&1.27&1.23\\& Proposed T &2.12&0.39&0.24&&1.84&1.84\\
$(200,20,30,0.1,0.1)$& Sample &42.60&40.61& 6.46\\& Banded &4.93&0.85&0.48&1.04&&\\& Tapering &5.44&0.86&0.53&1.18&&\\& Doubly B &4.98&0.86&0.50&&0.48&0.45\\& Doubly T &5.42&0.89&0.53&&0.38&0.32\\& Proposed B &1.25&0.25&0.15&&1.67&1.64\\& Proposed T &1.27&0.26&0.15&&2.00&1.98\\
\hline
\end{tabular}
\end{lrbox}
\label{Table:simulation2:01}
\scalebox{0.75}{\usebox{\tableboxb}}
\end{table}

\begin{table}[htp]
\setlength{\tabcolsep}{-1pt}
  \centering
    \caption{Additional Simulation results for $(p,q, \rho_1, \rho_2)=(20,30, 0.5, 0.5)$ with AR(1) covariance structure over 100 replications.
    The averages of $\|\hat{\boldsymbol\Sigma}-\boldsymbol\Sigma\|_{\F}$, $ \|\hat{\boldsymbol\Sigma}-\boldsymbol\Sigma\|_1 $ and $ \|\hat{\boldsymbol\Sigma}-\boldsymbol\Sigma\|_{2} $ for 
the proposed estimators (Proposed B and Proposed T), doubly banded and tapering estimators (Doubly B and Doubly T), Bickel's banded estimator (Banded), Cai's tapering estimator (Tapering) and the sample covariance estimator (Sample) are reported. The averages of $ \hat{k}_1 $ and $ \hat{k}_2 $ for the proposed and doubly banded/tapering estimators, the averages of  $ \hat{k} $ for Bickel's banded estimator and Cai's tapering estimator are reported.}
\vspace{0.01pt}
\begin{lrbox}{\tableboxb}
\begin{tabular}{p{0.24\textwidth}>{\centering}p{0.23\textwidth}p{0.15\textwidth}p{0.15\textwidth}p{0.15\textwidth}p{0.08\textwidth}p{0.08\textwidth}p{0.08\textwidth}}
  \hline
$ (n, p, q, \rho_1,\rho_2) $ & Method & $\|\hat{\boldsymbol\Sigma}-\boldsymbol\Sigma\|_{\F}$ & $ \|\hat{\boldsymbol\Sigma}-\boldsymbol\Sigma\|_1 $ & $ \|\hat{\boldsymbol\Sigma}-\boldsymbol\Sigma\|_{2} $ & $\hat{k}$ \quad & $\hat{k}_1$ \quad & $\hat{k}_2$ \quad \\ \hline
$(50,20,30,0.5,0.5)$ & Sample &86.11&96.07&23.51\\& Banded &26.94& 8.50& 6.06& 2.35&&\\& Tapering &26.94& 8.50& 6.05& 2.58&&\\& Doubly B &19.72&8.87&4.05&&1.64&1.71\\& Doubly T &19.62&9.22&3.69&&2.00&2.08\\& Proposed B &8.06&4.35&2.03&&2.82&2.89\\& Proposed T &9.99&4.75&2.65&&2.00&2.00\\
$(100,20,30,0.5,0.5)$ & Sample &60.46&62.78&14.30\\& Banded &26.08& 7.76& 5.96& 2.71&&\\& Tapering &26.07& 7.73& 5.96& 3.04&&\\& Doubly B &15.55&7.05&3.28&&1.96&1.91\\& Doubly T &14.95&6.99&3.06&&2.08&2.02\\& Proposed B &5.82&3.20&1.49&&4.02&3.90\\& Proposed T &5.50&2.97&1.47&&3.94&3.86\\
$(200,20,30,0.5,0.5)$ & Sample &42.70&42.53& 9.06\\& Banded &17.21& 7.57& 3.38&22.04&&\\& Tapering &17.50& 7.64& 3.70&32.66&&\\& Doubly B &12.04&5.70&2.62&&2.09&2.17\\& Doubly T &12.05&5.69&2.65&&2.12&2.22\\& Proposed B &4.16&2.38&1.05&&4.36&4.23\\& Proposed T &4.18&2.36&1.14&&3.98&4.00\\
\hline
\end{tabular}
\end{lrbox}
\label{Table:simulation2}
\scalebox{0.75}{\usebox{\tableboxb}}
\end{table}

\begin{table}[htp]
\setlength{\tabcolsep}{-1pt}
  \centering
    \caption{Additional Simulation results for $(p,q, \rho_1,\rho_2)=(20,30, 0.8, 0.8)$ with AR(1) covariance structure over 100 replications.
    The averages of $\|\hat{\boldsymbol\Sigma}-\boldsymbol\Sigma\|_{\F}$, $ \|\hat{\boldsymbol\Sigma}-\boldsymbol\Sigma\|_1 $ and $ \|\hat{\boldsymbol\Sigma}-\boldsymbol\Sigma\|_{2} $ for 
the proposed estimators (Proposed B and Proposed T), doubly banded and tapering estimators (Doubly B and Doubly T), Bickel's banded estimator (Banded), Cai's tapering estimator (Tapering) and the sample covariance estimator (Sample) are reported. The averages of $ \hat{k}_1 $ and $ \hat{k}_2 $ for the proposed and doubly banded/tapering estimators, the averages of  $ \hat{k} $ for Bickel's banded estimator and Cai's tapering estimator are also reported. }
\vspace{0.5pt}
\begin{lrbox}{\tableboxb}
\begin{tabular}{p{0.24\textwidth}>{\centering}p{0.23\textwidth}p{0.15\textwidth}p{0.15\textwidth}p{0.15\textwidth}p{0.08\textwidth}p{0.08\textwidth}p{0.08\textwidth}}
  \hline
$ (n, p, q, \rho_1,\rho_2) $ & Method & $\|\hat{\boldsymbol\Sigma}-\boldsymbol\Sigma\|_{\F}$ & $ \|\hat{\boldsymbol\Sigma}-\boldsymbol\Sigma\|_1 $ & $ \|\hat{\boldsymbol\Sigma}-\boldsymbol\Sigma\|_{2} $ & $\hat{k}$ \quad & $\hat{k}_1$ \quad & $\hat{k}_2$ \quad \\ \hline
$(50,20,30,0.8,0.8)$ & Sample & 87.53&112.82& 39.48\\& Banded &60.37&59.83&27.35&63.23&&\\& Tapering &61.52&56.92&31.35&68.01&&\\& Doubly B &54.10&57.18&26.65&&4.50&5.05\\& Doubly T &49.75&52.63&25.72&&5.94&6.30\\& Proposed B &33.92&40.19&17.30&& 6.81& 6.97\\& Proposed T &32.06&36.27&18.13&& 7.66& 7.84\\
$(100,20,30,0.8,0.8)$& Sample &61.43&75.67&26.67\\& Banded &46.85&47.32&21.10&85.12&&\\& Tapering & 46.38& 45.50& 23.16&104.08&&\\& Doubly B &42.89&46.04&21.03&&5.69&5.73\\& Doubly T &38.90&41.95&20.63&&7.24&7.22\\& Proposed B &24.34&28.64&12.63&& 8.80& 8.57\\& Proposed T &23.12&26.35&13.07&& 9.64& 9.72\\
$(200,20,30,0.8,0.8)$ & Sample &43.44&52.14&17.76\\& Banded & 35.28& 37.91& 15.50&117.91&&\\& Tapering & 35.59& 37.04& 17.75&134.40&&\\& Doubly B &33.91&37.57&16.42&&6.69&6.58\\& Doubly T &30.49&34.31&16.23&&8.26&8.56\\& Proposed B &17.30&21.08& 9.13&&10.38&10.26\\& Proposed T &17.21&20.39& 9.79&&11.55&11.42\\
\hline
\end{tabular}\label{addsim1}
\end{lrbox}
\scalebox{0.75}{\usebox{\tableboxb}}
\end{table}

\begin{table}[htp]
\setlength{\tabcolsep}{-1pt}
  \centering
    \caption{Simulation results for $(p,q, \rho_1, \rho_2)=(100,100, 0.1, 0.1)$  with AR(1) covariance structure over 100 replications.
    The averages of $\|\hat{\boldsymbol\Sigma}-\boldsymbol\Sigma\|_{\F}$, $ \|\hat{\boldsymbol\Sigma}-\boldsymbol\Sigma\|_1 $ and $ \|\hat{\boldsymbol\Sigma}-\boldsymbol\Sigma\|_{2} $ for 
the proposed estimators (Proposed B and Proposed T), doubly banded and tapering estimators (Doubly B and Doubly T), Bickel's banded estimator (Banded), Cai's tapering estimator (Tapering) and the sample covariance estimator (Sample) are reported. The averages of $ \hat{k}_1 $ and $ \hat{k}_2 $ for the proposed and doubly banded/tapering estimators, the averages of  $ \hat{k} $ for Bickel's banded estimator and Cai's tapering estimator are also reported. }
\vspace{0.01pt}
\begin{lrbox}{\tableboxb}
\begin{tabular}{p{0.24\textwidth}>{\centering}p{0.23\textwidth}p{0.15\textwidth}p{0.15\textwidth}p{0.15\textwidth}p{0.08\textwidth}p{0.08\textwidth}p{0.08\textwidth}}
  \hline
$ (n, p, q, \rho_1,\rho_2) $ & Method & $\|\hat{\boldsymbol\Sigma}-\boldsymbol\Sigma\|_{\F}$ & $ \|\hat{\boldsymbol\Sigma}-\boldsymbol\Sigma\|_1 $ & $ \|\hat{\boldsymbol\Sigma}-\boldsymbol\Sigma\|_{2} $ & $\hat{k}$ \quad & $\hat{k}_1$ \quad & $\hat{k}_2$ \quad \\ \hline
$(50,100,100,0.1,0.1)$ & Sample &1428.38&1598.50& 231.64\\& Banded &32.14& 1.84& 1.19& 1.02&&\\& Tapering &29.33& 1.53& 1.03& 1.09&&\\& Doubly B &30.40&1.67&1.11&&0.26&0.27\\& Doubly T &29.92&1.59&1.06&&0.16&0.16\\& Proposed B &6.16&0.36&0.20&&2.37&2.28\\& Proposed T &6.44&0.37&0.21&&2.80&2.78\\
$(100,100,100,0.1,0.1)$ & Sample &1004.94&1029.06& 120.68\\& Banded &24.84& 1.31& 0.80& 1.03&&\\& Tapering &25.03& 1.19& 0.72& 1.11&&\\& Doubly B &24.89&1.29&0.79&&0.38&0.43\\& Doubly T &25.81&1.29&0.77&&0.24&0.40\\& Proposed B &4.25&0.26&0.14&&2.78&2.49\\& Proposed T &4.09&0.25&0.14&&3.00&2.98\\
$(200,100,100,0.1,0.1)$& Sample &708.84&679.93& 64.43\\& Banded &22.58& 1.16& 0.62& 2.04&&\\& Tapering &22.49& 1.15& 0.62& 2.00&&\\& Doubly B &20.35&1.00&0.58&&0.52&0.50\\& Doubly T &22.44&1.05&0.59&&0.52&0.52\\& Proposed B &3.41&0.24&0.11&&4.44&4.38\\& Proposed T &2.94&0.19&0.10&&4.01&4.03\\
\hline
\end{tabular}
\end{lrbox}
\label{Table:simulation4}
\scalebox{0.75}{\usebox{\tableboxb}}  
\end{table}

\begin{table}[htp]
\setlength{\tabcolsep}{-1pt}
  \centering
    \caption{Simulation results for $(p,q, \rho_1, \rho_2)=(100,100, 0.5, 0.5)$ with AR(1) covariance structure over 100 replications.
    The averages of $\|\hat{\boldsymbol\Sigma}-\boldsymbol\Sigma\|_{\F}$, $ \|\hat{\boldsymbol\Sigma}-\boldsymbol\Sigma\|_1 $ and $ \|\hat{\boldsymbol\Sigma}-\boldsymbol\Sigma\|_{2} $ for 
the proposed estimators (Proposed B and Proposed T), doubly banded and tapering estimators (Doubly B and Doubly T), Bickel's banded estimator (Banded), Cai's tapering estimator (Tapering) and the sample covariance estimator (Sample) are reported. The averages of $ \hat{k}_1 $ and $ \hat{k}_2 $ for the proposed and doubly banded/tapering estimators, the averages of  $ \hat{k} $ for Bickel's banded estimator and Cai's tapering estimator are also reported. }
\vspace{0.01pt}
\begin{lrbox}{\tableboxb}
\begin{tabular}{p{0.24\textwidth}>{\centering}p{0.23\textwidth}p{0.15\textwidth}p{0.15\textwidth}p{0.15\textwidth}p{0.08\textwidth}p{0.08\textwidth}p{0.08\textwidth}}
  \hline
$ (n, p, q, \rho_1,\rho_2) $ & Method & $\|\hat{\boldsymbol\Sigma}-\boldsymbol\Sigma\|_{\F}$ & $ \|\hat{\boldsymbol\Sigma}-\boldsymbol\Sigma\|_1 $ & $ \|\hat{\boldsymbol\Sigma}-\boldsymbol\Sigma\|_{2} $ & $\hat{k}$ \quad & $\hat{k}_1$ \quad & $\hat{k}_2$ \quad \\ \hline
$(50,100,100,0.5,0.5)$& Sample &1428.37&1606.57& 250.19\\& Banded &113.01&  9.54&  6.45&  2.68&&\\& Tapering &112.91&  9.50&  6.46&  2.96&&\\& Double B &83.49&11.04&4.54&&1.81&1.77\\& Double T &83.36&11.63&4.36&&2.06&2.08\\& Proposed B &19.27& 3.20& 1.46&& 5.30& 5.11\\& Proposed T &17.81& 2.85& 1.42&& 5.48& 5.06\\
$(100,100,100,0.5,0.5)$& Sample &1005.05&1029.25& 134.55\\& Banded &109.18&  8.45&  6.33&  2.91&&\\& Tapering &109.11&  8.40&  6.33&  3.32&&\\& Doubly B &65.54&8.49&3.59&&2.09&1.98\\& Doubly T &64.29&8.53&3.50&&2.18&2.10\\& Proposed B &13.69& 2.47& 1.07&& 6.65& 6.43\\& Proposed T &13.09& 2.26& 1.06&& 7.18& 7.14\\
$(200,100,100,0.5,0.5)$ & Sample &708.97&684.42& 74.77\\& Banded &107.05&  7.79&  6.24&  3.23&&\\& Tapering &106.94&  7.73&  6.23&  3.88&&\\& Doubly B &51.87&6.81&3.01&&2.16&2.29\\& Doubly T &51.78&6.77&3.04&&2.30&2.30\\& Proposed B &10.32& 1.70& 0.80&& 5.52& 5.29\\& Proposed T &9.69&1.61&0.83&&5.68&5.74\\
\hline
\end{tabular}
\end{lrbox}
\label{Table:simulation4:2}
\scalebox{0.75}{\usebox{\tableboxb}}  
\end{table}
\begin{table}[htp]
\setlength{\tabcolsep}{-1pt}
  \centering
    \caption{Additional Simulation results for $(p,q, \rho_1,\rho_2)=(100, 100, 0.8, 0.8)$ with AR(1) covariance structure over 100 replications.
    The averages of $\|\hat{\boldsymbol\Sigma}-\boldsymbol\Sigma\|_{\F}$, $ \|\hat{\boldsymbol\Sigma}-\boldsymbol\Sigma\|_1 $ and $ \|\hat{\boldsymbol\Sigma}-\boldsymbol\Sigma\|_{2} $ for 
the proposed estimators (Proposed B and Proposed T), doubly banded and tapering estimators (Doubly B and Doubly T), Bickel's banded estimator (Banded), Cai's tapering estimator (Tapering) and the sample covariance estimator (Sample) are reported. The averages of $ \hat{k}_1 $ and $ \hat{k}_2 $ for the proposed and doubly banded/tapering estimators, the averages of  $ \hat{k} $ for Bickel's banded estimator and Cai's tapering estimator are also reported.}
\vspace{0.5pt}
\begin{lrbox}{\tableboxb}
\begin{tabular}{p{0.24\textwidth}>{\centering}p{0.23\textwidth}p{0.15\textwidth}p{0.15\textwidth}p{0.15\textwidth}p{0.08\textwidth}p{0.08\textwidth}p{0.08\textwidth}}
  \hline
$ (n, p, q, \rho_1,\rho_2) $ & Method & $\|\hat{\boldsymbol\Sigma}-\boldsymbol\Sigma\|_{\F}$ & $ \|\hat{\boldsymbol\Sigma}-\boldsymbol\Sigma\|_1 $ & $ \|\hat{\boldsymbol\Sigma}-\boldsymbol\Sigma\|_{2} $ & $\hat{k}$ \quad & $\hat{k}_1$ \quad & $\hat{k}_2$ \quad \\ \hline
$(50,100,100,0.8,0.8)$& Sample &1429.90&1642.93& 335.63\\& Banded &400.02& 81.89& 70.20& 11.38&&\\& Tapering &398.82& 80.70& 70.60& 12.32&&\\& Double B &238.94&92.80&35.97&&6.44&6.48\\& Double T &221.82&82.78&36.64&&8.31&8.18\\& Proposed B &86.55&42.92&17.88&&14.13&13.12\\& Proposed T &78.24&37.44&17.17&&16.48&15.78\\
$(100,100,100,0.8,0.8)$& Sample &1006.18&1069.56& 196.81\\& Banded &396.39& 78.84& 70.13& 11.54&&\\& Tapering &395.99& 78.21& 70.39& 13.46&&\\& Double B &190.17&72.81&30.62&&7.15&6.88\\& Double T &174.2&64.95&30.64&&9.02&8.95\\& Proposed B &62.98&32.43&12.78&&16.75&16.89\\& Proposed T &58.92&29.22&12.60&&20.78&20.70\\
$(200,100,100,0.8,0.8)$& Sample &709.59&711.09&120.51\\& Banded &394.56& 76.71& 70.15& 11.21&&\\& Tapering &394.34& 76.33& 70.28& 13.68&&\\& Double B &151.33&58.09&25.81&&7.70&7.76\\& Double T &137.57&53.45&25.56&&10.10 &10.18\\& Proposed B &44.19&22.22& 9.00&&14.66&14.87\\& Proposed T &41.61&20.62& 9.21&&19.24&19.24\\
\hline
\end{tabular}\label{addsims4}
\end{lrbox}
\scalebox{0.75}{\usebox{\tableboxb}}
\end{table}

\begin{table}[htp]
\setlength{\tabcolsep}{-1pt}
  \centering
    \caption{Additional Simulation results for heavy-tailed data with $(p,q,\rho_1,\rho_2)=(20,30,0.5,0.5)$ and MA(1) covariance structure over 100 replications. The averages of $\|\hat{\boldsymbol\Sigma}-\boldsymbol\Sigma\|_{\F}$, $ \|\hat{\boldsymbol\Sigma}-\boldsymbol\Sigma\|_1 $ and $ \|\hat{\boldsymbol\Sigma}-\boldsymbol\Sigma\|_{2} $ for our robust estimators (Robust B and Robust T), 
our proposed estimators (Proposed B and Proposed T) and the naive sample covariance estimator (Sample) are summarized in this table. The averages of $ \hat{k}_1 $ and $ \hat{k}_2 $ for the proposed robust/non-robust methods, the averages of  $ \hat{\tau} $ for the proposed robust methods are also reported.}
\vspace{0.5pt}
\begin{lrbox}{\tableboxb}
\Rotatebox{0}{

\begin{tabular}{p{0.24\textwidth}>{\centering}p{0.23\textwidth}p{0.15\textwidth}p{0.15\textwidth}p{0.15\textwidth}p{0.08\textwidth}p{0.08\textwidth}p{0.08\textwidth}}
   \hline
$(n,p,q,\rho_1,\rho_2)$ & Method & $\| \hat{\Sigma} - \Sigma\|_{\F}$ & $\|\hat{\Sigma} - \Sigma\|_1$ & $\| \hat{\Sigma} - \Sigma \|_{2}$ & $\hat{k}_1$ & $\hat{k}_2$ & $\hat{\tau}$
\\
\hline
$(50,20,30,0.5,0.5)$
& Sample &244.85&553.94&208.93\\& Proposed B &19.54&6.97&4.43&1.34&1.49\\& Proposed T &16.99&4.27&2.91&1.82&1.84\\& Robust B&13.42&3.97&2.50&1.46&1.59&6.85\\& Robust T&11.75&2.89&1.98&1.98&2.00&5.94\\
$(100,20,30,0.5,0.5)$
& Sample &181.03&391.29&148.41\\& Proposed B &15.46&6.52&3.95&1.67&1.55\\& Proposed T &14.33&4.14&2.79&1.82&1.94\\& Robust B&11.46&3.44&2.15&1.76&1.77&7.93\\& Robust T&9.89&2.49&1.70&2.00&2.00&7.26\\
$(200,20,30,0.5,0.5)$& Sample &126.68&270.67&97.48\\& Proposed B &10.75&3.39&2.11&1.62&1.47\\& Proposed T &10.54&2.63&1.75&1.92&1.88\\& Robust B&9.65&2.69&1.73&1.69&1.51&10.98\\& Robust T&8.83&2.14&1.48&2.00&2.00&9.75\\
  \hline
\end{tabular}}
\label{tb:s5simu}
\end{lrbox}
\scalebox{0.75}{\usebox{\tableboxb}}
\end{table}

\subsection{Tables for Standard Errors}\label{sec:simu:se}
We present the corresponding standard errors for all tables in main article and Supplementary File in Tables \ref{Table:simulation3SE}--\ref{Table:simulation5SE}. 

\begin{table}[htp]
\setlength{\tabcolsep}{-1pt}
  \centering
    \caption{Standard errors for Table \ref{Table:simulation3}}
\vspace{0.5pt}
\begin{lrbox}{\tableboxb}
\begin{tabular}{p{0.24\textwidth}>{\centering}p{0.23\textwidth}p{0.15\textwidth}p{0.15\textwidth}p{0.15\textwidth}p{0.08\textwidth}p{0.08\textwidth}p{0.08\textwidth}}
  \hline
$ (n, p, q, \rho_1,\rho_2) $ & Method & $\|\hat{\boldsymbol\Sigma}-\boldsymbol\Sigma\|_{\F}$ & $ \|\hat{\boldsymbol\Sigma}-\boldsymbol\Sigma\|_1 $ & $ \|\hat{\boldsymbol\Sigma}-\boldsymbol\Sigma\|_{2} $ & $\hat{k}$ \quad & $\hat{k}_1$ \quad & $\hat{k}_2$ \quad \\ \hline
$(50,100,100,0.5,0.5)$ & Sample &0.42&3.85&0.21\\& Banded &0.08&0.03&0.01&0.04&&\\& Tapering &0.01&0.03&0.01&0.00&&\\& Double B &0.16&0.04&0.02&&0.01&0.01\\& Double T &0.81&0.01&0.04&&0.06&0.06\\& Proposed B &0.09&0.01&0.01&&0.09&0.09\\& Proposed T &0.03&0.01&0.01&&0&0\\

$(100,100,100,0.5,0.5)$ & Sample &0.20&2.00&0.13\\& Banded &0.03&0.02&0.01&0.03&&\\& Tapering &0.00&0.01&0.01&0.00&&\\& Double B &0.16&0.04&0.02&&0.01&0.01\\& Double T &0.81&0.01&0.04&&0.06&0.06\\& Proposed B &0.06&0.01&0.00&&0.08&0.08\\& Proposed T &0.03&0.01&0.00&&0&0\\
$(200,100,100,0.5,0.5)$ & Sample &0.09&1.06&0.06\\& Banded &0.02&0.01&0.00&0.03&&\\& Tapering &0.01&0.01&0.00&0.04&&\\& Double B &0.06&0.02&0.01&&0.01&0\\& Double T &0.01&0.02&0.01&&0&0\\& Proposed B &0.05&0.01&0.00&&0.09&0.09\\& Proposed T &0.02&0.00&0.00&&0&0\\
  \hline
\end{tabular}
\end{lrbox}
\scalebox{0.75}{\usebox{\tableboxb}}
\label{Table:simulation3SE}
\end{table}

\begin{table}[htp]
\setlength{\tabcolsep}{-1pt}
  \centering
    \caption{Standard errors for Table \ref{Table:simulation4}}
\vspace{0.5pt}
\begin{lrbox}{\tableboxb}
\begin{tabular}{p{0.24\textwidth}>{\centering}p{0.23\textwidth}p{0.15\textwidth}p{0.15\textwidth}p{0.15\textwidth}p{0.08\textwidth}p{0.08\textwidth}p{0.08\textwidth}}
  \hline
$ (n, p, q, \rho_1,\rho_2) $ & Method & $\|\hat{\boldsymbol\Sigma}-\boldsymbol\Sigma\|_{\F}$ & $ \|\hat{\boldsymbol\Sigma}-\boldsymbol\Sigma\|_1 $ & $ \|\hat{\boldsymbol\Sigma}-\boldsymbol\Sigma\|_{2} $ & $\hat{k}$ \quad & $\hat{k}_1$ \quad & $\hat{k}_2$ \quad \\ \hline
$(50,100,100,0.1,0.1)$ & Sample &0.29&3.80&0.16\\& Banded &0.08&0.02&0.01&0.01&&\\& Tapering &0.27&0.03&0.01&0.03&&\\& Doubly B &0.21&0.03&0.01&&0.04&0.05\\& Doubly T &0.35&0.03&0.02&&0.05&0.06\\& Proposed B &0.23&0.01&0.00&&0.12&0.13\\& Proposed T &0.28&0.01&0.01&&0.11&0.12\\
$(100,100,100,0.1,0.1)$ & Sample &0.14&1.78&0.06\\& Banded &0.06&0.01&0.01&0.02&&\\& Tapering &0.12&0.02&0.01&0.03&&\\& Doubly B &0.09&0.02&0.01&&0.05&0.06\\& Doubly T &0.19&0.03&0.01&&0.07&0.09\\& Proposed B &0.11&0.01&0.00&&0.15&0.16\\& Proposed T &0.11&0.00&0.00&&0.10&0.10\\
$(200,100,100,0.1,0.1)$& Sample &0.07&1.03&0.03\\& Banded &0.04&0.01&0.00&0.02&&\\& Tapering &0.01&0.01&0.00&0.00&&\\& Doubly B &0.06&0.01&0.00&&0.06&0.05\\& Doubly T &0.00&0.01&0.00&&0.09&0.09\\& Proposed B &0.04&0.00&0.00&&0.15&0.15\\& Proposed T &0.03&0.00&0.00&&0.13&0.13\\
\hline
\end{tabular}
\end{lrbox}
\scalebox{0.75}{\usebox{\tableboxb}}
\end{table}

\begin{table}[htp]
\setlength{\tabcolsep}{-1pt}
  \centering
    \caption{Standard errors for Table \ref{Table:simulation4:2}}
\vspace{0.5pt}
\begin{lrbox}{\tableboxb}
\begin{tabular}{p{0.24\textwidth}>{\centering}p{0.23\textwidth}p{0.15\textwidth}p{0.15\textwidth}p{0.15\textwidth}p{0.08\textwidth}p{0.08\textwidth}p{0.08\textwidth}}
  \hline
$ (n, p, q, \rho_1,\rho_2) $ & Method & $\|\hat{\boldsymbol\Sigma}-\boldsymbol\Sigma\|_{\F}$ & $ \|\hat{\boldsymbol\Sigma}-\boldsymbol\Sigma\|_1 $ & $ \|\hat{\boldsymbol\Sigma}-\boldsymbol\Sigma\|_{2} $ & $\hat{k}$ \quad & $\hat{k}_1$ \quad & $\hat{k}_2$ \quad \\ \hline
$(50,100,100,0.5,0.5)$& Sample &0.47&4.14&0.27\\& Banded &0.09&0.05&0.01&0.08&&\\& Tapering &0.06&0.05&0.01&0.12&&\\& Double B &0.25&0.13&0.04&&0.06&0.05\\& Double T &0.28&0.12&0.05&&0.03&0.04\\& Proposed B &0.26&0.04&0.02&&0.15&0.15\\& Proposed T &0.33&0.03&0.02&&0.09&0.12\\
$(100,100,100,0.5,0.5)$& Sample &0.21&1.84&0.15\\& Banded &0.03&0.02&0.01&0.07&&\\& Tapering &0.02&0.02&0.01&0.10&&\\& Doubly B &0.32&0.06&0.03&&0.05&0.04\\& Doubly T &0.17&0.06&0.01&&0.06&0.04\\& Proposed B &0.09&0.03&0.01&&0.23&0.20\\& Proposed T &0.09&0.02&0.01&&0.23&0.22\\
$(200,100,100,0.5,0.5)$ & Sample &0.11&1.03&0.07\\& Banded &0.02&0.02&0.01&0.06&&\\& Tapering &0.03&0.02&0.01&0.08&&\\& Doubly B &0.12&0.05&0.02&&0.04&0.05\\& Doubly T &0.02&0.05&0.02&&0.07&0.07\\& Proposed B &0.18&0.02&0.01&&0.12&0.12\\& Proposed T &0.13&0.02&0.01&&0.07&0.07\\
\hline
\end{tabular}
\end{lrbox}
\scalebox{0.75}{\usebox{\tableboxb}}
\end{table}

\begin{table}[htp]
\setlength{\tabcolsep}{-1pt}
  \centering
    \caption{Standard errors for Table \ref{Table:simulation6}}
\vspace{0.5pt}
\begin{lrbox}{\tableboxb}\Rotatebox{0}{
\begin{tabular}{p{0.24\textwidth}>{\centering}p{0.23\textwidth}p{0.15\textwidth}p{0.15\textwidth}p{0.15\textwidth}p{0.08\textwidth}p{0.08\textwidth}p{0.08\textwidth}}
  \hline
$(n,p,q,\rho_1,\rho_2)$ & Method & $\| \hat{\Sigma} - \Sigma\|_{\F}$ & $\|\hat{\Sigma} - \Sigma\|_1$ & $\| \hat{\Sigma} - \Sigma \|_{2}$ & $\hat{k}_1$ & $\hat{k}_2$ & $\hat{\tau}$
\\
\hline
$(50,20,30,0.1,0.1)$& Sample &35.68&100.98&36.41\\& Proposed B&1.72&0.62&0.42&0.11&0.11\\& Proposed T&1.47&0.22&0.21&0.10&0.10\\& Robust B&0.75&0.23&0.12&0.10&0.11&0.32\\& Robust T&0.74&0.11&0.08&0.10&0.10&0.37\\
$(50,20,30,0.5,0.5)$& Sample &36.30&99.56&37.02\\& Proposed B&2.86&1.57&0.95&0.09&0.10\\& Proposed T&1.70&0.48&0.28&0.06&0.07\\& Robust B&0.61&0.52&0.23&0.07&0.08&0.24\\& Robust T&0.51&0.26&0.13&0.00&0.02&0.24\\
$(50,20,30,0.8,0.8)$& Sample &37.44&85.72&38.24\\& Proposed B&14.57&28.31&12.54&0.16&0.11\\& Proposed T&12.64&23.40&10.72&0.19&0.17\\& Robust B&1.63&2.76&1.32&0.24&0.21&0.15\\& Robust T&1.40&2.22&1.12&0.27&0.26&0.17\\
  \hline
\end{tabular}}
\end{lrbox}
\scalebox{0.75}{\usebox{\tableboxb}}
\end{table}
\begin{table}[htp]
\setlength{\tabcolsep}{-1pt}
  \centering
    \caption{Standard errors for Table \ref{Table:simulation1}}
\vspace{0.5pt}
\begin{lrbox}{\tableboxb}
\begin{tabular}{p{0.24\textwidth}>{\centering}p{0.23\textwidth}p{0.15\textwidth}p{0.15\textwidth}p{0.15\textwidth}p{0.08\textwidth}p{0.08\textwidth}p{0.08\textwidth}}
  \hline
$ (n, p, q, \rho_1,\rho_2) $ & Method & $\|\hat{\boldsymbol\Sigma}-\boldsymbol\Sigma\|_{\F}$ & $ \|\hat{\boldsymbol\Sigma}-\boldsymbol\Sigma\|_1 $ & $ \|\hat{\boldsymbol\Sigma}-\boldsymbol\Sigma\|_{2} $ & $\hat{k}$ \quad & $\hat{k}_1$ \quad & $\hat{k}_2$ \quad \\ \hline
$(50,20,30,0.5,0.5)$ & Sample &0.12&0.38&0.08\\& Banded &0.03&0.04&0.02&0.05&&\\& Tapering &0.02&0.04&0.02&0.04&&\\& Doubly B &0.03&0.06&0.03&&0&0\\& Doubly T &0.24&0.13&0.04&&0.08&0.09\\& Proposed B &0.05&0.03&0.02&&0.08&0.07\\& Proposed T &0.04&0.03&0.02&&0&0\\
$(100,20,30,0.5,0.5)$ & Sample &0.06&0.18&0.04\\& Banded &0.00&0.01&0.01&0.01&&\\& Tapering &0.01&0.02&0.01&0.02&&\\& Doubly B &0.03&0.03&0.02&&0.01&0\\& Doubly T &0.02&0.03&0.02&&0&0\\& Proposed B &0.04&0.02&0.01&&0.07&0.07\\& Proposed T &0.02&0.01&0.01&&0&0\\
$(200,20,30,0.5,0.5)$ & Sample &0.03&0.11&0.02\\& Banded &0.00&0.01&0.00&0.02&&\\& Tapering &0.00&0.01&0.00&0.02&&\\& Doubly B &0.01&0.02&0.01&&0&0\\& Doubly T &0.02&0.02&0.01&&0&0\\& Proposed B &0.03&0.01&0.01&&0.08&0.07\\& Proposed T &0.02&0.01&0.01&&0&0\\
  \hline
\end{tabular}
\end{lrbox}
\label{Table:simulation1SE}
\scalebox{0.75}{\usebox{\tableboxb}}
\end{table}

\begin{table}[htp]
\setlength{\tabcolsep}{-1pt}
  \centering
    \caption{Standard errors for Table \ref{Table:simulation2:01}}
\vspace{0.5pt}
\begin{lrbox}{\tableboxb}
\begin{tabular}{p{0.24\textwidth}>{\centering}p{0.23\textwidth}p{0.15\textwidth}p{0.15\textwidth}p{0.15\textwidth}p{0.08\textwidth}p{0.08\textwidth}p{0.08\textwidth}}
  \hline
$ (n, p, q, \rho_1,\rho_2) $ & Method & $\|\hat{\boldsymbol\Sigma}-\boldsymbol\Sigma\|_{\F}$ & $ \|\hat{\boldsymbol\Sigma}-\boldsymbol\Sigma\|_1 $ & $ \|\hat{\boldsymbol\Sigma}-\boldsymbol\Sigma\|_{2} $ & $\hat{k}$ \quad & $\hat{k}_1$ \quad & $\hat{k}_2$ \quad \\ \hline
$(50,20,30,0.1,0.1)$ & Sample &0.08&0.35&0.05\\& Banded &0.03&0.02&0.01&0.01&&\\& Tapering &0.04&0.02&0.01&0.02&&\\& Doubly B &0.05&0.02&0.02&&0.04&0.04\\& Doubly T &0.07&0.03&0.02&&0.03&0.06\\& Proposed B &0.07&0.01&0.01&&0.06&0.06\\& Proposed T &0.07&0.01&0.01&&0.07&0.08\\
$(100,20,30,0.1,0.1)$ & Sample &0.04&0.17&0.03\\& Banded &0.03&0.01&0.01&0.03&&\\& Tapering &0.03&0.01&0.01&0.03&&\\& Doubly B &0.03&0.02&0.01&&0.06&0.05\\& Doubly T &0.04&0.02&0.01&&0.07&0.05\\& Proposed B &0.07&0.01&0.01&&0.06&0.05\\& Proposed T &0.07&0.01&0.01&&0.05&0.05\\
$(200,20,30,0.1,0.1)$& Sample &0.02&0.09&0.01\\& Banded &0.01&0.01&0.00&0.02&&\\& Tapering &0.00&0.01&0.00&0.04&&\\& Doubly B &0.02&0.01&0.00&&0.05&0.05\\& Doubly T &0.00&0.01&0.00&&0.08&0.07\\& Proposed B &0.03&0.00&0.00&&0.08&0.08\\& Proposed T &0.02&0.00&0.00&&0.00&0.02\\
\hline
\end{tabular}
\end{lrbox}
\scalebox{0.75}{\usebox{\tableboxb}}
\end{table}

\begin{table}[htp]
\setlength{\tabcolsep}{-1pt}
  \centering
    \caption{Standard errors for Table \ref{Table:simulation2}}
\vspace{0.5pt}
\begin{lrbox}{\tableboxb}
\begin{tabular}{p{0.24\textwidth}>{\centering}p{0.23\textwidth}p{0.15\textwidth}p{0.15\textwidth}p{0.15\textwidth}p{0.08\textwidth}p{0.08\textwidth}p{0.08\textwidth}}
  \hline
$ (n, p, q, \rho_1,\rho_2) $ & Method & $\|\hat{\boldsymbol\Sigma}-\boldsymbol\Sigma\|_{\F}$ & $ \|\hat{\boldsymbol\Sigma}-\boldsymbol\Sigma\|_1 $ & $ \|\hat{\boldsymbol\Sigma}-\boldsymbol\Sigma\|_{2} $ & $\hat{k}$ \quad & $\hat{k}_1$ \quad & $\hat{k}_2$ \quad \\ \hline
$(50,20,30,0.5,0.5)$ & Sample &0.13&0.46&0.12\\& Banded &0.02&0.04&0.01&0.06&&\\& Tapering &0.01&0.04&0.01&0.09&&\\& Doubly B &0.08&0.11&0.03&&0.06&0.07\\& Doubly T &0.1&0.11&0.04&&0.03&0.05\\& Proposed B &0.07&0.06&0.03&&0.04&0.03\\& Proposed T &0.03&0.05&0.02&&0&0\\
$(100,20,30,0.5,0.5)$ & Sample &0.06&0.21&0.06\\& Banded &0.01&0.03&0.01&0.08&&\\& Tapering &0.01&0.02&0.01&0.11&&\\& Doubly B &0.09&0.06&0.04&&0.05&0.05\\& Doubly T &0.03&0.06&0.02&&0.04&0.02\\& Proposed B &0.07&0.04&0.03&&0.09&0.10\\& Proposed T &0.07&0.04&0.03&&0.03&0.05\\
$(200,20,30,0.5,0.5)$ & Sample &0.03&0.12&0.04\\& Banded &0.02&0.04&0.02&0.05&&\\& Tapering &0.02&0.03&0.02&0.15&&\\& Doubly B &0.01&0.04&0.02&&0.03&0.04\\& Doubly T &0.01&0.04&0.02&&0.05&0.06\\& Proposed B &0.04&0.03&0.02&&0.07&0.08\\& Proposed T &0.03&0.02&0.02&&0.02&0\\
\hline
\end{tabular}
\end{lrbox}
\scalebox{0.75}{\usebox{\tableboxb}}
\end{table}
\begin{table}[htp]
\setlength{\tabcolsep}{-1pt}
  \centering
    \caption{Standard errors for Table \ref{addsim1}}
\vspace{0.5pt}
\begin{lrbox}{\tableboxb}
\begin{tabular}{p{0.24\textwidth}>{\centering}p{0.23\textwidth}p{0.15\textwidth}p{0.15\textwidth}p{0.15\textwidth}p{0.08\textwidth}p{0.08\textwidth}p{0.08\textwidth}}
  \hline
$ (n, p, q, \rho_1,\rho_2) $ & Method & $\|\hat{\boldsymbol\Sigma}-\boldsymbol\Sigma\|_{\F}$ & $ \|\hat{\boldsymbol\Sigma}-\boldsymbol\Sigma\|_1 $ & $ \|\hat{\boldsymbol\Sigma}-\boldsymbol\Sigma\|_{2} $ & $\hat{k}$ \quad & $\hat{k}_1$ \quad & $\hat{k}_2$ \quad \\ \hline
$(50,20,30,0.8,0.8)$ & Sample &0.39&1.18&0.64\\& Banded &0.19&0.53&0.30&0.22&&\\& Tapering &0.20&0.34&0.30&0.47&&\\& Doubly B &0.3&0.66&0.39&&0.12&0.15\\& Doubly T &0.29&0.56&0.38&&0.17&0.2\\& Proposed B &0.40&0.80&0.39&&0.12&0.11\\& Proposed T &0.34&0.54&0.35&&0.08&0.05\\
$(100,20,30,0.8,0.8)$& Sample &0.23&0.69&0.43\\& Banded &0.14&0.35&0.27&0.64&&\\& Tapering &0.24&0.26&0.31&0.98&&\\& Doubly B &0.25&0.39&0.35&&0.16&0.14\\& Doubly T &0.27&0.35&0.35&&0.18&0.17\\& Proposed B &0.26&0.54&0.29&&0.16&0.15\\& Proposed T &0.27&0.39&0.30&&0.08&0.07\\
$(200,20,30,0.8,0.8)$ & Sample &0.15&0.45&0.28\\& Banded &0.14&0.36&0.21&0.68&&\\& Tapering &0.16&0.30&0.23&0.83&&\\& Doubly B &0.22&0.38&0.31&&0.14&0.12\\& Doubly T &0.2&0.34&0.3&&0.15&0.16\\& Proposed B &0.20&0.43&0.22&&0.15&0.14\\& Proposed T &0.23&0.36&0.24&&0.09&0.10\\
\hline
\end{tabular}
\end{lrbox}
\scalebox{0.75}{\usebox{\tableboxb}}
\end{table}

\begin{table}[htp]
\setlength{\tabcolsep}{-1pt}
  \centering
    \caption{Standard errors for Table \ref{addsims4}}
\vspace{0.5pt}
\begin{lrbox}{\tableboxb}
\begin{tabular}{p{0.24\textwidth}>{\centering}p{0.23\textwidth}p{0.15\textwidth}p{0.15\textwidth}p{0.15\textwidth}p{0.08\textwidth}p{0.08\textwidth}p{0.08\textwidth}}
  \hline
$ (n, p, q, \rho_1,\rho_2) $ & Method & $\|\hat{\boldsymbol\Sigma}-\boldsymbol\Sigma\|_{\F}$ & $ \|\hat{\boldsymbol\Sigma}-\boldsymbol\Sigma\|_1 $ & $ \|\hat{\boldsymbol\Sigma}-\boldsymbol\Sigma\|_{2} $ & $\hat{k}$ \quad & $\hat{k}_1$ \quad & $\hat{k}_2$ \quad \\ \hline
$(50,100,100,0.8,0.8)$& Sample &1.46&5.43&1.49\\& Banded &0.05&0.12&0.02&0.10&&\\& Tapering &0.05&0.13&0.03&0.16&&\\& Double B &0.69&0.97&0.19&&0.08&0.08\\& Double T &0.52&0.93&0.28&&0.11&0.11\\& Proposed B &0.68&0.71&0.30&&0.36&0.33\\& Proposed T &0.57&0.51&0.24&&0.23&0.23\\
$(100,100,100,0.8,0.8)$& Sample &0.61&2.85&0.65\\& Banded &0.03&0.08&0.02&0.13&&\\& Tapering &0.02&0.08&0.03&0.23&&\\& Double B &0.22&0.54&0.21&&0.10&0.08\\& Double T &0.28&0.39&0.22&&0.11&0.11\\& Proposed B &0.47&0.57&0.23&&0.52&0.54\\& Proposed T &0.39&0.37&0.20&&0.47&0.50\\
$(200,100,100,0.8,0.8)$& Sample &0.33&1.46&0.37\\& Banded &0.01&0.04&0.02&0.09&&\\& Tapering &0.00&0.04&0.01&0.07&&\\& Double B &0.27&0.40&0.19&&0.08&0.08\\& Double T &0.3&0.30&0.20&&0.13&0.15\\& Proposed B &0.21&0.27&0.12&&0.12&0.12\\& Proposed T &0.24&0.22&0.13&&0.12&0.12\\
\hline
\end{tabular}\label{addsim2SE}
\end{lrbox}
\scalebox{0.75}{\usebox{\tableboxb}}
\end{table}

\begin{table}[htp]
\setlength{\tabcolsep}{-1pt}
  \centering
    \caption{Standard errors for Table \ref{tb:s5simu}}
\vspace{0.5pt}
\begin{lrbox}{\tableboxb}\label{tb:rb_1}
\Rotatebox{0}{

\begin{tabular}{p{0.24\textwidth}>{\centering}p{0.23\textwidth}p{0.15\textwidth}p{0.15\textwidth}p{0.15\textwidth}p{0.08\textwidth}p{0.08\textwidth}p{0.08\textwidth}}
   \hline
$(n,p,q,\rho_1,\rho_2)$ & Method & $\| \hat{\Sigma} - \Sigma\|_{\F}$ & $\|\hat{\Sigma} - \Sigma\|_1$ & $\| \hat{\Sigma} - \Sigma \|_{2}$ & $\hat{k}_1$ & $\hat{k}_2$ & $\hat{\tau}$
\\
\hline
$(50,20,30,0.5,0.5)$
& Sample &36.19&92.56&36.91\\& Proposed B&2.61&1.15&0.76&0.08&0.09\\& Proposed T&1.68&0.35&0.25&0.06&0.05\\& Robust B&1.05&0.48&0.27&0.08&0.08&0.31\\& Robust T&0.50&0.16&0.10&0.02&0.00&0.24\\
$(100,20,30,0.5,0.5)$
& Sample &26.87&67.07&27.43\\& Proposed B&2.07&1.61&0.93&0.09&0.08\\& Proposed T&1.45&0.59&0.40&0.06&0.03\\& Robust B&0.72&0.35&0.21&0.09&0.09&0.40\\& Robust T&0.37&0.16&0.10&0.00&0.00&0.30\\
$(200,20,30,0.5,0.5)$& Sample &11.96&34.16&12.31\\& Proposed B&0.85&0.38&0.22&0.09&0.08\\& Proposed T&0.77&0.22&0.14&0.04&0.05\\& Robust B&0.61&0.27&0.16&0.08&0.08&0.58\\& Robust T&0.49&0.17&0.10&0.00&0.00&0.44\\  \hline
\end{tabular}}
\end{lrbox}
\scalebox{0.75}{\usebox{\tableboxb}}
\label{Table:simulation5SE}
\end{table}
\newpage
\subsection{Figures for Gridded Temperature Anomaly Data Analysis}
We present additional figures for Section \ref{realdata} of the main paper. In Figure \ref{climate_matrix}, we mark the regions that our data matrices are obtained from in deep blue. In Figure \ref{climate_trend_acf}, we use $5^\degree\times 5^\degree$ box centered at $57.5^\degree W$ longitude and $7.5^\degree S$ latitude as an example to show the effect of pre-processing. In Figure \ref{climate_trend_acf} (a) and (b), we show the data before and after the detrending; and in Figure \ref{climate_trend_acf} (c), we plot the estimated auto-correlation function for the thinned sequence. In Figure \ref{climate_trend_acf} (d), we show the quantile-quantile (Q-Q) plot of all temperature anomalies. It can be seen that both detrending and thinning work quite well for that region. Similar results were also obtained for other spatial regions in our dataset. In Figure \ref{hist_climate}, we compare the histograms of regularized and unregularized entries along latitude and longitude directions, respectively. In Figure \ref{hist_climate} (a), we compare the histograms of regularized entries in $\hat{\M \Sigma}_{1}^{\mathcal{R},\MB}(15)$, unregularized entries in $\hat{\M \Sigma}_{1}^{\mathcal{R},\MB}(15)$, and unregularized entries in $\hat{\M \Sigma}_{1}^{\mathcal{R},\MB}(\hat{k}_{1}^{\mathcal{B}})$ over the latitude direction. In Figure \ref{hist_climate} (b), we compare the  histograms of regularized entries in $\hat{\M \Sigma}_{2}^{\mathcal{R},\MB}(68)$, unregularized entries in $\hat{\M \Sigma}_{2}^{\mathcal{R},\MB}(68)$, and unregularized entries in $\hat{\M \Sigma}_{2}^{\mathcal{R},\MB}(\hat{k}_{2}^{\mathcal{B}})$ over the longitude direction. Figure \ref{climate_visi} visualizes the covariance between each lat-lon box and the lat-lon box centered at $107.5^\circ W$ longitude and $22.5^\circ N$ latitude. In particular, Figure \ref{climate_visi} (a) plots the corresponding covariance estimation of the proposed banded estimator. Figure \ref{climate_visi} (b) plots the corresponding covariance estimation of the proposed tapering estimator.
\begin{figure}
  \centering
\includegraphics[width=0.8\textwidth]{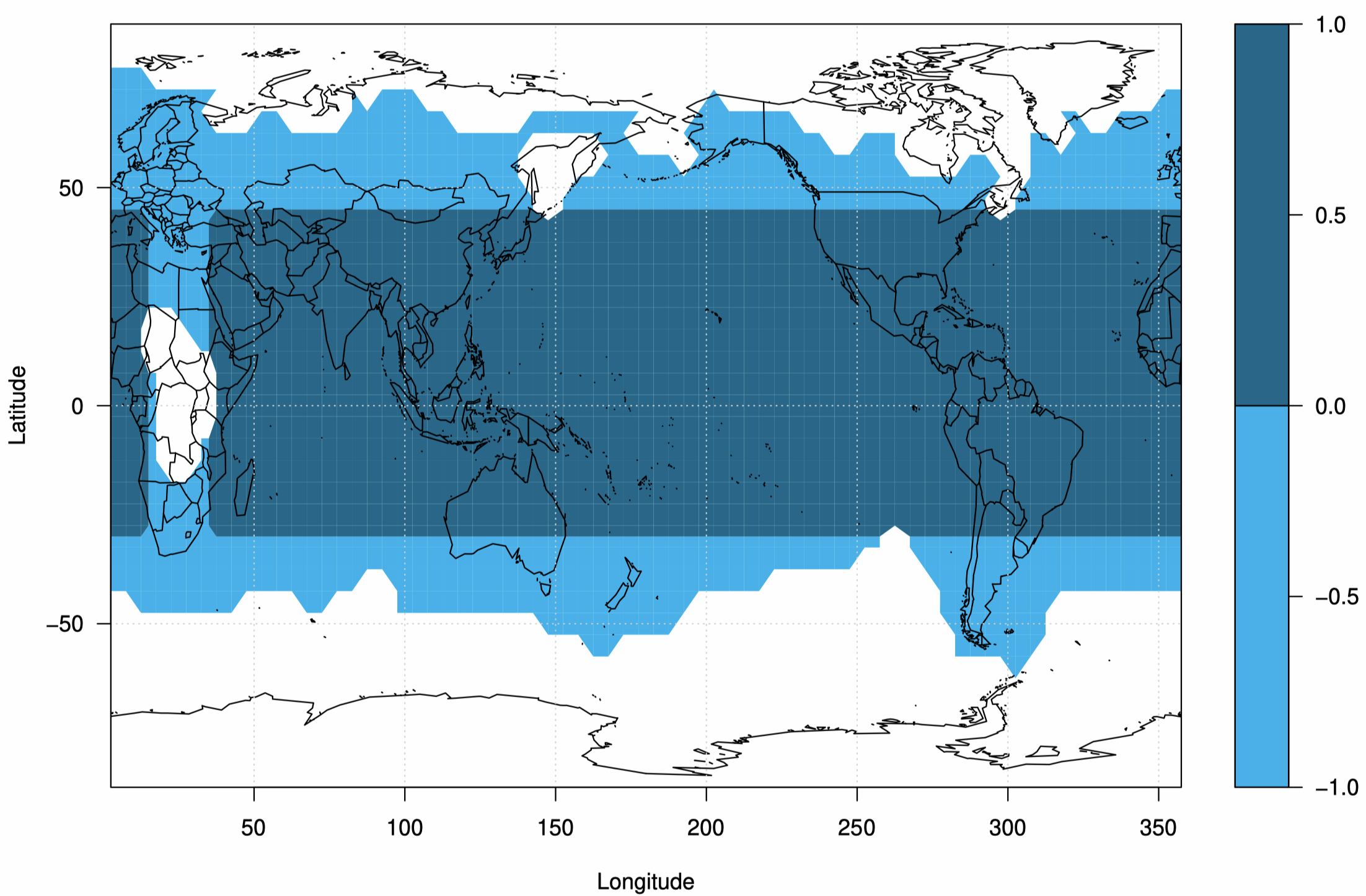}
\caption{Temperature data analysis: The whole blue region in the plot shows all non-missing spatial coordinates in \citet{gu2020generalized}'s dataset. The deep blue region shows the spatial coordinates we extract from all non-missing coordinates to form our matrix type dataset.}
\label{climate_matrix}
\end{figure}
\begin{figure}
   \centering
    \subfigure[]{\includegraphics[width=0.4\textwidth]{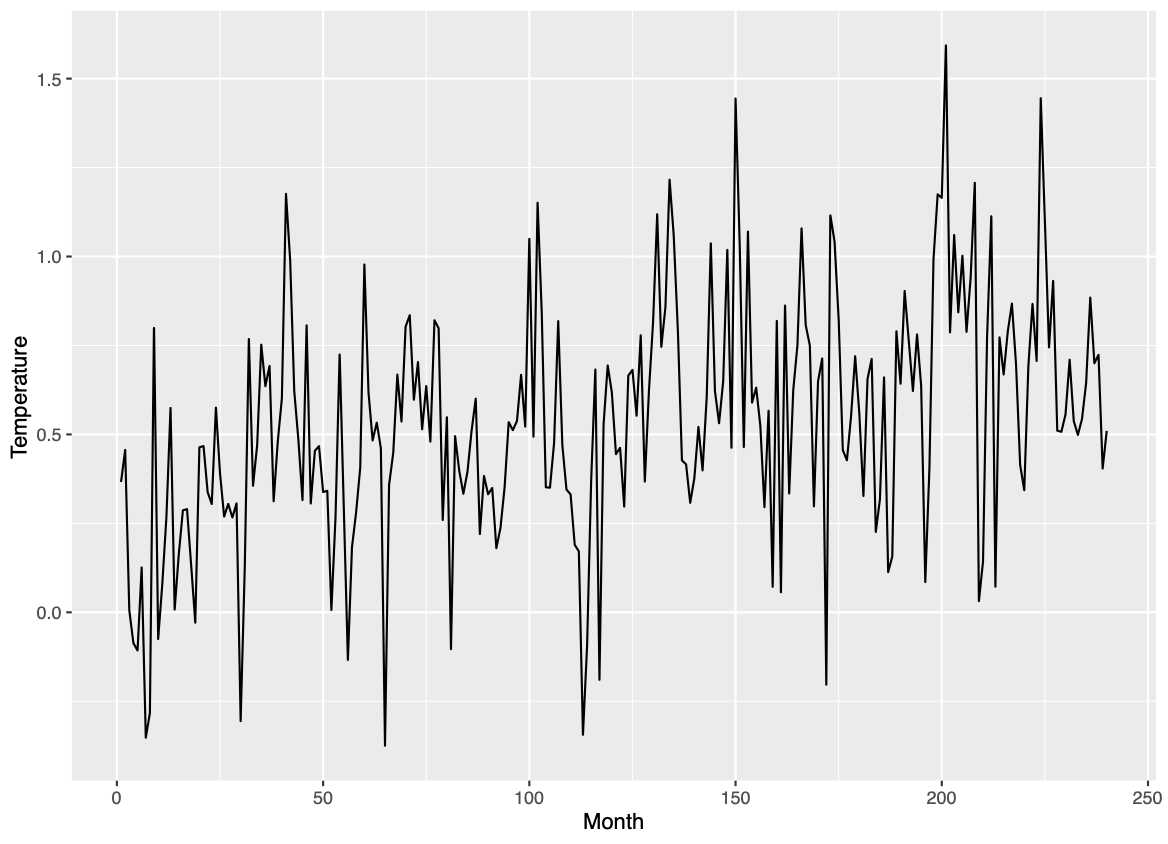}}
  \subfigure[]{\includegraphics[width=0.4\textwidth]{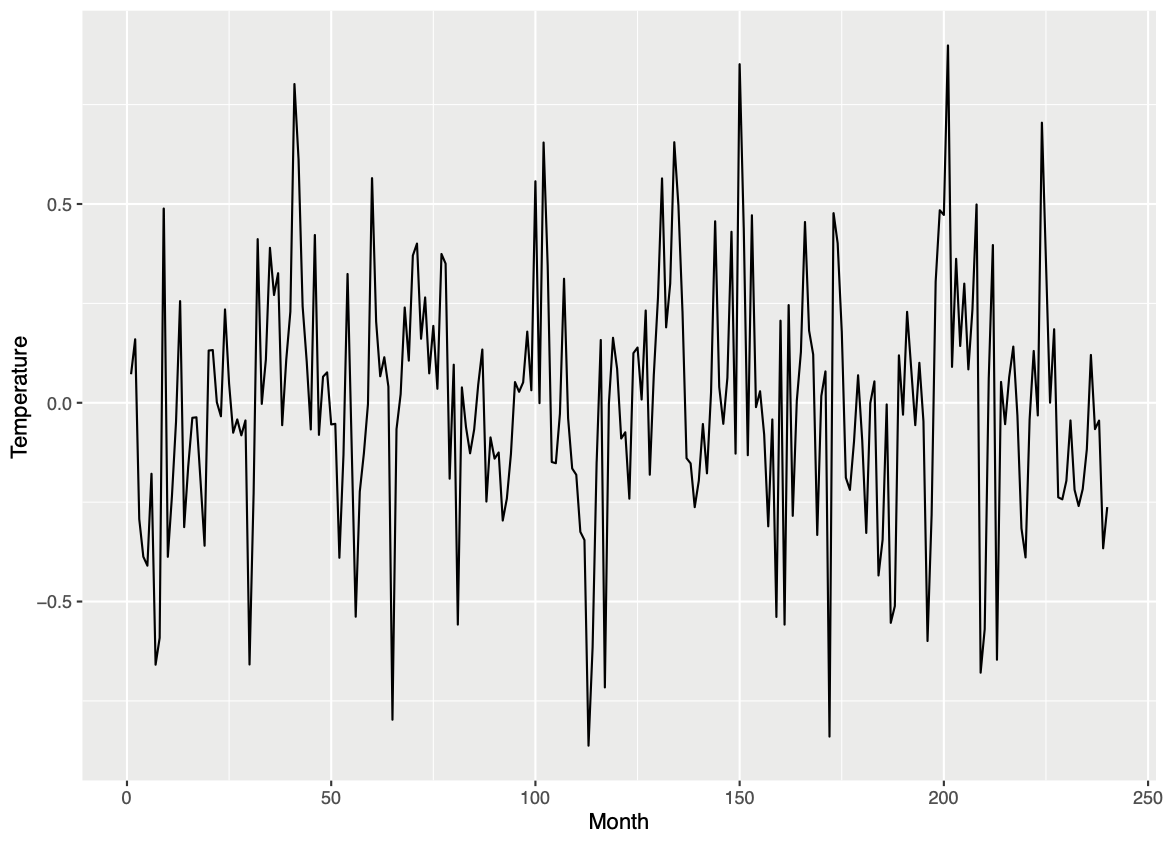}}
    \subfigure[]{\includegraphics[width=0.4\textwidth]{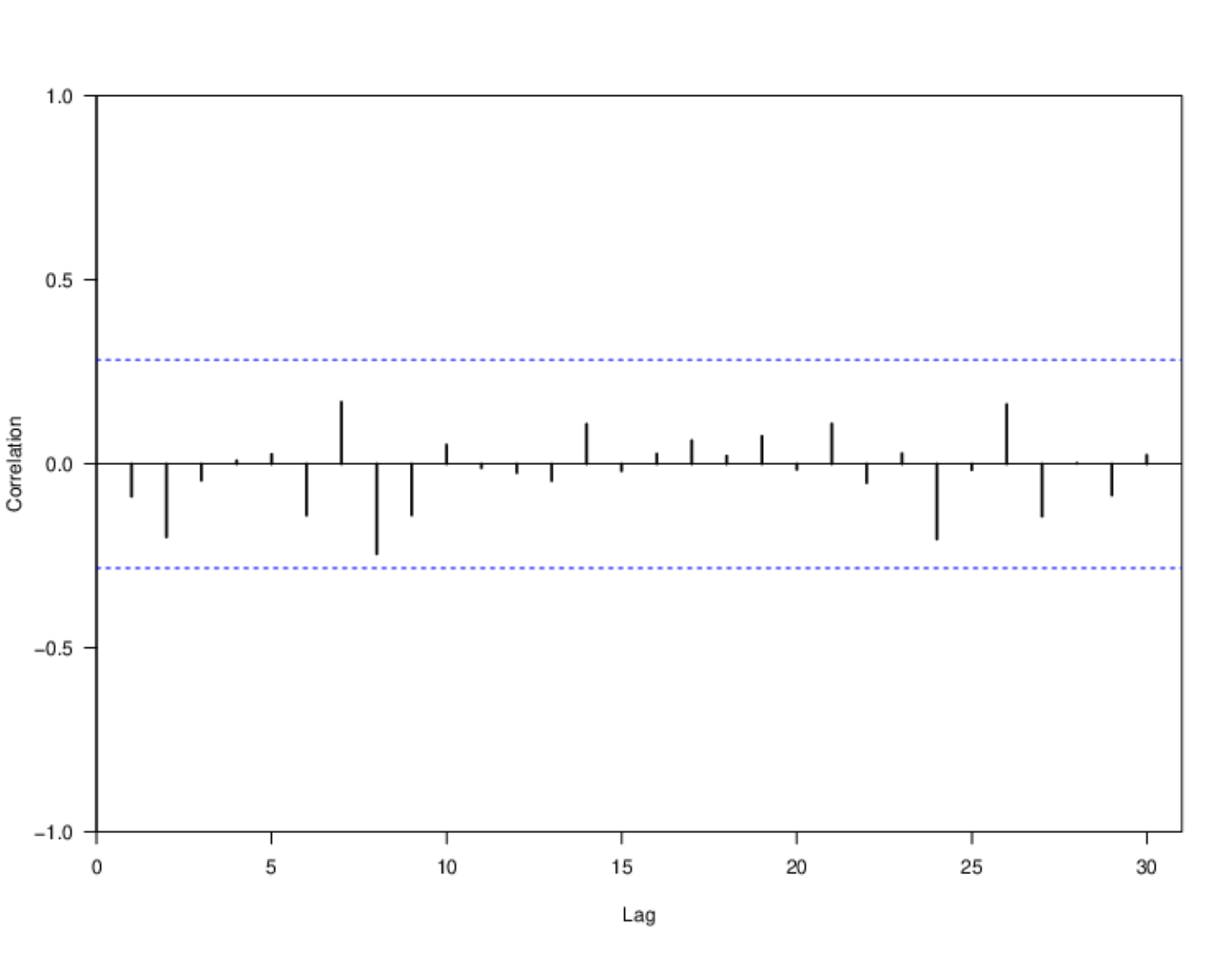}}
     \subfigure[]{ \includegraphics[width=0.4\textwidth]{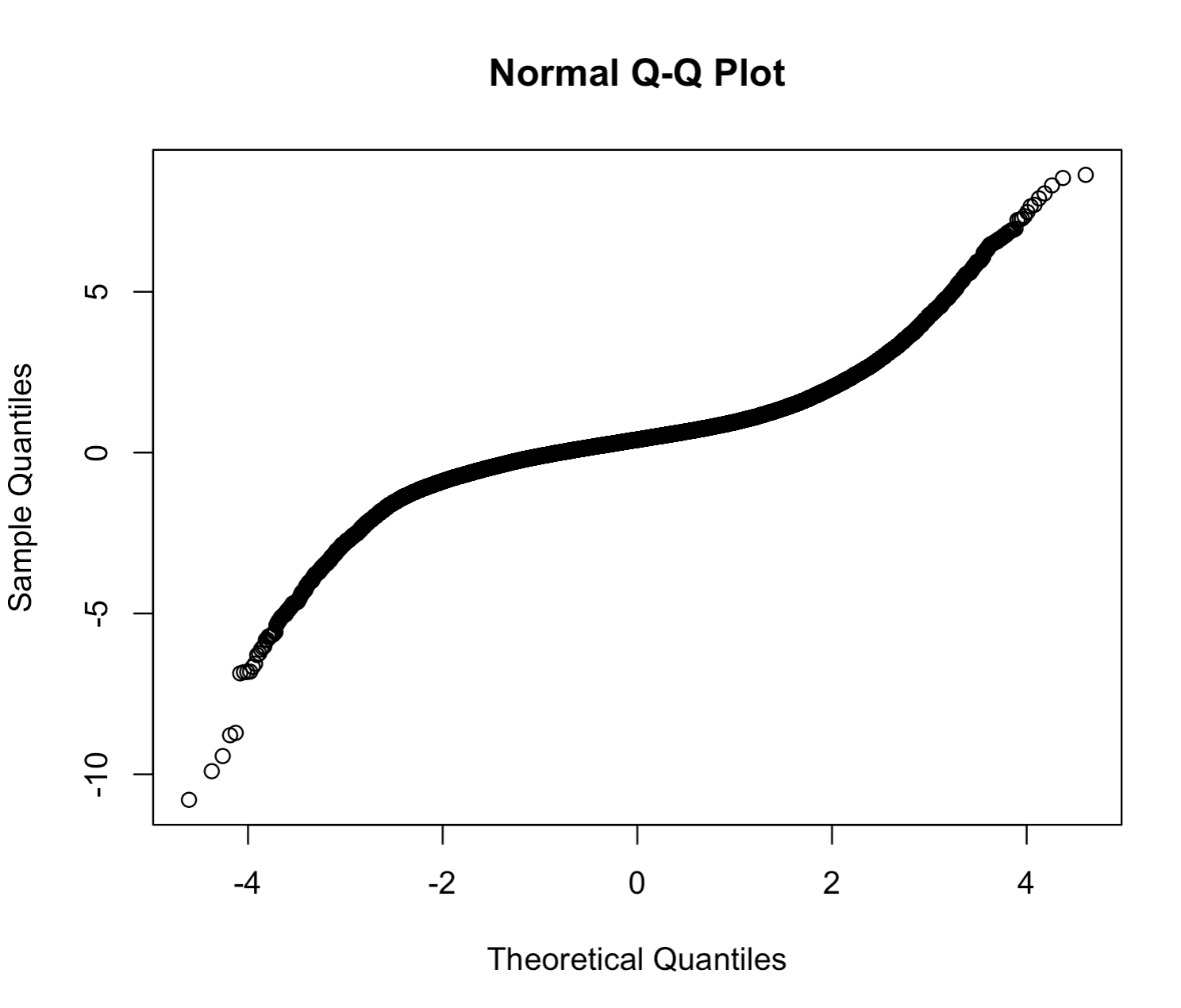}}
\caption{Temperature data analysis - (a) and (b): Monthly temperature anomalies before and after detrending; (c): Auto-correlation function for the thinned sequence (Benchmark: white noise confidence bands in blue); (d): The Q-Q plot of all sampled temperature anomalies in all $5^{\circ} \times 5^{\circ}$ latitude-longitude boxes.}
\label{climate_trend_acf}
\end{figure}
\begin{figure}
  \centering
    \subfigure[Latitude Direction]{\includegraphics[width=0.49\textwidth]{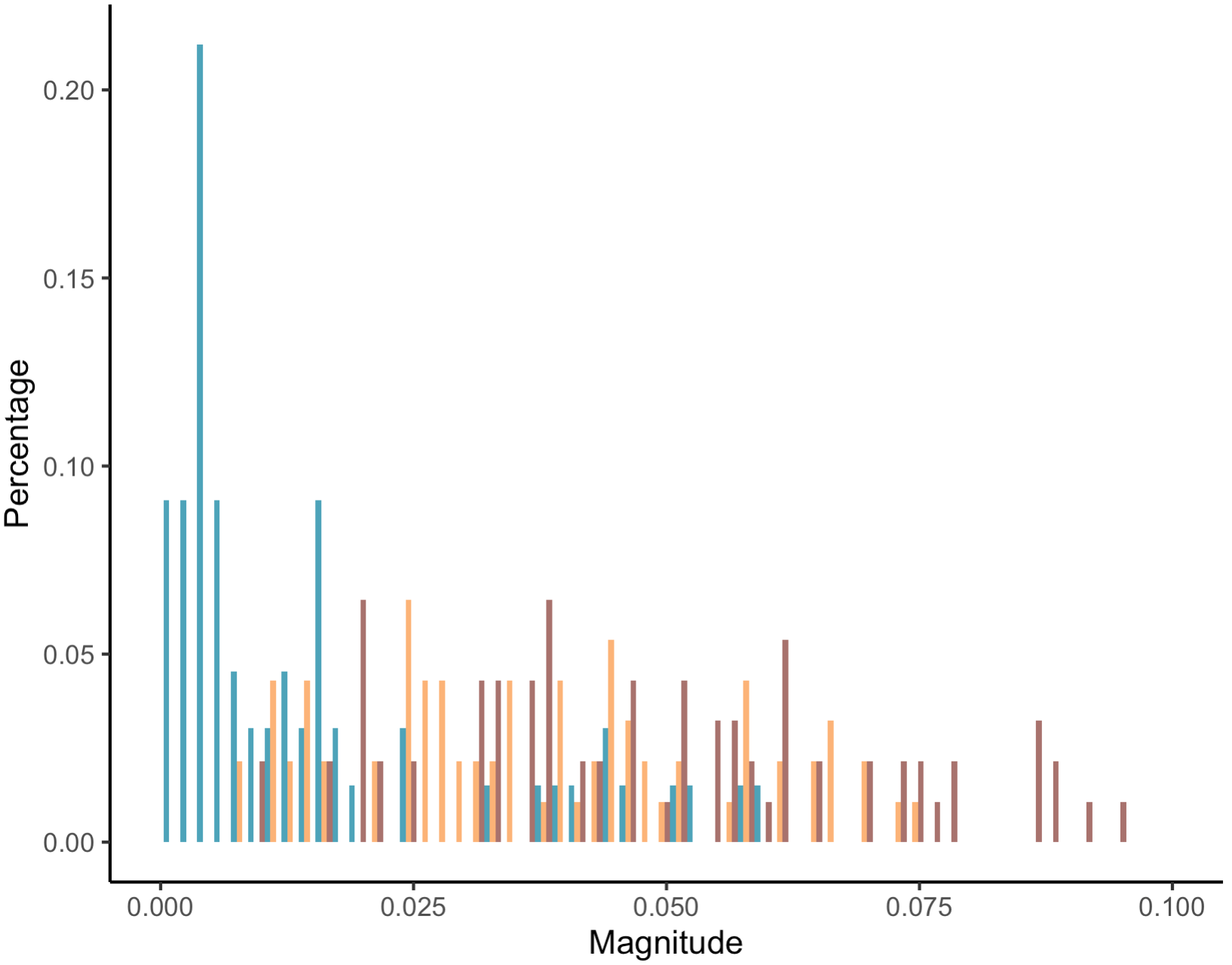}}
  \subfigure[Longitude Direction]{\includegraphics[width=0.49\textwidth]{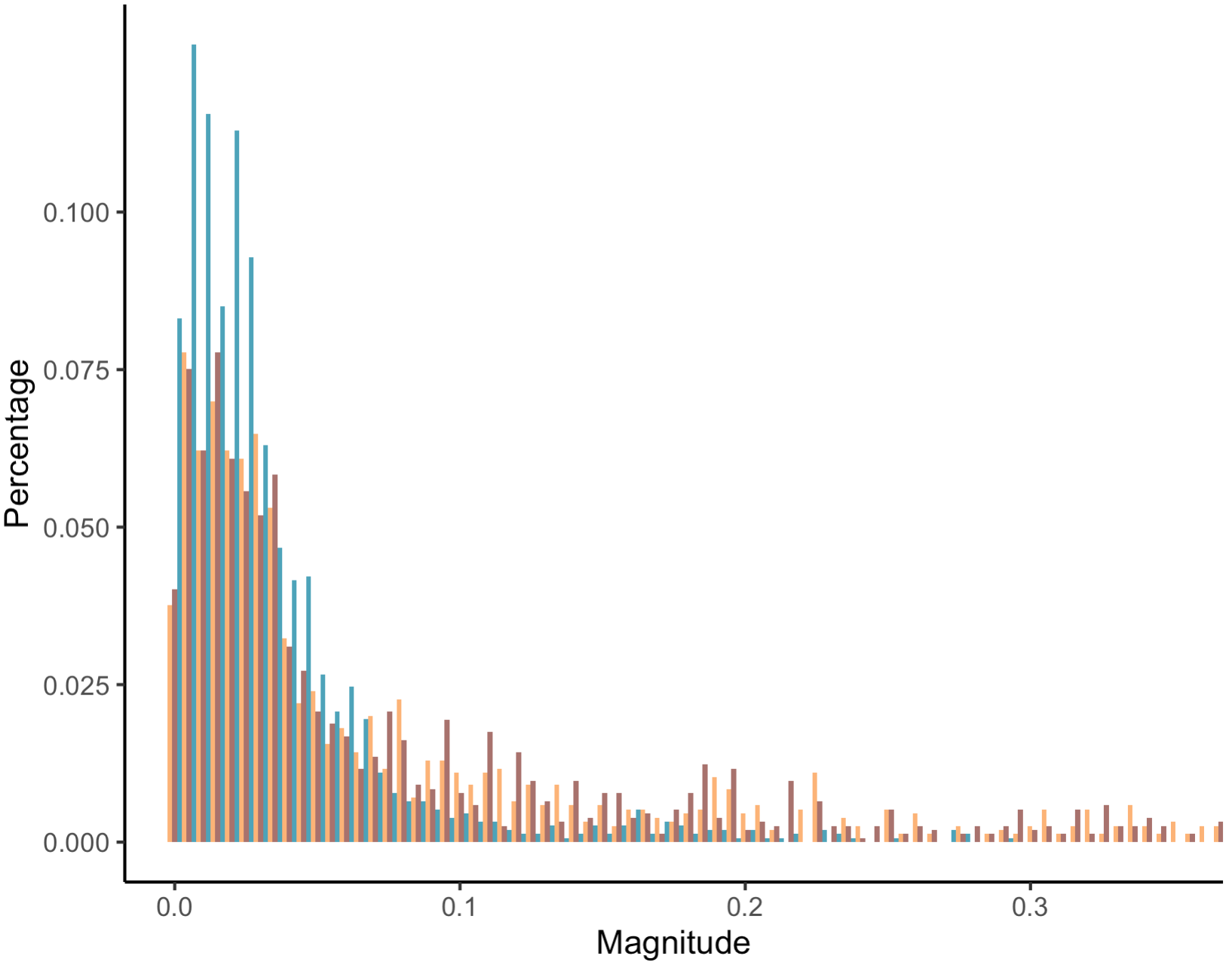}}
\caption{Histograms of regularized and unregularized entries in the covariance estimation along latitude and longitude directions; Panel (a) plots the histograms of regularized entries in $\hat{\M \Sigma}_{1,\text{tp}}^{\mathcal{R},\MB}(15)$ (blue), unregularized entries in $\hat{\M \Sigma}_{1,\text{tp}}^{\mathcal{R},\MB}(15)$ (orange), and unregularized entries in $\hat{\M \Sigma}_{1,\text{tp}}^{\mathcal{R},\MB}(\hat{k}_{1,\text{tp}}^{\mathcal{R},\mathcal{B}})$ (brown); Panel (b) plots the histograms of regularized entries in $\hat{\M \Sigma}_{2,\text{tp}}^{\mathcal{R},\MB}(68)$ (blue), unregularized entries in $\hat{\M \Sigma}_{2,\text{tp}}^{\mathcal{R},\MB}(68)$ (orange), and unregularized entries in $\hat{\M \Sigma}_{2,\text{tp}}^{\mathcal{R},\MB}(\hat{k}_{2,\text{tp}}^{\mathcal{R},\mathcal{B}})$ (brown); Magnitude domains of panel (a) and (b) contain the maximal magnitudes of all regularized entries in $\hat{\M \Sigma}_{1,\text{tp}}^{\mathcal{R},\MB}(15)$ and $\hat{\M \Sigma}_{2,\text{tp}}^{\mathcal{R},\MB}(68)$ respectively, while some unregularized entries have magnitudes larger than the domains.}
\label{hist_climate}
\end{figure}
\begin{figure}
   \centering
    \subfigure[]{\includegraphics[width=0.49\textwidth]{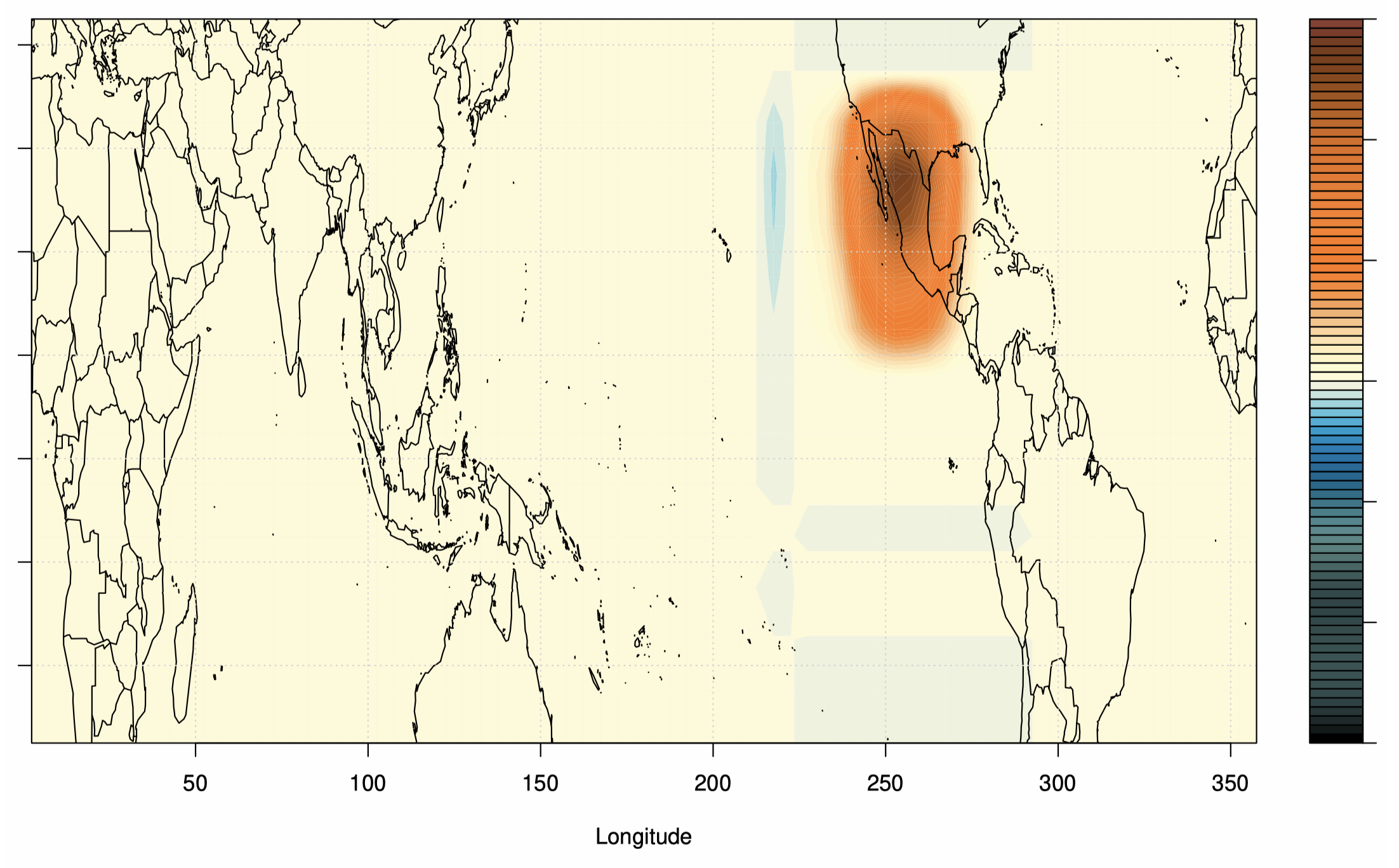}}
  \subfigure[]{\includegraphics[width=0.49\textwidth]{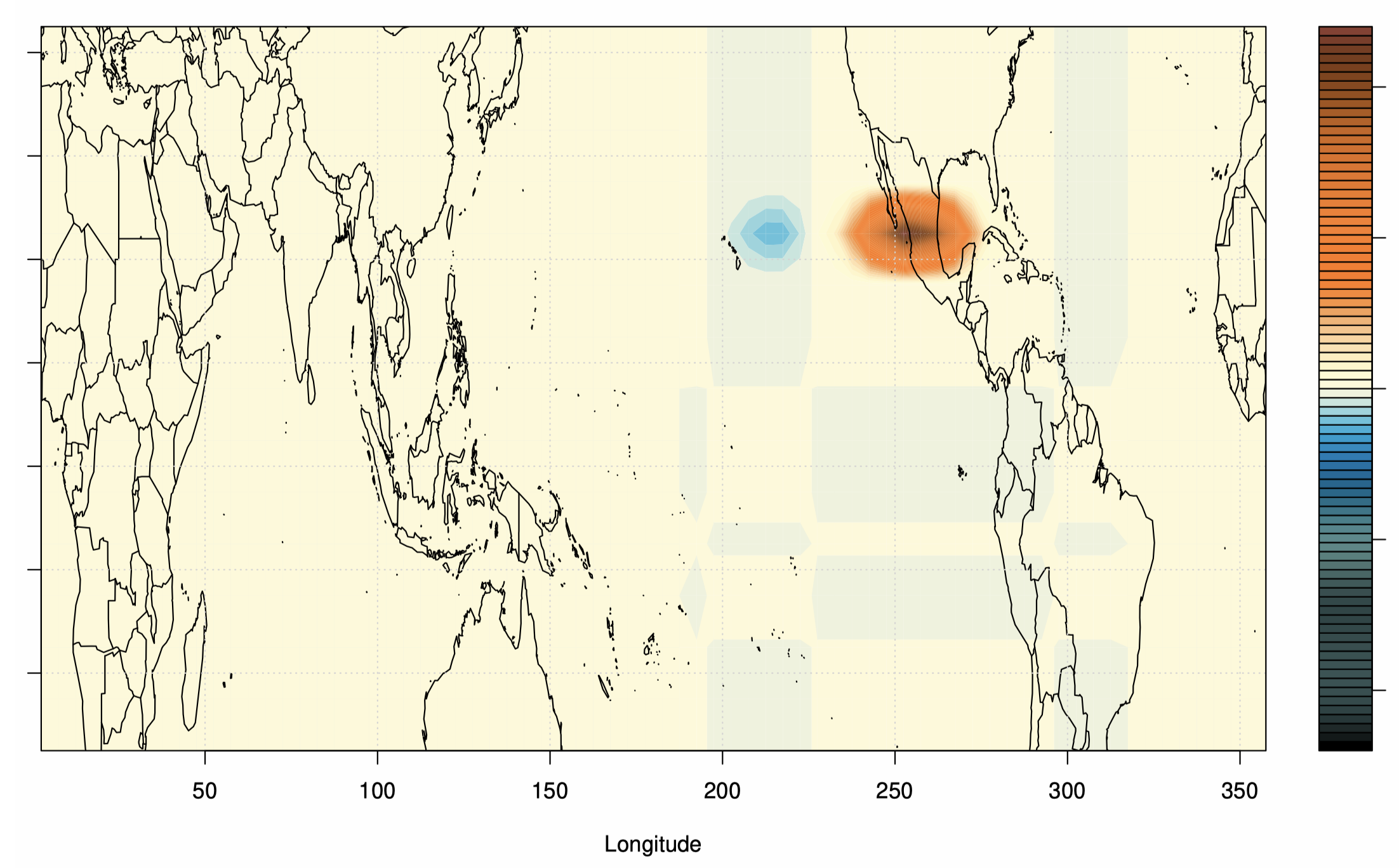}}
\caption{Temperature data analysis: Panel (a) plots the proposed banded covariance estimation of each lat-lon box with the lat-lon box centered at $107.5^\degree W$ longitude and $22.5^\degree N$ latitude. Panel (b) plots the proposed tapering covariance estimation of each lat-lon box with the lat-lon box centered at $107.5^\degree W$ longitude and $22.5^\degree N$ latitude.}
\label{climate_visi}
\end{figure}

\section{Proof of Propositions}\label{sec:propo}
\subsection{Proof of Proposition \ref{band:equivalence}}
\label{sec:prop}
Define 
\begin{eqnarray*}
\hat{\boldsymbol\Sigma}=\left(
      \begin{array}{ccc}
    \hat{\boldsymbol\Sigma}_{1,1}   & \cdots &\hat{\boldsymbol\Sigma}_{1,p}  \\
    \vdots  & & \vdots \\
   \hat{\boldsymbol\Sigma}_{p,1}   & \cdots & \hat{\boldsymbol\Sigma}_{p,p}  \\
      \end{array}
    \right),
\end{eqnarray*} 
where $ \hat{\boldsymbol\Sigma}_{l_2,m_2}\in\RR^{q\times q} $.

One can easily see that solving \eqref{eq:band:0} is equivalent to solving 
\bee\label{prop1:1}
\argmin_{\boldsymbol\Sigma_1\in \mathcal{B}_p(k_1), b_{l_2,m_2}} \sum_{l_2=1}^p \sum_{m_2=1}^p \|\hat{\boldsymbol\Sigma}_{l_2,m_2}-b_{l_2,m_2}\boldsymbol\Sigma_1\|_\F^2
\ee
subject to $b_{l_2,m_2}=0$ for all $ |l_2-m_2|>k_2$ and $ b_{l_2,m_2} $ is the $l_2m_2$th element of $ \boldsymbol\Sigma_2 $ for $ 1\leq l_2, m_2\leq p $. Since all summand terms in (\ref{prop1:1}) with $ |l_2-m_2|>k_2$ are constants $\|\hat{\M \Sigma}_{l_2,m_2}\|_\F$ due to $b_{l_2,m_2}=0$ for all $ |l_2-m_2|>k_2$, solving (\ref{prop1:1}) is equivalent to solving 
\begin{eqnarray*}
\argmin_{\boldsymbol\Sigma_1\in \mathcal{B}_p(k_1), b_{l_2,m_2}} \sum_{|l_2-m_2|\leq k_2} \|\hat{\boldsymbol\Sigma}_{l_2,m_2}-b_{l_2,m_2}\boldsymbol\Sigma_1\|_\F^2
\end{eqnarray*} 
subject to $b_{l_2,m_2}=0$ for all $ |l_2-m_2|>k_2$. This is also equivalent to solving 
\begin{eqnarray*}
\argmin_{\boldsymbol\Sigma_1\in \mathcal{B}_p(k_1), b_{l_2,m_2}} \sum_{|l_2-m_2|\leq k_2} \|\hat{\boldsymbol\Sigma}_{l_2,m_2}\circ B_{k_1}({\bf 1}_{p})-b_{l_2,m_2}\boldsymbol\Sigma_1\|_\F^2
\end{eqnarray*} 
subject to $b_{l_2,m_2}=0$ for all $ |l_2-m_2|>k_2$ due to $\M \Sigma_1$ is a $k_1$-bandable matrix. By some algebra, this is equivalent to solving 
\begin{eqnarray*}
\argmin_{\boldsymbol\Sigma_1\in \mathcal{B}_p(k_1), \boldsymbol\Sigma_2\in \mathcal{B}_q(k_2)}\| \widetilde{\boldsymbol\Sigma}_\MB(k_1, k_2)-\boldsymbol\Sigma_2 \otimes \boldsymbol\Sigma_1\|_\rF^2 . 
\end{eqnarray*} 
which implies that solving \eqref{eq:band:0} is equivalent to solving
\bee\label{propfinal}
\argmin_{\boldsymbol\Sigma_1, \boldsymbol\Sigma_2}\| \widetilde{\boldsymbol\Sigma}_\MB(k_1, k_2)-\boldsymbol\Sigma_2 \otimes \boldsymbol\Sigma_1\|_\rF^2 
\ee
because solutions of (\ref{propfinal}):  $(\M \Sigma_1,\M \Sigma_2)$ will automatically belong to  $(\mathcal{B}_p(k_1), \mathcal{B}_q(k_2))$ due to the doubly banded structure of $\widetilde{\boldsymbol\Sigma}_\MB(k_1, k_2)$. \qed
\subsection{Proof of Proposition \ref{po:boundsig}}
In the proof, we focus on the scenario that $\M\Sigma_1^* \in \mathcal{F}^*(\alpha)$. Similar arguments can be applied if $\M\Sigma_1^* \in \mathcal{M}^*(\alpha)$.
\par
By the finite fourth order moment of $\mathbf{x}$ and Cauchy-Schwarz inequality, there exists a fixed constant $C_u > 0$ such that all elements in the covariance matrix $\M\Sigma_1^*$ are uniformly upper bounded by $C_u$, i.e.,
\bee\label{sigma*max}
\|\M \Sigma_1^*\|_{\max} \leq C_u.
\ee
We can now decompose $\M\Sigma_{1}^{*}$ as
\bee\label{main:4:1}
\M\Sigma_{1}^{*} = \mathcal{P}_{\text{diag}}(\M\Sigma_{1}^{*}) + \big\{\M\Sigma_{1}^{*} - \mathcal{P}_{\text{diag}}(\M\Sigma_{1}^{*})\big\},
\ee
where $\mathcal{P}_{\text{diag}}(\cdot)$ is the projection onto the matrix subspace consisting of all diagonal matrices. We immediately have $\|\mathcal{P}_{\text{diag}}(\M\Sigma_{1}^{*})\|_2 = \|\mathcal{P}_{\text{diag}}(\M\Sigma_{1}^{*})\|_{\max}\leq \|\M\Sigma_{1}^{*}\|_{\max} \leq C_{{u}}$. In addition, by taking $k = 1$ in \eqref{A1*}, we have 
\bee\label{main:4:2}
\|\M\Sigma_{1}^{*} - \mathcal{P}_{\text{diag}}(\M\Sigma_{1}^{*})\|_{\infty} &= \max_{1\leq l_1\leq p}\sum_{m_1 = 1}^p |\sigma^{(1)}_{l_1,m_1}|
\\
&\leq 2\max_{l_1,m_1\atop|l_1 - m_1| = 1}|\sigma^{(1)}_{l_1,m_1}|+ \max_{1\leq l_1\leq p}\sum_{m_1 = 1}^p \{|\sigma^{(1)}_{l_1,m_1}|: |l_1-m_1|>1 \}
\\
&\leq 2C_{{u}} + C_0.
\ee 
By symmetry, we also have $\|\M\Sigma_{1}^{*} - \mathcal{P}_{\text{diag}}(\M\Sigma_{1}^{*})\|_{1}\leq 2C_{{u}} + C_0$. Then the matrix H{\"o}lder's inequality \citep{golub1996matrix} implies 
\bee\label{main:4:3}
\|\M\Sigma_{1}^{*} - \mathcal{P}_{\text{diag}}(\M\Sigma_{1}^{*})\|_{2}&\leq \sqrt{\|\M\Sigma_{1}^{*} - \mathcal{P}_{\text{diag}}(\M\Sigma_{1}^{*})\|_{1}\|\M\Sigma_{1}^{*} - \mathcal{P}_{\text{diag}}(\M\Sigma_{1}^{*})\|_{\infty}}
\\
&\leq 2C_{{u}} + C_0.
\ee
Combining \eqref{main:4:1}--\eqref{main:4:3}, we then show by triangle inequality that 
\bee\nonumber
\lambda_{\max}(\M\Sigma_{1}^{*}) &= \|\M\Sigma_{1}^{*}\|_2 
\\
&\leq \|\mathcal{P}_{\text{diag}}(\M\Sigma_{1}^{*})\|_2 + \|\M\Sigma_{1}^{*} - \mathcal{P}_{\text{diag}}(\M\Sigma_{1}^{*})\|_2
\\
&\precsim 1.
\ee\qed

\section{Lemmas and their Proofs}\label{sec:lmpf}
\begin{lemma}[Weyl's Theorem \citep{weyl1912asymptotische}]\label{lm:weyl}
For symmetric matrices $\M M,\M M' \in \RR^{d\times d}$, where $\M M$ has eigenvalues $\lambda_1(\M M)\geq \lambda_2(\M M)\geq \cdots \geq \lambda_{d}(\M M)$ and $\M M'$ has eigenvalues $\lambda_1(\M M')\geq \lambda_2(\M M')\geq \cdots \geq \lambda_{d}(\M M')$. Then we have
\bee\nonumber
\max_{1\leq l\leq d}|\lambda_l(\M M) - \lambda_l(\M M')|\leq \|\M M - \M M'\|_2
\ee
\end{lemma}
\begin{lemma}[Matrix H{\"o}lder's Inequality]\label{lm:matrixholder} For matrix $\M M\in\RR^{d_1\times d_2}$, we have
\bee\label{lm:matrixholder:1}
\|\M M\|_2 \leq \sqrt{\|\M M\|_{\infty}\|\M M\|_1}.
\ee
Specially, for symmetric matrix $\M M \in \RR^{d\times d}$ we have $\|\M M\|_2\leq \|\M M\|_1$.
\end{lemma}
\begin{proof}[Proof of Lemma \ref{lm:matrixholder}]For general form \eqref{lm:matrixholder:1}, see {Corollary 2.3.2} in \citet{golub1996matrix} for the proof. When $\M M$ is symmetric, by directly checking the definition, one can see $\|\M M\|_1 = \|\M M\|_{\infty}$. Then applying \eqref{lm:matrixholder:1} yields $\|\M M\|_2\leq \sqrt{\|\M M\|^2_1} = \|\M M\|_1.$

\end{proof}
\begin{lemma}\label{lemma:xi}
Let $\xi(\cdot)$ be the matrix transformation function defined in \eqref{xi:fun} of the main paper. Let $\M M_1,\M M_2 \in \mathbb{R}^{pq \times pq}$, $\bds m_1 = \big[m^{(1)}_{l_1m_1}\big]\in \RR^{p\times p}, \bds m_2 = \big[m^{(2)}_{l_2m_2}\big]\in \RR^{q\times q}$ be arbitrary matrices, with $1\leq l_1,m_1\leq p$ and $1\leq l_2,m_2\leq q$. We have the following properties of $\xi(\cdot)$,
\bee\nonumber
\text{(i).}&\ \|\xi(\M M_1)\|_2\leq\|\xi(\M M_1)\|_\F =\|\M M_1\|_\F,
\\
\text{(ii).}&\ \xi(c_1 \cdot \M M_1 + c_2 \cdot \M M_2) = c_1\cdot\xi(\M M_1) + c_2 \cdot \xi(\M M_2),
\\
\text{(iii).}&\ \E\big[\xi(\M M_1)\big] = \xi(\E \M M_1),
\\
\text{(iv).}&\ \xi\big(\bds m_2 \otimes \bds m_1\big) =  \xi(\bds m_2)\cdot\xi(\bds m_1)^\T.
\ee
\end{lemma}
\begin{proof}[Proof of Lemma \ref{lemma:xi}]
For (i), since $\|\cdot\|_2\leq \|\cdot\|_\F$, we have
\bee\nonumber
\|\xi(\M M_1)\|_2\leq \|\xi(\M M_1)\|_\F.
\ee
The equality $\|\xi(\M M_1)\|_\F = \|\M M_1\|_\F$, and the results in (ii) and (iii), hold by the fact that  $\xi(\cdot)$ only reorders the components of matrix $\M M_1, \M M_2$ while not changing the corresponding entries.
\par
For (iv), by definitions, we have
\bee\nonumber
&\xi(\bds m_2\otimes \bds m_1) 
\\
&= 
\begin{bmatrix}
m_{11}^{(2)} \vv(\bds m_1)^{\T} &
\dots
&
m_{q1}^{(2)}\vv(\bds m_1)^{\T} &
m_{12}^{(2)}\vv(\bds m_1)^{\T} &
\dots
&
m_{q2}^{(2)} \vv(\bds m_1)^{\T}  &
\dots
\dots &
m_{qq}^{(2)} \vv(\bds m_1)^{\T} 
\end{bmatrix}^\T
\\
&=\begin{bmatrix}
m_{11}^{(2)} &
\dots
&
m_{q1}^{(2)} &
m_{12}^{(2)} &
\dots
&
m_{q2}^{(2)}&
\dots &
\dots &
m_{qq}^{(2)} 
\end{bmatrix}^\T \cdot \vv(\bds m_1)^{\T}
\\
&= \vv(\bds m_2) \cdot \vv(\bds m_1)^{\T}.
\ee
\end{proof}
\begin{lemma}\label{lm:exp}
Let $Z \in \RR$ be a non-negative random variable. For any $x>0$, we have
\bee\nonumber
\E Z \leq x + \int_{x}^{+\infty} \p(Z\geq t)dt.
\ee
\end{lemma}
\begin{proof}[Proof of Lemma \ref{lm:exp}] 
For any non-negative random variable $Z$, we have $\E Z = \int_{0}^{+\infty}\p(Z > t)dt$ (see e.g. \citet{casella2002statistical}). Then we have
\bee\nonumber
\E Z &=\int_{0}^{x} \p(Z\geq t)dt + \int_{x}^{+\infty} \p(Z\geq t)dt
\\
&\leq (x - 0)\cdot 1 + \int_{x}^{+\infty} \p(Z\geq t)dt.
\ee
\end{proof}

\begin{lemma}\label{T1}
Let $\hat{\M \Sigma}^{\diamond}_1$,$\hat{\M \Sigma}^{\diamond}_2$ be the solution to the optimization problem,
\bee\label{lemma:trans}
(\hat{\M \Sigma}^{\diamond}_1, \hat{\M \Sigma}^{\diamond}_2) = \argmin_{(\boldsymbol\Sigma_1, \boldsymbol\Sigma_2)}\| \widetilde{\boldsymbol\Sigma}^\diamond-\boldsymbol\Sigma_2 \otimes \boldsymbol\Sigma_1\|_\rF^2. 
\ee
For any $\tilde{\M \Sigma}_1^* = [\tilde\sigma_{l_1m_1}]\in \RR^{p\times p}$ and $\tilde{\M \Sigma}_2^* = [\tilde{\sigma}_{l_2m_2}] \in \RR^{q\times q}$ such that  $\tilde{\M\Sigma}^* = \tilde{\M\Sigma}^{*}_2 \otimes \tilde{\M\Sigma}^{*}_1 \in \RR^{pq\times pq}$,  we have
\#\nonumber
\|\tilde{\M\Sigma}^{*} - \hat{\M\Sigma}^{\diamond}_2 \otimes \hat{\M\Sigma}^{\diamond}_1\|_\F &\leq 2\sqrt{2}\|\xi(\tilde{\M\Sigma}^{*}) - \xi(\widetilde{\boldsymbol\Sigma}^\diamond)\|_2.
\#
\end{lemma}
\begin{proof}[Proof of Lemma \ref{T1}] By Lemma \ref{lemma:xi}, we have
\bee\nonumber
\|\tilde{\M\Sigma}^{*} - \hat{\M\Sigma}^{\diamond}_2 \otimes \hat{\M\Sigma}^{\diamond}_1\|_\F &= \|\xi(\tilde{\M\Sigma}^{*}) - \xi(\hat{\M\Sigma}^{\diamond}_2 \otimes \hat{\M\Sigma}^{\diamond}_1)\|_\F
\\
&=\|\xi(\tilde{\M\Sigma}^{*}_2)\cdot\xi(\tilde{\M\Sigma}^{*}_1)^\T - \xi(\hat{\M\Sigma}^{\diamond}_2) \cdot\xi( \hat{\M\Sigma}^{\diamond}_1)^\T\|_\F
\\
&\leq \sqrt{2}\|\xi(\tilde{\M\Sigma}^{*}_2)\cdot\xi(\tilde{\M\Sigma}^{*}_1)^\T - \xi(\hat{\M\Sigma}^{\diamond}_2) \cdot\xi( \hat{\M\Sigma}^{\diamond}_1)^\T\|_2.
\ee
The last inequality holds because both $\xi(\tilde{\M\Sigma}^{*}_2)\cdot\xi(\tilde{\M\Sigma}^{*}_1)^\T$ and $\xi(\hat{\M\Sigma}^{\diamond}_2) \cdot\xi( \hat{\M\Sigma}^{\diamond}_1)^\T$ are rank-one matrices, which implies the rank of $\xi(\tilde{\M\Sigma}^{*}_2)\cdot\xi(\tilde{\M\Sigma}^{*}_1)^\T - \xi(\hat{\M\Sigma}^{\diamond}_2) \cdot\xi( \hat{\M\Sigma}^{\diamond}_1)^\T$ is at most $2$, and therefore $\|\xi(\tilde{\M\Sigma}^{*}_2)\cdot\xi(\tilde{\M\Sigma}^{*}_1)^\T - \xi(\hat{\M\Sigma}^{\diamond}_2) \cdot\xi( \hat{\M\Sigma}^{\diamond}_1)^\T\|_\F^2 \leq 2\|\xi(\tilde{\M\Sigma}^{*}_2)\cdot\xi(\tilde{\M\Sigma}^{*}_1)^\T - \xi(\hat{\M\Sigma}^{\diamond}_2) \cdot\xi( \hat{\M\Sigma}^{\diamond}_1)^\T\|_2^2$. Then by triangle inequality,
\bee\label{lm:approx:1}
\|\tilde{\M\Sigma}^{*} - \hat{\M\Sigma}^{\diamond}_2 \otimes \hat{\M\Sigma}^{\diamond}_1\|_\F &\leq \sqrt{2}\|\xi(\tilde{\M\Sigma}^{*}_2)\cdot\xi(\tilde{\M\Sigma}^{*}_1)^\T - \xi(\tilde{\M\Sigma}^\diamond) + \xi(\tilde{\M\Sigma}^\diamond) - \xi(\hat{\M\Sigma}^{\diamond}_2) \cdot\xi( \hat{\M\Sigma}^{\diamond}_1)^\T\|_2
\\
&\leq \sqrt{2}\|\xi(\tilde{\M\Sigma}^{*}_2)\cdot\xi(\tilde{\M\Sigma}^{*}_1)^\T - \xi(\tilde{\M\Sigma}^\diamond)\|_2 +\sqrt{2}\| \xi(\hat{\M\Sigma}^{\diamond}_2) \cdot\xi( \hat{\M\Sigma}^{\diamond}_1)^\T- \xi(\tilde{\M\Sigma}^\diamond) \|_2.
\ee
\par
\citet{Pitsianis1997} have shown that $\hat{\M \Sigma}^{\diamond}_1, \hat{\M \Sigma}^{\diamond}_2$ are obtained by reordering the components of the leading singular vectors of  $\xi(\tilde{\M\Sigma}^\diamond)$.  By Eckart-Young-Mirsky theorem \citep{eckart1936approximation}, among all the rank one matrix ${\bf A}$, $\xi(\hat{\M\Sigma}^{\diamond}_2) \cdot\xi( \hat{\M\Sigma}^{\diamond}_1)^\T$ minimizes the spectral-norm error $ \|{\bf A}-\xi(\tilde{\M\Sigma}^\diamond) \|_2$. Since $\xi(\tilde{\M\Sigma}^{*}_2)\cdot\xi(\tilde{\M\Sigma}^{*}_1)^\T $ is also a rank-one matrix, we have 
\bee\label{lm:approx:2}
\| \xi(\hat{\M\Sigma}^{\diamond}_2) \cdot\xi( \hat{\M\Sigma}^{\diamond}_1)^\T- \xi(\tilde{\M\Sigma}^\diamond) \|_2 \leq \|\xi(\tilde{\M\Sigma}^{*}_2)\cdot\xi(\tilde{\M\Sigma}^{*}_1)^\T - \xi(\tilde{\M\Sigma}^\diamond)\|_2.
\ee 

Summarizing \eqref{lm:approx:1} and \eqref{lm:approx:2}, we conclude 
\bee\nonumber
\|\tilde{\M\Sigma}^{*} - \hat{\M\Sigma}^{\diamond}_2 \otimes \hat{\M\Sigma}^{\diamond}_1\|_\F &\leq 2\sqrt{2}\|\xi(\tilde{\M\Sigma}^{*}_2)\cdot\xi(\tilde{\M\Sigma}^{*}_1)^\T - \xi(\tilde{\M\Sigma}^\diamond)\|_2
\\
&= 2\sqrt{2}\|\xi(\tilde{\M\Sigma}^{*}) - \xi(\widetilde{\boldsymbol\Sigma}^\diamond)\|_2 \quad (\text{By Lemma \ref{lemma:xi}}). 
\ee
\end{proof}

{\color{black} We introduce the following Lemmas \ref{l:outer}--\ref{lemma:srate}, which are used to prove Theorems \ref{T2} and \ref{T3}.}
\begin{lemma}\label{l:outer}
We have the following two sets of results: \\
\noindent \textbf{(a)} When $\M \Sigma_1^{*}\in \mathcal{F}(\varepsilon_0, \alpha_1),\M \Sigma_2^{*} \in \mathcal{F}(\varepsilon_0, \alpha_2)$ and $\M\Sigma^* = \M \Sigma_2^*\otimes \M \Sigma_1^*$, one has
\bee\label{l:outer:res1}
&\frac{1}{pq}\big\|\M \Sigma_2^{*,\mathcal{B}}(k_2)\otimes\M \Sigma_1^{*,\mathcal{B}}(k_1) - \M \Sigma^*\big\|_\F^2 \precsim \M I(k_1< p - 1)\cdot k_1^{-2\alpha_1} +  \M I(k_2< q - 1)\cdot k_2^{-2\alpha_2},
\\
&\frac{1}{pq}\big\|\M \Sigma_2^{*,\mathcal{T}}(k_2)\otimes\M \Sigma_1^{*,\mathcal{T}}(k_1) - \M \Sigma^*\big\|_\F^2 \precsim \M I(k_1< 2p-2\}\cdot k_1^{-2\alpha_1} +  \M I(k_2< 2q - 2)\cdot k_2^{-2\alpha_2}.
\ee
\par
\noindent{\textbf{(b)}} When $\M \Sigma_1^{*}\in \mathcal{M}(\varepsilon_0, \alpha_1),\M \Sigma_2^{*} \in \mathcal{M}(\varepsilon_0, \alpha_2)$ and $\M\Sigma^* = \M \Sigma_2^*\otimes \M \Sigma_1^*$, one has
\bee\label{l:outer:res2}
&\frac{1}{pq}\big\|\M \Sigma_2^{*,\mathcal{B}}(k_2)\otimes\M \Sigma_1^{*,\mathcal{B}}(k_1) - \M \Sigma^*\big\|_\F^2  \precsim \M I(k_1< p - 1)\cdot k_1^{-2\alpha_1-1} +  \M I(k_2< q - 1)\cdot k_2^{-2\alpha_2-1},
\\
&\frac{1}{pq}\big\|\M \Sigma_2^{*,\mathcal{T}}(k_2)\otimes\M \Sigma_1^{*,\mathcal{T}}(k_1) - \M \Sigma^*\big\|_\F^2\precsim \M I(k_1< 2p - 2)\cdot k_1^{-2\alpha_1-1} +  \M I(k_2< 2q - 2)\cdot k_2^{-2\alpha_2-1}.
\ee
\end{lemma}
\begin{proof}[Proof of Lemma \ref{l:outer}] By Lemma \ref{lemma:xi}, we have
\bee\nonumber
\xi\big\{\M \Sigma_2^{*,\mathcal{T}}(k_2)\otimes\M \Sigma_1^{*,\mathcal{T}}(k_1)\big\} - \xi\big\{\M \Sigma^*\big\} = \vecc\big\{\M \Sigma_2^{*,\mathcal{T}}(k_2)\big\}\cdot\vecc\big\{\M \Sigma_1^{*,\mathcal{T}}(k_1)\big\}^\T - \vecc\big\{\M \Sigma_2^{*}\big\}\cdot\vecc\big\{\M \Sigma_1^{*}\big\}^\T.
\ee
Since $\vecc\big\{\M \Sigma_2^{*,\mathcal{T}}(k_2)\big\}\cdot\vecc\big\{\M \Sigma_1^{*,\mathcal{T}}(k_1)\big\}^\T $ and $\vecc\big\{\M \Sigma_2^{*}\big\}\cdot\vecc\big\{\M \Sigma_1^{*}\big\}^\T$ are both rank 1 matrices, it is easy to see the rank of $\xi\big\{\M \Sigma_2^{*,\mathcal{T}}(k_2)\otimes\M \Sigma_1^{*,\mathcal{T}}(k_1)\big\} - \xi\big\{\M \Sigma^*\big\}$ is at most $2$. By the properties of Frobenius norm, we have
\bee\label{lm:bias:convert}
&\big\|\M \Sigma_2^{*,\mathcal{T}}(k_2)\otimes\M \Sigma_1^{*,\mathcal{T}}(k_1) - \M \Sigma^*\big\|_\F^2
\\
&= \big\|\xi\big\{\M \Sigma_2^{*,\mathcal{T}}(k_2)\otimes\M \Sigma_1^{*,\mathcal{T}}(k_1)\big\} - \xi\big\{\M \Sigma^*\big\}\big\|_\F^2 \quad (\text{By Lemma \ref{lemma:xi}})
\\
&\leq 2\big\|\xi\big\{\M \Sigma_2^{*,\mathcal{T}}(k_2)\otimes\M \Sigma_1^{*,\mathcal{T}}(k_1)\big\} - \xi\big\{\M \Sigma^*\big\}\big\|_2^2.
\ee
So to show the desired results, it is enough to show the upper bound of $\big\|\xi\big\{\M \Sigma_2^{*,\mathcal{B}}(k_2)\otimes\M \Sigma_1^{*,\mathcal{B}}(k_1)\big\} - \xi\big\{\M \Sigma^*\big\}\big\|_2^2/pq$.
\\
\par
Next, we focus on bounding $\big\|\xi\big\{\M \Sigma_2^{*,\eta}(k_2)\otimes\M \Sigma_1^{*,\eta}(k_1)\big\} - \xi\big\{\M \Sigma^*\big\}\big\|_2^2/pq$ to show the desired results. We first prove the first inequality in part (a).  
\par
We first derive error rates of several quantities. As $\M\Sigma^*_1\in\mathcal{F}(\epsilon_0,\alpha_1),\M\Sigma^*_2 \in \mathcal{F}(\epsilon_0,\alpha_2)$, by definition, one has
\bee\nonumber
\|\M\Sigma^*_1-\M\Sigma^{*,\mathcal{B}}_1(k_1)\|_{1,1} 
&= \sum_{1\leq l_1,m_1\leq p\atop |l_1 - m_1| > k_1} \sigma^{(1)}_{l_1,m_1}
\\
&\leq p\cdot\Big(\max_{l_1}\sum_{|l_1 - m_1|>k_1} \sigma^{(1)}_{l_1,m_1}\Big)
\\
&\precsim p\cdot k_1^{-\alpha_1}.
\ee
Similarly $\|\M\Sigma^*_2-\M\Sigma^{*,\mathcal{B}}_2(k_2)\|_{1,1} \precsim  q\cdot k_2^{-\alpha_2}$.
Meanwhile, we have
\bee\nonumber
\|\M\Sigma^*_1 - \M\Sigma^{*,\mathcal{B}}_1(k_1)\|_{\max} &\leq \|\M\Sigma^*_1 - \M\Sigma^{*,\mathcal{B}}_1(k_1)\|_1 \precsim k_1^{-\alpha_1},
\ee
and $\|\M\Sigma^*_2 - \M\Sigma^{*,\mathcal{B}}_2(k_2)\|_{\max}\precsim k_2^{-\alpha_2}$. Since $\M \Sigma_1^{*}\in \mathcal{F}(\varepsilon_0, \alpha_1),\M \Sigma_2^{*} \in \mathcal{F}(\varepsilon_0, \alpha_2)$, it is easy to see 
 $\|\M \Sigma^*_1\|_{\max}\leq \|\M \Sigma^*_1\|_{2} = \lambda_{\max}(\M\Sigma^*_1)\leq 1/\epsilon_0$, and thus $\|\M \Sigma^*_1\|_{1,1}\leq p\|\M \Sigma^*_1\|_{\max}\leq p\|\M \Sigma^*_1\|_{2}\precsim p$ and $\|\M\Sigma^{*,\mathcal{B}}_1(k_1)\|_{\max}\leq \|\M\Sigma^{*}_1\|_{\max}\precsim 1$. Similarly, $\|\M \Sigma^*_2\|_{1,1}\precsim q, \|\M\Sigma^{*,\mathcal{B}}_2(k_2)\|_{\max}\precsim 1$. 
 
As a result, when $k_1<p, k_2 < q$,
\begin{align}
\nonumber
&\frac{1}{pq}\|\xi\{\M\Sigma^{*,\mathcal{B}}_2(k_2)\otimes\M\Sigma^{*,\mathcal{B}}_1(k_1)\} - \xi\{\M\Sigma^*\}\|^2_2 
\\\nonumber
&\leq \frac{1}{pq}\|\xi\{\M\Sigma^{*,\mathcal{B}}_2(k_2)\otimes\M\Sigma^{*,\mathcal{B}}_1(k_1) - \M\Sigma^*_2 \otimes \M\Sigma^*_1\}\|_1 \cdot \|\xi\{\M\Sigma^{*,\mathcal{B}}_2(k_2)\otimes\M\Sigma^{*,\mathcal{B}}_1(k_1) - \M\Sigma^*_2 \otimes \M\Sigma^*_1\}\|_{\infty}
\\\nonumber
&\precsim\frac{1}{pq}\max\Big\{\|\M\Sigma^{*,\mathcal{B}}_1(k_1)\|_{\max}\|\M\Sigma^{*,\mathcal{B}}_2(k_2) - \M\Sigma^*_2\|_{1,1}, \|\M\Sigma^*_1 - \M\Sigma^{*,\mathcal{B}}_1(k_1)\|_{\max}\|\M\Sigma^*_2\|_{1,1}\Big\} 
\\\label{lm:outer:band:1:1}
&~~~\times \max\Big\{\|\M\Sigma^{*,\mathcal{B}}_2(k_2)\|_{\max}\|\M\Sigma^{*,\mathcal{B}}_1(k_1) - \M\Sigma^*_1\|_{1,1}, \|\M\Sigma^*_2 - \M\Sigma^{*,\mathcal{B}}_2(k_2)\|_{\max}\|\M\Sigma^*_1\|_{1,1}\Big\}
\\\nonumber
&\precsim\frac{1}{pq}\max\Big\{qk_2^{-\alpha_2},qk_1^{-\alpha_1}\Big\} \times \max\Big\{pk_1^{-\alpha_1},pk_2^{-\alpha_2}\Big\}
\\\nonumber
&\precsim \max\{k_{1}^{-2\alpha_1},k_2^{-2\alpha_2}\}
\\\label{rate2}
&\asymp k_{1}^{-2\alpha_1} + k_2^{-2\alpha_2},
\end{align}
where the first inequality holds by matrix H\"older's inequality (Lemma \ref{lm:matrixholder}). For the second inequality, by Lemma \ref{lemma:xi} of $\xi(\cdot)$, we know each column of $\xi\{\M\Sigma^{*,\mathcal{B}}_2(k_2)\otimes\M\Sigma^{*,\mathcal{B}}_1(k_1) - \M\Sigma^*_2 \otimes \M\Sigma^*_1\} = \xi\{\M\Sigma^{*,\mathcal{B}}_2(k_2)\} \cdot \xi\{\M\Sigma^{*,\mathcal{B}}_1(k_1)\}^\T - \xi\{\M\Sigma^*_2\} \cdot \xi\{\M\Sigma^*_1\}^\T$ is either in the form of  $\sigma_{l,m}^{(1)} \cdot \vecc\big\{\M \Sigma_2^{*,\mathcal{B}}(k_2) -  \M \Sigma_2^*\big\} \text{ when } |l - m|\leq k_1$ (so $\sigma_{l,m}^{(1)}$ is the $lm$th entry of both $\M \Sigma_1^{*,\mathcal{B}}(k_1)$ and $\M \Sigma_1^{*,\mathcal{B}}$), or in the form of $0 \times\vecc\{\M \Sigma_2^{*,\mathcal{B}}(k_2)\} -\sigma_{l,m}^{(1)} \vecc\{\M \Sigma_2^*\}$ when $|l-m|>k_1$ (so $0$ is the $lm$th entry of $\M \Sigma_1^{*,\mathcal{B}}(k_1)$ and $\sigma_{l,m}^{(1)}$ is the $lm$th entry of $\M \Sigma_1^{*,\mathcal{B}}$). Since a matrix's $\|\cdot\|_1$ norm is the maximum among all its column vectors' $\ell_1$ norm, we have
\bee\label{lm:outer:add1}
&\|\xi\{\M\Sigma^{*,\mathcal{B}}_2(k_2)\otimes\M\Sigma^{*,\mathcal{B}}_1(k_1) - \M\Sigma^*_2 \otimes \M\Sigma^*_1\}\|_1
\\
&\leq \max\Big\{\underbrace{\max_{|l-m|\leq k_1}|\sigma^{(1)}_{l,m}|}_{=\|\M \Sigma^{*,\mathcal{B}}_1(k_1)\|_{\max}}\cdot\underbrace{\big\|\vecc\big\{\M \Sigma_2^{*,\mathcal{B}}(k_2) -  \M \Sigma_2^*\big\}\big\|_1}_{=\big\|\M \Sigma_2^{*,\mathcal{B}}(k_2) -  \M \Sigma_2^*\big\|_{1,1}},
\underbrace{\max_{|l-m|>k_1}|\sigma_{l,m}^{(1)}|}_{=\|\M \Sigma_1^* - \M \Sigma_1^{*,\mathcal{B}}(k_2)\|_{\max}}\cdot\underbrace{\|\vecc\{\M\Sigma_2^*\}\|_1}_{=\|\M \Sigma_2^*\|_{1,1}}\Big\}
\\
&=\max\Big\{\|\M\Sigma^{*,\mathcal{B}}_1(k_1)\|_{\max}\|\M\Sigma^{*,\mathcal{B}}_2(k_2) - \M\Sigma^*_2\|_{1,1}, \|\M\Sigma^*_1 - \M\Sigma^{*,\mathcal{B}}_1(k_1)\|_{\max}\|\M\Sigma^*_2\|_{1,1}\Big\}.
\ee 
By symmetry, we also have
\bee\label{lm:outer:add2}
&\|\xi\{\M\Sigma^{*,\mathcal{B}}_2(k_2)\otimes\M\Sigma^{*,\mathcal{B}}_1(k_1) - \M\Sigma^*_2 \otimes \M\Sigma^*_1\}\|_\infty
\\
&\leq \max\Big\{\|\M\Sigma^{*,\mathcal{B}}_2(k_2)\|_{\max}\|\M\Sigma^{*,\mathcal{B}}_1(k_1) - \M\Sigma^*_1\|_{1,1}, \|\M\Sigma^*_2 - \M\Sigma^{*,\mathcal{B}}_2(k_2)\|_{\max}\|\M\Sigma^*_1\|_{1,1}\Big\}.
\ee 
Combining \eqref{lm:outer:add1} and \eqref{lm:outer:add2} yields the second inequality of \eqref{lm:outer:band:1:1}. The final rate \eqref{rate2} follows by using the previously derived error rates. Similarly, we can also consider the cases when $k_1 \geq p - 1$ or $k_2 \geq q - 1$, under which either $\M \Sigma_1^{*,\mathcal{B}}(k_1) = \M \Sigma_1^{*}$ or $\M \Sigma_2^{*,\mathcal{B}}(k_2) = \M \Sigma_2^{*}$.  The only difference is, when $\M \Sigma_1^{*,\mathcal{B}}(k_1) = \M \Sigma_1^{*}$ or $\M \Sigma_2^{*,\mathcal{B}}(k_2) = \M \Sigma_2^{*}$, one of the terms in each $\max\{\cdot,\cdot\}$ of \eqref{lm:outer:band:1:1} is zero and thus the final rate $k_{1}^{-2\alpha_1} + k_2^{-2\alpha_2}$ will degenerate to either $k_{1}^{-2\alpha_1}$ or $k_2^{-2\alpha_2}$. Finally we can show
\bee\label{T2:p:b:case2}
\frac{1}{pq}\|\xi\{\M\Sigma^{*,\mathcal{B}}_2(k_2)\otimes\M\Sigma^{*,\mathcal{B}}_1(k_1)\} - \xi\{\M\Sigma^*\}\|^2_2
\precsim 
\begin{cases}
k_1^{-2\alpha_1} + k_2^{-2\alpha_2} & k_1 <p - 1, k_2 < q - 1
\\
k_2^{-2\alpha_2} & k_1 \geq p - 1,k_2<q - 1
\\
k_1^{-2\alpha_1} & k_2 \geq q - 1,k_1<p - 1
\\
0 & k_1 \geq p - 1, k_2 \geq q - 1.
\end{cases}
\ee
Combining \eqref{lm:bias:convert} and \eqref{T2:p:b:case2}, we show the first inequality in part (a). 
\\
\par
We then prove the second inequality in part (a). By definition, it is easy to see that $\|\M\Sigma^{*,\mathcal{T}}_{u}(k_{u})\|_{\max}\leq \|\M\Sigma^{*,\mathcal{B}}_{u}(k_{u})\|_{\max}$ and 
\bee\nonumber
&\|\M\Sigma^{*,\mathcal{T}}_{u}(k_{u}) - \M\Sigma^*_{u}\|_{1,1}\leq \|\M\Sigma^{*,\mathcal{B}}_{u}(\lfloor k_{u}/2\rfloor) - \M\Sigma^*_{u}\|_{1,1}, 
\\
&\|\M\Sigma^{*,\mathcal{T}}_{u}(k_{u}) - \M\Sigma^*_{u}\|_{\max}\leq \|\M\Sigma^{*,\mathcal{B}}_{u}(\lfloor k_{u}/2\rfloor) - \M\Sigma^*_{u}\|_{\max},
\ee
where $u \in \{1,2\}$. Then, when $(k_1,k_2) \neq (p,q)$, similar to \eqref{rate2}, we can show
\begin{align}\nonumber
&\frac{1}{pq}\big\|\xi\{\M\Sigma^{*,\mathcal{T}}_2(k_2)\otimes\M\Sigma^{*,\mathcal{T}}_1(k_1)\} - \xi\{\M\Sigma^*\}\big\|^2_2 
\\
\nonumber
&\leq \frac{1}{pq}\big\|\xi\big\{\M\Sigma^{*,\mathcal{T}}_2(k_2)\otimes\M\Sigma^{*,\mathcal{T}}_1(k_1) - \M\Sigma^*_2 \otimes \M\Sigma^*_1\big\}\big\|_1 \cdot \big\|\xi\big\{\M\Sigma^{*,\mathcal{T}}_2(k_2)\otimes\M\Sigma^{*,\mathcal{T}}_1(k_1) - \M\Sigma^*_2 \otimes \M\Sigma^*_1\big\}\big\|_{\infty}
\\
\nonumber
&\precsim\frac{1}{pq}\max\Big\{\|\M\Sigma^{*,\mathcal{T}}_1(k_1)\|_{\max}\|\M\Sigma^{*,\mathcal{T}}_2(k_2) - \M\Sigma^*_2\|_{1,1}, \|\M\Sigma^*_1 - \M\Sigma^{*,\mathcal{T}}_1(k_1)\|_{\max}\|\M\Sigma^*_2\|_{1,1}\Big\} 
\\\label{lm:outer:taper:1:1}
&\times \max\Big\{\|\M\Sigma^{*,\mathcal{T}}_2(k_2)\|_{\max}\|\M\Sigma^{*,\mathcal{T}}_1(k_1) - \M\Sigma^*_1\|_{1,1}, \|\M\Sigma^*_2 - \M\Sigma^{*,\mathcal{T}}_2(k_2)\|_{\max}\|\M\Sigma^*_1\|_{1,1}\Big\}
\\\nonumber
&\precsim\frac{1}{pq}\max\Big\{\|\M\Sigma^{*,\mathcal{B}}_1(k_1)\|_{\max}\|\M\Sigma^{*,\mathcal{B}}_2(\lfloor k_2/2\rfloor) - \M\Sigma^*_2\|_{1,1}, \|\M\Sigma^*_1 - \M\Sigma^{*,\mathcal{B}}_1(\lfloor k_1/2\rfloor)\|_{\max}\|\M\Sigma^*_2\|_{1,1}\Big\} 
\\\nonumber
&\times \max\Big\{\|\M\Sigma^{*,\mathcal{B}}_2(k_2)\|_{\max}\|\M\Sigma^{*,\mathcal{B}}_1(\lfloor k_1/2\rfloor) - \M\Sigma^*_1\|_{1,1}, \|\M\Sigma^*_2 - \M\Sigma^{*,\mathcal{B}}_2(\lfloor k_2/2\rfloor)\|_{\max}\|\M\Sigma^*_1\|_{1,1}\Big\}
\\\label{lm:outer:taper:1}
&\precsim \max\{k_{1}^{-2\alpha_1},k_2^{-2\alpha_2}\}
\\
&\asymp k_{1}^{-2\alpha_1} + k_2^{-2\alpha_2}.
\end{align}
Further considering the cases of  $k_1 \geq 2p - 2$ or $k_2 \geq 2q - 2$, we have either $\M \Sigma_1^{*,\mathcal{T}}(k_1) = \M \Sigma_1^{*}$ or $\M \Sigma_2^{*,\mathcal{T}}(k_2) = \M \Sigma_2^{*}$. Same argument as \eqref{lm:outer:taper:1} can also be applied to these cases. The only difference is, when $\M \Sigma_1^{*,\mathcal{T}}(k_1) = \M \Sigma_1^{*}$ or $\M \Sigma_2^{*,\mathcal{T}}(k_2) = \M \Sigma_2^{*}$, one of the terms in each $\max\{\cdot,\cdot\}$ of \eqref{lm:outer:taper:1:1} is zero and thus the final rate $k_{1}^{-2\alpha_1} + k_2^{-2\alpha_2}$ will degenerate to either $k_{1}^{-2\alpha_1}$ or $k_2^{-2\alpha_2}$. And finally we have
\bee\label{rate5}
&\frac{1}{pq}\big\|\xi\big\{\M \Sigma_2^{*,\mathcal{T}}(k_2)\otimes \M \Sigma_1^{*,\mathcal{T}}(k_1)\big\} - \xi\big\{\M \Sigma^*\big\}\big\|^2_2
\precsim 
\begin{cases}
k_1^{-2\alpha_1} + k_2^{-2\alpha_2} & k_1 <2p - 2, k_2 < 2q - 2
\\
k_2^{-2\alpha_2} & k_1 \geq 2p - 2,k_2<2q - 2
\\
k_1^{-2\alpha_1} & k_2 \geq 2q - 2,k_1<2p - 2
\\
0 & k_1 \geq 2p - 2, k_2 \geq 2q - 2.
\end{cases}
\ee
Combining \eqref{lm:bias:convert} and \eqref{rate5}, we show the second inequality in part (a). 
\\
\par

Next we prove the first inequality in part (b). Recall that $\M \Sigma^* = [
\sigma_{(l_1,m_1),(l_2,m_2)} 
]$ where $\sigma_{(l_1,m_1),(l_2,m_2)} = \sigma_{l_1,m_1}^{(1)}\cdot \sigma_{l_2,m_2}^{(2)} $ is the $\big\{(l_2-1)\cdot p + l_1\big\},\big\{(m_2-1)\cdot p + m_1\big\}$th element of $\M \Sigma^*$. For the simplicity of notation, we write $\bds \beta_1 = (l_1,m_1)^\T, \bds \beta_2 = (l_2,m_2)^\T$ and define $d(\bds \beta_1) = |l_1 - m_1|, d(\bds \beta_2) = |l_2 - m_2|.$
\par
We can show that 
\bee\label{l:ourter:1}
&\frac{1}{pq}\|\xi\{\M\Sigma_2^{*,\mathcal{B}}(k_2)\otimes \M\Sigma_1^{*,\mathcal{B}}(k_1)\}-\xi\{\M\Sigma^{*}\}\|^2_2 
\\
&\leq \frac{1}{pq}\|\xi\{\M\Sigma_2^{*,\mathcal{B}}(k_2)\otimes \M\Sigma_1^{*,\mathcal{B}}(k_1)\}-\xi\{\M\Sigma^{*}\}\|^2_\F 
\\
&=\frac{1}{pq}\|\M\Sigma_2^{*,\mathcal{B}}(k_2)\otimes \M\Sigma_1^{*,\mathcal{B}}(k_1)-\M\Sigma^{*}\|^2_\F 
\\
&= \frac{1}{pq} \sum_{d(\boldsymbol\beta_1)>k_1 \atop \text{ or }d(\boldsymbol\beta_2)>k_2} \sigma_{\boldsymbol\beta_1,\boldsymbol\beta_2}^2
\\
& = \frac{1}{pq}\sum_{d(\boldsymbol\beta_1)\leq k_1, \atop d(\boldsymbol\beta_2)> k_2}\sigma_{\boldsymbol\beta_1,\boldsymbol\beta_2}^2 + \frac{1}{pq}\sum_{d(\boldsymbol\beta_1)> k_1,\atop d(\boldsymbol\beta_2)\leq k_2}\sigma_{\boldsymbol\beta_1,\boldsymbol\beta_2}^2 + \frac{1}{pq}\sum_{d(\boldsymbol\beta_1)>k_1,\atop d(\boldsymbol\beta_2)> k_2}\sigma_{\boldsymbol\beta_1,\boldsymbol\beta_2}^2.
\ee
Since $\M \Sigma_1^{*}\in \mathcal{M}(\varepsilon_0, \alpha_1),\M \Sigma_2^{*} \in \mathcal{M}(\varepsilon_0, \alpha_2)$, when $d(\bds \beta_1)\geq 1, d(\bds \beta_2)\geq 1,$ we have 
\bee\nonumber
&\big(\sigma_{\boldsymbol\beta_1}^{(1)}\big)^2\leq C_1^2 d(\boldsymbol\beta_1)^{-2\alpha_1-2} = C_1^2|l_1 - m_1|^{-2\alpha_1 - 2},
\\
&\big(\sigma_{\boldsymbol\beta_2}^{(2)}\big)^2 \leq C_1^2d(\boldsymbol\beta_2)^{-2\alpha_2-2} = C_1^2|l_2 - m_2|^{-2\alpha_2 - 2},
\ee
for some constant $C>0$. Then we can show
\bee\label{lm:pf:bias:betak1}
\sum_{\substack{d(\boldsymbol\beta_1)\leq k_1}} {\big(\sigma_{\boldsymbol\beta_1}^{(1)}\big)}^2&=\sum_{1\leq l_1\leq p}\sum_{m_1: |m_1 - l_1|\leq k_1}{\big(\sigma_{\boldsymbol\beta_1}^{(1)}\big)}^2
\\
&\leq \sum_{1\leq l_1\leq p}\sum_{1\leq m_1\leq p}{\big(\sigma_{\boldsymbol\beta_1}^{(1)}\big)}^2
\\
&\leq \sum_{1\leq l_1\leq p}\Big\{1/\varepsilon_0^2 +\sum_{1\leq m_1\leq p\atop m_1\neq l_1}C_1^2|l_1 - m_1|^{-2\alpha_2 - 2}\Big\}
\\
&\precsim p,
\ee
where $\varepsilon_0$ is given in the definition of $\mathcal{M}(\varepsilon_0,\alpha_1)$ such that $|\sigma_{l_1m_1}|\leq \|\M\Sigma_1^*\|_{\max}\leq \|\M\Sigma_1^*\|_2\leq 1/\varepsilon_0 $ for all $l_1 =m_1$. Therefore,
\bee\label{T3:1}
\frac{1}{pq}\sum_{d(\boldsymbol\beta_1)\leq k_1,\atop d(\boldsymbol\beta_2)> k_2}\sigma_{\boldsymbol\beta_1,\boldsymbol\beta_2}^2 &= \frac{1}{pq}\sum_{d(\boldsymbol\beta_1)\leq k_1,\atop d(\boldsymbol\beta_2)> k_2}{\big(\sigma_{\boldsymbol\beta_1}^{(1)}\big)}^2\cdot {\big(\sigma_{\boldsymbol\beta_2}^{(2)}\big)}^2 
\\
&= \frac{1}{pq}\Big\{\sum_{\substack{d(\boldsymbol\beta_1)\leq k_1}} {\big(\sigma_{\boldsymbol\beta_1}^{(1)}\big)}^2 \Big\}\cdot \Big\{\sum_{d(\boldsymbol\beta_2)> k_2} {\big(\sigma_{\boldsymbol\beta_2}^{(2)}\big)}^2 \Big\}
\\
&\precsim \frac{1}{pq}\cdot p \cdot \Big\{\sum_{d(\boldsymbol\beta_2)>k_2}{\big(\sigma_{\boldsymbol\beta_2}^{(2)}\big)}^2\Big\}\cdot\M I(k_2< q - 1) \quad (\text{By \eqref{lm:pf:bias:betak1}}) 
\\
&\precsim \frac{1}{pq} \cdot p \cdot qk_2^{-2\alpha_2 - 1}\cdot\M I( k_2< q - 1) \\
&= {k_2^{-2\alpha_2-1}}\cdot\M I( k_2< q-1).
\ee
The multiplication of $\M I(k_2 < q-1)$ in the first inequality appears because $\sum_{d(\boldsymbol\beta_2)>k_2}{\big(\sigma_{\boldsymbol\beta_2}^{(2)}\big)}^2=0$ for $k_2 \geq q-1$.
The last inequality holds as $\big(\sigma_{\boldsymbol\beta_2}^{(2)}\big)^2 \leq C_1^2d(\boldsymbol\beta_2)^{-2\alpha_2-2}$ and we can show that
\bee\nonumber
\frac{1}{q}\sum_{d(\boldsymbol\beta_2)>k_2}{\big(\sigma_{\boldsymbol\beta_2}^{(2)}\big)}^2 &= \frac{1}{q}\sum_{1\leq l_2\leq q}\sum_{m_2:|m_2 - l_2|>k_2}\big(\sigma_{l_2,m_2}^{(2)}\big)^2
\\
&\leq \frac{1}{q}\sum_{1\leq l_2\leq q}\sum_{m_2:|m_2 - l_2|>k_2}|l_2 - m_2|^2
\\
&\precsim  q\cdot \frac{1}{q}\cdot k_2^{-2\alpha_2-1}\cdot\M I(k_2 < q-1)
\\
&\precsim k_2^{-2\alpha_2-1}\cdot\M I(k_2 < q - 1) .
\ee
Similar to (\ref{T3:1}), one has
\bee\label{T3:2}
\frac{1}{pq}\sum_{d(\boldsymbol\beta_1)> k_1, \atop d(\boldsymbol\beta_2)\leq k_2}\sigma_{\boldsymbol\beta_1,\boldsymbol\beta_2}^2\precsim  k_1^{-2\alpha_1-1}\cdot\M I(k_1< p - 1).
\ee
Also by the fact $\M \Sigma_1^* \in \mathcal{M}(\epsilon_0,\alpha_1), \M \Sigma_2^* \in \mathcal{M}(\epsilon_0,\alpha_2)$ and the definition of $\mathcal{M}(\epsilon,\alpha)$ class,
\bee\label{T3:3}
\frac{1}{pq}\sum_{d(\boldsymbol\beta_1)>k_1 \atop d(\boldsymbol\beta_2)> k_2}\sigma_{\boldsymbol\beta_1,\boldsymbol\beta_2}^2 &= \frac{1}{pq}\Big\{\sum_{d(\boldsymbol\beta_1)>k_1}{\big(\sigma_{\boldsymbol\beta_1}^{(1)}\big)}^2\Big\}\cdot \Big\{\sum_{d(\boldsymbol\beta_2)>k_2}{\big(\sigma_{\boldsymbol\beta_2}^{(2)}\big)}^2\Big\}
\\
&\precsim k_1^{-2\alpha_1-1} \cdot k_2^{-2\alpha_2-1}\cdot\M I(k_1< p-1, k_2< q-1).
\ee
Summarizing the results above, we have 
\bee\nonumber
&\frac{1}{pq}\|\xi\{\M\Sigma_2^{*,\mathcal{B}}(k_2)\otimes \M\Sigma_1^{*,\mathcal{B}}(k_1)\}-\xi\{\M\Sigma^{*}\}\|^2_2  
\\
&\precsim k_1^{-2\alpha_1-1}\cdot\M I(k_1 < p-1) + k_2^{-2\alpha_2-1}\cdot\M I(k_2 < q-1) 
+k_1^{-2\alpha_1-1} \cdot k_2^{-2\alpha_2-1}\cdot\M I(k_1< p-1, k_2< q-1)
\\
&\asymp \M I(k_1 < p-1)\cdot k_1^{-2\alpha_1-1} + \M I(k_2 < q-1)\cdot k_2^{-2\alpha_2-1}.
\ee
Combining it with \eqref{lm:bias:convert}, we can obtain the first inequality in part (b).
\\
\par
Finally we prove the second inequality in part (b). To give a similar upper bound for $\frac{1}{pq}\|\xi\{\M\Sigma_2^{*,\mathcal{T}}(k_2)\otimes \M\Sigma_1^{*,\mathcal{T}}(k_1)\}-\xi\{\M\Sigma^{*}\}\|^2_2$, we can replace $k_1,k_2$ with $\lfloor\frac{k_1}{2}\rfloor,\lfloor\frac{k_2}{2}\rfloor$ in the proof for proposed banded estimator above. By definition, compared with $\M \Sigma^*$, the corresponding element in $\M\Sigma_2^{*,\mathcal{T}}(k_2)\otimes \M\Sigma_1^{*,\mathcal{T}}(k_1)$ is equal to $\sigma_{(l_1,m_1),(l_2,m_2)}$ when $|l_1 - m_1|\leq \lfloor k_1/2\rfloor, |l_2 - m_2|\leq \lfloor k_2/2\rfloor$ and is a shrinkage of $\sigma_{(l_1,m_1),(l_2,m_2)}$ otherwise. Thus, when $|l_1 - m_1|> \lfloor k_1/2\rfloor$ or $|l_2 - m_2|> \lfloor k_2/2\rfloor$, the absolute value of corresponding element in $\M\Sigma_2^{*,\mathcal{T}}(k_2)\otimes \M\Sigma_1^{*,\mathcal{T}}(k_1)-\M\Sigma^{*}$ is less or equal to $|\sigma_{(l_1,m_1),(l_2,m_2)}|$, which implies $\frac{1}{pq}\|\M\Sigma_2^{*,\mathcal{T}}(k_2)\otimes \M\Sigma_1^{*,\mathcal{T}}(k_1)-\M\Sigma^{*}\|^2_\F \leq \frac{1}{pq} \sum_{d(\boldsymbol\beta_1)>\lfloor k_1/2 \rfloor \atop \text{ or }d(\boldsymbol\beta_2)>\lfloor k_2/2\rfloor} \sigma_{\boldsymbol\beta_1,\boldsymbol\beta_2}^2$. Same as \eqref{l:ourter:1}, we can show
\bee\label{l:outer:2}
&\frac{1}{pq}\|\xi\{\M\Sigma_2^{*,\mathcal{T}}(k_2)\otimes \M\Sigma_1^{*,\mathcal{T}}(k_1)\}-\xi\{\M\Sigma^{*}\}\|^2_2 
\\
&\leq \frac{1}{pq}\|\xi\{\M\Sigma_2^{*,\mathcal{T}}(k_2)\otimes \M\Sigma_1^{*,\mathcal{T}}(k_1)\}-\xi\{\M\Sigma^{*}\}\|^2_\F 
\\
&=\frac{1}{pq}\|\M\Sigma_2^{*,\mathcal{T}}(k_2)\otimes \M\Sigma_1^{*,\mathcal{T}}(k_1)-\M\Sigma^{*}\|^2_\F 
\\
&\leq \frac{1}{pq} \sum_{d(\boldsymbol\beta_1)>\lfloor k_1/2 \rfloor \atop \text{ or }d(\boldsymbol\beta_2)>\lfloor k_2/2\rfloor} \sigma_{\boldsymbol\beta_1,\boldsymbol\beta_2}^2
\\
&= \frac{1}{pq}\sum_{d(\boldsymbol\beta_1)\leq \lfloor k_1/2 \rfloor , \atop d(\boldsymbol\beta_2)> \lfloor k_2/2\rfloor}\sigma_{\boldsymbol\beta_1,\boldsymbol\beta_2}^2 + \frac{1}{pq}\sum_{d(\boldsymbol\beta_1)> \lfloor k_1/2\rfloor,\atop d(\boldsymbol\beta_2)\leq \lfloor k_2/2\rfloor }\sigma_{\boldsymbol\beta_1,\boldsymbol\beta_2}^2 + \frac{1}{pq}\sum_{d(\boldsymbol\beta_1)>\lfloor k_1/2\rfloor,\atop d(\boldsymbol\beta_2)> \lfloor k_2/2\rfloor }\sigma_{\boldsymbol\beta_1,\boldsymbol\beta_2}^2
\\
&=\frac{1}{pq}\|\M\Sigma_2^{*,\mathcal{B}}(\lfloor k_2/2\rfloor)\otimes \M\Sigma_1^{*,\mathcal{B}}(\lfloor k_1/2\rfloor)-\M\Sigma^{*}\|^2_\F.
\ee
\par
Using similar argument in bounding $\frac{1}{pq}\|\M\Sigma_2^{*,\mathcal{B}}(k_2)\otimes \M\Sigma_1^{*,\mathcal{B}}(k_1)-\M\Sigma^{*}\|^2_\F$, one has
\bee\nonumber
\frac{1}{pq}\|\M\Sigma_2^{*,\mathcal{B}}(\lfloor k_2/2\rfloor)\otimes \M\Sigma_1^{*,\mathcal{B}}(\lfloor k_1/2\rfloor)-\M\Sigma^{*}\|^2_\F\precsim \M I(k_1< 2p - 2)\cdot k_1^{-2\alpha_1-1} +  \M I(k_2< 2q - 2)\cdot k_2^{-2\alpha_2-1}.
\ee Combining with \eqref{l:outer:2}, we finally have
\bee\label{T3:main:res1:pf}
\frac{1}{pq}\big\|\xi\big\{\M \Sigma_2^{*,\mathcal{T}}(k_2)\otimes\M \Sigma_1^{*,\mathcal{T}}(k_1)\big\} - \xi\big\{\M \Sigma^*\big\}\big\|_2^2 \precsim \M I(k_1< 2p - 2)\cdot k_1^{-2\alpha_1-1} +  \M I(k_2< 2q - 2)\cdot k_2^{-2\alpha_2-1}.
\ee
Combine it  with \eqref{lm:bias:convert}, we can obtain the second inequality in part (b).
\end{proof}
\begin{lemma}\label{l:frate}
Let $\vecc(\M X_1),\vecc(\M  X_2),\cdots,\vecc(\M X_n)$ be i.i.d random vectors in $\RR^{pq}$ with true covariance $\M \Sigma^* = \M \Sigma^*_2 \otimes \M \Sigma^*_1$. Assume $\E(|x^{(i)}_{l_1,l_2}\cdot x^{(i)}_{m_1,m_2}|^2) \leq M < +\infty$, where $M$ is a constant that does not depend on $n, l_1,m_1, l_2,m_2$.
\par
For $\eta\in\{\mathcal{B},\mathcal{T}\}$, we have
\begin{align}\label{l:frate:res1}
&\frac{1}{pq}\E{\|\tilde{\M \Sigma}_{\mathcal{\eta}}(k_1,k_2) - \M\Sigma^{*,\eta}_2(k_2)\otimes \M\Sigma^{*,\eta}_1(k_1)\|_{\F}^2}{} \precsim \frac{k_1k_2}{n
},
\end{align}
when $\M \Sigma_1^{*}\in \mathcal{F}(\varepsilon_0, \alpha_1),\M \Sigma_2^{*} \in \mathcal{F}(\varepsilon_0, \alpha_2)$ or $\M \Sigma_1^{*}\in \mathcal{M}(\varepsilon_0, \alpha_1),\M \Sigma_2^{*} \in \mathcal{M}(\varepsilon_0, \alpha_2)$.
\end{lemma}
\begin{proof}[Proof of Lemma \ref{l:frate}]
The notation used in this proof is mainly introduced in Section \ref{notation:main}.  Without loss of generality, we assume $\E \vecc(\M X_i) = \bds 0_{pq}$. We first prove the bound \eqref{l:frate:res1} for the proposed banded estimator. Since $\mathcal{M}(\varepsilon_0,\alpha)$ is a subset of $\mathcal{F}(\varepsilon_0,\alpha)$ for any $\alpha>0$, we only need to prove the result for the case that $\M \Sigma_1^{*}\in \mathcal{F}(\varepsilon_0, \alpha_1),\M \Sigma_2^{*} \in \mathcal{F}(\varepsilon_0, \alpha_2)$. Then the bound for $\M \Sigma_1^{*}\in \mathcal{M}(\varepsilon_0, \alpha_1),\M \Sigma_2^{*} \in \mathcal{M}(\varepsilon_0, \alpha_2)$ holds as a special case. 
\par
Recal that $x^{(i)}_{l_1,l_2}$ is the $l_1l_2$th entry of $\M X_i$, it is easy to see that,
\bee\label{T3:var}
\mathbb{E} \hat{\sigma}_{\boldsymbol\beta_1 ,\boldsymbol\beta_2}^{(0)}&=\sigma_{\boldsymbol\beta_1,\boldsymbol\beta_2},
\\
\text{Var}\big(\hat{\sigma}_{\boldsymbol\beta_1,\boldsymbol\beta_2}^{(0)}\big) &= \frac{1}{n}\text{Var}\big(x^{(i)}_{l_1,l_2}\cdot x^{(i)}_{m_1,m_2}\big)\leq\frac{1}{n} \mathbb{E}\big(x^{(i)}_{l_1,l_2}\cdot x^{(i)}_{m_1,m_2}\big)^2\leq \frac{M}{n},
\ee
So for any $\boldsymbol\beta_1,\boldsymbol\beta_2$,
\bee\label{l:frate:1}
\mathbb{E}\big(\tilde{\sigma}^{\mathcal{B}}_{\boldsymbol\beta_1,\boldsymbol\beta_2} - \sigma_{\boldsymbol\beta_1,\boldsymbol\beta_2}\big)^2 &= \mathbb{E}\big(w^{\mathcal{B}}_{\boldsymbol\beta_1,\boldsymbol\beta_2}\cdot \hat{\sigma}_{\boldsymbol\beta_1,\boldsymbol\beta_2}^{(0)} - \sigma_{\boldsymbol\beta_1,\boldsymbol\beta_2}\big)^2 
\\
& =  \mathbb{E}\big(w^{\mathcal{B}}_{\boldsymbol\beta_1,\boldsymbol\beta_2}\cdot \hat{\sigma}_{\boldsymbol\beta_1,\boldsymbol\beta_2}^{(0)} -w^{\mathcal{B}}_{\boldsymbol\beta_1,\boldsymbol\beta_2}\cdot\sigma_{\boldsymbol\beta_1,\boldsymbol\beta_2}+ w^{\mathcal{B}}_{\boldsymbol\beta_1,\boldsymbol\beta_2}\cdot\sigma_{\boldsymbol\beta_1,\boldsymbol\beta_2} - \sigma_{\boldsymbol\beta_1,\boldsymbol\beta_2}\big)^2 
\\
&= \mathbb{E}\big(w^{\mathcal{B}}_{\boldsymbol\beta_1,\boldsymbol\beta_2}\cdot \hat{\sigma}_{\boldsymbol\beta_1,\boldsymbol\beta_2}^{(0)} -w^{\mathcal{B}}_{\boldsymbol\beta_1,\boldsymbol\beta_2}\cdot\sigma_{\boldsymbol\beta_1,\boldsymbol\beta_2}\big)^2 + \big(w^{\mathcal{B}}_{\boldsymbol\beta_1,\boldsymbol\beta_2}\cdot\sigma_{\boldsymbol\beta_1,\boldsymbol\beta_2} - \sigma_{\boldsymbol\beta_1,\boldsymbol\beta_2}\big)^2
\\
&\leq {\big(w^{\mathcal{B}}_{\boldsymbol\beta_1,\boldsymbol\beta_2}\big)}^2 \cdot\frac{M}{n} + \big(w^{\mathcal{B}}_{\boldsymbol\beta_1,\boldsymbol\beta_2} - 1\big)^2\sigma_{\boldsymbol\beta_1,\boldsymbol\beta_2}^2.
\ee
\par
By triangle inequality, we have
\bee\label{T_3_1}
&\frac{1}{pq}\E\|{\tilde{\M \Sigma}_{\mathcal{B}}(k_1,k_2) - \M\Sigma^{*,\mathcal{B}}_2(k_2)\otimes \M\Sigma^{*,\mathcal{B}}_1(k_1)\|_{\F}^2}{} 
\\
&\leq \frac{1}{pq}\E\Big\{{\|\tilde{\M \Sigma}_{\mathcal{B}}(k_1,k_2)-\DBCovZ\|_\F}
+{\|\DBCovZ-\M\Sigma_2^{*,\mathcal{B}}(k_2)\otimes \M\Sigma_1^{*,\mathcal{B}}(k_1)\|_\F}\Big\}^2
\\
&\leq\frac{2}{pq}\E\|\tilde{\M \Sigma}_{\mathcal{B}}(k_1,k_2)-\DBCovZ\|_\F^2 +\frac{2}{pq}\E\|\DBCovZ-\M\Sigma_2^{*,\mathcal{B}}(k_2)\otimes \M\Sigma_1^{*,\mathcal{B}}(k_1)\|^2_\F.
\ee
Now we bound the two terms on the right-hand side, respectively. First, we  bound $\E\|\tilde{\M \Sigma}_{\mathcal{B}}(k_1,k_2)-\DBCovZ\|_\F^2/pq$. Recall the sample  covariance estimator is
\bee\nonumber
\hat{\M \Sigma} &= \frac{1}{n}\sum_{k = 1}^{n}\{\vecc(\M X_k) - \vecc(\bar{\M X})\}\{\vecc(\M X_k) - \vecc(\bar{\M X})\}^{\T} 
\\
&= \hat{\M\Sigma}_0 - \vecc(\bar{\M X})\vecc(\bar{\M X})^{\T}.
\ee
By definition, we have $\hat{\M \Sigma}_0 - \hat{\M\Sigma} = \vecc(\bar{\M X})\vecc(\bar{\M X})^{\T}$ and,
\bee\label{lm:frate:core1}
\DBCov - \DBCovZ &= \hat{\M \Sigma} \HDBand - \hat{\M \Sigma}_0\HDBand 
\\
&=\{\hat{\M \Sigma} - \hat{\M \Sigma}_0\}\HDBand
\\
&=\vecc(\bar{\M X})\vecc(\bar{\M X})^{\T}\HDBand,
\ee
which implies
\bee\label{l:frate:b1}
\E\|\tilde{\M \Sigma}_{\mathcal{B}}(k_1,k_2)-\DBCovZ\|_\F^2  &= \E\|\vecc(\bar{\M X})\vecc(\bar{\M X})^{\T}\HDBand\|^2_\F 
\\
&\leq pq(2k_1+1)(2k_2+1) \times \max_{1\leq l_1,m_1\leq p \atop 1\leq l_2,m_2\leq q}\E\Big(\big|\bar{x}_{l_1,m_1}\big|^2\big|\bar{ x}_{l_2,m_2}\big|^2\Big) 
\\
&\asymp n^{-4}pqk_1k_2\max_{1\leq l_1,m_1\leq p \atop 1\leq l_2,m_2\leq q}\E\Big\{\Big(\sum_{i = 1}^n x^{(i)}_{l_1,m_1}\Big)^2\Big(\sum_{i = 1}^n x^{(i)}_{l_2,m_2}\Big)^2\Big\}
\\
&=n^{-4}pqk_1k_2\max_{1\leq l_1,m_1\leq p \atop 1\leq l_2,m_2\leq q}\E\Big\{\sum_{1\leq i_1,i_2,i_3,i_4\leq n}x^{(i_1)}_{l_1,m_1}x^{(i_2)}_{l_1,m_1}x^{(i_3)}_{l_2,m_2}x^{(i_4)}_{l_2,m_2}\Big\}.
\ee
Since $\{\M X_i\}_{i = 1}^n$ are i.i.d. and $\E \M X_i = \M{0}$, for any $l_1,l_2,m_1,m_2$, 
\bee\nonumber
&\E\Big\{\sum_{1\leq i_1,i_2,i_3,i_4\leq n}x^{(i_1)}_{l_1,m_1}x^{(i_2)}_{l_1,m_1}x^{(i_3)}_{l_2,m_2}x^{(i_4)}_{l_2,m_2}\Big\} 
\\
&= \sum_{1\leq i_1\neq i_3\leq n\atop i_1 = i_2, i_3 = i_4}\E|x^{(i_1)}_{l_1,m_1}|^2\E| x^{(i_3)}_{l_2,m_2}|^2 + \sum_{1\leq i_1\leq n\atop i_1 = i_2 = i_3 = i_4}\E|x^{(i_1)}_{l_1,m_1} x^{(i_1)}_{l_2,m_2}|^2
+\sum_{1\leq i_1\neq i_2\leq n\atop i_1 = i_3,i_2 = i_4}\E x^{(i_1)}_{l_1,m_1}x^{(i_1)}_{l_2,m_2} \times\E x^{(i_2)}_{l_1,m_1}x^{(i_2)}_{l_2,m_2}.
\ee
Then we can apply Cauchy-Schwarz inequality and show
\bee\nonumber
&\max_{1\leq l_1,m_1\leq p \atop 1\leq l_2,m_2\leq q}\E\Big\{\sum_{1\leq i_1,i_2,i_3,i_4\leq n}x^{(i_1)}_{l_1,m_1}x^{(i_2)}_{l_1,m_1}x^{(i_3)}_{l_2,m_2}x^{(i_4)}_{l_2,m_2}\Big\}
\\
&\leq\max_{1\leq l_1,m_1\leq p \atop 1\leq l_2,m_2\leq q} \sum_{1\leq i_1\neq i_3\leq n\atop i_1 = i_2, i_3 = i_4}\E|x^{(i_1)}_{l_1,m_1}|^2\E| x^{(i_3)}_{l_2,m_2}|^2 + \max_{1\leq l_1,m_1\leq p \atop 1\leq l_2,m_2\leq q}\sum_{1\leq i_1\leq n\atop i_1 = i_2 = i_3 = i_4}\E|x^{(i_1)}_{l_1,m_1} x^{(i_1)}_{l_2,m_2}|^2
\\
&+\max_{1\leq l_1,m_1\leq p \atop 1\leq l_2,m_2\leq q}\sum_{1\leq i_1\neq i_2\leq n\atop i_1 = i_3,i_2 = i_4}\E x^{(i_1)}_{l_1,m_1}x^{(i_1)}_{l_2,m_2} \times\E x^{(i_2)}_{l_1,m_1}x^{(i_2)}_{l_2,m_2}
\\
&\leq n^2\times \underbrace{\max_{l_1,m_1,l_2,m_2}\Big\{\E|x^{(i_1)}_{l_1,m_1}|^4\Big\}^{1/2}}_{\leq \sqrt{M}}\times \underbrace{\max_{l_1,m_1,l_2,m_2}\Big\{\E|x^{(i_3)}_{l_1,m_1}|^4\Big\}^{1/2}}_{\leq \sqrt{M}}
+ n\times\underbrace{\max_{l_1,l_2,m_1,m_2}{\E|x^{(i_1)}_{l_1,m_1} x^{(i_1)}_{l_2,m_2}}|^2}_{\leq M}
\\
&+ n^2 \underbrace{\Big\{\max_{l_1,l_2,m_1,m_2}\E |x^{(i_1)}_{l_1,m_1}x^{(i_1)}_{l_2,m_2}|^2\Big\}^{1/2}}_{\leq \sqrt{M}} \times\underbrace{\Big\{\max_{l_1,l_2,m_1,m_2}\E |x^{(i_2)}_{l_1,m_1}x^{(i_2)}_{l_2,m_2}|^2\Big\}^{1/2}}_{\leq \sqrt{M}}
\\
&\precsim n^2,
\ee
where the last inequality holds by the finite fourth moment condition of $\E(|x^{(i)}_{l_1,l_2}\cdot x^{(i)}_{m_1,m_2}|^2)$. Finally we obtain
\bee\label{l:fbound:2}
\frac{1}{pq}\E\|\tilde{\M \Sigma}_{\mathcal{B}}(k_1,k_2)-\DBCovZ\|_\F^2 \precsim \frac{k_1k_2}{n^2}.
\ee
Next, we  bound $\frac{1}{pq}\E\|\DBCovZ-\M\Sigma_2^{*,\mathcal{B}}(k_2)\otimes \M\Sigma_1^{*,\mathcal{B}}(k_1)\|^2_\F$. Define $d(\boldsymbol\beta_1) = |l_1-m_1|,d(\boldsymbol\beta_2) = |l_2 - m_2|$, by (\ref{T3:var}), we have
\bee\label{l:frate:5}
&\frac{1}{pq}\mathbb{E}\|\tilde{\M{\Sigma}}_{0,\MB}(k_1,k_2) - \M\Sigma_2^{*,\mathcal{B}}(k_2)\otimes \M\Sigma_1^{*,\mathcal{B}}(k_1)\|_\F^2
\\
&= \frac{1}{pq}\sum_{d(\boldsymbol\beta_1)\leq k_1,\atop d(\boldsymbol\beta_2)\leq k_2}\E(\tilde{\sigma}^{\mathcal{B}}_{\boldsymbol\beta_1,\boldsymbol\beta_2} - \sigma_{\boldsymbol\beta_1,\boldsymbol\beta_2})^2 
\\
&\leq \frac{1}{pq}\sum_{d(\boldsymbol\beta_1)\leq k_1,\atop d(\boldsymbol\beta_2)\leq k_2}\big(w_{\boldsymbol\beta_1,\boldsymbol\beta_2}^{\mathcal{B}}\big)^2\cdot\frac{M}{n} +  \frac{1}{pq}\sum_{d(\boldsymbol\beta_1)\leq k_1,\atop d(\boldsymbol\beta_2)\leq k_2}\big(\underbrace{w^{\mathcal{B}}_{\boldsymbol\beta_1,\boldsymbol\beta_2} - 1}_{:=0 \text{ by definition}}\big)^2\sigma_{\boldsymbol\beta_1,\boldsymbol\beta_2}^2
\\
&= \frac{1}{pq}\sum_{d(\boldsymbol\beta_1)\leq k_1,\atop d(\boldsymbol\beta_2) \leq k_2}\frac{M}{n}  \\
&\precsim \frac{k_1k_2}{n},
\ee
which implies 
\bee\label{l:fbound:3}
\frac{1}{pq}\mathbb{E}\|\tilde{\M{\Sigma}}_{0,\MB}(k_1,k_2) - \M\Sigma_2^{*,\mathcal{B}}(k_2)\otimes \M\Sigma_1^{*,\mathcal{B}}(k_1)\|_\F^2\precsim \frac{k_1k_2}{n}.
\ee
Combining \eqref{T_3_1}, \eqref{l:fbound:2} and \eqref{l:fbound:3}, one has
\bee\label{l:f:final:1}
\frac{1}{pq}\E\|{\tilde{\M \Sigma}_{\mathcal{B}}(k_1,k_2) - \M\Sigma^{*,\mathcal{B}}_2(k_2)\otimes \M\Sigma^{*,\mathcal{B}}_1(k_1)\|_{\F}^2}{} \precsim \frac{k_1k_2}{n}.
\ee
\\
\par
Next we derive bound \eqref{l:frate:res1} for the proposed tapering estimator. Same argument as \eqref{l:frate:1} and \eqref{T_3_1} yields
\bee\label{l:frate:t1}
\text{(i). }&\mathbb{E}\big(\tilde{\sigma}^{\mathcal{T}}_{\boldsymbol\beta_1,\boldsymbol\beta_2} - w^{\mathcal{T}}_{\boldsymbol\beta_1,\boldsymbol\beta_2}\cdot\sigma_{\boldsymbol\beta_1,\boldsymbol\beta_2}\big)^2 \precsim {\big(w^{\mathcal{T}}_{\boldsymbol\beta_1,\boldsymbol\beta_2}\big)}^2 \cdot\frac{M}{n};
\\
\text{(ii). }&\frac{1}{pq}\E\|{\tilde{\M \Sigma}_{\mathcal{T}}(k_1,k_2) - \M\Sigma^{*,\mathcal{T}}_2(k_2)\otimes \M\Sigma^{*,\mathcal{T}}_1(k_1)\|_{\F}^2} 
\\
&\precsim\frac{1}{pq}\E\|\tilde{\M \Sigma}_{\mathcal{T}}(k_1,k_2)-\DTCovZ\|_\F^2 
+\frac{1}{pq}\E\|\DTCovZ-\M\Sigma_2^{*,\mathcal{T}}(k_2)\otimes \M\Sigma_1^{*,\mathcal{T}}(k_1)\|^2_\F.
\ee 
Similar to \eqref{l:frate:b1}, by definitions of $T_k(\cdot)$ and $B_k(\cdot)$,
\bee\label{l:frate:t2}
&\frac{1}{pq}\E\|\tilde{\M \Sigma}_{\mathcal{T}}(k_1,k_2)-\DTCovZ\|_\F^2 
\\
& =\frac{1}{pq}\E\|\vecc(\bar{\M X})\vecc(\bar{\M X})^{\T}\HDTaper\|^2_\F
\\
&\leq \frac{1}{pq}\E\|\vecc(\bar{\M X})\vecc(\bar{\M X})^{\T}\circ\{B_{k_2}(\M 1_q)\otimes B_{k_1}(\M 1_p)\}\|^2_\F
\\
&\precsim \frac{k_1k_2}{n^2},
\ee
where the last inequality holds by \eqref{l:fbound:2}. For $\E\|\DTCovZ-\M\Sigma_2^{*,\mathcal{T}}(k_2)\otimes \M\Sigma_1^{*,\mathcal{T}}(k_1)\|^2_\F/pq$, similar to \eqref{l:frate:5}, we have
\bee\label{l:frate:t3}
&\frac{1}{pq}\mathbb{E}\|\tilde{\M{\Sigma}}_{0,\mathcal{T}}(k_1,k_2) - \M\Sigma_2^{*,\mathcal{T}}(k_2)\otimes \M\Sigma_1^{*,\mathcal{T}}(k_1)\|_\F^2
\\
&= \frac{1}{pq}\sum_{d(\boldsymbol\beta_1)\leq k_1,\atop d(\boldsymbol\beta_2)\leq k_2}\E(\tilde{\sigma}^{\mathcal{T}}_{\boldsymbol\beta_1,\boldsymbol\beta_2} - w^{\mathcal{T}}_{\boldsymbol\beta_1,\boldsymbol\beta_2}\cdot\sigma_{\boldsymbol\beta_1,\boldsymbol\beta_2})^2 
\\&\precsim \frac{k_1k_2}{n} .
\ee
Combining \eqref{l:frate:t1}, \eqref{l:frate:t2} and \eqref{l:frate:t3}, we finally derive the bound \eqref{l:frate:res1} by
\bee\nonumber
&\frac{1}{pq}\E{\|\tilde{\M \Sigma}_{\mathcal{\mathcal{T}}}(k_1,k_2) - \M\Sigma^{*,\mathcal{T}}_2(k_2)\otimes \M\Sigma^{*,\mathcal{T}}_1(k_1)\|_{\F}^2}{}\precsim\frac{k_1k_2}{n
}.
\ee
\end{proof}

\begin{lemma}\label{lemma:srate}
Let $\vecc(\M X_1),\vecc(\M  X_2),\cdots,\vecc(\M X_n)$ be i.i.d sub-Gaussian random vectors in $\RR^{pq}$ with true covariance $\M \Sigma^* = \M \Sigma^*_2 \otimes \M \Sigma^*_1$, where $\M \Sigma_1^{*}\in \mathcal{F}(\varepsilon_0, \alpha_1),\M \Sigma_2^{*} \in \mathcal{F}(\varepsilon_0, \alpha_2)$ or $\M \Sigma_1^{*}\in \mathcal{M}(\varepsilon_0, \alpha_1),\M \Sigma_2^{*} \in \mathcal{M}(\varepsilon_0, \alpha_2)$. For $\eta\in\{\mathcal{B},\mathcal{T}\}$, we have
\bee\label{lemma:srate:res}
\frac{1}{pq}\E{\big\|\xi\big\{\tilde{\M \Sigma}_{\mathcal{\eta}}(k_1,k_2)\big\} - \xi\big\{\M\Sigma^{*,\eta}_2(k_2)\otimes \M\Sigma^{*,\eta}_1(k_1)\big\}\big\|_{2}^2}\precsim \begin{cases}
\frac{k_1}{qn} + \frac{k_2}{pn} & {\rm ~if~} pk_1 + qk_2 \precsim n
\\
\frac{pk^2_1}{qn^2} + \frac{qk^2_2}{pn^2} & {\rm ~if~} pk_1 + qk_2 \succ n. 
\end{cases}
\ee 
\end{lemma}
\begin{proof}[Proof of Lemma \ref{lemma:srate}] The notation used in this proof is mainly introduced in Section \ref{notation:main}. The proof borrows ideas from Gaussian chaos concentration   (see e.g., \citet{wagaman2009discovering,tsiligkaridis2013covariance,zhou2014gemini}). Similar to the Proof of Lemma \ref{l:frate}, we only consider the case of 
$\M \Sigma_1^{*}\in \mathcal{F}(\varepsilon_0, \alpha_1),\M \Sigma_2^{*} \in \mathcal{F}(\varepsilon_0, \alpha_2)$, and the bound for $\M \Sigma_1^{*}\in \mathcal{M}(\varepsilon_0, \alpha_1),\M \Sigma_2^{*} \in \mathcal{M}(\varepsilon_0, \alpha_2)$ holds as a special case. 
\par
By \eqref{lm:frate:core1}, we have
$
\DBCov - \DBCovZ =\vecc(\bar{\M X})\vecc(\bar{\M X})^{\T}\HDBand 
$;
$\DTCov - \DTCovZ =\vecc(\bar{\M X})\vecc(\bar{\M X})^{\T}\HDTaper$. Without loss of generality, we assume $\E \vecc(\M X_i) = \bds 0_{pq}$.
\par
We first prove Lemma \ref{lemma:srate} when $\eta = \mathcal{B}$ for the proposed banded estimator. By Lemma \ref{lemma:xi} and \eqref{l:srate:1}, we have
\bee\label{lm:srate:bstart}
&\frac{1}{pq}\E{\big\|\xi\big\{\tilde{\M \Sigma}_{\mathcal{B}}(k_1,k_2)\big\} - \xi\{\M\Sigma^{*,\mathcal{B}}_2(k_2)\otimes \M\Sigma^{*,\mathcal{B}}_1(k_1)\}\big\|_{2}^2}
\\
&\leq\frac{1}{pq}\E\Big[{\|\xi\big\{\tilde{\M \Sigma}_{0,\mathcal{B}}(k_1,k_2)\big\} - \xi\{\M\Sigma^{*,\mathcal{B}}_2(k_2)\otimes \M\Sigma^{*,\mathcal{B}}_1(k_1)\}\|_{2} + \|\xi\big\{\tilde{\M \Sigma}_{\mathcal{B}}(k_1,k_2)\big\} - \xi\big\{\tilde{\M \Sigma}_{0,\mathcal{B}}(k_1,k_2)\big\}\|_{2}\Big]^2}
\\
&\precsim \frac{1}{pq}\E\big\|\xi\big\{\tilde{\M \Sigma}_{0,\mathcal{B}}(k_1,k_2)-\M\Sigma^{*,\mathcal{B}}_2(k_2)\otimes \M\Sigma^{*,\mathcal{B}}_1(k_1)\big\}\big\|_{2}^2
\\
&+ \frac{1}{pq}\E\big\|\xi\big[\vecc(\bar{\M X})\vecc(\bar{\M X})^{\T}\HDBand \big]\big\|_{2} ^ 2.
\ee
We will then bound the above two terms on the right-hand side of \eqref{lm:srate:bstart} by the following steps. In Steps 1.1--1.4, we bound the first term $\frac{1}{pq}\E\big\|\xi\big\{\tilde{\M \Sigma}_{0,\mathcal{B}}(k_1,k_2)-\M\Sigma^{*,\mathcal{B}}_2(k_2)\otimes \M\Sigma^{*,\mathcal{B}}_1(k_1)\big\}\big\|_{2}^2$ via an $\epsilon$-net argument that can simultaneously address the effect of the bandable structure of our proposed estimators. In Step 2, we show the bound of the second term $\frac{1}{pq}\E\big\|\xi\big[\vecc(\bar{\M X})\vecc(\bar{\M X})^{\T}\HDBand \big]\big\|_{2} ^ 2$ is in the same asymptotic order with the bound of $\frac{1}{pq}\E\big\|\xi\big\{\tilde{\M \Sigma}_{0,\mathcal{B}}(k_1,k_2)-\M\Sigma^{*,\mathcal{B}}_2(k_2)\otimes \M\Sigma^{*,\mathcal{B}}_1(k_1)\big\}\big\|_{2}^2$. In Step 3, we combine the bounds of two terms together and finally show \eqref{lemma:srate:res} for $\eta = \mathcal{B}$. 
\\
\par
{\noindent\textbf{Step 1.1:}}  For simplicity, denote $\Delta_n^{\mathcal{B}} \equiv \xi\big\{\tilde{\M \Sigma}_{0,\mathcal{B}}(k_1,k_2)-\M\Sigma^{*,\mathcal{B}}_2(k_2)\otimes \M\Sigma^{*,\mathcal{B}}_1(k_1)\big\}, \hat{\M \Sigma}_{i,0} \equiv \vecc(\M X_i)\vecc(\M X_i)^\T$. Note that $\M\Sigma^{*,\mathcal{B}}_2(k_2)\otimes \M\Sigma^{*,\mathcal{B}}_1(k_1) = \E\tilde{\M \Sigma}_{0,\mathcal{B}}(k_1,k_2)$, $\tilde{\M \Sigma}_{0,\mathcal{B}}(k_1,k_2) = \Big\{\frac{1}{n}\sum_{i = 1}^{n}\hat{\M \Sigma}_{i,0}\Big\} \HDBand = \frac{1}{n}\sum_{i = 1}^n\hat{\M \Sigma}_{i,0}\HDBand$, we have
\bee\nonumber
\Delta_n^{\mathcal{B}} &= \frac{1}{n}\sum_{i = 1}^n\xi\Big[\hat{\M \Sigma}_{i,0}\HDBand - \E\big[\hat{\M \Sigma}_{i,0}\HDBand\big]\Big]
\\
&=\frac{1}{n}\sum_{i = 1}^n\underbrace{\xi\Big[\big\{\hat{\M \Sigma}_{i,0}-\E\hat{\M \Sigma}_{i,0}\big\}\HDBand\Big]}_{I_i^{\mathcal{B}}}.
\ee
Now we first study the term $I_i^{\mathcal{B}}$. The $\hat{\M \Sigma}_{i,0}\in\mathbb{R}^{pq\times pq}$ can be written as
\bee\nonumber
\hat{\M \Sigma}_{i,0} = \begin{pmatrix}
\hat{\M \Sigma}_{i,0}^{(1,1)} & \dots & \hat{\M \Sigma}_{i,0}^{(1,q)}
\\
\vdots & \ddots & \vdots
\\
\hat{\M \Sigma}_{i,0}^{(q,1)} & \dots & \hat{\M \Sigma}_{i,0}^{(q,q)},
\end{pmatrix}
\ee
where $\hat{\M \Sigma}_{i,0}^{(l_2,m_2)}$ is the $l_2m_2$th $p\times p$ sub-block matrix of $\hat{\M \Sigma}_{i,0}$ for $1\leq l_2,m_2 \leq q$. Similar to the proof of Lemma \ref{l:frate}, let $w_{l_2,m_2}^{q,k,\mathcal{B}}$ be the $l_2m_2$th element of the matrix $B_{k}(\textbf{1}_{q})$. We can write the $q^2 \times p^2$ matrix $I_i^{\mathcal{B}}$ as 
\bee\nonumber
I_i^{\mathcal{B}} &= \xi\Big[\big\{\hat{\M \Sigma}_{i,0}-\E\hat{\M \Sigma}_{i,0}\big\}\HDBand\Big]
\\
&=\xi\left(\rule{0cm}{1.5cm}\begin{pmatrix}
\hat{\M \Sigma}_{i,0}^{(1,1)} - \E \hat{\M \Sigma}_{i,0}^{(1,1)} & \dots & \hat{\M \Sigma}_{i,0}^{(1,q)} - \E\hat{\M \Sigma}_{i,0}^{(1,q)}
\\
\vdots & \ddots & \vdots
\\
\hat{\M \Sigma}_{i,0}^{(q,1)} - \E \hat{\M \Sigma}_{i,0}^{(q,1)}& \dots & \hat{\M \Sigma}_{i,0}^{(q,q)} - \E \hat{\M \Sigma}_{i,0}^{(q,q)},
\end{pmatrix}\circ \begin{pmatrix}
B_{k_1}(\M 1_p)\cdot w_{1,1}^{q,k_2,\mathcal{B}} & \dots &  B_{k_1}(\M 1_p) \cdot w_{1,q}^{q,k_2,\mathcal{B}}
\\
\vdots & \ddots & \vdots
\\
 B_{k_1}(\M 1_p)\cdot w_{q,1}^{q,k_2,\mathcal{B}} & \dots &  B_{k_1}(\M 1_p)\cdot w_{q,q}^{q,k_2,\mathcal{B}},
\end{pmatrix}
\right)\rule{0cm}{1.5cm}
\\
&=\xi \left(\rule{0cm}{1.5cm}\begin{pmatrix}
w_{1,1}^{q,k_2,\mathcal{B}}\cdot(\hat{\M \Sigma}_{i,0}^{(1,1)} - \E\hat{\M \Sigma}_{i,0}^{(1,1)})\circ B_{k_1}(\M 1_p) & \dots & w_{1,q}^{q,k_2,\mathcal{B}}\cdot(\hat{\M \Sigma}_{i,0}^{(1,q)} - \E \hat{\M \Sigma}_{i,0}^{(1,q)})\circ B_{k_1}(\M 1_p)
\\
\vdots & \ddots & \vdots
\\
 w_{q,1}^{q,k_2,\mathcal{B}}\cdot(\hat{\M \Sigma}_{i,0}^{(q,1)} - \E \hat{\M \Sigma}_{i,0}^{(q,1)})\circ B_{k_1}(\M 1_p) & \dots &  w_{q,q}^{q,k_2,\mathcal{B}}\cdot(\hat{\M \Sigma}_{i,0}^{(q,q)} - \E \hat{\M \Sigma}_{i,0}^{(q,q)})\circ B_{k_1}(\M 1_p)
\end{pmatrix}
\right)\rule{0cm}{1.5cm}
\\
&=\begin{pmatrix}
\vecc^\T\big[\big\{\hat{\M \Sigma}_{i,0}^{(1,1)} - \E \hat{\M \Sigma}_{i,0}^{(1,1)}\big\}\circ B_{k_1}(\bds 1_p)\big]\cdot w_{1,1}^{q,k_2,\mathcal{B}}
\\
\vdots
\\
\vecc^\T\big[\big\{\hat{\M \Sigma}_{i,0}^{(q,1)} - \E \hat{\M \Sigma}_{i,0}^{(q,1)}\big\}\circ B_{k_1}(\bds 1_p)\big]\cdot w_{q,1}^{q,k_2,\mathcal{B}}
\\
\vdots
\\
\vecc^\T\big[\big\{\hat{\M \Sigma}_{i,0}^{(q,q)} - \E \hat{\M \Sigma}_{i,0}^{(q,q)}\big\}\circ B_{k_1}(\bds 1_p)\big]\cdot w_{q,q}^{q,k_2,\mathcal{B}}
\end{pmatrix},
\ee
where the last equality is by the definition of $\xi(\cdot)$ in \eqref{xi:fun}. Then let $\bds v, \bds u$ be any vectors such that $\bds v\in \mathcal{U}_{q^2}, \bds u\in \mathcal{U}_{p^2}$, by the property of spectral norm we have 
\bee\label{l:srate:vdu:0}
&\big\|\xi\big\{\tilde{\M \Sigma}_{0,\mathcal{B}}(k_1,k_2)-\M\Sigma^{*,\mathcal{B}}_2(k_2)\otimes \M\Sigma^{*,\mathcal{B}}_1(k_1)\big\}\big\|_{2} 
\\
&= \sup_{\bds v\in \mathcal{U}_{q^2}, \bds u\in \mathcal{U}_{p^2}}\Big|\bds v^\T\xi\big\{\tilde{\M \Sigma}_{0,\mathcal{B}}(k_1,k_2)-\M\Sigma^{*,\mathcal{B}}_2(k_2)\otimes \M\Sigma^{*,\mathcal{B}}_1(k_1)\big\}\bds u\Big|
\\
&=\sup_{\bds v\in \mathcal{U}_{q^2}, \bds u\in \mathcal{U}_{p^2}}|\bds v^\T\Delta_n^{\mathcal{B}}\bds u|, 
\ee
where
\bee\label{l:srate:vdu}
\bds v^\T\Delta_n^{\mathcal{B}}\bds u &= \frac{1}{n}\sum_{i = 1}^n\bds v^\T I^{\mathcal{B}}_i \bds u 
\\
&= \frac{1}{n}\sum_{i = 1}^n \bds v^\T \begin{pmatrix}
\vecc^\T\big[\big\{\hat{\M \Sigma}_{i,0}^{(1,1)} - \E \hat{\M \Sigma}_{i,0}^{(1,1)}\big\}\circ B_{k_1}(\bds 1_p)\big]\cdot w_{1,1}^{q,k_2,\mathcal{B}}
\\
\vdots
\\
\vecc^\T\big[\big\{\hat{\M \Sigma}_{i,0}^{(q,1)} - \E \hat{\M \Sigma}_{i,0}^{(q,1)}\big\}\circ B_{k_1}(\bds 1_p)\big]\cdot w_{q,1}^{q,k_2,\mathcal{B}}
\\
\vdots
\\
\vecc^\T\big[\big\{\hat{\M \Sigma}_{i,0}^{(q,q)} - \E \hat{\M \Sigma}_{i,0}^{(q,q)}\big\}\circ B_{k_1}(\bds 1_p)\big]\cdot w_{q,q}^{q,k_2,\mathcal{B}}
\end{pmatrix}\bds u
\\
&= \frac{1}{n}\sum_{i = 1}^n \bds v^\T \begin{pmatrix}
w_{1,1}^{q,k_2,\mathcal{B}}\cdot\Big[\vecc^\T\big\{\hat{\M \Sigma}_{i,0}^{(1,1)} - \E \hat{\M \Sigma}_{i,0}^{(1,1)}\big\}\circ \vecc^\T\big\{B_{k_1}(\bds 1_p)\big\}\Big]\cdot\bds u
\\
\vdots
\\
w_{q,q}^{q,k_2,\mathcal{B}}\cdot\Big[\vecc^\T\big\{\hat{\M \Sigma}_{i,0}^{(q,q)} - \E \hat{\M \Sigma}_{i,0}^{(q,q)}\big\}\circ\vecc^\T\big\{ B_{k_1}(\bds 1_p)\big\}\Big]\cdot\bds u
\end{pmatrix}
\\
& = \frac{1}{n}\sum_{i = 1}^n \bds v^\T \begin{pmatrix}
w_{q,q}^{q,k_2,\mathcal{B}}\cdot\vecc^\T\big\{\hat{\M \Sigma}_{i,0}^{(1,1)} - \E \hat{\M \Sigma}_{i,0}^{(1,1)}\big\}\cdot \Big[\bds u\circ \vecc\big\{B_{k_1}(\bds 1_p)\big\}\Big]
\\
\vdots
\\
w_{q,q}^{q,k_2,\mathcal{B}}\cdot\vecc^\T\big\{\hat{\M \Sigma}_{i,0}^{(q,q)} - \E \hat{\M \Sigma}_{i,0}^{(q,q)}\big\}\cdot \Big[\bds u\circ\vecc\big\{ B_{k_1}(\bds 1_p)\big\}\Big]
\end{pmatrix}
\\
& = \frac{1}{n}\sum_{i = 1}^n \bds v^\T \begin{pmatrix}
w_{1,1}^{q,k_2,\mathcal{B}}\cdot\vecc^\T\big\{\hat{\M \Sigma}_{i,0}^{(1,1)} - \E \hat{\M \Sigma}_{i,0}^{(1,1)}\big\}
\\
\vdots
\\
w_{q,q}^{q,k_2,\mathcal{B}}\cdot\vecc^\T\big\{\hat{\M \Sigma}_{i,0}^{(q,q)} - \E \hat{\M \Sigma}_{i,0}^{(q,q)}\big\}
\end{pmatrix}\times\Big[\bds u\circ\vecc\big\{ B_{k_1}(\bds 1_p)\big\}\Big].
\ee
The fourth equality above holds by the fact that for any vectors $\bds a_1,\bds a_2, \bds a_3 \in \RR^d$, $(\bds a_1 \circ \bds a_2)^\T\cdot \bds a_3 = \sum_{i = 1}^d a_i^{(1)}a_i^{(2)}a_i^{(3)} = \bds a_1^\T\cdot(\bds a_3 \circ \bds a_2)$, where $a_i^{(u)}$ is the $i$th coordinate of $\bds a_u$. Similar argument can also be applied on $\bds v^\T$ side in \eqref{l:srate:vdu} and one has
\bee\label{l:srate:vdu2}
\bds v^\T\Delta_n^{\mathcal{B}}\bds u &= \frac{1}{n}{\mathlarger{\mathlarger{\sum}}}_{i = 1}^n\Big[\bds v\circ\vecc\big\{ B_{k_2}(\bds 1_q)\big\}\Big]^\T\begin{pmatrix}
\vecc^\T\big\{\hat{\M \Sigma}_{i,0}^{(1,1)} - \E \hat{\M \Sigma}_{i,0}^{(1,1)}\big\}
\\
\vdots
\\
\vecc^\T\big\{\hat{\M \Sigma}_{i,0}^{(q,q)} - \E \hat{\M \Sigma}_{i,0}^{(q,q)}\big\}
\end{pmatrix}\Big[\bds u\circ\vecc\big\{ B_{k_1}(\bds 1_p)\big\}\Big]
\\
&=\Big[\bds v\circ\vecc\big\{ B_{k_2}(\bds 1_q)\big\}\Big]^\T\times\Big[\frac{1}{n}\sum_{i = 1}^n\xi\big\{\hat{\M \Sigma}_{i,0} - \E\hat{\M \Sigma}_{i,0}\big\}\Big]\times\Big[\bds u\circ\vecc\big\{ B_{k_1}(\bds 1_p)\big\}\Big]
\\
&=\big[\bds v\circ\vecc\big\{ B_{k_2}(\bds 1_q)\big\}\big]^\T\times \xi\big\{\hat{\M \Sigma}_0 - \E\hat{\M \Sigma}_0\big\}\times\big[\bds u\circ\vecc\big\{ B_{k_1}(\bds 1_p)\big\}\big].
\ee
Now we define the banded unit spheres from $\mathcal{U}_{q^2}, \mathcal{U}_{p^2}$ such that 
\bee\label{def:Ub}
&\mathcal{U}_{p^2}^{\mathcal{B}}(k_1)=\Big\{\bds u \mid \|\bds u\| = 1; \bds u \circ \vecc\big\{ B_{k_1}(\bds 1_p)\big\} = \bds u, \bds u\in \RR^{p^2}\Big\};
\\
&\mathcal{U}_{q^2}^{\mathcal{B}}(k_2)=\Big\{\bds v \mid \|\bds v\| = 1; \bds v \circ \vecc\big\{ B_{k_2}(\bds 1_q)\big\} = \bds v, \bds v\in \RR^{q^2}\Big\}.
\ee
It is easy to check $\mathcal{U}_{p^2}^{\mathcal{B}}(k_1) \subseteq \mathcal{U}_{p^2}, \mathcal{U}_{q^2}^{\mathcal{B}}(k_2) \subseteq \mathcal{U}_{q^2}$. In other words, $\mathcal{U}_{p^2}^{\mathcal{B}}(k_1)$ contains the unit vectors in $\RR^{p^2}$, whose coordinates are non-zero only if the corresponding coordinates in $\vecc\big\{ B_{k_1}(\bds 1_p)\big\}$ are $1$. A symmetric result also holds for $\mathcal{U}_{q^2}^{\mathcal{B}}(k_2)$.
\par
Thus for any $\bds u \in \mathcal{U}_{p^2}, \bds v\in \mathcal{U}_{q^2}$, we can define $\tilde{\bds u} = \bds u\circ \vecc\{B_{k_1}(\bds 1_p)\}/\|\bds u\circ \vecc\{B_{k_1}(\bds 1_p)\}\|$ and $\tilde{\bds v} = \bds v\circ \vecc\{B_{k_2}(\bds 1_q)\}/\|\bds v\circ \vecc\{B_{k_2}(\bds 1_q)\}\|$ and it is easy to check that $\tilde{\bds u}\in \mathcal{U}_{p^2}^{\mathcal{B}}(k_1),\tilde{\bds v} \in \mathcal{U}_{q^2}^{\mathcal{B}}(k_2)$. By \eqref{l:srate:vdu2}, we have
\bee\label{l:srate:vdu:3}
|\bds v^\T&\Delta_n^{\mathcal{B}}\bds u| = \Big|\big[\bds v\circ\vecc\big\{ B_{k_2}(\bds 1_q)\big\}\big]^\T\times \xi\big\{\hat{\M \Sigma}_0 - \E\hat{\M \Sigma}_0\big\}\times\big[\bds u\circ\vecc\big\{ B_{k_1}(\bds 1_p)\big\}\big]\Big|
\ee
and 
\bee\label{l:srate:vdu:4}
&|\tilde{\bds v}^\T\Delta_n^{\mathcal{B}}\tilde{\bds u}| 
\\
&= \Big|\big[\tilde{\bds v}\circ\vecc\big\{ B_{k_2}(\bds 1_q)\big\}\big]^\T\times \xi\big\{\hat{\M \Sigma}_0 - \E\hat{\M \Sigma}_0\big\}\times\big[\tilde{\bds u}\circ\vecc\big\{ B_{k_1}(\bds 1_p)\big\}\big]\Big|
\\
&=\frac{1}{\|\bds u\circ \vecc\{B_{k_1}(\bds 1_p)\}\|\|\bds v\circ \vecc\{B_{k_2}(\bds 1_q)\}\|} \Big|\big[\bds v\circ\vecc\big\{ B_{k_2}(\bds 1_q)\big\}\big]^\T\times \xi\big\{\hat{\M \Sigma}_0 - \E\hat{\M \Sigma}_0\big\}\times\big[\bds u\circ\vecc\big\{ B_{k_1}(\bds 1_p)\big\}\big]\Big|
\\
&\geq |\bds v^\T\Delta_n^{\mathcal{B}}\bds u|,
\ee
by the fact that $\|\bds u\circ \vecc\{B_{k_1}(\bds 1_p)\}\|^{-1}, \|\bds v\circ \vecc\{B_{k_2}(\bds 1_q)\}\|^{-1}\geq 1$ as $\bds u \circ \vecc\{B_{k_1}(\bds 1_p)\}$ and $ \bds v \circ \vecc\{B_{k_2}(\bds 1_q)\}$ are shrinkage vectors from unit vectors. Since $\bds u, \bds v$ are arbitrary, by \eqref{l:srate:vdu:0}, \eqref{l:srate:vdu:3} and \eqref{l:srate:vdu:4}, we have
\bee\label{l:srate:up}
\big\|\xi\big\{\tilde{\M \Sigma}_{0,\mathcal{B}}(k_1,k_2)-\M\Sigma^{*,\mathcal{B}}_2(k_2)\otimes \M\Sigma^{*,\mathcal{B}}_1(k_1)\big\}\big\|_{2} &= \sup_{\bds v\in \mathcal{U}_{q^2}, \bds u\in \mathcal{U}_{p^2}}|\bds v^\T\Delta_n^{\mathcal{B}}\bds u| 
\\
&\leq \sup_{{\bds v}^*\in \mathcal{U}^{\mathcal{B}}_{q^2}(k_2), \bds u^*\in \mathcal{U}^{\mathcal{B}}_{p^2}(k_1)}|{\bds v^*}^\T\Delta_n^{\mathcal{B}}\bds u^*|.
\ee
On the other hand, since $\mathcal{U}_{p^2}^{\mathcal{B}}(k_1) \subseteq \mathcal{U}_{p^2}, \mathcal{U}_{q^2}^{\mathcal{B}}(k_2) \subseteq \mathcal{U}_{q^2}$, we have \bee\label{l:srate:low}\sup_{\bds v\in \mathcal{U}_{q^2}, \bds u\in \mathcal{U}_{p^2}}|\bds v^\T\Delta_n^{\mathcal{B}}\bds u| \geq \sup_{{\bds v}^*\in \mathcal{U}^{\mathcal{B}}_{q^2}(k_2), \bds u^*\in \mathcal{U}^{\mathcal{B}}_{p^2}(k_1)}|{\bds v^*}^\T\Delta_n^{\mathcal{B}}\bds u^*|,\ee
and thus by \eqref{l:srate:up}, \eqref{l:srate:low} and \eqref{l:srate:vdu2}
\bee\label{l:srate:vdu:final}
\big\|&\xi\big\{\tilde{\M \Sigma}_{0,\mathcal{B}}(k_1,k_2)-\M\Sigma^{*,\mathcal{B}}_2(k_2)\otimes \M\Sigma^{*,\mathcal{B}}_1(k_1)\big\}\big\|_{2} 
\\
&= \sup_{{\bds v}^*\in \mathcal{U}^{\mathcal{B}}_{q^2}(k_2), \bds u^*\in \mathcal{U}^{\mathcal{B}}_{p^2}(k_1)}|{\bds v^*}^\T\Delta_n^{\mathcal{B}}\bds u^*| 
\\
&= \sup_{{\bds v}^*\in \mathcal{U}^{\mathcal{B}}_{q^2}(k_2), \bds u^*\in \mathcal{U}^{\mathcal{B}}_{p^2}(k_1)}\Big|\big[\bds v^*\circ\vecc\big\{ B_{k_2}(\bds 1_q)\big\}\big]^\T\times \xi\big\{\hat{\M \Sigma}_0 - \E\hat{\M \Sigma}_0\big\}\times\big[\bds u^*\circ\vecc\big\{ B_{k_1}(\bds 1_p)\big\}\big]\Big|
\\
&= \sup_{{\bds v}^*\in \mathcal{U}^{\mathcal{B}}_{q^2}(k_2), \bds u^*\in \mathcal{U}^{\mathcal{B}}_{p^2}(k_1)}\Big|\bds v^*\cdot\xi\big\{\hat{\M \Sigma}_0 - \E\hat{\M \Sigma}_0\big\}\cdot{\bds u^*}^\T \Big|.
\ee
Now we focus on $\bds v^*\cdot\xi\big\{\hat{\M \Sigma}_0 - \E\hat{\M \Sigma}_0\big\}\cdot{\bds u^*}^\T$ with ${\bds v}^*\in \mathcal{U}^{\mathcal{B}}_{q^2}(k_2), \bds u^*\in \mathcal{U}^{\mathcal{B}}_{p^2}(k_1)$. Similar to the proof of Theorem \ref{l:frate}, let $\boldsymbol\beta_1 = (l_1,m_1)^\T,\boldsymbol\beta_2 = (l_2,m_2)^\T$ with $1\leq l_1,m_1\leq p, 1\leq l_2,m_2 \leq q$. Denote the $(m_1-1)*p+l_1$th coordinate of $\bds u^*$ as $u^*_{\bds \beta_1}$; $(m_2 - 1)*q + l_2$th coordinate of $\bds v^*$ by $v^*_{\bds \beta_2}$; and the $(l_2-1)*q + l_1, (m_2 - 1)*q + m_1$th element of $\hat{\M \Sigma}_{i,0}$ by $\hat{\sigma}^{(i,0)}_{\bds \beta_1,\bds \beta_2}$. We also define $\bds{\mathscr{U}}^*(\bds u^*) \in \mathbb{R}^{p\times p}, \bds{\mathscr{V}}^*(\bds v^*)\in\mathbb{R}^{q\times q}$ such that $\vecc\{\bds{\mathscr{U}}^*(\bds u^*)\} = \bds u^*, \vecc\{\bds{\mathscr{V}}^*(\bds v^*)\} = \bds v^*$. It is easy to see that for certain $\bds u^*$ and $\bds v^*$, $\bds{\mathscr{U}}^*(\bds u^*)$ and $\bds{\mathscr{V}}^*(\bds v^*)$ are unique. Then similar to \eqref{l:srate:vdu2}, one has
\bee\label{l:srate:vsu:1}
&{\bds v^*}^\T\cdot\xi\big\{\hat{\M \Sigma}_0 - \E\hat{\M \Sigma}_0\big\}\cdot{\bds u^*} 
\\
&= \frac{1}{n}{{{\sum}}}_{i = 1}^n{\bds v^*}^\T\begin{pmatrix}
\vecc^\T\big\{\hat{\M \Sigma}_{i,0}^{(1,1)} - \E \hat{\M \Sigma}_{i,0}^{(1,1)}\big\}
\\
\vdots
\\
\vecc^\T\big\{\hat{\M \Sigma}_{i,0}^{(q,q)} - \E \hat{\M \Sigma}_{i,0}^{(q,q)}\big\}
\end{pmatrix}\bds u^*
\\
&=\frac{1}{n}\sum_{i = 1}^n \sum_{\bds \beta_2\in [q]\times[q]} v^*_{\bds \beta_2}\vecc\{\hat{\M \Sigma}_{i,0}^{\bds \beta_2} - \E\hat{\M \Sigma}_{i,0}^{\bds \beta_2}\}\bds u^*
\\
&=\frac{1}{n}\sum_{i = 1}^n \sum_{\bds \beta_1\in[p]\times [p]\atop\bds \beta_2\in [q]\times[q]} v^*_{\bds \beta_2}u^*_{\bds \beta_1}\big(\hat{\sigma}^{(i,0)}_{\bds \beta_1,\bds \beta_2} - \E\hat{\sigma}^{(i,0)}_{\bds \beta_1,\bds \beta_2}\big)
\\
&=\frac{1}{n}\sum_{i = 1}^n \Big[\sum_{\bds \beta_1\in[p]\times [p]\atop\bds \beta_2\in [q]\times[q]} v^*_{\bds \beta_2}u^*_{\bds \beta_1}\hat{\sigma}^{(i,0)}_{\bds \beta_1,\bds \beta_2}\Big] - \Big[\E\Big\{\sum_{\bds \beta_1\in[p]\times [p]\atop\bds \beta_2\in [q]\times[q]}v^*_{\bds \beta_2}u^*_{\bds \beta_1}\hat{\sigma}^{(i,0)}_{\bds \beta_1,\bds \beta_2}\Big\}\Big].
\ee
Denote $\vecc(\M X_i)_{(l_1,l_2)}$ the $(l_2 - 1)*p + l_1$th element in $\vecc(\M X_i)$, i.e., $\vecc(\M X_i)_{(l_1,l_2)} = x^{(i)}_{l_1,l_2}$. For the first term on the right-hand side,  we have
\bee\label{l:srate:vsu:2}
&\sum_{\bds \beta_1 = (l_1,m_1)^\T\in[p]\times [p]\atop\bds \beta_2 = (l_2,m_2)^\T\in [q]\times[q]} v^*_{\bds \beta_2}u^*_{\bds \beta_1}\hat{\sigma}^{(i,0)}_{\bds \beta_1,\bds \beta_2} 
\\
&=\sum_{\bds \beta_1 = (l_1,m_1)^\T\in[p]\times [p]\atop\bds \beta_2 = (l_2,m_2)^\T\in [q]\times[q]} v^*_{\bds \beta_2}u^*_{\bds \beta_1}x^{(i)}_{l_1,l_2}x^{(i)}_{m_1,m_2}
\\
&= \sum_{\bds \beta_1 = (l_1,m_1)^\T\in[p]\times [p]\atop\bds \beta_2 = (l_2,m_2)^\T\in [q]\times[q]} v^*_{\bds \beta_2}u^*_{\bds \beta_1}\vecc(\M X_i)_{(l_1,l_2)}\vecc(\M X_i)_{(m_1,m_2)}
\\
&= \sum_{\bds \beta_1 = (l_1,m_1)^\T\in[p]\times [p]\atop\bds \beta_2 = (l_2,m_2)^\T\in [q]\times[q]} \vecc(\M X_i)_{(l_1,l_2)}\times \Big[\mathscr{V}^*({\bds v}^*)\otimes\mathscr{U}({\bds u^*})\Big]_{\{(l_1,m_1),(l_2,m_2)\}}\times\vecc(\M X_i)_{(m_1,m_2)}
\\
&=\vecc(\M X_i)^\T\times\Big[\mathscr{V}^*({\bds v}^*)\otimes\mathscr{U}({\bds u^*})\Big]\times\vecc(\M X_i).
\ee
Combining \eqref{l:srate:vsu:1}, \eqref{l:srate:vsu:2} and \eqref{l:srate:vdu:final}, we finally show
\bee\label{l:srate:back}
&\big({\bds v^*}\big)^\T\cdot\xi\big\{\hat{\M \Sigma}_0 - \E\hat{\M \Sigma}_0\big\}\cdot{\bds u^*}
\\
&= \frac{1}{n}\sum_{i = 1}^n\underbrace{\vecc(\M X_i)^\T\cdot\mathscr{V}^*({\bds v}^*)\otimes\mathscr{U}({\bds u^*})\cdot\vecc(\M X_i) - \E\Big\{\vecc(\M X_i)^\T\cdot\mathscr{V}^*({\bds v}^*)\otimes\mathscr{U}({\bds u^*})\cdot\vecc(\M X_i)\Big\}}_{\equiv I_i(\bds u^*,\bds v^*)}
\ee
and 
\bee\label{step1:final}
\big\|&\xi\big\{\tilde{\M \Sigma}_{0,\mathcal{B}}(k_1,k_2)-\M\Sigma^{*,\mathcal{B}}_2(k_2)\otimes \M\Sigma^{*,\mathcal{B}}_1(k_1)\big\}\big\|_{2}  
\\
&= \sup_{{{\bds v}^*}^\T \in \mathcal{U}^{\mathcal{B}}_{q^2}(k_2), \bds u^*\in \mathcal{U}^{\mathcal{B}}_{p^2}(k_1)}\Big|\bds v^*\cdot\xi\big\{\hat{\M \Sigma}_0 - \E\hat{\M \Sigma}_0\big\}\cdot{\bds u^*}\Big|
\\
&=\sup_{{\bds v}^*\in \mathcal{U}^{\mathcal{B}}_{q^2}(k_2), \bds u^*\in \mathcal{U}^{\mathcal{B}}_{p^2}(k_1)}\Bigg|\frac{1}{n}\sum_{i = 1}^n\vecc(\M X_i)^\T\cdot\mathscr{V}^*({\bds v}^*)\otimes\mathscr{U}({\bds u^*})\cdot\vecc(\M X_i) 
\\
&- \E\Big\{\vecc(\M X_i)^\T\cdot\mathscr{V}^*({\bds v}^*)\otimes\mathscr{U}({\bds u^*})\cdot\vecc(\M X_i)\Big\}\Bigg| 
\\
&= \sup_{{\bds v}^*\in \mathcal{U}^{\mathcal{B}}_{q^2}(k_2), \bds u^*\in \mathcal{U}^{\mathcal{B}}_{p^2}(k_1)}\Big|\frac{1}{n}\sum _{i = 1}^nI_i(\bds u^*,\bds v^*)\Big|.
\ee
\par
\noindent{\textbf{Step 1.2: }}Next, we use Hanson-Wright type inequality \citep{rudelson2013hanson, zajkowski2020bounds} to study  the concentration of $\frac{1}{n}\sum _{i = 1}^nI_i(\bds u^*,\bds v^*)$ for some fixed ${\bds v}^*\in \mathcal{U}^{\mathcal{B}}_{q^2}(k_2), \bds u^*\in \mathcal{U}^{\mathcal{B}}_{p^2}(k_1)$. From Corollary 2.8 of \citet{zajkowski2020bounds}, for a sub-Gaussian vector $\mathbf v_{s} \in \mathbb{R}^d$ such that $\E\mathbf v_s = \M 0, \cov(\mathbf v_s ) = \M I_d $, and for any $K$  such that $\|\mathbf{v}^\T\mathbf v_s\|_{\psi_2}\leq K$ (see \eqref{def:psi2} for definition of $\|\cdot\|_{\psi_2}$) for any $\|\mathbf{v}\| = 1$, we have
\bee\label{l:srate:hwi}
\p\Big(|\mathbf{v}_s^\T \M M \mathbf{v}_s - \E\mathbf{v}_s^\T \M M \mathbf{v}_s|\geq t\Big)\leq 2\exp\Bigg[-\min\Big\{\frac{t^2}{C^2K^4\|\M M\|_\F^2},\frac{t}{CK^2\|\M M\|_\F}\Big\}\Bigg],
\ee
where $\M M$ is any matrix in $\mathbb{R}^{d\times d}$ and $C$ is a fixed constant. 
\par
In this proof, we take $\M M = (\M \Sigma^*)^{1/2}\mathscr{V}^*({\bds v}^*)\otimes\mathscr{U}^*({\bds u^*})(\M \Sigma^*)^{1/2}$  and $ \mathbf{v}_s = (\M \Sigma^*)^{-1/2}\vecc(\M X_i)$ in \eqref{l:srate:hwi}.  As the maximal eigenvalues of $(\M \Sigma^*_1)^{1/2},(\M \Sigma^*_2)^{1/2}$ are smaller than $1/\epsilon_0^{1/2}$ by \eqref{A1}, and all eigenvalues of these positive definitive matrices are positive, thus $\|(\M \Sigma^*)^{1/2}\|_2 = \lambda_{\max}\{(\M \Sigma^*)^{1/2}\} = \lambda^{1/2}_{\max}\{(\M \Sigma^*)\} =  \lambda^{1/2}_{\max}\{(\M \Sigma^*_1)\}\times \lambda^{1/2}_{\max}\{(\M \Sigma^*_2)\} \leq 1/\varepsilon_0^{}$, by the property of Kronecker product matrix's eigenvalues. Combining with the fact that $\|\mathscr{V}^*({\bds v}^*)\otimes\mathscr{U}({\bds u^*})\|_\F = \|\mathscr{U}({\bds u^*})\|_\F\|\mathscr{V}({\bds v^*})\|_\F = \|\vecc(\bds u^*)\| \times \|\vecc(\bds v^*)\| = 1$ (for first equality, see e.g. \citet{lancaster1972norms}), we have 
\bee\label{l:srate:con1}
\|\M M\|_\F &\leq \|(\M \Sigma^*)^{1/2}\|_2^2\times \|\mathscr{V}^*({\bds v}^*)\otimes\mathscr{U}({\bds u^*})\|_\F \leq 1/\varepsilon^2_0.
\ee
Also, it is easy to see $\E \mathbf{v}_s= \E\{(\M \Sigma^*)^{-1/2}\vecc(\M X_i)\} = \M 0$      and $ \cov(\mathbf{v}_s) = (\M \Sigma^*)^{-1/2} \M \Sigma^* (\M \Sigma^*)^{-1/2} = \M I_{pq}$. And by \eqref{subga:supp}, we know
$
\Pr \left[\Big|{\bf v}^{\T} \Big\{\vecc(\M X) - \E \big(\vecc(\M X)\big)\Big\}
\Big|>t\right] \leq e^{-\rho t^2 }
$
for any $\mathbf v \in \RR^{pq}$ such that $\|\mathbf v\| = 1$, which implies, 
\bee\nonumber
\Pr \left[\Big|{\bf v}^{\T} \Big\{\mathbf v_s- \E \big(\mathbf v_s\big)\Big\}
\Big|>t\right] &= \Pr \left[\Big|{\bf v}^{\T}(\M \Sigma^*)^{-1/2}\vecc(\M X_i)
\Big|>t\right] 
\\
&= \Pr \left[\Big|\Big\{\frac{(\M \Sigma^*)^{-1/2}\bf v}{\|(\M \Sigma^*)^{-1/2}\bf v\|}\Big\}^{\T}\vecc(\M X_i)
\Big|>\|(\M \Sigma^*)^{-1/2}{\bf v}\| t \right]
\\
&\leq e^{-\rho \|(\M \Sigma^*)^{-1/2}{\bf v}\|^2t^2} 
\\
&\leq e^{-\rho \varepsilon_0^2 t^2},
\ee
where the first inequality holds because $\Big\|\frac{(\M \Sigma^*)^{-1/2}\bf v}{\|(\M \Sigma^*)^{-1/2}\bf v\|}\Big\| = 1$. The second inequality holds because $\|(\M \Sigma^*)^{-1/2}{\bf v}\|^2 \geq \lambda^2_{\min}\big\{(\M \Sigma^*)^{-1/2}\big\} = \big\{\lambda^{-1/2}_{\max}\big(\M \Sigma^*\big)\big\}^2\geq (1/\varepsilon^2_0)^{-1 } = \varepsilon^2_0$. Here the inequality $\big\{\lambda^{-1/2}_{\max}\big(\M \Sigma^*\big)\big\}^2\geq (1/\varepsilon^2_0)^{-1 }$ holds because of  the property of Kronecker product, {\color{black} i.e.} for positive-definitive $\M \Sigma_1^*$, $\M \Sigma_2^*$, we have $\lambda_{\max}(\M \Sigma^*) = \lambda_{\max}(\M \Sigma^*_2 \otimes \M \Sigma_1^*) =\lambda_{\max}(\M \Sigma^*_1)\cdot \lambda_{\max}(\M \Sigma^*_2) \leq 1/\varepsilon_0^2$. Thus $\mathbf{v}^\T\mathbf v_s$ is a sub-Gaussian random vector parametrized with $\rho\varepsilon_0^2$ by the definition \eqref{subga:supp} for any $\|\mathbf{v}\| = 1$. By Proposition 2.5.2 in \citet{vershynin2018high}, there exists a fixed constant $C'>0$ such that
\bee\label{l:srate:con2}
\|\mathbf{v}^\T\mathbf v_s\|_{\psi_2}\leq C'/(\sqrt{\rho}\varepsilon_0)
\ee
for any $\|\mathbf{v}\| = 1$. Combining \eqref{l:srate:hwi}, \eqref{l:srate:con1} and \eqref{l:srate:con2}, by definition we have
\bee\label{l:srate:subexp}
&\p(|I_{i}(\bds u^*,\bds v^*)|\geq t)
\\
&=\p\Big[\Big|\vecc(\M X_i)^\T\cdot\mathscr{V}^*({\bds v}^*)\otimes\mathscr{U}({\bds u^*})\cdot\vecc(\M X_i) 
- \E\Big\{\vecc(\M X_i)^\T\cdot\mathscr{V}^*({\bds v}^*)\otimes\mathscr{U}({\bds u^*})\cdot\vecc(\M X_i)\Big\}\Big|\geq t\Big] 
\\
&=\p\Big(|\mathbf{v}_s^\T \M M \mathbf{v}_s - \E\mathbf{v}_s^\T \M M \mathbf{v}_s|\geq t\Big) 
\\
&\leq 2\exp\Bigg[-\min\Big\{\frac{t^2}{C^2(C'/\sqrt{\rho}\varepsilon_0)^4\|\M M\|_\F^2},\frac{t}{C(C'/\sqrt{\rho}\varepsilon_0)^2\|\M M\|_\F}\Big\}\Bigg]
\\
&\leq 2\exp\Bigg[-\min\Big\{\frac{t^2}{C^2(C'/\sqrt{\rho}\varepsilon_0)^4\varepsilon_0^2},\frac{t}{C(C'/\sqrt{\rho}\varepsilon_0)^2\varepsilon_0}\Big\}\Bigg]
\\
& = 2\exp\Big[-\min\Big\{C_1(\rho,\varepsilon_0)t^2,C_2(\rho,\varepsilon_0)t\Big\}\Big]
\\
&=\begin{cases}
2\exp\{-C_1(\rho,\varepsilon_0)t^2\} & 0\leq t\leq C_2(\rho,\varepsilon_0)/C_1(\rho,\varepsilon_0)
\\
2\exp\{-C_2(\rho,\varepsilon_0)t\} & t > C_2(\rho,\varepsilon_0)/C_1(\rho,\varepsilon_0),
\end{cases}
\ee
where we define  $C_1(\rho,\varepsilon_0) \equiv \rho^2\varepsilon_0^2/C^2C'^4, C_2(\rho,\varepsilon_0) \equiv \rho\varepsilon_0/CC'^2$. The above inequality implies that there exists a sufficiently large $K_1(\rho,\varepsilon_0)$ only depending on $\rho,\varepsilon_0$, such that $\p\Big\{|I_i(\bds u^*,\bds v^*)|>t\Big\} \leq \exp\big\{1 - t/K_1(\rho,\varepsilon_0)\big\}$. This tail probability bound of $I_i(\bds u^*,\bds v^*)$ satisfies (5.14) in \citet{eldar2012compressed}. Thus, by results in \citet{eldar2012compressed}, there exits a $K_2(\rho,\varepsilon_0)$ { satisfying} (5.15) in \citet{eldar2012compressed} { such that} $(\E|I_i(\bds u^*,\bds v^*)|^s)^{1/s}\leq K_2(\rho,\varepsilon_0)s$ for all $s\geq 1$. Then combining with Proposition 2.7.1 and Definition 2.7.5 in \citet{vershynin2018high}, we have $\|I_i(\bds u^*,\bds v^*)\|_{\psi_1} \leq K_3(\rho,\varepsilon_0)$ for some constant $K_3(\rho,\varepsilon_0)>0$ only determined by $\rho,\varepsilon_0$. Therefore the $\{I_i(\bds u^*,\bds v^*)\}_{i = 1}^n$ are i.i.d. sub-Exponential random variables (see definitions of $\|\cdot\|_{\psi_1}$ and sub-Exponential in \eqref{def:psi1} and \eqref{subexp}). By Bernstein inequality for sum of sub-Gaussian random variables (see, e.g. Theorem 2.8.1 in \citet{vershynin2018high}), we finally show
\bee\label{lm:srate:coreprob}
&\p\Big\{\Big|\frac{1}{n}\sum _{i = 1}^nI_i(\bds u^*,\bds v^*)\Big|\geq t\Big\}
\\
&\leq 2\exp\Big[-C''\min\Big\{\frac{n^2t^2}{n\|I_{i}(\bds u^*,\bds v^*)\|_{\psi_1}^2},\frac{nt}{\|I_{i}(\bds u^*,\bds v^*)\|_{\psi_1}}\Big\}\Big]
\\
&\leq2\exp\Big[-nC''\min\Big\{\frac{t^2}{K^2_3(\rho,\varepsilon_0)},\frac{t}{K_3(\rho,\varepsilon_0)}\Big\}\Big].
\ee
\par
\noindent\textbf{Step 1.3: } For any $\bds u\in\mathcal{U}^{\mathcal{B}}_{p^2}(k_1)$, we claim that one of its coordinates can be non-zero, only if the corresponding coordinate in $\vecc\big\{B_{k_1}(\bds 1_p)\big\}$ equals $1$. We will prove this by contradiction. By the definition of  $\mathcal{U}^{\mathcal{B}}_{p^2}(k_1)$ in \eqref{def:Ub}, we have for any $\bds u \in \mathcal{U}^{\mathcal{B}}_{p^2}(k_1)$, it must satisfy $\|\bds u\| = \|\bds u\circ \vecc\{B_{k_1}(\bds 1_{p})\}\| = 1$. Therefore, if there exists a $\bds u\in\mathcal{U}^{\mathcal{B}}_{p^2}(k_1)$ such that one of its coordinates  is non-zero while the corresponding  coordinate in $\vecc\big\{B_{k_1}(\bds 1_p)\big\}$ also equals $0$, we must have $\|\bds u\circ \vecc\{B_{k_1}(\bds 1_{p})\}\| < \|\bds u\| = 1$. Contradiction!
\par
By the above argument, we know that although the dimension of space $\mathcal{U}_{p^2}^{\mathcal{B}}(k_1)$ is $p^2$, the number of coordinates that can take non-zero values for vectors in $\mathcal{U}_{p^2}^{\mathcal{B}}(k_1)$ is far less than $p^2$. And this number actually equals the number of non-zero entries in $\vecc\{B_{k_1}(\bds 1_p)\}$. In the following, we use this result to construct an $\epsilon$-net over $\mathcal{U}^{\mathcal{B}}_{p^2}(k_1)$ with reduced complexity. 
\par 
By definition, the number of coordinates in $\vecc\big\{ B_{k_1}(\bds 1_p)\big\}$ that equal $1$ is in the same order of $k_1p$. Therefore, the number of all coordinates in $\vecc\big\{ B_{k_1}(\bds 1_p)\big\}$ that equal $1$, {\color{black} or equivalently, the number of all possible non-zero coordinates of $\bds u\in \mathcal{U}^{\mathcal{B}}_{p^2}(k_1)$ is $c_uk_1p$}, where $c_u>0$ is determined by $k_1$ and $p$. In particular, $c_u$ can be upper bounded by 3. This is because by definition, there are at most $2k_1 + 1$ entries in each row of $\bds 1_p$ equalling $1$, and thus there are at most $(2k_1 + 1)p \leq 3k_1p$ entries equalling $1$ in $\vecc\big\{ B_{k_1}(\bds 1_p)\big\}$. Similarly, it is easy to check that $k_1p\leq (k_1+1)p\leq c_upk_1$. Thus $c_u$ is bounded in a constant interval $[1,3]$. Therefore, $c_u$ can be treated as a constant in this proof. Then, for each $\bds u^*\in \mathcal{U}^{\mathcal{B}}_{p^2}(k_1)$, we define  $\mathcal{R}_{u}(\bds u^*)$ as a vector in $\RR^{c_upk_1}$ that only preserve the coordinates of  $\bds u^*$, whose corresponding coordinates are equal to $1$ in $\vecc\{B_{k_1}(\bds 1_{p})\}$.
\par
Recall that $\bds u^*\in \mathcal{U}^{\mathcal{B}}_{p^2}(k_1)$, it is easy to see that $\|\mathcal{R}_{u}(\bds u^*)\| = \|\bds u^*\| = 1$,  i.e. 
$\mathcal{R}_u(\bds u^*)\in\mathcal{U}_{c_{u}pk_1}.$ Also, here $c_u pk_1$ is the dimension of $\mathcal{R}_{u}(\bds u^*)$, where $c_u$ is determined by $k_1$ and $p$.  From definition, one can see that $\mathcal{R}_{u}(\bds u^*)$ is a bijection from $\mathcal{U}^{\mathcal{B}}_{p^2}(k_1)$ to $\mathcal{U}_{c_upk_1}$ and $\mathcal{R}_u\big\{\mathcal{U}^{\mathcal{B}}_{p^2}(k_1)\big\} = \mathcal{U}_{c_upk_1}, \mathcal{R}^{-1}_u\big\{\mathcal{U}_{c_upk_1}\big\} = \mathcal{U}^{\mathcal{B}}_{p^2}(k_1).$ Similarly we can define $\mathcal{R}_v(\bds v^*): \mathcal{U}^{\mathcal{B}}_{q^2}(k_2) \longrightarrow \mathcal{U}_{c_vqk_2}$ for any $\bds v^*\in \mathcal{U}_{q^2}^{\mathcal{B}}(k_2)$ with some $c_v \in [1,3]$.
\par
Next, we build $\epsilon$-nets over $\mathcal{U}^{\mathcal{B}}_{p^2}(k_1),\mathcal{U}^{\mathcal{B}}_{q^2}(k_2)$ and bound $\big\|\xi\big\{\tilde{\M \Sigma}_{0,\mathcal{B}}(k_1,k_2)-\M\Sigma^{*,\mathcal{B}}_2(k_2)\otimes \M\Sigma^{*,\mathcal{B}}_1(k_1)\big\}\big\|_{2} ^2 = \sup_{{\bds v}^*\in \mathcal{U}^{\mathcal{B}}_{q^2}(k_2), \bds u^*\in \mathcal{U}^{\mathcal{B}}_{p^2}(k_1)}|\frac{1}{n}\sum _{i = 1}^nI_i(\bds u^*,\bds v^*)|$ via combining the $\epsilon$-net arguments, with the concentration of $|\frac{1}{n}\sum _{i = 1}^nI_i(\bds u^*,\bds v^*)|$ for certain $\bds u^*,\bds v^*$ shown in \eqref{lm:srate:coreprob}.
\par
Since $\mathcal{U}_{c_upk_1}$ is the unit sphere of $\mathbb{R}^{c_upk_1}$, by Lemma 5.2 in \citet{eldar2012compressed}, we know there exists a $1/3$-net of $\mathcal{U}_{c_upk_1}$ in the Euclidean space $\RR^{c_upk_1}$ denoted by $\mathcal{N}_{c_upk_1}$, such that $|\mathcal{N}_{c_upk_1}| \leq 7^{c_upk_1}$, and for any $\tilde{\bds u}\in \mathcal{U}_{c_upk_1}$ there exists $\tilde{\bds u}^{(1/3)} \in \mathcal{N}_{c_upk_1}$ satisfying $\|\tilde{\bds u}^{(1/3)} - \tilde{\bds u}\|\leq 1/3$. For $\mathcal{R}_u$ as a bijection, we can define $\mathcal{N}_u \equiv \mathcal{R}_u^{-1}(\mathcal{N}_{c_upk_1})$ with $|\mathcal{N}_u| = |\mathcal{N}_{c_upk_1}| \leq 7^{c_u pk_1}$. By definition of $\mathcal{R}_u(\bds u^*)$, we can also see that $\mathcal{N}_u \subseteq \mathcal{U}^{\mathcal{B}}_{p^2}(k_1)$. In addition, for any $\bds u^* \in \mathcal{U}^{\mathcal{B}}_{p^2}(k_1)$, there exists $\tilde{\bds u}^{(1/3),*}\in \mathcal{N}_{c_upk_1}$ satisfying $\|\tilde{\bds u}^{(1/3),*} - \mathcal{R}_u(\bds u^*)\| \leq 1/3$ since  $\mathcal{R}_u(\bds u^*)\in \mathcal{U}_{c_u pk_1}$. Thus $\mathcal{R}_u^{-1}(\tilde{\bds u}^{(1/3),*}) \in \mathcal{N}_u$ since $\tilde{\bds u}^{(1/3),*}\in \mathcal{N}_{c_upk_1}$. By the fact that $\mathcal{R}_u(\bds u^*)$ only removes zero coordinates of $\bds u^*$, we have 
\bee\nonumber
\|\mathcal{R}_u^{-1}(\tilde{\bds u}^{(1/3),*}) - \bds u^*\| &= \|\tilde{\bds u}^{(1/3),*} - \mathcal{R}_u(\bds u^*)\|\leq 1/3,
\ee
which implies that $\mathcal{N}_u$ is a $1/3$-net of $\mathcal{U}^{\mathcal{B}}_{p^2}(k_1)$ and $|\mathcal{N}_u|\leq 7^{c_upk_1}$. Similarly, we can build $\mathcal{N}_v\subseteq \mathcal{U}_{q^2}^{\mathcal{B}}(k_2)$ as a $1/3$-net of $\mathcal{U}_{q^2}^{\mathcal{B}}(k_2) $ and $|\mathcal{N}_v|\leq 7^{c_vqk_2}$.
\par
Then for any ${\bds v}^*\in \mathcal{U}^{\mathcal{B}}_{q^2}(k_2), \bds u^*\in \mathcal{U}^{\mathcal{B}}_{p^2}(k_1)$, there exist $\bds v^{(1/3),*}\in \mathcal{N}_v, \bds u^{(1/3),*}\in\mathcal{N}_u$ with $\|\bds v^{(1/3),*} - \bds v\|\leq 1/3, \|\bds u^{(1/3),*} - \bds u\|\leq 1/3$. Thus we have
\bee\nonumber
&|\bds u^{(1/3),*}\Delta_n^{\mathcal{B}}\bds v^{(1/3),*} - \bds u^*\Delta_n^{\mathcal{B}}\bds v^*| 
\\
&= |(\bds u^{(1/3),*}-\bds u^*)\Delta_n^{\mathcal{B}}\bds v^{(1/3),*} - \bds u^*\Delta_n^{\mathcal{B}}(\bds v^*-\bds v^{(1/3),*})|
\\
&\leq |(\bds u^{(1/3),*}-\bds u^*)\Delta_n^{\mathcal{B}}\bds v^{(1/3),*}| + |\bds u^*\Delta_n^{\mathcal{B}}(\bds v^*-\bds v^{(1/3),*})|
\\
&\leq \|\bds u^{(1/3),*}-\bds u^*\|\|\Delta_n^{\mathcal{B}}\|_2\|\bds v^{(1/3),*}\| + \|\bds u^*\|\|\Delta_n^{\mathcal{B}}\|_2\|\bds v^*-\bds v^{(1/3),*}\| 
\\
&\leq 2\times 1/3\times \|\Delta_n^{\mathcal{B}}\|_2
\\
&= \frac{2}{3}\|\Delta_n^{\mathcal{B}}\|_2
\ee
By \eqref{l:srate:vdu:final}, we have
\bee\nonumber
\|\Delta_n^{\mathcal{B}}\|_2 &= \big\|\xi\big\{\tilde{\M \Sigma}_{0,\mathcal{B}}(k_1,k_2)-\M\Sigma^{*,\mathcal{B}}_2(k_2)\otimes \M\Sigma^{*,\mathcal{B}}_1(k_1)\big\}\big\|_{2} 
\\
&= \sup_{{\bds v}^*\in \mathcal{U}^{\mathcal{B}}_{q^2}(k_2), \bds u^*\in \mathcal{U}^{\mathcal{B}}_{p^2}(k_1)}|{\bds v^*}^\T\Delta_n^{\mathcal{B}}\bds u^*| 
\\
&\leq\sup_{\bds u^{(1/3),*}\in \mathcal{N}_u\atop
\bds v^{(1/3),*}\in \mathcal{N}_v}|\bds u^{(1/3),*}\Delta_n^{\mathcal{B}}\bds v^{(1/3),*}| + \frac{2}{3}\|\Delta_n^{\mathcal{B}}\|_2,
\ee
which implies that $\|\Delta_n^{\mathcal{B}}\|_2\leq 3\sup_{\bds u^{(1/3),*}\in \mathcal{N}_u\atop
\bds v^{(1/3),*}\in \mathcal{N}_v}|\bds u^{(1/3),*}\Delta_n^{\mathcal{B}}\bds v^{(1/3),*}|$. 
Furthermore, by \eqref{l:srate:back},  we have
\bee\label{lm:srate:enet}
&\big\|\xi\big\{\tilde{\M \Sigma}_{0,\mathcal{B}}(k_1,k_2)-\M\Sigma^{*,\mathcal{B}}_2(k_2)\otimes \M\Sigma^{*,\mathcal{B}}_1(k_1)\big\}\big\|_2 
\\
&= \|\Delta_n^{\mathcal{B}}\|_2
\\
&=\sup_{{\bds v}^*\in \mathcal{U}^{\mathcal{B}}_{q^2}(k_2), \bds u^*\in \mathcal{U}^{\mathcal{B}}_{p^2}(k_1)}|{\bds v^*}^\T\Delta_n^{\mathcal{B}}\bds u^*| 
\\
&\leq 3\sup_{\bds u^{(1/3),*}\in \mathcal{N}_u\atop
\bds v^{(1/3),*}\in \mathcal{N}_v}|\bds u^{(1/3),*}\Delta_n^{\mathcal{B}}\bds v^{(1/3),*}| 
\\
&= 3\sup_{\bds u^{(1/3),*}\in \mathcal{N}_u\atop
\bds v^{(1/3),*}\in \mathcal{N}_v}\Big|\frac{1}{n}\sum _{i = 1}^nI_i\big(\bds u^{(1/3),*},\bds v^{(1/3),*}\big)\Big|
\ee
\par
\noindent\textbf{Step 1.4:} Combining \eqref{lm:srate:enet} and \eqref{lm:srate:coreprob}, for any $\bds u^* \in  \mathcal{U}^{\mathcal{B}}_{p^2}(k_1), \bds v^*\in\mathcal{U}^{\mathcal{B}}_{q^2}(k_2)$, we have  
\bee\label{lm:srate:con}
&\p\Big\{\|\xi\big\{\tilde{\M \Sigma}_{0,\mathcal{B}}(k_1,k_2)-\M\Sigma^{*,\mathcal{B}}_2(k_2)\otimes \M\Sigma^{*,\mathcal{B}}_1(k_1)\big\}\big\|^2_2 \geq t\Big\}
\\
&\leq\p\Bigg\{3^2\sup_{\bds u^{(1/3),*}\in \mathcal{N}_u\atop
\bds v^{(1/3),*}\in \mathcal{N}_v}\Big|\frac{1}{n}\sum _{i = 1}^nI_i\big({\bds u}^{(1/3),*},{\bds v}^{(1/3),*}\big)\Big|^2\geq t\Bigg\} 
\\
&= \p\Bigg\{\sup_{\bds u^{(1/3),*}\in \mathcal{N}_u\atop
\bds v^{(1/3),*}\in \mathcal{N}_v}\Big|\frac{1}{n}\sum _{i = 1}^nI_i\big({\bds u}^{(1/3),*},{\bds v}^{(1/3),*}\big)\Big|^2\geq t/9\Bigg\}
\\
&\leq |\mathcal{N}_u|\times |\mathcal{N}_v|\times\p\Bigg\{\Big|\frac{1}{n}\sum _{i = 1}^nI_i\big({\bds u}^{*},{\bds v}^{*}\big)\Big|^2\geq t/9\Bigg\}
\\
&= 7^{c_upk_1 + c_v qk_2}\times\p\Bigg\{\Big|\frac{1}{n}\sum _{i = 1}^nI_i\big({\bds u}^{*},{\bds v}^{*}\big)\Big|\geq t^{1/2}/3\Bigg\}
\\
&\leq 2\times 7^{c_upk_1 + c_v qk_2}\times\exp\Bigg[-nC''\min\Big\{\frac{t}{9K^2_3(\rho,\varepsilon_0)},\frac{\sqrt{t}}{3K_3(\rho,\varepsilon_0)}\Big\}\Bigg]
\\
&\leq  2a_1^{pk_1 + qk_2}\exp\big[-n\min\{a_2t,a_3\sqrt{t}\}\big],
\ee 
where we define three constants $a_1 \equiv 7^{\max(c_u,c_v)}>1, a_2 \equiv C''/(9K_3^2(\rho,\varepsilon_0)), a_3 \equiv C''/3K_3(\rho,\epsilon)$. For any chosen $t_n$ {depending on} $n$, we bound $\E\|\xi\big\{\tilde{\M \Sigma}_{0,\mathcal{B}}(k_1,k_2)-\M\Sigma^{*,\mathcal{B}}_2(k_2)\otimes \M\Sigma^{*,\mathcal{B}}_1(k_1)\big\}\big\|^2_2$ by Lemma \ref{lm:exp} and \eqref{lm:srate:con}
\bee\label{lm:srate:targetf}
\E&\|\xi\big\{\tilde{\M \Sigma}_{0,\mathcal{B}}(k_1,k_2)-\M\Sigma^{*,\mathcal{B}}_2(k_2)\otimes \M\Sigma^{*,\mathcal{B}}_1(k_1)\big\}\big\|^2_2 
\\
\leq &t_n + \int_{t_n}^{+\infty}\p\Big\{\|\xi\big\{\tilde{\M \Sigma}_{0,\mathcal{B}}(k_1,k_2)-\M\Sigma^{*,\mathcal{B}}_2(k_2)\otimes \M\Sigma^{*,\mathcal{B}}_1(k_1)\big\}\big\|^2_2 \geq t\Big\}dt
\\
\precsim& t_n+\int_{t_n}^{+\infty}a_1^{pk_1 + qk_2}\exp\big[-n\min\{a_2t,a_3\sqrt{t}\}\big]dt.
\ee
We give the optimal choice of $t_n$ and the convergence rate of $\E\|\xi\big\{\tilde{\M \Sigma}_{0,\mathcal{B}}(k_1,k_2)-\M\Sigma^{*,\mathcal{B}}_2(k_2)\otimes \M\Sigma^{*,\mathcal{B}}_1(k_1)\big\}\big\|^2_2 $, under {the following three scenarios,} respectively. In the following discussion of three scenarios, with a bit abuse of notation, we use $ C $ to denote some constant terms, though the $C$ in different places may denote different constants.
\par
\noindent{\textbf{(a) When} $pk_1 + qk_2\succ n$ \textbf{: }}  take $t_n = C\Big(\frac{pk_1 + qk_2}{n}\Big)^2\rightarrow +\infty$ as $n\rightarrow +\infty$. When $n$ and $t_n$ are sufficiently large, we have $a_3\sqrt{t}<a_2t$ for all $t\geq t_n$. Then the right-hand side of \eqref{lm:srate:targetf} becomes
\bee\label{lm:srate:con1target}
&t_n +\int_{t_n}^{+\infty}a_1^{pk_1 + qk_2}\exp\big[-na_3\sqrt{t}\big]dt 
\\
&= t_n+\int_{\sqrt{t_n}}^{+\infty}2a_1^{pk_1+qk_2}\tilde{t}\exp\Big[-na_3{\tilde{t}}\ \Big]d\tilde{t}  \ \ \ \ \ \ \ ( \tilde{t} = \sqrt{t})
\\
& = t_n + \frac{-2a_1^{pk_1 + qk_2}}{na_3}\int_{\sqrt{t_n}}^{+\infty}-na_3\tilde{t}\exp(-na_3\tilde{t})d\tilde{t}
\\
& = t_n+\frac{2a_1^{pk_1+qk_2}\sqrt{t_n}\exp(-na_3\sqrt{t_n})}{na_3} + \frac{2a_1^{pk_1+qk_2}\exp(-na_3\sqrt{t_n})}{n^2a_3^2}
\\
& \precsim t_n + \frac{a_1^{pk_1+qk_2}\sqrt{t_n}\exp(-na_3\sqrt{t_n})}{n} \\
& = C\Big(\frac{pk_1 + qk_2}{n}\Big)^2 + \big\{a_1\exp(-\sqrt{C}a_3)\big\}^{pk_1 + qk_2}\sqrt{C}\frac{pk_1 + qk_2}{n^2}.
\ee
We know that when $C$ is large enough, $a_1\exp(-\sqrt{C}a_3)<1$ and 
\bee\label{lm:srate:con1bridge}
\big\{a_1\exp(-\sqrt{C}a_3)\big\}^{pk_1 + qk_2}\times\big(pk_1 + qk_2\big)\rightarrow 0,
\ee
which implies the second term on the right-hand side of \eqref{lm:srate:con1target} {\color{black} converges} to $0$ as $n \rightarrow +\infty$. Combining \eqref{lm:srate:targetf}, \eqref{lm:srate:con1target} and \eqref{lm:srate:con1bridge}, we finally have
\bee\nonumber
\E\|\xi\big\{\tilde{\M \Sigma}_{0,\mathcal{B}}(k_1,k_2)-\M\Sigma^{*,\mathcal{B}}_2(k_2)\otimes \M\Sigma^{*,\mathcal{B}}_1(k_1)\big\}\big\|^2_2 \precsim\Big(\frac{pk_1 + qk_2}{n}\Big)^2
\ee
when $pk_1 + qk_2 \succ n$.
\par
\noindent\textbf{(b) When }$pk_1 + qk_2 \prec n$\textbf{ :}  take $t_n = C(\frac{pk_1 + qk_2 }{n})\rightarrow 0$ as $n\rightarrow +\infty$. It is easy to see $\min\{a_2 t,a_3\sqrt{t}\} = \begin{cases}a_2 t & t\leq a_3^2/a_2^2\\
a_3\sqrt{t} & t > a_3^2/a_2^2
\end{cases}$. When $n$ is sufficiently large {\color{black} such that} $t_n \leq a_3^2/a_2^2$, the right-hand side of \eqref{lm:srate:targetf} becomes
\bee\label{lm:srate:con2target}
&t_n+ \int_{t_n}^{a_3^2/a_2^2}a_1^{pk_1 + qk_2}\exp\big[-na_2{t}\big]dt + \int_{a_3^2/a_2^2}^{+\infty}a_1^{pk_1 + qk_2}\exp\big[-na_3\sqrt{t}\big]dt
\\
&=t_n+ \frac{a_1^{pk_1 + qk_2}}{na_2}\Big[\exp\{-na_2t_n\} - \exp\{-na_3^2/a_2\}\Big]+ \int_{a_3^2/a_2^2}^{+\infty}a_1^{pk_1 + qk_2}\exp\big[-na_3\sqrt{t}\big]dt \\
&\precsim t_n + \frac{a_1^{pk_1 + qk_2}}{n}\exp(-na_2t_n) + \frac{a_1^{pk_1+qk_2}\exp(-na_3^2/a_2)}{n} +\frac{a_1^{pk_1+qk_2}\exp(-na_3^2/a_2)}{n^2}  
\\
&\precsim t_n + \frac{a_1^{pk_1 + qk_2}}{n}\exp(-na_2t_n) + \frac{a_1^{pk_1+qk_2}\exp(-na_3^2/a_2)}{n}
\\
&\asymp \frac{pk_1 + qk_2}{n} + \{a_1\exp(-a_2C)\}^{pk_1 + qk_2}/n + \frac{a_1^{pk_1+qk_2}\exp(-na_3^2/a_2)}{n},
\ee
{\color{black} where the first inequality holds by \eqref{lm:srate:con1target}.}
On the right-hand side of \eqref{lm:srate:con2target}, the first term $\frac{pk_1 + qk_2}{n} \succsim 1/n$ since $p,q,k_1,k_2\geq 1$, the second term $\{a_1\exp(-a_2C)\}^{pk_1 + qk_2}/n \precsim 1/n$ when $C$ is large enough and $a_1\exp(-a_2C) < 1$, and the third term $\frac{a_1^{pk_1+qk_2}\exp(-na_3^2/a_2)}{n} \precsim 1/n$ because $pk_2 + qk_2 \prec n$ and $na_3^2/a_2 \asymp n$, which implies $a_1^{pk_1+qk_2}\exp(-na_3^2/a_2) \rightarrow 0$ as $n \rightarrow +\infty$. Finally, we conclude
\bee\nonumber
\E\|\xi\big\{\tilde{\M \Sigma}_{0,\mathcal{B}}(k_1,k_2)-\M\Sigma^{*,\mathcal{B}}_2(k_2)\otimes \M\Sigma^{*,\mathcal{B}}_1(k_1)\big\}\big\|^2_2 \precsim \frac{pk_1 + qk_2}{n}
\ee
when $pk_1 + qk_2 \prec n$.
\par
\noindent\textbf{(c) When }$pk_1 + qk_2 \asymp n$\textbf{ :} under this scenario, we have $pk_1 + qk_2 \leq Cn$ with some $C >0$ as $n$ is sufficiently large. Taking $t_n = t_0 \equiv \max\{a_3^2/a_2^2 + 1,C^2\log^2(a_1)/a_3^2\}\geq a_3^2/a_2^2$, the right-hand side of \eqref{lm:srate:targetf} becomes
\bee\nonumber
&t_0 + \int_{t_0}^{+\infty}a_1^{pk_1 + qk_2}\exp\big[-na_3\sqrt{t}\big]dt 
\\
&= t_0 + \frac{2a_1^{pk_1+qk_2}\sqrt{t_0}\exp(-na_3\sqrt{t_0})}{na_3} + \frac{2a_1^{pk_1+qk_2}\exp(-na_3\sqrt{t_0})}{n^2a_3^2} 
\\
&\precsim t_0 + \frac{a_1^{pk_1 + q k_2}\exp(-na_3\sqrt{t_0})}{n}
\\
&\precsim t_0 + \frac{a_1^{C_{\text{up}}n}\exp\big\{-na_3\frac{C_{\text{up}}\log(a_1)}{a_3}\big\}}{n} \\
& = t_0 + 1/n \\
& \precsim t_0,
\ee 
where the first equality is derived in the same way as \eqref{lm:srate:con1target}. Thus $\E\|\xi\big\{\tilde{\M \Sigma}_{0,\mathcal{B}}(k_1,k_2)-\M\Sigma^{*,\mathcal{B}}_2(k_2)\otimes \M\Sigma^{*,\mathcal{B}}_1(k_1)\big\}\big\|^2_2  \precsim 1 \asymp \frac{pk_1 + qk_2}{n}$ when $pk_1 + qk_2 \asymp n$.
\par
Combining scenarios \textbf{(a)}--\textbf{(c)}, we finish our {Step 1} by {\color{black} showing}
\bee\label{lm:srate:ratestep1}
\E\|\xi\big\{\tilde{\M \Sigma}_{0,\mathcal{B}}(k_1,k_2)-\M\Sigma^{*,\mathcal{B}}_2(k_2)\otimes \M\Sigma^{*,\mathcal{B}}_1(k_1)\big\}\big\|^2_2 \precsim 
\begin{cases}
\frac{pk_1 + qk_2}{n} & pk_1 + qk_2 \precsim n
\\
\Big(\frac{pk_1 + qk_2}{n}\Big)^2 & pk_1 + qk_2 \succ n.
\end{cases}
\ee
\\
\par
\noindent{\textbf{Step 2: }}In this step, we bound the second term on the right-hand side of \eqref{lm:srate:bstart}: $\frac{1}{pq}\E\big\|\xi\big[\vecc(\bar{\M X})\vecc(\bar{\M X})^{\T}\HDBand \big]\big\|_{2}^2$. In particular, we will show that the convergence rate of the second term on right-hand side of \eqref{lm:srate:bstart} is at the same asymptotic order with the first term of it. 

For simplicity, denote $\mathcal{H}^{\mathcal{B}}_{n,k_1,k_2} \equiv \xi\big[\vecc(\bar{\M X})\vecc(\bar{\M X})^{\T}\HDBand \big]$. By triangle inequality,
\bee\label{lm:srate:final}
\E\big\|\xi\big[\vecc(\bar{\M X})\vecc(\bar{\M X})^{\T}\HDBand \big]\big\|_{2}^2 &= \E\|\mathcal{H}^{\mathcal{B}}_{n,k_1,k_2} \|_2^2
\\
&\leq 2\E \|\mathcal{H}^{\mathcal{B}}_{n,k_1,k_2}  - \E\mathcal{H}^{\mathcal{B}}_{n,k_1,k_2} \|_2^2 + 2\|\E\mathcal{H}^{\mathcal{B}}_{n,k_1,k_2} \|_2^2.
\ee 
Therefore, we bound the two terms on the right-hand side, respectively. 
\par
\noindent{\textbf{(a): }}The proof technique to bound $2\E \|\mathcal{H}^{\mathcal{B}}_{n,k_1,k_2}  - \E\mathcal{H}^{\mathcal{B}}_{n,k_1,k_2} \|_2^2$ is almost the same as the technique used in {Step 1}.  Similar to the derivation for $\xi\big\{\tilde{\M \Sigma}_{0,\mathcal{B}}(k_1,k_2)-\M\Sigma^{*,\mathcal{B}}_2(k_2)\otimes \M\Sigma^{*,\mathcal{B}}_1(k_1)\big\} = \xi\big\{\tilde{\M \Sigma}_{0,\mathcal{B}}(k_1,k_2)\big\} -\E[\xi\big\{\tilde{\M \Sigma}_{0,\mathcal{B}}(k_1,k_2)\big\}] $ in \eqref{l:srate:vdu:0}-\eqref{l:srate:vdu2}, for $\mathcal{H}^{\mathcal{B}}_{n,k_1,k_2}  - \E\mathcal{H}^{\mathcal{B}}_{n,k_1,k_2}$, we can also derive
\bee\nonumber
&\bds v^\T \cdot(\mathcal{H}^{\mathcal{B}}_{n,k_1,k_2}  - \E\mathcal{H}^{\mathcal{B}}_{n,k_1,k_2} )\cdot\bds u 
\\
&=\bds v^\T\cdot\Big[\xi\big[\vecc(\bar{\M X})\vecc(\bar{\M X})^{\T}\HDBand \big] - \E\xi\big[\vecc(\bar{\M X})\vecc(\bar{\M X})^{\T}\HDBand \big] \Big]\cdot \bds u 
\\
&= [\bds v\circ \vecc\{B_{k_2}(\bds 1_q)\}]^\T \times \Big[\xi\{\vecc(\bar{\M X})\vecc(\bar{\M X})^{\T}\} - \E\big[\xi\{\vecc(\bar{\M X})\vecc(\bar{\M X})^{\T}\}\big]\Big]\times[\bds u\circ \vecc\{B_{k_1}(\bds 1_p)\}],
\ee
where $\bds u\in\mathcal{U}_{q^2}, \bds v\in\mathcal{U}_{p^2}$. Similarly to \eqref{l:srate:vdu:3}-\eqref{l:srate:vdu:final}, we have
\bee\label{l:srate:sec1}
\big\|&\mathcal{H}^{\mathcal{B}}_{n,k_1,k_2}  - \E\mathcal{H}^{\mathcal{B}}_{n,k_1,k_2} \big\|_{2} 
\\
= &\sup_{\bds u\in\mathcal{U}_{q^2}, \bds v\in\mathcal{U}_{p^2}}\big|\bds v^\T \cdot(\mathcal{H}^{\mathcal{B}}_{n,k_1,k_2}  - \E\mathcal{H}^{\mathcal{B}}_{n,k_1,k_2} )\cdot\bds u\big|
\\
= &\sup_{\bds u\in\mathcal{U}_{q^2}, \bds v\in\mathcal{U}_{p^2}}\big|[\bds v\circ \vecc\{B_{k_2}(\bds 1_q)\}]^\T \times \Big[\xi\{\vecc(\bar{\M X})\vecc(\bar{\M X})^{\T}\} - \E\big[\xi\{\vecc(\bar{\M X})\vecc(\bar{\M X})^{\T}\}\big]\Big]\times[\bds u\circ \vecc\{B_{k_1}(\bds 1_p)\}]\big|
\\
= & \sup_{\bds u\in\mathcal{U}_{q^2}, \bds v\in\mathcal{U}_{p^2}}\underbrace{\|\bds v\circ \vecc\{B_{k_2}(\bds 1_q)\}\|\cdot\|\bds u\circ \vecc\{B_{k_1}(\bds 1_p)\}\|}_{\leq 1\text{ as }\bds u\in\mathcal{U}_{q^2}, \bds v\in\mathcal{U}_{p^2}}\times|{\underbrace{\tilde{\bds v}}_{\in\mathcal{U}_{q^2}^{\mathcal{B}}(k_2)}}^\T \times\Big[ \xi\{\vecc(\bar{\M X})\vecc(\bar{\M X})^{\T}\} - 
\\
&\E\big[\xi\{\vecc(\bar{\M X})\vecc(\bar{\M X})^{\T}\}\big]\Big]\times \underbrace{\tilde{\bds u}}_{\in\mathcal{U}_{p^2}^{\mathcal{B}}(k_1)}|
\\
\leq &\sup_{\bds u^* \in\mathcal{U}_{p^2}^{\mathcal{B}}(k_1),\bds v^*\in\mathcal{U}_{q^2}^{\mathcal{B}}(k_2)}\big|{\bds v^*}^\T\cdot\Big[\xi\{\vecc(\bar{\M X})\vecc(\bar{\M X})^{\T}\} - \E\big[\xi\{\vecc(\bar{\M X})\vecc(\bar{\M X})^{\T}\}\big]\Big]\cdot \bds u^*\big|,
\ee
where we let $\tilde{\bds v} = \frac{\bds v\circ \vecc\{B_{k_2}(\bds 1_q)\}}{\|\bds v\circ \vecc\{B_{k_2}(\bds 1_q)\}\|}, \tilde{\bds u} = \frac{\bds u\circ \vecc\{B_{k_1}(\bds 1_p)\}}{\|\bds u\circ \vecc\{B_{k_1}(\bds 1_p)\}\|}$. {\color{black} And further we have} 
\bee\nonumber
&\sup_{\bds u^* \in\mathcal{U}_{p^2}^{\mathcal{B}}(k_1),\bds v^*\in\mathcal{U}_{q^2}^{\mathcal{B}}(k_2)}\Big|{\bds v^*}^\T\cdot\Big[\xi\{\vecc(\bar{\M X})\vecc(\bar{\M X})^{\T}\} - \E\big[\xi\{\vecc(\bar{\M X})\vecc(\bar{\M X})^{\T}\}\big]\Big]\cdot \bds u^*\Big|
\\
&=\sup_{\bds u^* \in\mathcal{U}_{p^2}^{\mathcal{B}}(k_1),\bds v^*\in\mathcal{U}_{q^2}^{\mathcal{B}}(k_2)}\Big|[\bds v^*\circ \vecc\{B_{k_2}(\bds 1_q)\}]^\T \times \Big[\xi\{\vecc(\bar{\M X})\vecc(\bar{\M X})^{\T}\} - \E\big[\xi\{\vecc(\bar{\M X})\vecc(\bar{\M X})^{\T}\}\big]\Big]
\\
&\times[\bds u^*\circ \vecc\{B_{k_1}(\bds 1_p)\}]\Big|
\\
&\leq \small{\sup_{\bds u\in\mathcal{U}_{q^2}, \bds v\in\mathcal{U}_{p^2}}\Big|[\bds v\circ \vecc\{B_{k_2}(\bds 1_q)\}]^\T \times \Big[\xi\{\vecc(\bar{\M X})\vecc(\bar{\M X})^{\T}\} - \E\big[\xi\{\vecc(\bar{\M X})\vecc(\bar{\M X})^{\T}\}\big]\Big]}
\\
&\times [\bds u\circ \vecc\{B_{k_1}(\bds 1_p)\}]\Big| 
\\
&=\big\|\mathcal{H}^{\mathcal{B}}_{n,k_1,k_2}  - \E\mathcal{H}^{\mathcal{B}}_{n,k_1,k_2} \big\|_{2}, 
\ee
where the last equality {\color{black} can be derived similarly using the argument in the} first three lines of 
\eqref{l:srate:sec1}. This implies
\bee\label{lm:srate:bar1}
&\big\|\mathcal{H}^{\mathcal{B}}_{n,k_1,k_2}  - \E\mathcal{H}^{\mathcal{B}}_{n,k_1,k_2} \big\|_{2}  
\\
&= \sup _{\bds u^* \in\mathcal{U}_{p^2}^{\mathcal{B}}(k_1),\bds v^*\in\mathcal{U}_{q^2}^{\mathcal{B}}(k_2)}\Big|{\bds v^*}^\T\cdot\Big[\xi\{\vecc(\bar{\M X})\vecc(\bar{\M X})^{\T}\} - \E\big[\xi\{\vecc(\bar{\M X})\vecc(\bar{\M X})^{\T}\}\big]\Big]\cdot \bds u^*\Big|.
\ee
On the other hand we note that $\xi\{\hat{\M \Sigma}_0 - \E\hat{\M \Sigma}_0\} = \xi\{\hat{\M \Sigma}_0\} - \E\xi\{\hat{\M \Sigma}_0\}$ by Lemma \ref{lemma:xi}. Similar to \eqref{l:srate:vsu:1}-\eqref{l:srate:back}, we can also show for $\xi\{\vecc(\bar{\M X})\vecc(\bar{\M X})^{\T}\} - \E \xi\{\vecc(\bar{\M X})\vecc(\bar{\M X})^{\T}\}$,
\bee\label{lm:srate:bar2}
&{\bds v^*}^\T\Big[\xi\{\vecc(\bar{\M X})\vecc(\bar{\M X})^{\T}\} - \E\big[\xi\{\vecc(\bar{\M X})\vecc(\bar{\M X})^{\T}\}\big]\Big]\bds u^* 
\\
&= \underbrace{\vecc(\bar{\M X})^\T\times\Big[\mathscr{V}^*({\bds v}^*)\otimes\mathscr{U}({\bds u^*})\Big]\times\vecc(\bar{\M X}) - \E\Big[\vecc(\bar{\M X})^\T\times\big\{\mathscr{V}^*({\bds v}^*)\otimes\mathscr{U}({\bds u^*})\big\}\times\vecc(\bar{\M X})\Big]}_{\equiv \bar{I}(\bds u^*,\bds v^*)},
\ee
where $\mathscr{V}^*({\bds v}^*)\otimes\mathscr{U}({\bds u^*})$ is defined in the same way as \eqref{l:srate:vsu:2}. Combining \eqref{lm:srate:bar1} and \eqref{lm:srate:bar2}, we show
\bee\nonumber
\big\|\mathcal{H}^{\mathcal{B}}_{n,k_1,k_2}  - \E\mathcal{H}^{\mathcal{B}}_{n,k_1,k_2} \big\|_{2} = \sup _{\bds u^* \in\mathcal{U}_{p^2}^{\mathcal{B}}(k_1),\bds v^*\in\mathcal{U}_{q^2}^{\mathcal{B}}(k_2)}|\bar{I}(\bds u^*,\bds v^*)|.
\ee
Now let $\M M = (\M \Sigma^*)^{1/2}\mathscr{V}^*({\bds v}^*)\otimes\mathscr{U}({\bds u^*})(\M \Sigma^*)^{1/2}, \bar{\mathbf{v}}_s = (\M \Sigma^*)^{-1/2}\vecc(\bar{\M X}) = \frac{1}{n}\sum_{i = 1}^n(\M \Sigma^*)^{-1/2}\vecc(\M X_i)$. In Step 1.2, we have already shown $\|\M M\|_\F \leq 1/\varepsilon^2_0$ and $\|\mathbf{v}^\T(\M\Sigma^*)^{-1/2}\vecc({\M X}_i)\|_{\psi_2}\leq C'/\sqrt{\rho}\varepsilon_0$ for any $\|\mathbf{v}\| = 1$. It is easy to see $\E \bar{\mathbf{v}}_s = 0, \cov(\bar{\mathbf{v}}_s) = \M I_{pq}$. Moreover, it is easy to check that $\{\mathbf{v}^\T(\M \Sigma^*)^{-1/2}\vecc(\M X_i)\}_{i = 1}^n$ are i.i.d. mean-zero, sub-Gaussian random variables with $\|\mathbf{v}^\T(\M\Sigma^*)^{-1/2}\vecc({\M X}_i)\|_{\psi_2}\leq C'/\sqrt{\rho}\varepsilon_0$ for any $\|\mathbf{v}\| = 1$, thus by { Proposition 2.6.1} in \citet{vershynin2018high}, we have
\bee\nonumber
\|\mathbf{v}^\T\bar{\mathbf{v}}_s\|_{\psi_2} & = \frac{1}{n}\Big\|\sum_{i = 1}^n\mathbf{v}^\T(\M \Sigma^*)^{-1/2}\vecc(\M X_i)\Big\|_{\psi_2}\\
&\leq C_{\psi_2}\frac{1}{\sqrt{n}}\|\mathbf{v}^\T(\M \Sigma^*)^{-1/2}\vecc(\M X_i)\|_{\psi_2}
\\
&\leq \frac{C_{\psi_2}C'}{\sqrt{\rho}\varepsilon_0}\frac{1}{\sqrt{n}},
\ee
where $C_{\psi_2}$ is a fixed constant for any $\|\mathbf{v}\| = 1$. {\color{black} Therefore, }by \eqref{l:srate:hwi}
\bee\label{lm:srate:barcon}
&\p\Big(|\bar{I}(\bds u^*,\bds v^*)|\geq t\Big) 
\\
&= \p\Big(|\bar{\mathbf{v}}_s^\T \M M \bar{\mathbf{v}}_s - \E\bar{\mathbf{v}}_s^\T \bar{\M M \mathbf{v}}_s|\geq t\Big)
\\
&\leq2\exp\Bigg[-\min\Big\{\frac{n^2t^2}{C^2(C'C_{\psi_2}/\sqrt{\rho}\varepsilon_0)^4\varepsilon_0^2},\frac{nt}{C(C'C_{\psi_2}/\sqrt{\rho}\varepsilon_0)^2\varepsilon_0}\Big\}\Bigg] 
\\
&= 2\exp\Big[-n\min\Big\{9a_2'nt^2,3a_3't\Big\}\Big],
\ee
where $a_2' = \frac{1}{9C^2(C'C_{\psi_2}/\sqrt{\rho}\varepsilon_0)^4\varepsilon_0^2}, a_3' = \frac{1}{3C(C'C_{\psi_2}/\sqrt{\rho}\varepsilon_0)^2\varepsilon_0}$. Then an exact same argument as deriving $1/3$-net in {\color{black} Steps 1.3 and 1.4} implies
\bee\label{lm:srate:tailbar}
&\p\Big(\big\|\mathcal{H}^{\mathcal{B}}_{n,k_1,k_2}  - \E\mathcal{H}^{\mathcal{B}}_{n,k_1,k_2} \big\|^2_{2} \geq t\Big) 
\\
&= \p\Big(\sup _{\bds u^* \in\mathcal{U}_{p^2}^{\mathcal{B}}(k_1),\bds v^*\in\mathcal{U}_{q^2}^{\mathcal{B}}(k_2)}|\bar{I}(\bds u^*,\bds v^*)|^2 \geq t\Big)
\\
&\leq\p\Big(3^2\sup_{\bds u^{(1/3),*}\in \mathcal{N}_u\atop
\bds v^{(1/3),*}\in \mathcal{N}_v}|\bar{I}(\bds u^{(1/3),*},\bds v^{(1/3),*})|^2 \geq {t}\Big) 
\\
&\leq 7^{c_upk_1 + c_v qk_2}\times \p\Big(3^2|\bar{I}(\bds u^{(1/3),*},\bds v^{(1/3),*})|^2 \geq {t}\Big) 
\\
&= 7^{c_upk_1 + c_v qk_2}\times \p\Big(|\bar{I}(\bds u^{(1/3),*},\bds v^{(1/3),*})| \geq \sqrt{t}/3\Big)
\\
&\leq 2 a_1^{pk_1 + qk_2}\exp\Big[-n\min\Big\{a_2'nt,a_3'\sqrt{t}\Big\}\Big],
\ee
where the last inequality holds by \eqref{lm:srate:barcon} and $a_1 = 7^{\max(c_u,c_v)}>1$. 
\par
If we compare tail probability bound of $\big\|\mathcal{H}^{\mathcal{B}}_{n,k_1,k_2}  - \E\mathcal{H}^{\mathcal{B}}_{n,k_1,k_2} \big\|^2_{2}$ in \eqref{lm:srate:tailbar},
\bee\nonumber
2a_1^{pk_1 + qk_2}\exp\big[-n\min\{a_2t,a_3\sqrt{t}\}\big],
\ee with the tail probability bound of $\|\xi\big\{\tilde{\M \Sigma}_{0,\mathcal{B}}(k_1,k_2)-\M\Sigma^{*,\mathcal{B}}_2(k_2)\otimes \M\Sigma^{*,\mathcal{B}}_1(k_1)\big\}\big\|^2_2$ in \eqref{lm:srate:con}, 
\bee\nonumber
2 a_1^{pk_1 + qk_2}\exp\big[-n\min\big\{a_2'nt,a_3'\sqrt{t}\big\}\big],
\ee
one can see that when $n\rightarrow +\infty$, for each $t$, $\big\|\mathcal{H}^{\mathcal{B}}_{n,k_1,k_2}  - \E\mathcal{H}^{\mathcal{B}}_{n,k_1,k_2} \big\|^2_{2}$ has a sharper or equally sharp tail probability bound compared to $\|\xi\big\{\tilde{\M \Sigma}_{0,\mathcal{B}}(k_1,k_2)-\M\Sigma^{*,\mathcal{B}}_2(k_2)\otimes \M\Sigma^{*,\mathcal{B}}_1(k_1)\big\}\big\|^2_2$, because for each $t$, $ n\min\{a_2'nt,a_3'\sqrt{t}\}\succsim n\min\{a_2t,a_3\sqrt{t}\}$, and thus $\exp\big[-n\min\big\{a_2'nt,a_3'\sqrt{t}\big\}\big]\precsim \exp\big[-n\min\{a_2t,a_3\sqrt{t}\}\big]$. Since the bound of $\E\|\xi\big\{\tilde{\M \Sigma}_{0,\mathcal{B}}(k_1,k_2)-\M\Sigma^{*,\mathcal{B}}_2(k_2)\otimes \M\Sigma^{*,\mathcal{B}}_1(k_1)\big\}\big\|^2_2$ in {Step 1} is based on the tail probability bound \eqref{lm:srate:con} of $\big\|\xi\big\{\tilde{\M \Sigma}_{0,\mathcal{B}}(k_1,k_2)-\M\Sigma^{*,\mathcal{B}}_2(k_2)\otimes \M\Sigma^{*,\mathcal{B}}_1(k_1)\big\}\big\|^2_2$, if we use the same argument in {Step 1.4} to derive the bound of $\big\|\mathcal{H}^{\mathcal{B}}_{n,k_1,k_2}  - \E\mathcal{H}^{\mathcal{B}}_{n,k_1,k_2} \big\|^2_{2}$ based on \eqref{lm:srate:tailbar}, we can also show
\bee\label{lm:srate:HB}
\E \big\|\mathcal{H}^{\mathcal{B}}_{n,k_1,k_2}  - \E\mathcal{H}^{\mathcal{B}}_{n,k_1,k_2} \big\|^2_{2} \precsim 
\begin{cases}
\frac{pk_1 + qk_2}{n} & pk_1 + qk_2 \precsim n
\\
\Big(\frac{pk_1 + qk_2}{n}\Big)^2 & pk_1 + qk_2 \succ n,
\end{cases}
\ee
 which implies $\E \big\|\mathcal{H}^{\mathcal{B}}_{n,k_1,k_2}  - \E\mathcal{H}^{\mathcal{B}}_{n,k_1,k_2} \big\|^2_{2}$ is in the same asymptotic order of $\E\|\xi\big\{\tilde{\M \Sigma}_{0,\mathcal{B}}(k_1,k_2)-\M\Sigma^{*,\mathcal{B}}_2(k_2)\otimes \M\Sigma^{*,\mathcal{B}}_1(k_1)\big\}\big\|^2_2$ in \eqref{lm:srate:ratestep1}, when $n\rightarrow +\infty$.
\\
\par
\noindent\textbf{(b)}: Next we show $\|\E\mathcal{H}^{\mathcal{B}}_{n,k_1,k_2} \|_2^2$ is negligible compared to $\E\|\xi\big\{\tilde{\M \Sigma}_{0,\mathcal{B}}(k_1,k_2)-\M\Sigma^{*,\mathcal{B}}_2(k_2)\otimes \M\Sigma^{*,\mathcal{B}}_1(k_1)\big\}\big\|^2_2$. For $\E \mathcal{H}^{\mathcal{B}}_{n,k_1,k_2} = \E\xi\big[\vecc(\bar{\M X})\vecc(\bar{\M X})^{\T}\HDBand \big]$, the $(m_2 - 1)*q + l_2, (m_1 - 1)*p + l_1$th entry is 
\begin{align*}\nonumber
&\E \Big\{\vecc(\bar{\M X})_{l_1,l_2}\vecc(\bar{\M X})_{m_1,m_2}\times\M I\big(|l_1 - m_1|\leq k_1, |l_2 - m_2|\leq k_2\big)\Big\} 
\\
&= \frac{1}{n^2}\E\Big\{\sum_{1\leq i_1,i_2\leq n}\vecc(\M X_{i_1})_{l_1,l_2}\vecc(\M X_{i_2})_{m_1,m_2}\Big\}\times\M I\big(|l_1 - m_1|\leq k_1, |l_2 - m_2|\leq k_2\big) 
\\
&=\frac{1}{n^2}\Big[\sum_{1\leq i\leq n}\E\{\vecc(\M X_{i})_{l_1,l_2}\vecc(\M X_{i})_{m_1,m_2}\}\Big]\times\M I\big(|l_1 - m_1|\leq k_1, |l_2 - m_2|\leq k_2\big) 
\\
&=\frac{1}{n}\sigma^{(1)}_{l_1,m_1}\sigma^{(2)}_{l_2,m_2}\times\M I\big(|l_1 - m_1|\leq k_1, |l_2 - m_2|\leq k_2\big).
\end{align*}
This implies $\E \mathcal{H}^{\mathcal{B}}_{n,k_1,k_2} = \frac{1}{n}\xi\big\{\M \Sigma_2^{*,\mathcal{B}}(k_2)\otimes \M \Sigma_1^{*,\mathcal{B}}(k_1)\big\} = \frac{1}{n}\vecc\big\{\M \Sigma_2^{*,\mathcal{B}}(k_2)\big\}\vecc\big\{\M \Sigma_2^{*,\mathcal{B}}(k_2)\big\}^\T$ and 
\bee\label{lm:srate:HB2}
\|\E \mathcal{H}^{\mathcal{B}}_{n,k_1,k_2}\|_2^2 &= \Big\|\frac{1}{n}\vecc\big\{\M \Sigma_2^{*,\mathcal{B}}(k_2)\big\}\vecc\big\{\M \Sigma_2^{*,\mathcal{B}}(k_2)\big\}^\T\Big\|_2^2 
\\
&=\frac{1}{n^2}\|\vecc\big\{\M \Sigma_2^{*,\mathcal{B}}(k_2)\big\}\vecc\big\{\M \Sigma_2^{*,\mathcal{B}}(k_2)\big\}^\T\|_{\F}^2 
\\
& = \frac{1}{n^2}\big\|\xi\big\{\M \Sigma_2^{*,\mathcal{B}}(k_2)\otimes \M \Sigma_1^{*,\mathcal{B}}(k_1)\big\}\big\|_\F^2
\\
&=\frac{1}{n^2}\|\M \Sigma_2^{*,\mathcal{B}}(k_2)\otimes\M \Sigma_1^{*,\mathcal{B}}(k_1)\|_\F^2 \\
&= \frac{1}{n^2}\|\M \Sigma_2^{*,\mathcal{B}}(k_2)\|_\F^2\|\M \Sigma_1^{*,\mathcal{B}}(k_1)\|_\F^2
\\
&\precsim\frac{pq}{n^2},
\ee
where the second equality holds because $\text{Rank}\big[\vecc\big\{\M \Sigma_2^{*,\mathcal{B}}(k_2)\big\}\vecc\big\{\M \Sigma_2^{*,\mathcal{B}}(k_2)\big\}^\T\big] = 1$, the last equality holds by property of Kronecker product (See \citet{lancaster1972norms}), and the last inequality holds by,
\bee
\|\M \Sigma_1^{*,\mathcal{B}}(k_1)\|^2_\F &=\sum_{|l_1 - m_1|\leq k_1} \{\sigma^{(1)}_{l_1,m_1}\}^2
\\
&=\sum_{l_1 = m_1} \{\sigma^{(1)}_{l_1,m_1}\}^2 + \sum_{|l_1 - m_1|\leq k_1\atop l_1\neq m_1} \{\sigma^{(1)}_{l_1,m_1}\}^2
\\
&\leq p\|\M\Sigma_1^*\|_{\max}^2 + pC_0^2
\\
&\precsim p.
\ee
Here we use the condition of $\M \Sigma^*_1\in\mathcal{F}(\varepsilon_0, \alpha)$ as an example. Same result also holds for $\M \Sigma^*_1\in\mathcal{M}(\varepsilon_0, \alpha)$.
\par
Now we compare the rate of $\E\|\xi\big\{\tilde{\M \Sigma}_{0,\mathcal{B}}(k_1,k_2)-\M\Sigma^{*,\mathcal{B}}_2(k_2)\otimes \M\Sigma^{*,\mathcal{B}}_1(k_1)\big\}\big\|^2_2$ in \eqref{lm:srate:ratestep1} {\color{black}with the order $pq/n^2$.}
\par\noindent \textbf{i. When $pk_1+qk_2 \precsim n$: }we have $\frac{pk_1 + qk_2}{n}\precsim 1$ and thus
\bee\label{lm:srate:compare}
\frac{pk_1 + qk_2}{n}&\succsim \Big(\frac{pk_1 + qk_2}{n}\Big)^2 
= \frac{2pqk_1k_2 + p^2k_1^2 + q^2k_2^2}{n^2} 
\succsim \frac{pqk_1k_2}{n^2}
\succsim \frac{pq}{n^2}.
\ee
\textbf{ii. When $pk_1 + qk_2\succ n$: } same as \eqref{lm:srate:compare}, we have  $\big(\frac{pk_1 + qk_2}{n}\big)^2\succsim \frac{pq}{n^2}$. 
\par
 Summarizing the results in {\color{black} these two scenarios}, we conclude $\E\|\mathcal{H}^{\mathcal{B}}_{n,k_1,k_2}\|_2^2$ is  negligible compared to the rate of $\E\|\xi\big\{\tilde{\M \Sigma}_{0,\mathcal{B}}(k_1,k_2)-\M\Sigma^{*,\mathcal{B}}_2(k_2)\otimes \M\Sigma^{*,\mathcal{B}}_1(k_1)\big\}\big\|^2_2$ in \eqref{lm:srate:ratestep1}.
\\
\par
Results in {\color{black} these two scenarios} show that  $\|\E\mathcal{H}^{\mathcal{B}}_{n,k_1,k_2} \|_2^2$ is always negligible, compared to the rate of $\E \|\mathcal{H}^{\mathcal{B}}_{n,k_1,k_2}  - \E\mathcal{H}^{\mathcal{B}}_{n,k_1,k_2}\|^2_2$ in \eqref{lm:srate:HB}. Then by \eqref{lm:srate:final}, we finally show $\E\|\mathcal{H}^{\mathcal{B}}_{n,k_1,k_2}\|_2^2$ is in the same asymptotic order of $\E\|\xi\big\{\tilde{\M \Sigma}_{0,\mathcal{B}}(k_1,k_2)-\M\Sigma^{*,\mathcal{B}}_2(k_2)\otimes \M\Sigma^{*,\mathcal{B}}_1(k_1)\big\}\big\|^2_2$ in \eqref{lm:srate:ratestep1}, when $n\rightarrow +\infty$, i.e.,
\bee\label{lm:srate:step2}
\E\|\mathcal{H}^{\mathcal{B}}_{n,k_1,k_2}\|_2^2 \precsim 
\begin{cases}
\frac{pk_1 + qk_2}{n} & pk_1 + qk_2 \precsim n
\\
\Big(\frac{pk_1 + qk_2}{n}\Big)^2 & pk_1 + qk_2 \succ n.
\end{cases}
\ee
\\
\par
\noindent\textbf{Step 3:} Finally, combining \eqref{lm:srate:bstart}, \eqref{lm:srate:ratestep1} and \eqref{lm:srate:step2}, we show
\bee\label{lm:srate:b}
\frac{1}{pq}\E{\|\xi\big\{\tilde{\M \Sigma}_{\mathcal{B}}(k_1,k_2)\big\} - \xi\{\M\Sigma^{*,\mathcal{B}}_2(k_2)\otimes \M\Sigma^{*,\mathcal{B}}_1(k_1)\}\|_{2}^2}  &\precsim \frac{1}{pq}
\begin{cases}
\frac{pk_1 + qk_2}{n} & pk_1 + qk_2 \precsim n
\\
\Big(\frac{pk_1 + qk_2}{n}\Big)^2 & pk_1 + qk_2 \succ n.
\end{cases}
\\
& \asymp \begin{cases}
\frac{k_1}{qn} + \frac{k_2}{pn} & pk_1 + qk_2 \precsim n
\\
\frac{pk^2_1}{qn^2} + \frac{qk^2_2}{pn^2} & pk_1 + qk_2 \succ n. 
\end{cases}
\ee
\\

Next we prove Lemma \ref{lemma:srate} when $\eta = \mathcal{T}$ for the proposed tapering estimator. {\color{black} We note the proof procedures for the proposed tapering estimators are analogous to the proof for the proposed banded estimator, and most of the proof techniques can be directly applied to the proposed tapering estimator case. Thus we omit those details that are similar and only focus on parts that are different. }
\par
Similar to \eqref{lm:srate:bstart}, we have
\bee\label{lm:srate:tstart}
&\frac{1}{pq}\E\|\xi\big\{\tilde{\M \Sigma}_{\mathcal{T}}(k_1,k_2)\big\} - \xi\{\M\Sigma^{*,\mathcal{T}}_2(k_2)\otimes \M\Sigma^{*,\mathcal{T}}_1(k_1)\}\|_{2}^2
\\
&\precsim \frac{1}{pq}\E\big\|\underbrace{\xi\big\{\tilde{\M \Sigma}_{0,\mathcal{T}}(k_1,k_2)-\M\Sigma^{*,\mathcal{T}}_2(k_2)\otimes \M\Sigma^{*,\mathcal{T}}_1(k_1)\big\}}_{\Delta_n^{\mathcal{T}}}\big\|_{2}^2 
\\
&+ \frac{1}{pq}\E\big\|\xi\big[\vecc(\bar{\M X})\vecc(\bar{\M X})^{\T}\HDTaper \big]\big\|_{2} ^ 2 .
\ee
Similar to $\eta = \mathcal{B}$, we use three steps to show the desired result. In {Step 1}, we bound the first term $\frac{1}{pq}\E\big\|{\xi\big\{\tilde{\M \Sigma}_{0,\mathcal{T}}(k_1,k_2)-\M\Sigma^{*,\mathcal{T}}_2(k_2)\otimes \M\Sigma^{*,\mathcal{T}}_1(k_1)\big\}}\big\|_{2}^2$ on the right-hand side of \eqref{lm:srate:tstart}. In {Step 2} we show the error rate of second term $ \frac{1}{pq}\E\big\|\xi\big[\vecc(\bar{\M X})\vecc(\bar{\M X})^{\T}\HDTaper \big]\big\|_{2} ^ 2$ is in the same asymptotic order of the first term. In {Step 3} we combine the bounds of two terms together and finally show \eqref{lemma:srate:res} for $\eta = \mathcal{T}$.
\\
\par 
\noindent \textbf{Step 1: }For the first term on the right-hand side of \eqref{lm:srate:tstart}, similar to \eqref{l:srate:vdu:3} we can show
\bee\nonumber
|\bds v^\T&\Delta_n^{\mathcal{T}}\bds u| = \Big|\big[\bds v\circ\vecc\big\{ T_{k_2}(\bds 1_q)\big\}\big]^\T\times \xi\big\{\hat{\M \Sigma}_0 - \E\hat{\M \Sigma}_0\big\}\times\big[\bds u\circ\vecc\big\{ T_{k_1}(\bds 1_p)\big\}\big]\Big|
\ee
for any $\bds v\in\mathcal{U}_{q^2}, \bds u \in \mathcal{U}_{p^2}$. By definition of $T_k(\cdot)$ in \eqref{def:Tk}, we know $\frac{\bds v\circ\vecc\{ T_{k_2}(\bds 1_q)\}}{\|\bds v\circ\vecc\{ T_{k_2}(\bds 1_q)\}\|} \in \mathcal{U}_{q^2}^{\mathcal{B}}(k_2)$ and $\frac{\bds u\circ\vecc\{ T_{k_1}(\bds 1_p)\}}{\|\bds u\circ\vecc\{ T_{k_1}(\bds 1_p)\}\|} \in \mathcal{U}_{p^2}^{\mathcal{B}}(k_1)$ since for the coordinates that are non-zero in $\vecc\{ T_{k_2}(\bds 1_q)\}$, the corresponding coordinates in $\vecc\{B_{k_1}(\bds 1_p)\}$  must also be non-zero. Therefore,
\bee\nonumber
|\bds v^\T \Delta_n^{\mathcal{T}}\bds u| &= \Big|\big[\bds v\circ\vecc\big\{ T_{k_2}(\bds 1_q)\big\}\big]^\T\times \xi\big\{\hat{\M \Sigma}_0 - \E\hat{\M \Sigma}_0\big\}\times\big[\bds u\circ\vecc\big\{ T_{k_1}(\bds 1_p)\big\}\big]\Big|
\\
&= \underbrace{\|\bds u\circ\vecc\big\{ T_{k_1}(\bds 1_p)\big\}\|\|\bds u\circ\vecc\big\{ T_{k_1}(\bds 1_p)\big\}\|}_{\leq 1}
\\
&\times \Big|{\underbrace{\Big[\frac{\bds v\circ\vecc\big\{ T_{k_2}(\bds 1_q)\big\}}{\|\bds v\circ\vecc\big\{ T_{k_2}(\bds 1_q)\big\}\|}\Big]}_{\in\mathcal{U}^{\mathcal{B}}_{q^2}(k_2)}}^\T\times \xi\big\{\hat{\M \Sigma}_0 - \E\hat{\M \Sigma}_0\big\}\times\underbrace{\Big[\frac{\bds u\circ\vecc\big\{ T_{k_1}(\bds 1_p)\big\}}{\|\bds u\circ\vecc\big\{ T_{k_1}(\bds 1_p)\big\}\|}\Big]}_{\in \mathcal{U}_{p^2}^{\mathcal{B}}(k_1)}\Big|
\\
&\leq  \Big|{\underbrace{\Big[\frac{\bds v\circ\vecc\big\{ T_{k_2}(\bds 1_q)\big\}}{\|\bds v\circ\vecc\big\{ T_{k_2}(\bds 1_q)\big\}\|}\Big]}_{\in\mathcal{U}^{\mathcal{B}}_{q^2}(k_2)}}^\T\times \xi\big\{\hat{\M \Sigma}_0 - \E\hat{\M \Sigma}_0\big\}\times\underbrace{\Big[\frac{\bds u\circ\vecc\big\{ T_{k_1}(\bds 1_p)\big\}}{\|\bds u\circ\vecc\big\{ T_{k_1}(\bds 1_p)\big\}\|}\Big]}_{\in \mathcal{U}_{p^2}^{\mathcal{B}}(k_1)}\Big|,
\ee
which implies
\bee\nonumber
\|\xi\big\{\tilde{\M \Sigma}_{0,\mathcal{T}}(k_1,k_2)-\M\Sigma^{*,\mathcal{T}}_2(k_2)\otimes \M\Sigma^{*,\mathcal{T}}_1(k_1)\big\}\|_2 &= \sup_{\bds u\in\mathcal{U}_{p^2}, \bds v\in\mathcal{U}_{q^2}}|\bds v^\T\Delta_n^{\mathcal{T}}\bds u| 
\\
&\leq \sup_{\bds u^*\in\mathcal{U}^{\mathcal{B}}_{p^2}(k_1), \bds v^*\in\mathcal{U}_{q^2}^{\mathcal{B}}(k_2)}|{\bds v^*}^\T\times\xi\big\{\hat{\M \Sigma}_0 - \E\hat{\M \Sigma}_0\big\}\times \bds u^*| 
\\
&=\|\xi\big\{\tilde{\M \Sigma}_{0,\mathcal{B}}(k_1,k_2)-\M\Sigma^{*,\mathcal{B}}_2(k_2)\otimes \M\Sigma^{*,\mathcal{B}}_1(k_1)\big\}\|_2, 
\ee 
where the last equality holds by \eqref{l:srate:vdu:final}. Then we have 
$$\E\|\xi\big\{\tilde{\M \Sigma}_{0,\mathcal{T}}(k_1,k_2)-\M\Sigma^{*,\mathcal{T}}_2(k_2)\otimes \M\Sigma^{*,\mathcal{T}}_1(k_1)\big\}\|^2_2\leq  \E \|\xi\big\{\tilde{\M \Sigma}_{0,\mathcal{B}}(k_1,k_2)-\M\Sigma^{*,\mathcal{B}}_2(k_2)\otimes \M\Sigma^{*,\mathcal{B}}_1(k_1)\big\}\|^2_2.$$
{\color{black} Thus,} $\E\|\xi\big\{\tilde{\M \Sigma}_{0,\mathcal{T}}(k_1,k_2)-\M\Sigma^{*,\mathcal{T}}_2(k_2)\otimes \M\Sigma^{*,\mathcal{T}}_1(k_1)\big\}\|^2_2$ has the same rate as the rate of $\E \|\xi\big\{\tilde{\M \Sigma}_{0,\mathcal{B}}(k_1,k_2)-\M\Sigma^{*,\mathcal{B}}_2(k_2)\otimes \M\Sigma^{*,\mathcal{B}}_1(k_1)\big\}\|^2_2$  given in \eqref{lm:srate:ratestep1}.
\\
\par
\noindent\textbf{Step 2: }For the second term on the right-hand side of \eqref{lm:srate:tstart}, we denote $\mathcal{H}^{\mathcal{T}}_{n,k_1,k_2} \equiv \xi\big[\vecc(\bar{\M X})\vecc(\bar{\M X})^{\T}\HDTaper \big]$ and bound it similarly to \eqref{lm:srate:final},
\bee\label{lm:srate:t2}
\E\big\|\xi\big[\vecc(\bar{\M X})\vecc(\bar{\M X})^{\T}\HDTaper \big]\big\|_{2} ^ 2  \precsim \E\|\mathcal{H}^{\mathcal{T}}_{n,k_1,k_2} - \E\mathcal{H}^{\mathcal{T}}_{n,k_1,k_2}\|_2^2 + \|\E\mathcal{H}^{\mathcal{T}}_{n,k_1,k_2}\|_2^2.
\ee
\textbf{(a): }Similar to \eqref{l:srate:sec1}, by definition of $T_{k}(\cdot)$, for any $\bds v\in\mathcal{U}_{p^2}, \bds u\in\mathcal{U}_{q^2}$, let $\tilde{\bds v}' = \frac{\bds v\circ \vecc\{T_{k_2}(\bds 1_q)\}}{\|\bds v\circ \vecc\{T_{k_2}(\bds 1_q)\}\|}, \tilde{\bds u}' = \frac{\bds u\circ \vecc\{T_{k_2}(\bds 1_q)\}}{\|\bds u\circ \vecc\{T_{k_2}(\bds 1_q)\}\|}$. We have already shown $\tilde{\bds v}'  \in \mathcal{U}_{q^2}^{\mathcal{B}}(k_2)$ and $\tilde{\bds u}' \in \mathcal{U}_{p^2}^{\mathcal{B}}(k_1)$. Then we have
\bee\nonumber
&\big\|\mathcal{H}^{\mathcal{T}}_{n,k_1,k_2}  - \E\mathcal{H}^{\mathcal{T}}_{n,k_1,k_2} \big\|_{2} 
\\
&= \sup_{\bds u\in\mathcal{U}_{q^2}, \bds v\in\mathcal{U}_{p^2}}\underbrace{\|\bds v\circ \vecc\{T_{k_2}(\bds 1_q)\}\|\cdot\|\bds u\circ \vecc\{T_{k_1}(\bds 1_p)\}\|}_{\leq 1\text{ as }\bds u\in\mathcal{U}_{q^2}, \bds v\in\mathcal{U}_{p^2}}\times|{\underbrace{\tilde{\bds v}'}_{\in\mathcal{U}_{q^2}^{\mathcal{B}}(k_2)}}^\T \times\Big[ \xi\{\vecc(\bar{\M X})\vecc(\bar{\M X})^{\T}\} - 
\\
&\E\big[\xi\{\vecc(\bar{\M X})\vecc(\bar{\M X})^{\T}\}\big]\Big]\times \underbrace{\tilde{\bds u}'}_{\in\mathcal{U}_{p^2}^{\mathcal{B}}(k_1)}|
\\
&\leq \sup_{\bds u^* \in\mathcal{U}_{p^2}^{\mathcal{B}}(k_1),\bds v^*\in\mathcal{U}_{q^2}^{\mathcal{B}}(k_2)}\big|{\bds v^*}^\T\cdot\Big[\xi\{\vecc(\bar{\M X})\vecc(\bar{\M X})^{\T}\} - \E\big[\xi\{\vecc(\bar{\M X})\vecc(\bar{\M X})^{\T}\}\big]\Big]\cdot \bds u^*\big|
\\
&=\big\|\mathcal{H}^{\mathcal{B}}_{n,k_1,k_2}  - \E\mathcal{H}^{\mathcal{B}}_{n,k_1,k_2} \big\|_{2}, \quad  \text{(By \eqref{lm:srate:bar1})}
\ee
where $\tilde{\bds v}' = \frac{\bds v\circ \vecc\{T_{k_2}(\bds 1_q)\}}{\|\bds v\circ \vecc\{T_{k_2}(\bds 1_q)\}\|}, \tilde{\bds u}' = \frac{\bds u\circ \vecc\{T_{k_2}(\bds 1_q)\}}{\|\bds u\circ \vecc\{T_{k_2}(\bds 1_q)\}\|}$. This implies $$\E\big\|\mathcal{H}^{\mathcal{T}}_{n,k_1,k_2}  - \E\mathcal{H}^{\mathcal{T}}_{n,k_1,k_2} \big\|_{2}^2\precsim \E\big\|\mathcal{H}^{\mathcal{B}}_{n,k_1,k_2}  - \E\mathcal{H}^{\mathcal{B}}_{n,k_1,k_2} \big\|_{2}^2,$$
and thus by \eqref{lm:srate:HB} the convergence rate of $\E\big\|\mathcal{H}^{\mathcal{T}}_{n,k_1,k_2}  - \E\mathcal{H}^{\mathcal{T}}_{n,k_1,k_2} \big\|_{2}^2$ is the same as the rate $\begin{cases}
\frac{pk_1 + qk_2}{n} & pk_1 + qk_2 \precsim n
\\
\Big(\frac{pk_1 + qk_2}{n}\Big)^2 & pk_1 + qk_2 \succ n.
\end{cases}$ in \eqref{lm:srate:ratestep1}.
\par
\noindent\textbf{(b): }Similar to \eqref{lm:srate:HB2}, we have
\bee\nonumber
\|\E \mathcal{H}^{\mathcal{T}}_{n,k_1,k_2}\|_2^2 &= \Big\|\frac{1}{n}\vecc\big\{\M \Sigma_2^{*,\mathcal{T}}(k_2)\big\}\vecc\big\{\M \Sigma_2^{*,\mathcal{T}}(k_2)\big\}^\T\Big\|_2^2 
\\
&=\frac{1}{n^2}\|\vecc\big\{\M \Sigma_2^{*,\mathcal{T}}(k_2)\big\}\vecc\big\{\M \Sigma_2^{*,\mathcal{T}}(k_2)\big\}^\T\|_{\F}^2 
\\
& = \frac{1}{n^2}\big\|\xi\big\{\M \Sigma_2^{*,\mathcal{T}}(k_2)\otimes \M \Sigma_1^{*,\mathcal{T}}(k_1)\big\}\big\|_\F^2
\\
&=\frac{1}{n^2}\|\M \Sigma_2^{*,\mathcal{T}}(k_2)\otimes\M \Sigma_1^{*,\mathcal{T}}(k_1)\|_\F^2 
\\
&= \frac{1}{n^2}\|\M \Sigma_2^{*,\mathcal{T}}(k_2)\|_\F^2\|\M \Sigma_1^{*,\mathcal{T}}(k_1)\|_\F^2
\\
&\leq \frac{1}{n^2}\|\M \Sigma_2^{*,\mathcal{B}}(k_2)\|_\F^2\|\M \Sigma_1^{*,\mathcal{B}}(k_1)\|_\F^2
\\
&\precsim\frac{pq}{n^2},
\ee
where $\|\M \Sigma_1^{*,\mathcal{T}}(k_1)\|_\F^2\leq \|\M \Sigma_1^{*,\mathcal{B}}(k_1)\|_\F^2, \|\M \Sigma_2^{*,\mathcal{T}}(k_2)\|_\F^2\leq \|\M \Sigma_2^{*,\mathcal{B}}(k_2)\|_\F^2$ holds by definition. Thus, similarly to {Step 2}, the $\eta = \mathcal{B}$ case, the convergence rate of $\|\E \mathcal{H}^{\mathcal{T}}_{n,k_1,k_2}\|_2^2$ is also negligible compared to the rate $\begin{cases}
\frac{pk_1 + qk_2}{n} & pk_1 + qk_2 \precsim n
\\
\Big(\frac{pk_1 + qk_2}{n}\Big)^2 & pk_1 + qk_2 \succ n.
\end{cases}$ in \eqref{lm:srate:ratestep1}. 
\\
\par
\noindent\textbf{Step 3:} With \eqref{lm:srate:tstart} and \eqref{lm:srate:t2}, we can summarize the results above and conclude $\E{\|\xi\big\{\tilde{\M \Sigma}_{\mathcal{T}}(k_1,k_2)\big\} - \xi\{\M\Sigma^{*,\mathcal{T}}_2(k_2)\otimes \M\Sigma^{*,\mathcal{T}}_1(k_1)\}\|_{2}^2}$ also has the convergence rate, $\begin{cases}
\frac{pk_1 + qk_2}{n} & pk_1 + qk_2 \precsim n
\\
\Big(\frac{pk_1 + qk_2}{n}\Big)^2 & pk_1 + qk_2 \succ n.
\end{cases}$ same with \eqref{lm:srate:ratestep1}. \\
This is equivalent to the statement that $\frac{1}{pq}\E{\|\xi\big\{\tilde{\M \Sigma}_{\mathcal{T}}(k_1,k_2)\big\} - \xi\{\M\Sigma^{*,\mathcal{T}}_2(k_2)\otimes \M\Sigma^{*,\mathcal{T}}_1(k_1)\}\|_{2}^2}$ has same convergence rate as $\frac{1}{pq}\E{\|\xi\big\{\tilde{\M \Sigma}_{\mathcal{B}}(k_1,k_2)\big\} - \xi\{\M\Sigma^{*,\mathcal{B}}_2(k_2)\otimes \M\Sigma^{*,\mathcal{B}}_1(k_1)\}\|_{2}^2}$ in \eqref{lm:srate:b}, which finishes our proof.
\end{proof}

{\color{black} Here we present Lemmas \ref{lemma:robust:1}--\ref{lemma:robust:3} that are used to prove Theorem \ref{T4}, which shows the convergence rate of the proposed robust covariance estimate. }
\begin{lemma}\label{lemma:robust:1}
Let $\vecc(\M X_1),\vecc(\M  X_2),\cdots,\vecc(\M X_n)$ be i.i.d random vectors in $\RR^{pq}$ with true covariance $\M \Sigma^* = \M \Sigma^*_2 \otimes \M \Sigma^*_1$. Assume $\E(|x^{(i)}_{l_1,l_2}\cdot x^{(i)}_{m_1,m_2}|^2) \leq M < +\infty$ for $1\leq i \leq n, 1\leq l_1,m_1 \leq p$, $1\leq l_2,m_2 \leq q$, where $M$ is a constant that does not depend on $n, l_1,m_1, l_2,m_2$.
\par
For the proposed robust banded estimator, when $\M \Sigma_1^{*}\in \mathcal{F}(\varepsilon_0, \alpha_1),\M \Sigma_2^{*} \in \mathcal{F}(\varepsilon_0, \alpha_2)$ or $\M \Sigma_1^{*}\in \mathcal{M}(\varepsilon_0, \alpha_1),\M \Sigma_2^{*} \in \mathcal{M}(\varepsilon_0, \alpha_2)$, for $\eta\in\{\mathcal{B},\mathcal{T}\}$ we have
\begin{align}\label{l:rfrate:res1}
\frac{1}{pq}\E\big\|\widecheck{\M \Sigma}_{\mathcal{\eta}}(k_1,k_2) - \M \Sigma_{\mathcal{R}}^{*,\eta}(k_1,k_2)\big\|^2_\F \precsim \frac{k_1k_2}{n
}.
\end{align}
\end{lemma}
\begin{proof}[Proof of Lemma \ref{lemma:robust:1}]
\par
The notation of this proof is mainly contained in Section \ref{notation:robust}. In Section \ref{notation:robust}, we have shown $\widecheck{\M\Sigma}$ as a sample covariance estimator of i.i.d. random vectors $\vecc({\widecheck{\M X}}^c_1),\dots,\vecc({\widecheck{\M X}}^c_n)\in \RR^{pq}$, and have defined $\widecheck{\M\Sigma}_{\eta}(k_1,k_2)$ with $\eta\in\{\mathcal{B},\mathcal{T}\}$ as the doubly banded/tapering matrix of $\widecheck{\M\Sigma}$. This is similar to conditions of Lemma \ref{l:frate} that $\hat{\M \Sigma}$ is a sample covariance estimator of i.i.d. random vectors $\vecc(\M X_1),\dots,\vecc(\M X_n)\in\RR^{pq}$ and $\hat{\M \Sigma}_\eta(k_1,k_2)$ is doubly banded/tapering matrix of $\hat{\M \Sigma}$.
\par
 Furthermore, since $\widecheck{x}^{(i)}_{l_1,l_2} = \text{sgn}(x^{(i)}_{l_1,l_2})(|x^{(i)}_{l_1,l_2}|\wedge \tau)$, we have $|\widecheck{x}^{(i)}_{l_1,l_2}| \leq |{x}^{(i)}_{l_1,l_2}|$ and 
\bee\label{lm:robust:1:1}
&\E\Bigg(\Big|\big\{\widecheck{x}^{(i)}_{l_1,l_2} - \E\widecheck{x}^{(i)}_{l_1,l_2}\big\}\cdot \big\{\widecheck{x}^{(i)}_{m_1,m_2} - \E\widecheck{x}^{(i)}_{m_1,m_2}\big\}\Big|^2\Bigg)
\\
&= \E\Bigg(\Big|\widecheck{x}^{(i)}_{l_1,l_2}\widecheck{x}^{(i)}_{m_1,m_2} -\widecheck{x}^{(i)}_{l_1,l_2}\E\widecheck{x}^{(i)}_{m_1,m_2} - \widecheck{x}^{(i)}_{m_1,m_2}\E\widecheck{x}^{(i)}_{l_1,l_2} +\E\widecheck{x}^{(i)}_{l_1,l_2} \E\widecheck{x}^{(i)}_{m_1,m_2}\big\}\Big|^2\Bigg)
\\
&\precsim \E\Big(\Big|\widecheck{x}^{(i)}_{l_1,l_2}\widecheck{x}^{(i)}_{m_1,m_2}\Big|^2\Big) + \E\Big(\Big|\widecheck{x}^{(i)}_{l_1,l_2}\E\widecheck{x}^{(i)}_{m_1,m_2}\Big|^2\Big) + \E\Big(\Big|\widecheck{x}^{(i)}_{m_1,m_2}\E\widecheck{x}^{(i)}_{l_1,l_2}\Big|^2\Big) + \E\Big(\Big|\E\widecheck{x}^{(i)}_{m_1,m_2}\E\widecheck{x}^{(i)}_{l_1,l_2}\Big|^2\Big)
\\
&\precsim \E\Big(\Big|{x}^{(i)}_{l_1,l_2}{x}^{(i)}_{m_1,m_2}\Big|^2\Big) + \E\Big(\Big|{x}^{(i)}_{l_1,l_2}\E\widecheck{x}^{(i)}_{m_1,m_2}\Big|^2\Big) + \E\Big(\Big|{x}^{(i)}_{m_1,m_2}\E\widecheck{x}^{(i)}_{l_1,l_2}\Big|^2\Big) + \Big|\E\widecheck{x}^{(i)}_{m_1,m_2}\Big|^2\times\Big|\E\widecheck{x}^{(i)}_{l_1,l_2}\Big|^2,
\ee
where the first inequality holds by triangle inequality and the second inequality holds by $|\widecheck{x}^{(i)}_{l_1,l_2}| \leq |{x}^{(i)}_{l_1,l_2}|$. Then by Cauchy-Schwarz inequality we have
\bee\label{lm:robust:1:2}
\big|\E\widecheck{x}^{(i)}_{l_1,l_2}\big|^2 &\leq \E\big(\big|\widecheck{x}^{(i)}_{l_1,l_2}\big|^2\big)\leq \E\big(\big|{x}^{(i)}_{l_1,l_2}\big|^2\big) \leq\Big\{\E\big(\big|{x}^{(i)}_{l_1,l_2}\big|^4\big)\Big\}^{1/2}
\leq \sqrt{M}
\ee
for any $1\leq l_1 \leq p, 1\leq  l_2 \leq q$. The finial inequality above holds because $\E(|x^{(i)}_{l_1,l_2}\cdot x^{(i)}_{m_1,m_2}|^2) \leq M $ for any $1\leq l_1,m_1\leq p$ and $1\leq l_2,m_2 \leq q$,  Finally, by combining \eqref{lm:robust:1:1} and \eqref{lm:robust:1:2}, we have the following \textit{finite} entrywise fourth order moment bound for $\vecc(\widecheck{\M X}^c_i)$,
\bee\nonumber
&\E\Bigg(\Big|\big\{\widecheck{x}^{(i)}_{l_1,l_2} - \E\widecheck{x}^{(i)}_{l_1,l_2}\big\}\cdot \big\{\widecheck{x}^{(i)}_{m_1,m_2} - \E\widecheck{x}^{(i)}_{m_1,m_2}\big\}\Big|^2\Bigg)
\\
&\precsim  \E\Big(\Big|{x}^{(i)}_{l_1,l_2}{x}^{(i)}_{m_1,m_2}\Big|^2\Big) + M\times\E\Big(\Big|{x}^{(i)}_{l_1,l_2}\Big|^2\Big) + M\times\E\Big(\Big|{x}^{(i)}_{m_1,m_2}\Big|^2\Big) + M
\\
&\leq \E\Big(\Big|{x}^{(i)}_{l_1,l_2}{x}^{(i)}_{m_1,m_2}\Big|^2\Big) + M\times\Big\{\E\Big(\Big|{x}^{(i)}_{l_1,l_2}\Big|^4\Big)\Big\}^{1/2} + M\times\Big\{\E\Big(\Big|{x}^{(i)}_{m_1,m_2}\Big|^4\Big)\Big\}^{1/2} + M
\\
&\leq M + M\times{M}^{1/2} + M \times M^{1/2} + M
\\
&\leq 2M + 2M^{3/2},
\ee
where the second inequality holds by Cauchy-Schwarz inequality and the third inequality holds by $\E(|x^{(i)}_{l_1,l_2}\cdot x^{(i)}_{m_1,m_2}|^2) \leq M $, for any $1\leq l_1,m_1\leq p$ and $1\leq l_2,m_2 \leq q$. This fourth moment bound for $\vecc(\widecheck{\M X}^c_i)$ is also similar to the fourth moment bound for $\vecc(\M X_i)$ in Lemma \ref{l:frate} that $\E(|x^{(i)}_{l_1,l_2}\cdot x^{(i)}_{m_1,m_2}|^2) \leq M$. Also, $\vecc(\widecheck{\M X}^c_i)$ is  mean zero. The $\M \Sigma_{\mathcal{R}}^{*,\mathcal{\eta}}$ in Lemma \ref{lemma:robust:1} is the doubly banded/tapering matrix of $\cov(\widecheck{\M X}_i)$ and it can be seen as an analogy of $\M \Sigma_2^{*,\eta}(k_2)\otimes \M \Sigma_1^{*,\eta}(k_1)$ which is the doubly banded/tapering matrix of $\cov(\M X_i)$, i.e., $\M\Sigma^*$.
\par
In sum, based on all the similarities of conditions in Lemma \ref{l:frate} and Lemma \ref{lemma:robust:1}, we can directly use the entrywise proof arguments we use in {Proof of Lemma \ref{l:frate}} to prove Lemma \ref{lemma:robust:1}.
\par
For the proposed robust banded estimator, by applying similar arguments in the derivation of \eqref{T_3_1}--\eqref{l:f:final:1}, we can show $\frac{1}{pq}\E\big\|\widecheck{\M \Sigma}_{\mathcal{\mathcal{B}}}(k_1,k_2) - \M \Sigma_{\mathcal{R}}^{*,\mathcal{B}}(k_1,k_2)\big\|^2_\F \precsim \frac{k_1k_2}{n
}$.
\par
For the proposed robust tapering estimator, by applying similar arguments in the derivation of \eqref{l:frate:t1}--\eqref{l:frate:t3}, we  have
$
\frac{1}{pq}\E\big\|\widecheck{\M \Sigma}_{\mathcal{\mathcal{T}}}(k_1,k_2) - \M \Sigma_{\mathcal{R}}^{*,\mathcal{T}}(k_1,k_2)\big\|^2_\F \precsim \frac{k_1k_2}{n} .
$
\end{proof}
To account for diverging $ \tau$, we have the following new results.
\begin{lemma}\label{lemma:robust:2}
Let $\vecc(\M X_1),\vecc(\M  X_2),\cdots,\vecc(\M X_n)$ be i.i.d random vectors in $\RR^{pq}$ with true covariance $\M \Sigma^* = \M \Sigma^*_2 \otimes \M \Sigma^*_1$ where $\M \Sigma_1^{*}\in \mathcal{F}(\varepsilon_0, \alpha_1),\M \Sigma_2^{*} \in \mathcal{F}(\varepsilon_0, \alpha_2)$ or $\M \Sigma_1^{*}\in \mathcal{M}(\varepsilon_0, \alpha_1),\M \Sigma_2^{*} \in \mathcal{M}(\varepsilon_0, \alpha_2)$. 
\par
Under Assumptions \ref{A:minev}--\ref{A:psi2}, for $\eta\in\{\mathcal{B},\mathcal{T}\}$ we have 
\bee\label{lemma:srate:res:rb2}
\frac{1}{pq}\E\big\|\xi\{\widecheck{\M \Sigma}_{\eta}(k_1,k_2)\} - \xi\{{\M \Sigma}^{*,\eta}_{\mathcal{R}}(k_1,k_2)\}\big\|_2^2\precsim (J_n\tau)^4\times\begin{cases}
\frac{k_1}{qn} + \frac{k_2}{pn} & {\rm ~if~} pk_1 + qk_2 \precsim n
\\
\frac{pk^2_1}{qn^2} + \frac{qk^2_2}{pn^2} & {\rm ~if~} pk_1 + qk_2 \succ n. 
\end{cases}
\ee 
\end{lemma}
\begin{proof}[Proof of Lemma \ref{lemma:robust:2}]
The notation of this proof is mainly contained in Section \ref{notation:robust}. The conditions of Lemma \ref{lemma:robust:2} for samples $\vecc(\M X_1),\dots,\vecc(\M X_n)$, are similar to the conditions of Lemma \ref{lemma:srate} for samples $\vecc(\widecheck{\M X}^c_1),\dots,\vecc(\widecheck{\M X}^c_n)$. We can directly use similar arguments in the proof of Lemma \ref{lemma:srate}, to finish our proof here. For simplicity, we only discuss the condition that $\eta = \mathcal{B}$. 
\par
The only difference between this proof and the proof of Lemma \ref{lemma:srate}  is that in Lemma \ref{lemma:srate}, $\vecc(\M X_1),\dots,\vecc(\M X_n)$ are sub-Gaussian random variables with fixed parameter $\rho$. By {Proposition 2.5.2} in \citet{vershynin2018high}, we know $\|\vecc(\M X_i)\|_{\psi_2} \leq C_\rho$ for some fixed constant $C_\rho$ only depending on $\rho$. But for $\widecheck{\M X}^c_i$, the $\psi_2$--norm is not bounded. Since $\widecheck{x}^{(i)}_{l_1,l_2} = \text{sgn}(x^{(i)}_{l_1,l_2})(|x^{(i)}_{l_1,l_2}|\wedge \tau)$ for all $1\leq l_1\leq p, 1\leq l_2\leq q$, we know that all coordinates of $\vecc(\widecheck{\M X}_i)$ are bounded in $[-\tau,\tau]$. So the absolute values of all coordinates of $\E\vecc(\widecheck{\M X}_i)$ are bounded by $\tau$. Combining with the fact that all coordinates of $\E\vecc(\widecheck{\M X}_i)$ are constants and thus independent, with some fixed constant $C_K>0$, we have 
\begin{align*}
\|\E\vecc(\widecheck{\M X}_i)\|_{\psi_2} &\leq C_{K}\max_{1\leq l_1\leq p, 1\leq l_2\leq q}\big\|\E[\widecheck{\M X}_i]_{l_1,l_2}\big\|_{\psi_2}
\\
&\leq C_K\tau/\sqrt{\log(2)},
\end{align*}
where the first inequality is by Lemma 3.4.2 in \citet{vershynin2018high} and the second inequality is by (2.17) of \citet{vershynin2018high}. Under Assumption \ref{A:psi2}, we have
\bee\label{lemma:robust:2:hatx:psi2}
\|\vecc(\widecheck{\M X}_i^c)\|_{\psi_2} &= \|\vecc(\widecheck{\M X}_i) - \E\{\vecc(\widecheck{\M X}_i)\}\|_{\psi_2}
\\
&\leq \|\vecc(\widecheck{\M X}_i)\|_{\psi_2} + \|\E\{\vecc(\widecheck{\M X}_i)\}\|_{\psi_2}
\\
&\leq \big\{J_n + C_K/\sqrt{\log(2)}\big\}\times \tau
\\
&\precsim J_n\tau,
\ee
by $J_n\succsim 1$ and  triangle inequality.
\par
Similar to \eqref{step1:final}, similar arguments to {Step 1.1} in the {Proof of Lemma \ref{lemma:srate}} can show 
\bee\nonumber
\big\|\xi\{\widecheck{\M \Sigma}_{\eta}(k_1,k_2)\} - \xi\{{\M \Sigma}^{*,\eta}_{\mathcal{R}}(k_1,k_2)\}\big\|_2 = \sup_{{\bds v}^*\in \mathcal{U}^{\mathcal{B}}_{q^2}(k_2), \bds u^*\in \mathcal{U}^{\mathcal{B}}_{p^2}(k_1)}\Big|\frac{1}{n}\sum _{i = 1}^nI^{\mathcal{R}}_i(\bds u^*,\bds v^*)\Big|,
\ee
where $I^{\mathcal{R}}_i(\bds u^*,\bds v^*)\equiv \vecc(\widecheck{\M X}^c_i)^\T\cdot\mathscr{V}^*({\bds v}^*)\otimes\mathscr{U}({\bds u^*})\cdot\vecc(\widecheck{\M X}^c_i) - \E\big\{\vecc(\widecheck{\M X}^c_i)^\T\cdot\mathscr{V}^*({\bds v}^*)\otimes\mathscr{U}({\bds u^*})\cdot\vecc(\widecheck{\M X}^c_i)\big\}$. Here $\bds{\mathscr{U}}^*(\bds u^*), \bds{\mathscr{V}}^*(\bds v^*)$ are previously defined as $\bds{\mathscr{U}}^*(\bds u^*) \in \mathbb{R}^{p\times p}, \bds{\mathscr{V}}^*(\bds v^*)\in\mathbb{R}^{q\times q}$ such that $\vecc\{\bds{\mathscr{U}}^*(\bds u^*)\} = \bds u^*, \vecc\{\bds{\mathscr{V}}^*(\bds v^*)\} = \bds v^*$. We now let $\M M = (\M \Sigma_{\mathcal{R}}^*)^{1/2} \mathscr{V}^*({\bds v}^*)\otimes\mathscr{U}({\bds u^*})(\M \Sigma_{\mathcal{R}}^*)^{1/2}$ and $\mathbf{v}_s = (\M \Sigma_{\mathcal{R}}^*)^{-1/2}\vecc(\widecheck{\M X}^c_i)$ in the Hanson-Wright  inequality \eqref{l:srate:hwi}.  To account for the effect of the divergence of $\tau$, we observe
\bee\nonumber
\|\mathbf{v}^\T\mathbf{v}_s\|_{\psi_2} &= \|\mathbf{v}^\T(\M \Sigma_{\mathcal{R}}^*)^{-1/2}\vecc(\widecheck{\M X}^c_i)\|_{\psi_2}
\\
&= \|(\M \Sigma_{\mathcal{R}}^*)^{-1/2}\mathbf{v}\|\times \Big\|
\Big\{\frac{(\M \Sigma_{\mathcal{R}}^*)^{-1/2}\mathbf{v}}{\|(\M \Sigma_{\mathcal{R}}^*)^{-1/2}\mathbf{v}\|}\Big\}^\T\vecc(\widecheck{\M X}^c_i)\Big\|_{\psi_2}
\\
&\leq\|(\M \Sigma_{\mathcal{R}}^*)^{-1/2}\|_2 \times\|\mathbf{v}\|\times \big\{J_n + C_K/\sqrt{\log(2)}\big\}\tau \quad (\text{By \eqref{lemma:robust:2:hatx:psi2}})
\\
&= \lambda_{\min}^{-1/2}(\M\Sigma^*_{\mathcal{R}})\big\{J_n + C_K/\sqrt{\log(2)}\big\}\tau 
\\
&\leq \big\{J_n + C_K/\sqrt{\log(2)}\big\}\tau/\sqrt{\varepsilon_0'}  \quad (\text{By Assumption \ref{A:minev}})
\\
&\precsim J_n \tau.
\ee
Similar to \eqref{l:srate:con1}, we can also show
\bee\nonumber
\|\M M\|_\F &\leq \|(\M \Sigma_{\mathcal{R}}^*)^{1/2}\|_2^2\times \|\mathscr{V}^*({\bds v}^*)\otimes\mathscr{U}^*(\bds u^*)\|_2\leq 1/\varepsilon_0'
\ee
by Assumption \ref{A:minev} and $\|\mathscr{V}^*({\bds v}^*)\otimes\mathscr{U}^*(\bds u^*)\|_2 = \|\mathscr{V}^*({\bds v}^*)\|_2\cdot\|\mathscr{U}^*(\bds u^*)\|_2\leq \|\mathscr{V}^*({\bds v}^*)\|_\F\cdot\|\mathscr{U}^*(\bds u^*)\|_\F = \|{\bds v}^*\|\cdot\|\bds u^*\| = 1$. Similar arguments to \eqref{l:srate:hwi}--\eqref{l:srate:subexp} can show 
\bee\label{l:rb2:subexp}
\p\Big(\Big|\frac{1}{n}\sum_{i = 1}^nI^{\mathcal{R}}_{i}(\bds u^*,\bds v^*)\Big|\geq t\Big) \leq2\exp\Big[-nC^{(5)}\min\{t^2/(J_n\tau)^4,t/(J_n\tau)^2\}\Big]
\ee
for some constant $C^{(5)}>0$. In comparison with the tail probability of $\Big|\frac{1}{n}\sum_{i = 1}^nI^{}_{i}(\bds u^*,\bds v^*)\Big|$ in \eqref{l:srate:subexp} for the non-robust estimators, the new tail probability for the robust estimators \eqref{l:rb2:subexp} includes $J_n\tau$. Then, by using the new tail bound \eqref{l:rb2:subexp}, a similar argument to the {Step 1.3}--{Step 3} in proof of Lemma \ref{lemma:srate} can show \eqref{lemma:srate:res:rb2} directly, for $\eta = \mathcal{B}$. Same as the proof of Lemma \ref{lemma:srate}, the results for proposed tapering estimator can be shown similarly.
\end{proof}

\begin{lemma}\label{lemma:rb}
Let $\vecc(\M X_1),\vecc(\M  X_2),\cdots,\vecc(\M X_n)$ be i.i.d random vectors in $\RR^{pq}$ with true covariance $\M \Sigma^* = \M \Sigma^*_2 \otimes \M \Sigma^*_1$. Assume $\E(|x^{(i)}_{l_1,l_2}\cdot x^{(i)}_{m_1,m_2}|^\zeta) \leq M < +\infty$ for any $1\leq l_1,m_1\leq p$ and $1\leq l_2,m_2 \leq q$, where $\zeta > 1$ is the order of heavy-tailedness, and $M$ is a constant that does not depend on $n, l_1,m_1, l_2,m_2$. Then we have 
\bee\nonumber
\|\M\Sigma^*_{\mathcal{R}} - \M\Sigma^*\|_{\max} \precsim \tau^{-2(\zeta - 1)}
\ee 
\end{lemma}
\begin{proof}[Proof of Lemma \ref{lemma:rb}] The notation of this proof is mainly contained in Section \ref{notation:robust}.{ Similar to} our previous proofs, without loss of generality, we assume $\vecc(\M X_i)$ is mean zero. Since $\cov\{\vecc(\widecheck{\M X}_i)\} = \M \Sigma^*_{\mathcal{R}}$ and $\cov\{\vecc(\M X_i)\} = \M \Sigma^*$, we know any entry of $\M\Sigma^*_{\mathcal{R}} - \M\Sigma^*$ can be written as
\bee\label{lemma:rb:1}
&\E\Big[(\widecheck{x}^{(i)}_{l_1,l_2} - \E \widecheck{x}^{(i)}_{l_1,l_2})\times (\widecheck{x}^{(i)}_{m_1,m_2} - \E \widecheck{x}^{(i)}_{m_1,m_2})\Big] - \E\Big[x^{(i)}_{l_1,l_2}\times x^{(i)}_{m_1,m_2}\Big]
\\
&=\E\Big[\widecheck{x}^{(i)}_{l_1,l_2}\widecheck{x}^{(i)}_{m_1,m_2} - \E\widecheck{x}^{(i)}_{l_1,l_2} \times \E\widecheck{x}^{(i)}_{m_1,m_2}\Big] - \E\Big[x^{(i)}_{l_1,l_2}\times x^{(i)}_{m_1,m_2}\Big]
\\
&=\E\Big[\widecheck{x}^{(i)}_{l_1,l_2}\widecheck{x}^{(i)}_{m_1,m_2} - x^{(i)}_{l_1,l_2} x^{(i)}_{m_1,m_2}\Big] -\E\widecheck{x}^{(i)}_{l_1,l_2} \times \E\widecheck{x}^{(i)}_{m_1,m_2}
\\
&=\E\Big[\widecheck{x}^{(i)}_{l_1,l_2}\widecheck{x}^{(i)}_{m_1,m_2} - x^{(i)}_{l_1,l_2} x^{(i)}_{m_1,m_2}\Big] -\Big[\E\widecheck{x}^{(i)}_{l_1,l_2} - \E x^{(i)}_{l_1,l_2} \Big]\times \Big[\E\widecheck{x}^{(i)}_{m_1,m_2} - \E x^{(i)}_{m_1,m_2} \Big]
\ee
for any $1\leq l_1,m_1 \leq p$ and $1\leq l_2,m_2 \leq q$. Here the last equality holds because $\vecc(\M X_i)$ is mean zero. 
\par
For the first term on the right-hand side of \eqref{lemma:rb:1}, since $\widecheck{x}^{(i)}_{l_1,l_2} = \text{sgn}(x^{(i)}_{l_1,l_2})(|x^{(i)}_{l_1,l_2}| \wedge \tau)$, we can show 
\bee\nonumber
&\big|x^{(i)}_{l_1,l_2} x^{(i)}_{m_1,m_2} - \widecheck{x}^{(i)}_{l_1,l_2}\widecheck{x}^{(i)}_{m_1,m_2}\big|  
\\
& = \Big|\text{sgn}\big(x^{(i)}_{l_1,l_2}\big)\times \text{sgn}\big(x^{(i)}_{m_1,m_2}\big)\times\Big[|x^{(i)}_{l_1,l_2}|\times|x^{(i)}_{m_1,m_2}| - \big\{|x^{(i)}_{l_1,l_2}| \wedge \tau\big\}\times \big\{|x^{(i)}_{m_1,m_2}| \wedge \tau\big\}\Big]\Big|
\\
&=\Big||x^{(i)}_{l_1,l_2}|\times|x^{(i)}_{m_1,m_2}| - \big\{|x^{(i)}_{l_1,l_2}| \wedge \tau\big\}\times \big\{|x^{(i)}_{m_1,m_2}| \wedge \tau\big\}\Big|
\\
&=
\begin{cases}
0 & |x^{(i)}_{l_1,l_2}|\leq \tau, |x^{(i)}_{m_1,m_2}|\leq \tau;
\\
\Big||x^{(i)}_{l_1,l_2}x^{(i)}_{m_1,m_2}| -  \tau \big|x^{(i)}_{m_1,m_2}\big| \Big| & |x^{(i)}_{l_1,l_2}|> \tau, |x^{(i)}_{m_1,m_2}|\leq \tau;
\\
\Big||x^{(i)}_{l_1,l_2}x^{(i)}_{m_1,m_2}| -  \tau \big|x^{(i)}_{l_1,l_2}\big| \Big| & |x^{(i)}_{m_1,m_2}|> \tau, |x^{(i)}_{l_1,l_2}|\leq \tau;
\\
\Big||x^{(i)}_{l_1,l_2}x^{(i)}_{m_1,m_2}| -  \tau^2 \Big| & |x^{(i)}_{m_1,m_2}|> \tau, |x^{(i)}_{l_1,l_2}|> \tau
\end{cases}
\\
&\leq |x^{(i)}_{l_1,l_2}x^{(i)}_{m_1,m_2}|\times \M I(|x^{(i)}_{l_1,l_2}|> \tau\text{ or } |x^{(i)}_{m_1,m_2}|> \tau).
\ee
For the last inequality above, we give it a case-by-case explanation. When $|x^{(i)}_{l_1,l_2}| \leq \tau$ and $|x^{(i)}_{m_1,m_2}|\leq \tau$, the value on the right-hand side of the third equality above equals $0$. When $|x^{(i)}_{l_1,l_2}| > \tau$ and $|x^{(i)}_{m_1,m_2}|\leq \tau$, we have $|x^{(i)}_{l_1,l_2}x^{(i)}_{m_1,m_2}| >  \tau \big|x^{(i)}_{m_1,m_2}\big|$ and therefore $\big||x^{(i)}_{l_1,l_2}x^{(i)}_{m_1,m_2}| -  \tau |x^{(i)}_{m_1,m_2}| \big| = |x^{(i)}_{l_1,l_2}x^{(i)}_{m_1,m_2}| -  \tau \big|x^{(i)}_{m_1,m_2}\big| < |x^{(i)}_{l_1,l_2}x^{(i)}_{m_1,m_2}|$. When $|x^{(i)}_{m_1,m_2}|> \tau, |x^{(i)}_{l_1,l_2}|\leq \tau$, a symmetric argument can show $\big||x^{(i)}_{l_1,l_2}x^{(i)}_{m_1,m_2}| -  \tau |x^{(i)}_{l_1,l_2}| \big| = |x^{(i)}_{l_1,l_2}x^{(i)}_{m_1,m_2}| -  \tau \big|x^{(i)}_{l_1,l_2}\big| < |x^{(i)}_{l_1,l_2}x^{(i)}_{m_1,m_2}|$. When $|x^{(i)}_{m_1,m_2}| >  \tau, |x^{(i)}_{l_1,l_2}|> \tau$, we have $|x^{(i)}_{l_1,l_2}x^{(i)}_{m_1,m_2}| >  \tau^2 $ and therefore $\big||x^{(i)}_{l_1,l_2}x^{(i)}_{m_1,m_2}| -  \tau^2 \big| = |x^{(i)}_{l_1,l_2}x^{(i)}_{m_1,m_2}| -  \tau^2 \leq |x^{(i)}_{l_1,l_2}x^{(i)}_{m_1,m_2}|$. Combining all the above arguments, we obtain the last inquality. 
\par
We also note that for any $1\leq l_1\leq p, 1\leq l_2 \leq q$, by high-order Markov inequality, we have
\bee\label{l:robust:max:1}
\p\big(|x^{(i)}_{l_1,l_2}|> \tau\big) \leq \frac{\E|x^{(i)}_{l_1,l_2}|^{2\zeta}}{\tau^{2\zeta}}\leq \frac{M}{\tau^{2\zeta}}.
\ee
With all results above, we bound the first term on the right-hand side of \eqref{lemma:rb:1} by 
\bee\label{l:robust:max:2}
&\Big|\E\big[\widecheck{x}^{(i)}_{l_1,l_2}\widecheck{x}^{(i)}_{m_1,m_2} - x^{(i)}_{l_1,l_2} x^{(i)}_{m_1,m_2}\big]\Big|
\\
&\leq \E\Big|\widecheck{x}^{(i)}_{l_1,l_2}\widecheck{x}^{(i)}_{m_1,m_2} - x^{(i)}_{l_1,l_2} x^{(i)}_{m_1,m_2}\Big|
\\
&\leq \E\Big\{|x^{(i)}_{l_1,l_2}x^{(i)}_{m_1,m_2}|\times \M I(|x^{(i)}_{l_1,l_2}|> \tau\text{ or } |x^{(i)}_{m_1,m_2}|> \tau)\Big\}
\\
&\leq \Big[\E\Big\{|x^{(i)}_{l_1,l_2}x^{(i)}_{m_1,m_2}|^{\zeta}\Big\}\Big]^{1/\zeta}\times \Big[\E\Big\{\M I^{}(|x^{(i)}_{l_1,l_2}|> \tau\text{ or } |x^{(i)}_{m_1,m_2}|> \tau)\Big\}\Big]^{(\zeta - 1)/\zeta}
\\
&= \Big[\E\Big\{|x^{(i)}_{l_1,l_2}x^{(i)}_{m_1,m_2}|^{\zeta}\Big\}\Big]^{1/\zeta}\times \Big\{\p\big(|x^{(i)}_{l_1,l_2}|> \tau\text{ or } |x^{(i)}_{m_1,m_2}|> \tau\big)\Big\}^{(\zeta - 1)/\zeta}
\\
&\leq M^{1/\zeta} \times \Big\{\p\big(|x^{(i)}_{l_1,l_2}|> \tau\big)+\p\big(|x^{(i)}_{m_1,m_2}|> \tau\big)\Big\}^{(\zeta - 1)/\zeta}
\\
&\leq 2^{(\zeta - 1)/\zeta}M\cdot\tau^{-2(\zeta - 1)},
\ee
where the third inequality holds by H\"{o}lder  inequality.
\par
For the second item on the right-hand side of \eqref{lemma:rb:1}, since $\widecheck{x}^{(i)}_{l_1,l_2} = \text{sgn}(x^{(i)}_{l_1,l_2})(|x^{(i)}_{l_1,l_2}| \wedge \tau)$, we have
\bee\nonumber
\big|\widecheck{x}^{(i)}_{l_1,l_2} -  x^{(i)}_{l_1,l_2}\big|  &= \begin{cases}
0 & |x^{(i)}_{l_1,l_2}|\leq \tau
\\
|x^{(i)}_{l_1,l_2} - \tau| & |x^{(i)}_{l_1,l_2}|> \tau
\end{cases}
\\
&\leq |x_{l_1,l_2}^i|\times\M I(|x^{(i)}_{l_1,l_2}|> \tau).
\ee
Similar to \eqref{l:robust:max:1} and \eqref{l:robust:max:2}, we can derive
\bee\nonumber
\big|\E(\widecheck{x}^{(i)}_{l_1,l_2} -  x^{(i)}_{l_1,l_2})\big|&\leq \E\big|\widecheck{x}^{(i)}_{l_1,l_2} -  x^{(i)}_{l_1,l_2}\big|
\\
&\leq \E\Big[|x_{l_1,l_2}^i|\times\M I(|x^{(i)}_{l_1,l_2}|> \tau)\Big]
\\
&\leq \Big\{\E|x_{l_1,l_2}^i|^{2\zeta}\Big\}^{1/2\zeta}\times \big\{\p(|x^{(i)}_{l_1,l_2}|> \tau)\big\}^{\frac{2\zeta - 1}{2\zeta}}
\\
&\leq  \{\E|x_{l_1,l_2}^i|^{2\zeta}\Big\}^{1/2\zeta}\times \Bigg\{\frac{M}{\tau^{2\zeta}}\Bigg\}^{\frac{2\zeta - 1}{2\zeta}}
\\
&\leq M\cdot\tau^{-(2\zeta - 1)}
\ee
for any $1\leq l_1 \leq p, 1\leq l_2 \leq q$. Then the second term of \eqref{lemma:rb:1} can be bounded by
\bee\nonumber
\Big|\Big[\E\widecheck{x}^{(i)}_{l_1,l_2} - \E x^{(i)}_{l_1,l_2} \Big]\times \Big[\E\widecheck{x}^{(i)}_{m_1,m_2} - \E x^{(i)}_{m_1,m_2} \Big]\Big|&=\big|\E(\widecheck{x}^{(i)}_{l_1,l_2} -  x^{(i)}_{l_1,l_2})\big|\times \big|\E(\widecheck{x}^{(i)}_{m_1,m_2} -  x^{(i)}_{m_1,m_2})\big|
\\
&\leq M^{2}\cdot \tau^{-2(2\zeta - 1)}.
\ee
\par
Summarizing the results above, we final show
\bee\nonumber
\|\M\Sigma^*_{\mathcal{R}} - \M \Sigma^*\|_{\max} &= \max_{1\leq l_1,m_1 \leq p\atop 1\leq l_2,m_2\leq q}\Bigg|\E\Big[(\widecheck{x}^{(i)}_{l_1,l_2} - \E \widecheck{x}^{(i)}_{l_1,l_2})\times (\widecheck{x}^{(i)}_{m_1,m_2} - \E \widecheck{x}^{(i)}_{m_1,m_2})\Big] - \E\Big[x^{(i)}_{l_1,l_2}\times x^{(i)}_{m_1,m_2}\Big]\Bigg|
\\
&=\max_{1\leq l_1,m_1 \leq p\atop 1\leq l_2,m_2\leq q}\Big|\E\Big[\widecheck{x}^{(i)}_{l_1,l_2}\widecheck{x}^{(i)}_{m_1,m_2} - x^{(i)}_{l_1,l_2} x^{(i)}_{m_1,m_2}\Big]\Big| 
\\
&+\max_{1\leq l_1,m_1 \leq p\atop 1\leq l_2,m_2\leq q}\Big\{\Big|\E\widecheck{x}^{(i)}_{l_1,l_2} - \E x^{(i)}_{l_1,l_2} \Big|\times \Big|\E\widecheck{x}^{(i)}_{m_1,m_2} - \E x^{(i)}_{m_1,m_2} \Big|\Big\}
\\
&\leq 2^{(\zeta - 1)/\zeta}M\cdot\tau^{-2(\zeta - 1)} + M^{2}\cdot \tau^{-2(2\zeta - 1)}
\\
&\precsim \tau^{-2(\zeta - 1)},
\ee
where the last inequality holds because $\tau \succsim 1$ and $2(2\zeta - 1)>2(\zeta - 1)$ when $\zeta > 1$.
\end{proof}

\begin{lemma}\label{lemma:robust:3}
Let $\vecc(\M X_1),\vecc(\M  X_2),\cdots,\vecc(\M X_n)$ be i.i.d random vectors in $\RR^{pq}$ with true covariance $\M \Sigma^* = \M \Sigma^*_2 \otimes \M \Sigma^*_1$. Assume $\E(|x^{(i)}_{l_1,l_2}\cdot x^{(i)}_{m_1,m_2}|^\zeta) \leq M < +\infty$ for any $1\leq l_1,m_1\leq p$ and $1\leq l_2,m_2 \leq q$, where $\zeta > 1$ and $M$ is a constant that does not depend on $n, l_1,m_1, l_2,m_2$. When $\M \Sigma_1^{*}\in \mathcal{F}(\varepsilon_0, \alpha_1),\M \Sigma_2^{*} \in \mathcal{F}(\varepsilon_0, \alpha_2)$ or $\M \Sigma_1^{*}\in \mathcal{M}(\varepsilon_0, \alpha_1),\M \Sigma_2^{*} \in \mathcal{M}(\varepsilon_0, \alpha_2)$, we have 
\bee\nonumber
\frac{1}{pq}\big\|{\M \Sigma}^{*,\eta}_{\mathcal{R}}(k_1,k_2)-\M\Sigma_2^{*,{\eta}}(k_2)\otimes \M\Sigma_1^{*,{\eta}}(k_1)\big\|_\F^2 \precsim {k_1k_2}\tau^{-4(\zeta - 1)}.
\ee
\end{lemma}
\begin{proof}[Proof of Theorem \ref{lemma:robust:3}] The notation of this proof is mainly contained in Section \ref{notation:robust}. {We only consider the general case that $\M \Sigma_1^{*}\in \mathcal{F}(\varepsilon_0, \alpha_1),\M \Sigma_2^{*} \in \mathcal{F}(\varepsilon_0, \alpha_2)$. The bound for $\M \Sigma_1^{*}\in \mathcal{M}(\varepsilon_0, \alpha_1),\M \Sigma_2^{*} \in \mathcal{M}(\varepsilon_0, \alpha_2)$ holds as a special case.  By definitions of ${\M \Sigma}^{*,\eta}_{\mathcal{R}}(k_1,k_2)$ and $\M\Sigma_2^{*,{\eta}}(k_2)\otimes \M\Sigma_1^{*,{\eta}}(k_1)$, we know
\bee\label{l:robust:3:1}
&{\M \Sigma}^{*,\mathcal{B}}_{\mathcal{R}}(k_1,k_2) - \M\Sigma_2^{*,{\mathcal{B}}}(k_2)\otimes \M\Sigma_1^{*,{\mathcal{B}}}(k_1) =\big\{{\M \Sigma}^{*}_{\mathcal{R}}-\M \Sigma^*\big\}\circ \big\{B_{k_2}(\M 1_q)\otimes B_{k_1}(\M 1_p)\big\};
\\
&{\M \Sigma}^{*,\mathcal{T}}_{\mathcal{R}}(k_1,k_2) - \M\Sigma_2^{*,{\mathcal{T}}}(k_2)\otimes \M\Sigma_1^{*,{\mathcal{T}}}(k_1) =\big\{{\M \Sigma}^{*}_{\mathcal{R}}-\M \Sigma^*\big\}\circ \big\{T_{k_2}(\M 1_q)\otimes T_{k_1}(\M 1_p)\big\}.
\ee
So there are at most $(2k_1 + 1)(2k_2 + 1)pq$ non-zero entries in ${\M \Sigma}^{*,\eta}_{\mathcal{R}}(k_1,k_2) - \M\Sigma_2^{*,{\eta}}(k_2)\otimes \M\Sigma_1^{*,{\eta}}(k_1)$ for both $\eta =\mathcal{B}$ or $\mathcal{T}$. Also by definition of $T_k(\cdot)$, $B_k(\cdot)$ and \eqref{l:robust:3:1}, we know that each non-zero entry in ${\M \Sigma}^{*,\eta}_{\mathcal{R}}(k_1,k_2) - \M\Sigma_2^{*,{\eta}}(k_2)\otimes \M\Sigma_1^{*,{\eta}}(k_1)$ is either equal to the corresponding entry in $\M\Sigma^*_{\mathcal{R}} - \M \Sigma^*$, or is a shrinkage of corresponding entry in $\M\Sigma^*_{\mathcal{R}} - \M \Sigma^*$. Thus we have
\bee\nonumber
\big\|{\M \Sigma}^{*,\eta}_{\mathcal{R}}(k_1,k_2) - \M\Sigma_2^{*,{\eta}}(k_2)\otimes \M\Sigma_1^{*,{\eta}}(k_1)\big\|^2_\F \leq (2k_1 + 1)(2k_2 + 1)pq\times\|\M\Sigma^*_{\mathcal{R}} - \M\Sigma^*\|_{\max}^2.
\ee
Combining with Lemma \ref{lemma:rb}, we finally show
\bee\nonumber
\frac{1}{pq}\big\|{\M \Sigma}^{*,\eta}_{\mathcal{R}}(k_1,k_2) - \M\Sigma_2^{*,{\eta}}(k_2)\otimes \M\Sigma_1^{*,{\eta}}(k_1)\big\|^2_\F &\precsim \frac{1}{pq}(2k_1 + 1)(2k_2 + 1)pq\times\Big(\tau^{-2(\zeta - 1)}\Big)^2
\\
&\asymp{k_1 k_2}\cdot\tau^{-4(\zeta - 1)}.
\ee}
\end{proof}
{\color{black} Here we present Lemmas \ref{l:dt}--\ref{l:sp:out} that are used to prove Theorem \ref{Ts}, which shows the spectral-norm convergence rate of the doubly tapering covariance estimate. 
\begin{lemma}\label{subG:dev}
Let $\M X_1,\M X_2,\dots,\M X_n\in \RR^d$ be i.i.d. sub-Gaussian random variables with parameter $\rho$. Suppose the covariance matrix of $\M X_1 $ is $\M \Sigma$. If $\lambda_{\max}(\M\Sigma)\leq \lambda_0$, there exist $\zeta>0$ only determined by $\rho, \lambda_0$ and some constants $C_1,C_2>0$ such that,
\bee\label{l4:res}
\Pr\Big[\Big|\frac{1}{n}\sum_{i  =1}^{n}\mathbf{ u}^{\T}\big\{(\M X_i - \E \M X_i)(\M X_i - \E \M X_i)^{\T} - \M\Sigma\big\}\bold v\Big|>t\Big]\leq  
\begin{cases}
2\exp(-C_1n\rho^2t^2) & t \in [0,\zeta]

\\
2\exp(-C_2n\rho t) & t \in(\zeta,+\infty)
\end{cases}
\ee
for all $\bold u,\bold v \in \RR^d$, where $\|\bold u\|_2 = \|\bold v\|_2 = 1$.
\end{lemma}
\begin{proof}[Proof of Lemma \ref{subG:dev}] 
We give a proof based on sub-Gaussian/sub-Exponential random variable property and non-asymptotic concentration inequality. We note that similar result could also be shown via large deviation theory (see e.g., \cite{saulis1991limit}, \cite{Bickel2008threshold}). For random variable $X$, recall the definitions of $\|\cdot\|_{\psi_1}, \|\cdot\|_{\psi_2}$ given in Section \ref{def:sesgrv} as
\bee\label{def:psi}
\|X\|_{\psi_1} &\equiv \inf\{t>0:\E \exp(|X|/t)\leq 2\},
\\
\|X\|_{\psi_2} &\equiv \inf\{t>0:\E \exp(X^2/t^2)\leq 2\}.
\ee 
And recall that $X$ is a sub-exponential random variable if
\bee\nonumber
\E\Big[\exp\big(\lambda(X - \mu)\big)\Big]\leq \exp\Big(\frac{v^2\lambda^2}{2}\Big)
\ee
for some non-negative parameters $(v,\alpha)$ for all $|\lambda|<1/\alpha$.
\par
By definition \eqref{subga}, it is easy to see both $\mathbf{u}^{\T}(\M X_i - \E \M X_i)$ and $ \mathbf{v}^{\T}(\M X_i - \E \M X_i)$ are sub-Gaussian random variables parametrized with $\rho$. By Proposition 2.5.2 in \cite{vershynin2018high}, we have both $\|\mathbf{u}^{\T}(\M X_i - \E \M X_i)\|_{\psi_2}$ and $\|\mathbf{v}^{\T}(\M X_i - \E \M X_i)\|_{\psi_2} \precsim 1/\sqrt{\rho}$. By Lemma 2.7.7 in \cite{vershynin2018high}, $\mathbf{u}^{\T}(\M X_i - \E \M X_i)(\M X_i - \E \M X_i)^\T \mathbf{v} = \mathbf{u}^{\T}(\M X_i - \E \M X_i)\times \mathbf{v}^{\T}(\M X_i - \E \M X_i)$ is a sub-exponential random variable and 
\bee\nonumber
\|\mathbf{u}^{\T}(\M X_i - \E \M X_i)(\M X_i - \E \M X_i)^\T \mathbf{v}\|_{\psi_1}\precsim 1/\rho.
\ee
By a careful comparison of Proposition 2.7.1 in \cite{vershynin2018high} and our definition of sub-Gaussian, we can see that $\mathbf{u}^{\T}(\M X_i - \E \M X_i)(\M X_i - \E \M X_i)^\T \mathbf{v}$ is sub-exponential, parametrized with $(C/\rho,C'/\rho)$ for some non-negative fixed constants $C,C'$. By the concentration results of sub-exponential random variables (see e.g. (2.18) in \cite{wainwright2019high}), we finally have
\bee\nonumber
\Pr\Big(\Big|\frac{1}{n}\sum_{i  =1}^{n}\mathbf{ u}^{\T}\big\{(\M X_i - \E \M X_i)(\M X_i - \E \M X_i)^{\T} - \M\Sigma\big\}\bold v\Big|>t\Big)\leq
\begin{cases}
2\exp(-C_1n\rho^2t^2) & t \in [0,\zeta]

\\
2\exp(-C_2n\rho t) & t \in(\zeta,+\infty)
\end{cases}
\ee
where $\zeta = C_3/\rho$ for some fixed $C_1,C_2,C_3 > 0$.
\end{proof}
\begin{lemma}\label{l:dt}
Let $\vecc(\M X_1), \vecc(\M X_2),\dots,\vecc(\M X_n)$ be i.i.d. sub-Gaussian random vectors in $\RR^{pq}$ with true covariance $\M \Sigma^* = \M \Sigma^*_2 \otimes \M \Sigma^*_1$, where $\M \Sigma_1^{*}\in \mathcal{F}(\varepsilon_0, \alpha_1),\M \Sigma_2^{*} \in \mathcal{F}(\varepsilon_0, \alpha_2)$. We have\bee\nonumber
\E\big(\|\DTCov - \M \Sigma_2^{*,\mathcal{T}}(k_2)\otimes \M \Sigma_1^{*,\mathcal{T}}(k_1)\|_2^2\big) \precsim \frac{k_1k_2 + \log\big(\max\{p,q\}\big)}{n},
\ee
when $p \succsim n^{b}$ or $q \succsim n^{b}$ for some $b > 0$.  
\end{lemma}
\begin{proof}[Proof of Lemma \ref{l:dt}]
This proof can be seen as a generalization of  proof of Theorem 2 in \citet{Cai2010}. In Section \ref{notation:main}, we have defined $\hat{\M\Sigma}_0 \equiv \frac{1}{n} \sum_{i=1}^{n}\left(\vecc(\mathbf{X}_{i})\right)\left(\vecc(\mathbf{X}_{i})\right)^{\T} = [\hat{\sigma}^
{(0)}_{((l_1,m_1),(l_2,m_2))}]$ and $\DTCovZ \equiv \hat{\M \Sigma}_0 \HDTaper$. By triangle inequality,
\bee\label{lm:dt:main}
\E\|\DTCov - \M \Sigma_2^{*,\mathcal{T}}(k_2)\otimes \M \Sigma_1^{*,\mathcal{T}}(k_1)\|_2^2 &\precsim \E \|\DTCovZ - \M \Sigma_2^{*,\mathcal{T}}(k_2)\otimes \M \Sigma_1^{*,\mathcal{T}}(k_1)\|_2^2 
\\
& +\E \|\DTCovZ - \DTCov\|_2^2.
\ee
In the following steps 1 and 2, we bound first and second terms on the right-hand side of \eqref{lm:dt:main}, respectively.
\\
\par
\noindent{\textbf{Step 1 (Bound of $\|\DTCovZ - \M \Sigma_2^{*,\mathcal{T}}(k_2)\otimes \M \Sigma_1^{*,\mathcal{T}}(k_1)\|_2$): }}
By definition, $\DTCovZ$ has the form
\bee\label{def:Mr2k1k}
\begin{pmatrix}
T_{k_2}(\bds 1_{q})_{1,1}\times \hat{\M \Sigma}_0^{1,1} \circ T_{k_1}\big(\M 1_p\big) & \cdots\cdots & T_{k_2}(\bds 1_{q})_{1,q}\times \hat{\M \Sigma}_0^{1,q} \circ T_{k_1}\big(\M 1_p\big) \\
\vdots  & \vdots  & \vdots
\\
T_{k_2}(\bds 1_{q})_{q,1}\times \hat{\M \Sigma}_0^{q,1} \circ T_{k_1}\big(\M 1_p\big)  &  \dots    & T_{k_2}(\bds 1_{q})_{q,q}\times \hat{\M \Sigma}_0^{q,q} \circ T_{k_1}\big(\M 1_p\big)
\end{pmatrix},
\ee
where $\hat{\M \Sigma}_0^{l_2,m_2}$ is the $l_2 m_2$th $p\times p$ sub-block matrix of $\hat{\M \Sigma}_0$. In the following,  We use shorthand notation $\hat{\M \Sigma}^{l_2,m_2}_{0}(k_1) \equiv \hat{\M \Sigma}_0^{l_2,m_2} \circ T_{k_1}\big(\M 1_p\big)$ for simplicity. Then we define 
\bee\nonumber
\M M_{r}^{(2),k_1,k} = \Big[\hat{\M \Sigma}^{l_2,m_2}_{0}(k_1) \cdot \M I(r\leq l_2 < r + k, r \leq m_2 < r + k)\Big]_{1\leq l_2,m_2 \leq q} \in \mathbb{R}^{pq \times pq}.
\ee
Without loss of generality, we assume $k_2$ is an even number. Then a similar argument as the Proof of Lemma 1 in \citet{Cai2010} leads to
\bee\label{l:dt:eq}
\DTCovZ = (k_2/2)^{-1}\big(\M S^{(k_2)} - \M S^{(k_2/2)}\big),
\ee
where $\M S^{(k)} = \sum_{r = 1 - k}^{q} \M M_r^{(2),k_1,k}$. Then similar to Lemma 2 in \citet{Cai2010}, for a given $r$, we can see that $\M M^{(2),k_1,k}_{k + r},\M M^{(2),k_1,k}_{2k + r},\dots,\M M^{(2),k_1,k}_{q + r}$ are disjoint diagonal blocks, and thus
\bee\label{Skdecom}
\|\M S^{(k)}  - \E \M S^{(k)}\|_2&\leq \sum_{r_1 = 1 }^k\Big\|\sum_{r_2 = -1}^{q/k}\M M_{r_2k + r_1}^{(2),k_1,k} - \E\big[\M M_{r_2k + r_1}^{(2),k_1,k}\big]\Big\|_2 
\\
&\leq k\max_{1 \leq r_1 \leq k}\Big\|\sum_{r_2 = -1}^{q/k}\M M_{r_2k + r_1}^{(2),k_1,k} - \E\big[\M M_{r_2k + r_1}^{(2),k_1,k}\big]\Big\|_2
\\
&\leq k\max_{1 - k\leq r\leq q} \Big\|\M M_{r}^{(2),k_1,k} - \E\big[\M M_{r}^{(2),k_1,k}\big]\Big\|_2.
\ee
Note here by definition \eqref{def:Mr2k1k}, $\M M_{r_2k + r_1}^{(2),k_1,k}$ can still be a non-zero matrix when $r_2 = -1$. Since $\M M_{r_2k + r_1}^{(2),k_1,k_2/2}$ is a diagonal sub-block of $\M M_{r_2k + r_1}^{(2),k_1,k_2}$, one has $\Big\|\M M_{r}^{(2),k_1,k_2/2} - \E\big[\M M_{r}^{(2),k_1,k_2/2}\big] \Big\|_2\leq \Big\|\M M_{r}^{(2),k_1,k_2} - \E\big[\M M_{r}^{(2),k_1,k_2}\big] \Big\|_2$. Combining it with \eqref{l:dt:eq}, one has
\bee\label{lm:dt:trans1}
\|\DTCovZ - \M \Sigma_2^{*,\mathcal{T}}(k_2)\otimes \M \Sigma_1^{*,\mathcal{T}}(k_1)\|_2 \leq 3\max_{1\leq r\leq q - k_1 + 1} \Big\|\M M_{r}^{(2),k_1,k_2} - \E\big[\M M_{r}^{(2),k_1,k_2}\big]\Big\|_2.
\ee
Note $\E \DTCovZ = \M \Sigma_2^{*,\mathcal{T}}(k_2)\otimes \M \Sigma_1^{*,\mathcal{T}}(k_1)$. For $\M M_{r}^{(2),k_1,k_2}$, by definition, we have
\bee\label{l:dt:largeM}
\M M_{r}^{(2),k_1,k_2} = 
\begin{pmatrix}
&\ddots &&&&&
\\
&& \hat{\M \Sigma}^{r,r}_{0}(k_1) & \cdots & \hat{\M \Sigma}^{r,r+k_2 - 1}_{0}(k_1) 
\\
&& \vdots & \cdots & \vdots
\\
&& \hat{\M \Sigma}^{r+k_2 - 1,r}_{0}(k_1)  & \cdots & \hat{\M \Sigma}^{r+k_2 - 1,r+k_2 - 1}_{0}(k_1) 
\\
&& & & & \ddots
\end{pmatrix}.
\ee
Now we will present it via some diagonal sub-block matrices, similar to (\ref{l:dt:eq}). For each $\hat{\M \Sigma}^{l_2,m_2}_{0}(k_1)$,  we further define 
\bee\nonumber
\mathbf{m}_{r'}^{(1)}(l_2,m_2,k) &\equiv \big[\hat{\sigma}^
{(0)}_{((l_1,m_1),(l_2,m_2))}\cdot \M I(r' \leq l_1 < r' + k, r' \leq m_1 < r' + k)\big]_{1 \leq l_1 \leq p, 1\leq m_1 \leq p} \in \mathbb{R}^{p \times p}
\\
\bds{\mathcal{m}}_{r'}^{(1)}(l_2,m_2,k) &\equiv \big[\hat{\sigma}^
{(0)}_{((l_1,m_1),(l_2,m_2))}\big]_{r' \leq l_1 < r' + k, r' \leq m_1 < r' + k} \in \mathbb{R}^{d \times d},
\ee
where $\bds{\mathcal{m}}_{r'}^{(1)}(l_2,m_2,k)$ can be seen as a ``compressed'' version of $\mathbf{m}_{r'}^{(1)}(l_2,m_2,k)$ that only preserves the non-zero $d\times d$ sub-block. We note that since we can choose $r' < 0$, $d$ will {\em not always} be equal to $k$. Correspondingly, we also define
\bee\nonumber
&\M M^{(1),k_2,k}_{r,r'} \equiv 
\begin{pmatrix}
\mathbf{m}_{r'}^{(1)}(r,r,k) & \dots & \mathbf{m}_{r'}^{(1)}(r,r + k_2 -1,k)
\\
\vdots & \cdots & \vdots
\\
\mathbf{m}_{r'}^{(1)}(r+k_2 - 1,r,k) & \cdots &  \mathbf{m}_{r'}^{(1)}(r+k_2-1,r+k_2-1,k)
\end{pmatrix}\in \mathbb{R}^{pk_2 \times pk_2}
\\
&\mathscr{M}^{(1),k_2,k}_{r,r'} \equiv 
\begin{pmatrix}
\bds{\mathcal{m}}_{r'}^{(1)}(r,r,k) & \dots & \bds{\mathcal{m}}_{r'}^{(1)}(r,r + k_2 -1,k)
\\
\vdots & \cdots & \vdots
\\
\bds{\mathcal{m}}_{r'}^{(1)}(r+k_2 - 1,r,k) & \cdots &  \bds{\mathcal{m}}_{r'}^{(1)}(r+k_2-1,r+k_2-1,k)
\end{pmatrix}\in \mathbb{R}^{dk_2 \times dk_2}.
\ee
\begin{figure}
\label{fig:vis_mm:1}
   \centering
\includegraphics[width=0.8\textwidth]{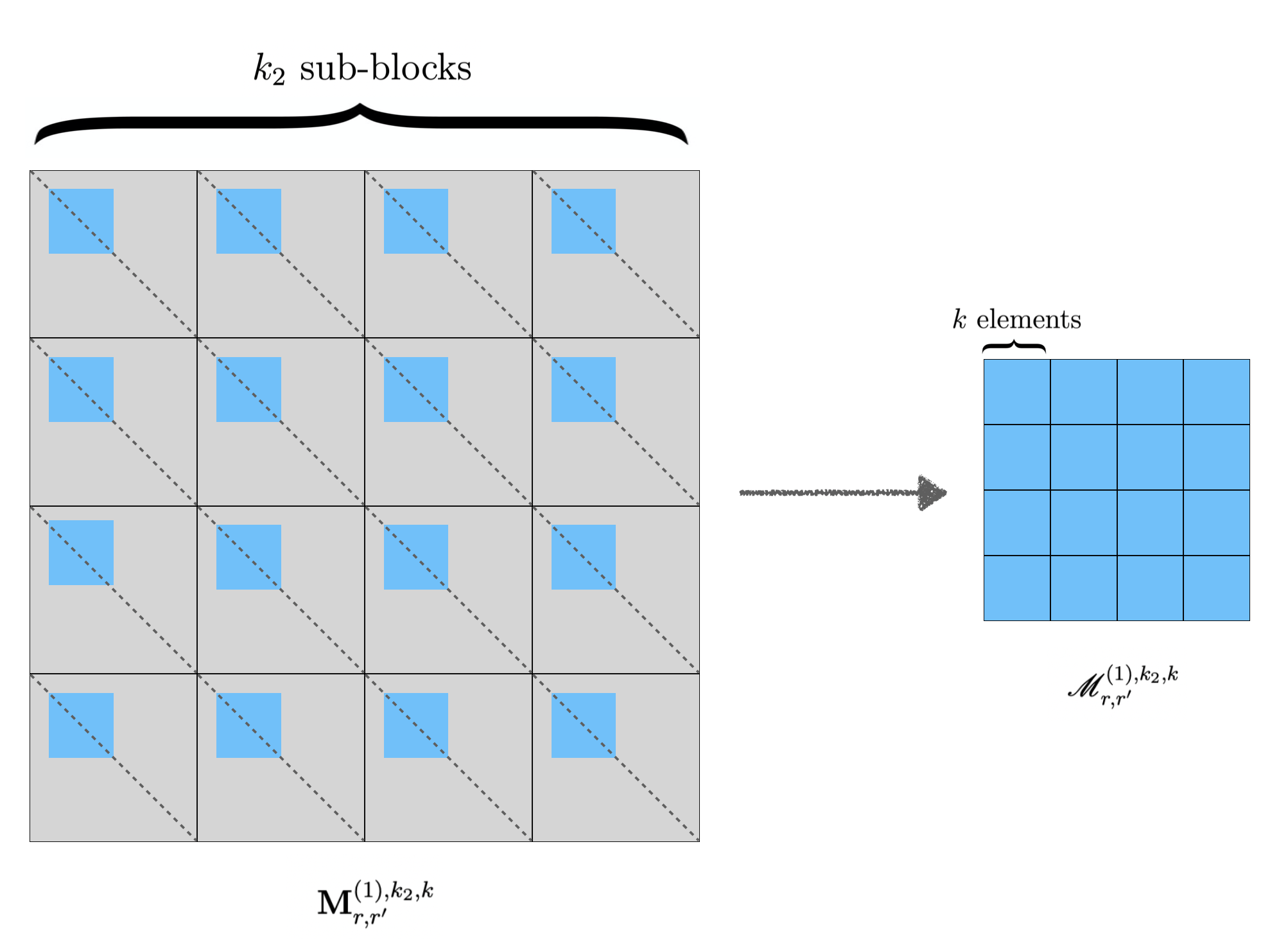}
\caption{Visualization of $\M M^{(1),k_2,k}_{r,r'}$ and $\mathscr{M}^{(1),k_2,k}_{r,r'}$ when $r' > 0$. $\mathscr{M}^{(1),k_2,k}_{r,r'}$ is assembled by the corresponding diagonal blocks in each sub-block of $\M M^{(1),k_2,k}_{r,r'}$}
\end{figure}
The relationship of $\M M^{(1),k_2,k}_{r,r'} $ and $\mathscr{M}^{(1),k_2,k}_{r,r'}$ can be visualized in Figure \ref{fig:vis_mm:1}. Since $\hat{\M \Sigma}^{l_2,m_2}_{0}(k_1)$ is {equivalent} to tapering with a bandwidth of $k_1$ on $\hat{\M \Sigma}_0^{l_2,m_2}$. The same argument as in the proof of Lemma 1 in \citet{Cai2010} also leads to
\bee\nonumber
\hat{\M \Sigma}^{l_2,m_2}_{0}(k_1) = (k_1/2)^{-1}\big({\mathcal{S}}_{l_2,m_2}^{(k_1)} - {\mathcal{S}}_{l_2,m_2}^{(k_1/2)}\big),
\ee
where $\mathcal{S}^{(k)}_{l_2,m_2} = \sum_{r' = 1 - k}^p\mathbf{m}_{r'}^{(1)}(l_2,m_2,k)$.
Thus, by \eqref{l:dt:largeM} we have 
\begin{align}\nonumber
&\M M_{r}^{(2),k_1,k_2} 
\\
& = (k_1/2)^{-1}\begin{pmatrix}
&\ddots &&&&&
\\
&& {\mathcal{S}}_{r,r}^{(k_1)} - {\mathcal{S}}_{r,r}^{(k_1/2)} & \cdots & {\mathcal{S}}_{r,r + k_2 - 1}^{(k_1)} - {\mathcal{S}}_{r,r + k_2 - 1}^{(k_1/2)}
\\
&& \vdots & \cdots & \vdots
\\
&& {\mathcal{S}}_{r+k_2 - 1,r}^{(k_1)} - {\mathcal{S}}_{r+k_2 - 1,r}^{(k_1/2)}  & \cdots &{\mathcal{S}}_{r+k_2 - 1,r+k_2 - 1}^{(k_1)} - {\mathcal{S}}_{r+k_2 - 1,r+k_2 - 1}^{(k_1/2)}
\\
&& & & & \ddots
\end{pmatrix}
\\
&=
(k_1/2)^{-1}\sum_{r' = 1 - k_1}^p
\begin{pmatrix}
\ddots 
\\\nonumber
& 
\begin{pmatrix}
\mathbf{m}_{r'}^{(1)}(r,r,k_1) & \cdots & \mathbf{m}_{r'}^{(1)}(r,r + k_2 - 1,k_1) 
\\
\vdots & \vdots & \vdots
\\
 \mathbf{m}_{r'}^{(1)}(r + k_2 - 1,r,k_1) & \cdots &  \mathbf{m}_{r'}^{(1)}(r + k_2 - 1,r + k_2 - 1,k_1)
\end{pmatrix}
\\
&& \ddots
\end{pmatrix}
\\
&-(k_1/2)^{-1}\sum_{r' = 1 - k_1/2}^p
\begin{pmatrix}
\ddots 
\\\nonumber
& 
\begin{pmatrix}
\mathbf{m}_{r'}^{(1)}(r,r,k_1/2) & \cdots & \mathbf{m}_{r'}^{(1)}(r,r + k_2 - 1,k_1/2) 
\\
\vdots & \vdots & \vdots
\\
 \mathbf{m}_{r'}^{(1)}(r + k_2 - 1,r,k_1/2) & \cdots &  \mathbf{m}_{r'}^{(1)}(r + k_2 - 1,r + k_2 - 1,k_1/2)
\end{pmatrix}
\\
&& \ddots
\end{pmatrix}
\\\nonumber
&= \begin{pmatrix}
\ddots 
\\
& (k_1/2)^{-1}\big({\M S'}^{(k_1)}_{r,k_2} - {\M S'}^{(k_1/2)}_{r,k_2}\big)
\\
&& \ddots
\end{pmatrix},
\end{align}
where ${\M S'}^{(k)}_{r,k_2} \equiv \sum_{r' = 1 - k}^p \M M^{(1),k_2,k}_{r,r'}$. By (\refeq{lm:dt:trans1}), we then have
\bee\label{lm:dt:target2}
&\|\DTCovZ - \M \Sigma_2^{*,\mathcal{T}}(k_2)\otimes \M \Sigma_1^{*,\mathcal{T}}(k_1)\|_2 
\\
&\leq 3\max_{r\leq q - k_2 + 1} \Big\|\M M_{r}^{(2),k_1,k_2} - \E\big[\M M_{r}^{(2),k_1,k_2}\big]\Big\|_2
\\
&= 3\max_{r\leq q - k_2 + 1}\Big\|(k_1/2)^{-1}\big({\M S'}^{(k_1)}_{r,k_2} - {\M S'}^{(k_1/2)}_{r,k_2}\big) - \E \big[(k_1/2)^{-1}\big({\M S'}^{(k_1)}_{r,k_2} - {\M S'}^{(k_1/2)}_{r,k_2}\big)\big]\Big\|_2
\\
&\leq 6\times \frac{1}{k_1}\max_{r\leq q - k_2 + 1}\Big\|{\M S'}^{(k_1)}_{r,k_2}  - \E \big[{\M S'}^{(k_1)}_{r,k_2} \big]\Big\|_2 + 3\times \frac{1}{k_1/2}\max_{r\leq q - k_2 + 1}\Big\|{\M S'}^{(k_1/2)}_{r,k_2}  - \E \big[{\M S'}^{(k_1/2)}_{r,k_2} \big]\Big\|_2,
\ee
by triangle inequality. Thus, to {reach the final conclusion}, it is left to study the upper bound of $\|{\M S'}^{(k)}_{r,k_2} - \E {\M S'}^{(k)}_{r,k_2}\|_2$. We first prove the following claim.
{\begin{claim}\label{claim:1}
For a $pk_2 \times pk_2$ symmetric matrix $\M M$ such that
\bee\nonumber
\M M = \begin{pmatrix}
\begin{pmatrix}
\bds{m}_{1}^{1,1}
\\
& \ddots 
\\
&& \bds{m}_{\ell}^{1,1}
\end{pmatrix}_{p\times p} & \dots & 
\begin{pmatrix}
\bds{m}_{1}^{1,k_2}
\\
& \ddots 
\\
&& \bds{m}_{\ell}^{1,k_2}
\end{pmatrix}_{p\times p}
\\
\vdots & \vdots & \vdots
\\
\begin{pmatrix}
\bds{m}_{1}^{k_2,1}
\\
& \ddots 
\\
&& \bds{m}_{\ell}^{k_2,1}
\end{pmatrix}_{p\times p} & \dots &
\begin{pmatrix}
\bds{m}_{1}^{k_2,k_2}
\\
& \ddots 
\\
&& \bds{m}_{\ell}^{k_2,k_2}
\end{pmatrix}_{p\times p}
\end{pmatrix}
\ee
where $\{\bds{m}_{\ell_0}^{l_2,m_2}\}_{1\leq l_2,m_2\leq k_2}$ is a set of square matrices of dimension $d_{\ell_0}\times d_{\ell_0}$ for any $1 \leq \ell_0 \leq \ell$. Elements in $\M M$ that are not entries in $\bds{m}^{l_2,m_2}_{\ell_0}$ are all $0$. Then we have
\bee\nonumber
\|\M M\|_2 = \max_{1\leq \ell_0 \leq \ell}\big\{\big\|\mathscr{M}_{\ell_0}\big\|_2\big\}, 
\ee
where $\mathscr{M}_{\ell_0} = \begin{pmatrix} 
\bds{m}_{\ell_0}^{1,1} & \dots & \bds{m}_{\ell_0}^{1,k_2}
\\
\vdots & \vdots & \vdots
\\
\bds{m}_{\ell_0}^{k_2,1} & \dots & \bds{m}_{\ell_0}^{k_2,k_2}
\end{pmatrix}_{k_2d_{\ell_0} \times k_2d_{\ell_0}}$.
\end{claim}}
\begin{proof}[Proof of Claim \ref{claim:1}]
Since $\M M$ is symmetric, for $\bds v \in \mathbb{R}^{pk_2}$ we have 
\bee\nonumber
\|\M M\|_2 = \max_{\|\bds v\|_2 = 1}\bds v^\T \M M\bds v.
\ee
If we write $\bds v^\T = (\underbrace{\bds v_{(1),1}^\T, \dots,\bds v_{(1), \ell}^\T}_{\in \mathbb{R}^p}, \underbrace{\bds v_{(2),1}^\T, \dots,\bds v_{(2), \ell}^\T}_{\in \mathbb{R}^p},\dots\dots,\underbrace{\bds v_{(k),1}^\T, \dots,\bds v_{(k), \ell}^\T}_{\in \mathbb{R}^p})$, where $\bds v_{(l_1), \ell_0} \in \mathbb{R}^{d_{\ell_0}}$ for any $1\leq l_1 \leq d_{\ell_0}$ and $1 \leq \ell_0 \leq \ell$. We also  define 
\bee\label{def:vl0}
\bds{v}_{\ell_0}^\T \equiv (\bds{v}^\T_{(1),\ell_0}, \bds{v}^\T_{(2),\ell_0}, \dots, \bds{v}^\T_{(k_2),\ell_0}).
\ee
With simple algebra, one can see
\bee\nonumber
\bds{v}^\T \M M \bds{v} = \sum_{1\leq \ell_0 \leq \ell}\bds{v}_{\ell_0}^\T\mathscr{M}_{\ell_0}\bds{v}_{\ell_0}.
\ee
Thus,
\bee\label{cliam:1}
\|\M M\|_2 &= \max_{\sum_{1\leq \ell_0 \leq \ell}\|\bds v_{\ell_0}\|^2_2 = 1}\sum_{1\leq \ell_0 \leq \ell}\bds{v}_{\ell_0}^\T\mathscr{M}_{\ell_0}\bds{v}_{\ell_0} 
\\
&=  \max_{\sum_{1\leq \ell_0 \leq \ell}\|\bds v_{\ell_0}\|^2_2 = 1}\sum_{1\leq \ell_0 \leq \ell}\|\bds{v}_{\ell_0}\|_2^2 \Big(\frac{\bds{v}_{\ell_0}}{\|\bds{v}_{\ell_0}\|_2}\Big)^\T\mathscr{M}_{\ell_0}\Big(\frac{\bds{v}_{\ell_0}}{\|\bds v_{\ell_0}\|}\Big)
\\
&\leq \max_{\sum_{1\leq \ell_0 \leq \ell}\|\bds v_{\ell_0}\|^2_2 = 1}\sum_{1\leq \ell_0\leq \ell}\|\bds{v}_{\ell_0}\|_2^2\cdot \|\mathscr{M}_{\ell_0}\|_2 
\\
&\leq \max_{1\leq \ell_0 \leq \ell}\big\{\big\|\mathscr{M}_{\ell_0}\big\|_2\big\},
\ee
where the last inequality holds because $\sum_{1\leq \ell_0 \leq \ell}\|\bds v_{\ell_0}\|^2_2 = 1$ and $\sum_{1\leq \ell_0\leq \ell}\|\bds{v}_{\ell_0}\|_2^2\cdot \|\mathscr{M}_{\ell_0}\|_2$ can be seen as a weighted average of $\{\|\mathscr{M}_{\ell_0}\|_2\}_{1\leq \ell_0 \leq \ell}$. 
\par
On the other hand, there exists an $\ell_0^*$ such that $\|\mathscr{M}_{\ell_0^*}\|_2 = \max_{1\leq \ell_0\leq\ell}\{\|\mathscr{M}_{\ell_0}\|_2\}$. By properties of spectral norm, we can find a ${\bds v}^*_{\ell_0^*}\in \RR^{k_2d_{\ell_0^*}}$ such that $\|\bds v^*_{\ell_0^*}\| = 1$ and ${\bds{v}^*_{\ell_0^*}}^\T\mathscr{M}_{\ell_0^*}\bds{v}^*_{\ell_0^*} = \|\mathscr{M}_{\ell_0^*}\|_2 = \max_{1\leq \ell_0 \leq \ell}\big\{\big\|\mathscr{M}_{\ell_0}\big\|_2\big\}$. Then we define $\bds v^*$ as
\bee
\bds v^* = ({{{\bds v}^*_{(1),1}}^\T,\dots,{\bds v^*_{(1),\ell_0^*}}^\T_{} \dots,{{\bds v}_{(1), \ell}^*}^\T},\dots\dots,{{{\bds v}^*_{(k),1}}^\T,\dots,{\bds v^*_{(k),\ell_0^*}}^\T_{} \dots,{{\bds v}_{(k), \ell}^*}^\T}),
\ee
where $\bds v^*_{(k'),\ell_0^*}$ is the $k'$th sub-vector of $\bds v^*_{\ell_0^*}$ as defined in \eqref{def:vl0} for all $1\leq k'\leq k$. All other $\bds v^*_{(k'),\ell_0}$ are  defined to be zero vectors when $\ell_0\neq \ell_0^*$ and $1\leq k'\leq k$. Then we show that
\bee\label{cliam:2}
\max_{1\leq \ell_0 \leq \ell}\big\{\big\|\mathscr{M}_{\ell_0}\big\|_2\big\} &= \big\|\mathscr{M}_{\ell^*_0}\big\|_2
\\
&= {\bds{v}^*_{\ell_0^*}}^\T\mathscr{M}_{\ell_0^*}\bds{v}^*_{\ell_0^*} + \sum_{\ell_0\neq \ell_0^*}\underbrace{{\bds{v}^*_{\ell_0}}^\T}_{=\bds 0}\mathscr{M}_{\ell_0}\bds{v}^*_{\ell_0}
\\
&=\sum_{1\leq \ell_0 \leq \ell}{\bds{v}^*_{\ell_0}}^\T\mathscr{M}_{\ell_0}\bds{v}^*_{\ell_0}
\\
&= {{\bds v^*}^\T} \M M\bds v^*
\\
&\leq \|\M M\|_2,
\ee
where the last inequality holds by directly checking that $\|\bds v^*\| = 1$. Combining \eqref{cliam:1} and \eqref{cliam:2}, we finally have
\bee\nonumber
\|\M M\|_2 = \max_{1\leq \ell_0 \leq \ell}\big\{\big\|\mathscr{M}_{\ell_0}\big\|_2\big\}.
\ee
\end{proof}
Then similar to \eqref{Skdecom}, by definition we have
\bee\nonumber
&\Big\|{\M S'}^{(k)}_{r,k_2} - \E {\M S'}^{(k)}_{r,k_2}\Big\|_2 
\\
&\leq \sum_{r_3 = 1 }^k\Big\|\sum_{r_4 = -1}^{p/k}\M M_{r, r_4k + r_3}^{(1),k_2,k} - \E\big[\M M_{r, r_4k + r_3}^{(1),k_2,k}\big]\Big\|_2 
\\
&\leq k\max_{1 \leq r_3 \leq k}\Big\|\sum_{r_4 = -1}^{p/k}\M M_{r, r_4k + r_3}^{(1),k_2,k} - \E\big[\M M_{r, r_4k + r_3}^{(1),k_2,k}\big]\Big\|_2.
\ee 
We now take $\sum_{r_4 = -1}^{p/k}\M M_{r, r_4k + r_3}^{(1),k_2,k} - \E\big[\M M_{r, r_4k + r_3}^{(1),k_2,k}\big]$ as the $\M M$ in Claim \ref{claim:1}, and we have
\bee\nonumber
\Big\|{\M S'}^{(k)}_{r,k_2} - \E {\M S'}^{(k)}_{r,k_2}\Big\|_2 \leq k\max_{1-k\leq r'\leq p}\big\|\mathscr{M}_{r,r'}^{(1),k_2,k} - \E\big[\mathscr{M}_{r,r'}^{(1),k_2,k}\big]\big\|_2.
\ee
Combining it with (\ref{lm:dt:target2}), we obtain
\bee\label{lm:dt:target3}
&\|\DTCovZ - \M \Sigma_2^{*,\mathcal{T}}(k_2)\otimes \M \Sigma_1^{*,\mathcal{T}}(k_1)\|^2_2 
\\
&\leq 81\max_{1 - k\leq r'\leq p,\atop {1\leq r \leq q - k_2 + 1,\atop k\in\{k_1,k_1/2\}}}\big\|\mathscr{M}_{r,r'}^{(1),k_2,k} - \E\big[\mathscr{M}_{r,r'}^{(1),k_2,k}\big]\big\|^2_2
\\
&\leq 81{\max_{1-k_1\leq r'\leq p \atop1\leq r \leq q - k_2 + 1}\big\|\mathscr{M}_{r,r'}^{(1),k_2,k_1} - \E\big[\mathscr{M}_{r,r'}^{(1),k_2,k_1}\big]\big\|^2_2}
\\
&\leq 81\underbrace{\max_{1\leq r'\leq p - k_1 + 1\atop1\leq r \leq q - k_2 + 1}\big\|\mathscr{M}_{r,r'}^{(1),k_2,k_1} - \E\big[\mathscr{M}_{r,r'}^{(1),k_2,k_1}\big]\big\|^2_2}_{N_{k_1,k_2}}.
\ee
To understand the second inequality in \eqref{lm:dt:target3}, we first note $\mathscr{M}_{r,r'}^{(1),k_2,k_1/2} - \E\big[\mathscr{M}_{r,r'}^{(1),k_2,k_1/2}\big]$ is blocked by submatices $\bds{\mathcal{m}}_{r'}^{(1)}(l_2,m_2,k_1/2) - \E \bds{\mathcal{m}}_{r'}^{(1)}(l_2,m_2,k_1/2)$, and $\mathscr{M}_{r,r'}^{(1),k_2,k_1} - \E\big[\mathscr{M}_{r,r'}^{(1),k_2,k_1}\big]$ is blocked by submatices $\bds{\mathcal{m}}_{r'}^{(1)}(l_2,m_2,k_1) - \E \bds{\mathcal{m}}_{r'}^{(1)}(l_2,m_2,k_1)$. Furthermore, by definition, each of the $\bds{\mathcal{m}}_{r'}^{(1)}(l_2,m_2,k_1/2) - \E \bds{\mathcal{m}}_{r'}^{(1)}(l_2,m_2,k_1/2)$ is a sub-matrix of the corresponding $\bds{\mathcal{m}}_{r'}^{(1)}(l_2,m_2,k_1) - \E \bds{\mathcal{m}}_{r'}^{(1)}(l_2,m_2,k_1)$. Here we note the fact that the spectral norm of a submatrix is always smaller or equal to the spectral norm of the original matrix; see e.g. (2.3.13) in \citet{golub2013matrix} for details. Since  $\mathscr{M}_{r,r'}^{(1),k_2,k_1/2} - \E\big[\mathscr{M}_{r,r'}^{(1),k_2,k_1/2}\big]$ is a submatrix of $\mathscr{M}_{r,r'}^{(1),k_2,k_1} - \E\big[\mathscr{M}_{r,r'}^{(1),k_2,k_1}\big]$ block-wisely, one can easily show $$\big\|\mathscr{M}_{r,r'}^{(1),k_2,k_1/2} - \E\big[\mathscr{M}_{r,r'}^{(1),k_2,k_1/2}\big]\big\|_2\leq \big\|\mathscr{M}_{r,r'}^{(1),k_2,k_1} - \E\big[\mathscr{M}_{r,r'}^{(1),k_2,k_1}\big]\big\|_2$$ for any $1-k_1/2\leq r'\leq p$ and $1\leq r \leq q-k_2 + 1$. Thus the second inequality of \eqref{lm:dt:target3} follows.
\par
The last inequality of \eqref{lm:dt:target3} holds for the same reason as above. In particular, because any sub-block $\mathbf{m}_{r'}^{(1)}(l_2,m_2,k_1/2) - \E \mathbf{m}_{r'}^{(1)}(l_2,m_2,k_1/2)$ in $\mathscr{M}_{r,r'}^{(1),k_2,k_1/2} - \E\big[\mathscr{M}_{r,r'}^{(1),k_2,k_1/2}\big]$ with $1-k_1\leq r'\leq 0$, is a submatrix of the corresponding $\mathbf{m}_{1}^{(1)}(l_2,m_2,k_1) - \E \mathbf{m}_{1}^{(1)}(l_2,m_2,k_1)$ in $\mathscr{M}_{r,1}^{(1),k_2,k_1} - \E\big[\mathscr{M}_{r,1}^{(1),k_2,k_1}\big]$, one has $\mathscr{M}_{r,r'}^{(1),k_2,k_1/2} - \E\big[\mathscr{M}_{r,r'}^{(1),k_2,k_1/2}\big]$ is generally a submatrix of $\mathscr{M}_{r,1}^{(1),k_2,k_1} - \E\big[\mathscr{M}_{r,1}^{(1),k_2,k_1}\big]$ block-wisely. We  then have $\big\|\mathscr{M}_{r,r'}^{(1),k_2,k_1} - \E\big[\mathscr{M}_{r,r'}^{(1),k_2,k_1}\big]\big\|_2\leq \big\|\mathscr{M}_{r,1}^{(1),k_2,k_1} - \E\big[\mathscr{M}_{r,1}^{(1),k_2,k_1}\big]\big\|_2$ when $1-k_1\leq r'\leq 0$ and $1\leq r \leq q-k_2 + 1$. Similarly, we also have $\big\|\mathscr{M}_{r,r'}^{(1),k_2,k_1} - \E\big[\mathscr{M}_{r,r'}^{(1),k_2,k_1}\big]\big\|_2\leq \big\|\mathscr{M}_{r,p-k_1 + 1}^{(1),k_2,k_1} - \E\big[\mathscr{M}_{r,p-k_1 + 1}^{(1),k_2,k_1}\big]\big\|_2$ when $p-k_1 + 2\leq r'\leq p$ and $1\leq r \leq q-k_2 + 1$.  Thus the last inequality of \eqref{lm:dt:target3} follows.
\par
With (\ref{lm:dt:target3}), we focus on the concentration of $N_{k_1,k_2}$. Adapted from proof of Lemma 3 in \cite{Cai2010}, we have
\bee\label{l:dt:key1}
\p(N_{k_1,k_2} > x)&=\p\Big\{\max_{1\leq r'\leq p - k_1 + 1\atop1\leq r \leq q - k_2 + 1}\big\|\mathscr{M}_{r,r'}^{(1),k_2,k_1} - \E\big[\mathscr{M}_{r,r'}^{(1),k_2,k_1}\big]\big\|^2_2 > x\Big\}
\\
&\leq pq \max_{1\leq r'\leq p - k_1 + 1\atop1\leq r \leq q - k_2 + 1}\p\Big\{\big\|\mathscr{M}_{r,r'}^{(1),k_2,k_1} - \E\big[\mathscr{M}_{r,r'}^{(1),k_2,k_1}\big]\big\|^2_2 > x\Big\}
\\
&\precsim pq \cdot 5^{k_1 k_2}\sup_{\mathbf{v}_i \in \mathcal{U}_{k_2 d}, r, r'} \p\Big\{\big| \mathbf{v}_i^\T\big(\mathscr{M}_{r,r'}^{(1),k_2,k_1} - \E\big[\mathscr{M}_{r,r'}^{(1),k_2,k_1}\big]\big)\mathbf{v}_i\big|^2 > x\Big\}.
\ee
Here $\mathcal{U}_{k_2 d}$ is the set of unit spheres in $\RR^{k_2d}$ where $\mathscr{M}^{(1),k_2,k_1}_{r,r'} \in \RR^{k_2d \times k_2d}$, and $d\leq k_1$ is determined by $k_1$ and $r'$. By the structure of $\mathscr{M}_{r,r'}^{(1),k_2,k_1}$, we can apply Lemma \ref{subG:dev} to show
\bee\label{l:dt:key2}
\p(N_{k_1,k_2} > x) &= \p\Big\{\max_{1\leq r'\leq p - k_1 + 1\atop1\leq r \leq q - k_2 + 1}\big\|\mathscr{M}_{r,r'}^{(1),k_2,k_1} - \E\big[\mathscr{M}_{r,r'}^{(1),k_2,k_1}\big]\big\|^2_2 > x\Big\}
\\
&\precsim pq 5^{k_1k_2}\exp(-nC'_2x)
\ee
for some $C'_2,\zeta' > 0$ when $|x| \leq \zeta'$. Then by Cauchy-Schwarz inequality,
\bee\label{l:dt:key3}
&\E\big\|\DTCovZ - \M \Sigma_2^{*,\mathcal{T}}(k_2)\otimes \M \Sigma_1^{*,\mathcal{T}}(k_1)\|^2_2 
\\
&\leq 81\E N_{k_1,k_2} 
\\
&\precsim x + \E \big(N_{k_1,k_2}\M I(N_{k_1,k_2} > x)\big)
\\
&\leq x+ \sqrt{\E(N_{k_1,k_2}^2)}\sqrt{\p(N_{k_1,k_2} > x)} 
\\
&\precsim x + \sqrt{\E(\|\DTCovZ\|_\F^4 + \|\M \Sigma_2^{*,\mathcal{T}}(k_2)\otimes \M \Sigma_1^{*,\mathcal{T}}(k_1)\|_\F^4)}\cdot\sqrt{\p(N_{k_1,k_2} > x)}
\\
&\precsim x + p^2q^2\sqrt{pq5^{k_1k_2}}\exp(-nxC'_2/2).
\ee
Take $x =C\frac{\log(pq) + k_1k_2}{nC_2'/2}$ with sufficient large $C>0$,
\bee\label{l:dt:key4}
\E\big\|\DTCovZ - \M \Sigma_2^{*,\mathcal{T}}(k_2)\otimes \M \Sigma_1^{*,\mathcal{T}}(k_1)\|^2_2 \precsim \frac{\log(\max\{p,q\}) + k_1k_2}{n},
\ee
when $p$ or $q$ is $\succsim n^{b}$ for some $b >0$.
\\
\par
\noindent{\textbf{Step 2 (Bound of $\E\big\|\DTCovZ - \DTCov\|^2_2$): }} It is easy to see that $\DTCov - \DTCovZ = (\hat{\M \Sigma}_0  - \hat{\M \Sigma}) \HDTaper = \vecc(\bar{\M X})\vecc(\bar{\M X})^\T\HDTaper.$ We define $\tilde{\mathscr{M}}^{(1),k_2,k}_{r,r'}$ based on $\vecc(\bar{\M X})\vecc(\bar{\M X})^\T\HDTaper$ in the same way as $\mathscr{M}^{(1),k_2,k}_{r,r'}$ based on $\DTCovZ$. Similar to \eqref{lm:dt:target3}, we can show
\bee\nonumber
\E\|\vecc(\bar{\M X})\vecc(\bar{\M X})^\T\HDTaper\|^2_2\leq 81\E \max_{r,r'}\|\tilde{\mathscr{M}}^{(1),k_2,k_1}_{r,r'}\|_2^2.
\ee 
Now, to bound $\E \big\|\DTCovZ - \DTCov\|^2_2 = \E \|\vecc(\bar{\M X})\vecc(\bar{\M X})^\T\HDTaper\|_2^2$, it is left to get the concentration bound of $\max_{r,r'}\|\tilde{\mathscr{M}}^{(1),k_2,k_1}_{r,r'}\|_2^2$ .
\par
By the property of sub-Gaussian random variables, there exists $\rho'>0$ such that 
\bee\nonumber
\p\Big\{\Big|\mathbf{v}^\T\Big[\vecc(\bar{\M X}) - \E\{\vecc(\bar{\M X})\}\Big]\Big|>x\Big\} = \p\Big\{\Big|\mathbf{v}^\T\cdot \vecc(\bar{\M X})\Big|>x\Big\}\leq e^{-\rho' nx^2}
\ee
{\color{black} for any} $\|\mathbf{v}\| = 1$. Here $\rho'> 0$ is a constant {\color{black} that depends on the constant $\rho$}. Treating $\vecc(\bar{\M X})$ as a sub-Gaussian random variable with parameter $n\rho'$, by sub-Gaussian property
\bee\nonumber
\p\Big(|\mathbf{v}^\T\cdot \vecc(\bar{\M X}) \vecc(\bar{\M X})^\T \cdot \mathbf{v}|>x\Big) &= \p[\{\mathbf{v}^\T\cdot \vecc(\bar{\M X})\}^2>x]
\\
&= \p\{|\mathbf{v}^\T\cdot \vecc(\bar{\M X})|>\sqrt{x}\}
\\
&\leq \exp(-\rho'nx)
\ee
for some $C_2''>0$. Thus, by the construction of $\tilde{\mathscr{M}}^{(1),k_2,k}_{r,r'}$, similar to \eqref{l:dt:key1}-\eqref{l:dt:key2}, we have
\bee\nonumber
&\p\Big\{\max_{1\leq r'\leq p - k_1 + 1\atop1\leq r \leq q - k_2 + 1}\big\|\tilde{\mathscr{M}}_{r,r'}^{(1),k_2,k_1}\big\|^2_2 > x\Big\} 
\\
&\precsim pq 5^{k_1k_2}\sup_{\|\bds{v}\| = 1,\bds{v}\in \RR^{d\times d},r,r'}\p\Big\{\Big| \bds{v}^\T\tilde{\mathscr{M}}_{r,r'}^{(1),k_2,k_1} \bds{v}\Big|^2 > x\Big\}
\\
&\precsim pq 5^{k_1k_2}\sup_{\|\mathbf{v}\| = 1,\mathbf{v}\in \RR^{pq\times pq}}\p\Big[\Big|\mathbf{v}^\T\cdot \Big\{\vecc(\bar{\M X}) \vecc(\bar{\M X})^\T \Big\} \cdot \mathbf{v}\Big|^2>x\Big]
\\
&\precsim pq5^{k_1k_2}\exp(-\rho'n\sqrt{x}).
\ee
Following similar arguments to \eqref{l:dt:key3}-\eqref{l:dt:key4}, we can  show
\bee\nonumber
\E\big\|\DTCovZ - \DTCov\|^2_2 \precsim \Big[\frac{\log(\max\{p,q\}) + k_1k_2}{n}\Big]^2,
\ee
which is negligible compared to $\E\big\|\DTCovZ - \M \Sigma_2^{*,\mathcal{T}}(k_2)\otimes \M \Sigma_1^{*,\mathcal{T}}(k_1)\|^2_2$ when $p$ or $q$ is $\succsim n^{b}$ for some $b >0$.
\end{proof}
\begin{lemma}\label{l:sp:out}
For $\M \Sigma_1^{*}\in \mathcal{F}(\varepsilon_0, \alpha_1),\M \Sigma_2^{*} \in \mathcal{F}(\varepsilon_0, \alpha_2)$ and $\M\Sigma^* = \M \Sigma_2^*\otimes \M \Sigma_1^*$, we have
\bee\nonumber
\label{l:sp:outer:res1}
\|\M \Sigma_2^{*,\mathcal{T}}(k_2)\otimes\M \Sigma_1^{*,\mathcal{T}}(k_1) - \M \Sigma^*\|_2^2 \precsim \M I(k_1< 2p-2)\cdot k_1^{-2\alpha_1} +  \M I(k_2< 2q - 2)\cdot k_2^{-2\alpha_2}.
\ee
\end{lemma}
\begin{proof}[Proof of Lemma \ref{l:sp:out}] The notations of this proof are mainly contained in Section \ref{notation:main}. We can directly decompose
\bee\nonumber
&\M \Sigma_2^{*,\mathcal{T}}(k_2)\otimes\M \Sigma_1^{*,\mathcal{T}}(k_1) - \M \Sigma^* 
\\
&= \M \Sigma_2^{*,\mathcal{T}}(k_2)\otimes\M \Sigma_1^{*,\mathcal{T}}(k_1) - \M \Sigma_2^*\otimes \M \Sigma_1^*
\\
&= \big\{\M \Sigma_2^{*,\mathcal{T}}(k_2) - \M \Sigma_2^*\big\}\otimes\M \Sigma_1^{*,\mathcal{T}}(k_1) + \M \Sigma_2^*\otimes\big\{\M \Sigma_1^{*,\mathcal{T}}(k_1)-\M \Sigma_1^*\big\}.
\ee
By triangle inequality, we have
\bee\label{main:lm:s:outer}
&\big\|\M \Sigma_2^{*,\mathcal{T}}(k_2)\otimes\M \Sigma_1^{*,\mathcal{T}}(k_1) - \M \Sigma^*\big\|^2_2 
\\
&\leq 2\big\|\big\{\M \Sigma_2^{*,\mathcal{T}}(k_2) - \M \Sigma_2^*\big\}\otimes\M \Sigma_1^{*,\mathcal{T}}(k_1)\big\|^2_2 + 2\big\|\M \Sigma_2^*\otimes\big\{\M \Sigma_1^{*,\mathcal{T}}(k_1)-\M \Sigma_1^*\big\}\big\|^2_2
\\
&=2\big\|\M \Sigma_2^{*,\mathcal{T}}(k_2) - \M \Sigma_2^*\big\|^2_2\times\big\|\M \Sigma_1^{*,\mathcal{T}}(k_1)\big\|^2_2 + 2\big\|\M \Sigma_2^*\big\|^2_2\times\big\|\M \Sigma_1^{*,\mathcal{T}}(k_1)-\M \Sigma_1^*\big\|^2_2,
\ee
where the last equality holds by \citet{lancaster1972norms}.
\par
We first bound $\big\|\M \Sigma_1^{*,\mathcal{T}}(k_1)-\M \Sigma_1^*\big\|_2$ and $\big\|\M \Sigma_2^{*,\mathcal{T}}(k_2)-\M \Sigma_2^*\big\|_2$. By definition of $T_k(\cdot)$, we know the absolute value of $l_1m_1$th entry of $\M \Sigma_1^{*,\mathcal{T}}(k_1)-\M \Sigma_1^*$ is less or equal to $|\sigma_{l_1m_1}^{(1)}|$ when $|l_1 - m_1| > \lfloor \frac{k_1}{2}\rfloor$; equal to $0$ when $|l_1 - m_1|\leq \lfloor \frac{k_1}{2}\rfloor$. Note that $1\leq l_1,m_1 \leq p$. Then we have
\bee\nonumber
& \big\|\M \Sigma_1^{*,\mathcal{T}}(k_1)-\M \Sigma_1^*\big\|_{\infty}  \\
&~= \max_{1\leq l_1\leq p}\sum_{m_1 = 1}^{p} \big|[\M \Sigma_1^{*,\mathcal{T}}(k_1)-\M \Sigma_1^*]_{l_1m_1}\big|
\\
&~= \max_{1\leq l_1 \leq p}\Big[\sum_{m_1: |m_1 - l_1|>\lfloor \frac{k_1}{2}\rfloor} \underbrace{\big|[\M \Sigma_1^{*,\mathcal{T}}(k_1)-\M \Sigma_1^*]_{l_1m_1}\big|}_{\leq |\sigma^{(1)}_{l_1m_1}|}
+ \sum_{m_1: |m_1 - l_1|\leq\lfloor \frac{k_1}{2}\rfloor} \underbrace{\big|[\M \Sigma_1^{*,\mathcal{T}}(k_1)-\M \Sigma_1^*]_{l_1m_1}\big|}_{=0}\Big]
\\
&~\leq \max_{1\leq l_1 \leq p}\sum_{m_1: |m_1 - l_1|>\lfloor \frac{k_1}{2}\rfloor} |\sigma_{l_1m_1}^{(1)}|
\\
&~\leq 
\begin{cases}
C_0\big(\big\lfloor k_1/2\big\rfloor\big)^{-\alpha_1} & k_1< 2p-2;
\\
0 & k_1\geq 2p-2,
\end{cases}
\\
&~=\M I(k_1<2p-2)\times C_0\big(\big\lfloor k_1/2\big\rfloor\big)^{-\alpha_1} 
\\
&~\asymp \M I(k_1<2p-2)k_1^{-\alpha_1}.
\ee
When $k_1 < 2p-2$, the second inequality can be directly derived from the fact that $\M\Sigma_1^* \in \mathcal{F}(\varepsilon_0,\alpha) $ and \eqref{A1}. When $k_1 \geq 2p-2$, the second inequality holds because there are no $l_1,m_1$ such that $1\leq l_1,m_1\leq p$ satisfying $|m_1 - l_1|>\lfloor k_1/2 \rfloor\geq (2p-2)/2>p-1$. So $\max_{1\leq l_1 \leq p}\sum_{m_1: |m_1 - l_1|>\lfloor \frac{k_1}{2}\rfloor} |\sigma_{l_1m_1}^{(1)}| = 0$ as there is no element in the sum. Since $\M \Sigma_1^{*,\mathcal{T}}(k_1)-\M \Sigma_1^*$ is symmetric, we also have $\|\M \Sigma_1^{*,\mathcal{T}}(k_1)-\M \Sigma_1^*\|_1 = \|\M \Sigma_1^{*,\mathcal{T}}(k_1)-\M \Sigma_1^*\|_{\infty}$. Then by Lemma \ref{lm:matrixholder}, we finally show
\bee\nonumber
\|\M \Sigma_1^{*,\mathcal{T}}(k_1)-\M \Sigma_1^*\|_2 &\leq \sqrt{\|\M \Sigma_1^{*,\mathcal{T}}(k_1)-\M \Sigma_1^*\|_1\times\|\M \Sigma_1^{*,\mathcal{T}}(k_1)-\M \Sigma_1^* \|_{\infty}} 
\\
&= \big\|\M \Sigma_1^{*,\mathcal{T}}(k_1)-\M \Sigma_1^*\big\|_{\infty}
\\
&\precsim \M I(k_1<2p-2)k_1^{-\alpha_1}.
\ee
A symmetric argument can also show $\|\M \Sigma_2^{*,\mathcal{T}}(k_2)-\M \Sigma_2^*\|_2 \precsim \M I(k_2<2q-2)k_2^{-\alpha_2}$.
\par
Next we bound $\big\|\M \Sigma_1^{*,\mathcal{T}}(k_1)\big\|_2$ and $\big\|\M \Sigma_2^*\big\|_2$. By definitions of $\M \Sigma_1^{*,\mathcal{T}}(k_1)$, we know entrywisely $\M \Sigma_1^{*,\mathcal{T}}(k_1)$ is a shrinkage of $\M\Sigma_1^*$. So we know that
\bee\nonumber
\|\M \Sigma_1^{*,\mathcal{T}}(k_1)\|_\infty &=\max_{1\leq l_1\leq p}\sum_{m_1 = 1}^p\big|\big[\M \Sigma_1^{*,\mathcal{T}}(k_1)\big]_{l_1m_1}\big|
\\
&\leq\max_{1\leq l_1\leq p}\sum_{m_1 = 1}^p\big|\big[\M \Sigma_1^{*}\big]_{l_1m_1}\big|
\\
&=\max_{1\leq l_1\leq p}\sum_{m_1 = 1}^p\big|\sigma^{(1)}_{l_1m_1}\big|
\\
&\leq C_0,
\ee
where the last inequality holds because $\M \Sigma_1^*\in\mathcal{F}(\varepsilon_0,\alpha_1)$. Then since $\M \Sigma_1^{*,\mathcal{T}}(k_1)$ is symmetric, we know $\|\M \Sigma_1^{*,\mathcal{T}}(k_1)\|_\infty = \|\M \Sigma_1^{*,\mathcal{T}}(k_1)\|_1$ and then
\bee\nonumber
\|\M \Sigma_1^{*,\mathcal{T}}(k_1)\|_2 &\leq \sqrt{\|\M \Sigma_1^{*,\mathcal{T}}(k_1)\|_\infty \times \|\M \Sigma_1^{*,\mathcal{T}}(k_1)\|_1}
\\
&=\|\M \Sigma_1^{*,\mathcal{T}}(k_1)\|_\infty
\\
&\leq C_0.
\ee
Similarly we can also show $\|\M \Sigma_2^{*}\|_2 \leq C_0$.
\par
By \eqref{main:lm:s:outer} and the bounds of $\big\|\M \Sigma_2^{*,\mathcal{T}}(k_2) - \M \Sigma_2^*\big\|_2, \big\|\M \Sigma_1^{*,\mathcal{T}}(k_1)\big\|_2 , \big\|\M \Sigma_2^*\big\|_2,\big\|\M \Sigma_1^{*,\mathcal{T}}(k_1)-\M \Sigma_1^*\big\|_2$, we finally show
\bee\nonumber
&\big\|\M \Sigma_2^{*,\mathcal{T}}(k_2)\otimes\M \Sigma_1^{*,\mathcal{T}}(k_1) - \M \Sigma^*\big\|^2_2 
\\
&\leq 2C_0^2\big\|\M \Sigma_2^{*,\mathcal{T}}(k_2) - \M \Sigma_2^*\big\|^2_2 + 2C_0^2\times\big\|\M \Sigma_1^{*,\mathcal{T}}(k_1)-\M \Sigma_1^*\big\|^2_2
\\
&\precsim  \M I(k_1< 2p-2)\cdot k_1^{-2\alpha_1} +  \M I(k_2< 2q - 2)\cdot k_2^{-2\alpha_2},
\ee
which completes the proof.
\end{proof}
}
\section{Proof of Theorems}\label{sec:thmpf}
\subsection{Proof of Theorem \ref{T2}}
As discussed in Section \ref{sec:discu:thm1}, our proof strategy is to upper bound the target error by two error terms, and then use Lemmas \ref{l:outer}--\ref{lemma:srate} to bound the two error terms respectively. The notation of this proof is mainly contained in Section \ref{notation:main}. By triangle inequality, we have
\bee\label{pf:t2:triangle}
&\frac{\E\|\hat{\M\Sigma}^\eta_2\kii\otimes\hat{\M\Sigma}^\eta_1\ki - \M\Sigma^*\|_{\F}^2}{pq} 
\\
&\leq 2\cdot \Bigg[ {{\frac{\E\|\M\Sigma^{*,\eta}_2(k_2)\otimes \M\Sigma^{*,\eta}_1(k_1) - \M\Sigma^*\|_{\F}^2}{pq}}} +{{\frac{\E\|\hat{\M\Sigma}^\eta_2\kii\otimes\hat{\M\Sigma}^\eta_1\ki - \M\Sigma^{*,\eta}_2(k_2)\otimes \M\Sigma^{*,\eta}_1(k_1)\|_{\F}^2}{pq}}}\Bigg].
\ee
\par
Recall that $\tilde{\M \Sigma}_{\eta}(k_1,k_2)$ is the doubly banded/tapering matrix of $\hat{\M \Sigma}$ with bandwidths $k_1$ and $k_2$. Since $\M\Sigma^{*}_1,\M\Sigma_2^*$ are in $\mathcal{F}(\varepsilon_0,\alpha)$ or $\mathcal{M}(\varepsilon_0,\alpha)$ matrix class, we know $\M\Sigma^{*}_1$ and $\M\Sigma_2^*$ are positive-definitive and thus have non-zero diagonal entires. Therefore $\M\Sigma^{*,\eta}_1(k_1)$ and $\M\Sigma^{*,\eta}_2(k_2)$ are not zero matrices. Then, taking $\tilde{\M \Sigma}^\diamond = \tilde{\M \Sigma}_{{\eta}}(k_1,k_2)$ and $\tilde{\M\Sigma}^* = \M\Sigma^{*,\eta}_2(k_2)\otimes \M\Sigma^{*,\eta}_1(k_1)$ in  Lemma \ref{T1}, we have
\bee\label{T2:Bbasic}
\frac{\|\hat{\M\Sigma}^\eta_2\kii\otimes\hat{\M\Sigma}^\eta_1\ki - \M\Sigma^{*,\eta}_2(k_2)\otimes \M\Sigma^{*,\eta}_1(k_1)\|^2_{\F}}{{pq}}\leq 8 \cdot\frac{\big\|\xi\big\{\tilde{\M \Sigma}_\eta(k_1,k_2)\big\}-\xi\big\{ \M\Sigma^{*,\eta}_2(k_2)\otimes \M\Sigma^{*,\eta}_1(k_1)\big\}\big\|^2_2}{{pq}}.
\ee
Taking expectation of \eqref{T2:Bbasic} and combining it with \eqref{pf:t2:triangle}, we have 
\bee\label{T2:main}
&\E\Bigg(\frac{\|\hat{\M\Sigma}^\eta_2\kii\otimes\hat{\M\Sigma}^\eta_1\ki - \M\Sigma^*\|^2_{\F}}{{pq}}\Bigg)
\\
&\leq 16\cdot\frac{\E\big\|\xi\big\{\tilde{\M \Sigma}_\eta(k_1,k_2)\big\}-\xi\big\{ \M\Sigma^{*,\eta}_2(k_2)\otimes \M\Sigma^{*,\eta}_1(k_1)\big\}\big\|^2_2}{{pq}} + 2\cdot{{\frac{\|\M\Sigma^{*,\eta}_2(k_2)\otimes \M\Sigma^{*,\eta}_1(k_1) - \M\Sigma^*\|_{\F}^2}{pq}}}
\\
&\asymp \frac{1}{pq}\E\big\|\xi\big\{\tilde{\M \Sigma}_{\eta}(k_1,k_2)\big\}-\xi\big\{\M\Sigma_2^{*,{\eta}}(k_2)\otimes \M\Sigma_1^{*,{\eta}}(k_1)\big\}\big\|_2^2
+\frac{1}{pq}\big\|\M\Sigma_2^{*,{\eta}}(k_2)\otimes \M\Sigma_1^{*,{\eta}}(k_1)-\M\Sigma^{*}\big\|^2_\F,
\ee
where the first inequality above holds because the term $\frac{1}{pq}\big\|\M\Sigma_2^{*,{\eta}}(k_2)\otimes \M\Sigma_1^{*,{\eta}}(k_1)-\M\Sigma^{*}\big\|^2_\F$ is nonrandom. \\

\par
We will now bound the two terms on the right-hand side of the above inequality via Lemmas \ref{l:outer}--\ref{lemma:srate}. For the first term in \eqref{T2:main}, since sub-Gaussian condition directly implies {\color{black} finite} fourth moment condition, we can bound it through Lemma \ref{l:frate} by
\bee\label{T2:B1:a}
&\frac{1}{pq}\E\big\|\xi\big\{\tilde{\M \Sigma}_{\eta}(k_1,k_2)\big\}-\xi\big\{\M\Sigma_2^{*,{\eta}}(k_2)\otimes \M\Sigma_1^{*,{\eta}}(k_1)\big\}\big\|_2^2 
\\
&\leq \frac{1}{pq}\E\big\|\xi\big\{\tilde{\M \Sigma}_{\eta}(k_1,k_2)\big\}-\xi\big\{\M\Sigma_2^{*,{\eta}}(k_2)\otimes \M\Sigma_1^{*,{\eta}}(k_1)\big\}\big\|_\F^2  \quad (\text{By }\|\cdot\|_2\leq \|\cdot\|_\F)
\\
&=\frac{1}{pq}\E\big\|\tilde{\M \Sigma}_{\eta}(k_1,k_2) - \M\Sigma_2^{*,{\eta}}(k_2)\otimes \M\Sigma_1^{*,{\eta}}(k_1)\big\|_\F^2 \quad \text{(By Lemma \ref{lemma:xi})} 
\\
&\precsim \frac{k_1k_2}{n}. \quad \text{(By Lemma \ref{l:frate})}
\ee
On the other hand, we can also bound it through Lemma \ref{lemma:srate}:
\bee\label{T2:B1:b}
\frac{1}{pq}\E{\|\xi\big\{\tilde{\M \Sigma}_{\mathcal{\eta}}(k_1,k_2)\big\} - \xi\{\M\Sigma^{*,\eta}_2(k_2)\otimes \M\Sigma^{*,\eta}_1(k_1)\}\|_{2}^2}{}\precsim \begin{cases}
\frac{k_1}{qn} + \frac{k_2}{pn} & pk_1 + qk_2 \precsim n
\\
\frac{pk^2_1}{qn^2} + \frac{qk^2_2}{pn^2} & pk_1 + qk_2 \succ n.
\end{cases}
\ee
Combining the two upper bounds together, we finally bound the first term in \eqref{T2:main}:
\bee\label{T2:B1:2}
\frac{1}{pq}\E\big\|\xi\big\{\tilde{\M \Sigma}_{\eta}(k_1,k_2)\big\}-\xi\big\{\M\Sigma_2^{*,{\eta}}(k_2)\otimes \M\Sigma_1^{*,{\eta}}(k_1)\big\}\big\|_2^2 \precsim
\begin{cases} \big(\frac{k_1k_2}{n}\big)\wedge\big(\frac{k_1}{qn} + \frac{k_2}{pn}\big)& pk_1 + qk_2 \precsim n
\\
\big(\frac{k_1k_2}{n}\big)\wedge\big(\frac{pk^2_1}{qn^2} + \frac{qk^2_2}{pn^2}  \big) & pk_1 + qk_2 \succ n.
\end{cases}
\ee
In addition, note that $\frac{k_1k_2}{n}\succsim \frac{\max(k_1,k_2)}{n} \asymp \frac{k_1 }{n} + \frac{k_2}{n}\succsim \frac{k_1}{qn} + \frac{k_2}{pn}$, we have
\bee\label{T2:add}
\big(\frac{k_1k_2}{n}\big)\wedge\big(\frac{k_1}{qn} + \frac{k_2}{pn}\big) \asymp \frac{k_1}{qn} + \frac{k_2}{pn}.
\ee
For the second term on the right-hand side of \eqref{T2:main}, by Lemma \ref{l:outer}, we directly have $\frac{1}{pq}\big\|\M\Sigma_2^{*,{\eta}}(k_2)\otimes \M\Sigma_1^{*,{\eta}}(k_1)-\M\Sigma^{*}\big\|^2_\F \precsim\M I_{\eta,p}(k_1)\cdot k_1^{-\tilde{\alpha}_1} +  \M I_{\eta,q}(k_2)\cdot k_2^{-\tilde{\alpha}_2}$, for either $\M \Sigma_1^{*}\in \mathcal{F}(\varepsilon_0, \alpha_1),\M \Sigma_2^{*} \in \mathcal{F}(\varepsilon_0, \alpha_2)$ or $\M \Sigma_1^{*}\in \mathcal{M}(\varepsilon_0, \alpha_1),\M \Sigma_2^{*} \in \mathcal{M}(\varepsilon_0, \alpha_2)$.  Combining \eqref{T2:main}, \eqref{T2:B1:2} and \eqref{T2:add}, we can show
\bee\label{T2:B1:out}
&\E\Big(\frac{\|\hat{\M\Sigma}^\eta_2\kii\otimes\hat{\M\Sigma}^\eta_1\ki - \M\Sigma^*\|_{\F}^2}{pq}\Big) 
\\
&\precsim
\begin{cases}\frac{k_1}{qn} + \frac{k_2}{pn}+ \M I_{\eta,p}(k_1)\cdot k_1^{-\tilde{\alpha}_1} +  \M I_{\eta,q}(k_2)\cdot k_2^{-\tilde{\alpha}_2}, & pk_1 + qk_2 \precsim n
\\
\big(\frac{k_1k_2}{n}\big)\wedge\big(\frac{pk^2_1}{qn^2} + \frac{qk^2_2}{pn^2}  \big)+ \M I_{\eta,p}(k_1)\cdot k_1^{-\tilde\alpha_1} +  \M I_{\eta,q}(k_2)\cdot k_2^{-\tilde\alpha_2}, & pk_1 + qk_2 \succ n.
\end{cases}
\ee
\qed
\subsection{Proof of Theorem \ref{T3}}
Same arguments with the proof of Theorem \ref{T2} can be applied directly to show Theorem \ref{T3}. The only difference is that the sub-Gaussian condition no longer holds for Theorem \ref{T3}. As a result, we can not bound $\frac{1}{pq}\E\|\xi\big\{\tilde{\M \Sigma}_{\mathcal{\eta}}(k_1,k_2)\big\} - \xi\{\M\Sigma^{*,\eta}_2(k_2)\otimes \M\Sigma^{*,\eta}_1(k_1)\}\|_{2}^2$ by both error rates \eqref{T2:B1:b} and \eqref{T2:B1:a}. Instead, we can only bound it by \eqref{T2:B1:a}. Keeping other arguments unchanged in the proof of Theorem \ref{T2}, the desired rate in Theorem \ref{T3} can be derived.\qed
\subsection{Proof of Theorem \ref{T:low}}
We first introduce some notation. We use parameter set $\Theta = \{0,1\}^k$ to identify the underlying distribution $\mathbb{P}(\bds\theta)$ of observations with $\bds\theta \in \Theta$. We denote the Hamming distance for $\bds\theta,\bds\theta' \in \Theta$ as $ H(\bds\theta,\bds\theta') = \sum_{l = 1}^k|\theta_l - \theta_l'|$, where $\theta_l$ is the $l$th coordinate of $\bds \theta$. The following proof generalizes the proof techniques in \citet{Cai2010} for the vector-valued data, where the key component is the following Assouad's Lemma. 
\par
\begin{lemma}[Assouad's Lemma]\label{lm:assou}
Let $\Theta = \{0,1\}^k$ and let $\bds T$ be any estimator based on observations from a distribution in $\{\mathbb{P}(\bds\theta),\bds\theta\in \Theta\}$. For any $s > 0$ and distance metric $d(\cdot,\cdot)$ of target parameters, 
\bee\nonumber
\max_{\bds\theta\in\Theta}2^{\color{black}s}\E_{\bds\theta}\big[d(\M T,\psi(\bds\theta))\big] \geq \min_{ H(\bds\theta,\bds\theta')\geq 1}\frac{d^{\color{black}s}\{\psi(\bds\theta),\psi(\bds\theta')\}}{ H(\bds\theta,\bds\theta')}\cdot \frac{k}{2}\cdot \min_{ H(\bds\theta,\bds\theta') = 1}\|\mathbb{P}(\bds\theta)\wedge\mathbb{P}(\bds\theta')\|,
\ee
where $\|{\mathbb{P}}(\bds\theta)\wedge{\mathbb{P}}(\bds\theta')\|$ is defined as 
\bee\label{def:||}
\|{\mathbb{P}}(\bds\theta)\wedge{\mathbb{P}}(\bds\theta')\|_{} \equiv 1 -\frac{1}{2} \|{\mathbb{P}}(\bds\theta)-{\mathbb{P}}(\bds\theta')\|_1.
\ee
\end{lemma}
\
\par
Our proof strategy is sketched as follows. 
\begin{itemize}
\item[(1)] We first propose a series of underlying distribution $\mathbb{P}(\bds \theta)$.
\item[(2)] We prove that all proposed underlying distributions $\big\{{\mathbb{P}}(\bds \theta)\big| \bds \theta \in{\Theta}\big\} \subseteq \mathcal{P}^n_{\varepsilon_0,\alpha_1,\alpha_2}$. Recall that in Theorem \ref{T:low}, $\mathcal{P}^n_{\varepsilon_0,\alpha_1,\alpha_2}$ is defined as the set of distributions of $\{\vecc(\M X_i)\}_{i = 1}^n$, such that $\vecc(\M X_1), \vecc(\M X_2),\dots,\vecc(\M X_n)$ are i.i.d. sub-Gaussian random vectors in $\mathbb{R}^{pq}$ with any true covariance $\M \Sigma^* = \M \Sigma^*_2 \otimes \M \Sigma^*_1$, where $\M \Sigma_1^{*}\in \mathcal{M}(\varepsilon_0, \alpha_1),\M \Sigma_2^{*} \in \mathcal{M}(\varepsilon_0, \alpha_2)$.  
\item [(3)] Since all proposed distributions are in $\mathcal{P}^n_{\varepsilon_0,\alpha_1,\alpha_2}$, we can apply the Assouad's Lemma on the proposed underlying distributions $\mathbb{P}(\bds \theta)$, and derive the minimax lower bound over the distribution class $\mathcal{P}^n_{\varepsilon_0,\alpha_1,\alpha_2}$. The lower bound in \eqref{lm:low:3:2} can be decomposed into three error terms: $B_1$, $B_2$ and $B_3$.
\item[(4)] We finally obtain the lower bounds for $B_1$, $B_2$ and $B_3$, respectively. Summarizing all the results together, we obtain the target lower bound.
\end{itemize}
We now complement all the details of the proof. 
\par
\
\par
\noindent{\textbf{(i). Construction of $\mathbb{P}(\bds \theta)$:}}
Let $k_p = \min\big\{(nq)^{\frac{1}{2\alpha_1 + 2}}, p/2\big\}, k_q = \min\big\{(np)^{\frac{1}{2\alpha_2 + 2}}, q/2\big\}$, and the dimension of parameter set $\Theta$ be $k_pp - k_p(k_p + 1)/2 + k_qq -k_q(k_q + 1)/2$, i.e., ${\Theta} = \{0,1\}^{k_pp - k_p(k_p + 1)/2 + k_qq -k_q(k_q + 1)/2}$. Then we define the corresponding distribution $\mathbb{P}(\bds \theta)$  as the joint distribution of $n$ i.i.d. samples $\vecc(\M X_1),\dots,\vecc(\M X_n)$, where each $\vecc(\M X_i)$ follows $pq$-dimensional Gaussian distribution $\mathbf N\big(\bds 0_{pq},\M \Sigma^*(\bds \theta)\big)$ with $\M \Sigma^*(\bds \theta)$ parameterized by $\bds \theta \in {\Theta}$. 
We propose a specific one-to-one corresponding $\M\Sigma^*(\bds \theta)$, between $\bds \theta\in\Theta$ and the covariance of $\vecc(\M X_i)$ for each underlying distribution $\mathbb{P}(\bds \theta)$.
\par
Since ${\Theta} = \{0,1\}^{k_pp - k_p(k_p + 1)/2 + k_qq -k_q(k_q + 1)/2}$, we can write it as $\Theta = \Theta^{(p)}\times \Theta^{(q)}$ where $\Theta^{(p)} = \{0,1\}^{k_pp - k_p(k_p + 1)/2},\Theta^{(q)} = \{0,1\}^{k_qq - k_q(k_q + 1)/2}$. Therefore for any $\bds{\theta} \in \Theta$, we have $\bds \theta = (\bds \theta_p,\bds \theta_q)$ for some $\bds \theta_p \in \Theta^{(p)}$ and $\bds \theta_q\in\Theta^{(q)}$. 
\par
Define the following sets.
\bee\label{low:def:2}
&\tilde{\varTheta}^{(p)} \equiv \Big\{{\bds \vartheta_p = \{\theta_{l_1m_1}^{(p)}|\theta_{l_1m_1}^{(p)}  = 0 \text{ or }1, 1\leq|l_1 - m_1|\leq k_p,1\leq l_1\leq m_1\leq p\}}\Big| \text{ all possible }\bds \vartheta_p\Big\};
\\
&\tilde{\varTheta}^{(q)} \equiv \Big\{{\bds \vartheta_q = \{\theta_{l_2m_2}^{(q)}|\theta_{l_2m_2}^{(q)}  = 0 \text{ or }1, 1\leq|l_2 - m_2|\leq k_q,1\leq l_2\leq m_2\leq q\}}\Big| \text{ all possible }\bds \vartheta_q\Big\};
\ee 
For each $\bds\vartheta_p \in \tilde{\varTheta}^{(p)}$, there are totally $k_pp - k_p(k_p + 1)/2$ different pairs of $(l_1,m_1)$, such that each $\theta_{l_1m_1}^{(p)}\in \bds\vartheta_p$ can take either $0$ or $1$ without constraints. Thus $|\tilde\varTheta^{(p)}| = 2^{k_pp - k_p(k_p + 1)/2}$. Similarly,  $|\tilde\varTheta^{(q)}| = 2^{k_qq - k_q(k_q + 1)/2}$. Therefore, for each $\bds \theta_p\in\{0,1\}^{k_pp - k_p(k_p + 1)/2}$ and $\bds \theta_q\in\{0,1\}^{k_qq - k_q(k_q + 1)/2}$, {\color{black} there exists a one-to-one correspondence between $\bds \theta_p$ and $\bds\vartheta_p$, and a one-to-one correspondence between $\bds \theta_q$ and $\bds\vartheta_q$.} So for $\bds \theta = (\bds \theta_p,\bds \theta_q) \in\{0,1\}^{k_pp - k_p(k_p + 1)/2 + k_qq -k_q(k_q + 1)/2}$, there exists a one-to-one correspondence between $\bds \theta$ and $(\bds \vartheta_p,\bds \vartheta_q).$ Without loss of generality, through this proof, we fix these two one-to-one correspondences, i.e., we fix on specific functions $h_1(\cdot), h_2(\cdot)$ such that $\bds\vartheta_p = h_1(\bds \theta_p), \bds\vartheta_q = h_2(\bds \theta_q)$ and $h_1^{-1}(\bds\vartheta_p) = (\bds \theta_p), h_2^{-1}(\bds\vartheta_q) = \bds \theta_q$.
\par
Then for each $\bds \theta  = (\bds \theta_p, \bds \theta_q) \in \Theta$, we define  $\M \Sigma^*(\bds \theta) = \M \Sigma_2^*(\bds \theta_q)\otimes \M \Sigma_1^*(\bds \theta_p)$ where $\M \Sigma_2^*(\bds \theta_q), \M \Sigma_1^*(\bds \theta_p)$ are constructed as follows. Given a positive constant $\gamma > 0$, we define 
\bee\label{def:low:sigma12}
&\M \Sigma_1^*(\bds \theta_p) = \M 1_{p} +\gamma \M F_p(\bds \vartheta_p),
\\
&\M \Sigma_2^*(\bds \theta_q) = \M 1_{q} +\gamma \M F_q(\bds \vartheta_q),
\ee where $\bds 1_d$ is a $d\times d$ identity matrix, $\bds\vartheta_p = h_1(\bds \theta_p), \bds\vartheta_q = h_2(\bds \theta_q)$, and $\M F_p(\bds \vartheta_p), \M F_q(\bds \vartheta_q)$ are defined as follows:
\bee\label{def:Fpq}
\M F_p(\bds \vartheta_p) &\equiv [f_{l_1,m_1}^{\bds \vartheta_p}]_{p\times p} \in \RR^{p\times p} \text{ where }f^{\bds \vartheta_p}_{l_1,m_1} = \begin{cases}\theta_{l_1m_1}^{(p)}(nq)^{-1/2} & 1\leq |l_1 - m_1| \leq k_p\text{ and } l_1\leq m_1
\\
\theta_{m_1l_1}^{(p)}(nq)^{-1/2} & 1\leq |l_1 - m_1| \leq k_p\text{ and } l_1> m_1
\\
0 & \text{Otherwise}
\end{cases},
\\
\M F_q(\bds \vartheta_q) &\equiv [f_{l_2,m_2}^{\bds \vartheta_q}]_{q\times q} \in \RR^{q\times q} \text{ where }f^{\bds \vartheta_q}_{l_2,m_2} = \begin{cases}\theta_{l_2m_2}^{(q)}(np)^{-1/2} & 1\leq |l_2 - m_2| \leq k_q \text{ and } l_2\leq m_2
\\
\theta_{m_2l_2}^{(q)}(np)^{-1/2} & 1\leq |l_2 - m_2| \leq k_q \text{ and } l_2>m_2
\\
0 & \text{Otherwise}
\end{cases}.
\ee
\par
\
\par
\noindent{\textbf{(ii). Proving $\big\{{\mathbb{P}}(\bds \theta)\big| \bds \theta \in{\Theta}\big\} \subseteq \mathcal{P}^n_{\varepsilon_0,\alpha_1,\alpha_2}$:}} We first prove $\M \Sigma_1^*(\bds \theta_1) \in \mathcal{M}(\varepsilon_0,\alpha_1)$, when $\gamma\leq C_1$ with $C_1$ defined in \eqref{A2}, and $n$ is larger than $N_{\varepsilon_0}$ for some positive constant $N_{\varepsilon_0}$ only depending on $\varepsilon_0$. First, when $l_1 \neq m_1$ and $|l_1 - m_1|\leq k_p$, the $l_1m_1$th element of $\M \Sigma_1^*(\bds \theta_1) = \M 1_p + \gamma\M F_p(\bds\vartheta_p)$ is $0$ or $\gamma (nq)^{-1/2}$. Therefore, for $\gamma\leq C_1$, one has
\bee\nonumber
0 &\leq \gamma (nq)^{-1/2} \leq C_1 k_p^{-(\alpha_1 + 1)}\leq C_1|l_1 - m_1|^{-(\alpha_1 + 1)},
\ee
since $k_p = \min\big\{(nq)^{\frac{1}{2\alpha_1 + 2}}, p/2\big\} \leq (nq)^{\frac{1}{2\alpha_1 + 2}}$ and $|l_1 - m_1|\leq k_p$. When $|l_1 - m_1| > k_p$, the $l_1m_1$th element is $0 \leq C_1|l_1 - m_1|^{-(\alpha_1 +1)}$. So $\M \Sigma_1^*(\bds \theta_1)$ satisfies 
\bee\label{lower:ii:1}
\Big|\big[\M \Sigma_1^*(\bds \theta_1)\big]_{l_1,m_1}\Big|\leq C_1|l_1 - m_1|^{-(\alpha_1 + 1)},
\ee
for any $1\leq l_1\leq p$ and $1\leq m_1\leq p$ that $l_1\neq m_1$.
\par
On the other hand, one has
\begin{align}\nonumber
\|\M \Sigma_1^*(\bds \theta_1) - \M 1_p\|_2 
&= \|\gamma\M F_p(\bds \vartheta_p)\|_2 
\\\nonumber
&\leq \|\gamma\M F_p(\bds \vartheta_p)\|_{1} \quad\text{(By Lemma \ref{lm:matrixholder})}
\\\nonumber
&\leq \gamma(2k_p + 1)\cdot (nq)^{-1/2}
\\\nonumber
&\leq \gamma3k_p (nq)^{-1/2}
\\\label{tlow:con1}
&\leq \gamma3(nq)^{\frac{1}{2\alpha_1 + 2}}(nq)^{-1/2} 
\\\nonumber
&= \gamma 3(nq)^{\frac{1}{2\alpha_1 + 2} - \frac{1}{2}} 
\\\nonumber
&\rightarrow 0
\end{align}
when $n \rightarrow +\infty$, since $\alpha_1 > 0$ and $k_p = \min\big\{(nq)^{\frac{1}{2\alpha_1 + 2}},p/2\big\}$. By Lemma \ref{lm:weyl} (Weyl's Theorem), since both $\M \Sigma_1^*(\bds \theta_1)$ and $\M 1_p$ are symmetric matrices,
\bee\label{tlow:con2}
\max_{1\leq i \leq p}|\lambda_i\{\M \Sigma_1^*(\bds \theta_1)\} - 1| &= \max_{1\leq i \leq p}|\lambda_i\{\M \Sigma_1^*(\bds \theta_1)\} - \lambda_i\{\M 1_p\}|
\\\nonumber
&\leq \|\M \Sigma_1^*(\bds \theta_1) - \M 1_p\|_2 
\\\nonumber
&\rightarrow 0
\ee
as $n\rightarrow +\infty$. Here $\lambda_i(\M M)$ is the $i$th largest eigenvalue of matrix $\M M$. This implies all eigenvalues of $\M \Sigma_1^*(\bds \theta_1)$ will uniformly {converge to} $1$ when $n\rightarrow +\infty$. For any $\varepsilon_0 < 1$, by \eqref{tlow:con1} and \eqref{tlow:con2}, there exists $N_{\varepsilon_0}$ such that when $n>N_{\varepsilon_0}$ one has $\max_{1\leq i \leq p}|\lambda_i\{\M \Sigma_1^*(\bds \theta_1)\} - 1| \leq \|\M \Sigma_1^*(\bds \theta_1) - \M 1_p\|_2  < \min\{1/\varepsilon_0 - 1, 1 - \varepsilon_0\}$. Thus 
\bee\label{lower:ii:2}
\lambda_i\{\M \Sigma_1^*(\bds \theta_1)\} \in [\varepsilon_0,1/\varepsilon_0]
\ee for any $1\leq i\leq p$. 
\par
In the following proof, we take $\gamma \leq C_1$ and $n\geq N_{\varepsilon_0}'$. Then, combining \eqref{lower:ii:1} and \eqref{lower:ii:2}, we prove $\M \Sigma_1^*(\bds \theta_p)$ satisfies \eqref{A2} and therefore $ \M \Sigma_1^*(\bds \theta_p)\in \mathcal{M}(\varepsilon_0,\alpha_1)$. By similar argument, we can also prove $\M \Sigma_2^*(\bds \theta_q) \in \mathcal{M}(\varepsilon_0,\alpha_2)$. Then with $\gamma \leq C_1$ and $n\geq N_{\varepsilon_0}'$, we conclude that the corresponding $\vecc(\M X_i)$ is Gaussian for any $\mathbb{P}(\bds{\theta})$, and thus sub-Gaussian with separable covariance $\M\Sigma^*(\bds{\theta}) = \M\Sigma_2^*(\bds{\theta}_q)\otimes \M\Sigma^*_1(\bds{\theta}_p)$ such that $ \M \Sigma_1^*(\bds \theta_p)\in \mathcal{M}(\varepsilon_0,\alpha_1)$ and $\M \Sigma_2^*(\bds \theta_q) \in \mathcal{M}(\varepsilon_0,\alpha_2)$. In addition, all $\vecc(\M X_i)$ are i.i.d. for all $1\leq i\leq n$. To this end, we have proved $\big\{{\mathbb{P}}(\bds \theta)\big| \bds \theta \in{\Theta}\big\} \subseteq \mathcal{P}^n_{\varepsilon_0,\alpha_1,\alpha_2}$.
\
\par
\
\par
\noindent{\textbf{(iii). Application of Assouad's Lemma:}} Since $\big\{{\mathbb{P}}(\bds \theta)\big| \bds \theta \in{\Theta}\big\} \subseteq \mathcal{P}^n_{\varepsilon_0,\alpha_1,\alpha_2}$, by definition we have
\bee\label{lm:low:3}
&\inf_{\hat{\M \Sigma}_n}\sup_{\{\vecc(\M X_i)\}_{i = 1}^n\sim \mathbb{P};\atop\mathbb{P}\in \mathcal{P}^n_{\varepsilon_0,\alpha_1,\alpha_2}}\E\Bigg(\frac{\|\hat{\M \Sigma}_n - \M \Sigma_2^* \otimes \M \Sigma_1^*\|_\F^2}{pq}\Bigg) 
\\
&\succsim \inf_{\hat{\M \Sigma}_n}\sup_{\{\vecc(\M X_i)\}_{i = 1}^n\sim\mathbb{{P}}(\bds \theta)\atop \bds \theta \in{\Theta}}\E\Bigg(\frac{\|\hat{\M \Sigma}_n - \M \Sigma_2^* \otimes \M \Sigma_1^*\|_\F^2}{pq}\Bigg)
\\
&=\inf_{\hat{\M \Sigma}_n}\max_{\bds\theta\in{\Theta}}\E_{\{\vecc(\M X_i)\}_{i = 1}^n \sim {\mathbb{P}}(\bds \theta)}\Bigg(\frac{\|\hat{\M \Sigma}_n - \M \Sigma_2^* \otimes \M \Sigma_1^*\|_\F^2}{pq}\Bigg),
\ee
where $\hat{\M \Sigma}_n$ can be any possible covariance estimator based on $\{\vecc(\M X_i)\}_{i = 1}^n$. In Lemma \ref{lm:assou}, we can set  $\psi(\bds \theta) = \M \Sigma^*(\bds \theta), d\{\M \Sigma^*(\bds\theta),\M \Sigma^*(\bds \theta')\} = \|\M \Sigma^*(\bds \theta) - \M \Sigma^*(\bds \theta')\|_{\F}/\sqrt{pq}$, $s = 2$, and  $\M T = \hat{\M\Sigma}_n$. Then for any covariance estimator $\hat{\M \Sigma}_n$, we have
\bee\nonumber
&\max_{\bds\theta\in{\Theta}}\E_{\{\vecc(\M X_i)\}_{i = 1}^n \sim {\mathbb{P}}(\bds \theta)}\Bigg(\frac{\|\hat{\M \Sigma}_n - \M \Sigma_2^* \otimes \M \Sigma_1^*\|_\F^2}{pq}\Bigg)
\\
&\geq \frac{1}{4}\underbrace{\min_{ H(\bds\theta,\bds\theta')\geq 1\atop \bds \theta,\bds \theta'\in{ \Theta}}\frac{\|\M \Sigma^*(\bds \theta) - \M \Sigma^*(\bds \theta')\|_\F^2}{ pqH(\bds\theta,\bds\theta')}}_{B_1}
\\
&\cdot \underbrace{\frac{k_pp - k_p(k_p + 1)/2 + k_q q -k_q(k_q + 1)/2}{2}}_{B_2}\cdot \underbrace{\min_{ H(\bds\theta,\bds\theta') = 1 \atop{\bds\theta,\bds \theta'\in \M \Theta}}\|\tilde{\mathbb{P}}(\bds\theta)\wedge\tilde{\mathbb{P}}(\bds{\theta}')\|}_{B_3}.
\ee
Combining it with \eqref{lm:low:3} yields
\bee\label{lm:low:3:2}
\inf_{\hat{\M \Sigma}_n}\sup_{\{\vecc(\M X_i)\}_{i = 1}^n\sim \mathbb{P};\atop\mathbb{P}\in \mathcal{P}^n_{\varepsilon_0,\alpha_1,\alpha_2}}\E\Bigg(\frac{\|\hat{\M \Sigma}_n - \M \Sigma_2^* \otimes \M \Sigma_1^*\|_\F^2}{pq}\Bigg)&\succsim \inf_{\hat{\M \Sigma}_n}\frac{1}{4}B_1B_2B_3 = \frac{1}{4}B_1B_2B_3.
\ee
\
\par
\
\par
\noindent{\textbf{(iv). Bound of $B_1$:}} Recall that for any $\bds \theta, \bds \theta' \in {\M \Theta}$, they can be represented as $\bds \theta = (\bds \theta_p,\bds \theta_q), \bds \theta' = (\bds\theta'_p, \bds \theta_q')$ where $\bds \theta_p,\bds \theta_p' \in \{0,1\}^{k_pp - k_p(k_p + 1)/2}$ and $\bds \theta_q,\bds \theta_q' \in \{0,1\}^{k_qq - k_q(k_q + 1)/2}$. Then 
\bee\label{low:HH}
H(\bds \theta,\bds \theta') &=H\Big((\bds\theta_p,\bds \theta_q),(\bds\theta'_p,\bds \theta'_q)\Big) 
\\
&= \sum_{l = 1}^{k_pp - k_p(k_p + 1)/2}|\theta_{l,(p)} - \theta_{l,(p)}'| + \sum_{l = 1}^{k_qq - k_q(k_q + 1)/2}|\theta_{l,(q)} - \theta_{l,(q)}'|
\\
&=H(\bds \theta_p,\bds \theta_p') + H(\bds \theta_q,\bds \theta_q'),
\ee where $\theta_{l,(p)}, \theta_{l,(p)}'$ are the $l$th coordinates of $\bds \theta_p, \bds\theta_p'$ and $\theta_{l,(q)}, \theta_{l,(q)}'$ are the $l$th coordinates of $\bds \theta_q, \bds\theta_q'$.  Therefore, $B_1$ can be rewritten as 
\bee\label{low:b1:1}
B_1 &= \min_{ H(\bds\theta,\bds\theta')\geq 1\atop \bds \theta,\bds \theta'\in\tilde{\M \Theta}}\frac{\|\M \Sigma^*(\bds \theta) - \M \Sigma^*(\bds \theta')\|_\F^2}{ pqH(\bds\theta,\bds\theta')} 
\\
&= \min_{t\geq 1}\min_{ H(\bds\theta,\bds\theta') =t\atop \bds \theta,\bds \theta'\in\tilde{\M \Theta}}\frac{\|\M \Sigma^*(\bds \theta) - \M \Sigma^*(\bds \theta')\|_\F^2}{ pqH(\bds\theta,\bds\theta')}
\\
&=\min_{t\geq 1}\min_{{H(\bds\theta_p,\bds\theta_p') =t_1, H(\bds\theta_q,\bds\theta_q') =t_2
\atop t_1 + t_2 = t; \bds \theta,\bds \theta'\in\tilde{\M \Theta}}\atop t_1 \geq 0, t_2 \geq 0}\frac{\|\M \Sigma^*(\bds \theta) - \M \Sigma^*(\bds \theta')\|_\F^2}{ pqt}.
\ee
\par
Now for any given $t_1, t_2\geq 0$ such that $t = t_1 + t_2$, we derive the lower bound of $\frac{\|\M \Sigma^*(\bds \theta) - \M \Sigma^*(\bds \theta')\|_\F^2}{ pqt}$. Given ${\bds \theta}_q,{\bds \theta}'_q$,  we define the following three sets:
\bee\nonumber
\mathcal{D}_1 &=\big\{(l_2,m_2)\big|[\M \Sigma_2^*(\bds \theta_q)]_{l_2,m_2} = 0, [\M \Sigma_2^*(\bds \theta_q')]_{l_2,m_2} \neq 0\big\};
\\
\mathcal{D}_2 &= \big\{(l_2,m_2)\big|[\M \Sigma_2^*(\bds \theta_q)]_{l_2,m_2} \neq 0, [\M \Sigma_2^*(\bds \theta_q')]_{l_2,m_2} = 0\big\};
\\
\mathcal{D}_3 &= \big\{(l_2,m_2)\big|[\M \Sigma_2^*(\bds \theta_q)]_{l_2,m_2} \neq 0, [\M \Sigma_2^*(\bds \theta_q')]_{l_2,m_2}\neq 0\big\}.
\ee
We also define $D_j=\textbf{Card}(\mathcal{D}_j)$ for $ j=1, 2, 3$, where $\textbf{Card}$ is the cardinality of a set.
\par
Next, we derive the lower bounds of $\{D_j\}_{j = 1}^3$ in terms of $t_1,t_2$. If $(l_2,m_2) \in \mathcal{D}_1$, we have $l_2 \neq m_2$ since all diagonal elements of $\M \Sigma_2^*(\bds \theta_q)$ are non-zero. By the definitions in \eqref{low:def:2} and \eqref{def:Fpq}, we know $(l_2,m_2)\in\mathcal{D}_1$ if and only if the corresponding $[\mathbf{F}_{q}(\bds\vartheta_q)]_{l_2,m_2} = 0, [\mathbf{F}_{q}(\bds\vartheta'_q)]_{l_2,m_2} \neq 0$. Then $(l_2,m_2)\in\mathcal{D}_1$ is further equivalent to $\theta^{(q)}_{l_2,m_2} = 0, \theta'^{(q)}_{l_2,m_2} \neq 0$ when $l_2 < m_2$; and is equivalent to $\theta^{(q)}_{m_2,l_2} = 0, \theta'^{(q)}_{m_2,l_2} \neq 0$ when $l_2 > m_2$ by checking definitions in \eqref{def:Fpq}. We finally summarize the above results and show
\begin{align}\nonumber
D_1 &= \textbf{Card}(\mathcal{D}_1) 
\\
&= \textbf{Card}\big\{(l_2,m_2)\big|[\M \Sigma_2^*(\bds \theta_q)]_{l_2,m_2} = 0, [\M \Sigma_2^*(\bds \theta_q')]_{l_2,m_2} \neq 0\big\}
\\\nonumber
&=\textbf{Card}\big\{(l_2,m_2)\big|[\mathbf{F}_{q}(\bds\vartheta_q)]_{l_2,m_2} = 0, [\mathbf{F}_{q}(\bds\vartheta'_q)]_{l_2,m_2} \neq 0\big\}
\\\nonumber
&=\textbf{Card}\Big[\big\{(l_2,m_2)|\theta^{(q)}_{l_2,m_2} = 0, \theta'^{(q)}_{l_2,m_2} \neq 0,l_2 < m_2\big\}\bigcup\big\{(l_2,m_2)|\theta^{(q)}_{m_2,l_2} = 0, \theta'^{(q)}_{m_2,l_2} \neq 0,l_2 > m_2\big\}\Big]
\\\nonumber
&=2\textbf{Card}\Big[\big\{(l_2,m_2)|\theta^{(q)}_{l_2,m_2} = 0, \theta'^{(q)}_{l_2,m_2} \neq 0,l_2 < m_2\big\}\Big] \quad {\text{(By symmetry)}}
\\\nonumber
&=2\textbf{Card}\Big[\big\{(l_2,m_2)|\theta^{(q)}_{l_2,m_2} = 0, \theta'^{(q)}_{l_2,m_2} \neq 0,\theta^{(q)}_{l_2,m_2}\in\bds \vartheta_q, {\theta'}^{(q)}_{l_2,m_2}\in\bds \vartheta_q'\big\}\Big]\tag*{(In \eqref{low:def:2}, all $\theta^{(q)}_{l_2,m_2}\in{\bds\vartheta}_q$ and ${\theta'}_{l_2,m_2}^{(q)}\in{\bds\vartheta}'_q$ are defined only when $l_2\leq m_2$)}
\\\nonumber
&=2\textbf{Card}\{l|\theta_{l,(q)} = 0, \theta_{l,(q)}' =1, \theta_{l,(q)}\in \bds \theta_q, \theta_{l,(q)}' \in \bds \theta_q'\},
\end{align}
where the last equality holds because we have fixed one-to-one correspondences between $\bds \theta_q$ and $\bds \vartheta_q$, and between $\bds \theta'_q$ and $\bds \vartheta_q'$. By symmetry, we can also show $D_2 = 2\textbf{Card}\{l|\theta_{l,(p)} = 0, \theta_{l,(p)}' =1, \theta_{l,(p)}\in \bds \theta_p, \theta_{l,(p)}' \in \bds \theta_p'\}$. Then we have
\bee\label{lowerbound:con:1}
t_2 &= H(\bds \theta_q,\bds \theta_q') 
\\&= \sum_{l = 1}^{k_qq - k_q(k_q + 1)/2}|\theta_{l,(q)} - \theta_{l,(q)}'| 
\\
&= \sum_{l\in\{l|\theta_{l,(q)} = 0, \theta_{l,(q)}' =1\}}|\theta_{l,(q)} - \theta_{l,(q)}'| + \sum_{l\in\{l|\theta_{l,(q)} = 1, \theta_{l,(q)}' =0\}}|\theta_{l,(q)} - \theta_{l,(q)}'|
\\
&= \frac{1}{2}(D_1 + D_2).
\ee
Similarly, we have
\begin{align}\nonumber
t_1 &=\frac{1}{2}\Big[\textbf{Card}\big[\big\{(l_1,m_1)\big|[\M \Sigma_1^*(\bds \theta_p)]_{l_1,m_1} = 0, [\M \Sigma_1^*(\bds \theta_p')]_{l_1,m_1} \neq 0\big\}\big]
\\\label{low:t1}
&+\textbf{Card}\big[\big\{(l_1,m_1)\big|[\M \Sigma_1^*(\bds \theta_p)]_{l_1,m_1} \neq 0, [\M \Sigma_1^*(\bds \theta_p')]_{l_1,m_1} = 0\big\}\big]\Big].
\end{align}
On the other hand, since by definition the diagonal elements of both $\M \Sigma_2^*(\bds{\theta}_q)$ and $\M \Sigma_2^*(\bds{\theta}'_q)$ are non-zero, we have 
\bee\label{lowerbound:con:2}
D_3\geq q.
\ee
\par
With the lower bounds of $D_1,D_2,D_3$, we can finally bound $\|\M \Sigma^*(\bds \theta) - \M \Sigma^*(\bds \theta')\|_\F^2$ by
\begin{align}\nonumber
&\|\M\Sigma^*(\bds \theta) - \M \Sigma^*(\bds \theta')\|_\F^2 
\\\nonumber
&= \|\M \Sigma^*_2(\bds \theta_q)\otimes\M \Sigma_1^*(\bds \theta_p) - \M \Sigma^*_2(\bds \theta'_q)\otimes\M \Sigma_1^*(\bds \theta'_p)\|_\F^2
\\\nonumber
&=\underbrace{\textbf{Card}\big[\big\{(l_2,m_2)\big|[\M \Sigma_2^*(\bds \theta_q)]_{l_2,m_2} = 0, [\M \Sigma_2^*(\bds \theta_q')]_{l_2,m_2} \neq 0\big\}\big]}_{=D_1} \times[\gamma\times(np^{-1/2})]^2\|\M \Sigma_1^*(\bds\theta_p')\|_\F^2 
\\\nonumber
&+\underbrace{\textbf{Card}\big[\big\{(l_2,m_2)\big|[\M \Sigma_2^*(\bds \theta_q)]_{l_2,m_2} \neq 0, [\M \Sigma_2^*(\bds \theta_q')]_{l_2,m_2} = 0\big\}\big]}_{=D_2} \times[\gamma\times(np^{-1/2})]^2\|\M \Sigma_1^*(\bds\theta_p)\|_\F^2
\\\nonumber
&\small{+\underbrace{\textbf{Card}\big[\big\{(l_2,m_2)\big|[\M \Sigma_2^*(\bds \theta_q)]_{l_2,m_2} \neq 0, [\M \Sigma_2^*(\bds \theta_q')]_{l_2,m_2} \neq 0,l_2 \neq m_2\big\}\big]}_{=D_3 - q \text{ (all diagonal entries of $\M \Sigma_2^*(\bds \theta_q),\M \Sigma_2^*(\bds \theta_q')$ are non-zero)}}\times[\gamma\times(np^{-1/2})]^2\|\M \Sigma_1^*(\bds\theta_p) - \M \Sigma_1^*(\bds\theta_p')\|_\F^2}
\\\label{lowerbound:con:3:1}
&+q\|\M \Sigma_1^*(\bds\theta_p) - \M \Sigma_1^*(\bds\theta_p')\|_\F^2
\\\nonumber
&=D_1 \gamma^2(np)^{-1}\|\M \Sigma_1^*(\bds\theta_p')\|_\F^2 + D_2 \gamma^2(np)^{-1}\|\M \Sigma_1^*(\bds\theta_p)\|_\F^2 + \underbrace{(D_3-q)}_{\geq 0, \text{ by \eqref{lowerbound:con:2}}}\gamma^2(np)^{-1}\|\M \Sigma_1^*(\bds\theta_p) - \M \Sigma_1^*(\bds\theta_p')\|_\F^2
\\\nonumber
&+q\|\M \Sigma_1^*(\bds\theta_p) - \M \Sigma_1^*(\bds\theta_p')\|_\F^2 
\\\label{lowerbound:con:3}
&\geq D_1 \gamma^2(np)^{-1}\|\M \Sigma_1^*(\bds\theta_p')\|_\F^2 + D_2 \gamma^2(np)^{-1}\|\M \Sigma_1^*(\bds\theta_p)\|_\F^2+q\|\M \Sigma_1^*(\bds\theta_p) - \M \Sigma_1^*(\bds\theta_p')\|_\F^2. 
\end{align}
\begin{figure}
   \centering
\includegraphics[width=0.8\textwidth]{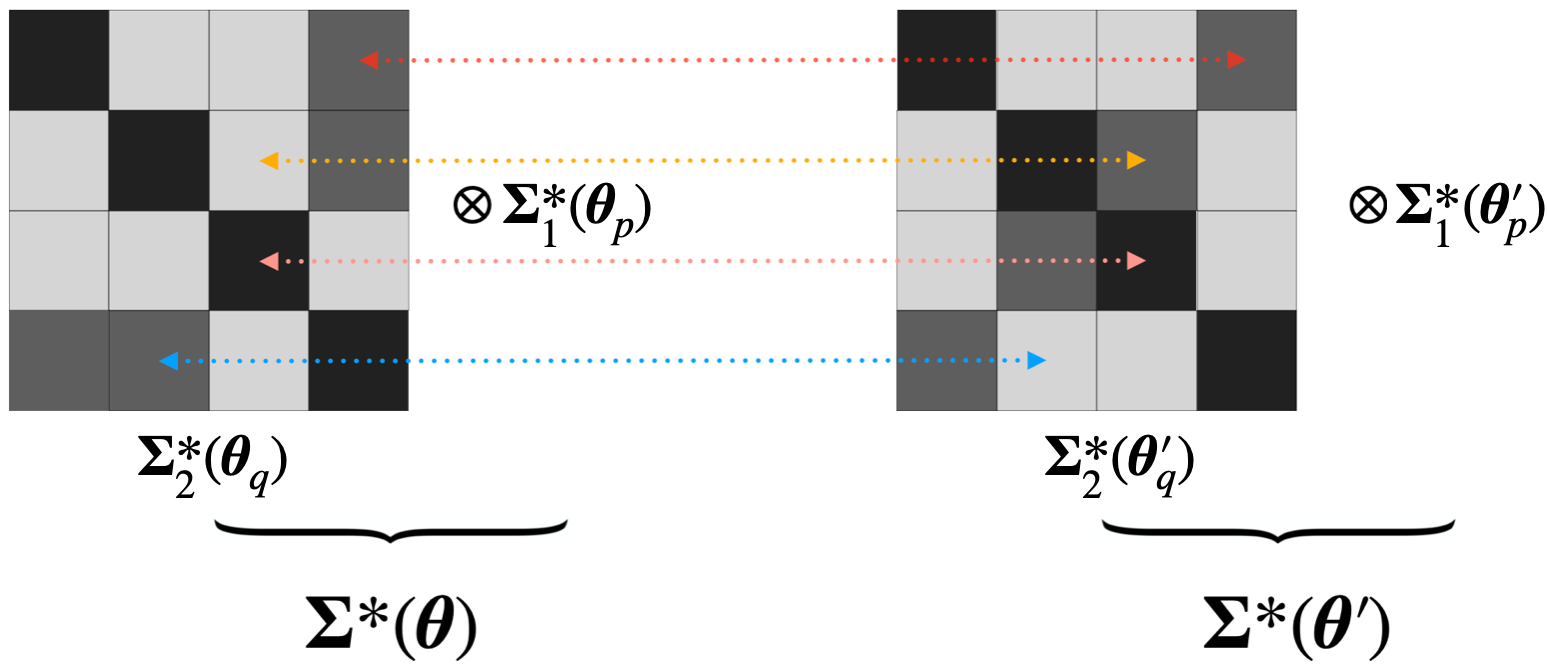}
\caption{Illustration of \eqref{lowerbound:con:3:1}'s derivation. The 
off-white block means the corresponding $[\M \Sigma_2^*(\bds \theta_q)]_{l_2,m_2}$ and $[\M \Sigma_2^*(\bds \theta_q')]_{l_2,m_2}$ both are $0$. The dark grey block means the corresponding $[\M \Sigma_2^*(\bds \theta_q)]_{l_2,m_2}$ and $[\M \Sigma_2^*(\bds \theta_q')]_{l_2,m_2}$ are off-diagonal entries and non-zero, which both are $\tau n^{-1/2}p^{-1/2}$. The black block means the corresponding $[\M \Sigma_2^*(\bds \theta_q)]_{l_2,m_2}$ and $[\M \Sigma_2^*(\bds \theta_q')]_{l_2,m_2}$ are diagonal entries and non-zero, which both are $1$. An entry in $\M \Sigma^*(\bds \theta)- \M \Sigma^*(\bds \theta')$ must be zero if its corresponding $[\M \Sigma_2^*(\bds \theta_q)]_{l_2,m_2}$ and $[\M \Sigma_2^*(\bds \theta_q')]_{l_2,m_2}$ are $0$ (i.e., $[\M \Sigma_2^*(\bds \theta_q)]_{l_2,m_2}$ and $[\M \Sigma_2^*(\bds \theta_q')]_{l_2,m_2}$ are colored with off-white). So an entry in $\M \Sigma^*(\bds \theta) - \M \Sigma^*(\bds \theta')$ is non-zero only if its corresponding $[\M \Sigma_2^*(\bds \theta_q)]_{l_2,m_2},[\M \Sigma_2^*(\bds \theta_q')]_{l_2,m_2}$ are entries of paired blocks that are connected with red, yellow, pink, or blue dashed lines.}\label{fig:vis_mm}
\end{figure}
The derivation of \eqref{lowerbound:con:3:1} can be illustrated by Figure \ref{fig:vis_mm}. In particular, any entry of $\M\Sigma^*(\bds \theta) - \M \Sigma^*(\bds \theta')$ can be represented as  $[\M \Sigma_2^*(\bds \theta_q)]_{l_2,m_2}\cdot [\M \Sigma_2^*(\bds \theta_p)]_{l_1,m_1} - [\M \Sigma_2^*(\bds \theta_q')]_{l_2,m_2}\cdot [\M \Sigma_2^*(\bds \theta_p')]_{l_1,m_1}$, for $1\leq l_1,m_1 \leq p$ and $1\leq l_2,m_2\leq q$. And it is non-zero only if the pair of $[\M \Sigma_2^*(\bds \theta_q)]_{l_2,m_2},[\M \Sigma_2^*(\bds \theta_q')]_{l_2,m_2}$ is one type of the paired blocks shown in Figure \ref{fig:vis_mm}. Note the paired blocks are connected with red/yellow/pink/blue dashed lines. These four paired blocks represent four different scenarios, and we will derive \eqref{lowerbound:con:3:1} by considering each of the four scenarios, respectively.
\begin{itemize}
\item[\textbf{(1)}] When $[\M \Sigma_2^*(\bds \theta_q')]_{l_2,m_2} \neq 0,[\M \Sigma_2^*(\bds \theta_q)]_{l_2,m_2} = 0$ and $1\leq l_2,m_2\leq q$, they are entries of paired blocks connected with the yellow dashed line. There are $D_1$ different pairs of $(l_2,m_2)$ satisfying this condition. One can easily see that $[\M \Sigma_2^*(\bds \theta_q')]_{l_2,m_2}$ must be off-diagonal entry. If we fix one pair of $(l_2,m_2)$, denoted by $(\tilde{l}_2,\tilde{m}_2)$, then for any  $(l_1,m_1)$, we have
$$\underbrace{[\M \Sigma_2^*(\bds \theta_q)]_{\tilde{l}_2,\tilde{m}_2}}_{=0}\cdot [\M \Sigma_2^*(\bds \theta_p)]_{l_1,m_1} - [\M \Sigma_2^*(\bds \theta_q')]_{\tilde{l}_2,\tilde{m}_2}\cdot [\M \Sigma_2^*(\bds \theta_p')]_{l_1,m_1} = -\gamma (np)^{-1/2}[\M \Sigma_2^*(\bds \theta_p')]_{l_1,m_1}
$$
since $[\M \Sigma_2^*(\bds \theta_q')]_{\tilde{l}_2,\tilde{m}_2} = -\gamma (np)^{-1/2}$ by definition \eqref{def:low:sigma12}. Then for fixed $(\tilde{l}_2,\tilde{m}_2)$, the sum of squares of all entries in $\M\Sigma^*(\bds \theta) - \M \Sigma^*(\bds \theta')$ over all different $(l_1,m_1)$ is $[\gamma n^{-1/2}p^{-1/2}]^2\times\|\M \Sigma_1^*(\bds\theta_p')\|_\F^2 $. Thus by counting all different pairs of $(l_2,m_2)$ satisfying the condition in this scenario, the sum of squares for all entries in $\M\Sigma^*(\bds \theta) - \M \Sigma^*(\bds \theta')$ satisfying $[\M \Sigma_2^*(\bds \theta_q')]_{l_2,m_2} \neq 0,[\M \Sigma_2^*(\bds \theta_q)]_{l_2,m_2} = 0$ is $D_1[\gamma n^{-1/2}p^{-1/2}]^2\times\|\M \Sigma_1^*(\bds\theta_p')\|_\F^2$, which is the first term. 
\item[\textbf{(2)}] When $[\M \Sigma_2^*(\bds \theta_q')]_{l_2,m_2} = 0,[\M \Sigma_2^*(\bds \theta_q)]_{l_2,m_2} \neq 0$ and $1\leq l_2,m_2\leq q$, they are entries of paired blocks connected with blue dashed line. By a similar argument used in scenario \textbf{(1)}, we can show that the sum of squares for all entries in $\M\Sigma^*(\bds \theta) - \M \Sigma^*(\bds \theta')$ satisfying $[\M \Sigma_2^*(\bds \theta_q')]_{l_2,m_2} = 0,[\M \Sigma_2^*(\bds \theta_q)]_{l_2,m_2} \neq 0$ is $D_2[\gamma n^{-1/2}p^{-1/2}]^2\times\|\M \Sigma_1^*(\bds\theta_p)\|_\F^2$, which is the second term in \eqref{lowerbound:con:3:1}
\item[\textbf{iii.}] When $[\M \Sigma_2^*(\bds \theta_q')]_{l_2,m_2} \neq 0,[\M \Sigma_2^*(\bds \theta_q)]_{l_2,m_2} \neq 0$ and $1\leq l_2 \neq m_2\leq q$ (off-diagonal), they are entries of paired blocks connected with red dashed line. There are totally $D_3 -q$ different  pairs of $(l_2,m_2)$ satisfying this condition. Since $(l_2,m_2)$ is off-diagonal, by definition we know $[\M \Sigma_2^*(\bds \theta_q')]_{l_2,m_2} = [\M \Sigma_2^*(\bds \theta_q)]_{l_2,m_2} = \gamma n^{-1/2}p^{-1/2}$. If we fix a pair of $(l_2,m_2)$, denoted by $(\tilde{l}_2,\tilde{m}_2)$, for any $(l_1,m_1)$, we have 
\bee\nonumber
&[\M \Sigma_2^*(\bds \theta_q)]_{\tilde{l}_2,\tilde{m}_2}\cdot [\M \Sigma_2^*(\bds \theta_p)]_{l_1,m_1} - [\M \Sigma_2^*(\bds \theta_q')]_{\tilde{l}_2,\tilde{m}_2}\cdot [\M \Sigma_2^*(\bds \theta_p')]_{l_1,m_1} 
\\
&=\gamma (np)^{-1/2}\times\Big\{[\M \Sigma_2^*(\bds \theta_p)]_{l_1,m_1}-[\M \Sigma_2^*(\bds \theta_p')]_{l_1,m_1}\Big\},
\ee
which, similar to the arguments in scenarios \textbf{(1)} and \textbf{(2)}, implies the sum of squares for all entries in $\M\Sigma^*(\bds \theta) - \M \Sigma^*(\bds \theta')$ that satisfies $[\M \Sigma_2^*(\bds \theta_q')]_{l_2,m_2} \neq 0,[\M \Sigma_2^*(\bds \theta_q)]_{l_2,m_2} \neq 0$ and $l_2 \neq m_2$, is $(D_3 - q)\times \{\gamma n^{-1/2}p^{-1/2}\}^2\times\|\M \Sigma_1^*(\bds\theta_p) - \M \Sigma_1^*(\bds\theta_p')\|_\F^2$, which is exactly the third term in \eqref{lowerbound:con:3:1}.
\item[\textbf{iv.}] When $[\M \Sigma_2^*(\bds \theta_q')]_{l_2,m_2} \neq 0,[\M \Sigma_2^*(\bds \theta_q)]_{l_2,m_2} \neq 0$,  where $1\leq l_2 = m_2\leq q$ (diagonal), they are entries of paired blocks connected with pink dashed line. By checking definition \eqref{def:low:sigma12}, we know all diagonal entries $[\M \Sigma_2^*(\bds \theta_q')]_{l_2,m_2}= [\M \Sigma_2^*(\bds \theta_q)]_{l_2,m_2} =1$ satisfying this condition and there are totally $q$ different diagonal pairs of $(l_2,m_2)$. For any $(l_1,m_1)$, if we fix on a $(\tilde{l}_2,\tilde{m}_2)$ such that $\tilde{l}_2 = \tilde{m}_2$,  we have
\bee\nonumber
&{[\M \Sigma_2^*(\bds \theta_q)]_{\tilde{l}_2,\tilde{m}_2}}\cdot [\M \Sigma_2^*(\bds \theta_p)]_{l_1,m_1} - [\M \Sigma_2^*(\bds \theta_q')]_{\tilde{l}_2,\tilde{m}_2}\cdot [\M \Sigma_2^*(\bds \theta_p')]_{l_1,m_1}  
\\
&= [\M \Sigma_2^*(\bds \theta_p)]_{l_1,m_1} - [\M \Sigma_2^*(\bds \theta_p')]_{l_1,m_1},
\ee 
which implies, with similar argument as previous conditions, the sum of squares of all entries in $\M\Sigma^*(\bds \theta) - \M \Sigma^*(\bds \theta')$ that satisfies $[\M \Sigma_2^*(\bds \theta_q')]_{l_2,m_2} \neq 0,[\M \Sigma_2^*(\bds \theta_q)]_{l_2,m_2} \neq 0$ and $l_2 = m_2$, is $q \times \|\M \Sigma_1^*(\bds\theta_p) - \M \Sigma_1^*(\bds\theta_p')\|_\F^2 $, which is the fourth term in \eqref{lowerbound:con:3:1}.
\end{itemize}
\par
Summarizing results in \textbf{i}--\textbf{iv}, we finally prove \eqref{lowerbound:con:3}. 

To further lower bound the right-hand side of \eqref{lowerbound:con:3}, we need to obtain lower bounds for $\|\M \Sigma_1^*(\bds\theta_p')\|_\F^2, \|\M \Sigma_1^*(\bds\theta_p)\|_\F^2$ and $\|\M \Sigma_1^*(\bds\theta_p) - \Sigma_1^*(\bds\theta_p')\|_\F^2$. We note by definition, all diagonal elements in $p \times p$ matrices $\M \Sigma_1^*(\bds\theta_p')$ and $\M \Sigma_1^*(\bds\theta_p)$ are $1$. Thus 
\bee\label{lowerbound:con:3:4}
\|\M \Sigma_1^*(\bds\theta_p')\|_\F^2\geq p \text{ and }\|\M \Sigma_1^*(\bds\theta_p)\|_\F^2\geq p.
\ee
Also by definition, all non-zero entries in $\M \Sigma_1^*(\bds\theta_p) - \M \Sigma_1^*(\bds\theta_p')$ are off-diagonal and equal to $\gamma n^{-1/2}q^{-1/2}$ or $-\gamma n^{-1/2}q^{-1/2}$. And an entry in $\M \Sigma_1^*(\bds\theta_p) - \M \Sigma_1^*(\bds\theta_p')$ is non-zero if and only if, either $[\M \Sigma_1^*(\bds \theta_p)]_{l_1,m_1} \neq 0, [\M \Sigma_1^*(\bds \theta_p')]_{l_1,m_1} = 0$ or $[\M \Sigma_1^*(\bds \theta_p)]_{l_1,m_1} = 0, [\M \Sigma_1^*(\bds \theta_p')]_{l_1,m_1} \neq 0$. Then we have
\bee\label{lowerbound:con:3:5}
\|\M \Sigma_1^*(\bds\theta_p) - \Sigma_1^*(\bds\theta_p')\|_\F^2 &= \{\gamma n^{-1/2}q^{-1/2}\}^2 \times \Bigg[\textbf{Card}\big[\big\{(l_1,m_1)\big|[\M \Sigma_1^*(\bds \theta_p)]_{l_1,m_1} = 0, [\M \Sigma_1^*(\bds \theta_p')]_{l_1,m_1} \neq 0\big\}\big]
\\
&+\textbf{Card}\big[\big\{(l_1,m_1)\big|[\M \Sigma_1^*(\bds \theta_p)]_{l_1,m_1} \neq 0, [\M \Sigma_1^*(\bds \theta_p')]_{l_1,m_1} = 0\big\}\big]\Bigg]
\\
&=\gamma^2n^{-1}q^{-1}\times 2t_1,
\ee
where the last equality holds by \eqref{low:t1}. 
Combining \eqref{lowerbound:con:1}, \eqref{lowerbound:con:2}, \eqref{lowerbound:con:3}, \eqref{lowerbound:con:3:4} and \eqref{lowerbound:con:3:5}, we finally have
\bee\nonumber
\|\M\Sigma^*(\bds \theta) - \M \Sigma^*(\bds \theta')\|_\F^2&\geq D_1 \gamma^2(np)^{-1}\|\M \Sigma_1^*(\bds\theta_p')\|_\F^2 + D_2 \gamma^2(np)^{-1}\|\M \Sigma_1^*(\bds\theta_p)\|_\F^2+q\|\M \Sigma_1^*(\bds\theta_p) - \M \Sigma_1^*(\bds\theta_p')\|_\F^2
\\
&\geq D_1 \gamma^2(np)^{-1}p + D_2 \gamma^2(np)^{-1}p+q2t_1\gamma^2(nq)^{-1}
\\
&=\gamma^2(n)^{-1}\underbrace{(D_1 + D_2)}_{=2t_2} +2t_1\gamma^2(n)^{-1} 
\\
&= 2(t_1 + t_2)\frac{\gamma^2}{n} 
\\
&= \frac{2t\gamma^2}{n}.
\ee
Combining with \eqref{low:b1:1}, one has
\bee\label{res:B1:final}
B_1 &= \min_{t\geq 1}\min_{ H(\bds\theta_p,\bds\theta_p') =t_1, H(\bds\theta_q,\bds\theta_q') =t_2
\atop t_1 + t_2 = t; \bds \theta,\bds \theta'\in\tilde{\M \Theta}}\frac{\|\M \Sigma^*(\bds \theta) - \M \Sigma^*(\bds \theta')\|_\F^2}{ pqt}
\\
&\geq \min_{t\geq 1}\min_{ H(\bds\theta_p,\bds\theta_p') =t_1, H(\bds\theta_q,\bds\theta_q') =t_2
\atop t_1 + t_2 = t; \bds \theta,\bds \theta'\in\tilde{\M \Theta}}\frac{1}{pqt} \times\frac{2t\gamma^2}{n}  
\\
&= \frac{2\gamma^2}{npq}.
\ee
\par
\
\par

\noindent{\textbf{(v). Bound of $B_2$:}} By the definitions of $k_p$ and $k_q$, $B_2$ can be directly lower bounded by
\bee\label{res:B2:final}
B_2 &= \frac{k_pp - k_p(k_p + 1)/2 + k_qq -k_q(k_q + 1)/2}{2} 
\\
&\geq \frac{k_pp - k_p(p/2 + 1)/2 + k_qq -k_q(q/2 + 1)/2}{2}
\\
&\succsim \frac{k_p(\frac{3}{4}p - \frac{1}{2}) +k_q(\frac{3}{4}q - \frac{1}{2})}{2}
\\
&\succsim k_p p+ k_qq\asymp\max\{k_pp,k_qq\}.
\ee
\par
\
\par

\noindent{\textbf{(vi). Bound of $B_3$:}} Recall $B_3 = \min_{ H(\bds\theta,\bds\theta') = 1\atop \bds \theta,\bds \theta'\in{\M \Theta}}\|{\mathbb{P}}(\bds\theta)\wedge{\mathbb{P}}(\bds\theta')\|$. We focus on the lower bound of $\|{\mathbb{P}}(\bds\theta)\wedge{\mathbb{P}}(\bds\theta')\|$ when $H(\bds\theta,\bds\theta') = 1$. In the following proof, we fix a pair of $\bds \theta,\bds \theta'$ with $H(\bds\theta,\bds\theta') = 1$. By \eqref{low:HH}, we know,
\bee\nonumber
H(\bds\theta_p,\bds\theta'_p) + H(\bds\theta_q,\bds\theta'_q)= 1.
\ee
Thus either $H(\bds\theta_p,\bds\theta'_p) = 1, H(\bds\theta_q,\bds\theta'_q) = 0$ or $H(\bds\theta_p,\bds\theta'_p) = 0, H(\bds\theta_q,\bds\theta'_q) = 1$. By symmetry, without loss of generality, we assume $H(\bds\theta_p,\bds\theta'_p) = 1, H(\bds\theta_q,\bds\theta'_q) = 0$, and so $\bds\theta_q = \bds\theta'_q$. Then $\M \Sigma^*_2(\bds \theta_q)$ and $\M \Sigma^*_2(\bds \theta_q')$ are equal, and $\M \Sigma^*_1(\bds \theta_p)$ and $\M \Sigma^*_1(\bds \theta_p')$ have only two different entries. The two entires that have different values in $\M \Sigma^*_1(\bds \theta_p)$ and $\M \Sigma^*_1(\bds \theta_p')$ are off-diagonal and are symmetric about the diagonal. By the definition of ${\mathbb{P}}(\bds \theta)$, we know ${\mathbb{P}}(\bds \theta)$ is the joint distribution of $n$ i.i.d. $\vecc(\M X_i) \sim \mathbf{N}\big\{\bds{0}_{pq},\M \Sigma^*_2(\bds \theta_q)\otimes \M \Sigma^*_1(\bds \theta_p)\big\}$.  By the definition in \eqref{def:||}, we have 
\bee\label{low:final:4}
\|{\mathbb{P}}(\bds\theta)\wedge{\mathbb{P}}(\bds\theta')\|_{} = 1 -\frac{1}{2} \|{\mathbb{P}}(\bds\theta)-{\mathbb{P}}(\bds\theta')\|_1,
\ee
where $\|{\mathbb{P}}(\bds\theta)-{\mathbb{P}}(\bds\theta')\|_1$ is the $L_1$ norm of ${\mathbb{P}}(\bds\theta)-{\mathbb{P}}(\bds\theta')$. To show the lower bound of $\|{\mathbb{P}}(\bds\theta)\wedge{\mathbb{P}}(\bds\theta')\|$ is a positive constant, we only need to show $\|{\mathbb{P}}(\bds\theta)-{\mathbb{P}}(\bds\theta')\|_1 $ can be bounded by a sufficient small constant as $n\rightarrow +\infty$. Define $K(\cdot\mid \cdot)$ as the Kullback-Leibler (KL) divergence. By Pinsker's inequality and the KL divergence of multivariate Gaussian distributions, we have
\bee\label{low:final:1}
\|{\mathbb{P}}(\bds\theta)-{\mathbb{P}}(\bds\theta')\|_1^2&\leq 2K\{{\mathbb{P}}(\bds\theta)\mid {\mathbb{P}}(\bds\theta')\} 
\\
&= 2n\Big[\underbrace{\frac{1}{2} \operatorname{tr}\left\{\M\Sigma^*(\bds\theta){\M\Sigma^*}(\bds\theta')^{-1}\right\}-\frac{pq}{2}}_{B_{31}} -\underbrace{\frac{1}{2} \log \operatorname{det}\left\{\M\Sigma^*(\bds\theta){\M\Sigma^*}(\bds\theta')^{-1}\right\}}_{B_{32}}\Big],
\ee
where $\M \Sigma^*(\bds \theta) = \M \Sigma^*_2(\bds \theta_q)\otimes \M \Sigma^*_1(\bds \theta_p), \M \Sigma^*(\bds \theta') = \M \Sigma^*_2(\bds \theta'_q)\otimes \M \Sigma^*_1(\bds \theta_p') = \M \Sigma^*_2(\bds \theta_q)\otimes \M \Sigma^*_1(\bds \theta_p')$ since $\bds \theta_q = \bds \theta_q'$. Here we note ${\M\Sigma^*}(\bds\theta')$ is invertible because we have shown in \eqref{lower:ii:2} that the smallest eigenvalue of ${\M\Sigma^*_1}(\bds\theta_p')$ is lower bounded by some positive constant, and same for ${\M\Sigma^*_2}(\bds\theta_q') = {\M\Sigma^*_2}(\bds\theta_q)$. Then by the property of Kronecker product matrix, we know ${\M\Sigma^*}(\bds\theta')$ is also invertible, and furthermore ${\M\Sigma^*}(\bds\theta')^{-1}= {\M\Sigma^*_2}(\bds\theta_q)^{-1}\otimes {\M\Sigma^*_1}(\bds\theta_p')^{-1}$. 
\par
We then define $\M D \equiv \M \Sigma^*_1(\bds \theta_p) - \M \Sigma^*_1(\bds \theta_p')$ and have 
\bee\label{low:diff:sigma}
\M \Sigma^*(\bds \theta) - \M \Sigma^*(\bds \theta')  &= \M \Sigma^*_2(\bds \theta_q)\otimes \M \Sigma^*_1(\bds \theta_p) - \M \Sigma^*_2(\bds \theta_q)\otimes \M \Sigma^*_1(\bds \theta_p')
\\
&=\M \Sigma_2^*(\bds\theta_q)\otimes\{\M \Sigma^*_1(\bds \theta_p) - \M \Sigma^*_1(\bds \theta_p')\}
\\
&=\M \Sigma_2^*(\bds\theta_q)\otimes \M D.
\ee
Next, we calculate $B_{31}$ and $B_{32}$ respectively.
\begin{itemize}
\item[\textbf{i.}] For $B_{31}$, by \eqref{low:diff:sigma}, we can directly see
\bee\label{low:b3:3}
\operatorname{tr}\left\{\M\Sigma^*(\bds\theta){\M\Sigma^*}(\bds\theta')^{-1}\right\} &= \tr\Big[\Big\{{\M\Sigma^*}(\bds\theta') +\M \Sigma_2^*(\bds\theta_q)\otimes \M D\Big\}\times{\M\Sigma^*}(\bds\theta')^{-1}\Big]
\\
&=\tr\Big[\M 1_{pq} +\{\M \Sigma_2^*(\bds\theta_q)\otimes \M D\}\times {\M\Sigma^*}(\bds\theta')^{-1}\Big]
\\
&= pq + \tr[\{\M \Sigma_2^*(\bds\theta_q)\otimes \M D\}\times{\M\Sigma^*}(\bds\theta')^{-1}].
\ee
Then we have
\bee\label{low:b3:1}
B_{31} &= \frac{1}{2}\operatorname{tr}\left\{\M\Sigma^*(\bds\theta){\M\Sigma^*}(\bds\theta')^{-1}\right\}-\frac{pq}{2}
\\
&=\frac{pq}{2} + \frac{1}{2}\tr[\{\M \Sigma_2^*(\bds\theta_q)\otimes \M D\}\times{\M\Sigma^*}(\bds\theta')^{-1}]-\frac{pq}{2} 
\\
&= \frac{1}{2}\tr[\{\M \Sigma_2^*(\bds\theta_q)\otimes \M D\}\times{\M\Sigma^*}(\bds\theta')^{-1}].
\ee
By the multiplication property of Kronecker product matrices ({Lemma 4.2.10} in \citet{horn1991topics}), we have
\bee\label{low:b3:2}
\{\M \Sigma_2^*(\bds\theta_q)\otimes \M D\}\times{\M\Sigma^*}(\bds\theta')^{-1} &= \big\{\M \Sigma_2^*(\bds\theta_q)\otimes \M D\big\}\times\big\{\M \Sigma_2^*(\bds\theta_q)^{-1}\otimes \M \Sigma_1^*(\bds\theta_p')^{-1}\big\}
\\
&= \big\{\M \Sigma_2^*(\bds\theta_q)\times\M \Sigma_1^*(\bds\theta_p')^{-1}\big\}\otimes\big\{\M D\times\M \Sigma_1^*(\bds\theta_p')^{-1}\big\} 
\\
&= \M 1_q\otimes\big\{\M D\times\M \Sigma_1^*(\bds\theta_p')^{-1}\big\}.
\ee
Combining \eqref{low:b3:1} and \eqref{low:b3:2}, by the trace property of Kronecker product matrix, we have
\bee\label{low:final:2}
B_{31} &= \frac{1}{2}\tr[\M 1_q\otimes\big\{\M D\times\M \Sigma_1^*(\bds\theta_p')^{-1}\big\}] 
\\
&= \frac{q}{2}\tr[\M D\times\M \Sigma_1^*(\bds\theta_p')^{-1}].
\ee
\par
\item[\textbf{ii.}] For $B_{32}$, by \eqref{low:b3:3} and \eqref{low:b3:2}, we have 
\bee\nonumber
B_{32} &= \frac{1}{2} \log \operatorname{det}\left\{\M\Sigma^*(\bds\theta){\M\Sigma^*}(\bds\theta')^{-1}\right\} 
\\
&= \frac{1}{2}\log\det\Big[\M 1_{pq} +\{\M \Sigma_2^*(\bds\theta_q)\otimes \M D\}\times {\M\Sigma^*}(\bds\theta')^{-1}\Big]
\\
&=\frac{1}{2}\log\det\Big[\M 1_{pq} +\M 1_q\otimes\big\{\M D\times\M \Sigma_1^*(\bds\theta_p')^{-1}\big\}\Big].
\ee
Denote $\Lambda_{pq} \equiv \{\tilde{\lambda}_i\}_{i = 1}^{pq}$, all eigenvalues of $\M 1_q\otimes\big\{\M D\times\M \Sigma_1^*(\bds\theta_p')^{-1}\big\}$, and $\Lambda_p \equiv \{\bar{\lambda}_i\}_{i = 1}^{p}$, all eigenvalues of $\M D\times\M \Sigma_1^*(\bds\theta_p')^{-1}$. By basic property of Kronecker product ({By Theorem 4.2.12} in \citet{horn1991topics}), we have $\Lambda_{pq} = \bigcup_{i = 1}^q\Lambda_p$. Next we focus on the property of eigenvalues in $\Lambda_p$, i.e., the eigenvalues of $\M D\times\M \Sigma_1^*(\bds\theta_p')^{-1}$. Since $\M D\times\M \Sigma_1^*(\bds\theta_p')^{-1}$ is similar to symmetric matrix $\M \Sigma_1^*(\bds\theta_p')^{-1/2}\M D\M \Sigma_1^*(\bds\theta_p')^{-1/2}$, $\Lambda_p$ is also the set of eigenvalues of $\M \Sigma_1^*(\bds\theta_p')^{-1/2}\M D\M \Sigma_1^*(\bds\theta_p')^{-1/2}$. We have
\bee\label{low:haha}
\|\M \Sigma_1^*(\bds\theta_p')^{-1/2}\M D\M \Sigma_1^*(\bds\theta_p')^{-1/2}\|_2&\leq\|\M \Sigma_1^*(\bds\theta_p')^{-1/2}\|_2^2\times\|\M D\|
\\
&\leq\|\M \Sigma_1^*(\bds\theta_p')^{-1/2}\|_2^2\times\|\M D\|_\F
\\
&\leq \frac{1}{\sqrt{\varepsilon_0}^2}\|\M \Sigma^*_1(\bds \theta_p) - \M \Sigma^*_1(\bds \theta_p')\|_\F 
\\
&=\frac{\sqrt{2(nq)^{-1}\gamma^2}}{\varepsilon_0},
\ee
where the third inequality holds because $\M \Sigma_1^*(\bds \theta_1') \in \mathcal{M}(\varepsilon_0,\alpha_1)$, and the last equality holds because $H(\bds \theta_p,\bds \theta_p') = 1$ and all off-diagonal entries of $\M \Sigma_1^*(\bds\theta_p)$ or $\M\Sigma_1^*(\bds\theta_p')$ are $\gamma n^{-1/2}q^{-1/2}$. Then all eigenvalues of $\Lambda_q$ are real since $\M \Sigma_1^*(\bds\theta_p')^{-1/2}\M D\M \Sigma_1^*(\bds\theta_p')^{-1/2}$ is symmetric. In addition all values in $\Lambda_q$ fall within the interval $\Big[-\frac{\sqrt{2(nq)^{-1}\tau^2}}{\varepsilon_0},\frac{\sqrt{2(nq)^{-1}\tau^2}}{\varepsilon_0}\Big]$, where $\frac{\sqrt{2(nq)^{-1}\tau^2}}{\varepsilon_0}\rightarrow 0$ as $n\rightarrow +\infty$. Since $\Lambda_{pq} = \bigcup_{i = 1}^q\Lambda_p$, the eigenvalues of $\M 1_q\otimes\big\{\M D\times\M \Sigma_1^*(\bds\theta_p')^{-1}\big\}$ have the same property. Then similar to proof of Lemma 5 in \citet{Cai2010}, Taylor expansion yields
\bee\label{low:final:3}
B_{32} &= \frac{1}{2}\log\det\Big[\M 1_{pq} +\M 1_q\otimes\big\{\M D\times\M \Sigma_1^*(\bds\theta_p')^{-1}\big\}\Big] 
\\
&={\frac{1}{2}\tr\Big[\M 1_q\otimes\big\{\M D\times\M \Sigma_1^*(\bds\theta_p')^{-1}\big\}\Big]}- R
\\
&=\frac{q}{2}\tr[\M D\times\M \Sigma_1^*(\bds\theta_p')^{-1}] - R
\\
&=B_{31} - R,
\ee
where the last equality holds by \eqref{low:final:2}, and $R \leq c_3\sum_{\tilde{\lambda}_i \in\Lambda_{pq}}\tilde{\lambda}_i^2$ for some fixed constant $c_3>0$.
\end{itemize}
\par
Combining \eqref{low:final:1}, \eqref{low:final:2} and \eqref{low:final:3}, we have
\begin{align}
\nonumber
\|{\mathbb{P}}(\bds\theta)-{\mathbb{P}}(\bds\theta')\|_1^2&\leq 2n[B_{31} - B_{32}]
\\\nonumber
& = 2nR 
\\\nonumber
&\leq 2nc_3\sum_{\tilde{\lambda}_i \in\Lambda_{pq}}\tilde{\lambda}_i^2
\\\nonumber
&=2nc_3q\sum_{\bar{\lambda}_i\in\Lambda_p}\bar{\lambda}_i^2 \quad  (\text{By } \Lambda_{pq} = \bigcup_{i = 1}^q\Lambda_p)
\\\nonumber
&=2nc_3q\|\M \Sigma_1^*(\bds\theta_p')^{-1/2}\M D\M \Sigma_1^*(\bds\theta_p')^{-1/2}\|_\F^2
\\\nonumber
&\leq 2nc_3q \|\M \Sigma_1^*(\bds\theta_p')^{-1/2}\|^4\|\M D\|^2_\F
\\\nonumber
&\leq  \frac{4c_3}{\varepsilon_0^2}\gamma^2,
\end{align}
where the third equality holds because $\Lambda_p$ are eigenvalues of  $\M \Sigma_1^*(\bds\theta_p')^{-1/2}\M D\M \Sigma_1^*(\bds\theta_p')^{-1/2}$, and the last inequality can be derived similarly to \eqref{low:haha}. With the inequality above, if $\gamma = \varepsilon_0/2\sqrt{c_3}$, we have $0\leq \|{\mathbb{P}}(\bds\theta)-{\mathbb{P}}(\bds\theta')\|_1\leq 1$. Then by \eqref{low:final:4}, one has
\bee\label{res:B3:final}
1/2\leq B_3 \leq 1.
\ee
Note that throughout the proof, we additionally need $\gamma \leq C_1$. Thus, we finally should pick $\gamma = \min\{\varepsilon_0/2\sqrt{c_3}, C_1\}$. 
\
\par
\
\par
\noindent{\textbf{Summary: }}We prove our target results by combining \eqref{lm:low:3:2}, \eqref{res:B1:final}, \eqref{res:B2:final}, \eqref{res:B3:final} and the definitions of $k_p,k_q$. In particular, one has
\bee\nonumber
&\inf_{\hat{\M \Sigma}_n}\sup_{\{\vecc(\M X_i)\}_{i = 1}^n\sim \mathbb{P};\atop\mathbb{P}\in \mathcal{P}^n_{\varepsilon_0,\alpha_1,\alpha_2}}\E\Bigg(\frac{\|\hat{\M \Sigma}_n - \M \Sigma_2^* \otimes \M \Sigma_1^*\|_\F^2}{pq}\Bigg)
\\
&\succsim \frac{1}{4}B_1B_2B_3
\\
&\succsim \frac{1}{4}\frac{1}{npq}\max\{k_pp,k_qq\}
\\
&\asymp\max\Big\{\frac{k_p}{qn},\frac{k_q}{pn}\Big\} 
\\
&\asymp\max\Big[\min\{(nq)^{\frac{1}{2\alpha_1 + 2}},p\}/qn,\min\{(np)^{\frac{1}{2\alpha_2 + 2}},q\}/pn\Big] 
\\
&=\max\Bigg[\min\Big\{\frac{p}{qn},(nq)^{\frac{1}{2\alpha_1 + 2} - 1}\Big\}, \min\Big\{\frac{q}{pn},(np)^{\frac{1}{2\alpha_2 + 2} - 1}\Big\}\Bigg].
\ee
\qed
\subsection{Proof of Theorem \ref{T:indi}}
\subsubsection{Preliminary}
We first present some preliminary to simplify our proof. Some relevant definitions in this proof can be found in Section \ref{notation:main}. 
\par
For simplicity, we only consider the convergence of $\big\|c_{1,\mathcal{B}}\hat{\M \Sigma}_1^{\mathcal{B}}(k_1) - \M\Sigma_1^*\big\|_\F^2$. When $\eta=\mathcal{T}$, the proof techniques are exactly the same. In addition, the convergence of $\big\|c_{2,\eta}\hat{\M \Sigma}_2^\eta(k_2) - \M\Sigma_2^*\big\|_\F^2$ can be shown directly by symmetry. 
\par 
Without loss of generality, we assume $p\geq q$. For particular $\hat{\M\Sigma}_1^{\mathcal{B}}(k_1)$ and $\hat{\M\Sigma}_2^{\mathcal{B}}(k_2)$, since  $\hat{\M\Sigma}_2^{\mathcal{B}}(k_2)\otimes \hat{\M\Sigma}_1^{\mathcal{B}}(k_1)$ is the optimal Frobenius-norm Kronecker product approximation of $\tilde{\M\Sigma}^{\MB}(k_1,k_2)$, \citet{Pitsianis1997} show that $\tilde{\M\Sigma}^{\MB}(k_1,k_2)$ can be presented by the following SVD form,
\bee\label{thm:ind:pre}
\xi\big\{\tilde{\M\Sigma}^{\MB}(k_1,k_2)\big\} = 
\begin{bmatrix}
\M U_1&\M U_1^{\bot}
\end{bmatrix}
\begin{bmatrix}
\hat{\M\Lambda}&\M 0_{q^2\times(p^2-q^2)}
\end{bmatrix}
\begin{bmatrix}
\M V_1^{\T}
\\
(\M V_1^{\bot})^{\T}
\end{bmatrix}, 
\ee
where $\hat{\M\Lambda} \equiv \diag\big\{
\hat{\varsigma}_1,\dots,\hat{\varsigma}_{\min{(p^2,q^2)=q^2 }}\big\} $, $\hat{\varsigma}_1\geq \hat{\varsigma}_2\geq \ldots \geq \hat{\varsigma}_{q^2}>0$, $\M U_1 \equiv \vecc\big\{\hat{\M \Sigma}_2^\MB(k_2)\big\}\big/\big\|\vecc\big\{\hat{\M \Sigma}_2^\MB(k_2)\big\}\big\|$, and $\M V_1 \equiv {\vecc\big\{\hat{\M \Sigma}_1^\MB(k_1)\big\}}\big/{\big\|\vecc\big\{\hat{\M \Sigma}_1^\MB(k_1)\big\}\big\|}$. On the other hand, by Lemma \ref{lemma:xi},
\bee\nonumber
\xi\big\{\M\Sigma_2^{*,\mathcal{B}}(k_2)\otimes\M\Sigma_1^{*,\mathcal{B}}(k_1)\big\} = \vecc\big\{\M\Sigma_2^{*,\mathcal{B}}(k_2)\big\}\cdot \vecc\big\{\M\Sigma_1^{*,\mathcal{B}}(k_1)\big\}^\T.
\ee
 Similar to \eqref{thm:ind:pre}, the SVD form of $\xi\big\{\M\Sigma_2^{*,\mathcal{B}}(k_2)\otimes\M\Sigma_1^{*,\mathcal{B}}(k_1)\big\} $ is 
\bee\nonumber
\xi\big\{\M\Sigma_2^{*,\mathcal{B}}(k_2)\otimes\M\Sigma_1^{*,\mathcal{B}}(k_1)\big\}  &= \begin{pmatrix}\tilde{\M U}_0 & \tilde{\M U}_{0}^{\perp}
\end{pmatrix}\cdot
\begin{pmatrix}\tilde{\varsigma} & \bds 0_{1\times (p^2 - 1)}
\\
\bds 0_{(q^2 - 1)\times1} & \bds 0_{(q^2 - 1)\times (p^2 - 1)}
\end{pmatrix}\cdot
\begin{pmatrix}
\tilde{\M V}_0^\T
\\
\big(\tilde{\M V}^\perp_0\big)^\T
\end{pmatrix}
\\
&=\tilde{\varsigma}\cdot\tilde{\M U}_0\tilde{\M V}_0^\T,
\ee
where $\tilde{\M U}_0 \equiv \vecc\big\{\M\Sigma_2^{*,\mathcal{B}}(k_2)\big\}\big/\big\|\vecc\big\{\M\Sigma_2^{*,\mathcal{B}}(k_2)\big\}\big\|$, $\tilde{\M V}_0 \equiv \vecc\big\{\M\Sigma_1^{*,\mathcal{B}}(k_1)\big\}\big/\big\|\vecc\big\{\M\Sigma_1^{*,\mathcal{B}}(k_1)\big\}\big\|$ and $\tilde{\varsigma} \equiv \|\vecc\big\{\M\Sigma_1^{*,\mathcal{B}}(k_1)\big\}\|\cdot \|\vecc\big\{\M\Sigma_2^{*,\mathcal{B}}(k_2)\big\}\|$.  Then by Lemma 1 in \citet{cai2018rate},
\bee\label{thm:pf:ind:pre:2}
\inf_{\delta\in\{0,1\}}\|\M V_1 - (-1)^{\delta}\cdot\tilde{\M V}_0\|^2\leq 2|\sin\Theta(\M V_1, \tilde{\M V}_0)|^2.
\ee
When \eqref{thm:pf:ind:pre:2} is satisfied with $\delta = 0$, i.e.,
\bee\label{thm:pf:ind:pre:3}
\|\M V_1 - \tilde{\M V}_0\|^2\leq 2|\sin\Theta(\M V_1, \tilde{\M V}_0)|^2,
\ee
we define $c_{a,\eta}$ for $a\in\{1,2\}$ and $\eta\in\{\mathcal{B},\mathcal{T}\}$ as
\bee\label{normalize:constant}
c_{a,\eta} = \begin{cases}
\|\M\Sigma_1^{*,\mathcal{B}}(k_1)\|_\F/\big\|\hat{\M\Sigma}_a^{\mathcal{B}}(k_a)\big\|_\F & \text{when } \eta = \mathcal{B}
\\
\|\M\Sigma_1^{*,\mathcal{T}}(k_1)\|_\F/\big\|\hat{\M\Sigma}_a^{\mathcal{T}}(k_a)\big\|_\F & \text{when } \eta = \mathcal{T}.
\end{cases}
\ee
When \eqref{thm:pf:ind:pre:2} is satisfied with $\delta = 1$, i.e., $\|\M V_1 + \tilde{\M V}_0\|^2\leq 2|\sin\Theta(\M V_1, \tilde{\M V}_0)|^2$, we define 
\bee\nonumber
c_{a,\eta} = \begin{cases}
-\|\M\Sigma_1^{*,\mathcal{B}}(k_1)\|_\F/\big\|\hat{\M\Sigma}_a^{\mathcal{B}}(k_a)\big\|_\F & \text{when } \eta = \mathcal{B}
\\
-\|\M\Sigma_1^{*,\mathcal{T}}(k_1)\|_\F/\big\|\hat{\M\Sigma}_a^{\mathcal{T}}(k_a)\big\|_\F & \text{when } \eta = \mathcal{T}.
\end{cases}
\ee
\par
In the main proof, we specifically consider the case that $\delta = 0$, and therefore $c_{a,\eta}$ is selected as \eqref{normalize:constant}. We also note the main proof can be directly adapted to the case of $\delta = 1$.
\subsubsection{Main Proof}
Now we bound $\big\|c_{1,\mathcal{B}}\hat{\M \Sigma}_1^{\mathcal{B}}(k_1) - \M\Sigma_1^*\big\|_\F^2/p$, where $c_{1,\mathcal{B}}$ is previously defined as $c_{1,\mathcal{B}}  = \|\M\Sigma_1^{*,\mathcal{B}}(k_1)\|_\F/\|\hat{\M\Sigma}_1^{\mathcal{B}}(k_1)\|_\F$. By triangle inequality, we have
\bee\label{thm:submatrix:final:1}
\frac{1}{p}\big\|c_{1,\mathcal{B}}\hat{\M \Sigma}_1^{\mathcal{B}}(k_1) - \M\Sigma_1^*\big\|_\F^2 &\leq \frac{2}{p}\big\|c_{1,\MB}\hat{\M \Sigma}_1^\MB(k_1) - \M\Sigma_1^{*,\mathcal{B}}(k_1)\big\|_\F^2 + \frac{2}{p}\big\| \M\Sigma_1^{*,\mathcal{B}}(k_1) - \M\Sigma_1^*\big\|_\F^2
\\
&\asymp\underbrace{\frac{1}{p}\Big\|\frac{\|\M\Sigma_1^{*,\mathcal{B}}(k_1)\|_\F}{\|\hat{\M \Sigma}_1^\MB(k_1) \|_\F}\hat{\M \Sigma}_1^{\mathcal{B}}(k_1) - \M\Sigma_1^{*,\mathcal{B}}(k_1)\Big\|_\F^2}_{I_1} + \underbrace{\frac{1}{p}\big\| \M\Sigma_1^{*,\mathcal{B}}(k_1) - \M\Sigma_1^*\big\|_\F^2}_{I_2}.
\ee
\par
We first give a brief summary of our proof framework. We will bound $I_2$ by Lemma \ref{l:outer}. Then  we will bound $I_1$ by combining the unilateral singular subspace perturbation bound \citep{cai2018rate} with the proving techniques we used in Lemma \ref{l:frate} and \ref{lemma:srate}.
\par
We first bound $I_2$. This can be regarded as a special degenerate case of Lemma \ref{l:outer} when $q=1,k_2 = 1$ and $\M\Sigma_2^* = 1$.
The term $I_2$ can be directly bounded by 
\bee\label{thm:submatrix:final:2}
I_2 &= \frac{1}{p}\big\| \M\Sigma_1^{*,\mathcal{B}}(k_1) - \M\Sigma_1^*\big\|_\F^2
\\
&\leq \M I_{n,p}(k_1)\cdot k_1^{-\tilde\alpha_1},
\ee
where $\small\M I_{\eta,d}(k)$ and $\tilde{\alpha}_a$ are defined in Theorem \ref{T:indi}.
\par
Next, we bound $I_1$. By definitions of $\mathcal{F}(\varepsilon_0,\alpha),\mathcal{M}(\varepsilon_0,\alpha)$ and the matrix norm relationship $\|\M M\|_2\leq \|\M M\|_1$ for any symmetric matrix $\M M$, we have 
\bee\nonumber
\|\M\Sigma_1^{*,\mathcal{B}}(k_1) - \M\Sigma_1^{*}\|_2 &\leq \|\M\Sigma_1^{*,\mathcal{B}}(k_1) - \M\Sigma_1^{*}\|_1\leq C_0 k_1^{-\alpha_1}\precsim 1,
\ee
and $\|\M\Sigma_1^*\|_2 =\lambda_{\max}(\M\Sigma_1^*)\leq 1/\varepsilon_0$. Therefore by triangle inequality, $\|\M\Sigma_1^{*,\mathcal{B}}(k_1)\|_2\leq \|\M\Sigma_1^{*,\mathcal{B}}(k_1) - \M\Sigma_1^*\|_2 + \|\M\Sigma_1^*\|_2\precsim 1$. Since $\|\M M\|^2_\F\leq d\|\M M\|_2^2$ for any $d\times d$ matrix $\M M$, we have
\bee\nonumber
\|\M\Sigma_1^{*,\mathcal{B}}(k_1)\|_\F^2 &\leq p\|\M\Sigma_1^{*,\mathcal{B}}(k_1)\|_2^2 \precsim p .
\ee
Then $I_1$ can be bounded by
\bee\label{I1:form1}
I_1 &= \frac{1}{p}\Bigg\|\frac{\|\M\Sigma_1^{*,\mathcal{B}}(k_1)\|_\F}{\|\hat{\M \Sigma}_1^\MB(k_1) \|_\F}\hat{\M \Sigma}_1^\MB(k_1) - \M\Sigma_1^{*,\mathcal{B}}(k_1)\Bigg\|_\F^2
\\
&\leq \frac{\|\M\Sigma_1^{*,\mathcal{B}}(k_1)\|_\F^2}{p}\cdot\Big\|\frac{1}{\|\hat{\M \Sigma}_1^\MB(k_1) \|_\F}\hat{\M \Sigma}_1^\MB(k_1) - \frac{1}{\|\M\Sigma_1^{*,\mathcal{B}}(k_1)\|_\F}\M\Sigma_1^{*,\mathcal{B}}(k_1)\Big\|_\F^2
\\
&\precsim \frac{p}{p}\cdot \Bigg\|\frac{\hat{\M \Sigma}_1^\MB(k_1)}{\|\hat{\M \Sigma}_1^\MB(k_1) \|_\F} - \frac{\M\Sigma_1^{*,\mathcal{B}}(k_1)}{\|\M\Sigma_1^{*,\mathcal{B}}(k_1)\|_\F}\Bigg\|_\F^2
\\
&= \Bigg\|\frac{\vecc\big\{\hat{\M \Sigma}_1^\MB(k_1)\big\}}{\|\vecc\big\{\hat{\M \Sigma}_1^\MB(k_1)\big\}\|} - \frac{\vecc\big\{\M\Sigma_1^{*,\mathcal{B}}(k_1)\big\}}{\|\vecc\big\{\M\Sigma_1^{*,\mathcal{B}}(k_1)\big\}\|}\Bigg\|^2
\\
&=\|\M V_1 - \tilde{\M V}_0\|^2.
\ee
Combining \eqref{thm:pf:ind:pre:3} and \eqref{I1:form1}, we have
\bee\label{I1:form2}
I_1 &\precsim 2|\sin\Theta(\M V_1, \tilde{\M V}_0)|^2.
\ee
The $|\sin\Theta(\M V_1, \tilde{\M V}_0)|$ can be seen as the right singular subspace perturbation, of original matrix $\xi\big\{\M\Sigma_2^{*,\mathcal{B}}(k_2)\otimes\M\Sigma_1^{*,\mathcal{B}}(k_1)\big\}$ and its perturbation $\xi\big\{\tilde{\M\Sigma}^{\MB}(k_1,k_2)\big\}$. We use the rate-optimal subspace perturbation bound by \citet{cai2018rate}, to study the right singular subspace perturbation  $|\sin\Theta(\M V_1, \tilde{\M V}_0)|$.
\par
To make the proof comparable, we align most of our notation with \citet{cai2018rate}. In particular, we take 
\bee\nonumber
&\M X \equiv \xi\big\{\M\Sigma_2^{*,\mathcal{B}}(k_2)\otimes\M\Sigma_1^{*,\mathcal{B}}(k_1)\big\} ,
\\
&\hat{\M X} \equiv \xi\big\{\tilde{\M\Sigma}^{\MB}(k_1,k_2)\big\},
\\
&\M Z \equiv \xi\big\{\tilde{\M\Sigma}^{\MB}(k_1,k_2)\big\} - \xi\big\{\M\Sigma_2^{*,\mathcal{B}}(k_2)\otimes\M\Sigma_1^{*,\mathcal{B}}(k_1)\big\} ,
\ee
and thus $\M Z = \hat{\M X} - \M X$. Let $\M P_{\M M}$ be the projection operator of matrix $\M M$. When $\M M$ has orthonormal columns, In Section 2.1 of \citet{cai2018rate}, it has been shown that $\M P_{\M M} = \M M\M M^\T$. Then we further define $\M Z_{11}, \M Z_{12}, \M Z_{21}, \M Z_{22}$ as 
\bee\nonumber
\begin{pmatrix}
\M P_{\tilde{\M U}_0} 
\\
 \M P_{\tilde{\M U}_0^\perp}
\end{pmatrix}\cdot\M Z \cdot
\begin{pmatrix}
\M P_{\tilde{\M V}_0^{}} 
&
\M P_{{\tilde{\M V}_0}^\perp}
\end{pmatrix}
&=
\begin{pmatrix}
\tilde{\M U}_0\tilde{\M U}^\T_0\M Z \tilde{\M V}_0\tilde{\M V}_0^\T &\tilde{\M U}_0\tilde{\M U}^\T_0\M Z \tilde{\M V}_0^\perp\big(\tilde{\M V}_0^\perp\big)^\T
\\
\tilde{\M U}^\perp_0\big(\tilde{\M U}^\perp_0\big)^\T\M Z \tilde{\M V}_0\tilde{\M V}_0^\T & \tilde{\M U}^\perp_0\big(\tilde{\M U}^\perp_0\big)^\T\M Z\tilde{\M V}_0^\perp\big(\tilde{\M V}_0^\perp\big)^\T
\end{pmatrix}
\\
&\equiv
\begin{pmatrix}
\M Z_{11}&\M Z_{12}
\\
\M Z_{21} & \M Z_{22}
\end{pmatrix},
\ee
and $z_{b_1b_2} = \|\M Z_{b_1b_2}\|_2$ where $b_1,b_2\in\{1,2\}$.  We  note that since spectral norm is orthogonally invariant, we have
\bee\label{def:zab}
\begin{pmatrix}
z_{11} & z_{12}
\\
z_{21} & z_{22}
\end{pmatrix} &= \begin{pmatrix}
\|\tilde{\M U}_0\tilde{\M U}^\T_0\M Z \tilde{\M V}_0\tilde{\M V}_0^\T\|_2 & \|\tilde{\M U}_0\tilde{\M U}^\T_0\M Z \tilde{\M V}_0^\perp\big(\tilde{\M V}_0^\perp\big)^\T\|_2
\\
\|\tilde{\M U}^\perp_0\big(\tilde{\M U}^\perp_0\big)^\T\M Z \tilde{\M V}_0\tilde{\M V}_0^\T\|_2 & \|\tilde{\M U}^\perp_0\big(\tilde{\M U}^\perp_0\big)^\T\M Z\tilde{\M V}_0^\perp\big(\tilde{\M V}_0^\perp\big)^\T\|_2
\end{pmatrix}
\\
&=\begin{pmatrix}
\|\tilde{\M U}^\T_0\M Z \tilde{\M V}_0\|_2 & \|\tilde{\M U}^\T_0\M Z \tilde{\M V}_0^\perp\|_2
\\
\|\big(\tilde{\M U}^\perp_0\big)^\T\M Z \tilde{\M V}_0\|_2 & \|\big(\tilde{\M U}^\perp_0\big)^\T\M Z\tilde{\M V}_0^\perp\|_2
\end{pmatrix}.
\ee
In addition, since the spectral norm of any matrix $\M M$ is always larger or equal to the spectral norm of its submatrix, we have
\bee\nonumber
z_{ab} &\leq \Bigg\|\begin{pmatrix}
\tilde{\M U}^\T_0\M Z \tilde{\M V}_0 & \tilde{\M U}^\T_0\M Z \tilde{\M V}_0^\perp
\\
\big(\tilde{\M U}^\perp_0\big)^\T\M Z \tilde{\M V}_0 & \big(\tilde{\M U}^\perp_0\big)^\T\M Z\tilde{\M V}_0^\perp
\end{pmatrix}\Bigg\|_2
\\
&= \Bigg\|\begin{pmatrix}
\tilde{\M U}^\T_0
\\
\big(\tilde{\M U}^\perp_0\big)^\T
\end{pmatrix} \cdot\M Z\cdot \begin{pmatrix}
\tilde{\M V}_0  & \tilde{\M V}_0 ^\perp
\end{pmatrix} \Bigg\|_2
\\
&= \big\|\M Z\big\|_2 \quad (\|\cdot\|_2\text{ orthogonally invariant}),
\ee
for any $a,b\in\{1,2\}$. On the other hand, we have proved in \eqref{T2:B1:2} and \eqref{T2:add} that
\bee\label{zab:neg:2}
\frac{1}{pq}\E\big\|\M Z\big\|_2^2 &= \frac{1}{pq}\E\big\|\xi\big\{\tilde{\M\Sigma}^{\MB}(k_1,k_2)\big\} - \xi\big\{\M\Sigma_2^{*,\mathcal{B}}(k_2)\otimes\M\Sigma_1^{*,\mathcal{B}}(k_1)\big\}\big\|_2^2
\\
&\precsim
\begin{cases}\frac{k_1}{qn} + \frac{k_2}{pn}& pk_1 + qk_2 \precsim n
\\
\big(\frac{k_1k_2}{n}\big)\wedge\big(\frac{pk^2_1}{qn^2} + \frac{qk^2_2}{pn^2}  \big) & pk_1 + qk_2 \succ n
\end{cases}
\\
& = r_{\text{var}\mid \hat{\M \Sigma}_2\otimes \hat{\M\Sigma}_1}(k_1,k_2\mid p,q).
\ee
Therefore, for $k_1,k_2,p,q$ such that make $r_{\text{var}}(k_1,k_2\mid p,q) \rightarrow 0$, it directly implies $\E\|\M Z\|_2^2 \precsim pq\cdot r_{\text{var}\mid \hat{\M \Sigma}_2\otimes \hat{\M\Sigma}_1}(k_1,k_2\mid p,q)$. Since $\mathcal{L}^2$ convergence directly implies $\mathcal{O}_{\mathbb{P}}$ convergence (in particular, this can be shown by Markov inequality easily), we have
\bee\label{zab:neg}
z_{ab} &\leq \|\M Z\|_2
\\
&=\mathcal{O}_{\mathbb{P}}\big[\sqrt{pq}\big\{r_{\text{var}}(k_1,k_2\mid p,q)\big\}^{1/2}\big]
\\
&= \mathcal{o}_{\mathbb{P}}(\sqrt{pq}),
\ee
for $a,b\in\{0,1\}$.
\par
 We also let $\alpha\equiv \sigma_{\min}\big(\tilde{\M U}_0^\T\hat{\M X}\tilde{\M V}_0\big) = |\tilde{\M U}_0^\T\hat{\M X}\tilde{\M V}_0|$ and $\beta \equiv \big\|\big(\tilde{\M U}_0^{\perp}\big)^\T\hat{\M X}\big(\tilde{\M V}_{0}^\perp\big)\big\|_2$. By Theorem 1 in \citet{cai2018rate}, we have
\bee\label{thm:subm:target}
|\sin\Theta(\M V_1, \tilde{\M V}_0)|\leq \frac{\alpha z_{12} + \beta z_{21}}{\alpha^2 - \beta^2 - z_{12}^2\wedge z_{21}^2},
\ee
if $\alpha^2 >\beta^2 + z_{12}^2 \wedge z_{21}^2$. Next, we bound all quantities defined above. 
\par
We first lower bound $\alpha$. By definition, we can write
\bee\nonumber
\tilde{\M U}_0^\T\hat{\M X}\tilde{\M V}_0 &= \tilde{\M U}_0^\T\M Z\tilde{\M V}_0 + \tilde{\M U}_0^\T\M X\tilde{\M V}_0
\\
&= \tilde{\M U}_0^\T\M Z\tilde{\M V}_0 + \tilde{\varsigma}\cdot\tilde{\M U}_0^\T\tilde{\M U}_0\tilde{\M V}_0^\T\tilde{\M V}_0
\\
&=\tilde{\M U}_0^\T\M Z\tilde{\M V}_0 + \tilde{\varsigma}
\\
&= \tilde{\M U}_0^\T\M Z\tilde{\M V}_0+ \|\vecc\big\{\M\Sigma_1^{*,\mathcal{B}}(k_1)\big\}\|\cdot \|\vecc\big\{\M\Sigma_2^{*,\mathcal{B}}(k_2)\big\}\|
\\
&= \tilde{\M U}_0^\T\M Z\tilde{\M V}_0+ \|\M\Sigma_1^{*,\mathcal{B}}(k_1)\|_\F\cdot \|\M\Sigma_2^{*,\mathcal{B}}(k_2)\|_\F,
\ee
and thus $\alpha = |\tilde{\M U}_0^\T\hat{\M X}\tilde{\M V}_0| \geq \Big|\|\M\Sigma_1^{*,\mathcal{B}}(k_1)\|_\F\cdot \|\M\Sigma_2^{*,\mathcal{B}}(k_2)\|_\F - |\tilde{\M U}_0^\T\M Z\tilde{\M V}_0|\Big|$. We consider three terms 
$\|\M\Sigma_1^{*,\mathcal{B}}(k_1)\|_\F\cdot \|\M\Sigma_2^{*,\mathcal{B}}(k_2)\|_\F $ and $|\tilde{\M U}_0^\T\M Z\tilde{\M V}_0|$, respectively. 
 Note that $\M\Sigma_1^{*}\in\mathcal{M}(\varepsilon_0,\alpha_1)$ or $\mathcal{F}(\varepsilon_0,\alpha_1)$, and any matrix in  $\mathcal{M}(\varepsilon_0,\alpha_1)$ or $\mathcal{F}(\varepsilon_0,\alpha_1)$ has the smallest eigenvalue that is greater or equal to $\varepsilon_0/2$. Thus we have all diagonal entries of $\M\Sigma_1^{*}$ are greater or equal to $\varepsilon_0/2$, since $\M\Sigma_1^{*}$ is positive definite. As $k_1\geq 1$ and $\M\Sigma_1^{*,\mathcal{B}}(k_1)$ shares same diagonals with $\M\Sigma_1^{*}$, we have
\bee\nonumber
\|\M\Sigma_1^{*,\mathcal{B}}(k_1)\|_\F &\geq\bigg[\sum_{i = 1}^p\Big[\M\Sigma_1^{*,\mathcal{B}}(k_1)\Big]_{ii}^2\bigg]^{1/2}
\geq \sqrt{p}\cdot\frac{\varepsilon_0}{2}\succsim \sqrt{p}.
\ee
Similarly, we have $\|\M\Sigma_2^{*,\mathcal{B}}(k_2)\|_\F \succsim  \sqrt{q}$ and thus $\|\M\Sigma_1^{*,\mathcal{B}}(k_1)\|_\F\cdot \|\M\Sigma_2^{*,\mathcal{B}}(k_2)\|_\F  \succsim \sqrt{pq}$. On the other hand, as $|\tilde{\M U}_0^\T\M Z\tilde{\M V}_0| = z_{11}$, by \eqref{zab:neg}, we have $|\tilde{\M U}_0^\T\M Z\tilde{\M V}_0| \prec \sqrt{pq}$ with probability approaching $1$. Thus $|\tilde{\M U}_0^\T\M Z\tilde{\M V}_0|$ is negligible compared to $\|\M\Sigma_1^{*,\mathcal{B}}(k_1)\|_\F\cdot \|\M\Sigma_2^{*,\mathcal{B}}(k_2)\|_\F$ with probability approaching $1$. In sum, with probability approaching $1$,
\bee\label{thm:submatrix:target:dom:1}
\alpha \succsim \sqrt{pq}.
\ee 
\par
Next we show the lower bound of the denominator on right-hand side of \eqref{thm:subm:target}, i.e., $\alpha^2 - \beta^2 - z_{12}^2\wedge z_{21}^2$. For $\beta$, we decompose $\big(\tilde{\M U}_0^{\perp}\big)^\T\hat{\M X}\big(\tilde{\M V}_{0}^\perp\big)$ as 
\bee\nonumber
\big(\tilde{\M U}_0^{\perp}\big)^\T\hat{\M X}\big(\tilde{\M V}_{0}^\perp\big) &= \big(\tilde{\M U}_0^{\perp}\big)^\T(\hat{\M X} - \M X)\big(\tilde{\M V}_{0}^\perp\big) + \big(\tilde{\M U}_0^{\perp}\big)^\T \M X\big(\tilde{\M V}_{0}^\perp\big)
\\
&=\big(\tilde{\M U}_0^{\perp}\big)^\T\M Z\big(\tilde{\M V}_{0}^\perp\big) + \tilde{\varsigma}\cdot\underbrace{\big(\tilde{\M U}_0^{\perp}\big)^\T\tilde{\M U}_0\tilde{\M V}_0^\T\big(\tilde{\M V}_{0}^\perp\big)}_{ = 0}
\\
&= \M Z_{22}.
\ee
So $z_{22} = \|\M Z_{22}\|_2 = \|\big(\tilde{\M U}_0^{\perp}\big)^\T\hat{\M X}\big(\tilde{\M V}_{0}^\perp\big)\|_2 = \beta$ by definition of $\beta$. Then by \eqref{zab:neg}, we have
\bee\label{thm:submatrix:target:dom:2}
\beta^2 &= \mathcal{O}_{\mathbb{P}}\big[{pq}\cdot r_{\text{var}\mid \hat{\M \Sigma}_2\otimes \hat{\M\Sigma}_1}(k_1,k_2\mid p,q)\big] 
\\
&=\mathcal{o}_{\mathbb{P}}(pq).
\ee
Also by \eqref{zab:neg} we have 
\bee\label{thm:submatrix:target:dom:3}
z_{12}^2 \wedge z_{21}^2 = \mathcal{o}_{\mathbb{P}}(pq).
\ee Combining \eqref{thm:submatrix:target:dom:1}, \eqref{thm:submatrix:target:dom:2} and \eqref{thm:submatrix:target:dom:3}, with probability approaching $1$,
\bee\label{thm:submatrix:final1}
\alpha^2 - \beta^2 - z_{12}^2\wedge z_{21}^2 &\asymp \alpha^2\succsim pq,
\ee
and 
\bee\label{thm:submatrix:final2}
&\alpha^2 \succ\beta^2 + z_{12}^2 \wedge z_{21}^2.
\ee
\par
Now we give a sharper upper bound for the numerator of \eqref{thm:subm:target}, i.e., $\alpha z_{12} + \beta z_{21}$. Thus, we need to bound $z_{12},z_{21}$ carefully. We focus on $z_{12}$ as the bound for $z_{21}$ can be derived similarly by symmetry.

First, by \eqref{zab:neg:2} and \eqref{zab:neg}, we have a trivial bound 
\bee\label{submatrix:z12:1}
z_{12} &= \mathcal{O}_{\mathbb{P}}\Big[\sqrt{pq}\cdot \big\{r_{\text{var}\mid \hat{\M \Sigma}_2\otimes \hat{\M\Sigma}_1}(k_1,k_2\mid p,q)\big\}^{1/2}\Big]
\\
&= \begin{cases}\mathcal{O}_{\mathbb{P}}\Big[\sqrt{\frac{pk_1}{n}} + \sqrt{\frac{qk_2}{n}}\Big]& pk_1 + qk_2 \precsim n
\\
\mathcal{O}_{\mathbb{P}}\Big[\big(\sqrt{\frac{k_1k_2pq}{n}}\big)\wedge\big(\frac{pk_1}{n} + \frac{qk_2}{n}  \big)\Big] & pk_1 + qk_2 \succ n
\end{cases},
\ee
where $r_{\text{var}\mid \hat{\M \Sigma}_2\otimes \hat{\M\Sigma}_1}(k_1,k_2\mid p,q)$ is previously defined in \eqref{def:rwhole}. 
\par
In addition, by \eqref{def:zab}, we know $z_{12} = \|\tilde{\M U}^\T_0\M Z \tilde{\M V}_0^\perp\|_2$. By property of spectral norm, we have
\bee\nonumber
z_{12} &= \|\tilde{\M U}^\T_0\M Z \tilde{\M V}_0^\perp\|_2= \sup_{\bds u\in \mathcal{U}_{p^2}}\big|\tilde{\M U}^\T_0\M Z \tilde{\M V}_0^\perp\bds u\big|.
\ee
Since $\tilde{\M V}_0^\perp$ has orthonormal columns, we have $\tilde{\M V}_0^\perp\bds u\in \mathcal{U}_{p^2}$ if $\bds u\in\mathcal{U}_{p^2}$. Let $\bds u' = \tilde{\M V}_0^\perp\bds u$, one has
\bee\label{thm:submatrix:z12}
z_{12} &= \sup_{\bds u'  = \tilde{\M V}_0^\perp\bds u,
\atop \bds u\in\mathcal{U}_{p^2}}\big|\tilde{\M U}^\T_0\M Z \bds u'\big|\leq\sup_{\bds u''\in\mathcal{U}_{p^2}}\big|\tilde{\M U}^\T_0\M Z \bds u''\big|.
\ee  
As $\tilde{\M U}_0 \in \mathcal{U}_{q^2}$ and $\bds u''\in \mathcal{U}_{p^2}$, we employ same techniques, i.e. Hanson-Wright type inequality and $\epsilon$-nets, introduced in the proof of Lemma \ref{lemma:srate}, to derive the upper bounds of $\sup_{\bds u''\in\mathcal{U}_{p^2}}\big|\tilde{\M U}^\T_0\M Z \bds u''\big|$. In the following proof, we will use some notation defined in the proof of Lemma \ref{lemma:srate}. We first use triangle inequality to decompose
\bee\label{thm:submatrix:uzu}
\sup_{\bds u''\in\mathcal{U}_{p^2}}\big|\tilde{\M U}^\T_0\M Z \bds u''\big| &= \sup_{\bds u''\in\mathcal{U}_{p^2}}\Big|\tilde{\M U}^\T_0\Big[ \xi\big\{\tilde{\M\Sigma}^{\MB}(k_1,k_2)\big\} - \xi\big\{\M\Sigma_2^{*,\mathcal{B}}(k_2)\otimes\M\Sigma_1^{*,\mathcal{B}}(k_1)\big\}\Big] \bds u''\Big|
\\
&\leq
\sup_{\bds u''\in\mathcal{U}_{p^2}}\Big|\tilde{\M U}^\T_0\Big[ \xi\big\{\tilde{\M\Sigma}_0^{\MB}(k_1,k_2)\big\} - \xi\big\{\M\Sigma_2^{*,\mathcal{B}}(k_2)\otimes\M\Sigma_1^{*,\mathcal{B}}(k_1)\big\}\Big] \bds u''\Big|
\\
&+ \sup_{\bds u''\in\mathcal{U}_{p^2}}\Big|\tilde{\M U}^\T_0\Big[ \xi\big\{\tilde{\M\Sigma}_0^{\MB}(k_1,k_2)\big\} - \xi\big\{\tilde{\M\Sigma}^{\MB}(k_1,k_2)\big\}\Big] \bds u''\Big|
\\
&=\sup_{\bds u''\in\mathcal{U}_{p^2}}\Big|\tilde{\M U}^\T_0\Delta^{\mathcal{B}}_n \bds u''\Big| + \sup_{\bds u''\in\mathcal{U}_{p^2}}\Big|\tilde{\M U}^\T_0\mathcal{H}_{n,k_1,k_2}^{\mathcal{B}} \bds u''\Big|.
\ee
Similar to \eqref{l:srate:vdu}--\eqref{step1:final},  we can show
\bee\nonumber
\sup_{\bds u''\in\mathcal{U}_{p^2}}\Big|\tilde{\M U}^\T_0\Delta^{\mathcal{B}}_n \bds u''\Big| = \sup_{\bds u^*\in \mathcal{U}^{\mathcal{B}}_{p^2}(k_1)}\Big|\frac{1}{n}\sum _{i = 1}^nI_i\big(\bds u^*,\tilde{\M U}^\T_0\big)\Big|,
\ee
recalling that $I_i(\cdot,\cdot)$ is defined in \eqref{l:srate:back}. In \eqref{lm:srate:coreprob}, we show the tail probability bound of $\Big|\frac{1}{n}\sum _{i = 1}^nI_i\big(\bds u^*,\bds v^* = \tilde{\M U}^\T_0\big)\Big|$ for any unit vector $\bds u^*$. Then by a similar $\varepsilon-$net argument used in step 1.3 and \eqref{lm:srate:con} in proof of Lemma \ref{lemma:srate}, we have
\bee\label{tm:submatrix:main1}
\p\Bigg\{\Big(\sup_{\bds u''\in\mathcal{U}_{p^2}}\Big|\tilde{\M U}^\T_0\Delta^{\mathcal{B}}_n \bds u''\Big|\Big)^2 \geq t\Bigg\} &\leq 2a_1^{pk_1}\exp\Big[-n\min\{a''_2t,a''_3\sqrt{t}\}\Big]
\\
&= 2\exp\Big[pk_1/\log(a_1)-n\min\{a''_2t,a''_3\sqrt{t}\}\Big],
\ee
where $a_1 = 7^{\max(c_u,c_v)}>1, a''_2 =4 C''/(9K_3^2(\rho,\varepsilon_0)), a_3 = 2C''/3K_3(\rho,\epsilon)$ and $C'', K_3(\rho,\epsilon)$ are defined in the proof of lemma \ref{lemma:srate}. Note here we only account for the complexity of $\mathcal{U}^{\mathcal{B}}_{p^2}(k_1)$, instead of accounting for the complexities of both $\mathcal{U}^{\mathcal{B}}_{p^2}(k_1)$ and $\mathcal{U}^{\mathcal{B}}_{q^2}(k_2)$ in the original arguments in the proof of Lemma \ref{lemma:srate}.
\par
Now we choose $t = t_n = C_2\cdot \begin{cases}
\frac{pk_1}{n} & pk_1 \precsim n
\\
\frac{p^2k_1^2}{n^2} & pk_1 \succ n
\end{cases}$ with any positive $C_2$ that is larger than $1/\log(a_1)$. If $pk_1\precsim n$, since $p\geq1, k_1\geq 1$, we have 
\bee\label{tm:submatrix:main2}
\p\Bigg\{\Big(\sup_{\bds u''\in\mathcal{U}_{p^2}}\Big|\tilde{\M U}^\T_0\Delta^{\mathcal{B}}_n \bds u''\Big|\Big)^2 \geq t_n\Bigg\}  &\leq 2\exp\Big[\{1/\log (a_1)-C_2\}pk_2\Big]
\\
&\leq 2\exp\Big[1/\log (a_1)-C_2\Big] ,
\ee
where the right-hand side of the above inequality can be arbitrarily small when $C_2$ is sufficiently large. If $pk_1\succ n$, we have $p^2k^2_1/n^2 \succ pk_1/n$ and therefore when $n\rightarrow +\infty$,
\bee\nonumber
pk_1/\log(a_1)-nC_2\min\{a''_2t_n,a''_3\sqrt{t_n}\} &= pk_1/\log(a_1) - nC_2\min\{a''_2p^2k^2_1/n^2,a''_3pk_1/n\}
\\
&= pk_1/\log(a_1) - nC_2a''_3pk_1/n
\\
&=\{1/\log(a_1) - C_2a_3''\}\cdot pk_1
\\
&\rightarrow -\infty,
\ee
when $C_2$ is sufficiently large. This implies when $C_2$ is sufficiently large, 
\bee\label{tm:submatrix:main:alter}
\p\Bigg\{\Big(\sup_{\bds u''\in\mathcal{U}_{p^2}}\Big|\tilde{\M U}^\T_0\Delta^{\mathcal{B}}_n \bds u''\Big|\Big)^2 \geq t_n\Bigg\} \rightarrow 0,
\ee
if $pk_1 \succ n$ and $n\rightarrow +\infty$.
\par
 Combining \eqref{tm:submatrix:main1}, \eqref{tm:submatrix:main2} and \eqref{tm:submatrix:main:alter}, we conclude
\bee\nonumber
\Big(\sup_{\bds u''\in\mathcal{U}_{p^2}}\Big|\tilde{\M U}^\T_0\Delta^{\mathcal{B}}_n \bds u''\Big|\Big)^2 = 
\begin{cases}
\mathcal{O}_{\mathbb{P}}\big(\frac{pk_1}{n}\big) & pk_1 \precsim n
\\
\mathcal{O}_{\mathbb{P}}\big(\frac{p^2k^2_1}{n^2}\big) & pk_1 \succ n
\end{cases}.
\ee
Also similar to the proof of Lemma \ref{lemma:srate}, $\sup_{\bds u''\in\mathcal{U}_{p^2}}\Big|\tilde{\M U}^\T_0\mathcal{H}_{n,k_1,k_2}^{\mathcal{B}} \bds u''\Big|$ is negligible compared to $\sup_{\bds u''\in\mathcal{U}_{p^2}}\Big|\tilde{\M U}^\T_0\Delta^{\mathcal{B}}_n \bds u''\Big|$. By \eqref{thm:submatrix:z12}, \eqref{thm:submatrix:uzu} and the above result, we finally have another upper bound for $z_{12}$,
\bee\label{submatrix:z12:2}
z_{12} = \begin{cases}
\mathcal{O}_{\mathbb{P}}\big(\sqrt{\frac{pk_1}{n}}\big) & pk_1 \precsim n
\\
\mathcal{O}_{\mathbb{P}}\big({\frac{pk_1}{n}}\big) & pk_1 \succ n
\end{cases}.
\ee
Now we combine the rates in \eqref{submatrix:z12:1} and \eqref{submatrix:z12:2} for $z_{12}$.
\begin{itemize} 
\item[(i)] When $pk_1 \precsim n$, \eqref{submatrix:z12:2} yields $z_{12} = \mathcal{O}_{\mathbb{P}}\Big(\sqrt{\frac{pk_1}{n}}\Big)$. Since 
$$
\sqrt{\frac{pk_1}{n}}\precsim \sqrt{\frac{pk_1}{n}}+\sqrt{\frac{qk_2}{n}},
$$ 
we know the rate $\mathcal{O}_{\mathbb{P}}\Big(\sqrt{\frac{pk_1}{n}}\Big)$ is faster than the rate of $z_{12}$  implied in \eqref{submatrix:z12:1}. Therefore $\mathcal{O}_{\mathbb{P}}\Big(\sqrt{\frac{pk_1}{n}}\Big)$ is the optimal rate of $z_{12}$ when $pk_1\precsim n$.
\item[(ii)] When $pk_1 \succ n$, \eqref{submatrix:z12:2} yields the rate $z_{12} = \mathcal{O}_{\mathbb{P}}({\frac{pk_1}{n}})$   and \eqref{submatrix:z12:1} yields the rate $z_{12} = \mathcal{O}_{\mathbb{P}}\Big\{\big(\sqrt{\frac{k_1k_2pq}{n}}\big)\wedge\big(\frac{pk_1}{n} + \frac{qk_2}{n}  \big)\Big\}$. Since ${\frac{pk_1}{n}} \precsim \frac{pk_1}{n} + \frac{qk_2}{n} $, and therefore
$$
\frac{pk_1}{n} \wedge \big(\sqrt{\frac{k_1k_2pq}{n}}\big)\wedge\big(\frac{pk_1}{n} + \frac{qk_2}{n}  \big) \asymp \big(\sqrt{\frac{k_1k_2pq}{n}}\big) \wedge \frac{pk_1}{n},
$$
we know the rate of $z_{12}$ when $pk_1 \succ n$, is  $\mathcal{O}_{\mathbb{P}}\Big(\sqrt{\frac{k_1k_2pq}{n}}\wedge\frac{pk_1}{n}\Big)$ after combining \eqref{submatrix:z12:1} and \eqref{submatrix:z12:2}.
\end{itemize}
In summary, we have
\bee\label{submatrix:z12:final1}
z_{12} =  \begin{cases}\mathcal{O}_{\mathbb{P}}\Big(\sqrt{\frac{pk_1}{n}} \Big)& pk_1 \precsim n
\\
\mathcal{O}_{\mathbb{P}}\Big(\sqrt{\frac{k_1k_2pq}{n}}\wedge\frac{pk_1}{n}\Big) & pk_1 \succ n
\end{cases}.
\ee
By symmetry, we also have
\bee\label{submatrix:z12:final2}
z_{21} =  \begin{cases}\mathcal{O}_{\mathbb{P}}\Big(\sqrt{\frac{qk_2}{n}} \Big)& qk_2 \precsim n
\\
\mathcal{O}_{\mathbb{P}}\Big(\sqrt{\frac{k_1k_2pq}{n}}\wedge\frac{qk_2}{n}\Big) & qk_2 \succ n
\end{cases}.
\ee
\par
By \eqref{thm:submatrix:final2}, the condition \eqref{thm:subm:target} holds with  probability approaching $1$. Then, by \eqref{I1:form2}, \eqref{thm:subm:target}, \eqref{thm:submatrix:final1}, \eqref{submatrix:z12:final1} and \eqref{submatrix:z12:final2},  we have with probability approaching $1$,
\bee\label{thm:submatrix:final:3}
I_1 &\precsim |\sin\Theta(\M V_1,\tilde{\M V}_0)|^2
\\
&\precsim \frac{\alpha^2 z^2_{12} + \beta^2 z^2_{21}}{(\alpha^2 - \beta^2 - z_{12}^2\wedge z_{21}^2)^2}
\\
&\asymp\frac{\alpha^2 z^2_{12} + \beta^2 z^2_{21}}{\alpha^4}
\\
&\precsim \frac{z^2_{12} }{\alpha^2} + \frac{\beta^2 z^2_{21}}{\alpha^4},
\ee
which implies
\bee\nonumber
\\
I_1&=\mathcal{O}_{\mathbb{P}}\Bigg\{ \begin{cases}
\frac{k_1}{qn} & pk_1 \precsim n
\\
\frac{k_1k_2}{n}\wedge\frac{pk_1^2}{qn^2} & pk_1 \succ n
\end{cases}+ r_{\text{var}\mid \hat{\M \Sigma}_2\otimes \hat{\M\Sigma}_1}(k_1,k_2\mid p,q) \cdot \begin{cases}
\frac{k_2}{pn} & qk_2 \precsim n
\\
\frac{k_1k_2}{n}\wedge\frac{qk_2^2}{pn^2} & qk_2 \succ n
\end{cases} \Bigg\} 
\\
&=\mathcal{O}_{\mathbb{P}}\big\{ r_{\text{var}\mid \hat{\M \Sigma}_1}(k_1,k_2\mid p,q)\big\} + \mathcal{O}_{\mathbb{P}}\big\{r_{\text{var}\mid \hat{\M \Sigma}_2\otimes \hat{\M\Sigma}_1}(k_1,k_2\mid p,q) \cdot  r_{\text{var}\mid \hat{\M \Sigma}_2}(k_1,k_2\mid p,q)\big\},
\ee
where $$r_{\text{var}\mid \hat{\M \Sigma}_1}(k_1,k_2\mid p,q)= \begin{cases}
\frac{k_1}{qn} & pk_1 \precsim n
\\
\frac{k_1k_2}{n}\wedge\frac{pk_1^2}{qn^2} & pk_1 \succ n
\end{cases},$$ and $$r_{\text{var}\mid \hat{\M \Sigma}_2}(k_1,k_2\mid p,q)= \begin{cases}
\frac{k_2}{pn} & qk_2 \precsim n
\\
\frac{k_1k_2}{n}\wedge\frac{qk_2^2}{pn^2} & qk_2 \succ n
\end{cases}.$$ Combining \eqref{thm:submatrix:final:1}, \eqref{thm:submatrix:final:2} and \eqref{thm:submatrix:final:3}, we finally have
\bee\nonumber
&\frac{1}{p}\big\|c_{1,\mathcal{B}}\hat{\M \Sigma}_1^{\mathcal{B}}(k_1) - \M\Sigma_1^*\big\|_\F^2 
\\
&= \mathcal{O}_{\mathbb{P}}\Big\{ r_{\text{var}\mid \hat{\M \Sigma}_1}(k_1,k_2\mid p,q)+r_{\text{var}\mid \hat{\M \Sigma}_2\otimes \hat{\M\Sigma}_1}(k_1,k_2\mid p,q) \cdot  r_{\text{var}\mid \hat{\M \Sigma}_2}(k_1,k_2\mid p,q)+ \M I_{n,p}(k_1)\cdot k_1^{-\tilde\alpha_1}\Big\}.
\ee
\subsection{Proof of Theorem \ref{T4}}
The notation of this proof is mainly contained in Section \ref{notation:robust}. By setting $\widetilde{\boldsymbol\Sigma}^\diamond = \widecheck{\M \Sigma}_{\eta}(k_1,k_2)$ and $\tilde{\M\Sigma}^* = \M\Sigma^*_2\otimes \M\Sigma_1^*$ in Lemma \ref{T1} and taking expectation, we have
\bee\nonumber
\E\Big(\frac{\|\hat{\M \Sigma}_2^{\mathcal{R},\eta}\kii\otimes\hat{\M \Sigma}_1^{\mathcal{R},\eta}\ki - \M\Sigma^*\|_{\F}^2}{pq}\Big) &\leq 8\cdot \frac{1}{pq}\E\big\| \xi\{\widecheck{\M \Sigma}_{\eta}(k_1,k_2)\} -\xi\{\M\Sigma^*\}\big\|^2_2.
\ee
Note that for the right-hand side, we can decompose
\bee\nonumber
\xi\{\widecheck{\M \Sigma}_{\eta}(k_1,k_2)\} - \xi\{\M\Sigma^*\}   &= \Big[\xi\{\widecheck{\M \Sigma}_{\eta}(k_1,k_2)\} - \xi\{{\M \Sigma}^{*,\eta}_{\mathcal{R}}(k_1,k_2)\}\Big] 
\\
&+ \Big[\xi\big\{{\M \Sigma}^{*,\eta}_{\mathcal{R}}(k_1,k_2)\big\} - \xi\big\{\M\Sigma_2^{*,{\eta}}(k_2)\otimes \M\Sigma_1^{*,{\eta}}(k_1)\big\}\Big]
\\
&+\Big[\xi\big\{\M\Sigma_2^{*,{\eta}}(k_2)\otimes \M\Sigma_1^{*,{\eta}}(k_1)\big\} - \xi\{\M\Sigma^*\} \Big].
\ee
By triangle inequality, we then have
\begin{align}\nonumber
&\E\Big(\frac{\|\hat{\M \Sigma}_2^{\mathcal{R},\eta}\kii\otimes\hat{\M \Sigma}_1^{\mathcal{R},\eta}\ki - \M\Sigma^*\|_{\F}^2}{pq}\Big)
\\\nonumber
&\leq 8\cdot 3 \cdot\Bigg[\frac{1}{pq}\E \big\|\xi\{\widecheck{\M \Sigma}_{\eta}(k_1,k_2)\} - \xi\{{\M \Sigma}^{*,\eta}_{\mathcal{R}}(k_1,k_2)\}\big\|_2^2 
\\\nonumber
&+ \frac{1}{pq}\E\big\|\xi\big\{{\M \Sigma}^{*,\eta}_{\mathcal{R}}(k_1,k_2)\big\} - \xi\big\{\M\Sigma_2^{*,{\eta}}(k_2)\otimes \M\Sigma_1^{*,{\eta}}(k_1)\big\}\big\|_2^2 +\frac{1}{pq}\E\big\|\xi\big\{\M\Sigma_2^{*,{\eta}}(k_2)\otimes \M\Sigma_1^{*,{\eta}}(k_1)\big\} - \xi\{\M\Sigma^*\}\big\|_2^2\Bigg]
\\
&\precsim\frac{1}{pq}\E\big\|\xi\{\widecheck{\M \Sigma}_{\eta}(k_1,k_2)\} - \xi\{{\M \Sigma}^{*,\eta}_{\mathcal{R}}(k_1,k_2)\}\big\|_2^2 
+ \frac{1}{pq}\big\|{\M \Sigma}^{*,\eta}_{\mathcal{R}}(k_1,k_2)-\M\Sigma_2^{*,{\eta}}(k_2)\otimes \M\Sigma_1^{*,{\eta}}(k_1)\big\|_\F^2 
\\\label{pf:rbthem:final:0}
&+\frac{1}{pq}\big\|\M\Sigma_2^{*,{\eta}}(k_2)\otimes \M\Sigma_1^{*,{\eta}}(k_1) - \M\Sigma^*\big\|_\F^2,
\end{align}
where the last inequality holds by Lemma \ref{lemma:xi} and the fact that the last two terms are nonrandom. 
\par
Our proof strategy is to bound the three terms on the right-hand side of the above inequality, respectively. Under the condition of Theorem \ref{T4}, the Lemmas \ref{lemma:robust:2}--\ref{lemma:robust:3} always hold. In addition, when $\zeta \geq 2$, $\E(|x^{(i)}_{l_1,l_2}\cdot x^{(i)}_{m_1,m_2}|^\zeta) \leq M$ implies 
\bee\nonumber
\E(|x^{(i)}_{l_1,l_2}\cdot x^{(i)}_{m_1,m_2}|^2)&\leq \E(|x^{(i)}_{l_1,l_2}\cdot x^{(i)}_{m_1,m_2}|^{2\cdot \frac{\zeta}{2}})^{2/\zeta}\leq M^{2/\zeta}\leq M
\ee
by Cauchy-Schwarz inequality, and thus Lemma \ref{lemma:robust:1} holds. Therefore, when $\zeta \geq 2$, we use Lemmas \ref{lemma:robust:1}--\ref{lemma:robust:3} to bound the three error terms. When $1<\zeta<2$, we use Lemmas \ref{lemma:robust:2}--\ref{lemma:robust:3} to bound the three error terms.
\par
\
\par
\noindent
 We first consider when $\zeta \geq 2$. For the first term, combining Lemmas \ref{lemma:robust:1} and \ref{lemma:robust:3}, we have
\bee\label{pf:rbthem:final:1}
\frac{1}{pq}\E\big\|\xi\{\widecheck{\M \Sigma}_{\eta}(k_1,k_2)\} - \xi\{{\M \Sigma}^{*,\eta}_{\mathcal{R}}(k_1,k_2)\}\big\|_2^2 \precsim \frac{k_1k_2}{n}.
\ee
Also for the first term, by Lemma \ref{lemma:robust:2} we have
\bee\label{pf:rbthem:final:2}
\frac{1}{pq}\E\big\|\xi\{\widecheck{\M \Sigma}_{\eta}(k_1,k_2)\} - \xi\{{\M \Sigma}^{*,\eta}_{\mathcal{R}}(k_1,k_2)\}\big\|_2^2\precsim (J_n\tau)^4\times\begin{cases}
\frac{k_1}{qn} + \frac{k_2}{pn} & {\rm ~if~} pk_1 + qk_2 \precsim n
\\
\frac{pk^2_1}{qn^2} + \frac{qk^2_2}{pn^2} & {\rm ~if~} pk_1 + qk_2 \succ n.
\end{cases}
\ee
For the second term, by Lemma \ref{lemma:robust:3}, we have
\bee\label{pf:rbthem:final:3}
\frac{1}{pq}\big\|{\M \Sigma}^{*,\eta}_{\mathcal{R}}(k_1,k_2)-\M\Sigma_2^{*,{\eta}}(k_2)\otimes \M\Sigma_1^{*,{\eta}}(k_1)\big\|_\F^2 \precsim{k_1k_2}\cdot\tau^{-4(\zeta - 1)}.
\ee
For the third term, by Lemma \ref{l:outer}, we have
\bee\label{pf:rbthem:final:4}
\frac{1}{pq}\big\|\M\Sigma_2^{*,{\eta}}(k_2)\otimes \M\Sigma_1^{*,{\eta}}(k_1) - \M\Sigma^*\big\|_\F^2 \precsim \M I_{\eta,p}(k_1)\cdot k_1^{-\tilde{\alpha}_1} +  \M I_{\eta,q}(k_2)\cdot k_2^{-\tilde{\alpha}_2},
\ee
where $$\small\M I_{\eta,d}(k) = \begin{cases}
\M I(k< d-1), & \eta = \mathcal{B};
\\
\M I (k< 2d-2), & \eta = \mathcal{T},
\end{cases}$$
and  $$\small\tilde{\alpha}_{a} = \begin{cases}
2\alpha_a & \text{when }\M \Sigma_1^{*}\in \mathcal{F}(\varepsilon_0, \alpha_1),\M \Sigma_2^{*} \in \mathcal{F}(\varepsilon_0, \alpha_2);
\\
2\alpha_a + 1 & \text{when } \M \Sigma_1^{*}\in \mathcal{M}(\varepsilon_0, \alpha_1),\M \Sigma_2^{*} \in \mathcal{M}(\varepsilon_0, \alpha_2),
\end{cases}$$ for $a\in\{1,2\}$ and $\eta\in\{\mathcal{T},\mathcal{B}\}$.
Combining \eqref{pf:rbthem:final:0}, \eqref{pf:rbthem:final:1}, \eqref{pf:rbthem:final:2}, \eqref{pf:rbthem:final:3} and \eqref{pf:rbthem:final:4}, we show
\bee\nonumber
&\E\Big[\frac{\|\hat{\M \Sigma}_2^{\mathcal{R},\eta}\kii\otimes\hat{\M \Sigma}_1^{\mathcal{R},\eta}\ki - \M\Sigma^*\|_{\F}^2}{pq}\Big] 
\\
&= \Bigg[\frac{k_1k_2}{n} \Bigg]\wedge \Bigg[(J_n\tau)^4\Big\{\frac{k_1}{qn} + \frac{k_2}{pn}\Big\} \Bigg]  + k_1k_2\cdot\tau^{-4(\zeta - 1)} + \M I_{\eta,p}(k_1)\cdot k_1^{-\tilde{\alpha}_1} +  \M I_{\eta,q}(k_2)\cdot k_2^{-\tilde{\alpha}_2}
\\
&\precsim \Bigg[\frac{k_1k_2}{n} + k_1k_2\cdot\tau^{-4(\zeta - 1)}\Bigg]\wedge \Bigg[(J_n\tau)^4\Big\{\frac{k_1}{qn} + \frac{k_2}{pn}\Big\} + k_1k_2\cdot\tau^{-4(\zeta - 1)}\Bigg] + \M I_{\eta,p}(k_1)\cdot k_1^{-\tilde{\alpha}_1} +  \M I_{\eta,q}(k_2)\cdot k_2^{-\tilde{\alpha}_2},
\ee
when $pk_1 + qk_2\precsim n$ and 
\bee\nonumber
&\E\Big[\frac{\|\hat{\M \Sigma}_2^{\mathcal{R},\eta}\kii\otimes\hat{\M \Sigma}_1^{\mathcal{R},\eta}\ki - \M\Sigma^*\|_{\F}^2}{pq}\Big] 
\\
&\precsim \Bigg[\frac{k_1k_2}{n} + k_1k_2\cdot\tau^{-4(\zeta - 1)}\Bigg]\wedge \Bigg[(J_n\tau)^4\Big\{\frac{pk^2_1}{qn^2} + \frac{qk^2_2}{pn^2}\Big\} + k_1k_2\cdot\tau^{-4(\zeta - 1)}\Bigg] 
\\
&+ \M I_{\eta,p}(k_1)\cdot k_1^{-\tilde{\alpha}_1} +  \M I_{\eta,q}(k_2)\cdot k_2^{-\tilde{\alpha}_2},
\ee
when $pk_1 + qk_2\succsim n$. We discuss these two scenarios, respectively.
\begin{itemize}
\item When $pk_1 + qk_2 \precsim n$, by choosing $\tau \asymp \{pqk_1k_2n/(pk_1J_n^4 + qk_2J_n^4) \}^{1/(4\zeta)}$, the {\color{black} optimal} rate of $(J_n\tau)^4\Big\{\frac{k_1}{qn} + \frac{k_2}{pn}\Big\} + k_1k_2\cdot\tau^{-4(\zeta - 1)}$ is attained by 
\bee\label{rb:eq1}
(J_n\tau)^4\Big\{\frac{k_1}{qn} + \frac{k_2}{pn}\Big\} + k_1k_2\cdot\tau^{-4(\zeta - 1)}\asymp (k_1k_2)^{1/\zeta} \Big\{\frac{k_1J_n^4}{qn} + \frac{k_2J_n^4}{pn}\Big\}^{1-1/\zeta}.
\ee
If $(k_1k_2)^{1/\zeta} \Big\{\frac{k_1J_n^4}{qn} + \frac{k_2J_n^4}{pn}\Big\}^{1-1/\zeta}\precsim \frac{k_1k_2}{n}$, by setting $\tau \asymp \{pqk_1k_2n/(pk_1J_n^4 + qk_2J_n^4) \}^{1/(4\zeta)}$, we have
$$
\E\Big[\frac{\|\hat{\M \Sigma}_2^{\mathcal{R},\eta}\kii\otimes\hat{\M \Sigma}_1^{\mathcal{R},\eta}\ki - \M\Sigma^*\|_{\F}^2}{pq}\Big] \precsim (k_1k_2)^{1/\zeta} \Big\{\frac{k_1J_n^4}{qn} + \frac{k_2J_n^4}{pn}\Big\}^{1-1/\zeta} +  \M I_{\eta,p}(k_1)\cdot k_1^{-\tilde{\alpha}_1} +  \M I_{\eta,q}(k_2)\cdot k_2^{-\tilde{\alpha}_2}.
$$
If $(k_1k_2)^{1/\zeta} \Big\{\frac{k_1J_n^4}{qn} + \frac{k_2J_n^4}{pn}\Big\}^{1-1/\zeta}\succsim \frac{k_1k_2}{n}$, by setting $\tau=+\infty$, we have
$$
\E\Big[\frac{\|\hat{\M \Sigma}_2^{\mathcal{R},\eta}\kii\otimes\hat{\M \Sigma}_1^{\mathcal{R},\eta}\ki - \M\Sigma^*\|_{\F}^2}{pq}\Big] \precsim \frac{k_1k_2}{n}+  \M I_{\eta,p}(k_1)\cdot k_1^{-\tilde{\alpha}_1} +  \M I_{\eta,q}(k_2)\cdot k_2^{-\tilde{\alpha}_2}.
$$
\item
When $pk_1 + qk_2 \succsim n$, by choosing $\tau \asymp \{pqk_1k_2n^2/(pk_1J_n^2 + qk_2J_n^2)^2 \}^{1/(4\zeta)}$, the {\color{black} optimal} rate of $(J_n\tau)^4\Big\{\frac{pk^2_1}{qn^2} + \frac{qk^2_2}{pn^2}\Big\} +k_1k_2\cdot\tau^{-4(\zeta - 1)}$ is attained by 
\bee\label{rb:eq2}
(J_n\tau)^4\Big\{\frac{pk^2_1}{qn^2} + \frac{qk^2_2}{pn^2}\Big\} + k_1k_2\cdot\tau^{-4(\zeta - 1)} \asymp (k_1k_2)^{1/\zeta} \Big\{\frac{pk^2_1J_n^4}{qn^2} + \frac{qk^2_2J_n^4}{pn^2}\Big\}^{1 - 1/\zeta}.
\ee
If $(k_1k_2)^{1/\zeta} \Big\{\frac{pk^2_1J_n^4}{qn^2} + \frac{qk^2_2J_n^4}{pn^2}\Big\}^{1 - 1/\zeta}\precsim \frac{k_1k_2}{n}$, by setting $\tau \asymp \{pqk_1k_2n^2/(pk_1J_n^2 + qk_2J_n^2)^2 \}^{1/(4\zeta)}$, we have
\bee\nonumber
\E\Big[\frac{\|\hat{\M \Sigma}_2^{\mathcal{R},\eta}\kii\otimes\hat{\M \Sigma}_1^{\mathcal{R},\eta}\ki - \M\Sigma^*\|_{\F}^2}{pq}\Big] &\precsim (k_1k_2)^{1/\zeta} \Big\{\frac{pk^2_1J_n^4}{qn^2} + \frac{qk^2_2J_n^4}{pn^2}\Big\}^{1 - 1/\zeta}
\\
&+  \M I_{\eta,p}(k_1)\cdot k_1^{-\tilde{\alpha}_1} +  \M I_{\eta,q}(k_2)\cdot k_2^{-\tilde{\alpha}_2}.
\ee
If $(k_1k_2)^{1/\zeta} \Big\{\frac{pk^2_1J_n^4}{qn^2} + \frac{qk^2_2J_n^4}{pn^2}\Big\}^{1 - 1/\zeta}\succsim \frac{k_1k_2}{n}$, by setting $\tau=+\infty$, we have
$$
\E\Big[\frac{\|\hat{\M \Sigma}_2^{\mathcal{R},\eta}\kii\otimes\hat{\M \Sigma}_1^{\mathcal{R},\eta}\ki - \M\Sigma^*\|_{\F}^2}{pq}\Big] \precsim \frac{k_1k_2}{n}+  \M I_{\eta,p}(k_1)\cdot k_1^{-\tilde{\alpha}_1} +  \M I_{\eta,q}(k_2)\cdot k_2^{-\tilde{\alpha}_2}.
$$
\end{itemize}
\par
\
\par
\noindent
Now we consider when $1<\zeta<2$. The only difference  is Lemma \ref{lemma:robust:1} no longer holds when $1<\zeta<2$. Thus, we can not apply Lemma \ref{lemma:robust:1} to bound the first term on the right-hand side of \eqref{pf:rbthem:final:0}. By applying other arguments similar to the case that $\zeta \geq 2$, we can show 
\bee\nonumber
&\E\Big[\frac{\|\hat{\M \Sigma}_2^{\mathcal{R},\eta}\kii\otimes\hat{\M \Sigma}_1^{\mathcal{R},\eta}\ki - \M\Sigma^*\|_{\F}^2}{pq}\Big] 
\\
&\precsim (J_n\tau)^4\Big\{\frac{k_1}{qn} + \frac{k_2}{pn}\Big\} + k_1k_2\cdot\tau^{-4(\zeta - 1)} + \M I_{\eta,p}(k_1)\cdot k_1^{-\tilde{\alpha}_1} +  \M I_{\eta,q}(k_2)\cdot k_2^{-\tilde{\alpha}_2},
\ee
when $pk_1 + qk_2\precsim n$; and 
\bee\nonumber
&\E\Big[\frac{\|\hat{\M \Sigma}_2^{\mathcal{R},\eta}\kii\otimes\hat{\M \Sigma}_1^{\mathcal{R},\eta}\ki - \M\Sigma^*\|_{\F}^2}{pq}\Big] 
\\
&\precsim (J_n\tau)^4\Big\{\frac{pk^2_1}{qn^2} + \frac{qk^2_2}{pn^2}\Big\} + k_1k_2\cdot\tau^{-4(\zeta - 1)} + \M I_{\eta,p}(k_1)\cdot k_1^{-\tilde{\alpha}_1} +  \M I_{\eta,q}(k_2)\cdot k_2^{-\tilde{\alpha}_2},
\ee
when $pk_1 + qk_2\succsim n$. Then similar to \eqref{rb:eq1} and \eqref{rb:eq2}, the optimal rate is attained by
\bee\nonumber
\E&\Big[\frac{\|\hat{\M \Sigma}_2^{\mathcal{R},\eta}\kii\otimes\hat{\M \Sigma}_1^{\mathcal{R},\eta}\ki - \M\Sigma^*\|_{\F}^2}{pq}\Big]
\\ 
&\small{\precsim \begin{cases}
(k_1k_2)^{1/\zeta}\cdot\big(\frac{k_1J_n^4}{qn} + \frac{k_2J_n^4}{pn}\big)^{1 - 1/\zeta}+ \M I_{\eta,p}(k_1)\cdot k_1^{-\tilde{\alpha}_1} +  \M I_{\eta,q}(k_2)\cdot k_2^{-\tilde{\alpha}_2}, & pk_1 + qk_2 \precsim n,
\\
(k_1k_2)^{1/\zeta}\cdot\big(\frac{pk^2_1J_n^4}{qn^2} + \frac{qk^2_2J_n^4}{pn^2}  \big)^{1 - 1/\zeta}+ \M I_{\eta,p}(k_1)\cdot k_1^{-\tilde\alpha_1} +  \M I_{\eta,q}(k_2)\cdot k_2^{-\tilde\alpha_2}, & pk_1 + qk_2 \succ n,
\end{cases}}
\ee
when setting $\tau \asymp \begin{cases}
\big\{pqk_1k_2n/(pk_1J_n^4 + qk_2J_n^4) \big\}^{1/(4\zeta)}
& \text{ when } pk_1 + qk_2 \precsim n,
\\
\big\{pqk_1k_2n^2/(pk_1J_n^2 + qk_2J_n^2)^2 \big\}^{1/(4\zeta)} &\text{ when } pk_1 + qk_2 \succ n
.\end{cases}$
\qed
{\color{black}
\subsection{Proof of Theorem \ref{Ts}}
The notation of this proof is mainly contained in Section \ref{notation:main}. Note $\mathcal{F}(\varepsilon_0,\alpha)$ is a more general matrix class than $\mathcal{M}(\varepsilon_0,\alpha)$. Therefore, we only need to focus on the case that $\M \Sigma_1^{*}\in \mathcal{F}(\varepsilon_0, \alpha_1),\M \Sigma_2^{*} \in \mathcal{F}(\varepsilon_0, \alpha_2)$.
\par
We first prove a general result that $\small\E\big\|\M\Sigma^*-\hat{\M\Sigma}_2^{\eta,\mathcal{S}}(k_2)\otimes \hat{\M\Sigma}_1^{\eta,\mathcal{S}}(k_1)\big\|_2^2$ can be upper bounded by $4\E\|  \M\Sigma^* - \tilde{\M\Sigma}_{\eta}(k_1,k_2)\|_{2}^2$ for $\eta\in\{\mathcal{T},\mathcal{B}\}$. 
In particular, observing that by \eqref{sec:dis:s},
\bee\nonumber
\big\| \widetilde{\M\Sigma}_{\eta}(k_1,k_2)-\hat{\M\Sigma}_2^{\mathcal{\eta,\mathcal{S}}}(k_2) \otimes \hat{\M\Sigma}_1^{\mathcal{\eta,\mathcal{S}}}(k_1)\big\|_2^2 &=  \min_{\bSigma_1, \bSigma_2}\big\| \widetilde{\M\Sigma}_{\eta}(k_1,k_2)-\bSigma_2 \otimes \bSigma_1\big\|_2^2
\\
&\leq\big\| \widetilde{\M\Sigma}_{\eta}(k_1,k_2)-\bSigma^*_2 \otimes \bSigma^*_1\big\|_2^2,
\ee
we have,
\begin{align}\nonumber
\small\E\big\|\M\Sigma^*-\hat{\M\Sigma}_2^{\eta,\mathcal{S}}(k_2)\otimes \hat{\M\Sigma}_1^{\eta,\mathcal{S}}(k_1)\big\|_2^2&\leq 2\E\|\M\Sigma^* - \tilde{\M\Sigma}_{\eta}(k_1,k_2)\|_{2}^2 + 2 \E\big\|\widetilde{\M\Sigma}_{\eta}(k_1,k_2)-\hat{\M\Sigma}_2^{\eta,\mathcal{S}}(k_2)\otimes \hat{\M\Sigma}_1^{\eta,\mathcal{S}}(k_1)\big\|_2^2
\tag*{\text{(Triangle inequality)}}
\\\nonumber
&\leq 2\E\|  \M\Sigma^* - \tilde{\M\Sigma}_{\eta}(k_1,k_2)\|_{2}^2 + 2 \E\big\| \widetilde{\M\Sigma}_{\eta}(k_1,k_2)-\bSigma^*_2 \otimes \bSigma^*_1\big\|_2^2
\\\label{s257}
&= 4\E\|  \M\Sigma^* - \tilde{\M\Sigma}_{\eta}(k_1,k_2)\|_{2}^2.
\end{align}
\par
 We now decompose $\tilde{\M \Sigma}_{\mathcal{T}}(k_1,k_2) - \M\Sigma^*$ as 
\bee\nonumber
&\tilde{\M \Sigma}_{\mathcal{T}}(k_1,k_2) - \M\Sigma^*
\\
&=\big[\tilde{\M \Sigma}_{\mathcal{T}}(k_1,k_2)-\M \Sigma_2^{*,\mathcal{T}}(k_2)\otimes\M \Sigma_1^{*,\mathcal{T}}(k_1)\big] + \big[\M \Sigma_2^{*,\mathcal{T}}(k_2)\otimes\M \Sigma_1^{*,\mathcal{T}}(k_1) - \M\Sigma^*\big]
\\
&=\big[\tilde{\M \Sigma}_{\mathcal{T}}(k_1,k_2)-\M \Sigma_2^{*,\mathcal{T}}(k_2)\otimes\M \Sigma_1^{*,\mathcal{T}}(k_1)\big] + \big[\M \Sigma_2^{*,\mathcal{T}}(k_2)\otimes\M \Sigma_1^{*,\mathcal{T}}(k_1) - \M\Sigma^*_2\otimes \M\Sigma_1^*\big],
\ee
as $\M\Sigma^* = \M\Sigma_2^*\otimes \M\Sigma_1^*$. Then triangle inequality implies
\bee\nonumber
\|\tilde{\M \Sigma}_{\mathcal{T}}(k_1,k_2) - \M\Sigma^*\|_2^2 \leq 2\big\|\tilde{\M \Sigma}_{\mathcal{T}}(k_1,k_2)-\M \Sigma_2^{*,\mathcal{T}}(k_2)\otimes\M \Sigma_1^{*,\mathcal{T}}(k_1)\big\|_2^2 + 2\big\|\M \Sigma_2^{*,\mathcal{T}}(k_2)\otimes\M \Sigma_1^{*,\mathcal{T}}(k_1) - \M\Sigma^*\big\|_2^2.
\ee
For the expectations of two terms on the right-hand side, we can bound the first term by Lemma \ref{l:dt} and the second term by Lemma \ref{l:sp:out}. Then we finally show
\bee\nonumber
&\E\|\tilde{\M \Sigma}_{\mathcal{T}}(k_1,k_2) - \M\Sigma^*\|_2^2 
\\
&\precsim 2\times\frac{k_1k_2+\log(\max\{p,q\})}{n} + 2\times\Big[\M I(k_1< 2p-2)\cdot k_1^{-2\alpha_1} +  \M I(k_2< 2q - 2)\cdot k_2^{-2\alpha_2}.\Big]
\\
&\asymp \frac{k_1k_2 + \log(\max\{p,q\})}{n} + \M I(k_1< 2p-2)\cdot k_1^{-2\alpha_1} +  \M I(k_2< 2q-2)\cdot k_2^{-2\alpha_2},
\ee
which combining with \eqref{s257} proves \eqref{Ts:res1}. \par
On the other hand, when neither of $p,q$ is diverging polynomially, our $\tilde{\M \Sigma}_{\mathcal{T}}(k_1,k_2)$ is actually a sample covariance estimator as we have chosen $k_1 = 2p - 2$, $k_2 = 2q -2$, and there is no tapering. Then we have 
\bee\label{lasteq}
\E\Big({\|\tilde{\M\Sigma}_{\mathcal{T}}(k_1,k_2) - \M\Sigma^*\|_{2}^2}\Big)\precsim \frac{pq}{n}
\ee
which is a well-known result for sample covariance estimator, under such low-dimensional scenario; see e.g. \citet{Cai2010}. Combining \eqref{lasteq} with \eqref{s257} immediately yields \eqref{Ts:res2}.
\qed
}
}

\end{document}